\def\bs{\boldsymbol}
\definecolor{red}{rgb}{0,0,0}
\definecolor{green}{rgb}{0,0,0}
\definecolor{blue}{rgb}{0,0,0}
\def\red{\textcolor{red}}
\newcommand{\diag}{\mathrm{diag}}
\DeclareMathOperator*{\argmin}{arg\,min}
\def\bs{\boldsymbol}
\newtheorem{theorem}{Theorem}
\newtheorem{lemma}{Lemma}
\newtheorem{corollary}{Corollary}
\theoremstyle{definition}
\begin{document}

\renewcommand{\baselinestretch}{1.2}
\markboth{\hfill{\footnotesize\rm Shan Yu, Guannan Wang, Li Wang, and Lijian Yang}\hfill}
{\hfill {\footnotesize\rm Multivariate Spline Estimation and Inference for Image-On-Scalar Regression} \hfill}
\renewcommand{\thefootnote}{}
$\ $\par \fontsize{10.95}{14pt plus.8pt minus .6pt}\selectfont
\vspace{0.8pc} \centerline{\large\bf Multivariate Spline Estimation and Inference for Image-On-Scalar Regression}
\vspace{.4cm} \centerline{ Shan Yu$^{a}$, Guannan Wang$^{b}$, Li Wang$^{c}$ and Lijian Yang$^{d}$
\footnote{\emph{Address for correspondence}: Li Wang, Department of Statistics and the Statistical Laboratory, Iowa State University, Ames, IA, USA. Email: lilywang@iastate.edu}} \vspace{.4cm} \centerline{\it $^{a}$University of Virginia, $^{b}$College of William \& Mary, $^{c}$Iowa State University and $^{d}$Tsinghua University} \vspace{.55cm}
\fontsize{9}{11.5pt plus.8pt minus .6pt}\selectfont

%%%%%%%%%%%%%%%%%%%%%%%%%%%%%%%%%%%%%%%%%%%%%%%%%%%%%%%%%%%%%%
%%%%%%%%%%%%%%%%%%%%%%%%%%%%%%%%%%%%%%%%%%%%%%%%%%%%%%%%%%%%%%

\begin{quotation}
\noindent {\it Abstract:}
Motivated by recent analyses of data in biomedical imaging studies, we consider a class of image-on-scalar regression models for imaging responses and scalar predictors. We propose using flexible multivariate splines over triangulations to handle the irregular domain of the objects of interest on the images, as well as other characteristics of images. The proposed estimators of the coefficient functions are proved to be root-$n$ consistent and asymptotically normal under some regularity conditions. We also provide a consistent and computationally efficient estimator of the covariance function. Asymptotic pointwise confidence intervals and data-driven simultaneous confidence corridors for the coefficient functions are constructed. Our method can simultaneously estimate and make inferences on the coefficient functions, while incorporating spatial heterogeneity and spatial correlation. A highly efficient and scalable estimation algorithm is developed. Monte Carlo simulation studies are conducted to examine the finite-sample performance of the proposed method, which is then applied to the spatially normalized positron emission tomography data of the Alzheimer's Disease Neuroimaging Initiative.

\vspace{9pt}
\noindent {\it Key words and phrases:}
Multivariate splines; Coefficient maps; Confidence corridors; Image Analysis; Triangulation.
\end{quotation}

%%%%%%%%%%%%%%%%%%%%%%%%%%%%%%%%%%%%%%%%%%%%%%%%%%%%%%%%%%%%%%
%%%%%%%%%%%%%%%%%%%%%%%%%%%%%%%%%%%%%%%%%%%%%%%%%%%%%%%%%%%%%%
\fontsize{10.95}{14pt plus.8pt minus .6pt}\selectfont
\thispagestyle{empty}

\label{sec:introduction}
\setcounter{equation}{0}
\noindent \textbf{1. Introduction} \vskip 0.1in
\renewcommand{\thefigure}{1.\arabic{figure}} \setcounter{figure}{0}

\noindent Medical and public health studies collect massive amount of imaging data using methods such as functional magnetic resonance imaging (fMRI), positron emission tomography (PET) imaging, computed tomography (CT), and ultrasonic imaging. Much of these data can be characterized as functional data. Compared with traditional one-dimensional (1D) functional data, these imaging data are complex, high-dimensional, and structured, which poses challenges to traditional statistical methods. 

We propose a unifying approach to characterize the varying associations between imaging responses and  a set of explanatory variables. Three types of statistical methods are widely used to investigate such associations. The first category includes the univariate approaches and pixel-/voxel-based methods \citep{Worsley:04,Stein:etal:10, Hibar:etal:15}, which take each pixel/voxel as a basic analytic unit. Because all pixels/voxels are treated as independent, a major drawback of these methods is that they ignore correlation between the pixels/voxels. The second category is the tensor regression. This approach considers an image as a multi-dimensional array \citep{Zhou:Li:Zhu:13, Li:Zhang:17}, which is then changed to a vector to perform the regression. However, doing so naively yields an ultra-high dimensionality and requires a novel dimension-reduction technique and highly scalable algorithms \citep{Li:Zhang:17}. The third category is the functional data analysis (FDA) approach, in which an image is viewed as the realization of a function defined on a given domain \citep{Zhu:Li:Kong:12, Zhu:Fan:Kong:14, Reiss:Goldsmith:Shang:Ogden:17}. Using an FDA, we are able to combine information both across and within functions. 

We adopt the FDA approach in this study. Functional linear models (FLMs) are widely used to model the regression relationship between a response and some set of predictors from multiple subjects. In the literature \citep{Ramsay:Silverman:2005, Muller:05, Morris:15, Wang:Chiou:Muller:16}, FLMs are often categorized based on whether the outcome, the predictor, or both are functional: (i) functional predictor regression (scalar-on-function) \citep{Cardot:Ferraty:Sarda:99, Cardot:Ferraty:Sarda:03, Hall:Horowitz:07}; (ii) functional response regression (function-on-scalar) \citep{Morris:Carroll:06, Reiss:Huang:Mennes:10, Staicu:Crainiceanu:Carroll:10, Zhu:Fan:Kong:14, Zhang:Wang:15,Chen:Delicado:Muller:17}; and (iii) function-on-function regression \citep{Ramsay:Dalzell:91, Yao:Muller:Wang:05b,  Senturk:Muller:10, Wu:Muller:11}.  

Motivated by the structure of brain imaging data, we propose a novel image-on-scalar regression model with spatially varying coefficients that captures the varying associations between imaging phenotypes and a set of explanatory variables. Figure \ref{FIG:01} shows a schematic diagram of the proposed modeling approach. Specifically, let $\Omega$ be a two-dimensional bounded domain, and let $\bs{z}=(z_1,z_2)$ be the location point on $\Omega$. For the $i$th  subject, $i=1,\ldots,n$, let $Y_i(\bs{z})$ be the imaging measurement at location $\bs{z}\in \Omega$, and let $X_{i\ell}$, for $\ell=0,1,\ldots,p$, with $X_{i0}\equiv 1$, be scalar predictors, for example, clinic variables (such as age and sex) and genetic factors. The spatially varying coefficient regression characterizes the associations between imaging measures and covariates, and is given by the following model:
\begin{equation*}
Y_{i}(\bs{z})=\widetilde{\mathbf{X}}_i^{\top}\bs{\beta}^{o}(\bs{z})+\eta_{i}(\bs{z})+\sigma(\bs{z})\varepsilon_{i}(\bs{z}), ~ i=1,\ldots,n, ~ \bs{z}\in \Omega,
\label{model1}
\end{equation*} 
where $\widetilde{\mathbf{X}}_i=(X_{i0},X_{i1},\ldots,X_{ip})^{\top}$, $\bs{\beta}^{o}=(\beta_{0}^{o},\beta_{1}^{o},\ldots,\beta_{p}^{o})^{\top}$ is a vector of some unknown bivariate functions, $\eta_i(\bs{z})$ characterizes the individual image variations, $\varepsilon_i(\bs{z})$ represents additional measurement errors, and $\sigma(\bs{z})$ is a positive deterministic function. In the following, we assume that $\eta_i(\bs{z})$ and $\varepsilon_{i}(\bs{z})$ are mutually independent. Moreover, we assume that $\eta_i(\bs{z})$, for $i=1,\ldots,n$, are independent and identically distributed (i.i.d.) copies of an $L_2$ stochastic process with mean zero and covariance function $G_{\eta}(\bs{z},\bs{z}^{\prime})=\text{cov}\{\eta_{i}(\bs{z}), \eta_{i}(\bs{z}')\}$. Furthermore,  $\varepsilon_i(\bs{z})$, for $i=1,\ldots,n$, are i.i.d. copies of a stochastic process with zero mean. and covariance function $G_{\varepsilon}(\bs{z},\bs{z}^{\prime})=\text{cov}\{\varepsilon_{i}(\bs{z}), \varepsilon_{i}(\bs{z}')\}=I(\bs{z}=\bs{z}^{\prime})$. 

\begin{figure}[t]
	\begin{center}
			\includegraphics[scale=0.45]{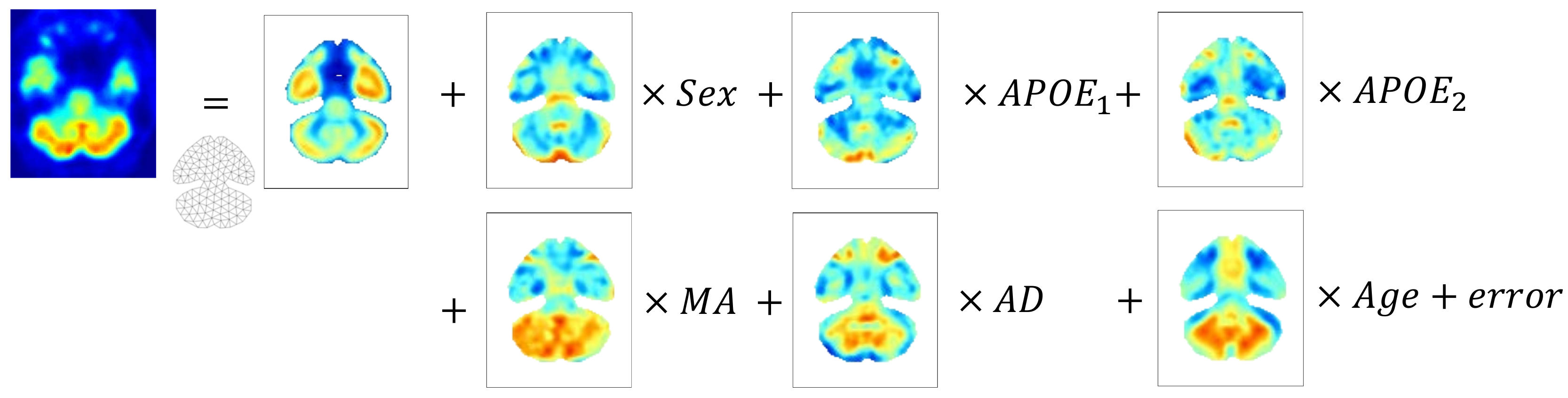}
	\end{center} \vspace{-.3in}
	\caption{A schematic diagram of proposed modeling approach.}
	\label{FIG:01}
\end{figure}

For a 1D function-on-scalar regression, Chapter 13 of \cite{Ramsay:Silverman:2005} provides a common model-fitting strategy, in which the coefficient functions are expanded using some sets of basis functions, and the basis coefficients are estimated using the ordinary least squares with{method}. However, it is not trivial to extend this to an image-on-scalar regression, particularly with biomedical imaging responses. For biomedical images, the objects (e.g., organs) on the images are usually irregularly shaped (e.g., breast tumors). Another example is that of brain images, as shown in Figure \ref{FIG:01}, especially slices from the bottom and the top of the brain. Even though some images seem to be \red{rectangular}, the true signal comes \red{only} from the domain of an object, and the image contains \red{only noise} outside the boundary of the object. Many smoothing methods, \red{such as}, tensor product smoothing \citep{Reiss:Goldsmith:Shang:Ogden:17, Chen:Delicado:Muller:17}, kernel smoothing \citep{Zhu:Fan:Kong:14}, and wavelet smoothing \citep{Morris:Carroll:06}, provide poor \red{estimations} over difficult regions \red{because they smooth}  inappropriately across boundary features,  \red{referred to as} the ``leakage" problem in the smoothing literature; see \cite{Ramsay:02}  and \cite{Sangalli:Ramsay:Ramsay:13}. Next, for technical reasons, imaging data often have different visual qualities. \red{The general characteristics} of medical images are determined and limited by the technology for each specific modality. \red{As a result,} \red{t}here is a great interest in \red{developing} a flexible method \red{with varying} smoothness to adaptively smooth biomedical imaging data.

In this \red{study}, we tackle the above challenges using bivariate splines on triangulations \citep{Lai:Wang:13} to effectively model the spatially nonstationary relationship and preserve the important features (shape, smoothness) of the imaging data. a triangulation can represent any two-dimensional (2D) geometric domain \red{effectively because} any polygon can be decomposed into triangles. We study the asymptotic properties of the bivariate spline estimators of the coefficient functions, and show that our spline estimators are root-$n$ consistent and asymptotically normal. The asymptotic results are used as a guideline to construct pointwise confidence intervals (PCIs) and simultaneous confidence corridors (SCCs; also referred to as ``simultaneous confidence \red{bands}/regions") for the true coefficient functions. 
Figure \ref{FIG:02} \red{shows the} proposed inferential approach. \red{Our method is statistically more efficient than} the tensor regression \citep{Li:Zhang:17} and the three-stage estimation \citep{Zhu:Fan:Kong:14}, \red{because} it is able to accommodate complex domains of arbitrary shape and adjust the individual smoothing needs of different coefficient functions \red{using} multiple smoothing parameters.  In addition, our method does not rely on estimating the spatial similarity and adaptive weights repeatedly, as in \cite{Zhu:Fan:Kong:14}; thus, it is much simpler. 

\begin{figure}[t]
	\begin{center}
			\includegraphics[scale=0.45]{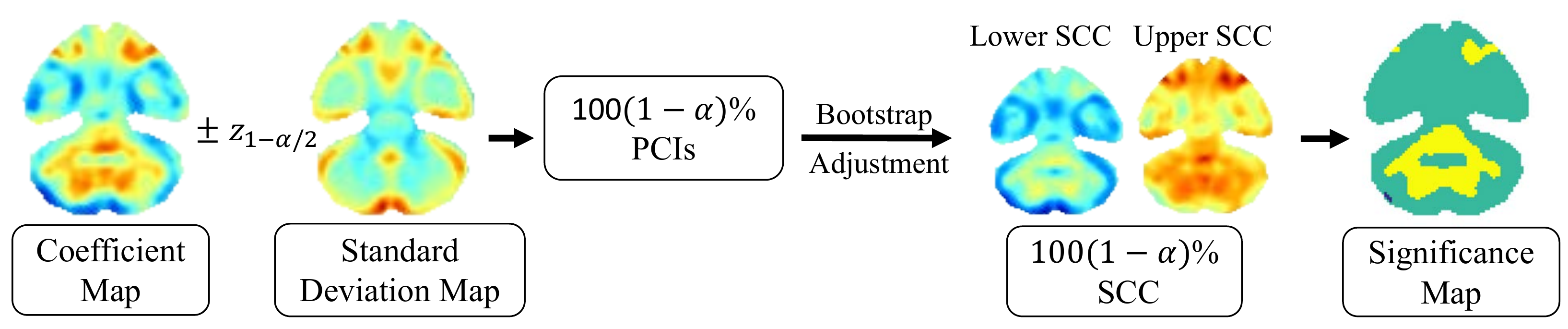}
	\end{center} \vspace{-.3in}
	\caption{A schematic diagram of proposed inferential approach.}
	\label{FIG:02}
\end{figure}

The remainder of the paper is structured as follows. Section 2 describes the spline estimators for the coefficient functions, and establish\red{es} their asymptotic properties. Section 3 describes the bootstrap method used to construct the SCC and how to estimate the unknown variance functions involved in the SCC. Section 4 presents the implementation of the proposed estimation and inference. Section 5 reports our findings from two  simulation studies. In Section 6, we illustrate the proposed method using PET data provided by the Alzheimer's Disease Neuroimaging Initiative (ADNI). Section 7 concludes the paper. All technical proofs of the theoretical results and additional numerical results are deferred to the Appendices A and B.

%%%%%%%%%%%%%%%%%%%%%%%%%%%%%%%%%%%%%%%%%%%%%%%%%%%%%%%%%%%%%%
%%%%%%%%%%%%%%%%%%%%%%%%%%%%%%%%%%%%%%%%%%%%%%%%%%%%%%%%%%%%%%
%%%%%%%%%%%%%%%%%%%%%%%%%%%%%%%%%%%%%%%%%%%%%%%%%%%%%%%%%%%%%%
% \setcounter{equation}{0} \setcounter{lemma}{0} %
% \setcounter{theorem}{0} \setcounter{proposition}{0} \setcounter{corollary}{0}
\vskip 0.1in \noindent \textbf{2. Models and Estimation Method} \vskip 0.1in
\renewcommand{\thetable}{2.\arabic{table}} \setcounter{table}{0} 
\renewcommand{\thefigure}{2.\arabic{figure}} \setcounter{figure}{0}
\renewcommand{\theequation}{2.\arabic{equation}} \setcounter{equation}{0} 
%\label{SEC:method}

%%%%%%%%%%%%%%%%%%%%%%%%%%%%%%%%%%%%%%%%%%%%%%%%%%%%%%%%%%%%%%
\noindent \textbf{2.1. Image-on-scalar regression model} \vskip .10in

Let $\bs{z}_j\in \Omega$ be the center point of the $j$th pixel in the domain $\Omega$, and let $Y_{ij}$ be the  imaging response of subject $i$ at location $j$. the actual data set consists of $\{(Y_{ij}, \widetilde{\mathbf{X}}_i, \bs{z}_j), i=1,\ldots,n,j=1,\ldots,N\}$, which can be modeled \red{as follows}:
\begin{equation}
Y_{ij}=\sum_{\ell =0}^{p}X_{i\ell}\beta_{\ell}^{o}(\bs{z}_{j})+\eta_{i}(\bs{z}_{j})+\sigma(\bs{z}_j)\varepsilon_{ij}.
\label{model2}
\end{equation}

Denote the eigenvalues and eigenfunctions of the covariance operator $G_{\eta}(\bs{z},\bs{z}^{\prime})$ as $\left\{\lambda_{k}\right\}_{k=1}^{\infty}$ \red{and} $\left\{\psi_{k}(\bs{z})\right\}_{k=1}^{\infty}$, respectively, where $\lambda_{1}\geq \lambda_{2}\geq \cdots \geq 0$, $\sum_{k=1}^{\infty}\lambda_{k}<\infty $, and $\left\{\psi_{k}\right\}_{k=1}^{\infty}$ forms an orthonormal basis of $L^{2}\left(\Omega\right) $.\ It follows from spectral theory that  $G_{\eta}(\bs{z},\bs{z}^{\prime}) =\sum_{k=1}^{\infty}\lambda_{k}\psi_{k}(\bs{z})\psi_{k}(\bs{z}^{\prime})$. The $i$th trajectory $\left\{\eta_{i}(\bs{z}), \bs{z} \in \Omega\right\} $ allows the Karhunen--Lo\'{e}ve $L^{2}$ representation \citep{Li:Hsing:2010, Sang:Huang:2012}: $\eta_{i}(\bs{z})=\sum_{k=1}^{\infty}\lambda_k^{1/2}\xi_{ik}\psi_{k}(\bs{z})$, $\lambda_k^{1/2}\xi_{ik}=\int_{\bs{z}\in \Omega}\eta_{i}(\bs{z})\psi_{k}(\bs{z})d\bs{z}$, where the random coefficients  $\xi_{ik}$ are uncorrelated random variables with mean \red{zero} and $E(\xi_{ik}\xi_{ik^{\prime}})=I(k=k^{\prime})$, referred to as the $k$th functional principal component score (FPCA) of the $i$th subject. Thus, the response measurements in (\ref{model2}) can be represented as follows:  
\begin{equation}
Y_{ij}=\sum_{\ell =0}^{p}\beta_{\ell}^{o}(\bs{z}_{j})
X_{i\ell}+\sum_{k=1}^{\infty}\lambda_k^{1/2}\xi_{ik}\psi_{k}(\bs{z}_{j})+ \sigma(\bs{z}_j)\varepsilon_{ij}.
\label{model3}
\end{equation}

%%%%%%%%%%%%%%%%%%%%%%%%%%%%%%%%%%%%%%%%%%%%%%%%%%%%%%%%%%%%%
%%%%%%%%%%%%%%%%%%%%%%%%%%%%%%%%%%%%%%%%%%%%%%%%%%%%%%%%%%%%%
\vskip .10in \noindent \textbf{2.2. Spline approximation over triangulations and penalized regression} \vskip .10in
% \subsection{Spline approximation over triangulations and penalized regression}
\label{SUBSEC:triangulations}

Note that the objects of interest on many biomedical images are often distributed over an irregular domain $\Omega$. \red{Triangulation} is an effective strategy to handle such data. For example, the spatial smoothing problem over difficult regions in \cite{Ramsay:02} and \cite{Sangalli:Ramsay:Ramsay:13} was solved \red{using} the finite element method (FEM) on triangulations, which \red{was} developed \red{primarily} to solve partial differential equations. \red{Here}, we approximate each coefficient function in (\ref{model3}) \red{using} bivariate splines over triangulations \citep{Lai:Schumaker:07}. The idea is to approximate each function $\beta_{\ell}(\cdot)$ \red{using} Bernstein basis polynomials that are piecewise polynomial functions over a 2D triangulated domain. Compared with the FEM, the proposed approach is appealing in the sense that \red{its} spline functions \red{are} more flexible and \red{it uses} various smoothness \red{settings} to better approximate the coefficient functions. In this section, we briefly introduce the \red{triangulation technique} and describe the bivariate penalized spline smoothing (BPST) method \red{used to approximate} the spatial data.

Triangulation is an effective tool to deal with data distributed over difficult regions with complex boundaries and/or interior holes. In the following, we use $T$ to denote a triangle \red{that} is a convex hull of three points not located \red{on} one line. A collection $\triangle=\{T_1,\ldots,T_H\}$ of $H$ triangles is called a triangulation of $\Omega=\cup_{h=1}^{H} T_{h}$, provided that any nonempty intersection between a pair of triangles in $\triangle$ is either a shared vertex or a shared edge. Given a triangle $T\in \triangle$, let $|T|$ be its longest edge length and $\varrho_{T}$ be the radius of the largest disk inscribed in $T$. Define the shape parameter of $T$ as the ratio $\pi_{T}=|T|/\varrho_T$. When $\pi_T$ is small, the triangles are relatively uniform in the sense that all angles of the triangles in $\triangle$ are relatively the same. Denote the size of $\triangle$ by $|\triangle|=\max \{|T|,T \in \triangle \}$, \red{that is}, the length of the longest edge of $\triangle$. For an integer $r\geq 0$, let $\mathcal{C}^r(\Omega)$ be the collection of all $r$th continuously differentiable functions over $\Omega$. Given $\triangle$, let $\mathcal{S}_{d}^{r}(\triangle )=\{s\in \mathcal{C}^{r}(\Omega ):s|_{T}\in \mathbb{P}_{d}(T), T \in \triangle \}$ be a spline space of degree $d$ and smoothness $r$ over $\triangle $, where $s|_{T}$ is the polynomial piece of spline $s$ restricted on triangle $T$, and  $\mathbb{P}_{d}$ is the space of all polynomials of degree less than or equal to $d$. Note that the major difference between the FEM and the BPST is the flexibility of the smoothness, $r$, and the degree of the polynomials, $d$. Specifically, the FEM in \cite{Sangalli:Ramsay:Ramsay:13} requires that $r=0$ and $d=1$ or $2$, \red{whereas the} BPST allows smoothness $r\ge 0$ and various degrees of polynomials.

We use Bernstein basis polynomials to represent the bivariate splines. For any $\ell=0,1,\ldots,p$, denote by $\triangle_{\ell}$ the triangulation of the $\ell$th component. Define 
\begin{equation*}
\mathcal{G}^{(p+1)}\equiv \mathcal{G}^{(p+1)}(\triangle_{0}\times \cdots \times \triangle_{p})=\left\{\bs{g}=(g_0,\ldots,g_p)^{\top},
g_{\ell}\in \mathcal{S}_d^r(\triangle_{\ell}),\ell=0,\ldots, p\right\},
\label{DEF:G_space}
\end{equation*}
and let $\{B_{\ell m}\}_{m \in \mathcal{M}_{\ell}}$ be the set of degree-$d$ bivariate Bernstein basis polynomials for $\mathcal{S}_{d}^{r}(\triangle_{\ell})$, where $\mathcal{M}_{\ell}$ \red{is} an index set of Bernstein basis polynomials. Denote by $\mathbf{B}_{\ell}$ the evaluation matrix of the Bernstein basis polynomials for the $\ell$th component, and \red{let} the $j$th row of $\mathbf{B}_{\ell}$ is given by $\mathbf{B}_{\ell}^{\top}(\bs{z}_{j})=\{B_{\ell m}(\bs{z}_{j}), m\in \mathcal{M}_{\ell}\}$. \red{We approximate} each $\beta_{\ell}(\cdot)$ \red{using}
$\beta_{\ell}(\bs{z}_{j})\approx \mathbf{B}_{\ell}^{\top}(\bs{z}_{j})\bs{\gamma}_{\ell}$, for $\ell=0,1,\ldots,p$,
where $\bs{\gamma}_{\ell}^{\top} =(\gamma_{\ell m},m \in \mathcal{M}_{\ell})$ is the spline coefficient vector.

Penalized spline smoothing has gained \red{in}  popularity over the last two decades; see \cite{Hall:Opsomer:05,Claeskens:Krivobokova:Opsomer:09,Schwarz:Krivobokova:16}. To define the penalized spline method, for any direction $z_{q}$, $q=1,2$, let $\nabla_{z_{q}}^{v}s(\bs{z})$ denote the $v$th--order derivative in the direction $z_{q}$ at the point $\bs{z}$. We consider the following penalized least squares problem: 
\[
\min_{(\beta_0,\ldots,\beta_p)^{\top}\in \mathcal{G}^{(p+1)}} \sum_{i=1}^{n}\sum_{j=1}^{N}\left\{Y_{ij}-\sum_{\ell =0}^{p}X_{i\ell}\beta_{\ell}(\bs{z}_{j})
 \right\}^{2}+\sum_{\ell =0}^{p}\rho_{n,\ell}\mathcal{E}(\beta_{\ell}),
\]
where $\mathcal{E}(s)= \sum_{T\in\triangle}\int_{T} \sum_{i+j=2}
\binom{2}{i}(\nabla_{z_{1}}^{i}\nabla_{z_{2}}^{j}s)^{2}dz_{1}dz_{2}$ is the roughness penalty, and $\rho_{n,\ell}$ is the penalty parameter for the $\ell$th function.

To satisfy the smoothness condition of the splines, we need to impose some linear constraints on the spline coefficients  $\bs{\gamma}_{\ell}$: $\mathbf{H}_{\ell}\bs{\gamma}_{\ell}=\mathbf{0}$, for $\ell=0,1,\ldots,p$. Thus, we have to minimize the following constrained least squares: 
 \[
\sum_{i=1}^{n}\sum_{j=1}^{N}\left\{Y_{ij}-\sum_{\ell =0}^{p}X_{i\ell}\mathbf{B}_{\ell}^{\top}(\bs{z}_{j})
 \bs{\gamma}_{\ell}\right\}^{2}+\sum_{\ell =0}^{p}\rho_{n,\ell}\bs{\gamma}_{\ell}^{\top}\mathbf{P}_{\ell}\bs{\gamma}_{\ell}, \mathrm{~subject~to~} \mathbf{H}_{\ell}\bs{\gamma}_{\ell}=0,
\]
where $\mathbf{P}_{\ell}$ is the block diagonal penalty matrix satisfying $\bs{\gamma}_{\ell}^{\top}\mathbf{P}_{\ell}\bs{\gamma}_{\ell}=\mathcal{E}(\mathbf{B}_{\ell}^{\top}\bs{\gamma}_{\ell})$.

We first remove the constraint \red{using a} QR decomposition of the transpose of the constraint matrix $\mathbf{H}_{\ell}$. Applying a QR decomposition on $\mathbf{H}_{\ell}^{\top}$, \red{we have}
$\mathbf{H}_{\ell}^{\top}=\mathbf{Q}_{\ell}\mathbf{R}_{\ell}=(\mathbf{Q}_{\ell,1}~\mathbf{Q}_{\ell,2})
\binom{\mathbf{R}_{\ell,1}}{\mathbf{R}_{\ell,2}}$, where $\mathbf{Q}_{\ell}$ is an orthogonal matrix and $\mathbf{R}_{\ell}$ is an upper triangular matrix. the submatrix $\mathbf{Q}_{\ell,1}$ \red{represents} the first $r$ columns of $\mathbf{Q}_{\ell}$, where $r$ is the rank of matrix $\mathbf{H}_{\ell}$, and $\mathbf{R}_{\ell,2}$ is a matrix of zeros. We reparametrize \red{this} using $\bs{\gamma}_{\ell} = \mathbf{Q}_{\ell,2}\bs{\theta}_{\ell}$, for some $\bs{\theta}_{\ell}$\red{.} \red{Then,} it is guaranteed that $\mathbf{H}_{\ell}\bs{\gamma}_{\ell}= \mathbf{0}$. \red{Thus,} the minimization problem is converted to \red{the following} conventional penalized regression problem, without restrictions:
\begin{equation}
\sum_{i=1}^{n}\sum_{j=1}^{N}\left\{Y_{ij}-\sum_{\ell =0}^{p}X_{i\ell}\mathbf{B}_{\ell}^{\top}(\bs{z}_{j})
\mathbf{Q}_{\ell,2}\bs{\theta}_{\ell}\right\}^{2}+\sum_{\ell =0}^{p}\rho_{n,\ell}\bs{\theta}_{\ell}^{\top}\mathbf{D}_{\ell}\bs{\theta}_{\ell},
\label{EQ:PLS}
\end{equation}
where $\mathbf{D}_{\ell}=\mathbf{Q}_{\ell,2}^{\top}\mathbf{P}_{\ell}\mathbf{Q}_{\ell,2}$.

Let $\widetilde{\mathbf{Y}}_i=(Y_{i1},Y_{i2},\ldots, Y_{iN})^{\top}$, $\mathbf{B}_{\ell}(\bs{z})=\{B_{\ell m}(\bs{z}),m\in \mathcal{M}_{\ell}\}^{\top}$, $\mathbb{Y}=(\widetilde{\mathbf{Y}}_1^{\top},\ldots, \widetilde{\mathbf{Y}}_n^{\top})^{\top}$,
and $\mathbb{U}=(\mathbf{U}_{11},\mathbf{U}_{12},\ldots, \mathbf{U}_{nN})^{\top}$,
where
\begin{equation}
\mathbf{U}_{ij}=\{X_{i0}\mathbf{B}_{0}(\bs{z}_{j})^{\top}
\mathbf{Q}_{0, 2}, X_{i1}\mathbf{B}_1(\bs{z}_j)^{\top}\mathbf{Q}_{1, 2}, \cdots, X_{ip}\mathbf{B}_{p}(\bs{z}_{j})^{\top}
\mathbf{Q}_{p, 2}\}^{\top}.
\label{EQ:U_ij}
\end{equation}
Let
$\bs{\theta}=(\bs{\theta}_{0}^{\top},\bs{\theta}_{1}^{\top},\ldots,\bs{\theta}_{p}^{\top})^{\top}$
and $\mathbb{D}(\rho_{n,0},\ldots,\rho_{n,p})=\diag\{\rho_{n,0}\mathbf{D}_0,\ldots,\rho_{n,p}\mathbf{D}_{p}\}$. Minimizing (\ref{EQ:PLS}) is then equivalent to minimizing
$\left\|\mathbb{Y}-\mathbb{U}\bs{\theta}\right\|^{2}+\bs{\theta}^{\top}\mathbb{D}(\rho_{n,0},\ldots,\rho_{n,p})\bs{\theta}$. \red{Hence,}
\begin{equation*}
\widehat{\bs{\theta}}=(\widehat{\bs{\theta}}_{0}^{\top},\widehat{\bs{\theta}}_{1}^{\top},
\ldots,\widehat{\bs{\theta}}_{p}^{\top})^{\top}
=\{\mathbb{U}^{\top}\mathbb{U}+\mathbb{D}(\rho_{n,0},\ldots,\rho_{n,p})\}^{-1}
\mathbb{U}^{\top} \mathbb{Y}.
\label{DEF:Thetahat}
\end{equation*}
Thus, the estimators of $\bs{\gamma}_{\ell}$ and $\beta_{\ell}(\cdot)$ are
\begin{equation}
\widehat{\bs{\gamma}}_{\ell}=\mathbf{Q}_{\ell,2}\widehat{\bs{\theta}}_{\ell}, ~~ \widehat{\beta}_{\ell}(\bs{z})=\mathbf{B}_{\ell}(\bs{z})^{\top}\widehat{\bs{\gamma}}_{\ell}.
\label{DEF:beta_l_hat}
\end{equation}

%%%%%%%%%%%%%%%%%%%%%%%%%%%%%%%%%%%%%%%%%%%%%%%%%%%%%%%%%%%%%
%%%%%%%%%%%%%%%%%%%%%%%%%%%%%%%%%%%%%%%%%%%%%%%%%%%%%%%%%%%%%

\vskip .10in \noindent \textbf{2.3. Asymptotic properties of the BPST estimators} \vskip .10in
% \label{SUBSEC:theory}

This section \red{examines} the asymptotics of the proposed estimators. Given random variables $U_{n}$ for $n\geq 1$, we write $U_{n}=O_{P}(b_{n})$ if $\lim\nolimits_{c\rightarrow \infty}\lim \sup_{n}P(|U_{n}|\geq cb_{n})=0$. Similarly, we write $U_{n}=o_{P}(b_{n})$ if $\lim_{n}P(|U_{n}|\geq cb_{n})=0$, for any constant $c>0$.
Next, to facilitate discussion, we introduce some notation of norms. For any function $g$ over the closure of domain $\Omega$, denote $\left\| g\right\|_{L ^2(\Omega)}^{2}=\int_{\Omega} g^2(\bs{z})d\bs{z}$ \red{as} the regular $L_2$ norm of $g$, and $\Vert g\Vert_{\infty ,\Omega} =\sup_{\bs{z}\in \Omega} |g(\bs{z})|$ \red{as} the supremum norm of $g$. \red{Further denote $\|\bs{g}\|_{\upsilon,\infty,\Omega}=\max_{0\leq \ell\leq p}|g_{\ell}|_{\upsilon,\infty,\Omega}$,  where $|g|_{\upsilon,\infty,\Omega}=\max_{i+j=\upsilon}\Vert \nabla_{z_{1}}^{i}\nabla_{z_{2}}^{j}g\Vert_{\infty ,\Omega}$ is the maximum norm of all $\upsilon$th\red{--}order derivatives of $g$ over $\Omega $.} Let $\mathcal{W}^{d,\infty}(\Omega)=\left\{g:|g|_{k,\infty, \Omega}<\infty, 0\le k\le d \right\}$ be the standard Sobolev space. Next, we introduce some technical conditions.

\begin{itemize}
\item[(A1)] For any $\ell=0,\ldots, p$, $\beta_{\ell}^{o}(\cdot)\in \mathcal{W}^{d+1 ,\infty}(\Omega )$, for an integer $d \ge 1$.

\vspace{-0.2in}
\item[(A2)] For any $i=1,\ldots,n$, $j=1,\ldots,N$, $\varepsilon_{ij}$'s are independent with mean \red{zero and} variance \red{one}, and for any $k\geq 1$, $\xi_{ik}$ are uncorrelated random variables with mean \red{zero} and variance \red{one}.

\vspace{-0.2in}
\item[(A3)] For any $\ell=0,1,\ldots, p$, there exists a positive constant $C_{\ell}$, such that $E|X_{\ell}|^8\leq C_{\ell}$. The eigenvalues of $\bs{\Sigma}_{X}=E(\bs{X}\bs{X}^{\top})$ are bounded away from \red{zero} and infinity.

\vspace{-0.2in}
\item[(A4)] The function $\sigma(\bs{z})\in \mathcal{C}^{(1)}(\Omega)$, with $0<c_{\sigma}\leq \sigma(\bs{z}) \leq C_{\sigma}\leq \infty$, for any $\bs{z}\in \Omega$; for any $k$, $\psi_{k}(\bs{z})\in \mathcal{C}^{(1)}(\Omega)$ and $0<c_{G}\leq G_{\eta}(\bs{z},\bs{z})\leq C_{G}\leq \infty$, for any $\bs{z}\in \Omega$.

\vspace{-0.2in}
\item[(A5)] Let $|\underline{\triangle}|=\min_{0\leq \ell \leq p}|\triangle_{\ell}|$ and  $|\overline{\triangle}|=\max_{0\leq \ell \leq p}|\triangle_{\ell}|$. the triangulations $\triangle_{\ell}$ satisify that $\limsup_n (|\overline{\triangle}|/|\underline{\triangle}|)<\infty$. The triangulations are $\pi$-quasi-uniform; that is, there exists a positive constant $\pi$, such that $\max_{0\leq \ell \leq p}\{(\min_{T\in \triangle_{\ell}}\varrho_T)^{-1} |\triangle_{\ell}|\} \leq \pi$.

\vspace{-0.2in}
\item[(A6)] As $N \rightarrow \infty$, $n \rightarrow \infty$, for some $0<\kappa<1$, $N^{-1}n^{1/(d+1)+\kappa} \rightarrow 0$, $n^{1/2} |\overline{\triangle}|^{d+1}\rightarrow 0$, $N^{1/2}|\underline{\triangle}| \rightarrow \infty$, and the smoothing parameters satisfy that $n^{-1/2}N^{-1}|\underline{\triangle}|^{-3} \rho_{n} \rightarrow 0$, where $\rho_n=\max_{0\leq \ell\leq p}\rho_{n,\ell}$. 
\end{itemize}

The above assumptions are mild conditions that \red{are} satisfied in many practical situations. Assumption (A1) describes the \red{usual} requirement on the coefficient functions \red{described} in the literature \red{on} nonparametric estimation. Assumption (A1) can be relaxed to Assumption (A1$'$) in Section 2.4, which only requires $\beta_{\ell}^{o}(\cdot)\in \mathcal{C}^{(0)}(\Omega)$ when dealing with imaging data with sharp edges; see Section 2.4. Assumptions (A1) and (A2) are similar \red{to} Assumptions (A1) and (A2) in  \cite{Gu:Wang:Wolfgang:Yang:2014} and Assumptions (A1)--(A3) in \cite{Huang:Wu:Zhou:2004}. Assumption (A3) is \red{analogous} to Assumption (A5) in \cite{Gu:Wang:Wolfgang:Yang:2014}, ensuring that $X_{i\ell}$ \red{is} not multicollinear. Assumption (A5) requires that \red{$\triangle_{\ell}$ be of similar size}, and suggests the use of more uniform triangulations with smaller shape parameters. Assumption (A6) implies that the number of pixels for each image $N$ diverges to infinity and the sample size $n$ grows as $N\rightarrow \infty$, a well-developed asymptotic scenario for dense functional data \citep{Li:Hsing:2010}. Assumption (A6) also describes the requirement of the growth rate of the dimension of the spline spaces relative to the sample size and the image resolution. This assumption is easily satisfied \red{because} images measured using \red{current} technology are usually of \red{sufficiently high} resolution.

The following theorem provides the $L_2$ convergence rate of $\widehat{\beta}_{\ell}(\cdot)$, for $\ell=0,1,\ldots,p$. a detailed \red{proof  is} given in Appendix A.

%%%%%%%%%%%%%%%%%%%%%%%%%%%%%%%%%%%%%%%%%%%%%%%%%%%%%%%%%%%%%
\begin{theorem}
\label{THM:beta-convergence}
Suppose Assumptions (A1)--(A5) hold \red{and} $N^{1/2}|\underline{\triangle}|\rightarrow \infty$ as $N\rightarrow \infty$. \red{Then,} for any $\ell=0,1,\ldots,p$, the BPST estimator $\widehat{\beta}_{\ell}(\cdot)$ is consistent and satisfies 
$\Vert \widehat{\beta}_{\ell}-\beta_{\ell}^{o}\Vert_{L ^2(\Omega)}=O_{P}\left\{
\frac{\rho_{n}}{nN|\underline{\triangle}|^{3}}\|\bs{\beta}^{o}\|_{2,\infty}
+\left(1+\frac{\rho_{n}}{nN|\underline{\triangle}|^{5}}\right)
|\overline{\triangle}|^{d +1}\|\bs{\beta}^{o}\|_{d+1,\infty}
+n^{-1/2}\right\}$.
\end{theorem}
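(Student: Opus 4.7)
The strategy is to reduce $\|\widehat{\beta}_\ell - \beta_\ell^o\|_{L_2(\Omega)}$ to four separate $L_2$ bounds via the decomposition $\widehat{\beta}_\ell - \beta_\ell^o = (\widehat{\beta}_\ell - \beta_\ell^\ast) + (\beta_\ell^\ast - \beta_\ell^o)$, where $\beta_\ell^\ast$ is the best $L_\infty$-approximant of $\beta_\ell^o$ in the spline space $\mathcal{S}_d^r(\triangle_\ell)$. Standard bivariate spline approximation theory \citep{Lai:Schumaker:07} combined with Assumptions (A1) and (A5) gives $\|\beta_\ell^\ast - \beta_\ell^o\|_{\infty,\Omega} = O(|\overline{\triangle}|^{d+1}\|\bs{\beta}^o\|_{d+1,\infty})$ and $\mathcal{E}(\beta_\ell^\ast) = O(\|\bs{\beta}^o\|_{2,\infty}^2)$. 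Letting $\bs{\theta}^\ast$ denote the corresponding reparametrized coefficient vector, the rewritten model $Y_{ij} = \mathbf{U}_{ij}^\top \bs{\theta}^\ast + R_{ij} + \eta_i(\bs{z}_j) + \sigma(\bs{z}_j)\varepsilon_{ij}$ with $R_{ij} = \sum_\ell X_{i\ell}(\beta_\ell^o(\bs{z}_j) - \beta_\ell^\ast(\bs{z}_j))$ together with the closed-form $\widehat{\bs{\theta}}$ yields, for $\bs{\Gamma}_n = \mathbb{U}^\top\mathbb{U} + \mathbb{D}(\rho_{n,0},\ldots,\rho_{n,p})$,
\[
\widehat{\bs{\theta}} - \bs{\theta}^\ast = -\bs{\Gamma}_n^{-1}\mathbb{D}\bs{\theta}^\ast + \bs{\Gamma}_n^{-1}\mathbb{U}^\top\bs{R} + \bs{\Gamma}_n^{-1}\mathbb{U}^\top\bs{E}_\eta + \bs{\Gamma}_n^{-1}\mathbb{U}^\top\bs{E}_\varepsilon,
\]
where $\bs{R},\bs{E}_\eta,\bs{E}_\varepsilon$ stack the $R_{ij}$, $\eta_i(\bs{z}_j)$, and $\sigma(\bs{z}_j)\varepsilon_{ij}$ respectively. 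Converting back via $\widehat{\beta}_\ell - \beta_\ell^\ast = \mathbf{B}_\ell^\top \mathbf{Q}_{\ell,2}(\widehat{\bs{\theta}}_\ell - \bs{\theta}_\ell^\ast)$ reduces the theorem to an $L_2$ estimate for each of the four pieces.

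The two deterministic pillars of the argument are (i) the Riesz equivalence of the Bernstein basis on a $\pi$-quasi-uniform triangulation, $\|\mathbf{B}_\ell^\top(\cdot)\bs{c}\|_{L_2(\Omega)}^2 \asymp |\underline{\triangle}|^2 \|\bs{c}\|^2$ on the constrained subspace, together with its Riemann-sum discretization $N^{-1}\sum_j |\mathbf{B}_\ell^\top(\bs{z}_j)\bs{c}|^2 \asymp \|\mathbf{B}_\ell^\top(\cdot)\bs{c}\|_{L_2(\Omega)}^2$, which is uniform once $N^{1/2}|\underline{\triangle}| \to \infty$ as assumed; and (ii) the inverse inequality $\|\mathbf{D}_\ell\|_{\mathrm{op}} = O(|\underline{\triangle}|^{-2})$ for the second-derivative penalty on splines. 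Combined with the eigenvalue bounds on $\bs{\Sigma}_X$ from Assumption (A3), these imply that on the relevant subspace the nonzero eigenvalues of $\mathbb{U}^\top\mathbb{U}$ are of order $nN|\underline{\triangle}|^2$ and $\|\bs{\Gamma}_n^{-1}\|_{\mathrm{op}} = O\{(nN|\underline{\triangle}|^2)^{-1}\}$ under (A6). Using these, the penalty-bias term yields a contribution of order $\rho_n(nN|\underline{\triangle}|^3)^{-1}\|\bs{\beta}^o\|_{2,\infty}$ via the identity $(\mathbb{D}\bs{\theta}^\ast)^\top \bs{\Gamma}_n^{-1}(\mathbb{D}\bs{\theta}^\ast) \leq \bs{\theta}^{\ast\top}\mathbb{D}\bs{\theta}^\ast \lesssim \rho_n\|\bs{\beta}^o\|_{2,\infty}^2$; the approximation-propagation term gives the stated $(1 + \rho_n/(nN|\underline{\triangle}|^5))|\overline{\triangle}|^{d+1}\|\bs{\beta}^o\|_{d+1,\infty}$ factor via the uniform bound on $R_{ij}$, the ``$1$'' coming from the unpenalized part and the second summand from the penalty-inflation correction; and the pixel-noise term $\bs{\Gamma}_n^{-1}\mathbb{U}^\top\bs{E}_\varepsilon$ is $O_P((nN|\underline{\triangle}|^2)^{-1/2})$ by a variance computation using Assumption (A2), absorbed into the remaining terms under (A6).

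The main obstacle is the subject-level noise contribution $\bs{\Gamma}_n^{-1}\mathbb{U}^\top\bs{E}_\eta$, which furnishes the dominant $n^{-1/2}$ term. Because $\eta_i(\cdot)$ is strongly correlated across pixels within a subject, an entry-wise bound massively overstates its fluctuation; the correct route is to plug in the Karhunen--Lo\'{e}ve expansion $\eta_i(\bs{z}) = \sum_{k\geq 1}\lambda_k^{1/2}\xi_{ik}\psi_k(\bs{z})$ and exploit independence \emph{only} across subjects. Writing $\mathbb{U}^\top\bs{E}_\eta = \sum_{i=1}^n \sum_{k\geq 1}\lambda_k^{1/2}\xi_{ik}\bs{v}_{ik}$ with $\bs{v}_{ik} = \sum_{j=1}^N \mathbf{U}_{ij}\psi_k(\bs{z}_j)$, the inner sum is a Riemann approximation of $N\int \mathbf{U}_i(\bs{z})\psi_k(\bs{z})\,d\bs{z}$ and is of $\ell_2$-size $O(N|\underline{\triangle}|)$ under Assumption (A4). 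Independence across $i$ and the summability $\sum_k\lambda_k<\infty$ then yield $E\|\mathbb{U}^\top\bs{E}_\eta\|_2^2 = O(n(N|\underline{\triangle}|)^2)$; translating through $\bs{\Gamma}_n^{-1}$ of operator order $(nN|\underline{\triangle}|^2)^{-1}$ and back through the Riesz equivalence to the function $L_2$ norm delivers the $O_P(n^{-1/2})$ residual. Assembling the four pieces with the triangle inequality gives the claimed rate. The execution follows the template of \cite{Huang:Wu:Zhou:2004} and \cite{Gu:Wang:Wolfgang:Yang:2014}, extended to penalized bivariate splines on triangulations and to the spatially structured subject effect $\eta_i$.
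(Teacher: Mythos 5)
Your overall architecture coincides with the paper's: the estimator is split into a deterministic bias part, a subject-level noise part driven by $\eta_i$, and a pixel-level noise part driven by $\varepsilon_{ij}$; the Bernstein-basis stability $\|\mathbf{B}^{\top}\bs{c}\|_{L_2}^2\asymp|\triangle|^2\|\bs{c}\|^2$ of Lemma \ref{LEM:normequity}, the empirical-versus-theoretical norm equivalence under $N^{1/2}|\underline{\triangle}|\rightarrow\infty$ (Lemma \ref{LEM:Rnorder-vec}), and the eigenvalue bound $\lambda_{\min}(\mathbb{U}^{\top}\mathbb{U}+\mathbb{D})\gtrsim nN|\underline{\triangle}|^{2}$ (Lemma \ref{LEM:Gamma_rho}) play exactly the roles they play in the paper; and your second-moment treatment of $\bs{\Gamma}_{n}^{-1}\mathbb{U}^{\top}\bs{E}_{\eta}$ via the Karhunen--Lo\'{e}ve expansion, exploiting independence only across subjects and the boundedness of $G_{\eta}$, is precisely Lemma \ref{LEM:thetatilde-eta} (and Lemma \ref{LEM:thetatilde-eps} for the pixel-noise term). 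You correctly identify the $\eta$ contribution as the source of the dominant $n^{-1/2}$. Where you genuinely diverge is the bias analysis: the paper compares the penalized noiseless fit $\widehat{\bs{\beta}}_{\mu}$ to the unpenalized projection $\widetilde{\bs{\beta}}_{\mu}$ through the variational orthogonality relations and the energy-norm monotonicity $\|\widehat{\bs{\beta}}_{\mu}\|_{\mathcal{E}}\leq\|\widetilde{\bs{\beta}}_{\mu}\|_{\mathcal{E}}$ (Lemma \ref{LEM:uniformbiasrate}), which yields a supremum-norm bound reused later for the asymptotic normality; your direct matrix identity is more elementary and suffices for the $L_2$ statement here. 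Note the two are algebraically the same object: $\bs{\theta}^{\ast}-\bs{\Gamma}_{n}^{-1}\mathbb{D}\bs{\theta}^{\ast}+\bs{\Gamma}_{n}^{-1}\mathbb{U}^{\top}\bs{R}$ is exactly the paper's $\widehat{\bs{\theta}}_{\mu}$.

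One step needs repair. The inequality $(\mathbb{D}\bs{\theta}^{\ast})^{\top}\bs{\Gamma}_{n}^{-1}(\mathbb{D}\bs{\theta}^{\ast})\leq\bs{\theta}^{\ast\top}\mathbb{D}\bs{\theta}^{\ast}\lesssim\rho_{n}\|\bs{\beta}^{o}\|_{2,\infty}^{2}$ is valid, but it controls the wrong quadratic form: the $L_2$ norm of the penalty-bias contribution is $\asymp|\underline{\triangle}|^{2}(\mathbb{D}\bs{\theta}^{\ast})^{\top}\bs{\Gamma}_{n}^{-2}(\mathbb{D}\bs{\theta}^{\ast})$, and inserting $\bs{\Gamma}_{n}^{-1}\preceq\lambda_{\min}(\bs{\Gamma}_{n})^{-1}\mathbf{I}$ once reduces this only to $(\rho_{n}/(nN))^{1/2}\|\bs{\beta}^{o}\|_{2,\infty}$, which is not dominated by $\rho_{n}(nN|\underline{\triangle}|^{3})^{-1}\|\bs{\beta}^{o}\|_{2,\infty}+n^{-1/2}$ for every $\rho_{n}$ permitted under (A1)--(A5) alone. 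The fix uses tools you already invoke: the inverse inequality gives $\|\mathbb{D}\bs{\theta}^{\ast}\|^{2}\leq\|\mathbb{D}\|_{\mathrm{op}}\,\bs{\theta}^{\ast\top}\mathbb{D}\bs{\theta}^{\ast}\lesssim\rho_{n}|\underline{\triangle}|^{-2}\cdot\rho_{n}\|\bs{\beta}^{o}\|_{2,\infty}^{2}$, whence $\|\bs{\Gamma}_{n}^{-1}\mathbb{D}\bs{\theta}^{\ast}\|\lesssim\rho_{n}(nN|\underline{\triangle}|^{3})^{-1}\|\bs{\beta}^{o}\|_{2,\infty}$ and, after the Riesz equivalence, an $L_2$ contribution of order $\rho_{n}(nN|\underline{\triangle}|^{2})^{-1}\|\bs{\beta}^{o}\|_{2,\infty}$, which sits inside the claimed rate. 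With that substitution the argument is complete.
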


Theorem \ref{THM:beta-normality} states the asymptotic normality of $\widehat{\beta}_{\ell}$ at any given point $\bs{z}\in \Omega$, for $\ell=0,1,\ldots,p$. See Appendix A for \red{a detailed proof}. Denote
\begin{equation}
\bs{\Xi}_{n}(\bs{z})=\widetilde{\mathbb{B}}(\bs{z})^{\top}
E\left\{\bs{\Gamma}_{n,\rho}^{-1}\frac{1}{n^2N^2}\sum_{i=1}^{n}\sum_{j,j^{\prime}=1}^{N}
\mathbf{U}_{ij}\mathbf{U}_{ij'}^{\top}
G_{\eta}(\bs{z}_j,\bs{z}_{j^{\prime}})\bs{\Gamma}_{n,\rho}^{-1} \right\}\widetilde{\mathbb{B}}(\bs{z}),
\label{DEF:Xi_n}
\end{equation}
where $\mathbf{U}_{ij}$ and $\bs{\Gamma}_{n,\rho}$ are given in (\ref{EQ:U_ij}) and (\ref{DEF:Gamma_rho}), \red{respectively},  in Appendix 1,  $\widetilde{\mathbf{B}}_{\ell}(\bs{z})=\mathbf{Q}_{2,\ell}^{\top}\mathbf{B}_{\ell}(\bs{z})$ for $\ell=0, \ldots, p$, and 
$\widetilde{\mathbb{B}}(\bs{z})=\diag\{\widetilde{\mathbf{B}}_0(\bs{z}),
\cdots,\widetilde{\mathbf{B}}_p(\bs{z})\}$.

%%%%%%%%%%%%%%%%%%%%%%%%%%%%%%%%%%%%%%%%%%%%%%%%%%%%%%%%%%%%%
\begin{theorem}
\label{THM:beta-normality}
Suppose Assumptions (A1)--(A6) hold. If for any $\ell=0,1,\ldots, p$, $|X_{i\ell}| \leq C_{\ell} < \infty$, \red{then}
$\bs{\Xi}_{n}^{-1/2}(\bs{z}) \{\widehat{\bs{\beta}}(\bs{z})-\bs{\beta}^{o}(\bs{z}) \}\overset{\mathcal{L}}{\longrightarrow}N\left(\mathbf{0}, \mathbf{I}_{(p+1)\times (p+1)}\right)$ as $N \rightarrow \infty$ \red{and} $n\rightarrow \infty$, where $\bs{\Xi}_{n}(\bs{z})$ is given in (\ref{DEF:Xi_n}). Furthermore, there exist positive constants $c_V<C_V<+\infty$, such that
$c_V n^{-1}\left(1+\frac{\rho_{n}}{nN |\underline{\triangle}|^{4}}\right)^{-2}
\leq \mathrm{Var}\{\widehat{\beta}_{\ell}(\bs{z})\} \leq C_V n^{-1}$, for any $\ell=0,1,\ldots, p$.
\end{theorem}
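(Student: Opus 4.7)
\noindent\textbf{Proof plan for Theorem \ref{THM:beta-normality}.}

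The plan is to decompose $\widehat{\bs{\beta}}(\bs{z})-\bs{\beta}^o(\bs{z})$ into a deterministic bias piece plus two stochastic pieces (one coming from the $\eta_i$ process, one from the measurement noise $\sigma \varepsilon_{ij}$), show the bias and the $\varepsilon$-piece are asymptotically negligible in the normalization $\bs{\Xi}_n^{-1/2}$, and then apply a multivariate Lindeberg CLT to the remaining $\eta$-piece, which is a sum of i.i.d.\ vectors across subjects. Writing $\bs{\theta}^{*}$ for the vector of spline coefficients of the best penalized $L^\infty$ approximation to $\bs{\beta}^{o}$, and using $\widehat{\bs{\theta}}=\{\mathbb{U}^\top\mathbb{U}+\mathbb{D}\}^{-1}\mathbb{U}^\top\mathbb{Y}$, the standard algebra gives
\[
\widehat{\bs{\beta}}(\bs{z})-\bs{\beta}^{o}(\bs{z})
=\underbrace{\widetilde{\mathbb{B}}(\bs{z})^\top\bs{\theta}^{*}-\bs{\beta}^{o}(\bs{z})}_{\text{spline bias}}
\;-\;\underbrace{\tfrac{1}{nN}\widetilde{\mathbb{B}}(\bs{z})^\top\bs{\Gamma}_{n,\rho}^{-1}\mathbb{D}\bs{\theta}^{*}}_{\text{penalty bias}}
\;+\;\underbrace{\tfrac{1}{nN}\widetilde{\mathbb{B}}(\bs{z})^\top\bs{\Gamma}_{n,\rho}^{-1}\!\!\sum_{i,j}\!\mathbf{U}_{ij}\eta_i(\bs{z}_j)}_{S_\eta(\bs{z})}
\;+\;\underbrace{\tfrac{1}{nN}\widetilde{\mathbb{B}}(\bs{z})^\top\bs{\Gamma}_{n,\rho}^{-1}\!\!\sum_{i,j}\!\mathbf{U}_{ij}\sigma(\bs{z}_j)\varepsilon_{ij}}_{S_\varepsilon(\bs{z})}.
\]

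The first step is to control the two deterministic pieces. By the standard bivariate spline approximation bound on $\pi$-quasi-uniform triangulations \citep{Lai:Schumaker:07}, under (A1) and (A5) there exists $\bs{\theta}^{*}$ with $\|\widetilde{\mathbb{B}}^\top\bs{\theta}^{*}-\bs{\beta}^{o}\|_{\infty,\Omega}=O(|\overline{\triangle}|^{d+1}\|\bs{\beta}^o\|_{d+1,\infty})$, and the condition $n^{1/2}|\overline{\triangle}|^{d+1}\to 0$ in (A6) makes this $o(n^{-1/2})$. For the penalty bias I would use $\|\bs{\Gamma}_{n,\rho}^{-1}\|\lesssim |\underline{\triangle}|^{-2}$ (spectral properties of the Bernstein mass matrix) together with the rate $n^{-1/2}N^{-1}|\underline{\triangle}|^{-3}\rho_n\to 0$ in (A6) to conclude that the penalty bias is also $o(n^{-1/2})$. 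Both are therefore negligible after multiplication by $\bs{\Xi}_n^{-1/2}=O(n^{1/2})$.

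The second step is to show $S_\varepsilon(\bs{z})$ is asymptotically negligible. Conditional on the $X_{i\ell}$ and $\bs{z}_j$, $S_\varepsilon(\bs{z})$ has variance of order $\widetilde{\mathbb{B}}(\bs{z})^\top\bs{\Gamma}_{n,\rho}^{-1}\{(nN)^{-2}\sum_{i,j}\sigma^2(\bs{z}_j)\mathbf{U}_{ij}\mathbf{U}_{ij}^\top\}\bs{\Gamma}_{n,\rho}^{-1}\widetilde{\mathbb{B}}(\bs{z})$. Using $|\widetilde{\mathbf{B}}_\ell(\bs{z})^\top\bs{\Gamma}_{n,\rho}^{-1}\widetilde{\mathbf{B}}_\ell(\bs{z})|\asymp |\underline{\triangle}|^{-2}$ and counting $O(|\underline{\triangle}|^{-2})$ nonzero Bernstein basis entries at any point, this variance is of order $(nN|\underline{\triangle}|^{2})^{-1}$. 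Condition $N^{-1}n^{1/(d+1)+\kappa}\to 0$ in (A6), combined with $N^{1/2}|\underline{\triangle}|\to\infty$, makes $(nN|\underline{\triangle}|^{2})^{-1}=o(n^{-1})$, so $\bs{\Xi}_n^{-1/2}S_\varepsilon(\bs{z})=o_P(1)$.

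The third and main step is the CLT for $S_\eta(\bs{z})$. Setting $\bs{W}_i=(nN)^{-1}\sum_{j=1}^{N}\mathbf{U}_{ij}\eta_i(\bs{z}_j)$, the $\bs{W}_i$ are i.i.d.\ across $i$ because $\eta_i$ and $\widetilde{\mathbf{X}}_i$ are independent across subjects, and $S_\eta(\bs{z})=\widetilde{\mathbb{B}}(\bs{z})^\top\bs{\Gamma}_{n,\rho}^{-1}\sum_{i=1}^{n}\bs{W}_i$. Its variance equals $\bs{\Xi}_n(\bs{z})$ by direct calculation, using the Karhunen--Lo\`eve representation and the covariance formula for $G_\eta$. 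To invoke Lindeberg--Feller I would verify the Lyapunov condition by bounding the eighth-moment condition in (A3) and the boundedness of $\sigma$, $G_\eta$, and the $X_{i\ell}$, then use the Cramér--Wold device to reduce to a scalar CLT. The two-sided variance bound then follows from (a) the lower bound $c_G\le G_\eta(\bs{z},\bs{z})$ and the lower eigenvalue of $\bs{\Sigma}_X$ together with the stability bounds for the Bernstein basis, giving $\text{Var}\{\widehat{\beta}_\ell(\bs{z})\}\gtrsim n^{-1}(1+\rho_n/(nN|\underline{\triangle}|^{4}))^{-2}$, and (b) the corresponding upper bounds from $G_\eta\le C_G$ and the upper eigenvalue of $\bs{\Sigma}_X$, giving $\text{Var}\{\widehat{\beta}_\ell(\bs{z})\}\lesssim n^{-1}$.

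The main obstacle I anticipate is the precise spectral analysis of $\bs{\Gamma}_{n,\rho}^{-1}$ near the point $\bs{z}$: one needs uniform equivalence between the empirical Gram matrix $(nN)^{-1}\mathbb{U}^\top\mathbb{U}$ and its population analogue (which requires controlling the Riemann-sum error at scale $|\underline{\triangle}|$, hence $N^{1/2}|\underline{\triangle}|\to\infty$), together with sharp eigenvalue bounds for the perturbed matrix $\bs{\Gamma}_{n,\rho}$ that correctly track the penalty contribution of order $\rho_n/(nN|\underline{\triangle}|^{4})$ appearing in the variance lower bound. Once these spectral facts are in hand (they build on the estimation arguments already used for Theorem \ref{THM:beta-convergence}), the remaining CLT and bias control are standard.
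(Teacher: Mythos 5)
Your plan is correct and is essentially the paper's own argument: the same decomposition (\ref{EQ:decompose3}) into penalized-projection bias, $\eta$-part, and $\varepsilon$-part, the same spectral control of $\bs{\Gamma}_{n,\rho}$ (Lemma \ref{LEM:Gamma_rho}), negligibility of the bias via Lemma \ref{LEM:uniformbiasrate} and Theorem \ref{THM:variance-bias}, and a Lindeberg CLT over subjects (Lemma \ref{LEM:normality}). The only difference is bookkeeping: the paper keeps the measurement-error term inside the CLT, writing $\mathbf{a}^{\top}(\widehat{\bs{\theta}}_{\eta}+\widehat{\bs{\theta}}_{\varepsilon})$ as a single conditionally independent array and checking Lindeberg for the combined sum, whereas you discard $S_{\varepsilon}$ first as $o_P(n^{-1/2})$; both routes work because its variance contribution, of order $(nN|\underline{\triangle}|^{2})^{-1}$, is dominated by the $\eta$-part.
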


%%%%%%%%%%%%%%%%%%%%%%%%%%%%%%%%%%%%%%%%%%%%%%%%%%%%%%%%%%%%%
%%%%%%%%%%%%%%%%%%%%%%%%%%%%%%%%%%%%%%%%%%%%%%%%%%%%%%%%%%%%%
\vskip .10in \noindent \textbf{2.4. Piecewise constant spline over triangulation smoothing} \vskip .10in
%\label{SUBSEC:PCST}

Many imaging data can be regarded as a noisy version of a piecewise-smooth function of $\bs{z}\in \Omega$ with sharp edges, which often reflect the functional or structural changes. The penalized bivariate spline smoothing method introduced, in Section 2.2, assumes some \red{degree} of smoothness over the entire image. To relax this assumption \red{while preserving} the features of sharp edges, we make the following less stringent assumption on the smoothness of the coefficient functions: 
\begin{itemize}
\item[(A1$'$)] For any $\ell=0,\ldots, p$, the bivariate function $\beta_{\ell}^{o}(\cdot)\in \mathcal{C}^{(0)}(\Omega)$.
\end{itemize}

For the estimation, we consider the piecewise constant spline over triangulation (PCST) method. For any $\ell=1,\ldots, p$, denote by $\mathcal{PC}(\triangle_{\ell})$ the space of piecewise constant functions over each $T_{m}$, for $m\in \mathcal{M}_{\ell}$. The bivariate spline basis functions of $\mathcal{PC}(\triangle_{\ell})$ are denoted as $\{B_{\ell m}(\bs{z})\}_{m\in \mathcal{M}_{\ell}}$, which are simply indicator functions over triangle $T_{m}$, $B_{\ell m}(\bs{z})=I(\bs{z} \in T_m)$, $m\in \mathcal{M}_{\ell}$. Assumption (A1$'$) controls the bias of the piecewise constant spline estimator for $\beta_{\ell}^{o}$ and leads to the estimation consistency.

When using the constant bivariate spline basis functions, \red{we have} $\mathcal{E}(s)=0$ for all $s\in \mathcal{PC}(\triangle)$, and for any $\bs{z}\in \Omega$, $\mathbf{B}_{\ell}(\bs{z})\mathbf{B}_{\ell}(\bs{z})^{\top}=\diag\{B_{\ell m}^{2}(\bs{z}),m\in \mathcal{M}_{\ell}\}$. Then, we can write  $\widehat{\bs{\gamma}}_{m}=(\widehat{\gamma}_{0m},\widehat{\gamma}_{1m},\ldots,\widehat{\gamma}_{pm})^{\top}=\widehat{\mathbf{V}}_{m}^{-1}\left\{(nN)^{-1}\sum_{i=1}^{n}\sum_{j=1}^{N}B_{\ell m}(\bs{z}_{j})X_{i\ell}Y_{ij}\right\}_{\ell =0}^{p}$, where 
\begin{equation}
\widehat{\mathbf{V}}_{m}=\frac{1}{nN}\sum_{j=1}^{N}B_{\ell m}^{2}(\bs{z}_{j})\sum_{i=1}^{n}\widetilde{\mathbf{X}}_{i}\widetilde{\mathbf{X}}_{i}^{\top}
=\left\{\frac{1}{nN}\sum_{i=1}^{n}\sum_{j=1}^{N}B_{\ell m}^{2}(\bs{z}_{j})X_{i\ell}X_{i\ell^{\prime}}\right\}_{\ell,\ell^{\prime}=0}^{p}.
\label{DEF:V-m}
\end{equation}
By simple linear algebra, for any $\ell=0,\ldots,p$, the  PCST estimator \red{is given by}
\begin{equation}
\widehat{\beta}_{\ell}^{\mathrm{c}}(\bs{z})=\sum_{m\in \mathcal{M}_{\ell}} \widehat{\gamma}_{\ell m}B_{\ell m}(\bs{z}).
\label{EQN:betahat_c}
\end{equation}

For any $\bs{z}\in \Omega$, define  the index of the triangle containing $\bs{z}$ as $m(\bs{z})$; \red{that is,} $m(\bs{z})=m$ if $\bs{z}\in T_{m}$. Then, $\widehat{\beta}_{\ell}(\bs{z})=\widehat{\bs{\gamma}}_{\ell m(\bs{z})}$ and
$\widehat{\bs{\beta}}^{\mathrm{c}}(\bs{z})=(\widehat{\beta}_{0}^{\mathrm{c}}(\bs{z}),\ldots,\widehat{\beta}_{p}^{\mathrm{c}}(\bs{z}))^{\top}=(\widehat{\gamma}_{0m(\bs{z})},\ldots, \widehat{\gamma}_{pm(\bs{z})})^{\top}=\widehat{\bs{\gamma}}_{m(\bs{z})}$. 
For any $\bs{z}\in \Omega$, denote 
\begin{equation}
\bs{\Sigma}_{n}(\bs{z})=n^{-1} \bs{\Sigma}_{X}^{-1} G_{\eta}\left(\bs{z},\bs{z}\right).
\label{DEF:Sigma(z)}
\end{equation}

Theorem \ref{THM:multinormal} shows the asymptotic normality of the piecewise constant estimators $\widehat{\bs{\beta}}(\bs{z})$. See the Appendix A for detailed proofs. To obtain the asymptotic variance-covariance function, we also need the following assumption:

\begin{itemize}
\item[(C1)] The variables $ \xi _{ik}$ and $\varepsilon _{ij}$ are independent and satisfy $E\left\vert \xi _{ik}\right\vert ^{4+\delta _{1}}<+\infty $ for some $\delta _{1}>0$, and $E\left\vert \varepsilon _{ij}\right\vert ^{4+\delta_{2}}< \infty$  for some $\delta _{2}>0$. 
\end{itemize}

\begin{theorem}
\label{THM:multinormal}
Under Assumptions (A1$'$), (A2)--(A5), and (C1), as $N \rightarrow \infty$ \red{and} $n\rightarrow \infty$, if for some $0<\kappa<1$, $N^{-1}n^{1+\kappa} \rightarrow 0$, $N^{-1/2} \ll |\underline{\triangle}| \leq |\overline{\triangle}| \ll n^{1/4}N^{-1/2}$, and $\|\sum_{k=1}^{\infty}\lambda_{k}^{1/2}\psi_{k}\|_{\infty} < \infty$, then for any $\bs{z}\in \Omega$, $\bs{\Sigma}_{n}^{-1/2}(\bs{z}) \{\widehat{\bs{\beta}}^{\mathrm{c}}(\bs{z})-\bs{\beta}^{o} (\bs{z}) \}\overset{\mathcal{L}}{\longrightarrow}N\left(\mathbf{0},\mathbf{I}_{(p+1)\times (p+1)}\right) $, where $\bs{\Sigma}_n(\bs{z})$ is (\ref{DEF:Sigma(z)}); $\mathrm{pr}\left\{(\sigma_{n,\ell\ell}^{\mathrm{c}})^{ -1}(\bs{z})\left\vert \widehat{\beta}_{\ell}(\bs{z})-\beta_{\ell}^{o}(\bs{z})\right\vert \leq Z_{1-\alpha/2}\right\} \rightarrow 1-\alpha$, for any $\alpha \in (0,1)$, as $N\rightarrow \infty$, $n\rightarrow \infty$, where $\sigma_{n,\ell\ell}^{\mathrm{c}}(\bs{z})$ is the square root of the $(\ell,\ell)$th entry of the matrix $\bs{\Sigma}_n(\bs{z})$, and $Z_{1-\alpha/2}$ is the \red{$100\left(1-\alpha/2\right)$th} percentile of the standard normal distribution.
\end{theorem}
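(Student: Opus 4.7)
The plan is to exploit the factorisation that is available only for the piecewise-constant basis: since $B_{\ell m}^{2}=B_{\ell m}$ and $B_{\ell m}(\bs{z}_{j})=I(\bs{z}_{j}\in T_{m})$, expression (\ref{DEF:V-m}) reduces to $\widehat{\mathbf{V}}_{m}=(N_{m}/N)\widehat{\bs{\Sigma}}_{X}$ with $N_{m}=\#\{j:\bs{z}_{j}\in T_{m}\}$ and $\widehat{\bs{\Sigma}}_{X}=n^{-1}\sum_{i}\widetilde{\mathbf{X}}_{i}\widetilde{\mathbf{X}}_{i}^{\top}$. Substituting model (\ref{model2}) into $\widehat{\bs{\gamma}}_{m(\bs{z})}$ then gives the explicit decomposition
\[
\widehat{\bs{\beta}}^{\mathrm c}(\bs{z})-\bs{\beta}^{o}(\bs{z})=\bigl\{\bar{\bs{\beta}}^{o}_{m(\bs{z})}-\bs{\beta}^{o}(\bs{z})\bigr\}+\widehat{\mathbf{V}}_{m(\bs{z})}^{-1}\mathbf{E}_{\eta}(\bs{z})+\widehat{\mathbf{V}}_{m(\bs{z})}^{-1}\mathbf{E}_{\varepsilon}(\bs{z}),
\]
where $\bar{\bs{\beta}}^{o}_{m}=N_{m}^{-1}\sum_{\bs{z}_{j}\in T_{m}}\bs{\beta}^{o}(\bs{z}_{j})$, $\mathbf{E}_{\eta}(\bs{z})=(nN)^{-1}\sum_{i}\sum_{\bs{z}_{j}\in T_{m(\bs{z})}}\widetilde{\mathbf{X}}_{i}\eta_{i}(\bs{z}_{j})$, and $\mathbf{E}_{\varepsilon}(\bs{z})$ is defined analogously with $\sigma(\bs{z}_{j})\varepsilon_{ij}$.

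I would then dispose of the three summands separately. By (A3) and an LLN, $\widehat{\bs{\Sigma}}_{X}^{-1}=\bs{\Sigma}_{X}^{-1}+O_{P}(n^{-1/2})$, and under (A5) together with $|\underline{\triangle}|\gg N^{-1/2}$ one has $N_{m}\gtrsim N|\underline{\triangle}|^{2}\to\infty$, so $\widehat{\mathbf{V}}_{m(\bs{z})}^{-1}=(N/N_{m(\bs{z})})\{\bs{\Sigma}_{X}^{-1}+o_{P}(1)\}$. The bias piece is $o(1)$ by uniform continuity on the compact $\Omega$ from (A1$'$) and $|\overline{\triangle}|\to 0$; the rate balance $|\overline{\triangle}|\ll n^{1/4}N^{-1/2}$ combined with $n^{1+\kappa}=o(N)$ is what pushes the modulus of continuity of $\bs{\beta}^{o}$ evaluated at scale $|\overline{\triangle}|$ below the $n^{-1/2}$ threshold. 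For the measurement-error piece, independence of $\varepsilon_{ij}$ across $(i,j)$ and bounded $\sigma$ from (A4) give $\mathrm{Var}\{\mathbf{E}_{\varepsilon}(\bs{z})\}=O(n^{-1}N^{-2}N_{m})$; after multiplication by $\widehat{\mathbf{V}}_{m(\bs{z})}^{-1}$ (which scales like $N/N_{m}$), the variance becomes $O((nN_{m})^{-1})=o(n^{-1})$ since $N_{m}\to\infty$, and a Chebyshev bound powered by (C1) upgrades this to $o_{P}(n^{-1/2})$.

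The leading $\eta$-contribution factors further as $\mathbf{E}_{\eta}(\bs{z})=(N_{m(\bs{z})}/N)\,n^{-1}\sum_{i}\widetilde{\mathbf{X}}_{i}\bar\eta_{i,m(\bs{z})}$ with $\bar\eta_{i,m(\bs{z})}=N_{m(\bs{z})}^{-1}\sum_{\bs{z}_{j}\in T_{m(\bs{z})}}\eta_{i}(\bs{z}_{j})$, so the $\eta$-part of the decomposition equals $\bs{\Sigma}_{X}^{-1}n^{-1}\sum_{i}\widetilde{\mathbf{X}}_{i}\bar\eta_{i,m(\bs{z})}$ to leading order. The swap of $\bar\eta_{i,m(\bs{z})}$ for $\eta_{i}(\bs{z})$ is done term-by-term in the Karhunen--Lo\`eve expansion: by (A4) each $\psi_{k}\in\mathcal{C}^{(1)}$, so $|\bar\psi_{k,m(\bs{z})}-\psi_{k}(\bs{z})|\lesssim|\overline{\triangle}|\,\|\nabla\psi_{k}\|_{\infty}$, and the summability hypothesis $\|\sum_{k}\lambda_{k}^{1/2}\psi_{k}\|_{\infty}<\infty$ uniformises the estimate across $k$, making the accumulated discrepancy $o(n^{-1/2})$ in $L_{2}(P)$. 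What remains, $\bs{\Sigma}_{X}^{-1}n^{-1}\sum_{i}\widetilde{\mathbf{X}}_{i}\eta_{i}(\bs{z})$, is a centred i.i.d.\ average of $(p+1)$-vectors with per-summand covariance $\bs{\Sigma}_{X}\,G_{\eta}(\bs{z},\bs{z})$ (by the implicit independence of $\widetilde{\mathbf{X}}_{i}$ and $\eta_{i}$), whence multivariate Lindeberg--L\'evy CLT with moments furnished by (A3) and (C1) gives $\bs{\Sigma}_{n}^{-1/2}(\bs{z})\{\widehat{\bs{\beta}}^{\mathrm c}(\bs{z})-\bs{\beta}^{o}(\bs{z})\}\xrightarrow{\mathcal L}N(\mathbf{0},\mathbf{I}_{(p+1)\times(p+1)})$. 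The marginal probability statement follows by reading off the $\ell$th diagonal of $\bs{\Sigma}_{n}(\bs{z})$ and applying the continuous-mapping theorem to $|\cdot|$.

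The main obstacle is the pixel-average-versus-point-value approximation for $\eta$ (and, in parallel, for $\bs{\beta}^{o}$), uniformly in $i$ and at the $n^{1/2}$ scale; this is exactly the step that consumes the summability hypothesis $\|\sum_{k}\lambda_{k}^{1/2}\psi_{k}\|_{\infty}<\infty$ and the sandwiched rate $N^{-1/2}\ll|\underline{\triangle}|\le|\overline{\triangle}|\ll n^{1/4}N^{-1/2}$ in a coordinated way. Everything else reduces to standard LLN/CLT bookkeeping and a Chebyshev bound on the measurement-error term.
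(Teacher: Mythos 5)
Your proposal follows essentially the same route as the paper's proof: the same bias/$\eta$/$\varepsilon$ decomposition, the same replacement of $\widehat{\mathbf{V}}_{m(\bs{z})}^{-1}$ by a deterministic matrix (the paper's Lemma A.13, with $N_{m}/N\approx A_{m}$), the same $O_{P}(|\triangle|)$ control of the bias (Theorem A.5), negligibility of the measurement-error contribution because $N_{m}\asymp N|\triangle|^{2}\to\infty$, and a CLT on the $\eta$-term with limiting covariance $n^{-1}\bs{\Sigma}_{X}^{-1}G_{\eta}(\bs{z},\bs{z})$ followed by Cram\'{e}r--Wold and Slutsky. The only cosmetic difference is that you swap the pixel average $\bar\eta_{i,m(\bs{z})}$ for the point value $\eta_{i}(\bs{z})$ before invoking the classical i.i.d.\ CLT, whereas the paper applies the CLT to the pixel-averaged array and separately shows its covariance converges to $\bs{\Sigma}_{n}(\bs{z})$ (Lemma A.15) --- the same approximation arranged in a different order.
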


%%%%%%%%%%%%%%%%%%%%%%%%%%%%%%%%%%%%%%%%%%%%%%%%%%%%%%%%%%%%%%%%%
\vskip .10in \noindent \textbf{3. Variance Function Estimation and Simultaneous Confidence Corridors} \vskip 0.1in
\renewcommand{\thetable}{3.\arabic{table}} \setcounter{table}{0} 
\renewcommand{\thefigure}{3.\arabic{figure}} \setcounter{figure}{0}
\renewcommand{\theequation}{3.\arabic{equation}} \setcounter{equation}{0} 
%\label{SEC:SCC}

%%%%%%%%%%%%%%%%%%%%%%%%%%%%%%%%%%%%%%%%%%%%%%%%%%%%%%%%%%%%%
%%%%%%%%%%%%%%%%%%%%%%%%%%%%%%%%%%%%%%%%%%%%%%%%%%%%%%%%%%%%%
 \noindent \textbf{3.1. Estimation of the variance function} \vskip .10in
% \label{SUBSEC:covariance_estimator}

Define the estimated residual $\widehat{R}_{ij}=Y_{ij}-\sum_{\ell=0}^{p}X_{i\ell}\widehat{\beta}_{\ell}(\bs{z}_j)$ or $Y_{ij}-\sum_{\ell=0}^{p}X_{i\ell}\widehat{\beta}_{\ell}^{\mathrm{c}}(\bs{z}_j)$, for any $i=1,\ldots,n$, $j=1,\ldots,N$. 
We \red{apply} the bivariate spline smoothing method to $\{(\widehat{R}_{ij},\bs{z}_{j})\}_{j=1}^{N}$. \red{Specifically}, we define 
\begin{equation}
\widehat{\eta}_{i}(\bs{z})=\argmin_{g_{i}\in \mathcal{S}_{d}^{r}(\triangle_{\eta})} \sum_{j=1}^{N}\left\{\widehat{R}_{ij}-g_{i}(\bs{z}_{j})\right\}^{2}, ~i=1,\ldots, n, 
\label{DEF:eta_i_hat}
\end{equation}
as the spline estimator of $\eta_{i}(\bs{z})$, where the triangulation $\triangle_{\eta}$ may \red{differ from that} introduced in Section 2 when estimating $\beta_{\ell}^{o}(\bs{z})$. Next, let $\widehat{\epsilon}_{ij}=\widehat{R}_{ij}-\widehat{\eta}_{i}(\bs{z}_{j})$. Define the \red{estimators} of $G_{\eta}(\bs{z},\bs{z}^{\prime})$ and $\sigma^2(\bs{z}_j)$ as 
\begin{equation}
\widehat{G}_{\eta}(\bs{z},\bs{z}^{\prime})=n^{-1}\sum_{i=1}^{n}
\widehat{\eta}_i(\bs{z})\widehat{\eta}_i(\bs{z}^{\prime}) \red{\textrm{~and~}}
\widehat{\sigma}^2(\bs{z}_j)=n^{-1}\sum_{i=1}^{n}
\widehat{\epsilon}_{ij}\widehat{\epsilon}_{ij},
\label{DEF:G_sigma_hat}
\end{equation}
\red{respectively.} In general, for spline estimators ($d\geq 0$), denote $\widehat{\bs{\Xi}}_{n}(\bs{z}) = \left\{\widehat{\sigma}_{n,\ell\ell^{\prime}}^{2}(\bs{z})\right\}_{\ell,\ell^{\prime}=0}^{p}$, where 
\begin{align}
\label{DEF:Xi_n_hat}
\widehat{\bs{\Xi}}_{n}(\bs{z}) \!=\!\frac{1}{n^2N^2} \widetilde{\mathbb{B}}(\bs{z})^{\top}\!
\sum_{i=1}^{n}\Bigg\{\sum_{j,j^{\prime}=1}^{N}
\bs{\Gamma}_{n,\rho}^{-1} \mathbf{U}_{ij}\mathbf{U}_{ij'}^{\top}
\widehat{G}_{\eta}(\bs{z}_j,\bs{z}_{j^{\prime}})\bs{\Gamma}_{n,\rho}^{-1} + \sum_{j=1}^N \mathbf{U}_{ij}\mathbf{U}_{ij}^{\top} \widehat{\sigma}^2(\bs{z}_j)  \Bigg\} \widetilde{\mathbb{B}}(\bs{z}).
\end{align}
\red{Note} that the estimation can be much simplified if PCST smoothing is applied. In this case, the variance-covariance matrix $\bs{\Sigma}_n(\bs{z})$ can be simply estimated \red{using}
\begin{equation*} 
\widehat{\bs{\Sigma}}_n(\bs{z})=\left\{(\widehat{\sigma}_{n,\ell\ell^{\prime}}^{\mathrm{c}})^{2}(\bs{z})\right\}_{\ell,\ell^{\prime}=0}^{p} =\frac{1}{n}\left(n^{-1}\sum_{i=1}^{n}\widetilde{\mathbf{X}}_{i}\widetilde{\mathbf{X}}_{i}^{\top}\right)^{-1}
\left\{\widehat{G}_{\eta}(\bs{z},\bs{z})+\frac{\widehat{\sigma}^2(\bs{z})}{NA_{m(\bs{z})}}\right\},
\label{DEF:cov matrix_hat}
\end{equation*}
where $A_{m(\bs{z})}$ is the area of triangle $T_{m(\bs{z})}$ divided by the area of the domain.
The following conditions (C2)--(C3) are required for the bivariate spline approximation in the covariance estimation and \red{to establish} the estimation consistency. The proofs of the results in this section are provided in the Appendix A.

\begin{itemize}
\item[(C2)] For any $k\geq 1$, $\psi _{k}(\bs{z}) \in \mathcal{W}^{s+1,\infty}$ for an integer $s\geq 0$, and for a sequence $\{K_n\}_{n=1}^{\infty}$ of increasing positive integers with $\lim_{n} K_n\rightarrow \infty$, $|\triangle_{\eta}|^{s+1} \sum_{k=1}^{K_n}\lambda_k^{1/2}\|\psi_k\|_{s+1,\infty}\rightarrow 0$ as $N\rightarrow \infty$, $n\rightarrow \infty$.

\item[(C3)] As  $N \rightarrow \infty$, $n \rightarrow \infty$, for some $0<\kappa<1$, $N^{-1}n^{1/(d+1)+\kappa} \rightarrow 0$, $N|\triangle_{\eta}|^2\rightarrow \infty$, and $n|\triangle_{\eta}|^2/(\log n)^{1/2} \to \infty$.
\end{itemize}

Assumption (C2) concerns the bounded smoothness of the principal components \red{that bound} the bias terms in the spline covariance estimator.

%%%%%%%%%%%%%%%%%%%%%%%%%%%%%%%%%%%%%%%%%%%%%%%%%%%%%%%%%%%%%
\begin{theorem}
\label{THM:Ghat-G}
Under Assumptions (A1)--(A6) \red{and} (C1)--(C3), $\widehat{G}_{\eta}(\bs{z},\bs{z}^{\prime})$ uniformly converges to
$G_{\eta}(\bs{z},\bs{z}^{\prime})$ in probability; \red{that is},
$\sup_{(\bs{z},\bs{z}^{\prime})\in \Omega^{2}}|\widehat{G}_{\eta}(\bs{z},\bs{z}^{\prime})
-G_{\eta}(\bs{z},\bs{z}^{\prime})|=o_{P}(1)$.
\end{theorem}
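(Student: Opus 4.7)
The plan is to decompose $\widehat{G}_{\eta}-G_{\eta}$ into a ``plug-in'' perturbation attributable to using $\widehat{\beta}_{\ell}$ in place of $\beta_{\ell}^{o}$, plus an ``oracle'' error that would remain if one could smooth the uncontaminated process $\eta_{i}(\bs{z}_{j})+\sigma(\bs{z}_{j})\varepsilon_{ij}$ directly. Using model (\ref{model2}), write $\widehat{R}_{ij}=R_{ij}^{o}+D_{ij}$ where $R_{ij}^{o}=\eta_{i}(\bs{z}_{j})+\sigma(\bs{z}_{j})\varepsilon_{ij}$ and $D_{ij}=\sum_{\ell=0}^{p}X_{i\ell}\{\beta_{\ell}^{o}(\bs{z}_{j})-\widehat{\beta}_{\ell}(\bs{z}_{j})\}$. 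Because the smoother in (\ref{DEF:eta_i_hat}) is a linear projection in its data vector, $\widehat{\eta}_{i}=\widetilde{\eta}_{i}+S_{i}$, where $\widetilde{\eta}_{i}$ is the spline smoother applied to $\{R_{ij}^{o}\}_{j}$ and $S_{i}$ is the spline smoother applied to $\{D_{ij}\}_{j}$. Setting $\widetilde{G}_{\eta}(\bs{z},\bs{z}')=n^{-1}\sum_{i}\widetilde{\eta}_{i}(\bs{z})\widetilde{\eta}_{i}(\bs{z}')$ then yields $\widehat{G}_{\eta}=\widetilde{G}_{\eta}+R_{n}^{(1)}+R_{n}^{(2)}$, with $R_{n}^{(1)}, R_{n}^{(2)}$ collecting the $\widetilde{\eta}_{i}S_{i}$ cross-terms and the $S_{i}S_{i}$ quadratic terms, respectively.

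First I would bound $R_{n}^{(1)}$ and $R_{n}^{(2)}$ uniformly on $\Omega^{2}$. The Bernstein basis on a $\pi$-quasi-uniform triangulation admits standard inverse (Markov) inequalities \citep{Lai:Schumaker:07}, which convert the $L_{2}$ rate of Theorem~\ref{THM:beta-convergence} into an $L_{\infty}$ rate by paying a factor of $|\overline{\triangle}|^{-1}$. Together with the boundedness of $|X_{i\ell}|$ and a discrete-to-continuous least-squares inequality bounding $\|S_{i}\|_{\infty,\Omega}$ by $\max_{j}|D_{ij}|$ up to a constant independent of $n,N$, this gives $\max_{i}\|S_{i}\|_{\infty,\Omega}=o_{P}(1)$ under Assumptions (A6) and (C3). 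A Cauchy--Schwarz bound on $\widetilde{G}_{\eta}$ (uniformly $O_{P}(1)$ from the next step) then yields $\sup_{(\bs{z},\bs{z}')\in\Omega^{2}}|R_{n}^{(1)}+R_{n}^{(2)}|=o_{P}(1)$.

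Next I would attack $\widetilde{G}_{\eta}-G_{\eta}$ via the Karhunen--Lo\`eve expansion $\eta_{i}=\sum_{k\ge 1}\lambda_{k}^{1/2}\xi_{ik}\psi_{k}$. By linearity, $\widetilde{\eta}_{i}=\sum_{k\ge 1}\lambda_{k}^{1/2}\xi_{ik}\widetilde{\psi}_{k}+\widetilde{e}_{i}$, where $\widetilde{\psi}_{k}$ is the bivariate spline smoother applied to $\{\psi_{k}(\bs{z}_{j})\}_{j}$ and $\widetilde{e}_{i}$ is the smoother applied to $\{\sigma(\bs{z}_{j})\varepsilon_{ij}\}_{j}$. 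Truncating the sum at $K_{n}$, expanding the quadratic in $\widetilde{G}_{\eta}$, and comparing with $G_{\eta}(\bs{z},\bs{z}')=\sum_{k\ge 1}\lambda_{k}\psi_{k}(\bs{z})\psi_{k}(\bs{z}')$ produces four kinds of terms, each to be shown $o_{P}(1)$ uniformly on $\Omega^{2}$: (a) deterministic approximation bias controlled by $\|\widetilde{\psi}_{k}-\psi_{k}\|_{\infty,\Omega}\lesssim|\triangle_{\eta}|^{s+1}\|\psi_{k}\|_{s+1,\infty}$ and summed through Assumption (C2); (b) empirical averaging error $n^{-1}\sum_{i}\xi_{ik}\xi_{ik'}-I(k=k')$ for $k,k'\le K_{n}$, handled by the fourth-moment bound in (C1) and the growth rate $N^{-1}n^{1+\kappa}\to 0$ in (C3); (c) the noise contribution from $\widetilde{e}_{i}$, with pointwise variance $O\{(nN|\triangle_{\eta}|^{2})^{-1}\}$ upgraded to a uniform bound via a Bernstein-type inequality combined with a chaining argument over a covering of $\Omega^{2}$ whose radius matches the dimension of $\mathcal{S}_{d}^{r}(\triangle_{\eta})$; and (d) the post-truncation tail $\sum_{k>K_{n}}\lambda_{k}\to 0$.

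The hard part will be step~(c): upgrading the pointwise variance control for the spline smoother of $\sigma\varepsilon$ to uniform convergence over the two-dimensional product domain $\Omega^{2}$. This requires a covering argument whose mesh is calibrated to the Lipschitz modulus of $\widetilde{e}_{i}$ on the quasi-uniform triangulation, together with a sub-exponential tail estimate driven by the $(4+\delta_{2})$-moment condition in (C1) and the rate $n|\triangle_{\eta}|^{2}/(\log n)^{1/2}\to\infty$ in (C3). The remaining bookkeeping is a careful balancing of the truncation level $K_{n}$, image resolution $N$, sample size $n$, and mesh size $|\triangle_{\eta}|$; Assumptions (C2)--(C3) are precisely what is needed to send the four error sources to zero simultaneously, yielding the uniform-in-probability statement.
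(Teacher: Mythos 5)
Your overall architecture coincides with the paper's: the paper likewise splits $\widehat{\eta}_i-\eta_i$ into a plug-in term $\widetilde{b}_i$ (the spline smoother applied to $\sum_{\ell}X_{i\ell}\{\beta_{\ell}^{o}-\widehat{\beta}_{\ell}\}$), a smoothing bias $\nabla\eta_i$, and a smoothed-noise term $\widetilde{\varepsilon}_i$, and it controls the oracle covariance exactly as in your steps (a)--(d): Karhunen--Lo\`eve truncation at $K_n$, eigenfunction smoothing bias via (C2), the noise term via moment bounds plus Bernstein's inequality and discretization, and the tail via $\sum_k\lambda_k<\infty$ (Lemmas \ref{LEM:Gtilde-G}--\ref{LEM:eps-others}). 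The only organizational difference — whether the pivot $\widetilde{G}_{\eta}$ is built from the raw $\eta_i$ or from the smoothed oracle residuals — is immaterial.

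The concrete weak point is your treatment of the plug-in terms $R_n^{(1)},R_n^{(2)}$. You propose to obtain $\max_{i,j}|D_{ij}|=o_P(1)$ by converting the $L_2$ rate of Theorem \ref{THM:beta-convergence} into a sup-norm rate through the inverse (Markov) inequality, paying a factor $|\overline{\triangle}|^{-1}$. That step fails under the stated assumptions: the stochastic part of the $L_2$ rate is $n^{-1/2}$, so this route only yields $O_P(n^{-1/2}|\underline{\triangle}|^{-1})$, and (A6) permits $|\underline{\triangle}|$ to be as small as roughly $N^{-1/2}$ with $N\gg n^{1/(d+1)+\kappa}$ (e.g.\ $d=5$, $N=n^{2}$, $|\underline{\triangle}|=n^{-0.9}$ satisfies (A6) but makes $n^{-1/2}|\underline{\triangle}|^{-1}=n^{0.4}\to\infty$). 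The paper avoids this in two ways: it establishes sup-norm control of $\widehat{\bs{\beta}}-\bs{\beta}^{o}$ directly through Bernstein-type bounds on the spline coefficients rather than an inverse inequality (Theorem \ref{THM:bias-rate}, Lemmas \ref{LEM:error-unif-order} and \ref{LEM:uniformbiasrate}); and for the quadratic plug-in term it computes the first and second moments of $n^{-1}\sum_i\widetilde{b}_i(\bs{z})\widetilde{b}_i(\bs{z}^{\prime})$ directly (Lemma \ref{LEM:b-others}), obtaining $O_P\{n^{-1}|\triangle_{\eta}|^{-2}(\log n)^{1/2}\}$ — which is precisely where the condition $n|\triangle_{\eta}|^{2}/(\log n)^{1/2}\to\infty$ in (C3) is consumed, a condition your sup-norm route never engages. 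A secondary issue: Theorem \ref{THM:Ghat-G} does not assume $|X_{i\ell}|$ bounded (only the eighth moments in (A3)), so your appeal to boundedness should be replaced by a moment or truncation argument as in Lemma \ref{LEM:b-others}. With the plug-in step repaired along these lines the remainder of your plan goes through.
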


\begin{corollary}
\label{COR:variance}
Under Assumptions (A1)--(A6), (C1)--(C3), the estimator of $\widehat{\bs{\Sigma}}_n(\bs{z})$ uniformly converges to
to $\bs{\Sigma}_n(\bs{z})$ in probability\red{; that is,} $\sup_{\bs{z}\in \Omega}|\widehat{\bs{\Sigma}}_{n}(\bs{z})-\bs{\Sigma}_{n}(\bs{z})|=o_{P}(1)$.
\end{corollary}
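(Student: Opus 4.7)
The plan is to decompose the difference $\widehat{\bs{\Sigma}}_n(\bs{z})-\bs{\Sigma}_n(\bs{z})$ into three pieces and handle each one via (i) the law of large numbers applied to the covariate cross-products, (ii) Theorem 3 applied on the diagonal, and (iii) a direct argument that the white-noise contribution $\widehat{\sigma}^2(\bs{z})/(NA_{m(\bs{z})})$ is negligible compared with $\widehat{G}_\eta(\bs{z},\bs{z})$. Writing $\widehat{\bs{\Sigma}}_X = n^{-1}\sum_{i=1}^{n}\widetilde{\mathbf{X}}_i\widetilde{\mathbf{X}}_i^{\top}$ and recalling
\[
n\widehat{\bs{\Sigma}}_n(\bs{z}) = \widehat{\bs{\Sigma}}_X^{-1}\Bigl\{\widehat{G}_\eta(\bs{z},\bs{z}) + \tfrac{\widehat{\sigma}^{2}(\bs{z})}{NA_{m(\bs{z})}}\Bigr\},\qquad n\bs{\Sigma}_n(\bs{z}) = \bs{\Sigma}_X^{-1} G_\eta(\bs{z},\bs{z}),
\]
an add-and-subtract argument reduces the problem to showing that $\sup_{\bs{z}\in\Omega}$ of each of the three resulting cross-product terms is $o_P(1)$, which is more than enough since a factor of $n^{-1}$ is absorbed on the outside.

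For piece (i), Assumption (A3) (bounded eighth moments and invertible $\bs{\Sigma}_X$ with eigenvalues bounded away from zero and infinity) together with the law of large numbers gives $\widehat{\bs{\Sigma}}_X \overset{P}{\longrightarrow} \bs{\Sigma}_X$ in operator norm, and the continuous mapping theorem for matrix inversion yields $\widehat{\bs{\Sigma}}_X^{-1}-\bs{\Sigma}_X^{-1}=o_P(1)$; this bound is automatically uniform in $\bs{z}$ since the quantity does not depend on $\bs{z}$. For piece (ii), Theorem 3 provides $\sup_{(\bs{z},\bs{z}')\in\Omega^2}|\widehat{G}_\eta(\bs{z},\bs{z}')-G_\eta(\bs{z},\bs{z}')|=o_P(1)$, which specialized to the diagonal gives the uniform convergence required; Assumption (A4) gives $\sup_{\bs{z}}G_\eta(\bs{z},\bs{z})\leq C_G<\infty$ so the cross-term with $\widehat{\bs{\Sigma}}_X^{-1}-\bs{\Sigma}_X^{-1}$ is also $o_P(1)$ uniformly.

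For piece (iii), the goal is to show $\sup_{\bs{z}\in\Omega}\widehat{\sigma}^{2}(\bs{z})/(NA_{m(\bs{z})}) = o_P(1)$. The denominator is handled by Assumption (A5) (quasi-uniformity, so $A_{m(\bs{z})}\asymp|\underline{\triangle}|^{2}$ uniformly in $\bs{z}$) and Assumption (A6) (giving $N^{1/2}|\underline{\triangle}|\to\infty$, hence $NA_{m(\bs{z})}\to\infty$ uniformly). For the numerator, $\widehat{\sigma}^{2}(\bs{z}_j)=n^{-1}\sum_i \widehat{\epsilon}_{ij}^{2}$ with $\widehat{\epsilon}_{ij} = Y_{ij}-\sum_\ell X_{i\ell}\widehat{\beta}_\ell(\bs{z}_j)-\widehat{\eta}_i(\bs{z}_j)$, so writing $\widehat{\epsilon}_{ij}=\sigma(\bs{z}_j)\varepsilon_{ij} + \{\sum_\ell X_{i\ell}(\beta_\ell^{o}-\widehat{\beta}_\ell)(\bs{z}_j)\}+\{\eta_i(\bs{z}_j)-\widehat{\eta}_i(\bs{z}_j)\}$ and bounding each piece via Theorem 1 (or Theorem 4), Assumption (A4) (bounded $\sigma$), and the spline approximation bounds for $\widehat{\eta}_i$ already invoked in the proof of Theorem 3, one obtains $\sup_{\bs{z}}\widehat{\sigma}^{2}(\bs{z})=O_P(1)$. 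Combining the three pieces via Slutsky and the triangle inequality completes the proof.

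The main obstacle will be the uniform boundedness of $\widehat{\sigma}^{2}(\bs{z})$ over the entire domain. Pointwise, this is routine, but the uniform version requires controlling the worst-case residuals from the two-step fit (first the $\widehat{\beta}_\ell$'s, then the $\widehat{\eta}_i$'s) simultaneously over all pixels; one has to combine the uniform consistency of $\widehat{\bs{\beta}}$ on $\Omega$ with a uniform bound on $\sup_{i,\bs{z}}|\widehat{\eta}_i(\bs{z})-\eta_i(\bs{z})|$ drawn from the spline-smoothing arguments already used to establish Theorem 3.
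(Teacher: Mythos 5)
Your plan is correct and coincides with how the paper treats this result: the paper gives no separate proof of Corollary \ref{COR:variance}, presenting it as an immediate consequence of the uniform convergence of $\widehat{G}_{\eta}$ (which is Theorem \ref{THM:Ghat-G}, i.e.\ Theorem 4, not Theorem 3 as you cite) combined with the law of large numbers for $n^{-1}\sum_{i}\widetilde{\mathbf{X}}_{i}\widetilde{\mathbf{X}}_{i}^{\top}$ under (A3) and the observation that $NA_{m(\bs{z})}\asymp N|\underline{\triangle}|^{2}\rightarrow\infty$ renders the nugget term negligible. The only step the paper leaves entirely implicit is your piece (iii), and there your worry about uniform boundedness can be relaxed: it suffices to show $\max_{1\leq j\leq N}\widehat{\sigma}^{2}(\bs{z}_{j})=o_{P}(N|\underline{\triangle}|^{2})$ rather than $O_{P}(1)$, which follows from the residual decomposition you describe together with the $(4+\delta_{2})$-moment bound in (C1) and the uniform rates for $\widehat{\bs{\beta}}$ and $\widehat{\eta}_{i}$ already established in Appendix A.
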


Denote
\begin{equation}
\widehat{\sigma}_{n,\ell\ell}^{\mathrm{c}}(\bs{z})=n^{-1/2}\left[\mathbf{e}_{\ell}^{\top}\left(n^{-1}
\sum_{i=1}^{n}\widetilde{\mathbf{X}}_{i}\widetilde{\mathbf{X}}_{i}^{\top}\right)^{-1}\mathbf{e}_{\ell}
\left\{\widehat{G}_{\eta}(\bs{z},\bs{z})+\frac{\widehat{\sigma}^2(\bs{z})}{NA_{m(\bs{z})}}\right\}\right]^{1/2}.
\label{EQ:sigma_hat_l}
\end{equation}
\red{From} Corollary \ref{COR:variance}, $\widehat{\sigma}^{\mathrm{c}}_{n,\ell\ell}(\bs{z})$ is a consistent estimator of $\sigma^{\mathrm{c}}_{n,\ell\ell}(\bs{z})$ in (\ref{DEF:Sigma(z)}).

%%%%%%%%%%%%%%%%%%%%%%%%%%%%%%%%%%%%%%%%%%%%%%%%%%%%%%%%%%%%%
%%%%%%%%%%%%%%%%%%%%%%%%%%%%%%%%%%%%%%%%%%%%%%%%%%%%%%%%%%%%%
\vskip .10in \noindent \textbf{3.2. Bootstrap simultaneous confidence corridors (SCCs)} \vskip .10in
% \label{SUBSEC:BootSCC}

From Theorems \ref{THM:beta-normality} \red{and} \ref{THM:multinormal} and Slutzky's Theorem, we have the following asymptotic PCIs.
\begin{corollary}
\label{COR:confidence-interval}
 (a) For the BPST estimators, under Assumptions (A1)--(A6), for any $\ell=0,\ldots,p$, $\alpha \in (0,1)$, as  $N\rightarrow \infty$, $n\rightarrow \infty$, an asymptotic $100(1-\alpha)\%$ PCI for $\beta_{\ell}^{o}(\bs{z})$, is $\widehat{\beta}_{\ell}(\bs{z})\pm \sigma_{n,\ell\ell}(\bs{z})Z_{1-\alpha/2}$, for any $\bs{z}\in \Omega$, where $\sigma_{n,\ell\ell}^{2}(\bs{z})$ is the $(\ell,\ell)$th entry of the matrix $\bs{\Xi}_{n}^{-1/2}(\bs{z})$, and $Z_{1-\alpha/2}$ is the \red{$100\left(1-\alpha/2\right)$th} percentile of the standard normal distribution. 
 
(b) For the PCST estimators, under Assumptions  (A1$'$) \red{and} (A2)--(A6),  if for some $0<\kappa<1$, $N^{-1}n^{1+\kappa} \rightarrow 0$, an asymptotic $100(1-\alpha)\%$ PCI for $\beta_{\ell}^{o}(\bs{z})$ is $\widehat{\beta}_{\ell}^{\mathrm{c}}(\bs{z})\pm \sigma^{\mathrm{c}}_{n,\ell\ell}(\bs{z})Z_{1-\alpha/2}$, for any $\bs{z}\in \Omega$, where $\sigma_{n,\ell\ell}^{c}(\bs{z})$ is the standard deviation function of  $\widehat{\beta}_{\ell}^{\mathrm{c}}(\bs{z})$ in Theorem \ref{THM:multinormal}.
\end{corollary}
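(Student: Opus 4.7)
The plan is to derive both confidence intervals as direct consequences of the asymptotic normality results in Theorems~\ref{THM:beta-normality} and \ref{THM:multinormal}, combined with a standard pivoting argument and Slutsky's theorem to handle the scalar normalization.

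For part~(a), I would begin with Theorem~\ref{THM:beta-normality}, which gives $\bs{\Xi}_n^{-1/2}(\bs{z})\{\widehat{\bs{\beta}}(\bs{z})-\bs{\beta}^o(\bs{z})\}\overset{\mathcal{L}}{\longrightarrow}N(\mathbf{0},\mathbf{I}_{(p+1)\times(p+1)})$. The main step is to extract the marginal limit for the $\ell$th coordinate. Let $\mathbf{e}_\ell$ denote the $(\ell+1)$st canonical basis vector of $\mathbb{R}^{p+1}$, write $\widehat{\beta}_\ell(\bs{z})-\beta_\ell^o(\bs{z}) = \mathbf{e}_\ell^\top\{\widehat{\bs{\beta}}(\bs{z})-\bs{\beta}^o(\bs{z})\}$, and let $\sigma_{n,\ell\ell}(\bs{z})$ be the corresponding marginal standard deviation extracted from $\bs{\Xi}_n(\bs{z})$. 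Premultiplying the joint convergence by the linear functional $\mathbf{e}_\ell^\top \bs{\Xi}_n^{1/2}(\bs{z})$ and then dividing by $\sigma_{n,\ell\ell}(\bs{z})$---both deterministic operations to which the continuous mapping theorem applies---yields $\sigma_{n,\ell\ell}^{-1}(\bs{z})\{\widehat{\beta}_\ell(\bs{z})-\beta_\ell^o(\bs{z})\}\overset{\mathcal{L}}{\longrightarrow}N(0,1)$. Inverting the event $|\sigma_{n,\ell\ell}^{-1}(\bs{z})\{\widehat{\beta}_\ell(\bs{z})-\beta_\ell^o(\bs{z})\}|\leq Z_{1-\alpha/2}$ then produces the stated interval $\widehat{\beta}_\ell(\bs{z})\pm Z_{1-\alpha/2}\sigma_{n,\ell\ell}(\bs{z})$ with coverage tending to $1-\alpha$.

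Part~(b) follows the same two-step pattern with Theorem~\ref{THM:multinormal} replacing Theorem~\ref{THM:beta-normality}, the matrix $\bs{\Sigma}_n(\bs{z})$ of (\ref{DEF:Sigma(z)}) playing the role of $\bs{\Xi}_n(\bs{z})$, and Assumption (A1$'$) replacing (A1). The extra rate condition $N^{-1}n^{1+\kappa}\to 0$ listed in (b) is precisely the one already imposed in Theorem~\ref{THM:multinormal}, so no additional analytic work is required; the corresponding marginal CLT statement for $\widehat{\beta}_\ell^{\mathrm{c}}(\bs{z})$ is in fact spelled out as part of Theorem~\ref{THM:multinormal} itself, so only the pivoting step is needed here. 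If one wishes to use the data-driven $\widehat{\sigma}^{\mathrm{c}}_{n,\ell\ell}(\bs{z})$ of (\ref{EQ:sigma_hat_l}) in place of $\sigma^{\mathrm{c}}_{n,\ell\ell}(\bs{z})$, Corollary~\ref{COR:variance} supplies the ratio consistency $\widehat{\sigma}^{\mathrm{c}}_{n,\ell\ell}(\bs{z})/\sigma^{\mathrm{c}}_{n,\ell\ell}(\bs{z})\to 1$ in probability, and Slutsky's theorem then transfers the asymptotic normality to the studentized statistic.

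Honestly, I do not expect any substantive obstacle: all the delicate analysis---bias control for the bivariate spline approximation, truncation and control of the Karhunen--Lo\`{e}ve expansion, penalty-induced perturbations, and the CLT for the score-weighted spline coefficients---has been absorbed into the proofs of Theorems~\ref{THM:beta-normality} and \ref{THM:multinormal}. Corollary~\ref{COR:confidence-interval} therefore amounts to reading off marginal quantiles from a multivariate Gaussian limit and invoking Slutsky once, and the only mildly delicate bookkeeping is to confirm that the rate conditions imposed in (a) and (b) are exactly those already assumed in the corresponding normality theorems.
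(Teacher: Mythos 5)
Your proposal is correct and follows exactly the route the paper takes: the paper gives no separate argument for this corollary beyond the sentence that it follows from Theorems \ref{THM:beta-normality} and \ref{THM:multinormal} together with Slutsky's theorem, which is precisely the marginalization-plus-pivoting step you spell out (and for part (b) the marginal statement is indeed already contained in Theorem \ref{THM:multinormal}). Your reading of $\sigma_{n,\ell\ell}^{2}(\bs{z})$ as the $(\ell,\ell)$ entry of $\bs{\Xi}_{n}(\bs{z})$ rather than of $\bs{\Xi}_{n}^{-1/2}(\bs{z})$ is the correct interpretation of what is evidently a typo in the statement.
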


Next, we introduce a simple bootstrap approach to extend the PCIs to the SCCs. Our approach is based on the nonparametric bootstrap method used in \cite{Hall:Horowitz:13}. We triangulate the domain $\Omega$ \red{using} quasi-uniform triangles, \red{obtaining} a set of approximate $100(1-\alpha)\%$ \red{PCIs}. In the following, $\alpha_0$ \red{denotes} the nominal confidence level of the desired SCCs. We recalibrate the PCIs using the following bootstrap method.

\begin{itemize}
\item[Step 1.] Based on $\left\lbrace (\widetilde{\mathbf{X}}_i, Y_{ij})\right\rbrace_{j=1, i=1}^{N,n}$, obtain the coefficient functions $\beta_{\ell}^{o}(\bs{z})$ \red{using} the BPST estimators $\widehat{\beta}_{\ell}(\bs{z})$ in (\ref{DEF:beta_l_hat}) or the PCST estimators $\widehat{\beta}_{\ell}^{\mathrm{c}}(\bs{z})$ in (\ref{EQN:betahat_c}), for $\ell=0,\ldots,p$. Let  $\widehat{\mu}(\bs{z})=\sum_{\ell=0}^p X_{i\ell}\widehat{\beta}_{\ell}(\bs{z})$ or $\sum_{\ell=0}^p X_{i\ell}\widehat{\beta}_{\ell}^{\mathrm{c}}(\bs{z})$. 

\item[Step 2.] Obtain  $\widehat {\eta}_i(\bs{z})$ and $\widehat{\varepsilon}_{ij}$ presented in (\ref{DEF:eta_i_hat})--(\ref{DEF:G_sigma_hat}),  and estimate $G_{\eta}(\bs{z},\bs{z})$, $\sigma^2(\bs{z})$, and $\sigma_{n,\ell\ell}^{2}(\bs{z})$ \red{using}  $\widehat{G}_{\eta}(\bs{z},\bs{z})$ and $\widehat{\sigma}^2(\bs{z})$ in (\ref{DEF:G_sigma_hat}) and $\widehat{\sigma}_{n,\ell\ell}^{2}(\bs{z})$ in (\ref{DEF:Xi_n_hat}) or (\ref{EQ:sigma_hat_l}), respectively.

\item[Step 3.]  Obtain \red{an} adjusted nominal confidence level $\widehat{\alpha}_{\ell}(\alpha_0)$. 
\begin{itemize}
\item[(i)]  Generate an independent random sample $\delta_{i}^{(b)}$ and $\delta_{ij}^{(b)}$ from $\{-1,1\}$ with probability 0.5 each, and define $Y_{ij}^{\ast(b)}=\widehat{\mu}(\bs{z}_j)+\delta_{i}^{(b)} \widehat {\eta}_i(\bs{z}_j) + \delta_{ij}^{(b)} \widehat{\varepsilon}_{ij}$.

\item[(ii)] Based on $\left\lbrace (\widetilde{\mathbf{X}}_i, Y_{ij}^{\ast (b)})\right\rbrace_{j=1, i=1}^{N,n}$, obtain $\widehat{\beta}_\ell^{*(b)}(\bs{z})$ using (\ref{DEF:beta_l_hat})  or (\ref{EQN:betahat_c}), and calculate $\widehat{\sigma}_{n,\ell\ell}^{*(b)}$ using (\ref{DEF:Xi_n_hat}) or (\ref{EQ:sigma_hat_l}).
    
\item[(iii)] Construct SCCs for the resampled data $\left\lbrace (\widetilde{\mathbf{X}}_i, Y_{ij}^{\ast (b)})\right\rbrace_{j=1, i=1}^{N,n}$: $\mathcal{B}_{(b)}^*(\alpha),\ b=1,\cdots, B$,
\[
\mathcal{B}^{*}_{(b)}(\alpha)=\{(\bs{z},y): \bs{z} \in \Omega, \widehat{\beta}_{\ell}^{*(b)}(\bs{z}) - \widehat{\sigma}_{n,\ell\ell}^{*(b)}(\bs{z}) Z_{1-\alpha/2} \leq y \leq \widehat{\beta}_{\ell}^{*(b)}(\bs{z}) + \widehat{\sigma}_{n,\ell\ell}^{*(b)}(\bs{z}) Z_{1-\alpha/2}\}.
\]
\item[(iv)] Estimate the coverage rate
$\tau_{\ell}(\bs{z}_j, \alpha) = P\{(\bs{z}_j, \widehat{\beta}_{\ell}(\bs{z}_j)) \in \mathcal{B}^*(\alpha)|\mathbb{X}\}$ using $\widehat{\tau}_{\ell}(\bs{z}_j, \alpha) = \frac{1}{B} \sum_{b=1}^{B} I\{(\bs{z}_j, \widehat{\beta}_{\ell}(\bs{z}_j)) \in \mathcal{B}_{(b)}^*(\alpha)\}$.

\item[(v)] Find the root of the equation $\widehat{\tau}_{\ell}(\bs{z}_j, \alpha) = 1-\alpha_0$, for $j=1,\ldots,N$, and denote \red{it} as $\{\widehat{\alpha}_{\ell}(\bs{z}_j, \alpha_0)\}_{j=1}^{N}$. The root can be found using the grid method by repeating the last two steps for different values of $\alpha$.

\item[(vi)] Take the minimum of $\{\widehat{\alpha}_{\ell}(\bs{z}_j, \alpha_0)\}_{j=1}^{N}$ and denote it as $\widehat{\alpha}_{\ell}\equiv\widehat{\alpha}_{\ell}(\alpha_0)$.
\end{itemize}

\item[Step 4.] Construct the final SCCs: $\mathcal{B}(\widehat{\alpha}_{\ell})=\{(\bs{z},y): \bs{z} \in \Omega, \widehat{\beta}_{\ell}(\bs{z}) - \widehat{\sigma}_{n,\ell\ell}(\bs{z}) Z_{1-\widehat{\alpha}_{\ell}/2} \leq y \leq \widehat{\beta}_{\ell}(\bs{z}) + \widehat{\sigma}_{n,\ell\ell}(\bs{z}) Z_{1-\widehat{\alpha}_{\ell}/2}\}$. 
\end{itemize}

%%%%%%%%%%%%%%%%%%%%%%%%%%%%%%%%%%%%%%%%%%%%%%%%%%%%%%%%%%%%%
%%%%%%%%%%%%%%%%%%%%%%%%%%%%%%%%%%%%%%%%%%%%%%%%%%%%%%%%%%%%%
%%%%%%%%%%%%%%%%%%%%%%%%%%%%%%%%%%%%%%%%%%%%%%%%%%%%%%%%%%%%%
\vskip .10in \noindent \textbf{4. Implementation} \vskip .10in
%\label{SEC:Implementation}

The proposed procedure can be implemented using our R package ``FDAimage" \citep{FDAimage}, in which the bivariate spline basis \red{is} generated \red{using} the R package ``BPST" \citep{BPST}. When the response imaging seems to be a realization from some smooth function, we suggest using \red{the smoothing} parameter $r=1$ and degree $d\geq5$, which achieves full estimation power asymptotically \citep{Lai:Schumaker:07}. In contrast, if there are sharp edges on the images, we suggest considering the PCST presented in Section 2.4.

Selecting suitable values \red{for the }smoothing parameters is important to good model fitting. To select $\rho_{n,\ell}$, for $\ell=0,\ldots,p$, we used $K$-fold cross-validation (CV). The individuals are randomly partitioned into $K$ groups, \red{where} one group is retained as a test set, and the remaining $K-1$ groups are used as training sets. The CV process is repeated $K$ times (the folds), with each of the $K$ groups used exactly once as the validation data. Then, the $K$-fold CV score is
\[
{\rm CV}(\rho_{n,0},\ldots,\rho_{n,p})= K^{-1}\sum_{k=1}^{K}(|\mathcal{V}_k|N)^{-1}\sum_{i \in \mathcal{V}_k} \sum_{j=1}^N\{Y_{ij}-\widetilde{\mathbf{X}}_i^{\top}\widehat{\bs{\beta}}_{-k}(\bs{z}_j)\}^2,
\]
where $\mathcal{V}_k$ is the $k$th testing set for $k=1,\ldots, K$, and $\widehat{\bs{\beta}}_{-k}$ is the corresponding estimator after removing the $k$th testing set. We \red{use} $K = 5$ in \red{our} numerical examples.

To determine an optimal triangulation, the criterion usually considers the shape, size, or number of triangles. In terms of shape, a ``good" triangulation usually refers to \red{one} with well-shaped triangles without small angles \red{and/or} obtuse angles. Therefore, for a given number of triangles, \cite{Lai:Schumaker:07} and \cite{Lindgren:etal:11} recommended selecting the triangulation according to ``max-min" criterion, which maximizes the minimum angle of all the angles of the triangles in the triangulation. With respect to the number of triangles, our numerical studies show that a lower limit of the number of triangles is necessary to capture the features of the images. \red{However}, once this minimum number has been reached, refining the triangulation \red{further} usually has little effect on the fitting process. In practice, when using higher-order BPST smoothing, we suggest taking the number of triangles as $H_n=\min\{\lfloor c_1 n^{1/(2d+2)}N^{1/2} \rfloor,  N/10\}$, where $c_1$ is a tuning parameter. \red{We} find that $c_1\in[0.3, 2.0]$ works well in our numerical studies. \red{When} using the PCST, we suggest taking the number of triangles as $H_n=\min\{\lfloor c_2 n^{-1/4} N \rfloor,  N/2\} \mathrm{, with}~c_2\in[0.3,2.0]$. Once $H_n$ is chosen, \red{we} can build the triangulation using typical triangulation construction methods, such as Delaunay \red{triangulation} and \red{DistMesh} \citep{Persson:Strang:04}.

%%%%%%%%%%%%%%%%%%%%%%%%%%%%%%%%%%%%%%%%%%%%%%%%%%%%%%%%%%%%%
%%%%%%%%%%%%%%%%%%%%%%%%%%%%%%%%%%%%%%%%%%%%%%%%%%%%%%%%%%%%%
%%%%%%%%%%%%%%%%%%%%%%%%%%%%%%%%%%%%%%%%%%%%%%%%%%%%%%%%%%%%%
\vskip .10in \noindent \textbf{5. Simulation Studies} \vskip 0.1in
\renewcommand{\thetable}{5.\arabic{table}} \setcounter{table}{0} 
\renewcommand{\thefigure}{5.\arabic{figure}} \setcounter{figure}{0}
\renewcommand{\theequation}{5.\arabic{equation}} \setcounter{equation}{0} 
% \label{SEC:Simulation} 

In this section, we conduct two Monte Carlo simulation studies using our R package ``FDAimage" \citep{FDAimage} to examine the finite\red{-}sample performance of the proposed methodology. The triangulations used \red{here} can be found \red{in the data set} in the ``FDAimage" package. To illustrate the performance of our estimation method, we compare the proposed spline method with the kernel method proposed by \cite{Zhu:Fan:Kong:14} (Kernel) and the tensor regression method of \cite{Li:Zhang:17} (Tensor). To implement the kernel method, we use the R Package \textit{SVCM}, which is publicly available at \url{https://github.com/BIG-S2/SVCM}. \red{For the tensor method}, the accompanying \red{MATLAB} code at \url{https://ani.stat.fsu.edu/~henry/TensorEnvelopes_html.html} is used. We compare the proposed method with the tensor regression approach in \cite{Li:Zhang:17} and the three-stage FDA approach in \cite{Zhu:Fan:Kong:14}.

%%%%%%%%%%%%%%%%%%%%%%%%%%%%%%%%%%%%%%%%%%%%%%%%%%%%%%%%%%%%%%
\vskip .10in \noindent \textbf{5.1. Example 1} \vskip .10in
% \label{SUBSEC:eg1}

To illustrate the advantage of the proposed method over a complex domain, we study the horseshoe domain in \cite{Sangalli:Ramsay:Ramsay:13}. The response images are generated from the following model: $Y_{ij}=\beta_0^o(\bs{z}_j)+X_{i}\beta_1^o(\bs{z}_j)+\eta_i(\bs{z}_j)+\sigma\varepsilon_{ij}$, for $i=1,\ldots,n$, $j=1,\ldots,N$, and $\bs{z}_j\in \Omega$. To understand the advantages and disadvantages of different methods, we consider two types of coefficient functions in the above image-on-scalar regression model: (I) functions with jumps; and (II) smooth functions. The true coefficient functions are \red{shown} in Figure \ref{FIG:coeff_true_simu1}.

%%%%%%%%%%%%%%%%%%%%%%%%%%%%%%%%%%%%%%%%%%%%%%%%%%%%%%%%%%%%%
\begin{figure}[t]
	\begin{center}
		\begin{tabular}{ccccc} 
		\multicolumn{2}{c}{Case I (jump functions)}&\multicolumn{2}{c}{Case II (smooth functions)}\\
			\includegraphics[scale=0.38]{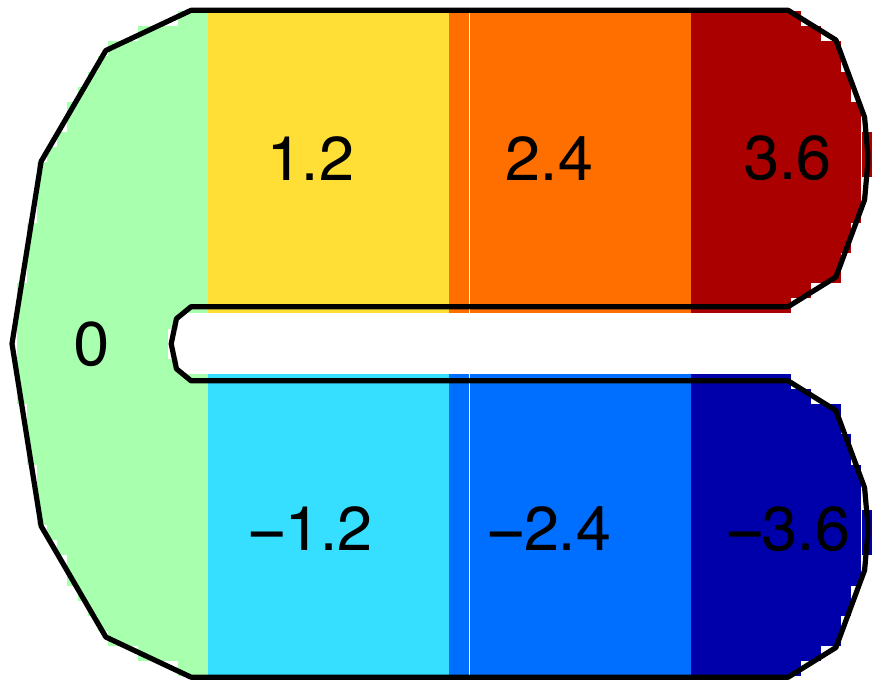} & 
			\includegraphics[scale=0.38]{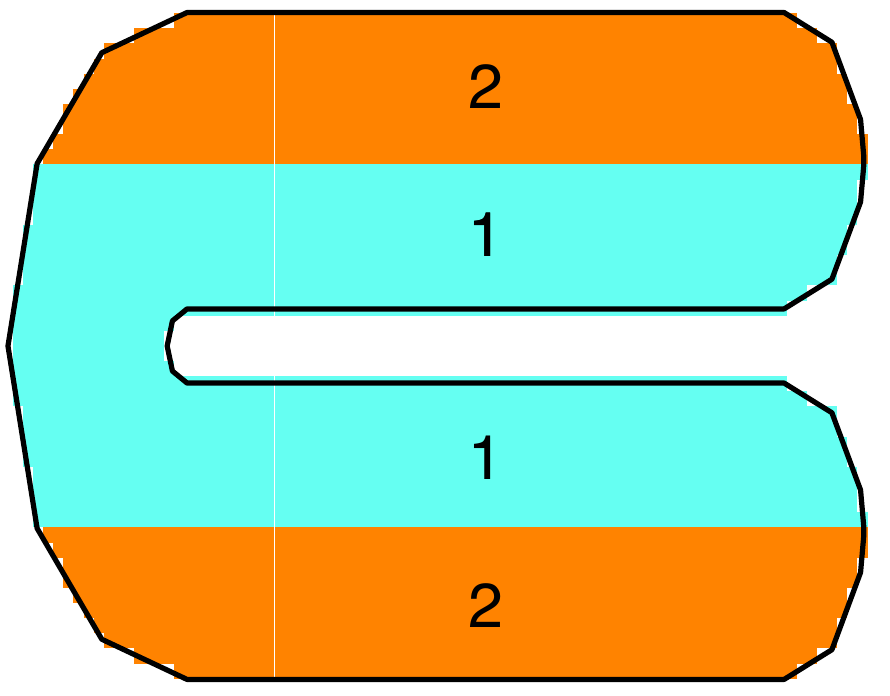}~~& 
			~~\includegraphics[scale=0.38]{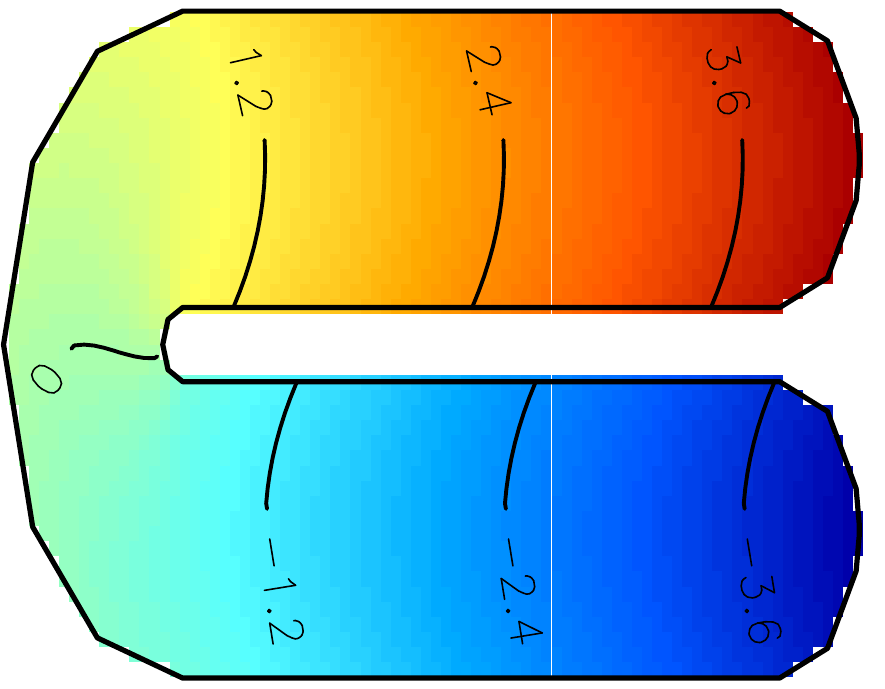}&
			\includegraphics[scale=0.38]{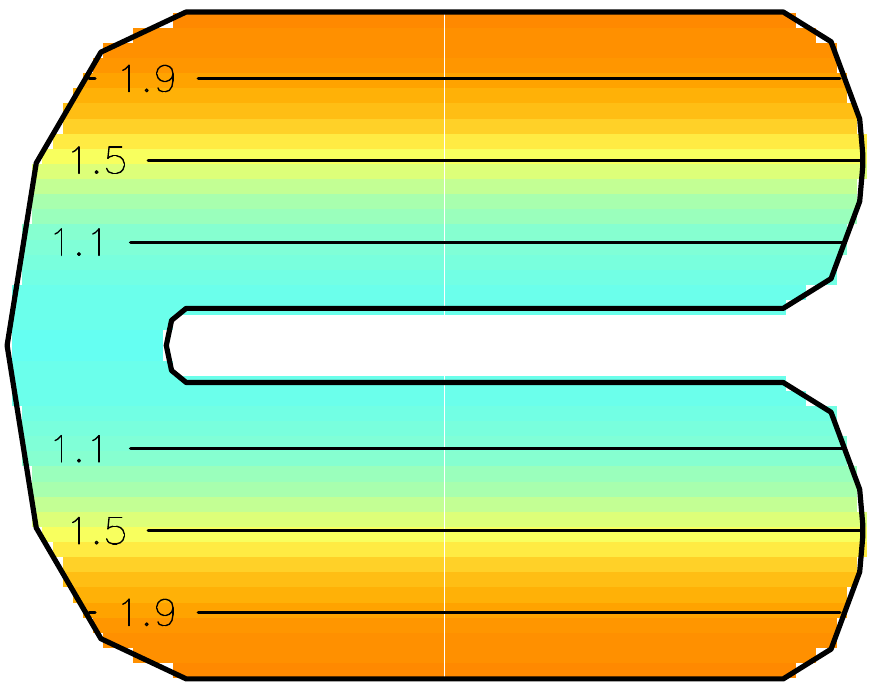} \\[-5pt]
			$\beta_0^{o}$&$\beta_1^{o}$&$\beta_0^{o}$&$\beta_1^{o}$
		\end{tabular}	
	\end{center} \vspace{-.2in}
	\caption{The true coefficient functions in Simulation Example 1.}
	\label{FIG:coeff_true_simu1}
\end{figure}

For each image, we set the resolution as $100 \times 50$ (pixels). The true signal falls only within the horseshoe domain (3182 pixels)\red{;} outside the domain \red{is} pure noise. We generate the scalar covariate $X_{i}\sim N\left(0,1\right)$, and then truncate it by $[-3,+3]$. We set $\eta_i(\bs{z})=\sum_{k=1}^{2}\lambda_k^{1/2}\xi_{ik}\psi_k(\bs{z})$, where $(\lambda_1, \lambda_2)=(0.1,0.02) \text{~or~} (0.2,0.05)$ \red{and} $\xi_{i1}$ and $\xi_{i2} \sim N(0,1)$, $\psi_1(\bs{z})=c_1\sin(2\pi z_1)$, and $\psi_2(\bs{z})=c_2\cos(2\pi z_2)$.  Let $c_1=0.56$ \red{and} $c_2=0.61$, \red{such} that $\psi_1$ and $\psi_2$ are orthonormal functions on $\Omega$. The measurement error $\varepsilon_{ij}$ is independently generated from $N(0,1)$ and $\sigma=1.0,~2.0$. 
%%%%%%%%%%%%%%%%%%%%%%%%%%%%%%%%%%%%%%%%%%%%%%%%%%%%%%%%%%%%%
\begin{table}[t]
\begin{center}
\caption{Estimation errors of the coefficient estimators, $\sigma=2.0$. \label{TAB:eg1_01}}
\renewcommand\arraystretch{0.75}
\scalebox{0.9}{\begin{tabular}{cccccccccc}\\ \hline\hline
Function &\multirow{2}{*}{$n$} &\multirow{2}{*}{Method} &\multicolumn{2}{c}{$\lambda_1=0.03,~\lambda_2=0.006$}& &\multicolumn{2}{c}{$\lambda_1=0.2,~\lambda_2=0.05$}  \\ \cline{4-5} \cline{7-8}
Type &&&$\beta_0$ &$\beta_1$ & &$\beta_0$ &$\beta_1$\\ \hline

\multirow{8}{*}{Jump} &\multirow{4}{*}{50} &BPST &0.0139 &0.0182 & &0.0145 &0.0189 \\
& & PCST &0.0088 &0.0090 & &0.0094 &0.0097 \\
& & Kernel &0.0801 &0.0819 & &0.0807 &0.0826 \\ 
& & Tensor &0.0799 &0.0248 & &0.0799 &0.0254\\ \cline{2-8}
	
&\multirow{4}{*}{100} &BPST &0.0090 &0.0118 & &0.0093 &0.0122\\
& &PCST &0.0044 &0.0044 & &0.0047 &0.0047 \\
& &Kernel &0.0400 &0.0405 & &0.0403 &0.0409 \\ 
& &Tensor &0.0395 &0.0166 & &0.0399 &0.0171 \\ \hline

\multirow{8}{*}{Smooth} &\multirow{4}{*}{50} &BPST &0.0026 &0.0032 & &0.0032 &0.0041 \\
&  &PCST &0.0088 &0.0090 & &0.0119 &0.0139 \\
&  &Kernel &0.0801 &0.0819 & &0.0807 &0.0826 \\ 
&  &Tensor &0.0799 &0.0256 & &0.0806 &0.0271\\ \cline{2-8}
	
&\multirow{4}{*}{100} &BPST &0.0016 &0.0019 & &0.0019 &0.0022 \\
& & PCST &0.0070 &0.0086 & &0.0073 &0.0090 \\
& & Kernel &0.0400 &0.0405 & &0.0403 &0.0409 \\ 
& & Tensor &0.0399 &0.0168 & &0.0402 &0.0179\\ \hline\hline
\end{tabular}}
\end{center}
\end{table}

To fit the model, we consider the BPST and PCST methods presented in Section 2. To obtain the BPST estimators, we set $d=5$ and $r=0$ when generating the bivariate spline basis functions. Figure \ref{FIG:triangulations_simu1} in the Appendix B
illustrates the triangulations used for the BPST and PCST. The triangulation  used for the BPST ($\triangle_1$)  contains 90 triangles (73 vertices), and the triangulation used for the PCST ($\triangle_2$)  contains 346 triangles (226 vertices).

We quantify the estimation accuracy of the coefficient functions using the mean squared error (MSE). Table \ref{TAB:eg1_01} provides the average MSE (across 500  Monte Carlo experiments) for two types of coefficient functions. To save space, we present the results for $\sigma=2.0$ \red{only}; the results for $\sigma=1.0$ are presented in  Table \ref{TAB:eg1_01_2}  in the Appendix B. As expected, the estimation accuracy of all the methods improves as the sample size increases or the noise level decreases. In both scenarios, the BPST and PCST outperform the other two competitors, reflecting the advantage of our method over a complex domain. When the true coefficient functions are smooth, the BPST provides the best estimation, followed by the PCST. On the other hand, when the true coefficient function contains jumps, the PCST provides a better result. For \red{the tensor} regression, the estimator of $\beta_1^{o}(\cdot)$ is much more accurate than \red{that} of $\beta_0^{o}(\cdot)$,  \red{owing to} the design of the coefficient function. Figure \ref{FIG:coeff_true_simu1} \red{shows} that,  in contrast to the intercept function of $\beta_0^{o}(\cdot)$, the true slope function of $\beta_1^{o}(\cdot)$ is still smooth across the complex boundary. Moreover, when the coefficient function is smooth across the boundary, the estimation accuracy is also affected by the domain of the true signal. The performance of the kernel method is not affected by the design of the coefficient functions. \red{Instead, it depends} heavily on the noise level,  \red{owing} to the three-stage structure.

%%%%%%%%%%%%%%%%%%%%%%%%%%%%%%%%%%%%%%%%%%%%%%%%%%%%%%%%%%%%%
%%%%%%%%%%%%%%%%%%%%%%%%%%%%%%%%%%%%%%%%%%%%%%%%%%%%%%%%%%%%%%
\vskip .10in \noindent \textbf{5.2. Example 2} \vskip .10in
% \label{SUBSEC:eg2}

In this example, we \red{simulate the data by considering} the domains of the \red{fifth} and 35th slices of the brain images illustrated in Section 6 as the domain $\Omega$. We generate  response images based on a set of smooth coefficient functions from the following model: $Y_{ij}=\sum_{\ell=0}^{2}X_{i\ell}\beta_{\ell}^{o}(\bs{z}_{j})+\eta_{i}(\bs{z}_{j})+\sigma\varepsilon_{ij}$, for $i=1,\ldots,n$, $j=1,\ldots,N$, and $\bs{z}_j\in \Omega$, where $\beta_0^{o}(\bs{z})=5\{(z_1-0.5)^2+(z_2-0.5)^2\}$,  $\beta_1^{o}(\bs{z})=-1.5z_1^3+1.5z_2^3$ and $\beta_2^{o}(\bs{z})=2-2\exp[-8\{(z_1-0.5)^2+(z_2-0.5)^2\}]$. the true coefficient images are shown in the first \red{columns} of Figures \ref{FIG:EST_SCC} and \ref{FIG:EST_SCC_s35} in the Appendix B for the fifth and 35th slices, respectively. For each image, we simulate the data at all $79\times 95$ pixels. To mimic real brain images, the true signals are \red{generated only} on the pixels/voxels (3476 or 5203 pixels in total) within the brain domain; outside the boundary of the brain, the image contains only noise. We set $X_{i0}=1$ and generate $\widetilde{\mathbf{X}}_i=(X_{i1},X_{i2})^\top \sim N\left(\mathbf{0},\boldsymbol{\Sigma}\right)$, with $\boldsymbol{\Sigma}=\binom{1.0~~0.5}{0.5~~1.0}$ and $X_{i\ell}$ truncated by $[-3,+3]$. For the error terms, we set $\eta_i(\bs{z})=\sum_{k=1}^{2}\lambda_k^{1/2}\xi_{ik}\psi_k(\bs{z})$, where $\xi_{i1}$ and $\xi_{i2} \sim N(0,1)$, $\psi_1(\bs{z})=1.488\{\sin(\pi z_1)-1.5\}$, $\psi_2(\bs{z})=1.939\cos(2\pi z_2)$, \red{and} $(\lambda_1, \lambda_2)=(0.1,0.02) \text{~or~} (0.2,0.05)$. The measurement error $\varepsilon_{ij}$ is independently generated from $N(0,1)$ and $\sigma =0.5,~1.0$. \red{To conserve space}, we \red{show} only the results for the domain of the fifth slice for $\sigma=1.0$ \red{here}. The results for $\sigma=0.5$ and \red{those} based on  the domain of the 35th slice are shown in Appendix B. 

\red{Because} the functions in this example are smooth, for the bivariate spline approach, we consider \red{only} the BPST method. To further study the effect of different triangulations, we consider  $\triangle_3$ and $\triangle_4$; see Figure \ref{FIG:triangulations_simu2} in the Appendix B. Similarly to Section 5.1, we summarize the MSE for different coefficient functions based on 500 Monte Carlo experiments in Table \ref{TAB:estimation}. Columns 2--5 in Figure \ref{FIG:EST_SCC} in the Appendix B show the estimated coefficient functions using the kernel, tensor and BPST methods, respectively. Table \ref{TAB:estimation} and Figure \ref{FIG:EST_SCC} in the Appendix B show that the estimation accuracy \red{improves} for all methods  as the sample size increases or the noise level decreases. In all settings, the BPST method has the smallest MSE compared with the kernel and tensor methods, reflecting the advantage of our method in estimating the coefficient functions and, hence, the regression function. \red{Because  the kernel and tensor} methods are \red{both} designed for a rectangle domain, the estimation accuracy can be affected by the \red{noise} outside the domain. \red{Futhermore,} the MSE is invariable across two triangulations, \red{thus,} $\triangle_{3}$ might be sufficient to capture the feature in the \red{data set}. \red{This} also implies that when this minimum number of triangles is reached, further refining the triangulation \red{has} little effect on the fitting process, but makes the computational burden unnecessarily heavy.

%%%%%%%%%%%%%%%%%%%%%%%%%%%%%%%%%%%%%%%%%%%%%%%%%%%%%%%%%%%%%
\begin{table}[t]
\begin{center}
\caption{Estimation errors of the coefficient function estimators, $\sigma=1.0$. \label{TAB:estimation}}
\renewcommand\arraystretch{0.75}
\scalebox{0.9}{\begin{tabular}{lccccccccccr}\\ \hline\hline
\multirow{2}{*}{$n$}  &\multirow{2}{*}{Method} &\multicolumn{3}{c}{$\lambda_1=0.1,~\lambda_2=0.02$}& \multicolumn{3}{c}{$\lambda_1=0.2,~\lambda_2=0.05$} \\ \cline{3-5} \cline{6-8}
	&&$\beta_0$ &$\beta_1$ &$\beta_2$ & $\beta_0$ &$\beta_3$ &$\beta_2$\\ \hline
	\multirow{4}{*}{50} &BPST($\triangle_{3}$)&0.003&0.005&0.005&0.007&0.011&0.010 \\
	&BPST($\triangle_{4}$)&0.003&0.005&0.005& 0.007&0.010&0.009\\
	&Kernel &0.023 & 0.032 & 0.032 & 0.026 & 0.037 & 0.037 \\
	&Tensor &0.023& 0.013&0.019&0.026 &  0.017 & 0.024\\ \hline
	
	\multirow{4}{*}{100} &BPST($\triangle_{3}$) & 0.002&0.002&0.002 &0.003&0.005&0.005\\
	&BPST($\triangle_{4}$)& 0.002&0.002&0.002 &0.003&0.004&0.004 \\
	&Kernel &0.011& 0.015&0.015 & 0.013 & 0.018& 0.018\\
	&Tensor & 0.011&0.007& 0.011& 0.013  &  0.009  &  0.013\\ \hline\hline
	\end{tabular}}
\end{center}  \vspace{-.2in}
\end{table}

Finally, we illustrate the finite-sample performance of the proposed SCCs for the coefficient functions described in Section 3. In particular, we report the empirical coverage probabilities of the nominal 95\% SCCs using triangulation $\triangle_3$. We evaluate the coverage of the proposed SCCs over all pixels on the interior of $\Omega$, and test whether the true functions are entirely covered by the SCCs at these pixels. Table \ref{TAB:coverage} summarizes the empirical coverage rate (ECR) for 500 Monte Carlo experiments of the 95\% SCCs and the average width of the SCCs. The results clearly show \red{that} the ECRs of the SCCs are well approximated to 95\%, \red{particularly as} the sample size increases. Table \ref{TAB:coverage}  also reveals that the SCCs tend to be narrower when the sample size becomes larger or the noise level \red{decreases}.

\begin{table}[t]
\begin{center}
\renewcommand*{\arraystretch}{0.75}
\caption{The coverage rate of the $95\%$ SCCs for the coefficient functions. \label{TAB:coverage}} 
\scalebox{0.9}{\begin{tabular}{lcccccccccr}\\\hline\hline
\multirow{2}{*}{$n$}&\multirow{2}{*}{$\lambda$}&\multirow{2}{*}{$\sigma$}&\multicolumn{3}{c}{Coverage}&\multicolumn{3}{c}{Width}\\\cline{4-9}
&&&$\beta_0$&$\beta_1$&$\beta_2$&$\beta_0$&$\beta_1$&$\beta_2$\\\hline
\multirow{4}{*}{50}&\multirow{2}{*}{(0.1,0.02)}&0.5&0.976&0.928&0.938&0.332&0.362&0.377\\
&&1.0&0.976&0.940&0.952&0.358&0.392&0.413\\
\cline{2-9}
&\multirow{2}{*}{(0.2,0.05)}&0.5&0.962&0.918&0.932&0.445&0.497&0.513\\
&&1.0&0.970&0.930&0.940&0.478&0.527&0.544\\
\hline
\multirow{4}{*}{100}&\multirow{2}{*}{(0.1,0.02)}&{0.5}&0.970&0.956&0.956&0.234&0.250&0.267\\
&&1.0&0.978&0.968&0.978&0.262&0.285&0.297\\
\cline{2-9}
&\multirow{2}{*}{(0.2,0.05)}&0.5&0.956&0.958&0.936&0.313&0.348&0.357\\
&&1.0&0.966&0.964&0.954&0.344&0.378&0.389\\
\hline\hline
\end{tabular}}
\end{center}  \vspace{-.2in}
\end{table}

%%%%%%%%%%%%%%%%%%%%%%%%%%%%%%%%%%%%%%%%%%%%%%%%%%%%%%%%%%%%%
%%%%%%%%%%%%%%%%%%%%%%%%%%%%%%%%%%%%%%%%%%%%%%%%%%%%%%%%%%%%%
%%%%%%%%%%%%%%%%%%%%%%%%%%%%%%%%%%%%%%%%%%%%%%%%%%%%%%%%%%%%%
% \section{ADNI Data Analysis}
% \label{SEC:Application}

\vskip 0.1in  \noindent \textbf{6. ADNI Data Analysis} \vskip 0.1in
\renewcommand{\thetable}{5.\arabic{table}} \setcounter{table}{0} 
\renewcommand{\thefigure}{5.\arabic{figure}} \setcounter{figure}{0}
\renewcommand{\theequation}{5.\arabic{equation}} \setcounter{equation}{0} 
%\label{SEC:method}

To illustrate the proposed method, we consider the spatially normalized FDG (fludeoxyglucose) PET data of the Alzheimer's Disease Neuroimaging Initiative (ADNI). As pointed out in \cite{Marcus:Mena:Subramaniam:14}, FDG-PET images have been shown to be a promising modality for detecting functional brain changes in Alzheimer's Disease (AD). The data can be obtained from the ADNI database at \url{http://adni.loni.usc.edu/}. The database contains spatially normalized PET images of 447 subjects. Of these 447 subjects, 112 have normal cognitive functions, considered to be the control group, 213 are diagnosed as mild cognitive impairment (MCI), and 122 are diagnosed as AD. Table \ref{TAB:diagnosis} in the Appendix B summarizes the distribution of patients by diagnosis status and sex.

In this study, we examine several patient-level features\red{:} (i) demographical features, such as age (Age) and sex (Sex); (ii) a dummy variable for the abnormal diagnosis status ``MA" ($1=$ ``AD" or ``MCI", zero otherwise); (iii) a dummy variable for ``AD'' ($1=$ ``AD," zero otherwise); and (iv) dummy variables \red{for the} APOE genotype, the strongest genetic risk factor for ``AD"; see \cite{CorderEtAl:93}. We code APOE$_1$ as a dummy variable for subjects with one epsilon 4 allele, and APOE$_2$ as subjects who have two alleles. 

Noting that the PET images are 3D, we select the 5th, 8th, 15th, 35th, 55th, 62nd, and 6fifth horizontal slices (bottom to up) of the brain from a total of 68 slices to illustrate our method. Each slice of the image contains 79$\times$95 pixels, but the domains of \red{different brain slices} are quite different. Specifically, the domain boundary for the bottom slices and upper slices are much more complex than the slices in the middle; more examples can be found in Figure \ref{FIG:APP-TRI} in the Appendix B. For each slice, we consider the following image-on-scalar regression:
\begin{align*}
Y_{i}(\bs{z}_j)=&\beta_0(\bs{z}_j) + \beta_1(\bs{z}_j)\textrm{MA}_i + \beta_2(\bs{z}_j)\textrm{AD}_i + \beta_3(\bs{z}_j)\textrm{Age}_i + \beta_4(\bs{z}_j)\textrm{Sex}_i \\
&+\beta_5(\bs{z}_j)\textrm{APOE}_{1i} +\beta_6(\bs{z}_j)\textrm{APOE}_{2i} +\eta_i(\bs{z}_j) +\sigma(\bs{z}_j)\varepsilon_{i}(\bs{z}_j),~i=1,\ldots,n.
\end{align*}

We fit the above model using the BPST method for each slice; see Figure \ref{FIG:APP-TRI} in the Appendix B for the set of triangulations used for the BPST method. The image maps in Figure \ref{FIG:APP-EST} \red{ and Figures} \ref{FIG:APP-EST2} and \ref{FIG:APP-EST3} in the Appendix B present the estimated coefficient functions using the BPST ($d=5$, $r=1$) method. To evaluate the predictive performance, Table \ref{TAB:CV_ADNI} reports the 10-fold CV (parts of the images are left out as training sets) MSPE results for the BPST method, kernel method in \cite{Zhu:Fan:Kong:14}, and tensor regression method in \cite{Li:Zhang:17}. The table shows that the MSPEs of the BPST method are uniformly smaller than those of the kernel method and tensor regression methods.

%%%%%%%%%%%%%%%%%%%%%%%%%%%%%%%%%%%%%%%%%%%%%%%%%%%%%%%%%%%%%
\begin{table}[t]
\begin{center}
\renewcommand*{\arraystretch}{0.75}
\caption{10-fold CV results for the ADNI dataset. ($\times 10^{-2}$) \label{TAB:CV_ADNI}}  
\scalebox{0.9}{\begin{tabular}{ccccccccc}\\ \hline \hline 
Method &Slice 5 &Slice 8 &Slice 15 &Slice 35 &Slice 55 &Slice 62 &Slice 65\\ \hline
BPST &1.4508 &1.4809 &1.5013 &1.5633 &2.0693 &2.3020 &2.6239\\
Kernel &1.4533 &1.4828 &1.5021 &1.5638 &2.0715 &2.3060 &2.6303\\
Tensor &1.5010 &1.5260 &1.5400 &1.5900 &2.1000 &2.3340 &2.6400\\ \hline\hline 
\end{tabular}} 
\end{center}  \vspace{-.2in}
\end{table}

%%%%%%%%%%%%%%%%%%%%%%%%%%%%%%%%%%%%%%%%%%%%%%%%%%%%%%%%%%%%%
\begin{figure}
	\begin{center}
		\begin{tabular}{cccccccc} 
			\multicolumn{2}{c}{Intercept} & \multicolumn{2}{c}{MA} \\ 
			\includegraphics[scale=0.49]{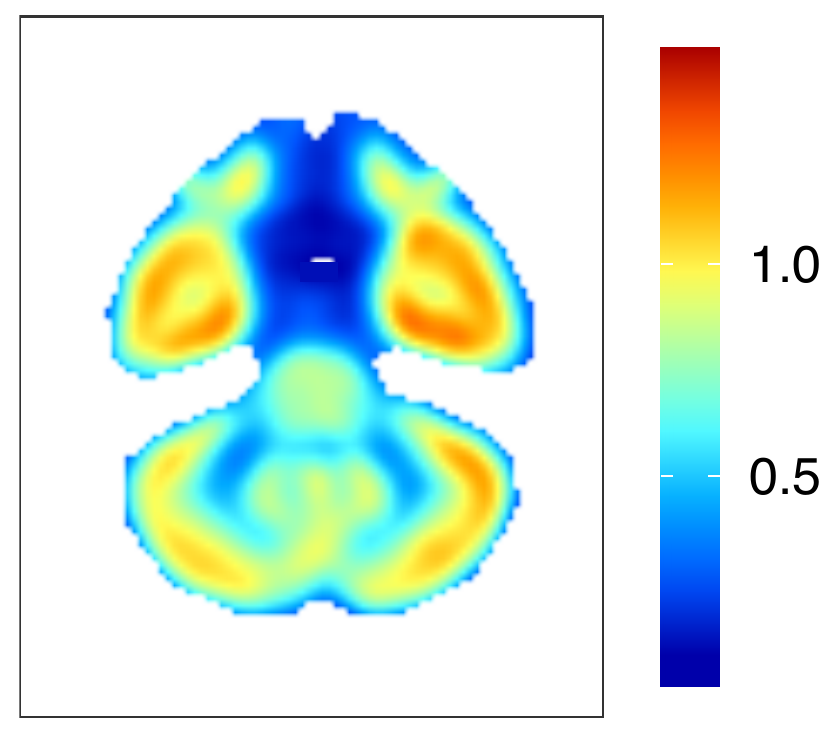} \!\!\!\!\! & \!\!\!\!\!
			\includegraphics[scale=0.282]{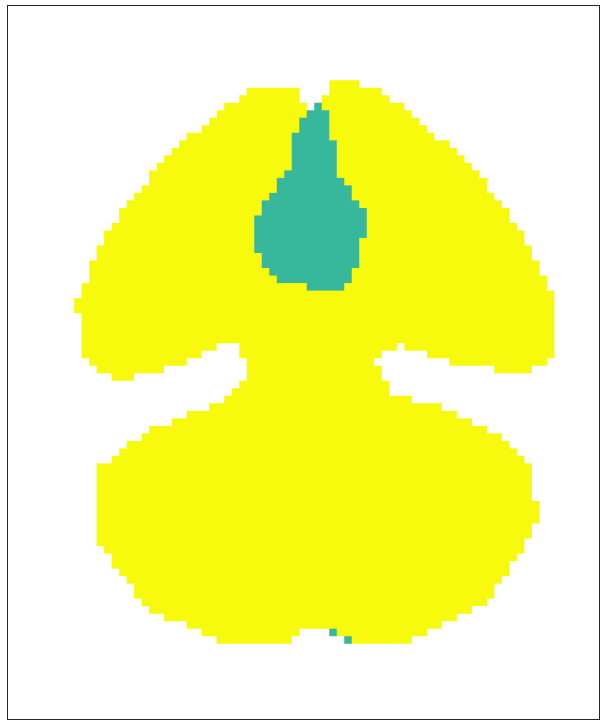}~&~
			\includegraphics[scale=0.49]{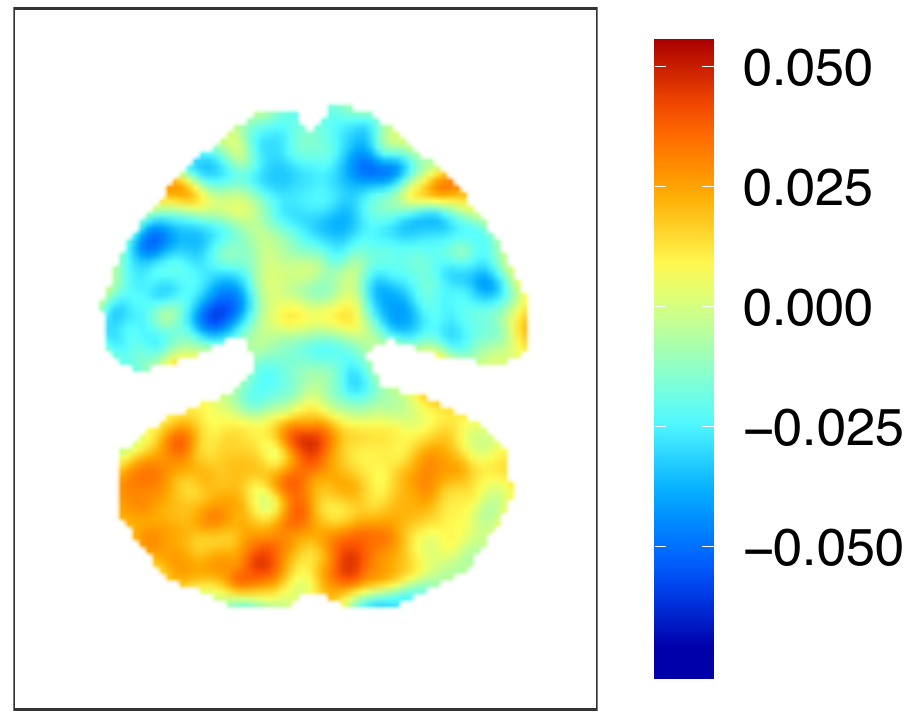} \!\!\!\!\! & \!\!\!\!\!
			\includegraphics[scale=0.282]{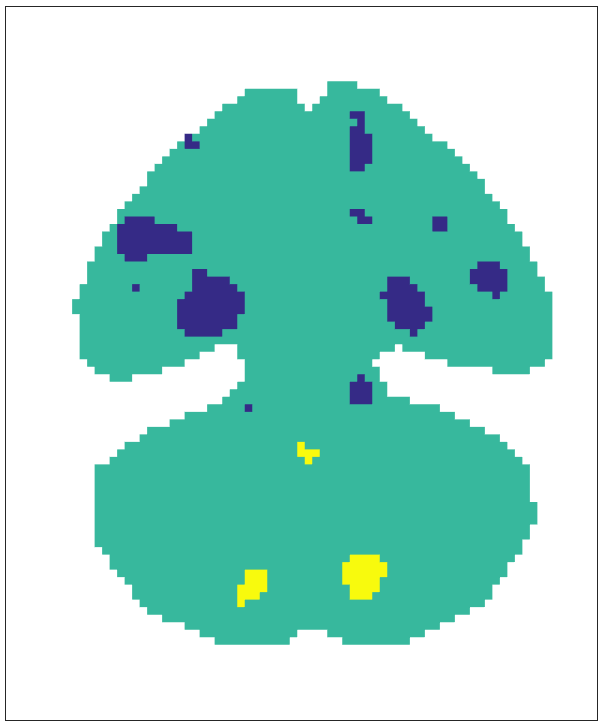} \!\!\!\!\! & \!\!\!\!\\[-10pt]
			\multicolumn{2}{c}{AD} & \multicolumn{2}{c}{Age}\\ 
			\includegraphics[scale=0.49]{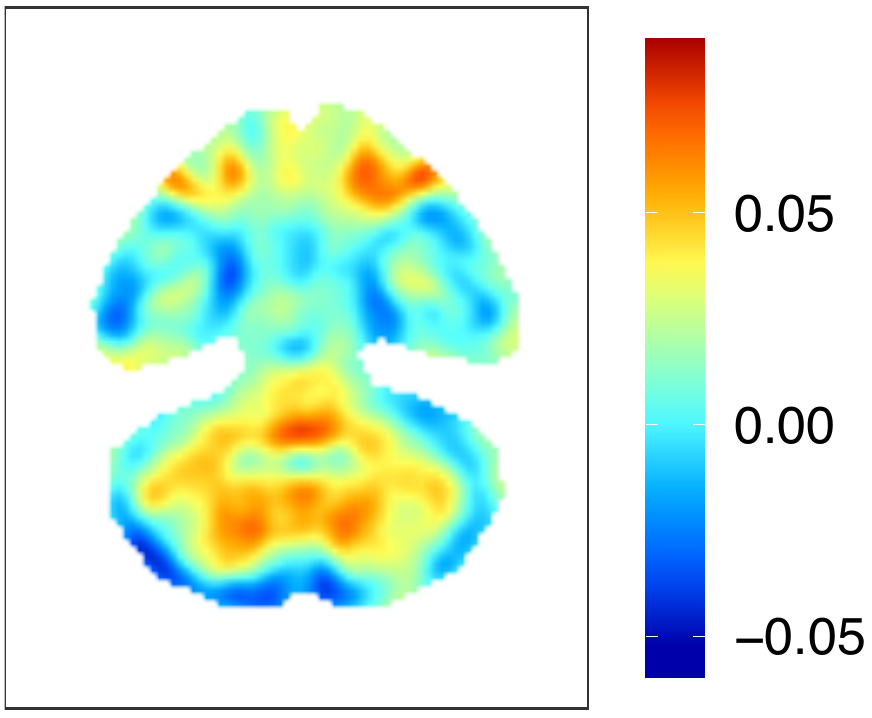} \!\!\!\!\! & \!\!\!\!\!
			\includegraphics[scale=0.282]{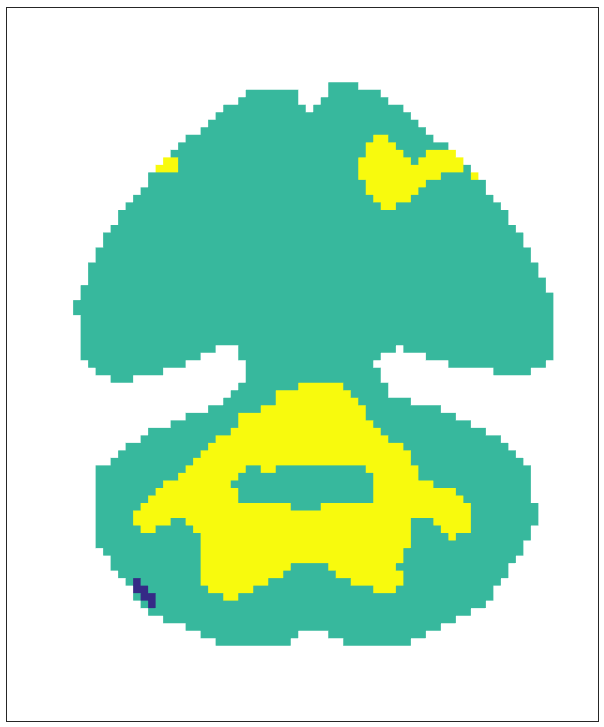} ~&~
			\includegraphics[scale=0.49]{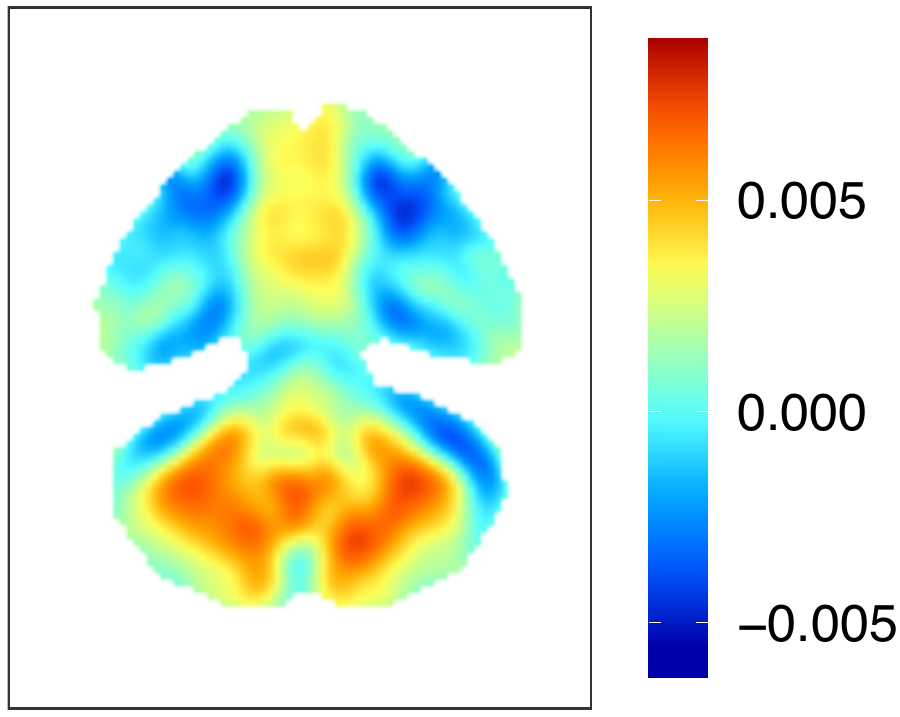}\!\!\!\!\! & \!\!\!\!\!
			\includegraphics[scale=0.282]{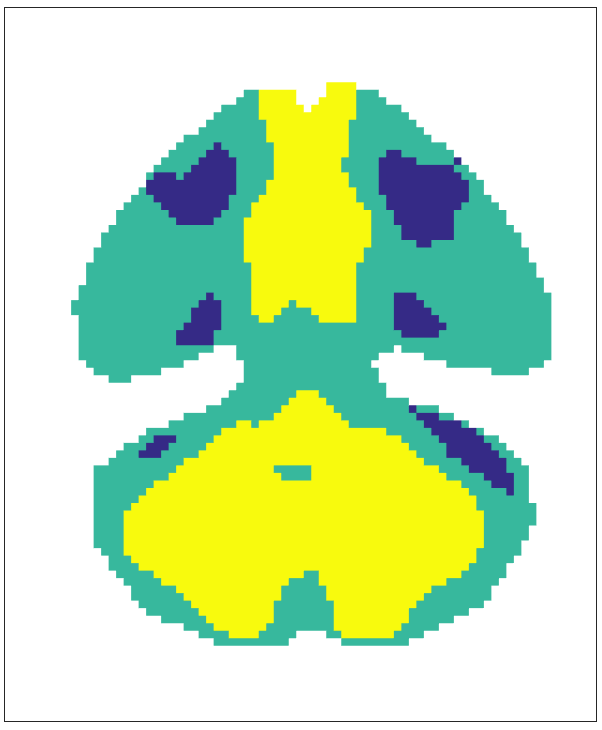}\!\!\!\!\! & \!\!\!\!\\[-10pt]
			\multicolumn{2}{c}{Sex} & \multicolumn{2}{c}{$\textrm{APOE}_{1}$}\\ 
			\includegraphics[scale=0.49]{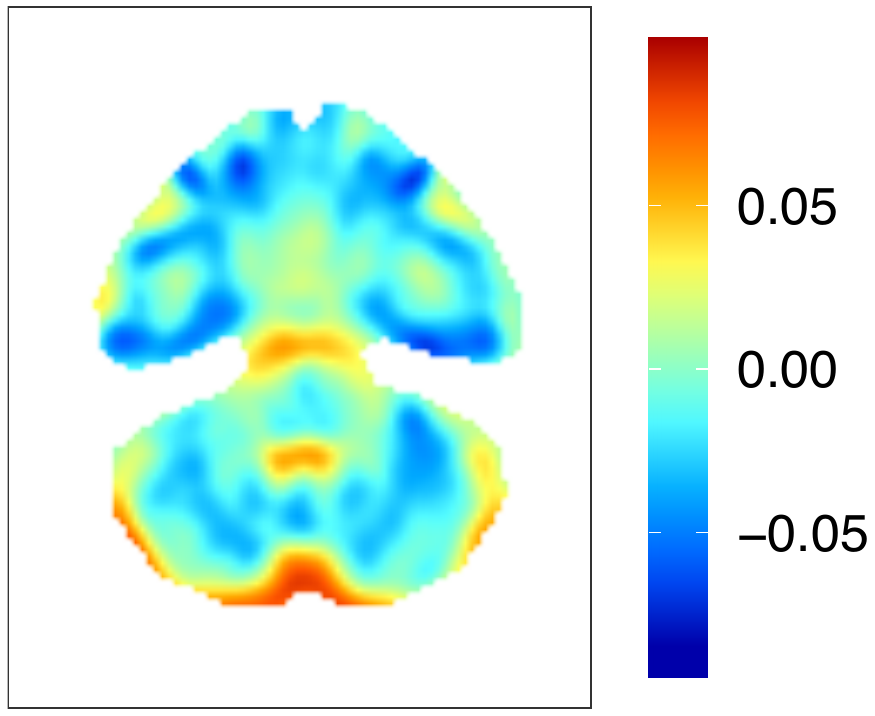} \!\!\!\!\! & \!\!\!\!\!
			\includegraphics[scale=0.282]{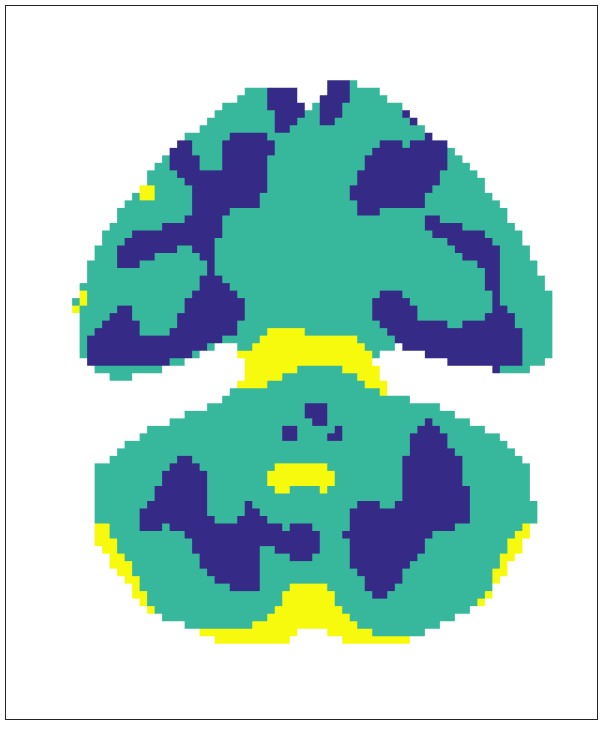}~&~
			\includegraphics[scale=0.49]{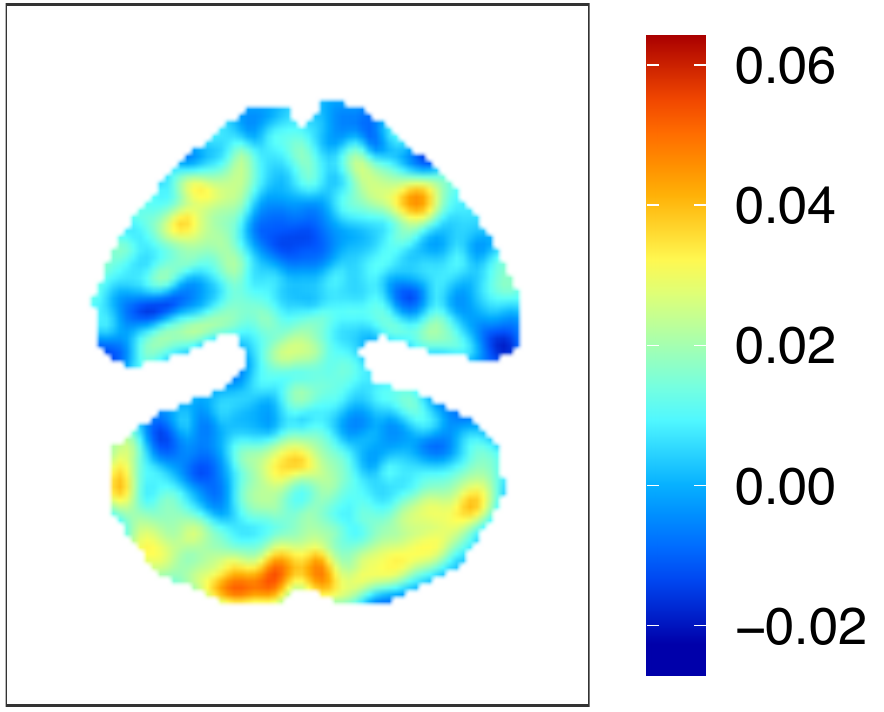} \!\!\!\!\! & \!\!\!\!\!
			\includegraphics[scale=0.282]{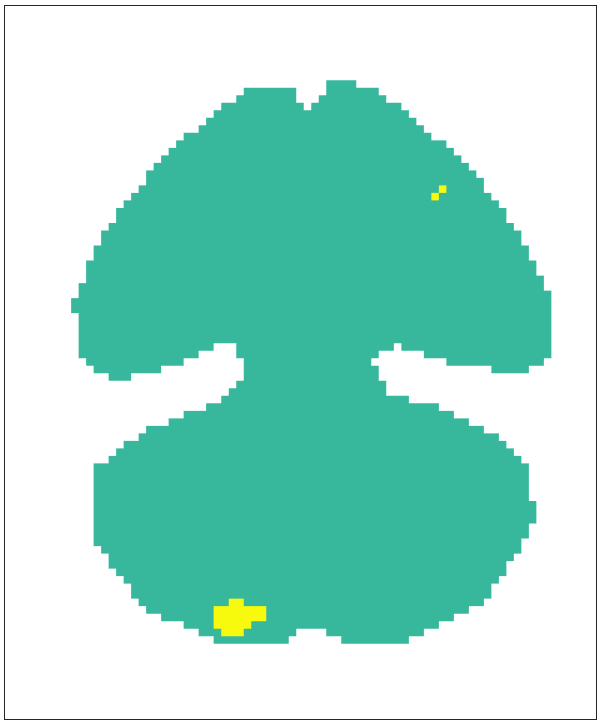} \!\!\!\!\! & \!\!\!\!\\[-10pt]
			\multicolumn{2}{c}{$\textrm{APOE}_{2}$}\\
			\includegraphics[scale=0.49]{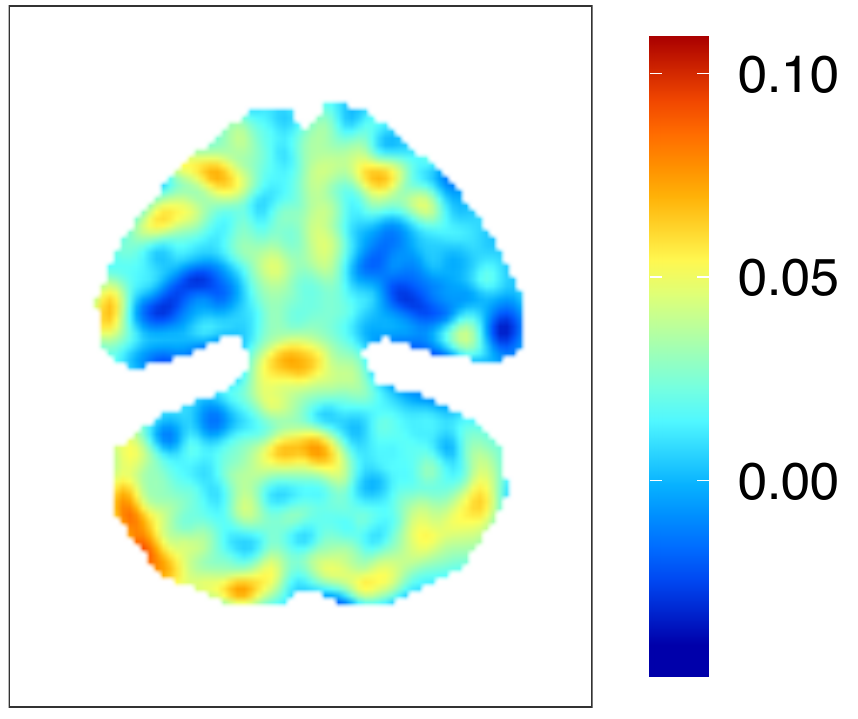} \!\!\!\!\! & \!\!\!\!
			\includegraphics[scale=0.282]{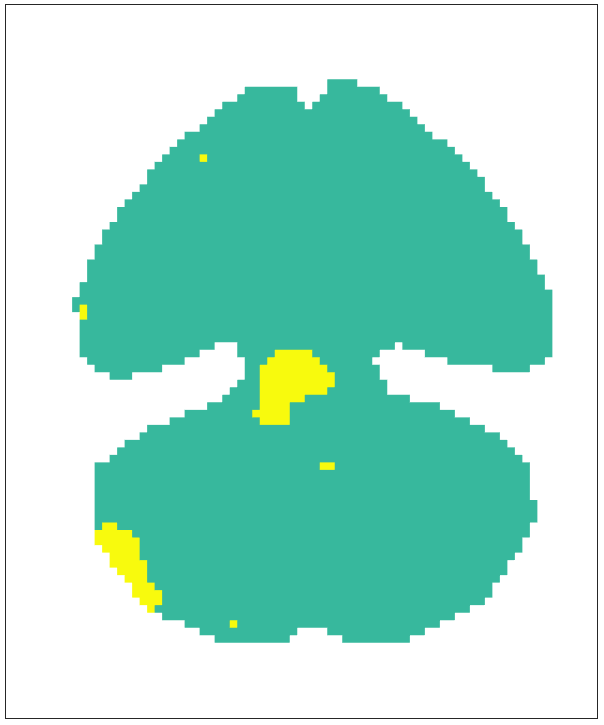}
			\\[-10pt]
	\end{tabular}	
	\end{center}
	\caption{The BPST estimate and significance map of the coefficient functions for the fifth slice of the PET images. The yellow and blue colors in the significance map indicate the regions \red{in which} zero is below the lower SCC or above the upper SCC, respectively.}
	\label{FIG:APP-EST}
\end{figure}

Next, we construct the 95\% SCCs to check whether the covariates are significant. The yellow and blue colors on the ``significance" map in Figure \ref{FIG:APP-EST} indicate the regions in which zero is below the lower SCC or above the upper SCC, respectively. Using these estimated coefficient functions and the 95\% SCCs, we can assess the impact of the covariates on the response images. Taking the fifth slice as an example,  the main impact of ``AD" on \red{in the} PET images is an increase in activity in the cerebellum compared with a normal individual. The cerebellum obtains information from the sensory systems, spinal cord, and other parts of the brain, and then regulates motor movements, resulting in smooth and balanced muscular activities. The significance map of ``Age" also shows an increase in activity in the cerebellum, and ``Sex"  shows different effects in the male and female brain images. The significance maps of the covariates for all other slices of the PET image are shown in Figures \ref{FIG:APP-SCC2} --  \ref{FIG:APP-SCC3} in the Appendix B. From these figures, we can see that the effect of the covariates on the brain activity level varies between slices, depending on the location of the slice\red{;} see the Appendix B for further details.
%%%%%%%%%%%%%%%%%%%%%%%%%%%%%%%%%%%%%%%%%%%%%%%%%%%%%%%%%%%%%
%%%%%%%%%%%%%%%%%%%%%%%%%%%%%%%%%%%%%%%%%%%%%%%%%%%%%%%%%%%%%
%%%%%%%%%%%%%%%%%%%%%%%%%%%%%%%%%%%%%%%%%%%%%%%%%%%%%%%%%%%%%
% \section{Conclusion}
% \label{SEC:conclusion}
\vskip 0.1in  \noindent \textbf{7. Conclusion} \vskip 0.1in

\red{We examine} a class of image-on-scalar regression models to efficiently explore the spatial nonstationarity of a regression relationship between imaging responses and scalar predictors, allowing the regression coefficients to change with the pixels. We have proposed an efficient estimation procedure to carry out statistical inference. We have developed a fast and accurate method for estimating the coefficient images, while consistently estimating their standard deviation images. Our method provides coefficient maps and significance maps that highlight and visualize the associations with brain and the potential risk factors, adjusted for other patient-level features, \red{as well as permitting} inference. In addition, it allows an easy implementation of piecewise polynomial representations of various degrees and smoothness over an arbitrary triangulation, and therefore can handle irregular-shaped 2D objects with different visual qualities. This provides enormous flexibility, accommodating various types of nonstationarity that are commonly encountered in imaging data analysis. Our methodology is extendable to 3D images to fully realize its potential usefulness in biomedical imaging. Instead of using bivariate splines over triangulation, the trivariate splines over tetrahedral partitions introduced in \cite{Lai:Schumaker:07} could be well suited, \red{because} they have many properties in common with the bivariate splines over triangulation. However, \red{this} is a nontrivial task, because the computation is much more challenging for high-resolution 3D images than it is for 2D images, and thus warrants further investigation.

%%%%%%%%%%%%%%%%%%%%%%%%%%%%%%%%%%%%%%%%%%%%%%%%%%%%%%%%%%%%%
%%%%%%%%%%%%%%%%%%%%%%%%%%%%%%%%%%%%%%%%%%%%%%%%%%%%%%%%%%%%%
%%%%%%%%%%%%%%%%%%%%%%%%%%%%%%%%%%%%%%%%%%%%%%%%%%%%%%%%%%%%%
%%%%%%%%%%%%%%%%%%%%%%%%%%%%%%%%%%%%%%%%%%%%%%%%%%%%%%%%%%%%%
% \section*{Acknowledgements}

\vskip 0.1in  \noindent \textbf{Acknowledgements} \vskip 0.1in

The authors wish to thank the editor, associate editor, and two reviewers for their constructive comments and suggestions. Shan Yu's research was partially supported by the Iowa State University Plant Sciences Institute Scholars Program. Li Wang's research was partially supported by National Science Foundation grants DMS-1542332 and DMS-1916204. Lijian Yang's research was partially supported by National Natural Science Foundation of China award 11771240. The study data were obtained from the Alzheimer's Disease Neuroimaging Initiative (ADNI) database (\url{ADNI.loni.usc.edu}). As such, the investigators within the ADNI contributed to the design and implementation of  the ADNI and/or provided data,  but did not participate in the analysis or the writing of this report. A complete listing of  ADNI investigators can be found at: \url{http://adni.loni.usc.edu/wp-content/uploads/how_to_apply/ADNI_Acknowledgement_List.pdf}. 

\newpage

%%%%%%%%%%%%%%%%%%%%%%%%%%%%%%%%%%%%%%%%%%%%%%%%%%%%%%%%%%%%
%%%%%%%%%%%%%%%%%%%%%%%%%%%%%%%%%%%%%%%%%%%%%%%%%%%%%%%%%%%%
%%%%%%%%%%%%%%%%%%%%%%%%%%%%%%%%%%%%%%%%%%%%%%%%%%%%%%%%%%%%
%%%%%%%%%%%%%%%%%%%%%%%%%%%%%%%%%%%%%%%%%%%%%%%%%%%%%%%%%%%%
\fontsize{12}{14pt plus.8pt minus .6pt}\selectfont
\vskip 0.1in  \noindent \textbf{Appendix A} \vskip 0.1in % \label{App:1}
\setcounter{equation}{0}
\setcounter{subsection}{0}
\def\thelemma{A.\arabic{lemma}}
\def\thetheorem{A.\arabic{theorem}}
\def\theequation{A.\arabic{equation}}

In the following, we use $c$, $C$, $c_1$, $c_2$, $C_1$, $C_2$, etc. as generic constants, which may be different even in the same line.  For any sequence $a_n$ and $b_n$, we write $a_n \asymp b_n$ if there exist two positive constants $c_1, c_2$ such that $c_1 |a_n| \le |b_n| \le c_2 |a_n|$, for all $n\ge 1$. For a real valued vector $\bs{a}$, denote $\|\bs{a}\|$ its Euclidean norm. For a matrix $\mathbf{A}=(a_{ij})$, denote $\|\mathbf{A}\|_{\infty}=\max_{i,j} |a_{ij}|$. For any positive definite matrix $\mathbf{A}$, let $\lambda_{\min}(\mathbf{A})$ and $\lambda_{\max}(\mathbf{A})$ be the smallest and largest eigenvalues of $\mathbf{A}$. For a vector valued function $\bs{g}=(g_0,\ldots,g_p)^{\top}$, denote $\|\bs{g}\|_{L _2(\Omega)}=\{\sum_{\ell =0}^{p} \|g_{\ell}\|_{L _2(\Omega)}^2\}^{1/2}$ and $\Vert \bs{g}\Vert_{\infty ,\Omega}=\max_{0\leq \ell\leq p} \Vert g_{\ell}\Vert_{\infty ,\Omega}$, where $\Vert g_{\ell}\Vert_{L _2,\Omega}$ and $\Vert g_{\ell}\Vert_{\infty ,\Omega}$ are the $L_2$ norm and supremum norm of $g_{\ell}$ defined at the beginning of Section 2.2. Further denote $\|\bs{g}\|_{\upsilon,\infty,\Omega}=\max_{0\leq \ell\leq p}|g_{\ell}|_{\upsilon,\infty,\Omega}$, where $|g_{\ell}|_{\upsilon,\infty,\Omega}=\max_{i+j=\upsilon}\Vert \nabla_{z_{1}}^{i}\nabla_{z_{2}}^{j}g_{\ell}(\bs{z})\Vert_{\infty ,\Omega}$. For notation simplicity, we drop the subscript $\Omega$ in the rest of the paper.
For $\bs{g}^{(1)}(\bs{z})=(g_0^{(1)}(\bs{z}), \ldots, g_p^{(1)}(\bs{z}))^{\top}$ and $\bs{g}^{(2)}(\bs{z})=(g_0^{(2)}(\bs{z}), \ldots, g_p^{(2)}(\bs{z}))^{\top}$, define the empirical inner product as
\begin{equation}
\langle \bs{g}^{(1)}, \bs{g}^{(2)}\rangle_{n,N} = \frac{1}{nN}\sum_{\ell,\ell^{\prime}=0}^{p}\sum_{i=1}^{n}\sum_{j=1}^N X_{i\ell}X_{i\ell^{\prime}} g_{\ell}^{(1)}(\bs{z}_j) g_{\ell^{\prime}}^{(2)}(\bs{z}_j),
\label{DEF:empirical_product}
\end{equation}
and the theoretical inner product as
\begin{equation}
\langle \bs{g}^{(1)}, \bs{g}^{(2)}\rangle = \sum_{\ell,\ell^{\prime}=0}^{p}  E(X_{\ell}X_{\ell^{\prime}})\int_{\Omega}g_{\ell}^{(1)}(\bs{z})g_{\ell^{\prime}}^{(2)}(\bs{z})d\bs{z},
\label{DEF:theoretical_product}
\end{equation}
and denote the corresponding empirical and theoretical norms $\|\cdot\|_{n,N}$ and $\|\cdot\|$.

Furthermore, let $\| \cdot\|_{\mathcal{E}}$ be the norm introduced by the inner product $\langle \cdot, \cdot\rangle_{\mathcal{E}}$, where, for $\bs{g}^{(1)}(\bs{z})$ and $\bs{g}^{(2)}(\bs{z})$,
%\begin{equation*}
%\left\langle \bs{g}^{(1)},\bs{g}^{(2)}\right\rangle_{\mathcal{E}}\!=\!\sum_{\ell,\ell^{\prime}=0}^{p}\int_{\Omega}\left\{\sum_{i+j=2}
%\binom{2}{i}
%(\nabla_{z_{1}}^{i}\nabla_{z_{2}}^{j}g_{\ell}^{(1)})^{2}\right\}^{\frac{1}{2}}\!\left\{\sum_{i+j=2}
%\binom{2}{i}
%(\nabla_{z_{1}}^{i}\nabla_{z_{2}}^{j}g_{\ell^{\prime}}^{(2)})^{2}\right\}^{\frac{1}{2}}dz_{1}dz_{2}.
%\label{DEF:energy_product}
%\end{equation*}
\begin{equation*}
\left\langle \bs{g}^{(1)},\bs{g}^{(2)}\right\rangle_{\mathcal{E}}\!=\!\sum_{\ell,\ell^{\prime}=0}^{p}\int_{\Omega}\left\{\sum_{i+j=2}
\binom{2}{i}
(\nabla_{z_{1}}^{i}\nabla_{z_{2}}^{j}g_{\ell}^{(1)})\right\}\!\left\{\sum_{i+j=2}
\binom{2}{i}
(\nabla_{z_{1}}^{i}\nabla_{z_{2}}^{j}g_{\ell^{\prime}}^{(2)})\right\}dz_{1}dz_{2}.
\label{DEF:energy_product}
\end{equation*}

Let $A(\Omega)$ be the area of the domain $\Omega$, and without loss of generality, we assume $A(\Omega)=1$ in the rest of the article. Note that the triangulation for different coefficient function can be different from each other.
For notational convenience in the proof below, we consider a common triangulation for all the explanatory variables: $\mathbf{B}_{0}(\bs{z})=\mathbf{B}_{1}(\bs{z})=\cdots=\mathbf{B}_{p}(\bs{z})=\mathbf{B}(\bs{z})$, and $\beta_{\ell}(\bs{z}_{j})=\mathbf{B}^{\top}(\bs{z}_{j})\bs{\gamma}_{\ell}$.

%%%%%%%%%%%%%%%%%%%%%%%%%%%%%%%%%%%%%%%%%%%%%%%%%%%%%%%%%%%%
%%%%%%%%%%%%%%%%%%%%%%%%%%%%%%%%%%%%%%%%%%%%%%%%%%%%%%%%%%%%
% \subsection{Properties of bivariate splines} 
\vskip .10in \noindent \textbf{A.1. Properties of bivariate splines} \vskip .10in
% \label{sec:A-1}

We cite two important results from \cite{Lai:Schumaker:07}.

\begin{lemma}[Theorem 2.7, \cite{Lai:Schumaker:07}]
\label{LEM:normequity}
Let $\{B_m\}_{m\in \mathcal{M}}$ be the Bernstein polynomial basis for spline space $\mathcal{S}_{d}^{r}(\triangle)$ defined over a $\pi$-quasi-uniform triangulation $\triangle$.  Then there exist positive constants $c$, $C$ depending on the smoothness $r$, $d$, and the shape parameter $\pi$ such that
$c|\triangle|^{2}\sum_{m\in \mathcal{M}}\gamma_{m}^{2}\leq
\left\Vert \sum_{m\in \mathcal{M}} \gamma_{m}B_{m}\right\Vert_{L _2}^{2}\leq C|\triangle|^{2}\sum_{m\in \mathcal{M}}\gamma_{m}^{2}$.
\end{lemma}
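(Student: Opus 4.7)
The plan is to reduce the global statement to a per-triangle estimate, then invoke the standard norm equivalence for Bernstein basis polynomials on a single triangle. The spline $s = \sum_{m \in \mathcal{M}} \gamma_m B_m$ is piecewise polynomial, so
\[
\|s\|_{L^2(\Omega)}^2 = \sum_{T \in \triangle} \|s|_T\|_{L^2(T)}^2,
\]
and each Bernstein basis polynomial is supported on a bounded number of triangles (depending only on $d$), so the coefficients organize naturally by triangle. The claim then reduces to establishing, for every $T \in \triangle$,
\[
c\,|\triangle|^{2}\sum_{m\in \mathcal{M}(T)}\gamma_{m}^{2}\;\leq\;\|s|_T\|_{L^2(T)}^{2}\;\leq\; C\,|\triangle|^{2}\sum_{m\in \mathcal{M}(T)}\gamma_{m}^{2},
\]
where $\mathcal{M}(T)$ indexes the Bernstein basis polynomials supported on $T$.

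To obtain this per-triangle estimate, I would pull back to the reference simplex $\widehat T$ via an affine map $F_T\colon \widehat T \to T$. On $\widehat T$ the Gram matrix of the degree-$d$ Bernstein basis is a fixed, symmetric positive definite matrix $\mathbf{M}_d$ (a finite-dimensional linear algebra object depending only on $d$), so its eigenvalues are bounded between two positive constants $c_d, C_d$. The change of variables introduces a Jacobian factor equal to $2\,\mathrm{area}(T)$, and the $\pi$-quasi-uniformity assumption gives $\mathrm{area}(T) \asymp \varrho_T^{2} \asymp |T|^{2} \asymp |\triangle|^{2}$, uniformly in $T$. Combining these yields the per-triangle bound. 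Summing over $T \in \triangle$ and using the bounded overlap of supports (so that $\sum_T \sum_{m \in \mathcal{M}(T)} \gamma_m^2 \asymp \sum_{m \in \mathcal{M}} \gamma_m^2$ with constants depending only on $d$ and $r$) delivers the global norm equivalence.

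The main obstacle is the lower bound, namely showing that the Bernstein Gram matrix $\mathbf{M}_d$ is uniformly positive definite on the reference simplex. The upper bound is soft: it follows from the partition-of-unity property $\sum_m B_m \equiv 1$, nonnegativity of Bernstein polynomials, and the Cauchy--Schwarz inequality applied triangle-by-triangle. The lower bound, by contrast, really uses the linear independence of Bernstein polynomials on $\widehat T$ together with finite-dimensionality to extract a positive infimum of $\lambda_{\min}(\mathbf{M}_d)$. A secondary technical point is the interaction between the Bernstein--Bézier coefficients and the $\mathcal{C}^r$ smoothness constraints that define $\mathcal{S}_d^r(\triangle)$: one must verify that the $\gamma_m$ appearing in the statement form a basis representation, so that these constraints do not destroy the coefficient-level equivalence. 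This is handled by noting that smoothness conditions only impose linear relations among coefficients at domain points on shared edges; since each such relation involves a bounded number of coefficients (depending on $d, r$), the resulting coefficient vector inherits the $\ell^2$ equivalence with constants depending only on $d$, $r$, and $\pi$, which is exactly the claim.
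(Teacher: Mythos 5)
Your argument is essentially correct, but note that the paper does not prove this lemma at all: it is quoted verbatim as Theorem~2.7 of Lai and Schumaker (2007), so there is no internal proof to compare against. What you have written is the standard proof of that cited theorem, and the structure is right: decompose the $L_2$ norm over triangles, pull each piece back to the reference simplex where the Bernstein Gram matrix $\mathbf{M}_d$ is a fixed symmetric positive definite matrix, pick up the Jacobian factor $2\,\mathrm{area}(T)$, and use $\pi$-quasi-uniformity to get $\mathrm{area}(T)\asymp \varrho_T^2 \asymp |\triangle|^2$ uniformly in $T$ (the lower bound $\varrho_T\ge |\triangle|/\pi$ is exactly what Assumption (A5)-type quasi-uniformity provides). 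The upper bound via $\sum_m B_m\equiv 1$, nonnegativity, and Cauchy--Schwarz, and the lower bound via $\lambda_{\min}(\mathbf{M}_d)>0$ from linear independence on a finite-dimensional space, are both sound.

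Two small points. First, in the Bernstein--B\'ezier form used here each basis polynomial is supported on exactly one triangle (the paper indexes them so that $d^{\ast}=(d+1)(d+2)/2$ of them sit on each $T$), so the ``bounded overlap'' bookkeeping is even simpler than you suggest and no constant depending on $r$ enters the summation step. Second, your worry about the $\mathcal{C}^r$ smoothness constraints is not needed for the statement as used in this paper: the norm equivalence is an identity for \emph{arbitrary} coefficient vectors $\bs{\gamma}$ in the full piecewise-polynomial space, and the paper imposes smoothness separately through the linear constraint $\mathbf{H}\bs{\gamma}=\mathbf{0}$ (equivalently $\bs{\gamma}=\mathbf{Q}_2\bs{\theta}$), which merely restricts to a linear subspace on which the same two-sided bound trivially continues to hold. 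So that paragraph can be dropped without loss.
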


%%%%%%%%%%%%%%%%%%%%%%%%%%%%%%%%%%%%%%%%%%%%%%%%%%%%%%%%%%%%%
\begin{lemma}[Theorems 10.2 and 10.10, \cite{Lai:Schumaker:07}]
\label{LEM:appord}
Suppose that $|\triangle|$ is a $\pi$-quasi-uniform triangulation of a polygonal domian $\Omega$, and $g(\cdot) \in  \mathcal{W}^{d+1,\infty}(\Omega)$.
\begin{itemize}
\item[(i)] For bi-integer $(a_{1},a_{2})$ with $0\leq {a_{1}}+{a_{2}} \leq d $, there exists a spline $g^{\ast}(\cdot)\in \mathcal{S}_{d}^{0}(\triangle)$ such that $\Vert \nabla_{z_{1}}^{a_{1}}\nabla_{z_{2}}^{a_{2}}\left(g-g^{\ast}\right) \Vert_{\infty}\leq C|\triangle|^{d+1-a_{1}-a_{2}}|g|_{d+1,\infty}$, where $C$ is a constant depending on $d$, and the shape parameter $\pi$.
\item[(ii)] For bi-integer $(a_{1},a_{2})$ with $0\leq {a_{1}}+{a_{2}} \leq d $, there exists a spline $g^{\ast\ast}(\cdot)\in \mathcal{S}_{d}^{r}(\triangle)$ ($d\geq 3r+2$) such that $\Vert \nabla_{z_{1}}^{a_{1}}\nabla_{z_{2}}^{a_{2}}\left(g-g^{\ast\ast}\right) \Vert_{\infty}\leq C|\triangle|^{d+1-a_{1}-a_{2}}|g|_{d+1,\infty}$, where $C$ is a constant depending on $d$, $r$, and the shape parameter $\pi$.
\end{itemize}
\end{lemma}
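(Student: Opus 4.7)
The approach for Lemma \ref{LEM:appord} is to construct explicit spline quasi-interpolants in Bernstein--B\'ezier form and then bound the error triangle by triangle via an affine Bramble--Hilbert scaling argument.

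For part (i), I would build a quasi-interpolation operator $Q_d : \mathcal{C}(\Omega) \to \mathcal{S}_d^0(\triangle)$ that reproduces $\mathbb{P}_d$ on each triangle. On each $T \in \triangle$, write $Q_d g|_T = \sum_{|\alpha|=d} c_\alpha^T(g)\, B_\alpha^T$, where $B_\alpha^T$ are the degree-$d$ Bernstein basis polynomials on $T$ and the linear functionals $c_\alpha^T$ are chosen by duality to $\{B_\alpha^T\}$ so that $Q_d p = p$ for all $p \in \mathbb{P}_d(T)$; the $d+1$ coefficients indexed by domain points on a shared edge can be taken to depend only on the restriction of $g$ to that edge, which gives the required continuity across that edge. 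The stability of the Bernstein basis on a reference triangle, combined with the $\pi$-quasi-uniformity hypothesis, gives $\|Q_d g\|_{L^\infty(T)} \le C \|g\|_{L^\infty(T^*)}$ on a neighborhood $T^*$ of $T$ of diameter comparable to $|T|$. Pulling back to the reference triangle $\widehat T$, the Bramble--Hilbert lemma yields $\|g - \widehat Q_d g\|_{L^\infty(\widehat T)} \le C|g|_{d+1,\infty,\widehat T^*}$; affine scaling to $T$ produces the factor $|T|^{d+1}$ in the right-hand side, while Markov-type inequalities convert each derivative into a factor $|T|^{-1}$, yielding the bound in (i) after taking the maximum over $T \in \triangle$.

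Part (ii) is the harder step: one must enforce $\mathcal{C}^r$ smoothness across every interior edge while still reproducing $\mathbb{P}_d$ locally. My plan is to start from the $\mathcal{C}^0$ quasi-interpolant of part (i) and adjust its Bernstein coefficients in a narrow band along each interior edge so that the standard $\mathcal{C}^r$ smoothness conditions on Bernstein coefficients are satisfied. The hypothesis $d \ge 3r+2$ is precisely what guarantees that, at each interior vertex where several triangles meet, the collection of smoothness constraints is compatible with the local polynomial-reproduction constraints; this compatibility is the main obstacle and is where one invokes the Alfeld--Schumaker-style macro-element constructions of Chapter~10 of Lai--Schumaker. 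Once the modified operator $Q_d^r$ is shown to reproduce $\mathbb{P}_d$ locally and to perturb each Bernstein coefficient by at most $C|T|^{d+1}|g|_{d+1,\infty}$, the derivative estimate in (ii) follows by the same scaling argument used for part (i).
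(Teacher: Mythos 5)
This lemma is not proved in the paper at all: it is quoted verbatim as Theorems 10.2 and 10.10 of Lai and Schumaker (2007), so there is no in-paper argument to compare against. Your outline is, in broad strokes, the strategy actually used in that reference -- a locally supported, polynomial-reproducing quasi-interpolant in Bernstein--B\'ezier form, followed by a Bramble--Hilbert/affine-scaling argument on each triangle and a Markov inequality to handle the derivatives, with $\pi$-quasi-uniformity controlling all the constants. Part (i) as you describe it is essentially complete (indeed, for $r=0$ one can simply take Lagrange interpolation at the domain points, which reproduces $\mathbb{P}_d$ and is automatically continuous across edges).

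The genuine gap is in part (ii). The entire content of Theorem 10.10 is the existence of a \emph{stable, local} minimal determining set for $\mathcal{S}_d^r(\triangle)$ when $d\geq 3r+2$, from which the dual functionals defining the quasi-interpolant are built; your proposal replaces this with the heuristic of ``adjusting Bernstein coefficients in a narrow band along each interior edge'' and an appeal to macro-element constructions. That adjustment step is not obviously compatible with the three properties you need simultaneously: local polynomial reproduction, $\mathcal{C}^r$ smoothness at the \emph{vertices} (where the smoothness conditions from all incident triangles interact, which is the real obstruction, not the edges), and uniform stability of the functionals under quasi-uniformity. Without establishing that the perturbed coefficients still define a bounded projection-like operator that reproduces $\mathbb{P}_d$ on a neighborhood of each triangle, the scaling argument of part (i) cannot be re-run, so the claimed bound in (ii) does not follow from what you have written. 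Since the paper itself treats this as a cited black box, the honest course is either to cite it as the authors do or to reproduce the minimal-determining-set construction in full; the sketch as given does not substitute for either.
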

Lemma \ref{LEM:appord} shows that $\mathcal{S}_{d}^{0}(\triangle)$ has full approximation power, and $\mathcal{S}_{d}^{r}(\triangle)$ also has full approximation power if $d\geq 3r+2$. 
For any $g(\cdot)$ in Sobolev space $\mathcal{C}^{(0)}(\Omega)$, there exists a spline $g^{*}(\cdot)\in \mathcal{PC}(\triangle)$ such that $\Vert g-g^{\ast} \Vert_{\infty}\leq C|\triangle| \|g\|_{\infty}$.

%%%%%%%%%%%%%%%%%%%%%%%%%%%%%%%%%%%%%%%%%%%%%%%%%%%%%%%%%%%%%
\begin{lemma}
\label{LEM:norm}
Let $\bs{g}(\bs{z})=(g_0(\bs{z}), \ldots, g_p(\bs{z}))^{\top}$, where $g_{\ell}(\bs{z})=\sum_{m\in \mathcal{M}}\gamma_{\ell m} B_{m}(\bs{z})$. Then, under Assumptions (A3) and (A5), $\|\bs{g}\| \asymp \sum_{\ell =0}^{p} \|g_{\ell}\|_{L _2}$.
\end{lemma}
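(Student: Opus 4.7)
The plan is to reduce the claim to a pointwise finite-dimensional norm equivalence on $\mathbb{R}^{p+1}$ driven by Assumption (A3), after which the spline structure plays no real role. Starting from the definition (\ref{DEF:theoretical_product}), I would first rewrite the theoretical squared norm in matrix form:
\[
\|\bs{g}\|^{2}=\sum_{\ell,\ell'=0}^{p}E(X_{\ell}X_{\ell'})\int_{\Omega}g_{\ell}(\bs{z})g_{\ell'}(\bs{z})\,d\bs{z}=\int_{\Omega}\bs{g}(\bs{z})^{\top}\bs{\Sigma}_{X}\bs{g}(\bs{z})\,d\bs{z}.
\]
This is the key reformulation: the weighting matrix in the quadratic form is exactly $\bs{\Sigma}_{X}=E(\bs{X}\bs{X}^{\top})$, whose eigenvalue control is what (A3) delivers.

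Next I would invoke (A3) to obtain constants $0<c_{X}\le C_{X}<\infty$ with $c_{X}\,\bs{a}^{\top}\bs{a}\le\bs{a}^{\top}\bs{\Sigma}_{X}\bs{a}\le C_{X}\,\bs{a}^{\top}\bs{a}$ for all $\bs{a}\in\mathbb{R}^{p+1}$. Applying this pointwise with $\bs{a}=\bs{g}(\bs{z})$ and integrating over $\Omega$ yields
\[
c_{X}\sum_{\ell=0}^{p}\|g_{\ell}\|_{L_{2}}^{2}\le\|\bs{g}\|^{2}\le C_{X}\sum_{\ell=0}^{p}\|g_{\ell}\|_{L_{2}}^{2}.
\]
Taking square roots gives $\|\bs{g}\|\asymp\bigl(\sum_{\ell}\|g_{\ell}\|_{L_{2}}^{2}\bigr)^{1/2}$, which differs from the claimed sum $\sum_{\ell}\|g_{\ell}\|_{L_{2}}$ only by the standard equivalence between $\ell_{2}$ and $\ell_{1}$ norms in $\mathbb{R}^{p+1}$: since $p$ is fixed, $(p+1)^{-1/2}\sum_{\ell}\|g_{\ell}\|_{L_{2}}\le\bigl(\sum_{\ell}\|g_{\ell}\|_{L_{2}}^{2}\bigr)^{1/2}\le\sum_{\ell}\|g_{\ell}\|_{L_{2}}$. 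Chaining these two equivalences produces the stated result.

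There is essentially no hard step here; the only thing to verify carefully is that one does not need the spline representation of $g_{\ell}$ at all, so the role of (A5) in the hypotheses is only to ensure that the surrounding framework (the $\pi$-quasi-uniform triangulation under which $\{B_{m}\}$ forms a stable basis via Lemma~\ref{LEM:normequity}) is in force when this lemma is applied later. Since the bound above is uniform in the coefficient vector $\{\gamma_{\ell m}\}$, the conclusion extends immediately to arbitrary elements of the spline space, which is how the lemma will be used in the subsequent consistency and normality arguments.
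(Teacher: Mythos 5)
Your proof is correct and follows essentially the same route as the paper's: rewrite $\|\bs{g}\|^{2}$ as $\int_{\Omega}\bs{g}^{\top}\bs{\Sigma}_{X}\bs{g}\,d\bs{z}$ and apply the eigenvalue bounds from (A3). You additionally make explicit the $\ell_1$--$\ell_2$ equivalence needed to reconcile $\bigl(\sum_{\ell}\|g_{\ell}\|_{L_2}^{2}\bigr)^{1/2}$ with the stated $\sum_{\ell}\|g_{\ell}\|_{L_2}$, and correctly observe that (A5) is not actually used in the argument, both of which the paper's one-line proof glosses over.
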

\begin{proof}
By  (\ref{DEF:empirical_product}), $ \|\mathbf{g}\|^2=\sum_{\ell,\ell^{\prime}=0}^{p}  E (X_{\ell}X_{\ell^{\prime}})  \int_{\Omega}g_{\ell}(\bs{z})g_{\ell^{\prime}}(\bs{z})d\bs{z}=\int_{\Omega} \mathbf{g}^{\top}(\bs{z})\bs{\Sigma}_{X}\mathbf{g}(\bs{z})d\bs{z}$. 
According to Assumptions (A3) and (A5), $\|\mathbf{g}\|^2 \asymp \int_{\Omega} \mathbf{g}^{\top}(\bs{z})\mathbf{g}(\bs{z})d\bs{z} \asymp \sum_{\ell =0}^{p} \|g_{\ell}\|_{L _2}$.
\end{proof}

%%%%%%%%%%%%%%%%%%%%%%%%%%%%%%%%%%%%%%%%%%%%%%%%%%%%%%%%%%%%%
\begin{lemma}
\label{LEM:integration}
Under Assumptions (A4) and (A5), for any Bernstein basis polynomials $B_{m}(\bs{z}),~m \in \mathcal{M}$, of degree $d\geq0$, we have
\begin{align}
&\max_{m\in\mathcal{M}}\left|\frac{1}{N}\sum_{j=1}^{N}B_{m}^{k}(\bs{z}_{j})
-\int_{\Omega}B_{m}^{k}(\bs{z})d\bs{z}\right|
= O\left(|\triangle|N^{-1/2}\right),~ 1 \leq k < \infty,
\label{EQ:B_integration}\\
&\max_{m, m'\in\mathcal{M}}\left|\frac{1}{N}\sum_{j=1}^{N}B_{m}(\bs{z}_{j})B_{m'}(\bs{z}_{j})
-\int_{\Omega}B_{m}(\bs{z})B_{m'}(\bs{z})d\bs{z}\right|
=  O\left(|\triangle|N^{-1/2}\right),
\label{EQ:B_integration2}\\
&\max_{m, m^{\prime} \in \mathcal{M}}\left|\frac{1}{N^2}\sum_{j,j^{\prime}=1}^{N}G_{\eta}(\bs{z}_{j},\bs{z}_{j^{\prime}})B_{m}(\bs{z}_{j})B_{m^{\prime}}(\bs{z}_{j^{\prime}})
\!-\!\int_{\Omega^{2} }G_{\eta}(\bs{z},\bs{z}^{\prime})B_m(\bs{z})B_{m^{\prime}}(\bs{z}^{\prime})
d\bs{z}d\bs{z}^{\prime}\right|\nonumber \\
&\quad=O\left(N^{-1/2} |\triangle|^3\right),
\label{EQ:G_integration}\\
&\max_{m\in \mathcal{M}}\left|\|\sigma B_m\|_{N,L_2}^2 \!-\! \|\sigma B_m\|_{L_2}^2\right|
=\max_{m\in \mathcal{M}}\left| \frac{1}{N}
\sum_{j=1}^{N}B_{m}^{2}(\bs{z}_{j})\sigma^{2}(\bs{z}_{j})
\!-\!\int_{\Omega}\sigma^{2}(\bs{z})B_{m}^{2}(\bs{z})d\bs{z}\right| \nonumber\\
&\quad=O\left(N^{-1/2} |\triangle|\right).
\label{EQ:sigma_integration}
\end{align}
\end{lemma}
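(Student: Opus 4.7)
My plan is to prove all four bounds by applying the standard midpoint-rule quadrature error estimate to smooth functions of small support, using three basic facts about Bernstein basis polynomials on a $\pi$-quasi-uniform triangulation guaranteed by Assumption (A5): (i) each $B_m$ is supported on a single triangle $T_m$ with area $O(|\triangle|^2)$ and perimeter $O(|\triangle|)$, (ii) $\|B_m\|_\infty = O(1)$, and (iii) the standard Markov inverse inequality on triangles gives $\|\nabla B_m\|_\infty = O(|\triangle|^{-1})$.

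The underlying quadrature lemma I would establish first reads as follows: for any $f\in C^1(\Omega)$ supported in a set $S\subset\Omega$,
\[
\Bigl|\tfrac{1}{N}\sum_{j=1}^N f(\bs{z}_j)-\int_\Omega f(\bs{z})\,d\bs{z}\Bigr|
\le C\bigl(\|\nabla f\|_\infty\,|S|+\|f\|_\infty\,|\partial S|\bigr)\,N^{-1/2}.
\]
This follows from the decomposition $\int_\Omega f-\tfrac{1}{N}\sum_j f(\bs{z}_j)=\sum_j\int_{P_j}\bigl(f(\bs{z})-f(\bs{z}_j)\bigr)\,d\bs{z}$, where $P_j$ is the pixel (area $1/N$, diameter $O(N^{-1/2})$) centered at $\bs{z}_j$. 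On interior pixels $P_j\subset S$ the mean-value theorem gives a per-pixel error $O(\|\nabla f\|_\infty N^{-3/2})$ and there are $O(|S|N)$ such pixels, contributing $O(\|\nabla f\|_\infty|S|N^{-1/2})$. On pixels meeting $\partial S$ each contributes at most $2\|f\|_\infty/N$, and a geometric counting argument shows there are $O(|\partial S|N^{1/2})$ of them.

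With the lemma in hand, the four displayed estimates follow by plugging in. For (A.1.2), take $f=B_m^k$, giving $\|f\|_\infty=O(1)$ and $\|\nabla f\|_\infty=O(|\triangle|^{-1})$, so the bound is $O\bigl((|\triangle|^{-1}\cdot|\triangle|^2+|\triangle|)N^{-1/2}\bigr)=O(|\triangle|N^{-1/2})$. The same holds for $f=B_m B_{m'}$ in (A.1.3), whose support is nonempty only when $T_m\cap T_{m'}\neq\emptyset$. For (A.1.5) the smoothness of $\sigma^2$ under Assumption (A4) preserves $\|\nabla(\sigma^2 B_m^2)\|_\infty=O(|\triangle|^{-1})$, again giving $O(|\triangle|N^{-1/2})$. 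For (A.1.4) I would apply the two-dimensional analogue of the lemma on $\Omega\times\Omega$ with product pixels (area $N^{-2}$, diameter $O(N^{-1/2})$) to $f(\bs{z},\bs{z}')=G_\eta(\bs{z},\bs{z}')B_m(\bs{z})B_{m'}(\bs{z}')$; here the support has four-dimensional volume $O(|\triangle|^4)$ with three-dimensional boundary measure $O(|\triangle|^3)$, and $\|\nabla f\|_\infty=O(|\triangle|^{-1})$ by Assumption (A4), producing $O(|\triangle|^{-1}\cdot|\triangle|^4\cdot N^{-1/2}+|\triangle|^3\cdot N^{-1/2})=O(|\triangle|^3 N^{-1/2})$. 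Since every bound is independent of $m,m'$, taking the maximum is harmless.

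The main obstacle is the careful bookkeeping of the boundary-pixel count: showing that the number of pixels intersecting $\partial T_m$ (or the boundary of $T_m\times T_{m'}$ in the two-dimensional case) is of the claimed order. This requires quasi-uniformity from Assumption (A5) together with the elementary geometric fact that a rectifiable curve of length $\ell$ meets $O(\ell N^{1/2}+1)$ cells of a uniform grid of spacing $N^{-1/2}$. Once this geometric input is pinned down, everything reduces to mechanical substitution of the support size and derivative bounds for the respective integrands.
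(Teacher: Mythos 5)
Your proposal is correct and follows essentially the same route as the paper: the paper likewise decomposes the Riemann-sum error over pixels $\mathcal{V}_j$, bounds the per-pixel oscillation by the modulus of continuity (i.e., the gradient bound $O(|\triangle|^{-1})$ times the pixel diameter $O(N^{-1/2})$), multiplies by the $O(N|\triangle|^2)$ pixels in the support (squared, for the $G_\eta$ double sum), and treats the uncovered/boundary region separately. Your packaging of this as a standalone quadrature lemma with an explicit $\|f\|_\infty|\partial S|N^{-1/2}$ boundary term is a clean (and if anything slightly more careful) organization of the same argument, so no substantive gap remains.
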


%%%%%%%%%%%%%%%%%%%%%%%%%%%%%%%%%%%%%%%%%%%%%%%%%%%%%%%%%%%%%
\begin{proof}Note that there are $d^{\ast}=(d+1)(d+2)/2$ Bernstein basis polynomials on each triangle and $\int_{\Omega}B_m^k (\bs{z})d\bs{z}=\int_{T_{\lceil m /d^{\ast}\rceil}}B_m^k (\bs{z})d\bs{z}$, for any $k\geq 1$.

For piecewise constant basis functions, we have $B_{m}(\bs{z})=I(\bs{z}\in T_{m})$, then
\[
\left|\frac{1}{N}\sum_{j=1}^{N}B_{m}^{k}(\bs{z}_{j})-\int_{\Omega}B_{m}^{k}(\bs{z})d\bs{z}\right|
=\left|\frac{1}{N}\sum_{j=1}^{N}I(\bs{z}_{j}\in T_{m})-A(T_m)\right|.
\] 
According to Assumption (A5), 
\[
\max_{m\in\mathcal{M}}\left|\frac{1}{N}\sum_{j=1}^{N}B_{m}^{k}(\bs{z}_{j})-\int_{\Omega}B_{m}^{k}(\bs{z})d\bs{z}\right|\leq CN^{-1/2}|\triangle|.
\]
For any $j=1,\ldots,N$, let $\mathcal{V}_j$ be the $j$th pixel, and it is clear that
\[
\left|\frac{1}{N}\sum_{j=1}^{N}B_{m}^{k}(\bs{z}_{j})-\int_{\Omega}B_{m}^{k}(\bs{z})d\bs{z}\right|\leq
\left|\sum_{j=1}^{N}\int_{\mathcal{V}_j}\{B_{m}^{k}(\bs{z}_{j})-B_{m}^{k}(\bs{z})\}d\bs{z}\right|
+\int_{\Omega\setminus\cup\mathcal{V}_j}B_{m}^{k}(\bs{z})d\bs{z}.
\]

If $d\geq 1$, by the properties of bivariate spline basis functions in \cite{Lai:Schumaker:07}, 
$\int_{\Omega\setminus\cup\mathcal{V}_j}B_{m}^{k}(\bs{z})d\bs{z}=O(N^{-1/2}|\triangle|)$, and
\begin{align*}
\left|\sum_{j=1}^{N}\int_{\mathcal{V}_j}\{B_{m}^{k}(\bs{z}_{j})-B_{m}^{k}(\bs{z})\}d\bs{z}\right|
&\leq \sum_{\{j:\bs{z}_{j}\in {T_{\lceil m /d^{\ast}\rceil}}\}} \int_{\mathcal{V}_j}|B_{m}^{k}(\bs{z}_{j})-B_{m}^{k}(\bs{z})|d\bs{z}\\
&\leq C(N|\triangle|^2)\times N^{-1}\times (N^{-1/2}|\triangle|^{-1})\leq CN^{-1/2}|\triangle|.
\end{align*}
Thus, (\ref{EQ:B_integration}) holds. The proof of (\ref{EQ:B_integration2}) is similar to the proof (\ref{EQ:B_integration}), thus omitted.

Next, for any $m, m^{\prime} \in \mathcal{M}$,
\begin{align*}
&\frac{1}{N^2}\sum_{j=1}^{N}\sum_{j^{\prime}=1}^{N}G_{\eta}(\bs{z}_{j},\bs{z}_{j^{\prime}})B_{m}(\bs{z}_{j})B_{m^{\prime}}(\bs{z}_{j^{\prime}})
-\int_{\Omega^{2}}G_{\eta}(\bs{z},\bs{z}^{\prime})B_{m}(\bs{z})B_{m^{\prime}}(\bs{z}^{\prime})d\bs{z}d\bs{z}^{\prime}\\
&= \sum_{j=1}^{N}\sum_{j^{\prime}=1}^{N} \int_{\mathcal{V}_j\times \mathcal{V}_{j^{\prime}}}\left\{G_{\eta}(\bs{z}_{j},\bs{z}_{j^{\prime}})B_{m}(\bs{z}_{j})B_{m^{\prime}}(\bs{z}_{j^{\prime}})
-G_{\eta}(\bs{z},\bs{z}^{\prime})B_{m}(\bs{z})B_{m^{\prime}}(\bs{z}^{\prime})\right\}d\bs{z}d\bs{z}^{\prime}\\
&~+\int_{\Omega^2\setminus \cup_{j,j^{\prime}} \mathcal{V}_j\times \mathcal{V}_{j^{\prime}}}\left\{G_{\eta}(\bs{z}_{j},\bs{z}_{j^{\prime}})B_{m}(\bs{z}_{j})B_{m^{\prime}}(\bs{z}_{j^{\prime}})
-G_{\eta}(\bs{z},\bs{z}^{\prime})B_{m}(\bs{z})B_{m^{\prime}}(\bs{z}^{\prime})\right\}d\bs{z}d\bs{z}^{\prime}.
\end{align*}
As $N\rightarrow \infty$,
\[
\int_{\Omega^2\setminus \cup_{j,j^{\prime}} \mathcal{V}_j\times \mathcal{V}_{j^{\prime}}}
\!\!\!\!\!\!\!\!\left\{G_{\eta}(\bs{z}_{j},\bs{z}_{j^{\prime}})B_{m}(\bs{z}_{j})B_{m^{\prime}}(\bs{z}_{j^{\prime}})
-G_{\eta}(\bs{z},\bs{z}^{\prime})B_{m}(\bs{z})B_{m^{\prime}}(\bs{z}^{\prime})\right\}d\bs{z}d\bs{z}^{\prime}=O\left(\frac{|\triangle|^{3}}{\sqrt{N}}\right).
\]
Notice that
\begin{align*}
&\left|\sum_{j=1}^{N}\sum_{j^{\prime}=1}^{N} \int_{\mathcal{V}_j\times \mathcal{V}_{j^{\prime}}}\left\{G_{\eta}(\bs{z}_{j},\bs{z}_{j^{\prime}})B_{m}(\bs{z}_{j})B_{m^{\prime}}(\bs{z}_{j^{\prime}})
-G_{\eta}(\bs{z},\bs{z}^{\prime})B_{m}(\bs{z})B_{m^{\prime}}(\bs{z}^{\prime})\right\}d\bs{z}d\bs{z}^{\prime}\right|\\
&\quad \leq \sum_{\{(j,j^{\prime}):\bs{z}_{j}\in {T_{\lceil m /d^{\ast}\rceil}}, \bs{z}_{j^{\prime}}\in {T_{\lceil m /d^{\ast}\rceil}}\}} \int_{\mathcal{V}_j\times \mathcal{V}_{j^{\prime}}}\omega_{jj^{\prime}}(G_{\eta}K_{m}, 2N^{-1/2})d\bs{z}d\bs{z}^{\prime},
\end{align*}
where $K_{m}(\bs{z},\bs{z}^{\prime})=B_{m}(\bs{z})B_{m}(\bs{z}^{\prime})$ and
\[
\omega_{jj^{\prime}}(g,\varrho)=\sup_{\substack{(\bs{z}_{1},\bs{z}_{1}^{\prime}), (\bs{z}_{2},\bs{z}_{2}^{\prime})\in \mathcal{V}_j\times \mathcal{V}_{j^{\prime}},\\ \|\bs{z}_{1}-\bs{z}_{2}\|^{2}+\|\bs{z}_{1}^{\prime}-\bs{z}_{2}^{\prime}\|^{2}=\varrho^{2}}}
|g(\bs{z}_1,\bs{z}_{1}^{\prime})
-g(\bs{z}_2,\bs{z}_{2}^{\prime})|
\]
is the modulus of continuity of $g$ on $\mathcal{V}_j\times \mathcal{V}_{j^{\prime}}$.
Therefore, by Assumption (A4), we have
\begin{align*}
&\left|\sum_{j=1}^{N}\sum_{j^{\prime}=1}^{N} \int_{\mathcal{V}_j\times \mathcal{V}_{j^{\prime}}}\left\{G_{\eta}(\bs{z}_{j},\bs{z}_{j^{\prime}})B_{m}(\bs{z}_{j})B_{m^{\prime}}(\bs{z}_{j^{\prime}})
-G_{\eta}(\bs{z},\bs{z}^{\prime})B_{m}(\bs{z})B_{m^{\prime}}(\bs{z}^{\prime})\right\}d\bs{z}d\bs{z}^{\prime}\right|\\
&\qquad \leq (N|\triangle|^2)^2\times N^{-2}\times (N^{-1/2}|\triangle|^{-1}) =O(N^{-1/2}|\triangle|^{3}).
\end{align*}
Thus, (\ref{EQ:G_integration}) follows.

Finally, note that
\begin{align*}
&\left|\frac{1}{N}
\sum_{j=1}^{N}B_{m}^{2}(\bs{z}_{j})\sigma^{2}(\bs{z}_{j})
-\int_{\Omega}\sigma^{2}(\bs{z})d\bs{z}\right|
\leq \left|\sum_{j=1}^{N}\int_{\mathcal{V}_j}\{B_{m}^{2}(\bs{z}_{j})\sigma^{2}(\bs{z}_j)
-B_{m}^{2}(\bs{z})\sigma^{2}(\bs{z})\}d\bs{z}\right|\\
&\qquad +\int_{\Omega\setminus\cup\mathcal{V}_j}|B_{m}^{2}(\bs{z}_{j})\sigma^{2}(\bs{z}_j)
-B_{m}^{2}(\bs{z})\sigma^{2}(\bs{z})|d\bs{z}.
\end{align*}
It is easy to see that $\int_{\Omega\setminus\cup\mathcal{V}_j}|B_{m}^{2}(\bs{z}_{j})\sigma^{2}(\bs{z}_j)
-B_{m}^{2}(\bs{z})\sigma^{2}(\bs{z})|d\bs{z}=O(N^{-1/2}|\triangle|)$. Denote $\omega_j(g,\varrho)=\sup_{\bs{z},\bs{z}^{\prime}\in \mathcal{V}_j, \|\bs{z}-\bs{z}^{\prime}\|=\varrho}|g(\bs{z})-g(\bs{z}^{\prime})|$ is the modulus of continuity of $g$ on the $j$th pixel $\mathcal{V}_j$, then by Assumption (A4), we have
\begin{align*}
&\left|\sum_{j=1}^{N}\int_{\mathcal{V}_j}\{B_{m}^{2}(\bs{z}_{j})\sigma^{2}(\bs{z}_j)
-B_{m}^{2}(\bs{z})\sigma^{2}(\bs{z})\}d\bs{z}\right| \leq \sum_{\{j:\bs{z}_{j}\in {T_{\lceil m /d^{\ast}\rceil}}\}} \int_{\mathcal{V}_j}\omega_j(B_{m}^{2}\sigma^2,2N^{-1/2})d\bs{z}\\
&\qquad \leq C(N|\triangle|^2)\times N^{-1}\times (N^{-1/2}|\triangle|^{-1})\leq CN^{-1/2}|\triangle|.
\end{align*}
We obtain (\ref{EQ:sigma_integration}).
\end{proof}

%%%%%%%%%%%%%%%%%%%%%%%%%%%%%%%%%%%%%%%%%%%%%%%%%%%%%%%%%%%%%
\begin{lemma}
\label{LEM:inner product}
For any $m\in\mathcal{M}$, $0 \leq \ell,\ell^{\prime}\leq p$, let $\Phi_{m,\ell,\ell^{\prime}}= E(X_{\ell}X_{\ell^{\prime}})\int_{\Omega}B_{m}^{2}(\bs{z})d\bs{z} $. Suppose Assumptions (A3) and (A5) hold, and $N^{1/2}|\triangle|\rightarrow \infty$ as $N\rightarrow \infty$, then with probability $1$, one has 
\[
\max_{m\in\mathcal{M}}\max_{0 \leq \ell,\ell^{\prime}\leq p}\left| \frac{1}{nN}\sum_{i=1}^{n}\sum_{j=1}^{N}B_{m}^{2}(\bs{z}_{j})X_{i\ell}X_{i\ell^{\prime}}
-\Phi_{m,\ell,\ell^{\prime}}\right|=O\left\{n^{-1/2}|\triangle|^{2}(\log n)^{1/2}+N^{-1/2}|\triangle|\right\}.
\]
\end{lemma}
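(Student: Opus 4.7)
The plan is to exploit the product structure of the summand, separate the deterministic pixel grid from the sample randomness in the $X_{i\ell}$'s, and control each factor independently. Since $\{\bs{z}_j\}_{j=1}^N$ is a fixed grid, the double sum factorizes as
\begin{equation*}
\frac{1}{nN}\sum_{i=1}^n\sum_{j=1}^N B_m^2(\bs{z}_j) X_{i\ell}X_{i\ell'} = \bar{B}_m\,\bar{X}_{\ell\ell'},\qquad \bar{B}_m = \frac{1}{N}\sum_{j=1}^N B_m^2(\bs{z}_j),\quad \bar{X}_{\ell\ell'} = \frac{1}{n}\sum_{i=1}^n X_{i\ell}X_{i\ell'}.
\end{equation*}
Writing $I_m = \int_\Omega B_m^2(\bs{z})\,d\bs{z}$ and $\mu_{\ell\ell'} = E(X_\ell X_{\ell'})$ so that $\Phi_{m,\ell,\ell'} = \mu_{\ell\ell'} I_m$, I would decompose
\begin{equation*}
\bar{B}_m\bar{X}_{\ell\ell'} - \mu_{\ell\ell'}I_m = (\bar{B}_m - I_m)\,\bar{X}_{\ell\ell'} + I_m\bigl(\bar{X}_{\ell\ell'} - \mu_{\ell\ell'}\bigr),
\end{equation*}
and handle the two pieces separately.

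For the first piece, Lemma A.4 (equation (\ref{EQ:B_integration}) with $k=2$) yields $\max_m |\bar{B}_m - I_m| = O(N^{-1/2}|\triangle|)$ deterministically, while Assumption (A3) ($E|X_\ell|^8 < \infty$) combined with the strong law gives $\max_{\ell,\ell'}|\bar{X}_{\ell\ell'}| = O(1)$ almost surely. Since $(\ell,\ell')$ ranges over a finite set, this contributes $O(N^{-1/2}|\triangle|)$ uniformly. For the second piece, Lemma A.1 applied to the single basis function $B_m$ (or, for the piecewise constant case, the direct identity $I_m = A(T_m)$) gives $\max_m I_m \leq C|\triangle|^2$, so it suffices to show
\begin{equation*}
\max_{0\le\ell,\ell'\le p}\bigl|\bar{X}_{\ell\ell'} - \mu_{\ell\ell'}\bigr| = O\bigl(n^{-1/2}(\log n)^{1/2}\bigr)\quad \text{a.s.}
\end{equation*}

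For this almost-sure rate I would use truncation followed by Bernstein's inequality and Borel--Cantelli. Specifically, split $X_{i\ell}X_{i\ell'} = W_{i,1} + W_{i,2}$, where $W_{i,1} = X_{i\ell}X_{i\ell'}\,\mathbf{1}\{|X_{i\ell}X_{i\ell'}| \le a_n\}$ with $a_n$ of order $n^{1/4}$. Cauchy--Schwarz gives $E|X_{i\ell}X_{i\ell'}|^4 \le (E|X_{i\ell}|^8 E|X_{i\ell'}|^8)^{1/2} < \infty$, so by Markov and Borel--Cantelli the untruncated remainder $n^{-1}\sum_i W_{i,2}$ is $o(n^{-1/2}(\log n)^{1/2})$ a.s., and $E W_{i,2}$ has the same order. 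For the truncated, centered sum, Bernstein's inequality combined with the finite fourth moment of $X_{i\ell}X_{i\ell'}$ yields an exponential tail bound that, together with Borel--Cantelli, produces the claimed $n^{-1/2}(\log n)^{1/2}$ rate. A union bound over the $(p+1)^2$ pairs $(\ell,\ell')$ preserves the rate, and multiplying by $\max_m I_m = O(|\triangle|^2)$ gives $O(n^{-1/2}|\triangle|^2(\log n)^{1/2})$ for the second piece. Adding the two contributions yields the lemma.

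The main obstacle will be executing the truncation and Borel--Cantelli argument under the relatively mild eighth-moment condition so as to obtain an \emph{almost-sure} (rather than merely in-probability) rate of order $n^{-1/2}(\log n)^{1/2}$; the calibration of the truncation level $a_n$ must simultaneously keep the Bernstein deviation term of the desired order and make the truncation error summable. The ancillary condition $N^{1/2}|\triangle|\to\infty$ does not enter the probabilistic arguments directly but serves only to keep the two rates $n^{-1/2}|\triangle|^2(\log n)^{1/2}$ and $N^{-1/2}|\triangle|$ of the same shape when they are combined later in the paper.
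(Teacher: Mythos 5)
Your proposal is correct in substance and uses the same probabilistic machinery as the paper (truncation, Bernstein's inequality, Borel--Cantelli, plus Lemma \ref{LEM:integration} for the Riemann-sum error), but your factorization $\frac{1}{nN}\sum_{i,j}B_m^2(\bs{z}_j)X_{i\ell}X_{i\ell'}=\bar B_m\,\bar X_{\ell\ell'}$ is a genuine, if small, structural improvement. The paper instead works with $\varsigma_{i,m}=N^{-1}\sum_j B_m^2(\bs{z}_j)X_{i\ell}X_{i\ell'}$, so its random summands carry the index $m$, and the final Borel--Cantelli step requires a union bound over all $m\in\mathcal{M}$ (cardinality $\asymp|\triangle|^{-2}\asymp n^\tau$), forcing the Bernstein tail to be driven down to order $n^{-2-\tau}$. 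In your version the random quantity $\bar X_{\ell\ell'}$ does not depend on $m$ at all, so only the finite union over the $(p+1)^2$ pairs $(\ell,\ell')$ is needed and a tail of order $n^{-1-\epsilon}$ suffices; the $m$-dependence is confined to the deterministic factors $\bar B_m$ and $I_m$, both handled by Lemma \ref{LEM:integration} and the bound $I_m\leq C|\triangle|^2$. The two decompositions are algebraically equivalent rearrangements of the same three terms, so both deliver the stated rate.

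One calibration in your sketch needs fixing: with $a_n$ of order exactly $n^{1/4}$, the best available moment for the product is the fourth ($E|X_\ell X_{\ell'}|^4\leq (E|X_\ell|^8E|X_{\ell'}|^8)^{1/2}$ under (A3)), and Markov gives $P\{|X_{i\ell}X_{i\ell'}|>a_i\}\leq C a_i^{-4}\asymp i^{-1}$, which is \emph{not} summable, so Borel--Cantelli does not kill the tail part. You must take $a_n=n^{\alpha}$ with $\alpha>1/4$ strictly (the paper uses $\alpha\in(1/3,1/2)$ with third moments); any such $\alpha<1/2$ still keeps the Bernstein term $a_n n^{-1/2}(\log n)^{1/2}$ bounded, so the desired rate survives. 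You flagged this calibration as the delicate point yourself, and the fix is routine.
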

%%%%%%%%%%%%%%%%%%%%%%%%%%%%%%%%%%%%%%%%%%%%%%%%%%%%%%%%%%%%%
\begin{proof}
Let $\varsigma_{i,m}\equiv\varsigma_{i,m,\ell,\ell^{\prime}}=\frac{1}{N}\sum_{j=1}^{N}B_{m}^{2}(\bs{z}_{j})
X_{i\ell}X_{i\ell^{\prime}}$. If $N^{1/2}|\triangle|\rightarrow \infty$ as $N\rightarrow \infty$, then by (\ref{EQ:B_integration}), we can show that 
$ E(\varsigma_{i,m})=\frac{1}{N}\sum_{j=1}^{N}B_{m}^{2}(\bs{z}_{j}) E(X_{\ell}X_{\ell^{\prime}}) \asymp  |\triangle|^2$, and
$ E\left(\varsigma_{i,m}\right)^{2}=\left\{\frac{1}{N}\sum_{j=1}^{N}B_{m}^{2}(\bs{z}_{j})\right\}^{2}
 E\left(X_{\ell}X_{\ell^{\prime}}\right)^{2} \asymp |\triangle|^{4}$.

Next define a sequence $D_{n}=n^{\alpha}$ with $\alpha\in(1/3,1/2)$. We make use of the following truncated and tail decomposition
$X_{i\ell\ell^{\prime}}=X_{i\ell}X_{i\ell^{\prime}}=X_{i\ell\ell^{\prime},1}^{D_{n}}+X_{i\ell\ell^{\prime},2}^{D_{n}}$,
where $X_{i\ell\ell^{\prime},1}^{D_{n}}=X_{i\ell}X_{i\ell^{\prime}}I\left\{\left\vert E(
X_{i\ell}X_{i\ell^{\prime}}\right\vert >D_{n}\right\} $, $X_{i\ell\ell^{\prime
},2}^{D_{n}}=X_{i\ell}X_{i\ell^{\prime}}I\left\{\left\vert X_{i\ell}X_{i\ell^{\prime
}}\right\vert \leq D_{n}\right\} $. Correspondingly the truncated and tail
parts of $\varsigma_{i,m}$ are $\varsigma
_{i,m,v}\equiv \varsigma
_{i,m,v,\ell,\ell^{\prime}}=\frac{1}{N}\sum_{j=1}^{N}B_{m}^{2}(\bs{z}_{j})X_{i\ell\ell^{\prime},v}^{D_{n}}$, $v=1,2$. According
to Assumption (A3), for\ any $\ell,\ell^{\prime}=0,\ldots,p$,
\begin{equation*}
\sum_{n=1}^{\infty}P\left\{\left\vert X_{n\ell}X_{n\ell^{\prime}}\right\vert
>D_{n}\right\} \leq \sum_{n=1}^{\infty}\frac{ E
\left\vert X_{n\ell}X_{n\ell^{\prime}}\right\vert^{3}}{D_{n}^{3}}%
\leq C_{b}\sum_{n=1}^{\infty}D_{n}^{-3}<\infty .
\end{equation*}%
By Borel-Cantelli Lemma, $\frac{1}{N}\sum_{j=1}^{N}B_{m}^{2}(\bs{z}_{j})X_{i\ell\ell^{\prime},1}^{D_{n}}=0$, almost surely.
So for any $k\geq 1$, $\sup_{m,\ell,\ell^{\prime}}\left\vert n^{-1}\sum_{i=1}^{n}\varsigma
_{i,m,1}\right\vert =O_{a.s.}(n^{-k})$. Since $N^{1/2}|\triangle|\rightarrow \infty$ as $N\rightarrow \infty$, 
\begin{align*}
| E(\varsigma_{i,m,1})| &=| E(X_{i\ell\ell^{\prime},1}^{D_{n}})| \left\{\frac{1}{N}\sum_{j=1}^{N}B_{m}^{2}(\bs{z}_{j})\right\} \\
&\leq D_{n}^{-2} E\left\vert X_{i\ell}X_{i\ell^{\prime}}\right\vert^{3}
\left\{\int_{\Omega}B_{m}^{2}(\bs{z}) d\bs{z}+O(N^{-1/2}|\triangle|)\right\} \leq C D_{n}^{-2}|\triangle|^2.
\end{align*}
Next, we consider the truncated part $\varsigma_{i,m,2}$. Define $\varsigma_{i,m,2}^{\ast}=\varsigma_{i,m,2}- E(\varsigma_{i,m,2}) $, then $ E\varsigma_{i,m,2}^{\ast}=0$, and
\[
 E(\varsigma_{i,m,2}^{\ast})^{2} = E(\varsigma_{i,m,2})^{2}-\left(
 E\varsigma_{i,m,2}\right)^{2}=\left\{\frac{1}{N}\sum_{j=1}^{N}B_{m}^{2}(\bs{z}_{j})\right\}^2
\left\{ E(X_{i\ell\ell^{\prime},2}^{D_{n}})^{2}-(EX_{i\ell\ell^{\prime},2}^{D_{n}})^2\right\}.
\]
Note that $ E(X_{i\ell\ell^{\prime},1}^{D_{n}})^2
\leq D_{n}^{-1} E\left\vert X_{i\ell}X_{i\ell^{\prime}}\right\vert^{3}\leq cD_{n}^{-1}$,
thus, $ E(X_{i\ell\ell^{\prime},2}^{D_{n}})^{2}=  E(X_{i\ell\ell^{\prime}})
^{2}- E(X_{i\ell\ell^{\prime},1}^{D_{n}})^{2} =  E(X_{i\ell\ell^{\prime}})^{2}-o(1)$.
Therefore, there exists $c_{\varsigma}$ such that for large $n$, we have $%
 E(\varsigma_{i,m,2}^{\ast})^{2}\geq c_{\varsigma}%
 E(X_{i\ell\ell^{\prime}})^{2}\times \left\{\frac{1}{N}\sum_{j=1}^{N}B_{m}^{2}(\bs{z}_{j})\right\}^2$.
Next for any $k>2$,
\begin{align*}
 E\left\vert \varsigma_{i,m,2}^{\ast}\right\vert^{k}&=
 E\left\vert \varsigma_{i,m,2}- E\left(
\varsigma_{i,m,2}\right) \right\vert^{k}\leq 2^{k-1}\left(
 E\left\vert \varsigma_{i,m,2}\right\vert^{k}+\left\vert
 E(\varsigma_{i,m,2}) \right\vert^{k}\right) \\
&=2^{k-1}\left\{ E\left\vert X_{i\ell\ell^{\prime},2}^{D_{n}}\right\vert^{k}+O(1)\right\}\left\{\frac{1}{N}
\sum_{j=1}^{N}B_{m}^{2}(\bs{z}_{j})\right\}^{k},
\end{align*}
then there exists $C_{\varsigma}>0$ such that for any $k>2$ and large $n$,
\begin{align*}
 E\left\vert \varsigma_{i,m,2}^{\ast}\right\vert^{k}&\leq
2^{k-1}\left\{D_{n}^{k-2} E\left(X_{i\ell\ell^{\prime
}}\right)^{2}+O(1)\right\}\left\{\frac{1}{N}\sum_{j=1}^{N}B_{m}^{2}(\bs{z}_{j})\right\}^{k}\\
&\leq 2^{k}D_{n}^{k-2}  E\left(\varsigma_{i,m,2}^{\ast
}\right)^{2} \left\{\frac{1}{N}\sum_{j=1}^{N}B_{m}^{2}(\bs{z}_{j})\right\}^{k-2}
\leq \left(C_{\varsigma}D_{n}|\triangle|^{2}\right)^{k-2}k!%
 E\left(\varsigma_{i,m,2}^{\ast}\right)^{2},
\end{align*}
which implies that $\left\{\varsigma_{i,m,2}^{\ast}\right\}_{i=1}^{n}$
satisfies Cram\'{e}r's condition with constant $C_{\varsigma}D_{n}|\triangle|^{2}$. Applying Bernstein's inequality to $\sum_{i=1}^{n}\varsigma_{i,m,2}^{\ast}$, for $k>2$ and any large enough $\delta>0$,
\begin{align*}
P&\left\{\left\vert \frac{1}{n}\sum_{i=1}^{n}\varsigma_{i,m,2}^{\ast}\right\vert \geq
\delta n^{-1/2}|\triangle|^{2}(\log n)^{1/2}\right\}
\leq 2\exp \left\{-\frac{\delta^{2}\log (n)}{4+2C_{\varsigma}D_{n}\delta (\log n)^{1/2}n^{-1/2}}\right\}.
\end{align*}
Assume that $|\triangle|^{-2}\asymp n^{\tau}$ for some $0<\tau<\infty$, we have
\begin{equation*}
\sum_{n=1}^{\infty}P\left\{\max_{\substack{m\in\mathcal{M}\\ 0 \leq \ell,\ell^{\prime}\leq p}} 
\left\vert \frac{1}{n}\sum_{i=1}^{n}\varsigma_{i,m,2}^{\ast
}\right\vert \geq \delta n^{-1/2}|\triangle|^{2}(\log n)^{1/2}\right\}
\leq 2\sum_{n=1}^{\infty}\sum_{m\in \mathcal{M}}\sum_{0\leq
\ell,\ell^{\prime}\leq p} n^{-2-\tau} <\infty.
\end{equation*}%
Thus, $\sup_{m,\ell,\ell^{\prime}}\left\vert n^{-1}\sum_{i=1}^{n}\varsigma
_{i,m,2}^{\ast}\right\vert =O_{a.s.}\left\{n^{-1/2}|\triangle|^{2}(\log n)^{1/2}\right\} $ as $n\rightarrow \infty $, by Borel-Cantelli Lemma. Furthermore,
\begin{align*}
&\max_{m,\ell,\ell^{\prime}}\left\vert n^{-1}\sum_{i=1}^{n}\varsigma_{i,m}- E\varsigma_{i,m}\right\vert
\leq \max_{m,\ell,\ell^{\prime}}\left\vert n^{-1}\sum_{i=1}^{n}\varsigma_{i,m,1}\right\vert
+\max_{m,\ell,\ell^{\prime}}\left\vert
n^{-1}\sum_{i=1}^{n}\varsigma_{i,m,2}^{\ast}\right\vert
+\max_{m,\ell,\ell^{\prime}}\left\vert  E\varsigma_{i,m,1}\right\vert \\
&~=O_{a.s.}(n^{-k}) +O_{a.s.}\left\{n^{-1/2}|\triangle|^{2}(\log n)^{1/2}\right\} +O
\left(D_{n}^{-2}|\triangle|^{2}\right) =O_{a.s.}\left\{n^{-1/2}|\triangle|^{2}(\log n)^{1/2}\right\} .
\end{align*}%
Finally, we notice that
\begin{align*}
\max_{\substack{m\in\mathcal{M}\\ 0 \leq \ell,\ell^{\prime}\leq p}} & \left| \frac{1}{nN}\sum_{i=1}^{n}\sum_{j=1}^{N}B_{m}^{2}(\bs{z}_{j})X_{i\ell}X_{i\ell^{\prime}}-\Phi_{m,\ell,\ell^{\prime}}\right| \\
=&\max_{\substack{m\in\mathcal{M}\\ 0 \leq \ell,\ell^{\prime}\leq p}} \left\vert  n^{-1}\sum_{i=1}^{n}\varsigma_{i,m}- E\varsigma_{i,m}\right\vert
+ \left\vert EX_{i\ell}X_{i\ell^{\prime}}\right\vert \max_{m\in\mathcal{M}} \left\vert  \frac{1}{N}\sum_{j=1}^{N}B_{m}^{2}(\bs{z}_{j})-\int_{\Omega}B_{m}^{2}(\bs{z})d\bs{z}\right\vert  \\
=&O_{a.s.}\left\{n^{-1/2}|\triangle|^{2}(\log n)^{1/2}\right\}+O(N^{-1/2}|\triangle|).
\end{align*}
We obtain the desired result.
\end{proof}

%%%%%%%%%%%%%%%%%%%%%%%%%%%%%%%%%%%%%%%%%%%%%%%%%%%%%%%%%%%%%
The following lemma provide the uniform convergence rate at which the empirical inner product in (\ref{DEF:empirical_product}) approximates the theoretical inner product in (\ref{DEF:theoretical_product}).
\begin{lemma}
\label{LEM:Rnorder-vec}
Let $g_{\ell}^{(1)}(\bs{z})=\sum_{m\in \mathcal{M}}c_{\ell m}^{(1)}B_{m}(\bs{z})$, $g_{\ell}^{(2)}(\bs{z})=\sum_{m\in \mathcal{M}}c_{\ell m}^{(2)}B_{m}(\bs{z})$ be any spline functions in $\mathcal{S}_{d}^{r}(\triangle)$. Denote  $\bs{g}(\bs{z})=(g_0(\bs{z}), \ldots, g_p(\bs{z}))^{\top}$ with $g_{\ell}\in \mathcal{S}_{d}^{r}(\triangle)$, $\ell=0,\ldots, p$. Suppose Assumptions (A3) and (A5) hold, and  $N^{1/2}|\triangle|\rightarrow \infty$ as $N\rightarrow \infty$, then
\[
R_{n,N}=\sup\limits_{\bs{g}^{(1)},\bs{g}^{(2)}\in \mathcal{S}_{d}^{r}(\triangle)}\left|
\frac{\left\langle \bs{g}^{(1)},\bs{g}^{(2)}\right\rangle_{n,N}-\left\langle \bs{g}^{(1)},\bs{g}^{(2)}\right\rangle}{\left\|
\bs{g}^{(1)}\right\|\left\|\bs{g}^{(2)}\right\|}\right| =O_{P}\{n^{-1/2}(\log n)^{1/2}+N^{-1/2}|\triangle|^{-1}\}.
\]
\end{lemma}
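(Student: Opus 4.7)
My plan is to reduce the spline-function supremum to a matrix-norm bound by expanding $\bs{g}^{(1)}, \bs{g}^{(2)}$ in the Bernstein basis, and then exploit the local support of $\{B_m\}$ together with an elementwise extension of Lemma A.4.

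First, write $g_\ell^{(v)}(\bs{z})=\sum_{m\in\mathcal{M}} c_{\ell m}^{(v)} B_m(\bs{z})$ for $v=1,2$, stack the coefficients into $\bs{c}^{(v)}=(c_{\ell m}^{(v)})_{\ell,m}$, and introduce the $(p+1)|\mathcal{M}|\times(p+1)|\mathcal{M}|$ matrices $\mathbf{M}_{n,N}$ and $\mathbf{M}$ whose $((\ell,m),(\ell',m'))$ entries are
$\frac{1}{nN}\sum_{i=1}^n\sum_{j=1}^N X_{i\ell}X_{i\ell'}B_m(\bs{z}_j)B_{m'}(\bs{z}_j)$ and $E(X_\ell X_{\ell'})\int_\Omega B_m(\bs{z})B_{m'}(\bs{z})d\bs{z}$, respectively. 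Then $\langle \bs{g}^{(1)},\bs{g}^{(2)}\rangle_{n,N}-\langle \bs{g}^{(1)},\bs{g}^{(2)}\rangle = \bs{c}^{(1)\top}(\mathbf{M}_{n,N}-\mathbf{M})\bs{c}^{(2)}$, which by Cauchy--Schwarz is bounded by $\|\bs{c}^{(1)}\|\,\|\bs{c}^{(2)}\|\,\|\mathbf{M}_{n,N}-\mathbf{M}\|_{\mathrm{op}}$. By the norm equivalence in Lemma A.1 applied coordinate-wise together with Lemma A.3, $\|\bs{c}^{(v)}\|^2\asymp|\triangle|^{-2}\sum_\ell \|g_\ell^{(v)}\|_{L_2}^2\asymp |\triangle|^{-2}\|\bs{g}^{(v)}\|^2$, so $\|\bs{c}^{(1)}\|\,\|\bs{c}^{(2)}\|\asymp |\triangle|^{-2}\|\bs{g}^{(1)}\|\,\|\bs{g}^{(2)}\|$. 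Hence it suffices to show $\|\mathbf{M}_{n,N}-\mathbf{M}\|_{\mathrm{op}}=O_P\{n^{-1/2}|\triangle|^2(\log n)^{1/2}+N^{-1/2}|\triangle|\}$.

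Second, I would exploit the locality of the Bernstein basis: $B_m(\bs{z})B_{m'}(\bs{z})\equiv 0$ unless $B_m$ and $B_{m'}$ are both supported on the same triangle $T\in\triangle$. Consequently, after a suitable reordering, both $\mathbf{M}_{n,N}$ and $\mathbf{M}$ are block diagonal with blocks indexed by the triangles of $\triangle$, each block of fixed size $(p+1)d^\ast$ with $d^\ast=(d+1)(d+2)/2$. Therefore,
\begin{equation*}
\|\mathbf{M}_{n,N}-\mathbf{M}\|_{\mathrm{op}}\le C(p,d)\,\max_{\substack{m,m'\in\mathcal{M}\\ 0\le \ell,\ell'\le p}}\left|\frac{1}{nN}\sum_{i=1}^n\sum_{j=1}^N X_{i\ell}X_{i\ell'}B_m(\bs{z}_j)B_{m'}(\bs{z}_j)-E(X_\ell X_{\ell'})\int_\Omega B_m B_{m'}\,d\bs{z}\right|.
\end{equation*}

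Third, I would establish the elementwise bound by mimicking the proof of Lemma A.4 with $B_m B_{m'}$ in place of $B_m^2$. Set $\varsigma_{i,m,m',\ell,\ell'}=\frac{1}{N}\sum_{j=1}^N B_m(\bs{z}_j)B_{m'}(\bs{z}_j)X_{i\ell}X_{i\ell'}$, truncate the $X$'s at $D_n=n^\alpha$ with $\alpha\in(1/3,1/2)$, verify Cram\'er's condition with constant of order $D_n|\triangle|^2$ (using that the support of $B_mB_{m'}$ still has area $\asymp |\triangle|^2$ and that $\int_\Omega B_m B_{m'}\,d\bs{z}=O(|\triangle|^2)$), and apply Bernstein's inequality together with a Borel--Cantelli union bound over the $O(|\triangle|^{-2})$ nonzero basis pairs. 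The deterministic integration error is handled by equation (A.6) of Lemma A.3, which supplies the $N^{-1/2}|\triangle|$ contribution. Assembling the stochastic and deterministic parts yields the same rate $O_{a.s.}\{n^{-1/2}|\triangle|^2(\log n)^{1/2}+N^{-1/2}|\triangle|\}$ as in Lemma A.4.

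Combining the above gives $R_{n,N}\le C|\triangle|^{-2}\|\mathbf{M}_{n,N}-\mathbf{M}\|_{\mathrm{op}}=O_P\{n^{-1/2}(\log n)^{1/2}+N^{-1/2}|\triangle|^{-1}\}$, which is the claimed bound. The main obstacle is the elementwise extension of Lemma A.4 to the cross terms $B_m B_{m'}$, because one has to redo the truncation/Bernstein argument carefully and verify that the bounds on moments and Cram\'er's constant remain at the same order $|\triangle|^2$; once this is in hand, the remainder is a direct application of norm equivalence and the sparsity of the Gram matrix.
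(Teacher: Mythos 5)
Your proposal is correct and follows essentially the same route as the paper: expand in the Bernstein basis, use the local support of the basis together with the norm equivalences of Lemmas A.1 and A.3 to reduce $R_{n,N}$ to $C|\triangle|^{-2}$ times the elementwise maximum over basis pairs $(m,m')$ and covariate pairs $(\ell,\ell')$, and then control that maximum by the truncation--Bernstein argument plus the Riemann-sum bound for $B_mB_{m'}$. Your operator-norm/block-diagonal packaging is a cosmetic variant of the paper's direct Cauchy--Schwarz bound over locally supported coefficient pairs, and your explicit extension of the concentration argument to the cross terms $B_mB_{m'}$ is precisely the step the paper handles by citing Lemma A.4 (stated there only for $m=m'$, though its proof carries over verbatim).
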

%%%%%%%%%%%%%%%%%%%%%%%%%%%%%%%%%%%%%%%%%%%%%%%%%%%%%%%%%%%%%
\begin{proof}
It is easy to see
\begin{eqnarray*}
\langle \bs{g}^{(1)},\bs{g}^{(2)}\rangle_{n,N}&=&\frac{1}{nN}\sum_{i=1}^{n}\sum_{j=1}^{N}\left\{\sum_{\ell =0}^{p}\sum_{m\in \mathcal{M}}c_{\ell m}^{(1)}X_{i\ell}B_{m}(\bs{z}_{j})\right\} \left\{\sum_{\ell ^\prime=0}^{p}\sum_{m^{\prime}\in \mathcal{M}}c_{\ell' m^\prime}^{(2)}X_{i\ell^\prime}B_{m^\prime}(\bs{z}_{j})\right\}\\
&=&\sum_{\ell,m}\sum_{\ell ^\prime,m^\prime}c_{\ell m}^{(1)}c_{\ell ^\prime m^\prime}^{(2)}\frac{1}{nN}\sum_{i=1}^{n}\sum_{j=1}^{N} X_{i\ell}X_{i\ell^\prime}B_{m}(\bs{z}_{j})B_{m^\prime}(\bs{z}_{j}).
\end{eqnarray*}
Note that $\|\bs{g}^{(r)}\|^{2}=\sum_{\ell,m}\sum_{\ell ^\prime,m^\prime}c_{\ell m}^{(r)}c_{\ell ^\prime m^\prime}^{(r)} E(X_{\ell}X_{\ell^{\prime}})\int_{\Omega}B_{m}(\bs{z})B_{m^\prime}(\bs{z})d\bs{z}$, $r=1,2$. It follows from Assumptions (A1), (A2), Lemmas \ref{LEM:normequity} and \ref{LEM:norm} that,
\begin{align*}
	c_{v}|\triangle|^{2}\sum_{\ell,m}\{c_{\ell m}^{(v)}\}^2 &\leq \Vert \bs{g}^{(v)}\Vert^{2} \leq C_{v}|\triangle|^{2}\sum_{\ell,m}\{c_{\ell m}^{(v)}\}^2,\\
	C_{1}|\triangle|^{2}\left[\sum_{\ell,m}\{c_{\ell m}^{(1)}\}^2\sum_{\ell ^\prime,m^\prime}\{c_{\ell ^\prime m^\prime}^{(2)}\}^2\right]^{1/2} & \leq \| \bs{g}^{(1)}\| \| \bs{g}^{(2)}\|
	\leq C_{2}|\triangle|^{2}\left[\sum_{\ell,m}\{c_{\ell m}^{(1)}\}^2\sum_{\ell ^\prime,m^\prime}\{c_{\ell ^\prime m^\prime}^{(2)}\}^2\right]^{1/2}.
\end{align*}

With the above preparation, we have
\begin{align}
	R_{n,N}\leq & \frac{\sum_{\ell, \ell', |m-m'| \leq (d+2)(d+1)/2} |c_{\ell m}^{(1)}c_{\ell ^\prime m^\prime}^{(2)}|}{C_{1}|\triangle|^{2}\left[\sum_{\ell,m}\{c_{\ell m}^{(1)}\}^2\sum_{\ell ^\prime,m^\prime}\{c_{\ell ^\prime m^\prime}^{(2)}\}^2\right]^{1/2}}\label{EQ:R_{n,N}} 
	\\ \nonumber
	&\times\max_{\substack{m, m'\in\mathcal{M}\\
	0 \leq \ell,\ell^{\prime}\leq p}} \left| \frac{1}{nN}\sum_{i=1}^{n}\sum_{j=1}^{N}B_{m}(\bs{z}_{j})B_{m'}(\bs{z}_{j})X_{i\ell}X_{i\ell^{\prime}}- E(X_{\ell}X_{\ell^{\prime}})\int_{\Omega}B_{m}(\bs{z})B_{m'}(\bs{z})d\bs{z}\right|\\
	\leq &\frac{C}{|\triangle|^{2}} \max_{\substack{m, m'\in\mathcal{M}\\
	0 \leq \ell,\ell^{\prime}\leq p}} \left| \frac{1}{nN}\sum_{i=1}^{n}\sum_{j=1}^{N}B_{m}(\bs{z}_{j})B_{m'}(\bs{z}_{j})X_{i\ell}X_{i\ell^{\prime}}- E(X_{\ell}X_{\ell^{\prime}})\int_{\Omega}B_{m}(\bs{z})B_{m'}(\bs{z})d\bs{z}\right|.
	\nonumber
	\end{align}
The desired result follows from (\ref{EQ:R_{n,N}}) and Lemma \ref{LEM:inner product}.	
\end{proof}

As a direct result of Lemma \ref{LEM:Rnorder-vec}, we can see that
\begin{eqnarray}
\sup_{\bs{g}\in \mathcal{S}_{d}^{r}(\triangle)}\left| \left. \left\| \bs{g}\right\|_{n,N}^{2}\right/ \Vert \bs{g}\Vert^{2}-1\right| =O_{P}\{n^{-1/2}(\log n)^{1/2}+N^{-1/2}|\triangle|^{-1}\}.  \label{EQ:normratio}
\end{eqnarray}

%%%%%%%%%%%%%%%%%%%%%%%%%%%%%%%%%%%%%%%%%%%%%%%%%%%%%%%%%%%%%
%%%%%%%%%%%%%%%%%%%%%%%%%%%%%%%%%%%%%%%%%%%%%%%%%%%%%%%%%%%%%
\vskip .10in \noindent \textbf{A.2. Uniform convergence of the unpenalized spline estimators} \vskip .10in
% \label{sec:B-2}

In this section, we consider the unpenalized spline smoothing approach.
The unpenalized bivariate spline estimator of $\bs{\beta}^{o}=(\beta_{0}^{o},\ldots,\beta_{p}^{o})^{\top}$ is defined as
\begin{equation}
\widetilde{\bs{\beta}}=(\widetilde{\beta}_{0},\ldots,\widetilde{\beta}_{p})^{\top}
=\argmin_{\bs{\beta}\in \mathcal{G}^{(p+1)}} \sum_{i=1}^{n}\sum_{j=1}^{N}\left\{Y_{i}(\bs{z}_{j})-\sum_{\ell =0}^{p}X_{i\ell}\beta_{\ell}(\bs{z}_{j})
\right\}^{2}.
\label{DEF:beta_tilde}
\end{equation}
Denote
\begin{align*}
\widetilde{\bs{\theta}}_{\mu}&=(\widetilde{\bs{\theta}}_{\mu,0}^{\top},\ldots,\widetilde{\bs{\theta}}_{\mu,p}^{\top})^{\top}
=\bs{\Gamma}_{n,0}^{-1}
\frac{1}{nN}\sum_{i=1}^{n}\sum_{j=1}^{N} \left\{\widetilde{\mathbf{X}}_{i}\otimes \widetilde{\mathbf{B}}(\bs{z}_{j})\right\}\widetilde{\mathbf{X}}_{i}^{\top}\bs{\beta}^{o}(\bs{z}_j)\notag\\
\widetilde{\bs{\theta}}_{\eta}&=(\widetilde{\bs{\theta}}_{\eta,0}^{\top},\ldots,\widetilde{\bs{\theta}}_{\eta,p}^{\top})^{\top}
=\bs{\Gamma}_{n,0}^{-1}
\frac{1}{nN}\sum_{i=1}^{n}\sum_{j=1}^{N} \left\{\widetilde{\mathbf{X}}_{i}\otimes \widetilde{\mathbf{B}}(\bs{z}_{j})\right\}\eta_{i}(\bs{z}_j),\notag\\
\widetilde{\bs{\theta}}_{\varepsilon}&=(\widetilde{\bs{\theta}}_{\varepsilon,0}^{\top},\ldots,
\widetilde{\bs{\theta}}_{\varepsilon,p}^{\top})^{\top}
=\bs{\Gamma}_{n,0}^{-1}
\frac{1}{nN}\sum_{i=1}^{n}\sum_{j=1}^{N} \left\{\widetilde{\mathbf{X}}_{i}\otimes \widetilde{\mathbf{B}}(\bs{z}_{j})\right\}\sigma(\bs{z}_j)\varepsilon_{ij},
\end{align*}
where
\begin{equation}
\bs{\Gamma}_{n,0}=\frac{1}{nN}\sum_{i=1}^{n}\sum_{j=1}^{N}(\widetilde{\mathbf{X}}_{i}\widetilde{\mathbf{X}}_{i}^{\top})\otimes
\{\widetilde{\mathbf{B}}(\bs{z}_{j})\widetilde{\mathbf{B}}^{\top}(\bs{z}_{j})\}.
\label{DEF:Gamma_0}
\end{equation}

%%%%%%%%%%%%%%%%%%%%%%%%%%%%%%%%%%%%%%%%%%%%%%%%%%%%%%%%%%%%%
\begin{lemma}
Under Assumptions (A3) and (A5), if $N^{1/2}|\triangle|\rightarrow \infty$ as $N\rightarrow \infty$, then there exist constants $0 < c_{\Gamma} < C_{\Gamma} < \infty$, such that with probability approaching 1, as $N\rightarrow \infty$, $n\rightarrow \infty$,
$c_{\Gamma}|\triangle|^{2} \leq \lambda_{\min}(\bs{\Gamma}_{n,0}) \leq
\lambda_{\max}(\bs{\Gamma}_{n,0}) \leq  C_{\Gamma}|\triangle|^{2}$,
where $\bs{\Gamma}_{n,0}$ is in (\ref{DEF:Gamma_0}).
\label{LEM:Gamma_0}
\end{lemma}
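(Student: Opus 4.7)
The plan is to express the quadratic form $\bs{\theta}^{\top}\bs{\Gamma}_{n,0}\bs{\theta}$ as a squared empirical norm of an appropriately defined vector of bivariate splines, and then chain together the norm equivalences already proved in the paper.

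First, I would fix an arbitrary vector $\bs{\theta}=(\bs{\theta}_{0}^{\top},\ldots,\bs{\theta}_{p}^{\top})^{\top}$ and exploit the Kronecker structure of $\bs{\Gamma}_{n,0}$. Writing $g_{\ell}(\bs{z})=\widetilde{\mathbf{B}}(\bs{z})^{\top}\bs{\theta}_{\ell}=\mathbf{B}(\bs{z})^{\top}\mathbf{Q}_{2}\bs{\theta}_{\ell}\in\mathcal{S}_{d}^{r}(\triangle)$ and $\bs{g}=(g_0,\ldots,g_p)^{\top}$, a direct computation gives
\[
\bs{\theta}^{\top}\bs{\Gamma}_{n,0}\bs{\theta}=\frac{1}{nN}\sum_{i=1}^{n}\sum_{j=1}^{N}\Bigl(\sum_{\ell=0}^{p}X_{i\ell}g_{\ell}(\bs{z}_{j})\Bigr)^{2}=\|\bs{g}\|_{n,N}^{2},
\]
which is the empirical norm defined in (\ref{DEF:empirical_product}). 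This identity is the bridge that converts the eigenvalue problem into a question about spline norms.

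Next I would apply the uniform norm-ratio estimate (\ref{EQ:normratio}), which follows from Lemma \ref{LEM:Rnorder-vec}, to conclude that, uniformly in $\bs{\theta}\neq \mathbf{0}$,
\[
\|\bs{g}\|_{n,N}^{2}=\|\bs{g}\|^{2}\bigl\{1+O_{P}(n^{-1/2}(\log n)^{1/2}+N^{-1/2}|\triangle|^{-1})\bigr\}.
\]
Under the condition $N^{1/2}|\triangle|\to\infty$ and $n\to\infty$, the remainder is $o_{P}(1)$, so with probability tending to one we have $\|\bs{g}\|_{n,N}^{2}\asymp\|\bs{g}\|^{2}$ uniformly in $\bs{\theta}$. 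Then Lemma \ref{LEM:norm} (which in turn rests on Assumption (A3) that the eigenvalues of $\bs{\Sigma}_{X}$ are bounded and bounded away from zero) gives $\|\bs{g}\|^{2}\asymp\sum_{\ell=0}^{p}\|g_{\ell}\|_{L_{2}}^{2}$.

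It then remains to translate each $\|g_{\ell}\|_{L_{2}}^{2}$ back to the coefficient norm. By Lemma \ref{LEM:normequity} under Assumption (A5) (the $\pi$-quasi-uniform triangulation), one has $\|g_{\ell}\|_{L_{2}}^{2}\asymp|\triangle|^{2}\|\mathbf{Q}_{2}\bs{\theta}_{\ell}\|^{2}$. Since $\mathbf{Q}_{2}$ has orthonormal columns, $\|\mathbf{Q}_{2}\bs{\theta}_{\ell}\|^{2}=\|\bs{\theta}_{\ell}\|^{2}$, and summing over $\ell$ yields $\sum_{\ell=0}^{p}\|g_{\ell}\|_{L_{2}}^{2}\asymp|\triangle|^{2}\|\bs{\theta}\|^{2}$. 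Combining the three equivalences, with probability tending to one there exist $0<c_{\Gamma}<C_{\Gamma}<\infty$ such that $c_{\Gamma}|\triangle|^{2}\|\bs{\theta}\|^{2}\le\bs{\theta}^{\top}\bs{\Gamma}_{n,0}\bs{\theta}\le C_{\Gamma}|\triangle|^{2}\|\bs{\theta}\|^{2}$ for every $\bs{\theta}$, which is precisely the stated spectral bound via the Rayleigh quotient characterization.

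The main obstacle is the uniformity in $\bs{\theta}$ required to pass from Lemma \ref{LEM:Rnorder-vec} (a statement about the supremum over spline pairs) to a deterministic-with-high-probability two-sided bound on the Rayleigh quotient; the condition $N^{1/2}|\triangle|\to\infty$ is exactly what guarantees the remainder in (\ref{EQ:normratio}) is $o_{P}(1)$, so all the pieces line up cleanly once this condition is invoked. Minor bookkeeping is needed to absorb the $\bs{\Sigma}_{X}$ constants from (A3) and the quasi-uniformity constants from (A5) into $c_{\Gamma}$ and $C_{\Gamma}$.
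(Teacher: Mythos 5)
Your proposal is correct and follows essentially the same route as the paper: the paper likewise rewrites $\bs{\theta}^{\top}\bs{\Gamma}_{n,0}\bs{\theta}$ as $\|\bs{g}_{\bs{\gamma}}\|_{n,N}^{2}$ with $\bs{\gamma}=\mathbf{Q}_{2}\bs{\theta}$, applies the uniform norm-ratio bound (\ref{EQ:normratio}) from Lemma \ref{LEM:Rnorder-vec}, and invokes the stability estimate of Lemma \ref{LEM:normequity} to reach $c|\triangle|^{2}\|\bs{\gamma}\|^{2}\le\bs{\theta}^{\top}\bs{\Gamma}_{n,0}\bs{\theta}\le C|\triangle|^{2}\|\bs{\gamma}\|^{2}$. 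Your additional remarks on the orthonormality of $\mathbf{Q}_{2}$ and on absorbing the $\bs{\Sigma}_{X}$ constants via Lemma \ref{LEM:norm} only make explicit steps the paper leaves implicit.
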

%%%%%%%%%%%%%%%%%%%%%%%%%%%%%%%%%%%%%%%%%%%%%%%%%%%%%%%%%%%%%
\begin{proof}
Note that for any vector $\bs{\theta}=(\bs{\theta}_0^{\top},\cdots,\bs{\theta}_p^{\top})^{\top}$ with $\bs{\gamma}_{\ell}=(\gamma_{\ell m},m\in\mathcal{M})^{\top}$,
\begin{align}
\bs{\theta}^{\top}\bs{\Gamma}_{n,0}\bs{\theta} %&=\frac{1}{nN}\bs{\theta}^{\top}\sum_{i=1}^{n}\sum_{j=1}^{N}(\widetilde{\mathbf{X}}_{i}
%\widetilde{\mathbf{X}}_{i}^{\top})\otimes \{\widetilde{\mathbf{B}}(\bs{z}_{j})\widetilde{\mathbf{B}}^{\top}(\bs{z}_{j})\}\bs{\theta}\notag \\
&=\frac{1}{nN}\bs{\gamma}^{\top}\sum_{i=1}^{n}\sum_{j=1}^{N}(\widetilde{\mathbf{X}}_{i}
\widetilde{\mathbf{X}}_{i}^{\top})\otimes \{\mathbf{B}(\bs{z}_{j})\mathbf{B}^{\top}(\bs{z}_{j})\}\bs{\gamma}
=\|\bs{g}_{\bs{\gamma}}\|_{n,N}^2, \label{EQ:g_gamma}
\end{align}
where $\bs{\gamma}=\mathbf{Q}_2\bs{\theta} $, and $\bs{g}_{\bs{\gamma}}=(g_{\bs{\gamma}_0}, \ldots, g_{\bs{\gamma}_p})^{\top}$ with $g_{\bs{\gamma}_\ell}=\sum_{m\in \mathcal{M}}\gamma_{\ell m}B_{m}$. By (\ref{EQ:normratio}), we have
\[
c (1-R_{n,N}) |\triangle|^{2}\Vert \bs{\gamma}\Vert^{2}   \leq (1-R_{n,N}) \Vert \bs{g}_{\bs{\gamma}}\Vert^{2}
\leq \Vert \bs{g}_{\bs{\gamma}}\Vert_{n,N}^{2} = (1+R_{n,N}) \Vert \bs{g}_{\bs{\gamma}}\Vert^{2}\leq C (1+R_{n,N})
|\triangle|^{2}\Vert \bs{\gamma}\Vert^{2},
\]
in which we have used the stability conditions in Lemma \ref{LEM:normequity}.
\end{proof}

Next, we consider the following decomposition $\widetilde{\bs{\beta}}(\bs{z})=\widetilde{\bs{\beta}}_{\mu}(\bs{z})
+\widetilde{\bs{\eta}}(\bs{z})+\widetilde{\bs{\varepsilon}}(\bs{z})$, where
\begin{align}
\widetilde{\bs{\beta}}_{\mu}(\bs{z})&=(\widetilde{\beta}_{\mu,0}(\bs{z}),\ldots,\widetilde{\beta}_{\mu,p}(\bs{z}))^{\top}
=\{\mathbf{I} \otimes \widetilde{\mathbf{B}}(\bs{z})\}^{\top}\widetilde{\bs{\theta}}_{\mu},\label{DEF:beta_tilde_mu}\\
\widetilde{\bs{\eta}}(\bs{z})&=(\widetilde{\eta}_0(\bs{z}), \ldots, \widetilde{\eta}_p(\bs{z}))^{\top}=\{\mathbf{I} \otimes \widetilde{\mathbf{B}}(\bs{z})\}^{\top}\widetilde{\bs{\theta}}_{\eta},
\label{DEF:eta_tilde}\\
\widetilde{\bs{\varepsilon}}(\bs{z})&=(\widetilde{\varepsilon}_0(\bs{z}), \ldots, \widetilde{\varepsilon}_p(\bs{z}))^{\top}=\{\mathbf{I} \otimes \widetilde{\mathbf{B}}(\bs{z})\}^{\top}\widetilde{\bs{\theta}}_{\varepsilon}. \label{DEF:eps_tilde}
\end{align}

%%%%%%%%%%%%%%%%%%%%%%%%%%%%%%%%%%%%%%%%%%%%%%%%%%%%%%%%%%%%%
\begin{lemma}
\label{LEM:error-unif-order}
Under Assumptions (A2)--(A5) and (C1), if $N^{1/2}|\triangle|\rightarrow \infty$ as $N\rightarrow \infty$, $\|\sum_{k=1}^{\infty}\lambda_{k}^{1/2}\psi_{k}\|_{\infty} < \infty$ and  $n^{1/(4+\delta_2)} \ll n^{1/2}N^{-1/2}|\triangle|^{-1}$ for some $\delta_2$, then for $\widetilde{\bs{\eta}}$ and $\widetilde{\bs{\varepsilon}}$ in (\ref{DEF:eta_tilde}) and (\ref{DEF:eps_tilde}), $\|\widetilde{\bs{\eta}}\|_{\infty}=O_{P}\{n^{-1/2}(\log n)^{1/2}\}$ and $
\|\widetilde{\bs{\varepsilon}}\|_{\infty}=O_{P}\{(nN)^{-1/2}(\log n)^{1/2}|\triangle|^{-1}\}$.
\end{lemma}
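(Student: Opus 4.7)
The plan is to exploit the tensor product structure of $\bs{\Gamma}_{n,0}$ to obtain explicit representations of $\widetilde{\bs{\eta}}$ and $\widetilde{\bs{\varepsilon}}$ as averages of conditionally independent contributions, bound their pointwise variances, and then upgrade to a uniform bound via Bernstein's inequality combined with the finite dimensionality of the spline space.

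The starting observation is the factorization $\bs{\Gamma}_{n,0} = \widetilde{\bs{\Sigma}}_X\otimes\widetilde{\mathbf{Q}}_N$, where $\widetilde{\bs{\Sigma}}_X = n^{-1}\sum_i \widetilde{\mathbf{X}}_i\widetilde{\mathbf{X}}_i^{\top}$ and $\widetilde{\mathbf{Q}}_N = N^{-1}\sum_j \widetilde{\mathbf{B}}(\bs{z}_j)\widetilde{\mathbf{B}}^{\top}(\bs{z}_j)$, so $\bs{\Gamma}_{n,0}^{-1} = \widetilde{\bs{\Sigma}}_X^{-1}\otimes\widetilde{\mathbf{Q}}_N^{-1}$. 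Setting $\alpha_{i,\ell} = \mathbf{e}_\ell^{\top}\widetilde{\bs{\Sigma}}_X^{-1}\widetilde{\mathbf{X}}_i$ and letting $K_N(\bs{z},\bs{z}') = \widetilde{\mathbf{B}}(\bs{z})^{\top}\widetilde{\mathbf{Q}}_N^{-1}\widetilde{\mathbf{B}}(\bs{z}')$ be the reproducing kernel of the spline space under the empirical inner product, the definitions (\ref{DEF:eta_tilde})--(\ref{DEF:eps_tilde}) simplify to
\[
\widetilde{\eta}_\ell(\bs{z}) = \frac{1}{n}\sum_{i=1}^n \alpha_{i,\ell}\,\Pi_N\eta_i(\bs{z}), \qquad \widetilde{\varepsilon}_\ell(\bs{z}) = \frac{1}{nN}\sum_{i,j} \alpha_{i,\ell}\,K_N(\bs{z},\bs{z}_j)\,\sigma(\bs{z}_j)\,\varepsilon_{ij},
\]
where $\Pi_N\eta_i(\bs{z}) = N^{-1}\sum_j K_N(\bs{z},\bs{z}_j)\eta_i(\bs{z}_j)$ is the discrete spline projection of $\eta_i$. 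Assumption (A3) ensures $(\widetilde{\bs{\Sigma}}_X^{-1})_{\ell\ell} = O_P(1)$, and Lemma \ref{LEM:Gamma_0} applied to $\widetilde{\mathbf{Q}}_N$ gives $\sup_{\bs{z}}K_N(\bs{z},\bs{z}) = O_P(|\triangle|^{-2})$.

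Conditioning on the design $\mathbb{X}$, the pointwise variances can then be computed directly: by independence of the $\eta_i$ across $i$, $\mathrm{Var}\{\widetilde{\eta}_\ell(\bs{z})|\mathbb{X}\} = (\widetilde{\bs{\Sigma}}_X^{-1})_{\ell\ell}\,E\{\Pi_N\eta_1(\bs{z})^2\}/n = O_P(n^{-1})$ uniformly in $\bs{z}$ (using the uniform bound $G_\eta(\bs{z},\bs{z})\le C_G$ from Assumption (A4) and sup-norm stability of the spline projection, together with Lemma \ref{LEM:integration}); by independence of the $\varepsilon_{ij}$ and the reproducing identity $N^{-1}\sum_j K_N^2(\bs{z},\bs{z}_j) = K_N(\bs{z},\bs{z})$ combined with boundedness of $\sigma$, $\mathrm{Var}\{\widetilde{\varepsilon}_\ell(\bs{z})|\mathbb{X}\} = O_P(n^{-1}N^{-1}|\triangle|^{-2})$ uniformly in $\bs{z}$. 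To upgrade from pointwise to sup-norm, I will use the Bernstein-basis bound $\|s\|_\infty \le C_d\max_m|\gamma_m|$ for $s = \sum_m \gamma_m B_m$ (since the Bernstein basis is a partition of unity on each triangle), reducing the problem to controlling the maxima $\max_{\ell,m}|\widetilde{\theta}_{\eta,\ell,m}|$ and $\max_{\ell,m}|\widetilde{\theta}_{\varepsilon,\ell,m}|$. Applying Bernstein's inequality coefficient-by-coefficient and a union bound over the $O(|\triangle|^{-2})$ coordinates produces the $\sqrt{\log n}$ factor (under (A5) and $N^{1/2}|\triangle|\to\infty$, $\log|\triangle|^{-2} = O(\log n)$), yielding the claimed rates after taking the max over $\ell = 0,\ldots,p$.

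The main technical obstacle will be verifying Cram\'er's moment condition for Bernstein's inequality, since the summands in both representations are unbounded. Mirroring the truncation strategy used in Lemma \ref{LEM:inner product}, I plan to truncate the summands at a level $D_n = n^{\alpha}$ and bound the tail contribution through the $(4+\delta_j)$-moment hypotheses on $\xi_{ik}$ and $\varepsilon_{ij}$ in Assumption (C1); the hypothesis $\|\sum_k\lambda_k^{1/2}\psi_k\|_\infty<\infty$ serves the corresponding role for the $\eta$ piece by giving a uniform-in-$\bs{z}$ envelope for $\eta_i(\bs{z})$ against which truncation can be applied. The growth hypothesis $n^{1/(4+\delta_2)}\ll n^{1/2}N^{-1/2}|\triangle|^{-1}$ in the lemma is precisely the balance between truncation level and the Bernstein deviation for the $\varepsilon$-term, exactly analogous to the tuning of $D_n$ in the proof of Lemma \ref{LEM:inner product}.
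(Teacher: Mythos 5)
Your proposal is correct and follows essentially the same route as the paper's proof: both reduce the sup-norm to a maximum over the $O(|\triangle|^{-2})$ spline coefficients (your Kronecker factorization of $\bs{\Gamma}_{n,0}$ is just a more explicit form of the eigenvalue bounds in Lemma \ref{LEM:Gamma_0}), compute second moments of order $n^{-1}|\triangle|^{4}$ and $(nN)^{-1}|\triangle|^{2}$ respectively, and then combine truncation at $D_n=n^{\alpha}$ under Assumption (C1) with Bernstein's inequality and a union bound to produce the $(\log n)^{1/2}$ factor. The paper organizes this through the variables $\varpi_{i,m}$ rather than a reproducing-kernel representation, but the substance is identical.
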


\begin{proof}
Note that for any $\ell=0,1,\ldots,p$, $\widetilde{\eta}_{\ell}(\bs{z})=\sum_{m \in \mathcal{M}}\widetilde{\theta}_{\eta,\ell,m}\widetilde{B}_{m}(\bs{z})$ for some coefficients $\widetilde{\theta}_{\eta,\ell,m}$, so the order of $\widetilde{\eta}_{\ell}(\bs{z})$ is related to that of $\widetilde{\theta}_{\eta,\ell,m}$. In fact
\begin{align*}
\|\widetilde{\bs{\eta}}\|_{\infty}&=\max_{0\leq\ell\leq p} \|\widetilde{\eta}_{\ell}\|_{\infty}
\leq C_{\eta} \|\widetilde{\bs{\theta}}_{\eta,\ell}\|_{\infty}
=\left\|(\mathbf{e}_{\ell}\otimes \mathbf{1})^{\top} \bs{\Gamma}_{n,0}^{-1}
\left[\frac{1}{nN} \sum_{i=1}^{n}\sum_{j=1}^{N}\left\{\widetilde{\mathbf{X}}_{i}\otimes \widetilde{\mathbf{B}}(\bs{z}_{j})\right\}\eta_{i}(\bs{z}_j)\right] \right\|_{\infty},
\end{align*}
where $\widetilde{\bs{\theta}}_{\eta}=(\widetilde{\theta}_{\eta,\ell,m})_{m \in \widetilde{\mathcal{M}}}$ with $\widetilde{\mathcal{M}}$ being an index set of the transformed Bernstein basis polynomials $\widetilde{B}_{m}(\bs{z})$ and $\bs{\Gamma}_{n,0}$ is the symmetric positive definite matrix defined in (\ref{DEF:Gamma_0}). Thus, by Lemma \ref{LEM:Gamma_0},
\[
\left\|\widetilde{\bs{\eta}}\right\|_{\infty} \leq C|\triangle|^{-2} \max_{0\leq\ell\leq p}\max_{m \in \widetilde{\mathcal{M}}}\left| \frac{1}{nN} \sum_{i=1}^{n}\sum_{j=1}^{N} X_{i\ell}
\eta_{i}(\bs{z}_j)\widetilde{B}_{m}(\bs{z}_j)\right|,
\]
almost surely. Next, we show that with probability $1$, 
\begin{equation}
\max_{0\leq\ell\leq p}\max_{m \in \widetilde{\mathcal{M}}}\left| \frac{1}{nN} \sum_{i=1}^{n}\sum_{j=1}^{N} X_{i\ell}\eta_{i}(\bs{z}_j)\widetilde{B}_{m}(\bs{z}_j)\right| =O\left\{n^{-1/2}|\triangle|^{2}(\log n)^{1/2}\right\}.
\label{EQ:BEsupnorm}
\end{equation}
To prove (\ref{EQ:BEsupnorm}), let
$\varpi_{i}=\varpi_{i,m}=\sum_{k=1}^{\infty}\lambda_k^{1/2}X_{i\ell}\xi_{ik}\frac{1}{N}\sum_{j=1}^{N} \widetilde{B}_{m}(\bs{z}_j)\psi_{k}(\bs{z}_{j})$, where $ E(\varpi_{i})=0$ and
\begin{align*}
 E(\varpi_{i}^2)&=\frac{ E(X_{i\ell}^{2})}{N^2}\sum_{j=1}^{N} \sum_{j^{\prime}=1}^{N} \widetilde{B}_{m}(\bs{z}_j)\widetilde{B}_{m}(\bs{z}_{j^\prime})
G_{\eta}(\bs{z}_j,\bs{z}_{j^\prime}) \\
& \asymp
\int_{\Omega^2}G_{\eta}(\bs{z},\bs{z}^{\prime})B_m(\bs{z})B_{m}(\bs{z}^{\prime})
d\bs{z}d\bs{z}^{\prime} \asymp |\triangle|^4.
\end{align*}

We decompose the random variable $\varpi_{i}$ into a tail part and a truncated part,
\begin{align*}
\varpi_{i,1}^{D_{n}}&=\sum_{k=1}^{\infty}\lambda_k^{1/2}\left\{\frac{1}{N}\sum_{j=1}^{N}
\widetilde{B}_{m}(\bs{z}_j)\psi_{k}(\bs{z}_{j})\right\} X_{i\ell}\xi_{ik}I\left\{ \left|X_{i\ell}\xi_{ik}\right|
>D_{n}\right\},\\
\varpi_{i,2}^{D_{n}}&=\sum_{k=1}^{\infty}\lambda_k^{1/2}\left\{\frac{1}{N}\sum_{j=1}^{N}
\widetilde{B}_{m}(\bs{z}_j)\psi_{k}(\bs{z}_{j})\right\} X_{i\ell}\xi_{ik}I\left\{ \left|X_{i\ell}\xi_{ik}\right| \leq D_{n}\right\} -\mu_{i}^{D_{n}}, \\
\mu_{i}^{D_{n}}&=
\sum_{k=1}^{\infty}\lambda_k^{1/2}\left\{\frac{1}{N}\sum_{j=1}^{N}\widetilde{B}_{m}(\bs{z}_j)
\psi_{k}(\bs{z}_{j})\right\}  E\left[X_{i\ell}\xi_{ik}I\left\{ \left| X_{i\ell}\xi_{ik}\right| \leq D_{n}\right\} \right],
\end{align*}
where $%
D_{n}=n^{\alpha }\left(1/(4+\delta_1) < \alpha < 1/2\right)$. At first, we show that tail part vanishes almost surely. Note that, for any $k \geq 1$,
\begin{equation*}
\sum_{n=1}^{\infty }P\left\{ \left| X_{n\ell}\xi_{nk}\right| >D_{n}\right\} \leq
\sum_{n=1}^{\infty }\frac{ E\left|X_{n\ell}\xi_{nk} \right| ^{4+\delta_{1}}}{%
D_{n}^{4+\delta_1}}\leq \upsilon _{\delta_1 }\sum_{n=1}^{\infty }D_{n}^{-(4+\delta_{1})}<\infty .  
\label{keyH}
\end{equation*}
By the Borel-Cantelli's lemma,  we can show that $ E
\left| \frac{1}{n}\sum_{i=1}^{n}\varpi_{i,1}^{D_{n}} \right| =O\left( n^{-r}\right)$, for any $r>0$. As $ E (\varpi_{i}) =0$, then it is straightforward to verify that $\mu_{i}^{D_{n}}=- E(\varpi_{i,1}^{D_{n}})=O(D_n^{-2}|\triangle|^2)$.

Next, notice that $ E(\varpi_{i,2}^{D_{n}})=0$. Then, 
$\mathrm{Var}(\varpi_{i,2}^{D_{n}})= E(\varpi_{i}^{2}) -  E(\varpi_{i,1}^{D_{n}})^2 -( \mu^{D_{n}})^{2}
\asymp |\triangle|^4$. Also, we have, for any $r\geq 3$,
\begin{align*}
 E|\varpi_{i,2}^{D_{n}}|^r&= E\left|\sum_{k=1}^{\infty}\lambda_k^{1/2}\frac{1}{N}
\sum_{j=1}^{N}\widetilde{B}_{m}(\bs{z}_j)\psi_{k}(\bs{z}_{j})
\left[X_{i\ell}\xi_{ik}I\left\{ \left|X_{i\ell}\xi_{ik}\right|
\leq D_{n}\right\} \right]
-\mu _{i}^{D_{n}}\right|^r\\
&\leq 2^{r-1}\Bigg[ E\left|\sum_{k=1}^{\infty}\lambda_r^{1/2}
\frac{1}{N}\sum_{j=1}^{N}\widetilde{B}_{m}(\bs{z}_j)\psi_{k}(\bs{z}_{j})
X_{i\ell}\xi_{ik}I\left\{ \left|X_{i\ell}\xi_{ik}\right| \leq D_{n}\right\} \right|^{r}
+\left( \mu _{i}^{D_{n}}\right)^{r}\Bigg]  \\
&\leq \left\{2D_n
\frac{1}{N}\sum_{j=1}^{N}\widetilde{B}_{m}(\bs{z}_j)
\sum_{k=1}^{\infty}\lambda_{k}^{1/2}\psi_{k}(\bs{z}_{j})\right\}^{r-2}
 E|\varpi_{i,2}^{D_{n}}|^2 \leq  (CD_n |\triangle|^2)^{r-2} E|\varpi_{i,2}^{D_{n}}|^2.
\end{align*}
Thus, $ E\left|\varpi_{i,2}/n\right|^{r}\leq \{Cn^{-1}D_{n}|\triangle|^{2}\}
^{r-2}r! E(\varpi_{i,2}^{2}/n^2)<\infty $ with the Cramer constant $c^{*}=Cn^{-1}D_{n}|\triangle|^{2}$. By the Bernstein inequality, for any large enough $\delta >0$,
\[
P\left\{\left| \frac{1}{n}\sum_{i=1}^{n}\varpi_{i}^{D_{n}}\right| \geq \delta n^{-1/2}|\triangle|^{2}(\log n)^{1/2}\right\}
\leq 2\exp \left\{\frac{-\delta^{2}\log n}{4c+2\delta CD_n (\log n)^{1/2} n^{-1/2}} \right\} \leq
2n^{-3}.
\]
Hence,
\[
\sum_{n=1}^{\infty}P\left\{ \max_{0\leq\ell\leq p} \max_{m \in \mathcal{M}}\left| \frac{1}{n}\sum_{i=1}^{n}\varpi_{i} \right| \geq \delta n^{-1/2}|\triangle|^{2}(\log n)^{1/2}\right\} \leq C |\triangle|^{-2} \sum_{n=1}^{\infty}n^{-3}<\infty
\]
for such $\delta >0$. Thus, Borel-Cantelli's lemma implies that $\left\|\widetilde{\bs{\eta}} \right\|_{\infty}=O_{P}\{n^{-1/2}(\log n)^{1/2}\}$. The result of $\left\|\widetilde{\bs{\varepsilon}} \right\|_{\infty}=O_{P}\{(nN)^{-1/2}(\log n)^{1/2}|\triangle|^{-1}\}$ can be established similarly, thus omitted.
\end{proof}

For $\widetilde{\bs{\beta}}(\bs{z})$ defined in (\ref{DEF:beta_tilde}), Theorem \ref{THM:bias-rate} below provides its uniform convergence rate to $\bs{\beta}^{o}$.

\begin{theorem}
\label{THM:bias-rate}
Under Assumptions (A1)--(A6), for $\widetilde{\bs{\beta}}(\bs{z})$ defined in (\ref{DEF:beta_tilde}), $\|\widetilde{\bs{\beta}}-\bs{\beta}^{o}\|_{\infty}=O_{P}\{|\triangle|^{d+1}\|\bs{\beta}^{o}\|_{d+1,\infty}+n^{-1/2}(\log n)^{1/2}\}$.
\end{theorem}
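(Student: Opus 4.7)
\noindent\textbf{Proof proposal for Theorem \ref{THM:bias-rate}.}

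The plan is to exploit the additive decomposition already introduced in (\ref{DEF:beta_tilde_mu})--(\ref{DEF:eps_tilde}),
\[
\widetilde{\bs{\beta}}(\bs{z})-\bs{\beta}^{o}(\bs{z})
=\bigl\{\widetilde{\bs{\beta}}_{\mu}(\bs{z})-\bs{\beta}^{o}(\bs{z})\bigr\}
+\widetilde{\bs{\eta}}(\bs{z})+\widetilde{\bs{\varepsilon}}(\bs{z}),
\]
and to control each piece separately under the supremum norm. The two stochastic summands are free: Lemma \ref{LEM:error-unif-order} gives $\|\widetilde{\bs{\eta}}\|_{\infty}=O_{P}\{n^{-1/2}(\log n)^{1/2}\}$ and $\|\widetilde{\bs{\varepsilon}}\|_{\infty}=O_{P}\{(nN)^{-1/2}(\log n)^{1/2}|\triangle|^{-1}\}$. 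Under Assumption (A6) we have $N^{1/2}|\underline{\triangle}|\to\infty$, so the $\widetilde{\bs{\varepsilon}}$ rate is absorbed into the $\widetilde{\bs{\eta}}$ rate, leaving $n^{-1/2}(\log n)^{1/2}$ as the contribution of the error processes.

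For the deterministic term $\widetilde{\bs{\beta}}_{\mu}-\bs{\beta}^{o}$ I would introduce the quasi-interpolant $\bs{\beta}^{\ast}\in\mathcal{G}^{(p+1)}$ supplied by Lemma \ref{LEM:appord}, which satisfies $\|\bs{\beta}^{\ast}-\bs{\beta}^{o}\|_{\infty}\leq C|\overline{\triangle}|^{d+1}\|\bs{\beta}^{o}\|_{d+1,\infty}$. Writing $\widetilde{\bs{\beta}}_{\mu}-\bs{\beta}^{o}=(\widetilde{\bs{\beta}}_{\mu}-\bs{\beta}^{\ast})+(\bs{\beta}^{\ast}-\bs{\beta}^{o})$, the second summand already has the advertised rate. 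Since $\bs{\beta}^{\ast}$ lies in the spline space, the empirical $L^{2}$ projection formula reproduces it, so
\[
\widetilde{\bs{\beta}}_{\mu}(\bs{z})-\bs{\beta}^{\ast}(\bs{z})
=\{\mathbf{I}\otimes\widetilde{\mathbf{B}}(\bs{z})\}^{\top}\bs{\Gamma}_{n,0}^{-1}
\frac{1}{nN}\sum_{i=1}^{n}\sum_{j=1}^{N}\bigl\{\widetilde{\mathbf{X}}_{i}\otimes\widetilde{\mathbf{B}}(\bs{z}_{j})\bigr\}
\widetilde{\mathbf{X}}_{i}^{\top}\bigl\{\bs{\beta}^{o}-\bs{\beta}^{\ast}\bigr\}(\bs{z}_{j}).
\]
I would then bound this residual projection in sup-norm by combining (i) Lemma \ref{LEM:Gamma_0}, which gives $\lambda_{\min}(\bs{\Gamma}_{n,0})\asymp|\triangle|^{2}$ with probability tending to one, (ii) the stability estimate of Lemma \ref{LEM:normequity}, which translates control of the coefficient vector into control of $\|\cdot\|_{L_{2}}$, (iii) the global-to-local norm equivalence $\|g\|_{\infty}\lesssim|\triangle|^{-1}\|g\|_{L_{2}}$ for functions in $\mathcal{S}_{d}^{r}(\triangle)$ (a direct consequence of Lemma \ref{LEM:normequity} together with $\#\mathcal{M}\asymp|\triangle|^{-2}$), and (iv) Lemma \ref{LEM:Rnorder-vec} and Assumption (A3) to compare empirical and theoretical quadratic forms in $\widetilde{\mathbf{X}}_{i}$. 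The upshot is a bound $\|\widetilde{\bs{\beta}}_{\mu}-\bs{\beta}^{\ast}\|_{\infty}=O_{P}(\|\bs{\beta}^{o}-\bs{\beta}^{\ast}\|_{\infty})$, which combined with Lemma \ref{LEM:appord} yields the required $|\overline{\triangle}|^{d+1}\|\bs{\beta}^{o}\|_{d+1,\infty}$ order.

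The main obstacle is step (iii) above: upgrading a natural $L_{2}$-type bound for the empirical projection to a uniform bound, because the coefficient vector $\widetilde{\bs{\theta}}_{\mu}-\bs{\theta}^{\ast}$ is only implicitly defined and its $\ell_{\infty}$ norm is not immediately proportional to $\|\bs{\beta}^{o}-\bs{\beta}^{\ast}\|_{\infty}$. I would handle this by inserting the Bernstein-basis stability from Lemma \ref{LEM:normequity} and a uniform version of Lemma \ref{LEM:Rnorder-vec} (with $\bs{g}^{(2)}$ replaced by the evaluation functional $\{\mathbf{I}\otimes\widetilde{\mathbf{B}}(\bs{z})\}$), so that the bound becomes uniform in $\bs{z}\in\Omega$. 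The remaining stochastic fluctuation in $\bs{\Gamma}_{n,0}$ is absorbed into the $O_{P}$ statement via the event on which Lemma \ref{LEM:Gamma_0} applies, after which combining the three pieces gives the stated rate $|\triangle|^{d+1}\|\bs{\beta}^{o}\|_{d+1,\infty}+n^{-1/2}(\log n)^{1/2}$.
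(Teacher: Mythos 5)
Your overall architecture coincides with the paper's: the same decomposition $\widetilde{\bs{\beta}}-\bs{\beta}^{o}=(\widetilde{\bs{\beta}}_{\mu}-\bs{\beta}^{o})+\widetilde{\bs{\eta}}+\widetilde{\bs{\varepsilon}}$, the same appeal to Lemma \ref{LEM:error-unif-order} for the stochastic terms (and your observation that $(nN)^{-1/2}|\triangle|^{-1}(\log n)^{1/2}=o_P\{n^{-1/2}(\log n)^{1/2}\}$ under $N^{1/2}|\underline{\triangle}|\to\infty$ is correct), and the same insertion of the best spline approximant $\bs{\beta}^{\ast}$ from Lemma \ref{LEM:appord}. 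Where you diverge is the bound $\|\widetilde{\bs{\beta}}_{\mu}-\bs{\beta}^{\ast}\|_{\infty}=O_P(\|\bs{\beta}^{o}-\bs{\beta}^{\ast}\|_{\infty})$: the paper simply cites \cite{Lai:Wang:13} for the sup-norm stability of the discrete least-squares projection, whereas you attempt to derive it from Lemmas \ref{LEM:normequity}, \ref{LEM:Gamma_0} and \ref{LEM:Rnorder-vec}.

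That derivation, as sketched, does not close, and the obstacle you flag in your last paragraph is exactly where it fails. The chain you propose — bound the coefficient vector $\widetilde{\bs{\theta}}_{\mu}-\bs{\theta}^{\ast}$ through $\lambda_{\min}(\bs{\Gamma}_{n,0})\asymp|\triangle|^{2}$, convert to $\|\cdot\|_{L_2}$ via Lemma \ref{LEM:normequity}, then pass to $\|\cdot\|_{\infty}$ via the inverse estimate $\|g\|_{\infty}\lesssim|\triangle|^{-1}\|g\|_{L_2}$ — produces $\|\widetilde{\bs{\beta}}_{\mu}-\bs{\beta}^{\ast}\|_{\infty}=O_P(|\triangle|^{-1}\|\bs{\beta}^{o}-\bs{\beta}^{\ast}\|_{\infty})=O_P(|\triangle|^{d}\|\bs{\beta}^{o}\|_{d+1,\infty})$, one power of $|\triangle|$ short of the theorem. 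A uniform version of Lemma \ref{LEM:Rnorder-vec} does not recover the lost factor, because the loss comes from using an $\ell_2$ (eigenvalue) bound on $\bs{\Gamma}_{n,0}^{-1}$ together with an inverse estimate, not from the empirical-versus-theoretical inner-product comparison. The ingredient actually needed is an $\ell_\infty\!\to\!\ell_\infty$ operator bound $\|\bs{\Gamma}_{n,0}^{-1}\|_{\infty}=O(|\triangle|^{-2})$, which follows from the exponential off-diagonal decay of the inverse of the locally supported Bernstein Gram matrix (de Boor's lemma, as developed in \cite{Lai:Wang:13}); combined with $N^{-1}\sum_j B_m(\bs{z}_j)=O(|\triangle|^{2})$ and the partition-of-unity property of the Bernstein basis, this gives $\|\widetilde{\bs{\theta}}_{\mu}-\bs{\theta}^{\ast}\|_{\infty}\le C\|\bs{\beta}^{o}-\bs{\beta}^{\ast}\|_{\infty}$ directly, without ever passing through $L_2$. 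You should either cite that stability result, as the paper does, or supply the off-diagonal decay argument explicitly; the lemmas internal to this paper are not sufficient on their own.
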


\begin{proof}
Note that $\|\widetilde{\bs{\beta}}-\bs{\beta}^{o}\|_{\infty}\leq
\|\widetilde{\bs{\beta}}_{\mu}-\bs{\beta}^{o}\|_{\infty}
+\|\widetilde{\bs{\eta}}\|_{\infty}
+\|\widetilde{\bs{\varepsilon}}\|_{\infty}$, where
\[
\widetilde{\bs{\beta}}_{\mu}=\argmin_{\bs{g}\in \mathcal{G}^{(p+1)}}\sum_{i=1}^{n}\sum_{j=1}^{N}\left\{\sum_{\ell =0}^{p}X_{i\ell}(\beta_{\ell}^{o}-g_{\ell})(\bs{z}_{j})\right\}^{2}.
\]
Let $\bs{\beta}^{\ast}=(\beta_{0}^{\ast},\ldots, \beta_{p}^{\ast})^{\ast}\in \mathcal{G}^{(p+1)}$, where $\beta_{\ell}^{\ast}$'s are the best approximation to $\beta_{\ell}^{o}$'s with the approximation rate $\|\beta_{\ell}^{\ast}-\beta_{\ell}^{o}\|_{\infty}\leq C
|\triangle|^{d+1}\|\bs{\beta}^{o}\|_{d+1,\infty}$ for any $\ell=0,\ldots,p$. By \cite{Lai:Wang:13}, 
\begin{equation}
\|\widetilde{\bs{\beta}}_{\mu}-\bs{\beta}^{o}\|_{\infty}\leq
\|\widetilde{\bs{\beta}}_{\mu}-\bs{\beta}^{\ast}\|_{\infty} + \|\bs{\beta}^{\ast}-\bs{\beta}^{o}\|_{\infty}
\leq C
|\triangle|^{d+1}\|\bs{\beta}^{o}\|_{d+1,\infty}.
\label{EQ:beta_tilde_bias}
\end{equation}
The desired result follows from Lemma \ref{LEM:error-unif-order}.
\end{proof}

%%%%%%%%%%%%%%%%%%%%%%%%%%%%%%%%%%%%%%%%%%%%%%%%%%%%%%%%%%%%%
%%%%%%%%%%%%%%%%%%%%%%%%%%%%%%%%%%%%%%%%%%%%%%%%%%%%%%%%%%%%%
\vskip .10in \noindent \textbf{A.3. Asymptotic properties of penalized spline estimators} \vskip .10in
% \subsection{Asymptotic properties of penalized spline estimators} 
% \label{sec:A-3}

Let $\widetilde{\mathbf{B}}(\bs{z})=\mathbf{Q}_2^{\top}\mathbf{B}(\bs{z})$, then for $\mathbb{U}=\mathbf{X}\otimes (\mathbf{B}\mathbf{Q}_{2})$ defined in Section 2.2, we have
\[
\mathbb{U}^{\top}=(\widetilde{\mathbf{X}}_1\otimes\widetilde{\mathbf{B}}(\bs{z}_{1}),\ldots,\widetilde{\mathbf{X}}_1\otimes\widetilde{\mathbf{B}}(\bs{z}_{N}),\ldots,\widetilde{\mathbf{X}}_n\otimes\widetilde{\mathbf{B}}(\bs{z}_{1}),\ldots,\widetilde{\mathbf{X}}_n\otimes\widetilde{\mathbf{B}}(\bs{z}_{N})),
\]
and 
$\mathbb{U}^{\top}\mathbb{U}=\sum_{i=1}^{n}\sum_{j=1}^{N}(\widetilde{\mathbf{X}}_{i}\widetilde{\mathbf{X}}_{i}^{\top})\otimes
\{\widetilde{\mathbf{B}}(\bs{z}_{j})\widetilde{\mathbf{B}}^{\top}(\bs{z}_{j})\}$, $
\mathbb{U}^{\top}\mathbb{Y}=\sum_{i=1}^{n}\sum_{j=1}^{N} \{\widetilde{\mathbf{X}}_{i}\otimes \widetilde{\mathbf{B}}(\bs{z}_{j})\}Y_{ij}$. Let
\begin{equation}
\bs{\Gamma}_{n,\rho}=\frac{1}{nN}\sum_{i=1}^{n}\sum_{j=1}^{N}(\widetilde{\mathbf{X}}_{i}\widetilde{\mathbf{X}}_{i}^{\top})\otimes
\{\widetilde{\mathbf{B}}(\bs{z}_{j})\widetilde{\mathbf{B}}^{\top}(\bs{z}_{j})\}
+\frac{\rho_{n}}{nN} \mathbf{I}_{p} \otimes \mathbf{Q}_2^{\top} [\langle B_{m},B_{m^{\prime}}\rangle_{\mathcal{E}}]_{m,m^{\prime}\in \mathcal{M}}\mathbf{Q}_2
\label{DEF:Gamma_rho},
\end{equation}
which is a symmetric positive definite matrix.

Next, we define
\begin{align*}
\widehat{\bs{\theta}}_{\mu}&=(\widehat{\bs{\theta}}_{\mu,0}^{\top},\ldots,\widehat{\bs{\theta}}_{\mu,p}^{\top})^{\top}
=\bs{\Gamma}_{n,\rho}^{-1}\frac{1}{nN}\sum_{i=1}^{n}\sum_{j=1}^{N} \left\{\widetilde{\mathbf{X}}_{i}\otimes \widetilde{\mathbf{B}}(\bs{z}_{j})\right\} \widetilde{\mathbf{X}}_{i}^{\top}\bs{\beta}^{o}(\bs{z}_j), \notag\\
\widehat{\bs{\theta}}_{\eta}&=(\widehat{\bs{\theta}}_{\eta,0}^{\top},\ldots,\widehat{\bs{\theta}}_{\eta,p}^{\top})^{\top}
=\bs{\Gamma}_{n,\rho}^{-1}
\frac{1}{nN}\sum_{i=1}^{n}\sum_{j=1}^{N} \left\{\widetilde{\mathbf{X}}_{i}\otimes \widetilde{\mathbf{B}}(\bs{z}_{j})\right\}\sum_{k=1}^{\infty}\lambda_k^{1/2}\xi_{ik}\psi_{k}(\bs{z}_j),\notag\\
\widehat{\bs{\theta}}_{\varepsilon}&=(\widehat{\bs{\theta}}_{\varepsilon,0}^{\top},\ldots,
\widehat{\bs{\theta}}_{\varepsilon,p}^{\top})^{\top}
=\bs{\Gamma}_{n,\rho}^{-1}
\frac{1}{nN}\sum_{i=1}^{n}\sum_{j=1}^{N} \left\{\widetilde{\mathbf{X}}_{i}\otimes \widetilde{\mathbf{B}}(\bs{z}_{j})\right\}\sigma(\bs{z}_j)\varepsilon_{ij}.
\label{DEF:theta-eta-eps}
\end{align*}

Note that, for any $\ell=0,\ldots,p$, the penalized bivariate spline estimator $\widehat{\beta}_{\ell}$ can be written as:
\begin{equation}
\widehat{\beta}_{\ell}(\bs{z})=\widehat{\beta}_{\mu,\ell}(\bs{z})+\widehat{\eta}_{\ell}(\bs{z})
+\widehat{\varepsilon}_{\ell}(\bs{z}),
\label{EQ:decompose2}
\end{equation}
where
\begin{equation*}
\widehat{\beta}_{\mu,\ell}(\bs{z})=\widetilde{\mathbf{B}}(\bs{z})^{\top}\widehat{\bs{\theta}}_{\mu,\ell},~~
\widehat{\eta}_{\ell}(\bs{z})=\widetilde{\mathbf{B}}(\bs{z})^{\top}\widehat{\bs{\theta}}_{\eta,\ell},~~
\widehat{\varepsilon}_{\ell}(\bs{z})=\widetilde{\mathbf{B}}(\bs{z})^{\top}\widehat{\bs{\theta}}_{\varepsilon,\ell},
\label{DEF:pls_estimator}
\end{equation*}
Therefore, we have
\begin{equation}
\widehat{\beta}_{\ell}(\bs{z})-\beta^{o}_{\ell}(\bs{z})=\widehat{\beta}_{\mu,\ell}(\bs{z})-\beta^{o}_{\ell}(\bs{z})
+\widehat{\eta}_{\ell}(\bs{z})+\widehat{\varepsilon}_{\ell}(\bs{z}).
\label{EQ:decompose3}
\end{equation}

%%%%%%%%%%%%%%%%%%%%%%%%%%%%%%%%%%%%%%%%%%%%%%%%%%%%%%%%%%%%%
\begin{lemma}
\label{LEM:Gamma_rho}
Under Assumptions (A3)--(A5), if $N^{1/2}|\triangle|\rightarrow \infty$ as $N\rightarrow \infty$, then there exist constants $0 < c_{\Gamma} < C_{\Gamma} < \infty$, such that with probability approaching 1 as $N\rightarrow \infty$ and $n\rightarrow \infty$,
$c_{\Gamma}|\triangle|^{2} \leq \lambda_{\min}(\bs{\Gamma}_{n,\rho}) \leq
\lambda_{\max}(\bs{\Gamma}_{n,\rho}) \leq  C_{\Gamma}\left(|\triangle|^{2}+\frac{\rho_{n}}{nN|\triangle|^{2}}\right)$.
\end{lemma}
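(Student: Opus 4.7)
The plan is to exploit the additive structure
$\bs{\Gamma}_{n,\rho}=\bs{\Gamma}_{n,0}+\frac{\rho_n}{nN}\mathbf{M}$, where $\mathbf{M}=\mathbf{I}_p\otimes\mathbf{Q}_2^{\top}[\langle B_m,B_{m'}\rangle_{\mathcal{E}}]_{m,m'\in\mathcal{M}}\mathbf{Q}_2$ is symmetric positive semi-definite, and then reduce everything to (i) Lemma \ref{LEM:Gamma_0} (bounds on $\bs{\Gamma}_{n,0}$) together with (ii) a Markov/inverse-type estimate on the penalty block. For the lower bound, since $\mathbf{M}\succeq \mathbf{0}$, we immediately obtain $\lambda_{\min}(\bs{\Gamma}_{n,\rho})\ge \lambda_{\min}(\bs{\Gamma}_{n,0})\ge c_{\Gamma}|\triangle|^2$ with probability approaching $1$, by Lemma \ref{LEM:Gamma_0}. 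This is the easy half.

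For the upper bound, Weyl's inequality gives
$\lambda_{\max}(\bs{\Gamma}_{n,\rho})\le \lambda_{\max}(\bs{\Gamma}_{n,0})+\frac{\rho_n}{nN}\lambda_{\max}(\mathbf{M})$.
The first summand is bounded by $C_{\Gamma}|\triangle|^2$ with probability approaching $1$ by Lemma \ref{LEM:Gamma_0}. For the penalty piece, I note that since $\mathbf{Q}_2$ is a submatrix of an orthogonal matrix with orthonormal columns, $\lambda_{\max}(\mathbf{I}_p\otimes\mathbf{Q}_2^{\top}\mathbf{P}\mathbf{Q}_2)\le \lambda_{\max}(\mathbf{P})$, where $\mathbf{P}=[\langle B_m,B_{m'}\rangle_{\mathcal{E}}]_{m,m'\in\mathcal{M}}$. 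Thus it suffices to prove $\lambda_{\max}(\mathbf{P})\le C|\triangle|^{-2}$ under the quasi-uniform assumption (A5).

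To establish $\lambda_{\max}(\mathbf{P})\le C|\triangle|^{-2}$, I would invoke the standard Markov/inverse inequality for bivariate polynomials on $\pi$-quasi-uniform triangulations (see, e.g., Theorem 2.6 of \cite{Lai:Schumaker:07}): for any $s=\sum_{m\in\mathcal{M}}\gamma_m B_m\in\mathcal{S}_d^r(\triangle)$ and any bi-integer $(i,j)$ with $i+j=2$, one has $\|\nabla_{z_1}^i\nabla_{z_2}^j s\|_{L_2(T)}^2\le C|T|^{-4}\|s\|_{L_2(T)}^2$ on each triangle $T$. Summing over $T\in\triangle$ and combining with the stability estimate of Lemma \ref{LEM:normequity}, I obtain
\[
\bs{\gamma}^{\top}\mathbf{P}\bs{\gamma}=\mathcal{E}\!\left(\textstyle\sum_{m}\gamma_m B_m\right)\le C|\triangle|^{-4}\Bigl\|\textstyle\sum_m\gamma_m B_m\Bigr\|_{L_2}^2\le C|\triangle|^{-2}\|\bs{\gamma}\|^2,
\]
which gives $\lambda_{\max}(\mathbf{P})\le C|\triangle|^{-2}$ as desired. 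Substituting back yields the claimed bound
$\lambda_{\max}(\bs{\Gamma}_{n,\rho})\le C_{\Gamma}\bigl(|\triangle|^2+\tfrac{\rho_n}{nN|\triangle|^2}\bigr)$.

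The main technical obstacle is the second-derivative inverse inequality giving $\lambda_{\max}(\mathbf{P})=O(|\triangle|^{-2})$: one must carefully track how the $|T|^{-4}$ scaling from differentiation combines with the $|T|^{2}$ scaling from Bernstein-basis stability (Lemma \ref{LEM:normequity}) to produce the net factor $|\triangle|^{-2}$, and verify that quasi-uniformity (A5) makes these local triangle-wise estimates translate into a global bound with a constant depending only on $d,r,\pi$. All other steps—the orthogonality argument for $\mathbf{Q}_2$, Weyl's inequality, and invoking Lemma \ref{LEM:Gamma_0}—are routine once this inverse estimate is in hand.
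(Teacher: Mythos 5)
Your proposal is correct and follows essentially the same route as the paper: the lower bound comes from the positive semi-definiteness of the penalty block together with the bound $\lambda_{\min}(\bs{\Gamma}_{n,0})\ge c_{\Gamma}|\triangle|^{2}$, and the upper bound comes from combining the Markov (inverse) inequality $\mathcal{E}(s)\le C|\triangle|^{-4}\|s\|_{L_2}^{2}$ with the Bernstein-basis stability of Lemma \ref{LEM:normequity} to get $\bs{\gamma}^{\top}\mathbf{P}\bs{\gamma}\le C|\triangle|^{-2}\|\bs{\gamma}\|^{2}$, exactly as the paper does via the quadratic form $\bs{\theta}^{\top}\bs{\Gamma}_{n,\rho}\bs{\theta}$. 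Your phrasing through Weyl's inequality and the compression $\mathbf{Q}_{2}^{\top}\mathbf{P}\mathbf{Q}_{2}$ is just a matrix-level restatement of the paper's argument with $\bs{\gamma}=\mathbf{Q}_{2}\bs{\theta}$, so there is no substantive difference.
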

%%%%%%%%%%%%%%%%%%%%%%%%%%%%%%%%%%%%%%%%%%%%%%%%%%%%%%%%%%%%%
\begin{proof}
By (\ref{EQ:g_gamma}), it is easy to see that, for any vector $\bs{\theta}=(\bs{\theta}_0^{\top},\cdots,\bs{\theta}_p^{\top})^{\top}$,
\[
\bs{\theta}^{\top}\bs{\Gamma}_{n,\rho}\bs{\theta} =\|\bs{g}_{\bs{\gamma}}\|_{n,N}^2
+\frac{\rho_n}{nN}\sum_{\ell =0}^{p} \bs{\gamma}_{\ell}^{\top}[\langle B_{m},B_{m^{\prime}}\rangle_{\mathcal{E}}]_{m,m^{\prime}\in \mathcal{M}} \bs{\gamma}_{\ell},
\]
where $\bs{\gamma}=(\bs{\gamma}_0,\ldots, \bs{\gamma}_p)^{\top}=\mathbf{Q}_2\bs{\theta} $ with $\bs{\gamma}_{\ell}=(\gamma_{\ell m},m\in\mathcal{M})^{\top}$.
Using the Markov's inequality in the supplement of \cite{Lai:Wang:13} and Lemma \ref{LEM:normequity}, we have
\[
\sum_{\ell =0}^{p}\left\|\sum_{m\in \mathcal{M}}\gamma_{\ell m}B_{m}\right\|_{\mathcal{E}}^2\leq \frac{C}{|\triangle|^{4}}\sum_{\ell =0}^{p}\left\Vert \sum_{m\in \mathcal{M}}\gamma_{\ell m}B_{m}\right\Vert_{L _2} ^{2}\leq \frac{C}{|\triangle|^{2}}\Vert\bs{\gamma}\Vert ^{2}.
\]
Thus, the largest eigenvalue of the matrix $\bs{\Gamma}_{n,\rho}$ in (\ref{DEF:Gamma_rho}) satisfies that
$\lambda_{\max}(\bs{\Gamma}_{n,\rho}) \leq C\left\{(1+R_{n,N})
|\triangle|^{2}+(nN|\triangle|^{2})^{-1}\rho_{n}\right\}$. 
Thus, we have with probability approaching 1,
$\lambda_{\max}(\bs{\Gamma}_{n,\rho})\leq C_{\Gamma} \left\{|\triangle|^{2}+(nN|\triangle|^{2})^{-1}\rho_{n}\right\}$
for some positive constant $C_{\Gamma}$. On the other hand, we use Lemma \ref{LEM:normequity} and equation (\ref{EQ:normratio}) to have
$\Vert \bs{g}_{\bs{\gamma}}\Vert_{n,N}^{2} =(1-R_{n,N})\Vert\bs{g}_{\bs{\gamma}}\Vert^{2} \geq c (1-R_{n,N})
|\triangle|^{2}\Vert \bs{\gamma}\Vert^{2}$.

Therefore, $\lambda_{\min}(\bs{\Gamma}_{n,\rho}) \geq c(1-R_{n,N})
|\triangle|^{2}=c_{\Gamma}|\triangle|^{2}$.
\end{proof}

%%%%%%%%%%%%%%%%%%%%%%%%%%%%%%%%%%%%%%%%%%%%%%%%%%%%%%%%%%%%%
%%%%%%%%%%%%%%%%%%%%%%%%%%%%%%%%%%%%%%%%%%%%%%%%%%%%%%%%%%%%%
%\subsection{Rates of Convergence}

%Lemma \ref{LEM:uniformbiasrate} below gives the uniform convergence rate of $\widehat{\beta}_{\mu,\ell}(\bs{z})$ to $\beta^{o}_{\ell}(\bs{z})$.

%%%%%%%%%%%%%%%%%%%%%%%%%%%%%%%%%%%%%%%%%%%%%%%%%%%%%%%%%%%%%
\begin{lemma}
\label{LEM:uniformbiasrate}
Under Assumptions (A1), (A3) and (A5), if $N^{1/2}|\triangle| \to \infty$, one has
$\|\widehat{\bs{\beta}}_{\mu}-\bs{\beta}^{o}\|_{\infty}=O_{P}\left\{
\frac{\rho_{n}}{nN|\triangle|^{3}}\|\bs{\beta}^{o}\|_{2,\infty}
+\left(1+\frac{\rho_{n}}{nN|\triangle|^{5}}\right)
|\triangle|^{d +1}\|\bs{\beta}^{o}\|_{d+1,\infty}
\right\}$.
\end{lemma}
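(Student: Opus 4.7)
The plan is to use the unpenalized spline estimator $\widetilde{\bs{\beta}}_\mu$ defined in (\ref{DEF:beta_tilde_mu}) as a pivot and analyse the penalty correction through a resolvent identity. Write
\[
\widehat{\bs{\beta}}_\mu - \bs{\beta}^o = (\widehat{\bs{\beta}}_\mu - \widetilde{\bs{\beta}}_\mu) + (\widetilde{\bs{\beta}}_\mu - \bs{\beta}^o).
\]
The second summand is exactly what inequality (\ref{EQ:beta_tilde_bias}) in the proof of Theorem \ref{THM:bias-rate} controls: deterministically on the high-probability event of Lemma \ref{LEM:Gamma_0}, it is at most $C|\triangle|^{d+1}\|\bs{\beta}^o\|_{d+1,\infty}$, which is absorbed by the ``$1$'' in the multiplier $(1+\rho_n/(nN|\triangle|^5))$. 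For the first summand, observe that $\widehat{\bs{\theta}}_\mu$ and $\widetilde{\bs{\theta}}_\mu$ are $\bs{\Gamma}_{n,\rho}^{-1}$ and $\bs{\Gamma}_{n,0}^{-1}$ applied to the \emph{same} right-hand vector, so the resolvent identity $\bs{\Gamma}_{n,\rho}^{-1}-\bs{\Gamma}_{n,0}^{-1} = -\frac{\rho_n}{nN}\bs{\Gamma}_{n,\rho}^{-1}\mathbf{P}\bs{\Gamma}_{n,0}^{-1}$ yields
\[
\widehat{\bs{\theta}}_\mu - \widetilde{\bs{\theta}}_\mu = -\frac{\rho_n}{nN}\bs{\Gamma}_{n,\rho}^{-1}\mathbf{P}\,\widetilde{\bs{\theta}}_\mu, \qquad \mathbf{P} := \mathbf{I}_{p}\otimes \mathbf{Q}_2^\top[\langle B_m,B_{m'}\rangle_\mathcal{E}]_{m,m'}\mathbf{Q}_2.
\]

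Let $\bs{\beta}^* \in \mathcal{G}^{(p+1)}$ be the quasi-interpolant of $\bs{\beta}^o$ given by Lemma \ref{LEM:appord}, with coefficient vector $\bs{\theta}^*$, and split $\widetilde{\bs{\theta}}_\mu = \bs{\theta}^* + (\widetilde{\bs{\theta}}_\mu - \bs{\theta}^*)$. Two basic ingredients will be used repeatedly: (a) $\|\widetilde{\mathbb{B}}(\bs{z})^\top \mathbf{v}\|_\infty \le \|\mathbf{v}\|$ uniformly in $\bs{z}$, since $\sum_m B_{\ell,m}^2(\bs{z}) \le \sum_m B_{\ell,m}(\bs{z}) = 1$ and each $\mathbf{Q}_{2,\ell}$ has orthonormal columns; and (b) $\|\bs{\Gamma}_{n,\rho}^{-1}\|_\mathrm{op} \le c_\Gamma^{-1}|\triangle|^{-2}$ from Lemma \ref{LEM:Gamma_rho}. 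In addition, the Bernstein--Markov chain of inequalities already executed inside the proof of Lemma \ref{LEM:Gamma_rho} gives $\|\mathbf{P}\|_\mathrm{op} \le C|\triangle|^{-2}$. For the $\bs{\theta}^*$-piece I exploit the inequality
\[
\|\mathbf{P}\bs{\theta}^*\|^2 \le \|\mathbf{P}\|_\mathrm{op}\,\bs{\theta}^{*\top}\mathbf{P}\bs{\theta}^* = \|\mathbf{P}\|_\mathrm{op}\sum_\ell \mathcal{E}(\beta_\ell^*),
\]
where Lemma \ref{LEM:appord} applied with $a_1+a_2=2$ controls the splines' second derivatives by those of $\bs{\beta}^o$, giving $\mathcal{E}(\beta_\ell^*) \le C\|\bs{\beta}^o\|_{2,\infty}^2$; chaining with (a), (b) and the prefactor $\rho_n/(nN)$ delivers the first term of the stated rate, $O_P(\rho_n/(nN|\triangle|^3)\|\bs{\beta}^o\|_{2,\infty})$. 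For the $(\widetilde{\bs{\theta}}_\mu - \bs{\theta}^*)$-piece, (\ref{EQ:beta_tilde_bias}) combined with Lemma \ref{LEM:normequity} converts $\|\widetilde{\bs{\beta}}_\mu - \bs{\beta}^*\|_\infty = O_P(|\triangle|^{d+1}\|\bs{\beta}^o\|_{d+1,\infty})$ into the coefficient bound $\|\widetilde{\bs{\theta}}_\mu - \bs{\theta}^*\| = O_P(|\triangle|^{d}\|\bs{\beta}^o\|_{d+1,\infty})$; the product $\|\bs{\Gamma}_{n,\rho}^{-1}\|_\mathrm{op}\cdot\|\mathbf{P}\|_\mathrm{op}$ then contributes an extra $|\triangle|^{-4}$, producing the amplified approximation term $O_P(\rho_n/(nN|\triangle|^5)\cdot |\triangle|^{d+1}\|\bs{\beta}^o\|_{d+1,\infty})$ and closing the bound.

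The main technical difficulty I anticipate is essentially bookkeeping: one must carefully distinguish the two ways $\mathbf{P}$ is controlled---via $\mathcal{E}(\beta_\ell^*) \le C\|\bs{\beta}^o\|_{2,\infty}^2$ on the $\bs{\theta}^*$-piece, which is what produces the sharper $\|\bs{\beta}^o\|_{2,\infty}$ dependence in the first term of the rate, versus the coarser operator-norm bound $\|\mathbf{P}\|_\mathrm{op} \le C|\triangle|^{-2}$ applied to $\widetilde{\bs{\theta}}_\mu-\bs{\theta}^*$, which yields the amplified $|\triangle|^{-5}$ factor multiplying the approximation bias. All remaining steps are elementary once the resolvent identity has been set up and the bounds from Lemmas \ref{LEM:normequity}, \ref{LEM:appord}, and \ref{LEM:Gamma_rho} are in hand.
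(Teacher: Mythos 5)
Your proposal is correct, and it reaches the stated bound by a genuinely different route from the paper. The paper's proof is variational: it characterizes $\widehat{\bs{\beta}}_{\mu}$ and $\widetilde{\bs{\beta}}_{\mu}$ by the orthogonality relations $nN\langle\bs{\beta}^{o}-\widehat{\bs{\beta}}_{\mu},\bs{g}\rangle_{n,N}=\rho_{n}\langle\widehat{\bs{\beta}}_{\mu},\bs{g}\rangle_{\mathcal{E}}$ and $\langle\bs{\beta}^{o}-\widetilde{\bs{\beta}}_{\mu},\bs{g}\rangle_{n,N}=0$, inserts $\bs{g}=\widetilde{\bs{\beta}}_{\mu}-\widehat{\bs{\beta}}_{\mu}$, and exploits the energy-norm monotonicity $\|\widehat{\bs{\beta}}_{\mu}\|_{\mathcal{E}}\leq\|\widetilde{\bs{\beta}}_{\mu}\|_{\mathcal{E}}$ together with the sup ratios $A_{n}=O_{P}(|\triangle|^{-1})$ and $\overline{A}_{n}=O_{P}(|\triangle|^{-2})$; the $\|\bs{\beta}^{o}\|_{2,\infty}$ dependence then enters through $\|\widetilde{\bs{\beta}}_{\mu}\|_{\mathcal{E}}\leq C(\|\bs{\beta}^{o}\|_{2,\infty}+|\triangle|^{d-1}\|\bs{\beta}^{o}\|_{d+1,\infty})$. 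Your route replaces this with the resolvent identity on the normal equations, which is legitimate because $\widehat{\bs{\theta}}_{\mu}$ and $\widetilde{\bs{\theta}}_{\mu}$ share the same right-hand vector, and then splits $\widetilde{\bs{\theta}}_{\mu}$ at the quasi-interpolant $\bs{\theta}^{*}$. Both arguments ultimately rest on the same three facts (stability of the Bernstein basis in Lemma \ref{LEM:normequity}, the Markov inequality $\|\bs{g}\|_{\mathcal{E}}\leq C|\triangle|^{-2}\|\bs{g}\|$, and the eigenvalue bounds of Lemma \ref{LEM:Gamma_rho}), and your asymmetric treatment of the two pieces --- energy seminorm of the quasi-interpolant for the sharp $\|\bs{\beta}^{o}\|_{2,\infty}$ term, crude operator norms for the remainder --- mirrors exactly how the paper's single bound on $\|\widetilde{\bs{\beta}}_{\mu}\|_{\mathcal{E}}$ splits into its two summands. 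What your version buys is a cleaner accounting of where each factor of $|\triangle|^{-1}$ originates, and it sidesteps the paper's slightly delicate step of bounding the second derivatives of the empirical projection $\widetilde{\bs{\beta}}_{\mu}$ itself (for which the paper leans on \cite{Lai:Wang:13}) by instead putting the derivative bounds only on $\bs{\beta}^{*}$, where Lemma \ref{LEM:appord} applies directly. Two small points to tidy up in a full write-up: the operator-norm bound $\|\bs{\Gamma}_{n,\rho}^{-1}\|\leq c_{\Gamma}^{-1}|\triangle|^{-2}$ holds only on an event of probability tending to one, so the final statement must be phrased as $O_{P}$; and $\mathcal{E}(\beta_{\ell}^{*})$ is bounded by $C(\|\bs{\beta}^{o}\|_{2,\infty}+|\triangle|^{d-1}\|\bs{\beta}^{o}\|_{d+1,\infty})^{2}$ rather than $C\|\bs{\beta}^{o}\|_{2,\infty}^{2}$, but the extra contribution coincides with your amplified approximation term, so the rate is unaffected.
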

%%%%%%%%%%%%%%%%%%%%%%%%%%%%%%%%%%%%%%%%%%%%%%%%%%%%%%%%%%%%%
\begin{proof}
Define
\begin{equation}
A_{n}=\sup_{\bs{g}\in \mathcal{G}^{(p+1)}}\left\{ \frac{\|\bs{g}\|
_{\infty}}{\|\bs{g}\|_{n,N}},\|\bs{g}\|_{n,N}\neq 0\right\},~~ \overline{A}_{n}=\sup_{\bs{g}\in \mathcal{G}^{(p+1)}}\left\{\frac{%
\|\bs{g}\|_{\mathcal{E}}}{\|\bs{g}\|_{n,N}},\|\bs{g}\|_{n,N}\neq 0\right\},
\label{DEF:An}
\end{equation}
where random variables $A_{n}$ and $\overline{A}_{n}$ depend on
the collection of $X_{i\ell}$'s, $i=1,\ldots,n$, $\ell=0,\ldots,p$. It is clear that
$\|\bs{\beta}^{o}-\widehat{\bs{\beta}}_{\mu}\|_{\infty}\leq \|\bs{\beta}^{o}-\widetilde{\bs{\beta}}_{\mu}\|_{\infty}+\| \widetilde{\bs{\beta}}_{\mu}-\widehat{\bs{\beta}}_{\mu}\|_{\infty}$,
where $\widetilde{\bs{\beta}}_{\mu}$ is given in (\ref{DEF:beta_tilde_mu}), and $\|\widetilde{\bs{\beta}}_{\mu}-\bs{\beta}^{o}\|_{\infty}\leq C|\triangle|^{d+1}\|\bs{\beta}^{o}\|_{d+1,\infty}$ according to (\ref{EQ:beta_tilde_bias}).

By the definition of $A_{n}$ in  (\ref{DEF:An}), we have
\begin{equation}
\|\widetilde{\bs{\beta}}_{\mu}-\widehat{\bs{\beta}}_{\mu}\|_{\infty}\leq
A_{n}\|\widetilde{\bs{\beta}}_{\mu}-\widehat{\bs{\beta}}_{\mu}\|_{n,N}.
\label{EQ:pls-ps-sup}
\end{equation}
Note that the penalized spline $\widehat{\bs{\beta}}_{\mu}$ of $\bs{\beta}^{o}$ is
characterized by the orthogonality relations
\begin{eqnarray}
nN\langle\bs{\beta}^{o}-\widehat{\bs{\beta}}_{\mu},\bs{g}\rangle _{n,N}=\rho_{n} \langle \widehat{\bs{\beta}}_{\mu},\bs{g}\rangle _{\mathcal{E}},\quad \textrm{for all } \bs{g}\in \mathcal{G}^{(p+1)},
\label{EQ:pls-bias}
\end{eqnarray}
while $\widetilde{\bs{\beta}}_{\mu}$ is characterized by
\begin{eqnarray}
\langle\bs{\beta}^{o} -\widetilde{\bs{\beta}}_{\mu},\bs{g}\rangle _{n,N}=0,\quad \textrm{for all }\bs{g}\in \mathcal{G}^{(p+1)}.
\label{EQ:ls-bias}
\end{eqnarray}
By (\ref{EQ:pls-bias}) and (\ref{EQ:ls-bias}), we have $nN\langle \widetilde{\bs{\beta}}_{\mu}-\widehat{\bs{\beta}}_{\mu},\bs{g}\rangle _{n,N}=\rho_{n} \langle
\widehat{\bs{\beta}}_{\mu},\bs{g}\rangle _{\mathcal{E}}$, for all
$\bs{g}\in \mathcal{G}^{(p+1)}$. Inserting $\bs{g}=\widetilde{\bs{\beta}}_{\mu}-\widehat{\bs{\beta}}_{\mu}$ yields
that
\begin{eqnarray}
nN\|\widetilde{\bs{\beta}}_{\mu}-\widehat{\bs{\beta}}_{\mu}\|_{n,N}^{2}=\rho_{n} \langle \widehat{\bs{\beta}}_{\mu},\widetilde{\bs{\beta}}_{\mu}-\widehat{\bs{\beta}}_{\mu}\rangle _{\mathcal{E}}.  \label{EQ:pls-ls}
\end{eqnarray}
Thus, by Cauchy-Schwarz inequality and the definition of
$\overline{A}_{n}$.
\[
nN\|\widetilde{\bs{\beta}}_{\mu}-\widehat{\bs{\beta}}_{\mu}\|_{n,N}^{2}\leq
\rho_n \|\widehat{\bs{\beta}}_{\mu}\|_{\mathcal{E}}\|\widetilde{\bs{\beta}}_{\mu}
-\widehat{\bs{\beta}}_{\mu}\|_{\mathcal{E}}\leq \rho_{n} \overline{A}_{n} \|\widehat{\bs{\beta}}_{\mu}\|_{\mathcal{E}}\| \widetilde{\bs{\beta}}_{\mu}-\widehat{\bs{\beta}}_{\mu}\|_{n,N}.
\]
Similarly, using (\ref{EQ:pls-ls}), $nN\|\widetilde{\bs{\beta}}_{\mu}-\widehat{\bs{\beta}}_{\mu}\|_{n,N}^{2}=\rho_n \{
\langle \widehat{\bs{\beta}}_{\mu},\widetilde{\bs{\beta}}_{\mu}\rangle
_{\mathcal{E}}-\langle \widehat{\bs{\beta}}_{\mu},\widehat{\bs{\beta}}_{\mu}\rangle _{\mathcal{E}}\} \geq 0$. Thus, by Cauchy-Schwarz inequality, $\|\widehat{\bs{\beta}}_{\mu}\|_{\mathcal{E}}^{2}\leq \langle \widehat{\bs{\beta}}_{\mu},\widetilde{\bs{\beta}}_{\mu}\rangle _{\mathcal{E}}\leq \|\widehat{\bs{\beta}}_{\mu}\|_{\mathcal{E}}\|\widetilde{\bs{\beta}}_{\mu}\|_{\mathcal{E}}$,
which implies that $\|\widehat{\bs{\beta}}_{\mu}\|_{\mathcal{E}}\leq \|\widetilde{\bs{\beta}}_{\mu}\|_{\mathcal{E}}$. Therefore,
\begin{eqnarray}
\|\widetilde{\bs{\beta}}_{\mu}-\widehat{\bs{\beta}}_{\mu}\|_{n,N}\leq
\rho_n (nN)^{-1}\overline{A}_{n} \|\widetilde{\bs{\beta}}_{\mu}\|_{\mathcal{E}}. \label{EQ:pls-ps2}
\end{eqnarray}
Combining (\ref{EQ:pls-ps-sup}) and (\ref{EQ:pls-ps2}) yields that
\[
\|\widetilde{\bs{\beta}}_{\mu}-\widehat{\bs{\beta}}_{\mu}\|_{\infty}\leq A_{n}
\|\widetilde{\bs{\beta}}_{\mu}-\widehat{\bs{\beta}}_{\mu}\|_{n,N}\leq
\rho_n (nN)^{-1}A_{n} \overline{A}_{n} \|\widetilde{\bs{\beta}}_{\mu}\|_{\mathcal{E}}.
\]
By Lemma \ref{LEM:appord}, we have
\[
\|\widetilde{\bs{\beta}}_{\mu}\|_{\mathcal{E}} =C_{1}
\{\|\bs{\beta}^{o}\|_{2,\infty}+\sum_{a_{1}+a_{2}=2 }\|\nabla_{z_{1}}^{a_{1}}\nabla_{z_{2}}^{a_{2}}(
\bs{\beta}^{o}- \widetilde{\bs{\beta}}_{\mu})\|_{\infty}\}
\leq C_{2}(\|\bs{\beta}^{o}\|_{2,\infty}+
|\triangle|^{d -1}\|\bs{\beta}^{o}\| _{d +1,\infty}).
\]
It follows
\begin{eqnarray}
\label{EQ:pls-ps-sup-2}
\|\widetilde{\bs{\beta}}_{\mu}-\widehat{\bs{\beta}}_{\mu}\|_{\infty}=\rho_n (nN)^{-1}A_{n}\overline{A}_{n}
C_{2}(\left\|\bs{\beta}^{o}\right\|_{2,\infty}+|\triangle|^{d -1}\|\bs{\beta}^{o}\|
_{d+1,\infty}) .
\end{eqnarray}
Next we derive the order of $A_{n}$ and $\overline{A}_{n}$. By Markov's inequality, for any $\bs{g}\in \mathcal{G}^{(p+1)}$, $\|\bs{g}\|_{\infty}\leq C|\triangle| ^{-1}\|\bs{g}\|$, $\|\bs{g}\|_{\mathcal{E}}\leq C|\triangle|^{-2}\|\bs{g}\|$. Equation (\ref{EQ:normratio}) implies that 
\[
\sup_{\bs{g}\in\mathcal{G}(\triangle)}\left\{ \left. \|\bs{g}\|_{n,N}\right/
\|\bs{g}\|\right\} \geq \left[1-O_{P}\left\{(\log n)^{1/2}{n}^{-1/2}+ N^{-1/2}|\triangle|^{-1}\right\}\right]^{1/2}.
\]
Thus, we have
\begin{align*}
A_{n} &\leq C|\triangle| ^{-1}\left[1-O_{P}\left\{(\log n)^{1/2}{n}^{-1/2}+ N^{-1/2}|\triangle|^{-1}\right\} \right]
^{-1/2}=O_{P}\left( |\triangle| ^{-1}\right) , \\
\overline{A}_{n} &\leq C|\triangle|
^{-2}\left[1-O_{P}\left\{(\log n)^{1/2}{n}^{-1/2}+ N^{-1/2}|\triangle|^{-1}\right\}
\right]^{-1/2}=O_{P}\left( |\triangle| ^{-2}\right).
\end{align*}
Plugging the order of $A_{n}$ and $\overline{A}_{n}$ into (\ref{EQ:pls-ps-sup-2}) yields that
\[
\|\widetilde{\bs{\beta}}_{\mu}-\widehat{\bs{\beta}}_{\mu}\|_{\infty}
=O_{P}\left\{ \frac{C_{2}\rho_n}{nN|\triangle| ^{3}}(\|\bs{\beta}^{o}\|_{2,\infty}
+|\triangle|^{d -1}\|\bs{\beta}^{o}\|_{d +1,\infty}) \right\}.
\]
Hence,
\[
\|\widehat{\bs{\beta}}_{\mu}-\bs{\beta}^{o}\|_{\infty}\leq C_{1}|\triangle|^{d +1}\|\bs{\beta}^{o}\|_{d+1,\infty}+O_{P}\left\{ \frac{C_{2}\rho_n}{nN|\triangle|^{3}}
\left(\|\bs{\beta}^{o}\|_{2,\infty}+|\triangle|^{d -1}\|
\bs{\beta}^{o}\|_{d +1,\infty}\right) \right\} .
\]
Therefore, Lemma \ref{LEM:uniformbiasrate} is established.
\end{proof}

%%%%%%%%%%%%%%%%%%%%%%%%%%%%%%%%%%%%%%%%%%%%%%%%%%%%%%%%%%%%%
\begin{lemma}
\label{LEM:thetatilde-eta}
Suppose Assumptions (A2)--(A5) hold and  $N^{1/2}|\triangle|\rightarrow \infty$ as $N\rightarrow \infty$, then $\|\widehat{\bs{\theta}}_{\eta}\|^2=O_P(n^{-1}|\triangle|^{-2})$.
\end{lemma}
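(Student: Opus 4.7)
The plan is to leverage the closed form $\widehat{\bs{\theta}}_{\eta} = \bs{\Gamma}_{n,\rho}^{-1}\mathbf{s}_{\eta}$, where
\[
\mathbf{s}_{\eta} = \frac{1}{nN}\sum_{i=1}^{n}\sum_{j=1}^{N}\{\widetilde{\mathbf{X}}_{i}\otimes\widetilde{\mathbf{B}}(\bs{z}_{j})\}\eta_{i}(\bs{z}_{j}),
\]
and to combine the eigenvalue bound $\lambda_{\min}(\bs{\Gamma}_{n,\rho}) \geq c_{\Gamma}|\triangle|^{2}$ from Lemma \ref{LEM:Gamma_rho} with a second-moment bound $E\|\mathbf{s}_{\eta}\|^{2} = O(n^{-1}|\triangle|^{2})$. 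Chaining these two facts yields $\|\widehat{\bs{\theta}}_{\eta}\|^{2} \leq c_{\Gamma}^{-2}|\triangle|^{-4}\|\mathbf{s}_{\eta}\|^{2} = O_{P}(n^{-1}|\triangle|^{-2})$ via Markov's inequality, which is the claimed rate.

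The bulk of the work lies in bounding $E\|\mathbf{s}_{\eta}\|^{2}$. Since the $\eta_{i}$'s are i.i.d., mean zero, and independent of the $\widetilde{\mathbf{X}}_{i}$'s, the subject-wise summands
\[
\mathbf{s}_{\eta,i} = \frac{1}{N}\sum_{j=1}^{N}\{\widetilde{\mathbf{X}}_{i}\otimes\widetilde{\mathbf{B}}(\bs{z}_{j})\}\eta_{i}(\bs{z}_{j})
\]
are i.i.d.\ and mean zero, so $E\|\mathbf{s}_{\eta}\|^{2} = n^{-1}E\|\mathbf{s}_{\eta,1}\|^{2}$. The Kronecker factorization $\mathbf{s}_{\eta,1} = \widetilde{\mathbf{X}}_{1}\otimes\mathbf{w}_{1}$ with $\mathbf{w}_{1} = N^{-1}\sum_{j}\widetilde{\mathbf{B}}(\bs{z}_{j})\eta_{1}(\bs{z}_{j})$ gives $\|\mathbf{s}_{\eta,1}\|^{2} = \|\widetilde{\mathbf{X}}_{1}\|^{2}\|\mathbf{w}_{1}\|^{2}$, and Assumption (A3) yields $E\|\widetilde{\mathbf{X}}_{1}\|^{2} < \infty$. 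It therefore suffices to prove $E\|\mathbf{w}_{1}\|^{2} = O(|\triangle|^{2})$.

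To bound $E\|\mathbf{w}_{1}\|^{2}$, the key observation is that $\mathbf{Q}_{2}$ has orthonormal columns, so $\|\mathbf{Q}_{2}^{\top}\mathbf{u}\|^{2} \leq \|\mathbf{u}\|^{2}$. This dispenses with the smoothness-constraint reparameterization and delivers
\[
E\|\mathbf{w}_{1}\|^{2} \leq \sum_{m\in\mathcal{M}}\frac{1}{N^{2}}\sum_{j,j'=1}^{N}B_{m}(\bs{z}_{j})B_{m}(\bs{z}_{j'})G_{\eta}(\bs{z}_{j},\bs{z}_{j'}).
\]
By Lemma \ref{LEM:integration} (display (\ref{EQ:G_integration})), each $(j,j')$-sum approximates $\int\int B_{m}(\bs{z})B_{m}(\bs{z}')G_{\eta}(\bs{z},\bs{z}')\, d\bs{z}\, d\bs{z}'$ with error $O(N^{-1/2}|\triangle|^{3})$. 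The local support of each $B_{m}$ on a single triangle of area $\asymp |\triangle|^{2}$ together with the boundedness of $G_{\eta}$ from Assumption (A4) makes every such integral $O(|\triangle|^{4})$. Summing over the $|\mathcal{M}| \asymp |\triangle|^{-2}$ basis indices then contributes $O(|\triangle|^{2})$ for the main term and $O(N^{-1/2}|\triangle|) = o(|\triangle|^{2})$ for the accumulated error when $N^{1/2}|\triangle| \to \infty$.

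The main subtlety is the careful bookkeeping of powers of $|\triangle|$: the eigenvalue factor $|\triangle|^{-4}$, the averaging factor $n^{-1}$, the basis count $|\mathcal{M}| \asymp |\triangle|^{-2}$, and the per-triangle covariance integral $O(|\triangle|^{4})$ must cancel exactly into the target rate $n^{-1}|\triangle|^{-2}$. Invoking the projection inequality to bypass the reduced basis $\widetilde{\mathbf{B}} = \mathbf{Q}_{2}^{\top}\mathbf{B}$ is what keeps the argument clean; without it one would have to inspect the structure of $\mathbf{Q}_{2}\mathbf{Q}_{2}^{\top}$ more carefully, which is unnecessary for an upper bound.
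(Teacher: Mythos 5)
Your proposal is correct and follows essentially the same route as the paper: both bound $\|\widehat{\bs{\theta}}_{\eta}\|^2$ by $c_{\Gamma}^{-2}|\triangle|^{-4}$ times the squared norm of the score vector via Lemma \ref{LEM:Gamma_rho}, exploit that the cross-subject terms have zero expectation (your i.i.d.\ reduction to $n^{-1}E\|\mathbf{s}_{\eta,1}\|^2$ is the same computation as the paper's observation that the $i\neq i'$ terms vanish), use that the eigenvalues of $\mathbf{Q}_2\mathbf{Q}_2^{\top}$ are $0$ or $1$ to pass to the unreduced basis, and apply (\ref{EQ:G_integration}) together with the local support of $B_m$ to get $O(|\triangle|^4)$ per triangle summed over $|\mathcal{M}|\asymp|\triangle|^{-2}$ indices. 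The power counting is right and no step is missing.
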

%%%%%%%%%%%%%%%%%%%%%%%%%%%%%%%%%%%%%%%%%%%%%%%%%%%%%%%%%%%%%
\begin{proof}
Note that $\widehat{\bs{\theta}}_{\eta}=\bs{\Gamma}_{n,\rho}^{-1}
\frac{1}{nN}\sum_{i=1}^{n}\sum_{j=1}^{N} \left\{\widetilde{\mathbf{X}}_{i}\otimes \widetilde{\mathbf{B}}(\bs{z}_{j})\right\}\sum_{k=1}^{\infty}\lambda_k^{1/2}\xi_{ik}\psi_{k}(\bs{z}_j)$.
According to Lemma \ref{LEM:Gamma_rho},
\begin{align*}
\|\widehat{\bs{\theta}}_{\eta}\|^2 \asymp& \frac{1}{n^2N^2|\triangle|^4} \sum_{i,i^{\prime}=1}^{n}\sum_{j,j^{\prime}=1}^{N}
\left\{\widetilde{\mathbf{X}}_{i}\otimes\widetilde{\mathbf{B}}(\bs{z}_j)\right\}^{\top}\\
&\times\sum_{k=1}^{\infty}\lambda_k^{1/2}\xi_{ik}\psi_{k}(\bs{z}_j) \left\{\mathbf{X}_{i^{\prime}}\otimes\widetilde{\mathbf{B}}(\bs{z}_{j^{\prime}})\right\}\sum_{k=1}^{\infty}\lambda_k^{1/2}\xi_{i^\prime k}\psi_{k}(\bs{z}_{j^\prime}).
\end{align*}
Note that
\begin{align*}
\widetilde{\mathbf{X}}_{i}&\otimes\widetilde{\mathbf{B}}(\bs{z}_j)\sum_{k=1}^{\infty}\lambda_k^{1/2}\xi_{ik}\psi_{k}(\bs{z}_j) \\
&=\left(X_{i0}\widetilde{\mathbf{B}}(\bs{z}_j)^{\top}\sum_{k=1}^{\infty}\lambda_k^{1/2}\xi_{ik}\psi_{k}(\bs{z}_j),\ldots, X_{ip}\widetilde{\mathbf{B}}(\bs{z}_j)^{\top}\sum_{k=1}^{\infty}\lambda_k^{1/2}\xi_{ik}\psi_{k}(\bs{z}_j)\right)^{\top},
\end{align*}
so one has
\[
\|\widehat{\bs{\theta}}_{\eta}\|^2
\asymp \frac{1}{n^2N^2|\triangle|^4} \sum_{\ell =0}^{p} \sum_{i,i^{\prime}=1}^{n}\sum_{j,j^{\prime}=1}^{N} X_{i\ell}X_{i^{\prime}\ell}\widetilde{\mathbf{B}}(\bs{z}_j)^{\top}
\widetilde{\mathbf{B}}(\bs{z}_{j^{\prime}}) \sum_{k, k^{\prime}=1}^{\infty}(\lambda_{k}\lambda_{k'})^{1/2}\xi_{ik}\psi_{k}(\bs{z}_j)
\xi_{i^{\prime}k^{\prime}}\psi_{k^{\prime}}(\bs{z}_{j^{\prime}}).
\]
Because the eigenvalues of $\mathbf{Q}_2\mathbf{Q}_2^{\top}$ are either 0 or 1, under Assumptions (A2) and (A3), 
for any $\ell,i$, 
\begin{align*}
\frac{1}{N^2}&\sum_{j=1}^N\sum_{j^{\prime}=1}^{N}
 E\left\{X_{i\ell}^{2}\widetilde{\mathbf{B}}(\bs{z}_j)^{\top}\widetilde{\mathbf{B}}(\bs{z}_{j^{\prime}})\sum_{k, k^{\prime}=1}^{\infty}(\lambda_{k}\lambda_{k'})^{1/2}\xi_{ik}\psi_{k}(\bs{z}_j)
\xi_{ik^{\prime}}\psi_{k^{\prime}}(\bs{z}_{j^{\prime}})\right\}\\
%& \leq \frac{1}{N^2}\sum_{j=1}^N\sum_{j^{\prime}=1}^{N}  E\left\{X_{i\ell}^{2}\sum_{m\in \mathcal{M}}B_{m}(\bs{z}_j)B_{m}(\bs{z}_{j^{\prime}})\sum_{k=1}^{\infty}\lambda_k\xi_{ik}^{2}\psi_{k}(\bs{z}_j)\psi_{k}(\bs{z}_{j^{\prime}})\right\}\\
&\leq C \sum_{m\in \mathcal{M}}
\frac{1}{N^2}\sum_{j=1}^N\sum_{j^{\prime}=1}^{N} B_{m}(\bs{z}_j)B_{m}(\bs{z}_{j^{\prime}})
G_{\eta}(\bs{z}_j,\bs{z}_{j^{\prime}}).
\end{align*}
Assumption (A4) and (\ref{EQ:G_integration}) imply that
\begin{align*}
\frac{1}{N^2}&\sum_{j\neq j^{\prime}} B_{m}(\bs{z}_j)B_{m}(\bs{z}_{j^{\prime}})
G_{\eta}(\bs{z}_j,\bs{z}_{j^{\prime}})=\int_{T_{m}\times T_{m}}G_{\eta}(\bs{z},\bs{z}^{\prime})B_m(\bs{z})B_m(\bs{z}^{\prime})
d\bs{z}d\bs{z}^{\prime}\\
& \times \{1+O(N^{-1/2}|\triangle|^{3})\}=O(|\triangle|^{4}).
\end{align*}
Thus,
\[
\frac{1}{N^2}\sum_{j=1}^N\sum_{j^{\prime}=1}^{N} EX_{i\ell}^{2}\mathbf{B}(\bs{z}_j)^{\top}\mathbf{B}(\bs{z}_{j^{\prime}})\sum_{k, k^{\prime}=1}^{\infty}(\lambda_{k}\lambda_{k'})^{1/2}\xi_{ik}\psi_{k}(\bs{z}_j)
\xi_{ik^{\prime}}\psi_{k^{\prime}}(\bs{z}_{j^{\prime}}) \leq C|\triangle|^{2}.
%C(1-N^{-1}) (d+1)(d+2)|\triangle|^{2}.
\]
Next for any $\ell$, $i\neq i^{\prime}$, $j$, $j^{\prime}$, we have
\begin{align*}
E& \left\{X_{i\ell}X_{i^{\prime}\ell} \mathbf{B}(\bs{z}_j)^{\top}
\mathbf{B}(\bs{z}_{j^{\prime}}) \sum_{k, k^{\prime}=1}^{\infty}(\lambda_{k}\lambda_{k'})^{1/2}\xi_{ik}\psi_{k}(\bs{z}_j)
\xi_{ik^{\prime}}\psi_{k^{\prime}}(\bs{z}_{j^{\prime}})\right\}\\
&=E (X_{i\ell}X_{i^{\prime}\ell})\sum_{m\in \mathcal{M}}B_{m}^{2}(\bs{z}_j)B_{m}^{2}(\bs{z}_{j^{\prime}}) \sum_{k,k^{\prime}} E \left\{(\lambda_{k}\lambda_{k'})^{1/2}\xi_{ik}\xi_{i^{\prime}k^{\prime}}\psi_{k}(\bs{z}_{j})\psi_{k^{\prime}}(\bs{z}_{j^{\prime}})\right\}=0.
\end{align*}
Therefore, $ E\|\widehat{\bs{\theta}}_{\eta}\|^2
\leq C p (n^{-1}|\triangle|^{-2})$. The conclusion of the lemma follows.
\end{proof}

%%%%%%%%%%%%%%%%%%%%%%%%%%%%%%%%%%%%%%%%%%%%%%%%%%%%%%%%%%%%%
\begin{lemma}
	\label{LEM:thetatilde-eps}
	Suppose Assumptions (A2)--(A5) hold and  $N^{1/2}|\triangle|\rightarrow \infty$ as $N\rightarrow \infty$, then $\|\widehat{\bs{\theta}}_{\varepsilon}\|^2=O_P(n^{-1}N^{-1}|\triangle|^{-4})$.
\end{lemma}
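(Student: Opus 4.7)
The plan parallels the argument for Lemma \ref{LEM:thetatilde-eta}, but exploits the stronger independence structure of $\varepsilon_{ij}$, which delivers an extra factor of $N^{-1}$ compared to $\widehat{\bs{\theta}}_{\eta}$.

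\textbf{Step 1: Reduce to a noise average.} Write $\widehat{\bs{\theta}}_{\varepsilon}=\bs{\Gamma}_{n,\rho}^{-1}\mathbf{V}_{n,N}$, where $\mathbf{V}_{n,N}=(nN)^{-1}\sum_{i=1}^{n}\sum_{j=1}^{N}\{\widetilde{\mathbf{X}}_{i}\otimes \widetilde{\mathbf{B}}(\bs{z}_{j})\}\sigma(\bs{z}_j)\varepsilon_{ij}$. By Lemma \ref{LEM:Gamma_rho}, with probability approaching one, $\lambda_{\min}(\bs{\Gamma}_{n,\rho})\geq c_{\Gamma}|\triangle|^{2}$, so that $\|\widehat{\bs{\theta}}_{\varepsilon}\|^{2}\leq c_{\Gamma}^{-2}|\triangle|^{-4}\|\mathbf{V}_{n,N}\|^{2}$. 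It therefore suffices to show that $E\|\mathbf{V}_{n,N}\|^{2}=O\{(nN)^{-1}\}$, after which Markov's inequality yields $\|\mathbf{V}_{n,N}\|^{2}=O_P\{(nN)^{-1}\}$ and hence the claim.

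\textbf{Step 2: Compute $E\|\mathbf{V}_{n,N}\|^2$.} Using $(\mathbf{a}\otimes \mathbf{b})^{\top}(\mathbf{c}\otimes \mathbf{d})=(\mathbf{a}^{\top}\mathbf{c})(\mathbf{b}^{\top}\mathbf{d})$,
\[
E\|\mathbf{V}_{n,N}\|^{2}=\frac{1}{n^{2}N^{2}}\sum_{i,i^{\prime}=1}^{n}\sum_{j,j^{\prime}=1}^{N}E(\widetilde{\mathbf{X}}_{i}^{\top}\widetilde{\mathbf{X}}_{i^{\prime}})\,\widetilde{\mathbf{B}}(\bs{z}_{j})^{\top}\widetilde{\mathbf{B}}(\bs{z}_{j^{\prime}})\,\sigma(\bs{z}_{j})\sigma(\bs{z}_{j^{\prime}})\,E(\varepsilon_{ij}\varepsilon_{i^{\prime}j^{\prime}}).
\]
Since the $\varepsilon_{ij}$'s are independent across both indices with mean zero and unit variance by Assumption (A2), and independent of $\widetilde{\mathbf{X}}_i$, only the diagonal terms $i=i^{\prime}$, $j=j^{\prime}$ survive, giving
\[
E\|\mathbf{V}_{n,N}\|^{2}=\frac{1}{n^{2}N^{2}}\sum_{i=1}^{n}\sum_{j=1}^{N}E(\|\widetilde{\mathbf{X}}_{i}\|^{2})\,\|\widetilde{\mathbf{B}}(\bs{z}_{j})\|^{2}\,\sigma^{2}(\bs{z}_{j}).
\]

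\textbf{Step 3: Bound the summand.} Assumption (A3) gives $E\|\widetilde{\mathbf{X}}_{i}\|^{2}=\sum_{\ell=0}^{p}E(X_{\ell}^{2})\leq C$, and Assumption (A4) yields $\sigma^{2}(\bs{z}_{j})\leq C_{\sigma}^{2}$. Because $\mathbf{Q}_{2}\mathbf{Q}_{2}^{\top}$ is an orthogonal projection, $\|\widetilde{\mathbf{B}}(\bs{z}_{j})\|^{2}\leq \|\mathbf{B}(\bs{z}_{j})\|^{2}=\sum_{m\in\mathcal{M}}B_{m}^{2}(\bs{z}_{j})$; and, since only the $(d+1)(d+2)/2$ Bernstein polynomials associated with the triangle containing $\bs{z}_{j}$ are nonzero at $\bs{z}_{j}$, each bounded by $1$, we obtain $\|\widetilde{\mathbf{B}}(\bs{z}_{j})\|^{2}\leq C$. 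Combining these bounds yields $E\|\mathbf{V}_{n,N}\|^{2}\leq C(nN)^{-1}$. Therefore $\|\mathbf{V}_{n,N}\|^{2}=O_{P}\{(nN)^{-1}\}$, and Step 1 finishes the proof with $\|\widehat{\bs{\theta}}_{\varepsilon}\|^{2}=O_{P}\{n^{-1}N^{-1}|\triangle|^{-4}\}$.

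No step here is genuinely subtle; the only point that required care is verifying that the white-noise structure of $\{\varepsilon_{ij}\}$ in both $i$ and $j$ produces only diagonal cross-product terms, which is precisely why this rate gains the factor $N^{-1}$ relative to the $\widehat{\bs{\theta}}_{\eta}$ bound of Lemma \ref{LEM:thetatilde-eta}.
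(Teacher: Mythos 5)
Your proposal is correct and follows essentially the same route as the paper: bound $\|\widehat{\bs{\theta}}_{\varepsilon}\|^{2}$ by $c_{\Gamma}^{-2}|\triangle|^{-4}$ times the squared norm of the noise average via Lemma \ref{LEM:Gamma_rho}, observe that independence of $\varepsilon_{ij}$ across $i$ and $j$ (and from $\widetilde{\mathbf{X}}_i$) kills all off-diagonal terms in the second moment, and bound the surviving diagonal sum. The only cosmetic difference is that you bound $\|\widetilde{\mathbf{B}}(\bs{z}_j)\|^{2}$ pointwise by a constant, whereas the paper bounds the average $N^{-1}\sum_j\widetilde{\mathbf{B}}(\bs{z}_j)^{\top}\widetilde{\mathbf{B}}(\bs{z}_j)\sigma^{2}(\bs{z}_j)$ via its Riemann-sum approximation (\ref{EQ:sigma_integration}); both are valid.
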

%%%%%%%%%%%%%%%%%%%%%%%%%%%%%%%%%%%%%%%%%%%%%%%%%%%%%%%%%%%%%
\begin{proof}
By the definition of $\widehat{\bs{\theta}}_{\varepsilon}$ in (\ref{DEF:theta-beta-eps}), we have
\begin{align*}
	\|\widehat{\bs{\theta}}_{\varepsilon}\|^2&=\frac{1}{n^2N^2|\triangle|^4}\left(|\triangle|^{-2}\bs{\Gamma}_{n,\rho}\right)^{-1}
\sum_{i=1}^{n}\sum_{j=1}^{N} \left\{\widetilde{\mathbf{X}}_{i}\otimes \widetilde{\mathbf{B}}(\bs{z}_{j})\right\}^{\top}\sigma(\bs{z}_j)\varepsilon_{ij}\\
	&\quad \times\left(|\triangle|^{-2}\bs{\Gamma}_{n,\rho}\right)^{-1}
	\sum_{i=1}^{n} \sum_{j^{\prime}=1}^{N} \left\{\widetilde{\mathbf{X}}_{i}\otimes \widetilde{\mathbf{B}}(\bs{z}_{j^{\prime}})\right\}\sigma(\bs{z}_{j^{\prime}})\varepsilon_{ij^{\prime}}.
	\end{align*}
	By Lemma \ref{LEM:Gamma_rho},
	\[
\|\widehat{\bs{\theta}}_{\varepsilon}\|^2 \asymp \frac{1}{n^2N^2|\triangle|^4} \sum_{i,i^{\prime}=1}^{n}\sum_{j,j^{\prime}=1}^{N}\left\{\widetilde{\mathbf{X}}_{i}
\otimes\widetilde{\mathbf{B}}(\bs{z}_j)\right\}^{\top}\sigma(\bs{z}_j)\varepsilon_{ij} \left\{\mathbf{X}_{i^{\prime}}\otimes\widetilde{\mathbf{B}}(\bs{z}_{j^{\prime}})\right\}
\sigma(\bs{z}_{j^{\prime}})\varepsilon_{i^{\prime}j^{\prime}}.
\]
 Note that
\[
    \widetilde{\mathbf{X}}_{i}\otimes\widetilde{\mathbf{B}}(\bs{z}_j)\sigma(\bs{z}_j)\varepsilon_{ij}
	=\left(X_{i0}\widetilde{\mathbf{B}}(\bs{z}_j)^{\top}\sigma(\bs{z}_j)\varepsilon_{ij}, X_{i1}\widetilde{\mathbf{B}}(\bs{z}_j)^{\top}\sigma(\bs{z}_j)\varepsilon_{ij},\ldots, X_{ip}\widetilde{\mathbf{B}}(\bs{z}_j)^{\top}\sigma(\bs{z}_j)\varepsilon_{ij}\right)^{\top},
\]
so one has
\[
\|\widetilde{\bs{\theta}}_{\varepsilon}\|^2  \asymp \frac{1}{n^2N^2|\triangle|^4} \sum_{\ell =0}^{p} \sum_{i,i^{\prime}=1}^{n}\sum_{j,j^{\prime}=1}^{N} X_{i\ell}X_{i^{\prime}\ell}\widetilde{\mathbf{B}}(\bs{z}_j)^{\top}
\widetilde{\mathbf{B}}(\bs{z}_{j^{\prime}}) \sigma(\bs{z}_j) \sigma(\bs{z}_{j^{\prime}})\varepsilon_{ij} \varepsilon_{i^{\prime}j}.
 \]
Because the eigenvalues of $\mathbf{Q}_2\mathbf{Q}_2^{\top}$ are either 0 or 1, under Assumption (A2), for any $\ell,i$, by (\ref{EQ:sigma_integration}),
\begin{align*}
\frac{1}{N}\sum_{j=1}^{N}&\widetilde{\mathbf{B}}(\bs{z}_j)^{\top}\widetilde{\mathbf{B}}(\bs{z}_{j})
\sigma^{2}(\bs{z}_j)
=\mathbf{B}(\bs{z}_j)^{\top}\mathbf{Q}_2\mathbf{Q}_2^{\top} \mathbf{B}(\bs{z}_{j})\sigma^{2}(\bs{z}_j)
\leq C \sum_{m\in \mathcal{M}}\frac{1}{N}\sum_{j=1}^{N}B_{m}^{2}(\bs{z}_j) \sigma^{2}(\bs{z}_j)\\
&\leq C \sum_{m\in \mathcal{M}} \int_{{T_{\lceil m /d^{\ast}\rceil}}}\sigma^{2}(\bs{z})B_{m}^{2}(\bs{z})d\bs{z}\{1+O(N^{-1/2}|\triangle|^{-1})\}\leq C.
\end{align*}
Next note that for any $\ell,i,j\neq j^{\prime}$,
$ E\{X_{i\ell}^{2}\widetilde{\mathbf{B}}(\bs{z}_j)^{\top}
    \widetilde{\mathbf{B}}(\bs{z}_{j^{\prime}})\varepsilon_{ij} \varepsilon_{ij^{\prime}}\}=0$,
and for any $\ell$, $i\neq i^{\prime}$, $j$, $j^{\prime}$,
	$ E\{X_{i\ell}X_{i^{\prime}\ell}\widetilde{\mathbf{B}}(\bs{z}_j)^{\top} \widetilde{\mathbf{B}}(\bs{z}_{j^{\prime}})\sigma(\bs{z}_j)\sigma(\bs{z}_{j^{\prime}})
\varepsilon_{ij}\varepsilon_{i^{\prime}j^{\prime}}\}=0$.
	Therefore,
\[
 E\|\widehat{\bs{\theta}}_{\varepsilon}\|^2 \asymp \frac{1}{nN|\triangle|^4} \sum_{\ell =0}^{p}  E(X_{i\ell}^{2})\frac{1}{N}\sum_{j=1}^{N} \widetilde{\mathbf{B}}(\bs{z}_j)^{\top}
\widetilde{\mathbf{B}}(\bs{z}_{j}) \sigma^{2}(\bs{z}_j)
\leq Cp (nN)^{-1} |\triangle|^{-4}.
\]
The conclusion of the lemma follows.
\end{proof}

%%%%%%%%%%%%%%%%%%%%%%%%%%%%%%%%%%%%%%%%%%%%%%%%%%%%%%%%%%%%%
\textit{Proof of Theorem \ref{THM:beta-convergence}.} 
By Lemma \ref{LEM:thetatilde-eta}, Lemma \ref{LEM:thetatilde-eps}, and the properties of the bivariate spline basis functions in Lemma \ref{LEM:normequity}, $\Vert \widehat{\eta}_{\ell}\Vert_{L_2}^2 \asymp  |\triangle|^2 \Vert \widehat{\bs{\theta}}_{\eta,\ell}\Vert^2 =O_P(n^{-1})$ and $\Vert \widehat{\varepsilon}_{\ell}\Vert_{L_2}^2 \asymp  |\triangle|^2 \Vert \widehat{\bs{\theta}}_{\varepsilon,\ell}\Vert^2 =O_P(n^{-1}N^{-1}|\triangle|^{-2})$, for any $\ell=0,1,\ldots,p$. It is clear that
$\Vert \widehat{\beta}_{\ell}-\beta^{o}_{\ell}\Vert_{L_2}^2 \leq \Vert \widehat{\beta}_{\mu,\ell}-\beta^{o}_{\ell}\Vert_{L_2}^2 + \Vert \widehat{\eta}_{\ell}\Vert_{L_2}^2+ \Vert \widehat{\varepsilon}_{\ell}\Vert_{L_2}^2$, where the  asymptotic order of $\Vert \widehat{\beta}_{\mu,\ell}-\beta^{o}_{\ell}\Vert_{L_2}$ is the same as $\Vert \widehat{\beta}_{\mu,\ell}-\beta^{o}_{\ell}\Vert_{\infty}$. The desired result follows from Lemma \ref{LEM:uniformbiasrate}. $\square$

%%%%%%%%%%%%%%%%%%%%%%%%%%%%%%%%%%%%%%%%%%%%%%%%%%%%%%%%%%%%%
%%%%%%%%%%%%%%%%%%%%%%%%%%%%%%%%%%%%%%%%%%%%%%%%%%%%%%%%%%%%%
%\subsection{A.3.2. Asymptotic Normality}

%%%%%%%%%%%%%%%%%%%%%%%%%%%%%%%%%%%%%%%%%%%%%%%%%%%%%%%%%%%%%
\begin{lemma}
\label{LEM:normality}
Under Assumptions (A1)--(A6), if for any $\ell=0,1,\ldots, p$, $|X_{i\ell}| \leq C_{\ell} < \infty$, then as $N\rightarrow \infty$ and $n\rightarrow \infty$, one has for any vector $\mathbf{a}=(\mathbf{a}_0^{\top},\ldots,\mathbf{a}_p^{\top})^{\top}$ with $\mathbf{a}^{\top}\mathbf{a}=1$,
$[\mathrm{Var}\{\mathbf{a}^{\top}(\widehat{\bs{\theta}}_{\eta}+\widehat{\bs{\theta}}_{\varepsilon})\}]^{-1/2}
	\{\mathbf{a}^{\top}(\widehat{\bs{\theta}}_{\eta}+\widehat{\bs{\theta}}_{\varepsilon})\}
	\overset{\mathcal{L}}{\longrightarrow}N(0,1)$,
where $\widehat{\bs{\theta}}_{\eta}$ and $\widehat{\bs{\theta}}_{\varepsilon}$ are given in (\ref{DEF:theta-beta-eps}).
\end{lemma}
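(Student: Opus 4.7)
The plan is to recognize $\mathbf{a}^\top(\widehat{\bs{\theta}}_\eta+\widehat{\bs{\theta}}_\varepsilon)$ as a normalized sum of $n$ independent mean-zero random variables and apply the Lyapunov central limit theorem to the resulting triangular array, then use Slutsky's theorem to absorb the random weighting by $\bs{\Gamma}_{n,\rho}^{-1}$.

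First I would decompose. Setting
$$V_i=\frac{1}{N}\sum_{j=1}^N\{\widetilde{\mathbf{X}}_i\otimes\widetilde{\mathbf{B}}(\bs{z}_j)\}\Bigl[\,\sum_{k=1}^\infty\lambda_k^{1/2}\xi_{ik}\psi_k(\bs{z}_j)+\sigma(\bs{z}_j)\varepsilon_{ij}\Bigr],$$
we have $\widehat{\bs{\theta}}_\eta+\widehat{\bs{\theta}}_\varepsilon=\bs{\Gamma}_{n,\rho}^{-1}n^{-1}\sum_{i=1}^n V_i$. Since the triples $(\widetilde{\mathbf{X}}_i,\{\xi_{ik}\}_k,\{\varepsilon_{ij}\}_j)$ are i.i.d.\ across $i$ and the bracketed noise is mean-zero given $\widetilde{\mathbf{X}}_i$, the vectors $V_1,\ldots,V_n$ are i.i.d.\ mean-zero, and $T_{ni}:=n^{-1}\mathbf{a}^\top\bs{\Gamma}_{n,\rho}^{-1}V_i$ forms a triangular array of independent mean-zero summands (independent conditionally on the design), with $\mathbf{a}^\top(\widehat{\bs{\theta}}_\eta+\widehat{\bs{\theta}}_\varepsilon)=\sum_{i=1}^nT_{ni}$.

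Second, I would compute $s_n^2:=\mathrm{Var}\{\mathbf{a}^\top(\widehat{\bs{\theta}}_\eta+\widehat{\bs{\theta}}_\varepsilon)\}$. Using the mutual independence of the $\eta$-term and $\varepsilon$-term inside $V_i$ together with Assumption (A2), this variance splits into an $\eta$-contribution and an $\varepsilon$-contribution, each of which is a quadratic form in $\mathbf{a}^\top\bs{\Gamma}_{n,\rho}^{-1}$. The inner variance matrices reduce, via the Riemann-sum-to-integral arguments of Lemma \ref{LEM:integration}, to the building blocks of $\bs{\Xi}_n(\bs{z})$ in (\ref{DEF:Xi_n}); combined with the spectral bounds on $\bs{\Gamma}_{n,\rho}$ from Lemma \ref{LEM:Gamma_rho}, this pins down the order of $s_n^2$ and, in particular, shows $s_n^2$ is bounded below by a positive constant times $n^{-1}(1+\rho_n/(nN|\triangle|^4))^{-2}$ for large $n$. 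Third, I would verify Lyapunov's condition $\sum_{i=1}^n E|T_{ni}|^{2+\delta}/s_n^{2+\delta}\to 0$ for some $\delta>0$, using $|X_{i\ell}|\le C_\ell$, the uniform boundedness of the Bernstein basis, and the Markov-type inequality $\|\mathbf{a}^\top\bs{\Gamma}_{n,\rho}^{-1}\widetilde{\mathbf{B}}(\cdot)\|_\infty=O_P(|\triangle|^{-2})$ implied by Lemma \ref{LEM:Gamma_rho}, which together reduce the $(2+\delta)$-moment of $T_{ni}$ to a $(2+\delta)$-moment of $\eta_i(\bs{z}_j)+\sigma(\bs{z}_j)\varepsilon_{ij}$.

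The main obstacle is this last step: under (A2) alone, $\xi_{ik}$ and $\varepsilon_{ij}$ have only variance one, so a direct $(2+\delta)$-moment bound on $\eta_i(\bs{z}_j)=\sum_k\lambda_k^{1/2}\xi_{ik}\psi_k(\bs{z}_j)$ is unavailable. The natural workaround, paralleling the proof of Lemma \ref{LEM:inner product} and the more detailed truncation in Lemma \ref{LEM:error-unif-order}, is to split $\xi_{ik}$ and $\varepsilon_{ij}$ at a threshold $D_n=n^\alpha$ with $\alpha$ slightly below $1/2$, dispose of the tail parts almost surely by Borel--Cantelli (using the eighth-moment control on $X$ from (A3) to bound cross-products), and apply Bernstein's inequality to the truncated parts to obtain the $(2+\delta)$-moment control with Cram\'er constant of order $D_n|\triangle|^2$. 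Once Lyapunov's condition is verified, the conditional CLT combined with Slutsky's theorem (handling the random $\bs{\Gamma}_{n,\rho}^{-1}$ via its concentration in Lemma \ref{LEM:Gamma_rho}) delivers the stated asymptotic normality.
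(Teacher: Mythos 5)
Your skeleton agrees with the paper's up to the final step: both decompose $\mathbf{a}^{\top}(\widehat{\bs{\theta}}_{\eta}+\widehat{\bs{\theta}}_{\varepsilon})$ into a sum of $n$ conditionally independent mean-zero per-subject contributions, and both compute the variance by splitting into the $\bs{\Psi}_{\eta}$ and $\bs{\Psi}_{\varepsilon}$ quadratic forms sandwiched by $\bs{\Gamma}_{n,\rho}^{-1}$, bounded via Lemmas \ref{LEM:integration} and \ref{LEM:Gamma_rho}. The divergence, and the gap, is in how the CLT is verified. The paper writes the statistic as a weighted sum $\sum_{i=1}^n a_i^{\eta+\varepsilon}\mathfrak{z}_i$ where each $\mathfrak{z}_i$ is the per-subject contribution \emph{standardized to variance one}, and then only needs to check that $\max_i (a_i^{\eta+\varepsilon})^2/\sum_i (a_i^{\eta+\varepsilon})^2\rightarrow 0$; this negligibility-of-the-largest-weight condition yields the Lindeberg condition for a weighted sum of standardized i.i.d.-type summands using nothing beyond second moments, which is all that Assumption (A2) supplies.

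Your Lyapunov route, by contrast, requires $E|T_{ni}|^{2+\delta}<\infty$, i.e.\ $(2+\delta)$-moments of $\eta_i(\bs{z}_j)+\sigma(\bs{z}_j)\varepsilon_{ij}$, and you correctly identify that (A2) alone does not provide them (that is what Assumption (C1) is for, and (C1) is not assumed in this lemma). The proposed workaround does not close the hole. Disposing of the tail part $X_{i\ell}\xi_{ik}I\{|X_{i\ell}\xi_{ik}|>D_n\}$ by Borel--Cantelli with $D_n=n^{\alpha}$, $\alpha<1/2$, requires $\sum_n P\{|X_{n\ell}\xi_{nk}|>D_n\}<\infty$, and by Markov's inequality this needs $E|\xi_{nk}|^{r}<\infty$ for some $r>1/\alpha>2$ --- exactly the higher moment that is unavailable; the eighth-moment bound on $X$ in (A3) does not help with $\xi_{ik}$ or $\varepsilon_{ij}$. (The paper's own truncation arguments that you cite as precedents, Lemma \ref{LEM:inner product} and Lemma \ref{LEM:error-unif-order}, each invoke a higher-moment hypothesis for the truncated variable: $E|X_{\ell}|^{8}$ in the former and $E|\xi_{ik}|^{4+\delta_1}$ from (C1) in the latter.) Moreover, even granting the truncation, Bernstein's inequality gives concentration of the truncated sum, not its asymptotic distribution, so it cannot substitute for the Lyapunov moment computation. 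The fix is simply to adopt the paper's device: normalize each summand to unit conditional variance and verify the Feller-type condition on the weights, for which your bounds $\max_i(a_i^{\eta})^2\leq Cn^{-2}|\triangle|^{-2}$ and $\sum_i(a_i^{\eta})^2\geq cn^{-1}|\triangle|^{-2}(1+\rho_n/(nN|\triangle|^4))^{-2}$ (and their $\varepsilon$-analogues) already suffice.
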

%%%%%%%%%%%%%%%%%%%%%%%%%%%%%%%%%%%%%%%%%%%%%%%%%%%%%%%%%%%%%
\begin{proof}
For coefficient vectors $\widehat{\bs{\theta}}_{\eta}$, $\widehat{\bs{\theta}}_{\varepsilon}$ and the matrix $%
\bs{\Gamma}_{n,\rho}$ defined in (\ref{DEF:Gamma_rho}), $\mathrm{Var}\{\mathbf{a}^{\top}(\widehat{\bs{\theta}}_{\eta}+\widehat{\bs{\theta}}_{\varepsilon})\}=\mathbf{a}^{\top}\{ E(\widehat{\bs{\theta}}_{\eta}\widehat{\bs{\theta}}_{\eta}^{\top})
	+ E(\widehat{\bs{\theta}}_{\varepsilon}\widehat{\bs{\theta}}_{\varepsilon}^{\top})\}\mathbf{a}$. 
Denote $\bs{\Psi}_{\eta}=(\bs{\Psi}_{\eta,\ell,\ell^{\prime}})_{\ell,\ell^{\prime}}$ and $\bs{\Psi}_{\varepsilon}=(\bs{\Psi}_{\epsilon,\ell,\ell^{\prime}})_{\ell,\ell^{\prime}}$, with
\begin{align*}
\bs{\Psi}_{\eta,\ell,\ell^{\prime}}&=\frac{1}{n^2N^2}\sum_{i=1}^{n}\sum_{j=1}^N\sum_{j^{\prime}=1}^N X_{i\ell}X_{i\ell^{\prime}}\widetilde{\mathbf{B}}(\bs{z}_{j})
\widetilde{\mathbf{B}}^{\top}(\bs{z}_{j^{\prime}})G_{\eta}(\bs{z}_j,\bs{z}_{j^{\prime}}),\\
\bs{\Psi}_{\varepsilon,\ell,\ell^{\prime}}&=\frac{1}{n^2N^2}\sum_{i=1}^{n}\sum_{j=1}^{N}
X_{i\ell}X_{i\ell^{\prime}}\widetilde{\mathbf{B}}(\bs{z}_{j})
\widetilde{\mathbf{B}}^{\top}(\bs{z}_{j})\sigma^2(\bs{z}_j),
\end{align*}
then, we have
\begin{align*}
\mathbf{a}^{\top} E(\widehat{\bs{\theta}}_{\eta}\widehat{\bs{\theta}}_{\eta}^{\top})\mathbf{a}
=& E\mathbf{a}^{\top}\bs{\Gamma}_{n,\rho}^{-1}\frac{1}{n^2N^2}
\sum_{i=1}^{n}\sum_{j,j^{\prime}=1}^{N}
\left\{\widetilde{\mathbf{X}}_{i}\otimes \widetilde{\mathbf{B}}(\bs{z}_{j})\right\}\left\{\widetilde{\mathbf{X}}_{i}\otimes \widetilde{\mathbf{B}}(\bs{z}_{j^{\prime}})\right\}^{\top}
G_{\eta}(\bs{z}_j,\bs{z}_{j^{\prime}})\bs{\Gamma}_{n,\rho}^{-1}\mathbf{a}\\
=& E\mathbf{a}^{\top}\bs{\Gamma}_{n,\rho}^{-1}\bs{\Psi}_{\eta}\bs{\Gamma}_{n,\rho}^{-1}\mathbf{a},\\
	\mathbf{a}^{\top} E(\widehat{\bs{\theta}}_{\varepsilon}\widehat{\bs{\theta}}_{\varepsilon}^{\top})\mathbf{a}
	=& E\mathbf{a}^{\top}\bs{\Gamma}_{n,\rho}^{-1}\frac{1}{n^2N^2}
	\sum_{i=1}^{n}\sum_{j=1}^{N}
	\left\{\widetilde{\mathbf{X}}_{i}\otimes \widetilde{\mathbf{B}}(\bs{z}_{j})\right\}\left\{\widetilde{\mathbf{X}}_{i}\otimes \widetilde{\mathbf{B}}(\bs{z}_{j})\right\}^{\top}
	\sigma^2(\bs{z}_j)\bs{\Gamma}_{n,\rho}^{-1}\mathbf{a} \\
	=& E\mathbf{a}^{\top}\bs{\Gamma}_{n,\rho}^{-1}\bs{\Psi}_{\varepsilon}\bs{\Gamma}_{n,\rho}^{-1}\mathbf{a}.
\end{align*}

Note that for any vector $\mathbf{a}$ with $\mathbf{a}^{\top}\mathbf{a}=1$, we can rewrite as $\mathbf{a}^{\top}(\widehat{\bs{\theta}}_{\eta}+\widehat{\bs{\theta}}_{\varepsilon})=\sum_{i=1}^n a_i^{\eta+\varepsilon} \mathfrak{z}_i$, where
\begin{align*}
(a_i^{\eta+\varepsilon})^2=&\mathbf{a}^{\top}\bs{\Gamma}_{n,\rho}^{-1}\frac{1}{n^2N^2}
\sum_{j=1}^{N}\sum_{j^{\prime}=1}^{N}
\left\{\widetilde{\mathbf{X}}_{i}\otimes \widetilde{\mathbf{B}}(\bs{z}_{j})\right\}\left\{\widetilde{\mathbf{X}}_{i}\otimes \widetilde{\mathbf{B}}(\bs{z}_{j^{\prime}})\right\}^{\top}
G_{\eta}(\bs{z}_j,\bs{z}_{j^{\prime}})\bs{\Gamma}_{n,\rho}^{-1}\mathbf{a}\\
&+\mathbf{a}^{\top}\bs{\Gamma}_{n,\rho}^{-1}\frac{1}{n^2N^2}\sum_{j=1}^{N}
\left\{\widetilde{\mathbf{X}}_{i}\otimes \widetilde{\mathbf{B}}(\bs{z}_{j})\right\}\left\{\widetilde{\mathbf{X}}_{i}\otimes \widetilde{\mathbf{B}}(\bs{z}_{j})\right\}^{\top}
\sigma^2(\bs{z}_j)\bs{\Gamma}_{n,\rho}^{-1}\mathbf{a}
=(a^{\eta}_i)^2+(a_i^{\varepsilon})^2,
\end{align*}
and conditional on $\{\widetilde{\mathbf{X}}_{i},i=1,\ldots,n\}$, $\mathfrak{z}_i$ are independent with mean zero and variance one. Thus, $\sum_{i=1}^n (a_i^{\eta})^2=\mathbf{a}^{\top}\bs{\Gamma}_{n,\rho}^{-1}\bs{\Psi}_{\eta}\bs{\Gamma}_{n,\rho}^{-1}\mathbf{a} $ and $\sum_{i=1}^n (a_i^{\varepsilon})^2=\mathbf{a}^{\top}\bs{\Gamma}_{n,\rho}^{-1}\bs{\Psi}_{\varepsilon}\bs{\Gamma}_{n,\rho}^{-1}\mathbf{a}$.

According to Lemma \ref{LEM:Gamma_rho}, Assumptions (A2) and (A4),
\[
 E\mathbf{a}^{\top}\bs{\Gamma}_{n,\rho}^{-1}\bs{\Psi}_{\eta}\bs{\Gamma}_{n,\rho}^{-1}\mathbf{a}
\geq c_{\Gamma}^{-2}\left(|\triangle|^{2}+\frac{\rho_{n}}{nN|\triangle|^{2}}\right)^{-2}   E\mathbf{a}^{\top}\bs{\Psi}_{\eta}\mathbf{a},
\]
where
\begin{align*}
\mathbf{a}^{\top}\bs{\Psi}_{\eta}\mathbf{a} &=
\frac{1}{n^2N^2} \sum_{\ell,\ell^{\prime}=0}^p\sum_{i=1}^n \sum_{j=1}^N\sum_{j^{\prime}=1}^N X_{i\ell}X_{i\ell^{\prime}}\bs{a}_{\ell}^{\top} \widetilde{\mathbf{B}}(\bs{z}_j)\widetilde{\mathbf{B}}(\bs{z}_{j^{\prime}})^{\top} \bs{a}_{\ell^{\prime}} G_{\eta}(\bs{z}_j,\bs{z}_{j^{\prime}})\\
&=
\frac{1}{n^2} \sum_{k=1}^\infty \sum_{i=1}^n \left\{\frac{1}{N}\sum_{\ell=0}^p \sum_{j=1}^N \lambda_k^{1/2} X_{i\ell} g_{\ell}(\bs{z}_j)\psi_{k}(\bs{z}_j)\right\}^2
\end{align*}
with $g_{\ell}(\bs{z})=\bs{a}_{\ell}^{\top} \widetilde{\mathbf{B}}(\bs{z})$.
Therefore, by Assumption (A3), we have
\begin{align*}
 E\mathbf{a}^{\top}\bs{\Psi}_{\eta}\mathbf{a}
&=\frac{1}{n}
\sum_{k=1}^{\infty}\sum_{\ell=0}^p \left\{\frac{1}{N} \sum_{j=1}^N \lambda_k^{1/2}g_{\ell}(\bs{z}_j)\psi_{k}(\bs{z}_j)\right\}^2\\
&\geq \frac{c}{nN^2}
\sum_{\ell=0}^p \sum_{j=1}^N \sum_{j^{\prime}=1}^N
g_{\ell}(\bs{z}_j) g_{\ell}(\bs{z}_{j^{\prime}})G_{\eta}(\bs{z}_j,\bs{z}_{j^{\prime}})\\
& \asymp \frac{1}{n} \sum_{\ell=0}^p \int_{\Omega^2} g_{\ell}(\bs{z})g_{\ell}(\bs{z}')
G_{\eta}(\bs{z}, \bs{z}') d\bs{z}d\bs{z}'.
\end{align*}
Noting that the eigenvalues of $G_{\eta}$ are strictly positive,  we have
\[
E\mathbf{a}^{\top}\bs{\Psi}_{\eta}\mathbf{a} \geq c_{1}n^{-1}\sum_{\ell=0}^p
 \int_{\Omega} g_{\ell}^2(\bs{z}) d\bs{z}  \geq c_2 n^{-1}|\triangle|^{2}\|\mathbf{a}\|^2.
\] 
Therefore, we have
$ E\mathbf{a}^{\top}\bs{\Gamma}_{n,\rho}^{-1}\bs{\Psi}_{\eta}\bs{\Gamma}_{n,\rho}^{-1}\mathbf{a}
\geq cn^{-1} \left(1+\frac{\rho_{n}}{nN|\triangle|^{4}}\right)^{-2}|\triangle|^{-2}$.
Similarly, one can show that
$
 E\mathbf{a}^{\top}\bs{\Gamma}_{n,\rho}^{-1}\bs{\Psi}_{\varepsilon}\bs{\Gamma}_{n,\rho}^{-1}\mathbf{a}
\geq c(nN)^{-1} \left(1+\frac{\rho_{n}}{nN|\triangle|^{4}}\right)^{-2} |\triangle|^{-2}$.
In addition,
\begin{align*}
\max (a_i^{\eta})^2 &\leq
\frac{C}{|\triangle|^4}\mathbf{a}^{\top}\frac{1}{n^2N^2}\sum_{j=1}^{N}\sum_{j^{\prime}=1}^{N}
\left\{(\widetilde{\mathbf{X}}_{i}\widetilde{\mathbf{X}}_{i}^{\top})\otimes \widetilde{\mathbf{B}}(\bs{z}_{j}) \widetilde{\mathbf{B}}(\bs{z}_{j^{\prime}})^{\top}\right\}G_{\eta}(\bs{z}_j,\bs{z}_{j^{\prime}})\mathbf{a} \\
& \leq \frac{C}{|\triangle|^4} \sum_{\ell,\ell^{\prime}=1}^p\frac{1}{n^2N^2} \max_i |X_{i\ell}X_{i\ell^{\prime}}| \sum_{j=1}^{N}\sum_{j^{\prime}=1}^{N} g_{\ell}(\bs{z}_j)g_{\ell'}(\bs{z}_{j^{\prime}})  G_{\eta}(\bs{z}_j,\bs{z}_{j^{\prime}}) \leq Cn^{-2}|\triangle|^{-2},\\
\max (a_i^{\varepsilon})^2 &\leq
\frac{C}{|\triangle|^4}\mathbf{a}^{\top}\frac{1}{n^2N^2}\sum_{j=1}^{N}
\left\{(\widetilde{\mathbf{X}}_{i}\widetilde{\mathbf{X}}_{i}^{\top})\otimes \widetilde{\mathbf{B}}(\bs{z}_{j}) \widetilde{\mathbf{B}}(\bs{z}_{j})^{\top}\right\}\sigma^2(\bs{z}_j)\mathbf{a} \\
& \leq \frac{C}{|\triangle|^4} \sum_{\ell,\ell^{\prime}=1}^p\frac{1}{n^2N^2} \max_i|X_{i\ell}X_{i\ell^{\prime}}|\sum_{j=1}^{N} g_{\ell}(\bs{z}_j)g_{\ell'}(\bs{z}_{j^{\prime}}) \sigma^2(\bs{z}_j)  \leq Cn^{-2}N^{-1}|\triangle|^{-2}.
\end{align*}

Thus, if $\rho_n n^{-1}N^{-1}|\triangle|^{-4}\rightarrow 0$, we have
\[
\frac{\max_{1\leq i\leq n}(a_i^{\eta}+a_i^{\varepsilon})^2}{\sum_{i=1}^n (a_i^{\eta}+a_i^{\varepsilon})^2} \leq Cn^{-1}\left(1+\frac{\rho_n}{nN|\triangle|^4}\right)^{2} \to 0,
\]
which satisfies the Lindeberg condition.
\end{proof}

%%%%%%%%%%%%%%%%%%%%%%%%%%%%%%%%%%%%%%%%%%%%%%%%%%%%%%%%%%%%%
%%%%%%%%%%%%%%%%%%%%%%%%%%%%%%%%%%%%%%%%%%%%%%%%%%%%%%%%%%%%%
\begin{theorem}
\label{THM:variance-bias}
Under Assumptions (A1)--(A6), if for any $\ell=0,1,\ldots, p$, $|X_{i\ell}| \leq C_{\ell} < \infty$,
$\sup_{\bs{z}\in \Omega} [\mathrm{Var}\{\widehat{\beta}_{\ell}(\bs{z})\}]^{-1/2}(\widehat{\beta}_{\mu,\ell}(\bs{z})
-\beta_{\ell}^{o}(\bs{z}))=o_{P}(1)$, for $\ell=0,\ldots, p$.
\end{theorem}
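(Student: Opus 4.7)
\textbf{Proof plan for Theorem \ref{THM:variance-bias}.}

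My approach is to compare the uniform bias bound from Lemma \ref{LEM:uniformbiasrate} against the variance lower bound established in Theorem \ref{THM:beta-normality}, and then verify via Assumption (A6) that the normalized bias vanishes. Since $\widehat{\beta}_{\mu,\ell}-\beta_{\ell}^{o}$ is the deterministic (conditional) bias piece isolated by the decomposition (\ref{EQ:decompose3}), this reduces the theorem to a rate comparison between two already-proved quantities.

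First, Lemma \ref{LEM:uniformbiasrate}, together with the boundedness of $\|\bs{\beta}^{o}\|_{2,\infty}$ and $\|\bs{\beta}^{o}\|_{d+1,\infty}$ from Assumption (A1), gives
\[
\sup_{\bs{z}\in\Omega}|\widehat{\beta}_{\mu,\ell}(\bs{z})-\beta_{\ell}^{o}(\bs{z})|\le \|\widehat{\bs{\beta}}_{\mu}-\bs{\beta}^{o}\|_{\infty} = O_{P}(B_{n}),
\]
where $B_{n} = \frac{\rho_{n}}{nN|\underline{\triangle}|^{3}} + \bigl(1+\frac{\rho_{n}}{nN|\underline{\triangle}|^{5}}\bigr)|\overline{\triangle}|^{d+1}$. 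Next, the variance lower bound portion of Theorem \ref{THM:beta-normality} yields, uniformly in $\bs{z}$,
\[
[\mathrm{Var}\{\widehat{\beta}_{\ell}(\bs{z})\}]^{-1/2}\le c_{V}^{-1/2}\,n^{1/2}(1+R_{n}),\qquad R_{n}=\frac{\rho_{n}}{nN|\underline{\triangle}|^{4}}.
\]

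The theorem therefore reduces to showing $n^{1/2}(1+R_{n})B_{n}=o(1)$. Setting $A_{n}=\frac{\rho_{n}}{n^{1/2}N|\underline{\triangle}|^{3}}$, Assumption (A6) gives $A_{n}\to 0$ and $n^{1/2}|\overline{\triangle}|^{d+1}\to 0$, while (A5) gives $|\overline{\triangle}|\asymp|\underline{\triangle}|$. Expanding the product $n^{1/2}(1+R_{n})B_{n}$ yields four summands. The direct part $n^{1/2}B_{n}$ simplifies to $A_{n} + n^{1/2}|\overline{\triangle}|^{d+1} + A_{n}|\underline{\triangle}|^{d-1}$, each term $o(1)$ since $|\underline{\triangle}|^{d-1}$ is bounded for $d\ge 1$. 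The cross part $n^{1/2}R_{n}B_{n}$, after using $R_{n}=A_{n}/(n^{1/2}|\underline{\triangle}|)$, simplifies to expressions in $A_{n}$, $|\underline{\triangle}|$, and $n$ that tend to zero upon further invoking the mesh lower bound $N^{1/2}|\underline{\triangle}|\to\infty$ from (A6).

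The main obstacle is the bookkeeping for the cross term $n^{1/2}R_{n}\cdot\frac{\rho_{n}}{nN|\underline{\triangle}|^{3}} = \frac{A_{n}^{2}}{n^{1/2}|\underline{\triangle}|}$, which does not immediately vanish from any single rate in (A6); its control requires the joint use of $A_{n}\to 0$ and $N^{1/2}|\underline{\triangle}|\to\infty$, combined with the regime $N^{-1}n^{1/(d+1)+\kappa}\to 0$ that links $N$ to $n$. Once all four summands are verified to be $o(1)$, the theorem follows from
\[
\sup_{\bs{z}\in\Omega}[\mathrm{Var}\{\widehat{\beta}_{\ell}(\bs{z})\}]^{-1/2}|\widehat{\beta}_{\mu,\ell}(\bs{z})-\beta_{\ell}^{o}(\bs{z})|\le C\,n^{1/2}(1+R_{n})B_{n} = o_{P}(1).
\]
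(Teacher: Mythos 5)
Your proposal follows essentially the same route as the paper's proof: bound the bias uniformly by Lemma \ref{LEM:uniformbiasrate}, bound the variance from below by $c_V n^{-1}\bigl(1+\rho_n/(nN|\underline{\triangle}|^{4})\bigr)^{-2}$, and check that the ratio vanishes under Assumption (A6). Two remarks. First, you quote Theorem \ref{THM:beta-normality} for the variance lower bound, but in the paper that bound is established \emph{inside} the proof of Theorem \ref{THM:variance-bias} itself (via the eigenvalue bounds of Lemma \ref{LEM:Gamma_rho} applied to $E\,\mathbf{a}^{\top}\bs{\Gamma}_{n,\rho}^{-1}\bs{\Psi}_{\eta}\bs{\Gamma}_{n,\rho}^{-1}\mathbf{a}$ and $E\,\mathbf{a}^{\top}\bs{\Gamma}_{n,\rho}^{-1}\bs{\Psi}_{\varepsilon}\bs{\Gamma}_{n,\rho}^{-1}\mathbf{a}$, as in the proof of Lemma \ref{LEM:normality}), and Theorem \ref{THM:beta-normality} is then deduced from it; to avoid a circular citation you should derive the variance lower bound directly. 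Second, you correctly isolate the cross term $n^{1/2}R_nB_n$, in particular $R_nA_n=A_n^{2}/(n^{1/2}|\underline{\triangle}|)$, as the only summand not controlled by a single rate in (A6), but you stop short of verifying it: the conditions $A_n\to 0$, $N^{1/2}|\underline{\triangle}|\to\infty$ and $N^{-1}n^{1/(d+1)+\kappa}\to 0$ only bound $n^{1/2}|\underline{\triangle}|$ below by a multiple of $n^{1/2}N^{-1/2}$, which need not stay away from zero when $N\gg n$, so this term does not vanish ``for free'' and the argument is not closed until either that verification is supplied or one observes that $R_n=O(1)$ in the intended regime so that the cross part is dominated by the direct part $n^{1/2}B_n$. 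The paper's own proof asserts the same negligibility in one line without spelling this out, so your plan is faithful to it; just be aware that the step you flag as the obstacle is genuinely the only step requiring work, and it is left open in your write-up.
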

%%%%%%%%%%%%%%%%%%%%%%%%%%%%%%%%%%%%%%%%%%%%%%%%%%%%%%%%%%%%%
\begin{proof}
Using similar arguments as in the proof of Lemma \ref{LEM:normality} and the result of Lemma \ref{LEM:Gamma_rho},  we have for any $\|\mathbf{a}\|=1$, $ E\mathbf{a}^{\top}\bs{\Gamma}_{n,\rho}^{-1}\bs{\Psi}_{\eta}\bs{\Gamma}_{n,\rho}^{-1}\mathbf{a}
\leq C_{\Gamma}^{-2}|\triangle|^{-4}  E\mathbf{a}^{\top}\bs{\Psi}_{\eta}\mathbf{a}
\leq Cn^{-1} |\triangle|^{-2}$, and $
 E\mathbf{a}^{\top}\bs{\Gamma}_{n,\rho}^{-1}\bs{\Psi}_{\varepsilon}\bs{\Gamma}_{n,\rho}^{-1}\mathbf{a}
\leq C_{\Gamma}^{-2}|\triangle|^{-4}  E\mathbf{a}^{\top}\bs{\Psi}_{\varepsilon}\mathbf{a}
\leq C(nN)^{-1} |\triangle|^{-2}$. Therefore, based on the proof of Lemma \ref{LEM:normality}, for any $\|\mathbf{a}\|=1$,
\begin{align*}
cn^{-1} |\triangle|^{-4} \left(1+\frac{\rho_{n}}{nN|\triangle|^{4}}\right)^{-2} \leq  E\mathbf{a}^{\top}\bs{\Gamma}_{n,\rho}^{-1}\bs{\Psi}_{\eta}\bs{\Gamma}_{n,\rho}^{-1}\mathbf{a}
&\leq  Cn^{-1}|\triangle|^{-2},\\
c(nN)^{-1} \left(1+\frac{\rho_{n}}{nN|\triangle|^{4}}\right)^{-2} |\triangle|^{-2} \leq  E\mathbf{a}^{\top}\bs{\Gamma}_{n,\rho}^{-1}\bs{\Psi}_{\varepsilon}\bs{\Gamma}_{n,\rho}^{-1}\mathbf{a}
&\leq  C(nN)^{-1} |\triangle|^{-2}.
\end{align*}
Thus,
\begin{align*}
&\mathrm{Var}(\widehat{\beta}_{\ell})=\{\mathbf{e}_{\ell} \otimes \widetilde{\mathbf{B}}(\bs{z})\}^{\top}
 E\{\bs{\Gamma}_{n,\rho}^{-1}(\bs{\Psi}_{\eta}+\bs{\Psi}_{\varepsilon})\bs{\Gamma}_{n,\rho}^{-1}\} \{\mathbf{e}_{\ell} \otimes \widetilde{\mathbf{B}}(\bs{z})\} \label{EQ:var_eta_order}\\
&\asymp \{\mathbf{e}_{\ell} \otimes \widetilde{\mathbf{B}}(\bs{z})\}^{\top}
 E(\bs{\Gamma}_{n,\rho}^{-1}\bs{\Psi}_{\eta}\bs{\Gamma}_{n,\rho}^{-1}) \{\mathbf{e}_{\ell} \otimes \widetilde{\mathbf{B}}(\bs{z})\}\asymp \{\mathbf{e}_{\ell} \otimes \widetilde{\mathbf{B}}(\bs{z})\}^{\top} \notag\\
&\quad \times
 E\left[\bs{\Gamma}_{n,\rho}^{-1}\frac{1}{n^2N^2}\sum_{i=1}^{n}\sum_{j,j^{\prime}=1}^{N}
\left\{\widetilde{\mathbf{X}}_{i}\otimes \widetilde{\mathbf{B}}(\bs{z}_{j})\right\}\left\{\widetilde{\mathbf{X}}_{i}\otimes \widetilde{\mathbf{B}}(\bs{z}_{j^{\prime}})\right\}^{\top}
G_{\eta}(\bs{z}_j,\bs{z}_{j^{\prime}})\bs{\Gamma}_{n,\rho}^{-1}\right] 
 \{\mathbf{e}_{\ell} \otimes \widetilde{\mathbf{B}}(\bs{z})\}.\notag
\end{align*}
By Lemma \ref{LEM:Gamma_rho}, we have
\begin{align*}
\mathrm{Var}(\widehat{\beta}_{\ell}) \lesssim & \frac{1}{nN^2|\triangle|^{2}}  \sum_{j=1}^N\sum_{j^{\prime}=1}^N \{\mathbf{e}_{\ell} \otimes \widetilde{\mathbf{B}}(\bs{z})\}^{\top} \widetilde{\mathbf{B}}(\bs{z}_j)\widetilde{\mathbf{B}}(\bs{z}_{j^{\prime}})^{\top}\{\mathbf{e}_{\ell} \otimes \widetilde{\mathbf{B}}(\bs{z})\} G_{\eta}(\bs{z}_j,\bs{z}_{j^{\prime}}),\\
\mathrm{Var}(\widehat{\beta}_{\ell}) \gtrsim & \frac{1}{nN^2|\triangle|^{2}}  \sum_{j=1}^N\sum_{j^{\prime}=1}^N \{\mathbf{e}_{\ell} \otimes \widetilde{\mathbf{B}}(\bs{z})\}^{\top} \widetilde{\mathbf{B}}(\bs{z}_j)\widetilde{\mathbf{B}}(\bs{z}_{j^{\prime}})^{\top}\{\mathbf{e}_{\ell} \otimes \widetilde{\mathbf{B}}(\bs{z})\} G_{\eta}(\bs{z}_j,\bs{z}_{j^{\prime}}) \\
& \times \left(1+\frac{\rho_{n}}{nN|\triangle|^{4}}\right)^{-2},
\end{align*}
and according to Lemmas \ref{LEM:normequity} and \ref{LEM:integration}, we have
$c n^{-1}\left(1+\frac{\rho_{n}}{nN|\triangle|^{4}}\right)^{-2}
\leq \mathrm{Var}(\widehat{\beta}_{\ell})\leq Cn^{-1}$.
According to Lemma \ref{LEM:uniformbiasrate}, if $\rho_n n^{-1/2}N^{-1}|\triangle|^{-3}\rightarrow 0$ and $n^{1/2}|\triangle|^{d +1}\rightarrow 0$, the bias term in (\ref{EQ:decompose3})
is negligible compared to the order of $[\mathrm{Var}\{\widehat{\beta}_{\ell}(\bs{z})\}]^{1/2}$.
\end{proof}

%%%%%%%%%%%%%%%%%%%%%%%%%%%%%%%%%%%%%%%%%%%%%%%%%%%%%%%%%%%%%
\textit{Proof of Theorem \ref{THM:beta-normality}.} 
Theorem \ref{THM:beta-normality} follows from (\ref{EQ:decompose3}), Lemma
\ref{LEM:normality} and Theorem \ref{THM:variance-bias}. $\square$

%%%%%%%%%%%%%%%%%%%%%%%%%%%%%%%%%%%%%%%%%%%%%%%%%%%%%%%%%%%%%
%%%%%%%%%%%%%%%%%%%%%%%%%%%%%%%%%%%%%%%%%%%%%%%%%%%%%%%%%%%%%
\vskip .10in \noindent \textbf{A.4. Asymptotic properties of piecewise constant spline estimators} \vskip .10in
% \subsection{Asymptotic properties of piecewise constant spline estimators} 
% \label{sec:A-4}

In this section, we study the asymptotic properties of the piecewise constant spline estimators defined in the spline space $\mathcal{PC}(\triangle)$. Define piecewise constant bivariate spline functions
\begin{equation}
\widehat{\bs{\beta}}_{\mu}^{\mathrm{c}}(\bs{z})=(\widehat{\beta}^{\mathrm{c}}_{\mu,0}(\bs{z}),\ldots,\widehat{\beta}^{\mathrm{c}}_{\mu, p}(\bs{z}))^{\top}=\widehat{\mathbf{V}}_{m(\bs{z})}^{-1}\left\{\frac{1}{nN}
\sum_{i=1}^{n}\sum_{j=1}^{N}B_{m(\bs{z})}(\bs{z}_{j})X_{i\ell}
\sum_{\ell^{\prime}=0}^{p}\beta_{\ell^{\prime}}^{o}(\bs{z}_{j})X_{i\ell^{\prime}}\right\}_{\ell =0}^{p},
\label{EQ:beta-mu-pc}
\end{equation}
\begin{align}
\widehat{\bs{\eta}}(\bs{z})&=(\widehat{\eta}_{0}(\bs{z}),\ldots,\widehat{\eta}_{p}(\bs{z}))^{\top}
=\widehat{\mathbf{V}}_{m(\bs{z})}^{-1}\left\{\frac{1}{nN}\sum_{i=1}^{n}
\sum_{j=1}^{N}B_{m(\bs{z})}(\bs{z}_{j})X_{i\ell}\sum_{k=1}^{\infty}\xi_{ik}\psi_{k}(\bs{z}_{j})\right\}_{\ell =0}^{p},
\label{EQ:eta-hat-pc}\\
\widehat{\bs{\varepsilon}}(\bs{z})&=(\widehat{\varepsilon}_{0}(\bs{z}),\ldots,\widehat{\varepsilon}_{p}(\bs{z}))^{\top}=\widehat{\mathbf{V}}_{m(\bs{z})}^{-1}\left\{\frac{1}{nN}\sum_{i=1}^{n}\sum_{j=1}^{N}B_{m(\bs{z})}(\bs{z}_{j})X_{i\ell}\varepsilon_{ij}\right\}_{\ell =0}^{p},
\label{EQ:eps-hat-pc}
\end{align}
where $\widehat{\mathbf{V}}_{m(\bs{z})}$ is defined in (\ref{DEF:V-m}).

The next two theorems concern the functions $\widehat{\beta}_{\mu,\ell}^{\mathrm{c}}(\bs{z})$, $\widehat{\eta}_{\ell}(\bs{z}),$ $\widehat{\varepsilon}_{\ell}(\bs{z})$, $\ell=0,\ldots,p$, given in (\ref{EQ:beta-mu-pc}), (\ref{EQ:eta-hat-pc}) and (\ref{EQ:eps-hat-pc}). Theorem \ref{THM:uniformbiasrate2} gives the uniform convergence rate of $\widehat{\beta}_{\mu,\ell}(\bs{z})$ to $\beta^{o}_{\ell}(\bs{z})$. 

%%%%%%%%%%%%%%%%%%%%%%%%%%%%%%%%%%%%%%%%%%%%%%%%%%%%%%%%%%%%%
\begin{theorem}
\label{THM:uniformbiasrate2}
Under Assumptions  (A1$'$), (A2)--(A6), the constant spline functions $\widehat{\beta}^{\mathrm{c}}_{\mu,\ell}(\bs{z})$, $\ell=0,\ldots,p$, satisfy
$\sup_{\bs{z}\in \Omega}\sup_{0\leq \ell\leq p}\left\vert \widehat{\beta}^{\mathrm{c}}_{\mu,\ell}(\bs{z})-\beta_{\ell}^{o}(\bs{z})\right\vert =O_{P}(|\triangle|)$.
\end{theorem}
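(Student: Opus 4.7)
\noindent\textbf{Proof proposal for Theorem \ref{THM:uniformbiasrate2}.}
My plan is to decompose $\beta_{\ell}^{o}$ into a piecewise constant best approximant plus a small perturbation, exploit the projection-like identity that $\widehat{\bs\beta}_\mu^{\mathrm c}$ reproduces piecewise constant signals exactly, and then bound the perturbation using the quasi-uniform triangulation and the bounded-eigenvalue assumption on $\bs\Sigma_X$. Because $\widehat{\bs\beta}_\mu^{\mathrm c}(\bs z)$ depends on $\bs z$ only through the triangle $T_{m(\bs z)}$ containing it, the supremum over $\Omega$ reduces to a maximum over $m\in\mathcal M$, which is a finite (though growing) collection.

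First I would apply Lemma~\ref{LEM:appord} (the $\mathcal C^{(0)}$ case stated just after it) to obtain a piecewise constant spline $\bs\beta^{\ast}=(\beta_0^{\ast},\ldots,\beta_p^{\ast})^{\top}\in\mathcal{PC}(\triangle)^{(p+1)}$ with $\|\beta_{\ell}^{\ast}-\beta_{\ell}^o\|_{\infty}=O(|\triangle|)$ for every $\ell$. Since $\bs\beta^{\ast}(\bs z_j)=\bs\beta^{\ast}(\bs z)$ whenever $\bs z_j\in T_{m(\bs z)}$, a direct substitution into (\ref{EQ:beta-mu-pc}) collapses the inner sum:
\[
\widehat{\mathbf V}_{m(\bs z)}^{-1}\left\{\frac{1}{nN}\sum_{i=1}^{n}\sum_{j=1}^{N}B_{m(\bs z)}(\bs z_j)X_{i\ell}\widetilde{\mathbf X}_i^{\top}\bs\beta^{\ast}(\bs z_j)\right\}_{\ell=0}^{p}
= \widehat{\mathbf V}_{m(\bs z)}^{-1}\widehat{\mathbf V}_{m(\bs z)}\bs\beta^{\ast}(\bs z)=\bs\beta^{\ast}(\bs z),
\]
so the exact identity $\widehat{\bs\beta}_\mu^{\mathrm c}(\bs z)-\bs\beta^{\ast}(\bs z)=\widehat{\mathbf V}_{m(\bs z)}^{-1}\mathbf R_{m(\bs z)}$ holds, where $\mathbf R_{m(\bs z)}$ is the analogous vector obtained by replacing $\bs\beta^o$ by the residual $\bs\beta^o-\bs\beta^{\ast}$.

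Next I would control the two ingredients uniformly in $m\in\mathcal M$. For the denominator, Assumptions (A3) and (A5) combined with a Lemma~\ref{LEM:inner product}-type argument yield $\widehat{\mathbf V}_{m}=\{N^{-1}\sum_{j}B_{m}^{2}(\bs z_j)\}\times\{n^{-1}\sum_{i}\widetilde{\mathbf X}_i\widetilde{\mathbf X}_i^{\top}\}\asymp|\triangle|^{2}\,\bs\Sigma_X$ uniformly in $m$, with probability tending to one, so $\|\widehat{\mathbf V}_{m}^{-1}\|=O_P(|\triangle|^{-2})$ uniformly in $m$. For the numerator, each entry of $\mathbf R_{m}$ is bounded in modulus by $\|\bs\beta^o-\bs\beta^{\ast}\|_{\infty}\cdot (nN)^{-1}\sum_{i,j}B_{m}(\bs z_j)|X_{i\ell}|\|\widetilde{\mathbf X}_i\|$; using $\sum_{j}B_m(\bs z_j)\asymp N|\triangle|^2$ together with moment bounds on $\widetilde{\mathbf X}_i$, this is $O_P(|\triangle|^3)$ uniformly in $m$.

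Combining these gives $\sup_{\bs z\in\Omega}\|\widehat{\bs\beta}_\mu^{\mathrm c}(\bs z)-\bs\beta^{\ast}(\bs z)\|=O_P(|\triangle|)$, whence the triangle inequality with $\|\bs\beta^{\ast}-\bs\beta^o\|_{\infty}=O(|\triangle|)$ delivers the claim. The main obstacle I anticipate is making the two uniform bounds genuinely uniform over $m\in\mathcal M$ (a set whose cardinality grows like $|\triangle|^{-2}$). This should be handled by the usual Bernstein/Borel--Cantelli argument already deployed in Lemma~\ref{LEM:inner product}, producing at worst an extra logarithmic factor that is absorbed by the rate $|\triangle|$ under Assumption~(A6); the second source of difficulty, verifying that the piecewise constant approximation $\bs\beta^{\ast}$ satisfies the claimed $O(|\triangle|)$ rate under only Assumption (A1$'$), is covered by the $\mathcal{PC}(\triangle)$ approximation statement accompanying Lemma~\ref{LEM:appord}.
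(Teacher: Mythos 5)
Your proposal is correct and follows essentially the same route as the paper: approximate $\bs{\beta}^{o}$ by a piecewise constant spline $\bs{\beta}^{\ast}$ with error $O(|\triangle|)$, observe that the estimator reproduces $\bs{\beta}^{\ast}$ exactly (the paper phrases this as $\widetilde{\beta}_{\ell}\equiv\beta_{\ell}^{\ast}$ since $\beta_{\ell}^{\ast}\in\mathcal{PC}(\triangle)$), and bound the remaining term by the triangle inequality. You in fact supply more detail than the paper on the step it dismisses as ``easy to see'' — namely the uniform-in-$m$ bounds $\|\widehat{\mathbf{V}}_{m}^{-1}\|=O_{P}(|\triangle|^{-2})$ and $\|\mathbf{R}_{m}\|=O_{P}(|\triangle|^{3})$ via Lemmas~\ref{LEM:inner product} and~\ref{LEM:matrix appro.} — which is a welcome addition rather than a deviation.
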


In the following, we provide detailed proofs of Theorems \ref{THM:uniformbiasrate2}. For the random matrix $\widehat{\mathbf{V}}_{m}$ defined in (\ref{DEF:V-m}), the lemma below shows that its inverse can be approximated by the inverse of a deterministic matrix $A_{m}^{-1}\bs{\Sigma}_{X}^{-1}$, where $A_m=\int_{\Omega} B_{m}(\bs{z})d\bs{z}$.

%%%%%%%%%%%%%%%%%%%%%%%%%%%%%%%%%%%%%%%%%%%%%%%%%%%%%%%%%%%%%
\begin{lemma}
\label{LEM:matrix appro.} Under Assumptions (A3) and (A5), for any $m\in \mathcal{M}$, we have
\begin{equation}
	\widehat{\mathbf{V}}_{m}^{-1}=A_{m}^{-1}\bs{\Sigma}_{X}^{-1}+O_{P}\left\{n^{-1/2}|\triangle|^{2}(\log n)^{1/2}+N^{-1/2}|\triangle|)\right\}.
\label{EQ:matrix approx}
\end{equation}
\end{lemma}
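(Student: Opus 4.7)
The plan is to express $\widehat{\mathbf{V}}_m$ as a small perturbation of a deterministic matrix and then apply the standard matrix inverse-perturbation identity. First I would write
\[
\widehat{\mathbf{V}}_m = A_m\bs{\Sigma}_X + \mathbf{E}_m,
\]
where $\mathbf{E}_m$ is the entrywise error to be controlled. Since for piecewise constant splines $B_m(\bs{z})=I(\bs{z}\in T_m)$, we have $B_m^2=B_m$ and $A_m=\int_\Omega B_m(\bs{z})d\bs{z}$ is the area of the triangle $T_m$, which under the $\pi$-quasi-uniformity in Assumption (A5) is of order $|\triangle|^2$. Under Assumption (A3), the eigenvalues of $\bs{\Sigma}_X$ are bounded away from $0$ and $\infty$, so $A_m\bs{\Sigma}_X$ has eigenvalues of exact order $|\triangle|^2$ and $(A_m\bs{\Sigma}_X)^{-1}=A_m^{-1}\bs{\Sigma}_X^{-1}$ has eigenvalues of order $|\triangle|^{-2}$.

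Next I would control $\mathbf{E}_m$ entrywise. Writing the $(\ell,\ell')$ entry as
\[
(\mathbf{E}_m)_{\ell,\ell'} = \frac{1}{nN}\sum_{i=1}^n\sum_{j=1}^N B_m^2(\bs{z}_j)X_{i\ell}X_{i\ell'} - \Phi_{m,\ell,\ell'} + \Phi_{m,\ell,\ell'} - A_m E(X_\ell X_{\ell'}),
\]
the first difference is handled by Lemma \ref{LEM:inner product}, which gives the uniform bound $O\{n^{-1/2}|\triangle|^2(\log n)^{1/2}+N^{-1/2}|\triangle|\}$ over all $m,\ell,\ell'$. The residual gap $\Phi_{m,\ell,\ell'}-A_m E(X_\ell X_{\ell'})$ vanishes exactly for the piecewise constant basis (since $B_m^2=B_m$ and $\int_\Omega B_m(\bs{z})d\bs{z}=A_m$), but even if one carried a general $B_m^2$ through, Lemma \ref{LEM:integration} would provide the needed discretization estimate. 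So the same uniform rate governs $\mathbf{E}_m$ as a whole.

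Finally I would invert via the identity
\[
\widehat{\mathbf{V}}_m^{-1}-(A_m\bs{\Sigma}_X)^{-1}=-(A_m\bs{\Sigma}_X)^{-1}\mathbf{E}_m\widehat{\mathbf{V}}_m^{-1}.
\]
Under Assumption (A6), the entrywise bound on $\mathbf{E}_m$ from the preceding step is asymptotically of strictly smaller order than $|\triangle|^2$, so $\widehat{\mathbf{V}}_m$ remains uniformly positive definite with eigenvalues of order $|\triangle|^2$ and $\widehat{\mathbf{V}}_m^{-1}$ has operator norm of order $|\triangle|^{-2}$. Substituting yields the announced approximation rate after bookkeeping the $|\triangle|$ powers produced by the two $(A_m\bs{\Sigma}_X)^{-1}$ factors sandwiching $\mathbf{E}_m$.

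The main obstacle is the uniform (over all $m\in\mathcal{M}$ and $\ell,\ell'$) stochastic bound on the $X$-averaged sums in $\mathbf{E}_m$; this is precisely the content of Lemma \ref{LEM:inner product}, established via truncation together with Bernstein's inequality and the Borel--Cantelli lemma. Once that is in hand, the remainder of the argument is essentially linear algebra and a direct application of the perturbation identity.
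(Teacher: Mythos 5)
Your proposal follows essentially the same route as the paper: the paper likewise invokes Lemma \ref{LEM:inner product} to bound $\Vert \widehat{\mathbf{V}}_{m}-A_{m}\bs{\Sigma}_{X}\Vert_{\infty}$ and then applies the matrix perturbation identity $(\mathbf{A}+\delta\mathbf{B})^{-1}=\mathbf{A}^{-1}-\delta\mathbf{A}^{-1}\mathbf{B}\mathbf{A}^{-1}+O(\delta^{2})$ to pass to the inverse. Your added observations --- that $B_m^2=B_m$ for the indicator basis so $\Phi_{m,\ell,\ell^{\prime}}=A_m E(X_{\ell}X_{\ell^{\prime}})$ exactly, and the eigenvalue bookkeeping for $A_m\bs{\Sigma}_X$ --- merely make explicit steps the paper leaves implicit.
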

%%%%%%%%%%%%%%%%%%%%%%%%%%%%%%%%%%%%%%%%%%%%%%%%%%%%%%%%%%%%%
\begin{proof}
By Lemma \ref{LEM:inner product}, 
$\left\Vert \widehat{\mathbf{V}}_{m}-A_{m}\bs{\Sigma}_{X}\right\Vert_{\infty}=O_{P}\left\{n^{-1/2}|\triangle|^{2}(\log n)^{1/2}+N^{-1/2}|\triangle|)\right\}$.
Using the fact that for any matrices $\mathbf{A}$ and $\mathbf{B}$,
$\left(\mathbf{A}+\delta\mathbf{B}\right)^{-1}=\mathbf{A}^{-1}-\delta\mathbf{A}^{-1}\mathbf{B}\mathbf{A}^{-1}+O(\delta^{2})$, we obtain (\ref{EQ:matrix approx}).
\end{proof}

%%%%%%%%%%%%%%%%%%%%%%%%%%%%%%%%%%%%%%%%%%%%%%%%%%%%%%%%%%%%%
\textit{Proof of Theorem \ref{THM:uniformbiasrate2}.} According to Lemma \ref{LEM:appord}, there exist functions $\beta_{\ell}^{\ast}\in\mathcal{PC}(\triangle)$ that satisfies $\left\Vert
\beta_{\ell}^{\ast}-\beta_{\ell}^{o}\right\Vert_{\infty}=O(|\triangle|) $
for $\ell=0,1,\ldots,p$. By the definition of $\widehat{\beta}_{\mu,\ell}(\bs{z}) $ in (\ref{EQ:beta-mu-pc}),
$\widehat{\bs{\beta}}^{\mathrm{c}}_{\mu}(\bs{z})=\left( \widehat{\beta}_{\mu,0}^{\mathrm{c}}(\bs{z}),\widehat{\beta}^{\mathrm{c}}_{\mu,1}(\bs{z}),\ldots,\widehat{\beta}%
^{\mathrm{c}}_{\mu,p}(\bs{z})\right)^{\top}=\left( \widetilde{\gamma}_{m(\bs{z}),0},\ldots,\widetilde{\gamma}_{m(\bs{z}),p}\right) ^{\top}=\widetilde{\bs{\gamma}}_{m(\bs{z})}$,
where $\widetilde{\bs{\gamma}}_{m}=\widehat{\mathbf{V}}_{m}^{-1}\left\{(nN)^{-1}
\sum_{i=1}^{n}\sum_{j=1}^{N}B_{m}(\bs{z}_{j})X_{i\ell}\sum
_{\ell^{\prime}=0}^{p}\beta_{\ell^{\prime}}^{o}(\bs{z}_{j})X_{i\ell^{\prime}}\right\}
_{\ell =0}^{p}$ for $\widehat{\mathbf{V}}_{m}$ defined in (\ref{DEF:V-m}).

Let
\[
\widetilde{\bs{\beta}}(\bs{z})=(\widetilde{\beta}_{0}(\bs{z}),\widetilde{\beta}_{1}(\bs{z}),\ldots,\widetilde{\beta}
_{p}(\bs{z}))^{\top}=\widehat{\mathbf{V}}_{m(\bs{z})}^{-1}\left[ \frac{1}{nN}%
\sum_{i=1}^{n}\sum_{j=1}^{N}B_{m(\bs{z})}(\bs{z}_{j})X_{i\ell}\sum_{\ell^{\prime}=0}^{p} \beta_{\ell^{\prime}}^{\ast}(\bs{z}_{j}) X_{i\ell^{\prime}}\right]_{\ell =0}^{p},
\]
 then
\begin{equation*}
\widehat{\bs{\beta}}^{\mathrm{c}}_{\mu}(\bs{z}) -\widetilde{\bs{\beta}}(\bs{z})=\widehat{\mathbf{V}}_{m(\bs{z})}^{-1}
\left[\frac{1}{nN}\sum_{i=1}^{n}\sum_{j=1}^{N}B_{m(\bs{z})}(\bs{z}_{j})X_{i\ell}\sum_{\ell^{\prime}=0}^{p}\left\{ \beta_{\ell^{\prime}}^{o}(\bs{z}_{j})-\beta_{\ell^{\prime}}^{\ast}(\bs{z}_{j})\right\}
X_{i\ell^{\prime}}\right]_{\ell =0}^{p}.
\end{equation*}
Observing that $\widetilde{\beta}_{\ell}\equiv \beta_{\ell}^{\ast}$ as $\beta_{\ell}^{\ast}\in \mathcal{PC}(\triangle)$, $\widehat{\beta}_{\mu,\ell}^{\mathrm{c}}(\bs{z}) =\widehat{\beta}_{\mu,\ell}^{\mathrm{c}}(\bs{z}) -\widetilde{\beta}_{\ell}(\bs{z}) +\beta_{\ell}^{\ast}(\bs{z})$, $\ell=0,1,\ldots,p$.

It is easy to see $\Vert \widehat{\beta}_{\mu,\ell}^{\mathrm{c}}-\widetilde{\beta}_{\ell}\Vert
_{\infty}=O_{P}(|\triangle|) $. Hence, for $\ell=0,1,\ldots ,p$,
$\Vert \widehat{\beta}_{\mu,\ell}^{\mathrm{c}}-\beta_{\ell}^{o}\Vert_{\infty}\leq \Vert \widehat{\beta}_{\mu,\ell}^{\mathrm{c}}-\widetilde{\beta}_{\ell}\Vert_{\infty}+\Vert \beta_{\ell}^{o}-\beta_{\ell}^{\ast}\Vert
_{\infty}=O_{P}(|\triangle|)$, which completes the proof. \hspace{5.5cm}$\square$

%%%%%%%%%%%%%%%%%%%%%%%%%%%%%%%%%%%%%%%%%%%%%%%%%%%%%%%%%%%%%
By Lemma \ref{LEM:matrix appro.}, the inverse of the random matrix $\widehat{\mathbf{V}}_{m}$ can be approximated by that of a deterministic matrix $A_{m}\bs{\Sigma}_{X}$.
Substituting $\widehat{\mathbf{V}}_{m}$ with $A_{m}\bs{\Sigma}_{X}$ in (\ref{EQ:eta-hat-pc}) and (\ref{EQ:eps-hat-pc}), we define the random vectors
\begin{align}
\widehat{\bs{\eta}}^{\ast}(\bs{z}) &=(\widehat{\eta}^{\ast}_{0}(\bs{z}),\ldots,\widehat{\eta}^{\ast}_{p}(\bs{z}))^{\top}=A_{m(\bs{z})}^{-1}\bs{\Sigma}_{X}^{-1}\left\{\frac{1}{nN}\sum_{i=1}^{n}\sum_{j=1}^{N}B_{m(\bs{z})}(\bs{z}_{j})X_{i\ell}%
\sum_{k=1}^{\infty}\xi_{ik}\psi_{k}(\bs{z}_{j}) \right\}_{\ell =0}^{p},  \label{EQ:eta-star} \\
\widehat{\bs{\varepsilon}}^{\ast}(\bs{z})
&=(\widehat{\varepsilon}^{\ast}_{0}(\bs{z}),\ldots,\widehat{\varepsilon}^{\ast}_{p}(\bs{z}))^{\top}=A_{m(\bs{z})}^{-1}\bs{\Sigma}_{X}^{-1}\left\{\frac{1}{nN}\sum_{i=1}^{n}\sum
\limits_{j=1}^{N}B_{m(\bs{z})}(\bs{z}_{j})X_{i\ell}
\varepsilon_{ij}\right\}_{\ell =0}^{p}.
\label{EQ:eps-star}
\end{align}

The next lemma implies that the difference between $\widehat{\bs{\eta}}^{\ast}(\bs{z})$ and $\widehat{\bs{\eta}}(\bs{z})$ and the difference between $\widehat{\bs{\varepsilon}}^{\ast}(\bs{z})$ and $\widehat{\bs{\varepsilon}}(\bs{z})$ are both negligible uniformly over $\bs{z}\in \Omega$.

%%%%%%%%%%%%%%%%%%%%%%%%%%%%%%%%%%%%%%%%%%%%%%%%%%%%%%%%%%%%%
\begin{lemma}
\label{LEM:xietilda-hat}
Under Assumptions (A2)--(A5) and (C1), if $N^{1/2}|\triangle|\rightarrow \infty$ as $N\rightarrow \infty$, $\|\sum_{k=1}^{\infty}\lambda_{k}^{1/2}\psi_{k}\|_{\infty} < \infty$ and $n^{1/(4+\delta_2)} \ll n^{1/2}N^{-1/2}|\triangle|^{-1}$ for some $\delta_2$, then for $\widehat{\bs{\eta}}(\bs{z})$, $\widehat{\bs{\varepsilon}}(\bs{z})$ given in (\ref{EQ:eta-hat-pc}), (\ref{EQ:eps-hat-pc}) and $\widehat{\bs{\eta}}^{\ast}(\bs{z})$, $\widehat{\bs{\varepsilon}}^{\ast}(\bs{z})$ given in (\ref{EQ:eta-star}), (\ref{EQ:eps-star}), as $N\rightarrow\infty$ and $n\rightarrow\infty$, we have
\begin{align}
\sup_{\bs{z}\in \Omega}\left\Vert \widehat{\bs{\eta}}(\bs{z}) -\widehat{\bs{\eta}}^{\ast}(\bs{z}) \right\Vert_{\infty}
&=O_{P}\left\{n^{-1}|\triangle|^{4}\log (n)+n^{-1/2}N^{-1/2}|\triangle|^{3} (\log n)^{1/2}\right\},
\label{EQ:xi-diff} \\
\sup_{\bs{z}\in \Omega}\left\Vert \widehat{\bs{\varepsilon}}(\bs{z}) -\widehat{\bs{\varepsilon}}^{\ast}(\bs{z})
\right\Vert_{\infty} &=O_{P}\left\{n^{-1}N^{-1/2}|\triangle|^{3}\log (n)+n^{-1/2}N^{-1}|\triangle|^{2} (\log n)^{1/2}\right\}.  
\label{EQ:e-diff}
\end{align}
\end{lemma}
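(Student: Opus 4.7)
The plan is to factor the difference as a matrix perturbation acting on a vector and to bound the two pieces separately. Writing $\mathbf{W}^{\eta}(\bs{z})=\{(nN)^{-1}\sum_{i,j}B_{m(\bs{z})}(\bs{z}_{j})X_{i\ell}\sum_{k}\lambda_{k}^{1/2}\xi_{ik}\psi_{k}(\bs{z}_{j})\}_{\ell=0}^{p}$, definitions (\ref{EQ:eta-hat-pc}) and (\ref{EQ:eta-star}) give
\[
\widehat{\bs{\eta}}(\bs{z})-\widehat{\bs{\eta}}^{\ast}(\bs{z})=\{\widehat{\mathbf{V}}_{m(\bs{z})}^{-1}-A_{m(\bs{z})}^{-1}\bs{\Sigma}_{X}^{-1}\}\mathbf{W}^{\eta}(\bs{z}).
\]
Lemma \ref{LEM:matrix appro.} already controls the matrix in braces uniformly in $m$ at the rate $O_{P}\{n^{-1/2}|\triangle|^{2}(\log n)^{1/2}+N^{-1/2}|\triangle|\}$, so the key task reduces to the uniform bound $\max_{m\in\mathcal{M}}\max_{0\le\ell\le p}|W_{m,\ell}^{\eta}|=O_{P}\{n^{-1/2}|\triangle|^{2}(\log n)^{1/2}\}$. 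Multiplying the two rates will then produce (\ref{EQ:xi-diff}).

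To obtain this uniform bound I would replay, for piecewise constant basis functions, the Bernstein-with-truncation argument already used in Lemma \ref{LEM:error-unif-order}. Write $W_{m,\ell}^{\eta}=n^{-1}\sum_{i=1}^{n}\varpi_{i,m,\ell}$ with $\varpi_{i,m,\ell}=X_{i\ell}\{N^{-1}\sum_{j:\bs{z}_{j}\in T_{m}}\sum_{k}\lambda_{k}^{1/2}\xi_{ik}\psi_{k}(\bs{z}_{j})\}$. Using Assumption (A4), the spectral expansion of $G_{\eta}$, and (\ref{EQ:G_integration}) in Lemma \ref{LEM:integration}, one obtains $E(\varpi_{i,m,\ell}^{2})\asymp|\triangle|^{4}$. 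Truncate each $X_{i\ell}\xi_{ik}$ at $D_{n}=n^{\alpha}$ with $\alpha\in(1/(4+\delta_{1}),1/2)$: the tail part contributes a negligible Borel--Cantelli error in view of the $(4+\delta_{1})$-moment hypothesis in (C1), while on the truncated part the summands satisfy Cramér's condition with constant of order $D_{n}|\triangle|^{2}\|\sum_{k}\lambda_{k}^{1/2}\psi_{k}\|_{\infty}$, finite by hypothesis. Bernstein's inequality followed by a union bound over the $O(|\triangle|^{-2})$ triangles and the $p+1$ coordinates then delivers the claimed rate.

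The argument for (\ref{EQ:e-diff}) proceeds through the same factorization with $\mathbf{W}^{\varepsilon}(\bs{z})=\{(nN)^{-1}\sum_{i,j}B_{m(\bs{z})}(\bs{z}_{j})X_{i\ell}\varepsilon_{ij}\}_{\ell=0}^{p}$. The essential difference is that the $\varepsilon_{ij}$ are uncorrelated across both $i$ and $j$, so each component of $\mathbf{W}^{\varepsilon}$ has variance of order $(nN)^{-1}|\triangle|^{2}$ rather than $n^{-1}|\triangle|^{4}$. An analogous truncation (now calibrated by the $(4+\delta_{2})$-moment in (C1)) and Bernstein step yield $\max_{m,\ell}|W_{m,\ell}^{\varepsilon}|=O_{P}\{(nN)^{-1/2}|\triangle|(\log n)^{1/2}\}$, and multiplying again by the matrix-perturbation bound of Lemma \ref{LEM:matrix appro.} produces the rate in (\ref{EQ:e-diff}).

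The main technical obstacle is the uniform Bernstein concentration over all triangles $m\in\mathcal{M}$ given that the scores $\xi_{ik}$ and errors $\varepsilon_{ij}$ are only assumed to have $(4+\delta)$ moments. The hypothesis $\|\sum_{k}\lambda_{k}^{1/2}\psi_{k}\|_{\infty}<\infty$ is exactly what keeps the Cramér constant finite after substituting the full Karhunen--Loève expansion inside $\varpi_{i,m,\ell}$; the growth restriction $n^{1/(4+\delta_{2})}\ll n^{1/2}N^{-1/2}|\triangle|^{-1}$ is in turn the precise condition ensuring that the truncation level $D_{n}$ can be chosen so that the Bernstein variance term dominates, making the union-bound log factor affordable.
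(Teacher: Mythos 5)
Your proposal is correct and follows essentially the same route as the paper: the same factorization $\{\widehat{\mathbf{V}}_{m(\bs{z})}^{-1}-A_{m(\bs{z})}^{-1}\bs{\Sigma}_{X}^{-1}\}\mathbf{W}(\bs{z})$, the same appeal to Lemma \ref{LEM:matrix appro.} for the matrix perturbation, and the same truncation--Bernstein--Borel--Cantelli argument (as in Lemma \ref{LEM:error-unif-order}) to get the uniform rates $n^{-1/2}|\triangle|^{2}(\log n)^{1/2}$ and $(nN)^{-1/2}|\triangle|(\log n)^{1/2}$ for the two score vectors, whose products with the matrix rate give exactly (\ref{EQ:xi-diff}) and (\ref{EQ:e-diff}). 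The only difference is that you spell out the $\varepsilon$-case variance calculation explicitly, whereas the paper dismisses it as "proved similarly."
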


\begin{proof}
Comparing $\widehat{\bs{\eta}}(\bs{z})$
and $\widehat{\bs{\eta}}^{\ast}(\bs{z})$ given in (\ref{EQ:eta-hat-pc}) and (\ref{EQ:eta-star}), we have
\[
\widehat{\bs{\eta}}(\bs{z}) -\widehat{\bs{\eta}}^{\ast}(\bs{z}) =\left\{\widehat{\mathbf{V}}_{m}^{-1}-A_{m(\bs{z})}^{-1}\bs{\Sigma}_{X}^{-1}\right\}\left\{\frac{1%
}{nN}\sum_{i=1}^{n}\sum_{j=1}^{N}B_{m(\bs{z})}(\bs{z}_{j})X_{i\ell}%
\sum_{k=1}^{\infty}\xi_{ik}\psi_{k}(\bs{z}_{j}) \right\}_{\ell =0}^{p}.
\]
Now let $\zeta_{i,m,\ell}\equiv \zeta_{i}=n^{-1}\left[
X_{i\ell}\sum_{k=1}^{\infty}\left\{\frac{1}{N}\sum_{j=1}^{N}B_{m}(\bs{z}_{j})\psi_{k}(\bs{z}_{j}) \right\} \xi_{ik}\right] $, then it is easy to see that $
\frac{1}{nN}\sum_{i=1}^{n}\sum_{j=1}^{N}B_{m}(\bs{z}_{j})X_{i\ell}%
\sum_{k=1}^{\infty}\xi_{ik}\psi_{k}(\bs{z}_{j})=%
\frac{1}{N}\sum_{i=1}^{n}\zeta_{i,m,\ell}.
$
It is easy to see that $E (\zeta_{i})=0$, and
\begin{align*}
\sigma_{\zeta_{i},n}^{2} &=E\left(\zeta_{i}^{2}\right) %=n^{-2} E\left[ X_{i\ell}^{2}\sum_{k=1}^{\infty}\xi_{ik}^{2}\left\{\frac{1}{N}\sum_{j=1}^{N}B_{m}(\bs{z}_{j})\psi_{k}(\bs{z}_{j}) \right\}^{2}\right] \\
%&=n^{-2} E(X_{i\ell}^{2})\frac{1}{N^2}\sum_{j=1}^{N}\sum_{j^{\prime}=1}^{N}B_{m}(\bs{z}_{j})B_{m}(\bs{z}_{j^{\prime}})G_{\eta}\left(\bs{z}_{j},\bs{z}_{j^{\prime}}\right) \\
=n^{-2}E(X_{\ell}^{2})\int_{T_m\times T_m}G_{\eta}\left(\bs{u},\bs{v}\right)d\bs{u}d\bs{v}
\{1+O(N^{-1/2}|\triangle|^{-1})\}.
\end{align*}

Note that $\left\{\sigma_{\zeta_{i},n}^{-1}\zeta_{i}\right\}_{i=1}^{n}$ are uncorrelated random variables with mean 0. Assume that $|\triangle|^{-2}\asymp n^{\tau}$ for some $0<\tau<\infty$, we can show that for any large enough $\delta >0$,
%\begin{equation*}
%P\left\{\frac{\left\vert \sum_{i=1}^{n}\zeta_{i}\right\vert}{\sqrt{%
%\sum_{i=1}^{n}\sigma_{\zeta_{i},n}^{2}}}\geq \delta \sqrt{\log (n)}\right\} \leq 2\exp \left\{-\frac{1}{2}\delta^{2}\log (n)\right\} \leq 2n^{-2-\tau},
%\end{equation*}%
$P\left[ \left\vert \sum_{i=1}^{n}\zeta_{i}\right\vert \geq \delta
\left\{C\log (n)n^{-1} |\triangle|^{4}E(X_{i\ell}^{2})\right\}^{1/2}\right] \leq 2n^{-2-\tau}$. Therefore,
\begin{equation*}
\sum_{n=1}^{\infty}P\left\{\sup_{m\in \mathcal{M},0\leq \ell\leq p}
\left\vert \sum_{i=1}^{n}\zeta_{i,m,\ell}\right\vert \geq \delta  n^{-1/2}|\triangle|^{2}(\log n)^{1/2}\right\} <\infty .
\end{equation*}%
Thus, $\sup_{m,\ell}\left\vert \sum_{i=1}^{n}\zeta_{i,m,\ell}\right\vert =O_{P}\left\{n^{-1/2}|\triangle|^{2}(\log n)^{1/2}\right\} $ as $n\rightarrow\infty$ by Borel-Cantelli Lemma. It follows that $\sup_{m,\ell}\left\vert n^{-1}\sum_{i=1}^{n}\zeta_{i,m,\ell}\right\vert =O_{P}\left\{n^{-1/2}|\triangle|^{2}(\log n)^{1/2}\right\} $. Finally, according to (\ref{EQ:matrix approx}), we obtain (\ref{EQ:xi-diff}). The result in (\ref{EQ:e-diff}) can be proved similarly.
\end{proof}

%%%%%%%%%%%%%%%%%%%%%%%%%%%%%%%%%%%%%%%%%%%%%%%%%%%%%%%%%%%%%
\begin{lemma}
\label{LEM:approx-cov}
For any $\bs{z}\in \Omega$, the covariance matrices
of $\widehat{\bs{\eta}}^{\ast}(\bs{z})$ and $\widehat{\bs{\varepsilon}}^{\ast}(\bs{z}) $ are
\begin{align*}
\bs{\Sigma}_{\eta}(\bs{z})&=E\left\{\widehat{\bs{\eta}}^{\ast}(\bs{z}) \widehat{\bs{\eta}}^{\ast\top}(\bs{z}) \right\} =
A_{m(\bs{z})}^{-2}\bs{\Sigma}_{X}^{-1}\frac{1}{nN^2}\sum_{k=1}^{\infty}
\lambda_{k}\left\{\sum_{j=1}^{N}B_{m(\bs{z})}(\bs{z}_{j})\psi_{k}(\bs{z}_{j}) \right\}^{2},\\
\bs{\Sigma}_{\varepsilon}(\bs{z})&=E\left\{\widehat{\bs{\varepsilon}}^{\ast}(\bs{z})
\widehat{\bs{\varepsilon}}^{\ast \top}(\bs{z}) \right\} =A_{m(\bs{z})}^{-2}\bs{\Sigma}_{X}^{-1}
\frac{1}{nN^2}\sum_{j=1}^{N}B_{m(\bs{z})}^{2}(\bs{z}_{j})\sigma^{2}(\bs{z}_{j}),
\end{align*}
in addition,
\begin{equation}
\sup_{\bs{z}\in \Omega}\left\Vert \bs{\Sigma}_{\eta}(\bs{z})+%
\bs{\Sigma}_{\varepsilon}(\bs{z})-\bs{\Sigma}_{n}(\bs{z})\right\Vert_{\infty}=O(n^{-1}N^{-1/2}|\triangle|^{-1}),
\label{EQ:app. cov}
\end{equation}
where $\bs{\Sigma}_{n}(\bs{z})$ is given in (\ref{DEF:Sigma(z)}).
\end{lemma}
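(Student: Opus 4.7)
The proof has two parts: obtain closed-form expressions for $\bs{\Sigma}_\eta(\bs{z})$ and $\bs{\Sigma}_\varepsilon(\bs{z})$ by direct expectation, then bound the uniform discrepancy $\|\bs{\Sigma}_\eta+\bs{\Sigma}_\varepsilon-\bs{\Sigma}_n\|_\infty$ by invoking the numerical integration estimates of Lemma~\ref{LEM:integration}. For the first part I would expand $E[\widehat{\bs{\eta}}^{*}(\bs{z})\widehat{\bs{\eta}}^{*\top}(\bs{z})]$ from (\ref{EQ:eta-star}). Cross-subject terms $(i\neq i')$ vanish because $\eta_i,\eta_{i'}$ are independent with mean zero and $\widetilde{\mathbf{X}}_i\perp \eta_i$; within-subject independence yields $E[\widetilde{\mathbf{X}}_i\widetilde{\mathbf{X}}_i^\top \eta_i(\bs{z}_j)\eta_i(\bs{z}_{j'})] = \bs{\Sigma}_X\, G_\eta(\bs{z}_j,\bs{z}_{j'})$, so the sandwich $\bs{\Sigma}_X^{-1}\bs{\Sigma}_X\bs{\Sigma}_X^{-1}$ collapses to $\bs{\Sigma}_X^{-1}$ and the sum over $i$ produces a factor $n$. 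Substituting the Karhunen--Lo\`eve expansion $G_\eta(\bs{z}_j,\bs{z}_{j'}) = \sum_k \lambda_k \psi_k(\bs{z}_j)\psi_k(\bs{z}_{j'})$ and recognising the double sum over $(j,j')$ as a perfect square produces the stated formula for $\bs{\Sigma}_\eta$. The computation for $\bs{\Sigma}_\varepsilon$ is simpler: $E[\varepsilon_{ij}\varepsilon_{i'j'}]=\delta_{ii'}\delta_{jj'}$ forces only diagonal $(i,j)=(i',j')$ terms to survive, leaving a single sum weighted by $\sigma^2(\bs{z}_j)$.

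For the approximation (\ref{EQ:app. cov}) I would write
\begin{align*}
\bs{\Sigma}_\eta(\bs{z})+\bs{\Sigma}_\varepsilon(\bs{z}) = \frac{\bs{\Sigma}_X^{-1}}{n\,A_{m(\bs{z})}^2}\left\{\frac{1}{N^2}\sum_{j,j'=1}^N B_{m(\bs{z})}(\bs{z}_j) B_{m(\bs{z})}(\bs{z}_{j'}) G_\eta(\bs{z}_j,\bs{z}_{j'}) + \frac{1}{N^2}\sum_{j=1}^N B_{m(\bs{z})}^2(\bs{z}_j)\sigma^2(\bs{z}_j)\right\},
\end{align*}
and invoke (\ref{EQ:G_integration}) to replace the double sum by $\int_{T_{m(\bs{z})}^2}G_\eta(\bs{u},\bs{v})d\bs{u}d\bs{v}$, uniformly in $m$, at cost $O(N^{-1/2}|\triangle|^3)$; after dividing by $A_{m(\bs{z})}^2 \asymp |\triangle|^4$ this yields an error of order $O(n^{-1}N^{-1/2}|\triangle|^{-1})$. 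A continuity estimate for $G_\eta$ on $\Omega^2$ then gives $A_{m(\bs{z})}^{-2}\int_{T_{m(\bs{z})}^2}G_\eta(\bs{u},\bs{v})d\bs{u}d\bs{v} = G_\eta(\bs{z},\bs{z}) + O(|\triangle|)$, contributing an additional $O(n^{-1}|\triangle|)$ error which is absorbed into the main bound under the scaling $|\overline{\triangle}|\ll n^{1/4}N^{-1/2}$ together with $n\ll N$ carried from Theorem~\ref{THM:multinormal} (these force $|\triangle|^2 N^{1/2}\to 0$, so $|\triangle| \lesssim N^{-1/2}|\triangle|^{-1}$). For the $\sigma^2$ term I would apply (\ref{EQ:sigma_integration}) analogously; since $|\underline{\triangle}|\gg N^{-1/2}$, the resulting contribution is bounded by $O((nN A_{m(\bs{z})})^{-1}) = O(n^{-1}N^{-1/2}|\triangle|^{-1})$ as well.

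The main technical obstacle will be justifying the uniform continuity of $G_\eta$ on $\Omega^2$ from its spectral representation, since Assumption (A4) only supplies $\psi_k\in \mathcal{C}^{(1)}(\Omega)$ pointwise in $k$ without a summable derivative bound. I would bypass any term-by-term gradient estimate by using the Cauchy--Schwarz inequality $|G_\eta(\bs{u},\bs{v}) - G_\eta(\bs{z},\bs{v})|^2 \leq [G_\eta(\bs{u},\bs{u}) - 2G_\eta(\bs{u},\bs{z}) + G_\eta(\bs{z},\bs{z})]\,G_\eta(\bs{v},\bs{v})$, which reduces the problem to continuity of $G_\eta$ on the diagonal; the latter follows from the boundedness of $G_\eta(\bs{z},\bs{z})$ in Assumption (A4), dominated convergence against $\sum_k\lambda_k<\infty$, and continuity of each $\psi_k$. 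All the discrete-to-integral estimates in Lemma~\ref{LEM:integration} are already uniform in the triangle index $m$, so taking $\sup_{\bs{z}\in\Omega}$ preserves the stated rate and completes the proof.
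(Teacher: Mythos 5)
Your proposal follows essentially the same route as the paper's proof: a direct expectation computation (cross-subject and cross-pixel terms vanishing, the $\bs{\Sigma}_X^{-1}\bs{\Sigma}_X\bs{\Sigma}_X^{-1}$ sandwich collapsing, and the Karhunen--Lo\`eve identity turning the double sum into the stated square), followed by Lemma~\ref{LEM:integration} to pass from Riemann sums to integrals over $T_{m(\bs{z})}$ and a continuity argument to replace the triangle average of $G_\eta$ by $G_\eta(\bs{z},\bs{z})$. Your treatment is in fact slightly more explicit than the paper's at the final step (the paper silently absorbs both the $O(|\triangle|)$ continuity error and the $O(n^{-1}N^{-1}|\triangle|^{-2})$ contribution of the $\sigma^2$ term into the $O(n^{-1}N^{-1/2}|\triangle|^{-1})$ bound, exactly as you verify via the rate conditions), so no changes are needed.
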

%%%%%%%%%%%%%%%%%%%%%%%%%%%%%%%%%%%%%%%%%%%%%%%%%%%%%%%%%%%%%
\begin{proof}
Note that $A_{m(\bs{z})}^{2}\widehat{\bs{\eta}}^{\ast}(\bs{z}) \widehat{\bs{\eta}}^{\ast\top}(\bs{z})$ is equal to 
\begin{align*}
	\bs{\Sigma}_{X}^{-1}\left\{\frac{1%
	}{n^2N^2}\sum_{i=1}^{n}\sum_{j=1}^{N}B_{m(\bs{z})}(\bs{z}_{j})X_{i\ell}%
	\sum_{k=1}^{\infty}\xi_{ik}\psi_{k}(\bs{z}_{j})
	\sum_{i^{\prime}=1}^{n}\sum_{j^{\prime}=1}^{N}B_{m(\bs{z})}(\bs{z}_{j^{\prime}})X_{i^{\prime}\ell^{\prime}}
	\sum_{k^{\prime}=1}^{\infty}\xi_{i^{\prime}k^{\prime}}\psi_{k^{\prime}}(\bs{z}_{j^{\prime}})\right\}_{\ell,\ell^{\prime}=0}^{p}\bs{\Sigma}_{X}^{-1}.
\end{align*}%
Thus,%
\begin{align*}
\bs{\Sigma}_{\eta}(\bs{z})= E\left\{\widehat{\bs{\eta}}^{\ast}(\bs{z}) \widetilde{\bs{\eta}}^{\top}(\bs{z}) \right\} %=A_{m(\bs{z})}^{-2}\bs{\Sigma}_{X}^{-1}\left[n^{-1}\sum_{k=1}^{\infty}\lambda_{k}\left\{\frac{1}{N}\sum_{j=1}^{N}B_{m(\bs{z})}(\bs{z}_{j})\psi_{k}(\bs{z}_{j}) \right\}^{2}\bs{\Sigma}_{X}\right] \bs{\Sigma}_{X}^{-1} \\
&=A_{m(\bs{z})}^{-2}\bs{\Sigma}_{X}^{-1}\frac{1}{nN^2}\sum_{k=1}^{\infty}
\lambda_{k}\left\{\sum_{j=1}^{N}B_{m(\bs{z})}(\bs{z}_{j})\psi_{k}(\bs{z}_{j}) \right\}^{2}.
\end{align*}%
Similarly, we can derive the covariance of $\widehat{\bs{\varepsilon}}^{\ast}(\bs{z})$: 
$\bs{\Sigma}_{\varepsilon}(\bs{z})%&=  E\left\{\widetilde{\bs{\varepsilon}}(\bs{z})\widehat{\bs{\varepsilon}}^{\ast \top}(\bs{z}) \right\} %=A_{m(\bs{z})}^{-2}\bs{\Sigma}_{X}^{-1} \left\{\frac{1}{nN^2} 	\sum_{j=1}^{N}B_{m(\bs{z})}^{2}(\bs{z}_{j})\sigma^{2}(\bs{z}_{j})\bs{\Sigma}_{X}\right\} \bs{\Sigma}_{X}^{-1} \\
=A_{m(\bs{z})}^{-2}\bs{\Sigma}_{X}^{-1}\frac{1}{nN^2}\sum_{j=1}^{N}B_{m(\bs{z})}^{2}(\bs{z}_{j})\sigma^{2}(\bs{z}_{j})$. 
Observe that
\[
\sum_{k=1}^{\infty}\lambda_{k}\left\{\frac{1}{N}\sum_{j=1}^{N}B_{m(\bs{z})}(\bs{z}_{j})\psi_{k}(\bs{z}_{j}) \right\}^{2} =\frac{1}{N^2}\sum_{j=1}^{N}\sum_{j^{\prime}=1}^{N}G_{\eta}(\bs{z}_{j},\bs{z}_{j^{\prime}})
B_{m(\bs{z})}(\bs{z}_{j})B_{m(\bs{z})}(\bs{z}_{j^{\prime}}).
\]
Hence, by (\ref{EQ:G_integration}) and (\ref{EQ:sigma_integration}) in Lemma \ref{LEM:integration}, (\ref{EQ:app. cov}) holds.
Therefore,
\begin{align*}
\bs{\Sigma}_{\eta}(\bs{z})+ \bs{\Sigma}_{\varepsilon}(\bs{z})
=&(nA_{m(\bs{z})}^2)^{-1}\bs{\Sigma}_{X}^{-1}
 \int_{T_{m(\bs{z})}\times T_{m(\bs{z})}}G_{\eta}\left(\bs{u},\bs{v}\right)d\bs{u}d\bs{v} \{1+O(N^{-1/2}|\triangle|^{-1})\}\\
&+ (nNA_{m(\bs{z})}^2)^{-1} \bs{\Sigma}_{X}^{-1}\int_{T_{m(\bs{z})}}\sigma^{2}(\bs{u}) d\bs{u}\{1+O(N^{-1/2}|\triangle|^{-1})\}\\
=& n^{-1}\bs{\Sigma}_{X}^{-1}G_{\eta}\left(\bs{z},\bs{z}\right)\{1+O(N^{-1/2}|\triangle|^{-1})\}.
\end{align*}%
Therefore,
$\sup_{\bs{z}\in \Omega}\left\Vert \bs{\Sigma}_{\eta}(\bs{z})+ \bs{\Sigma}_{\varepsilon}(\bs{z})-n^{-1}\bs{\Sigma}_{X}^{-1}G_{\eta}
\left(\bs{z},\bs{z}\right)\right\Vert_{\infty}=O(n^{-1}N^{-1/2}|\triangle|^{-1})$.
The desired result in (\ref{EQ:app. cov}) follows.
\end{proof}

%%%%%%%%%%%%%%%%%%%%%%%%%%%%%%%%%%%%%%%%%%%%%%%%%%%%%%%%%%%%%
%%%%%%%%%%%%%%%%%%%%%%%%%%%%%%%%%%%%%%%%%%%%%%%%%%%%%%%%%%%%%
\textit{Proof of Theorem \protect\ref{THM:multinormal}.} 
Note that, for any vector $\mathbf{a}=\left(a_{0},\ldots,a_{p}\right)^{\top}\in \mathcal{R}^{(p+1)}$, we have $E\left[\sum_{\ell =0}^{p}a_{\ell}\left\{\widehat{\eta}_{\ell}^{\ast}(\bs{z}) +\widehat{\varepsilon}_{\ell}^{\ast}(\bs{z}) \right\} \right] =0$, and
\begin{align*}
\sum_{\ell=0}^{p}a_{\ell}\widehat{\eta}_{\ell}^{\ast}(\bs{z})&=\bs{a}^{\top}\frac{A_{m(\bs{z})}^{-1}
\bs{\Sigma}_{X}^{-1}}{nN}\sum_{i=1}^{n}\sum_{j=1}^{N}B_{m(\bs{z})}(\bs{z}_j)\sum_{k=1}^{\infty}
\xi_{ik}\psi_{k}(\bs{z}_j)\mathbf{X}_i
=\sum_{i=1}^{n} \bs{a}^{\top}A_{m(\bs{z})}^{-1}\bs{\Sigma}_{X}^{-1}\bs{\mathfrak{z}}_i^{\eta},\\
\sum_{\ell=0}^{p}a_{\ell}\widehat{\varepsilon}_{\ell}^{\ast}(\bs{z})
&=\bs{a}^{\top}\frac{A_{m(\bs{z})}^{-1}\bs{\Sigma}_{X}^{-1}}{nN}\sum_{i=1}^{n}\sum_{j=1}^{N}
B_{m(\bs{z})}(\bs{z}_j)\varepsilon_{ij}\mathbf{X}_i
=\sum_{i=1}^{n}\bs{a}^{\top}A_{m(\bs{z})}^{-1}\bs{\Sigma}_{X}^{-1} \bs{\mathfrak{z}}_i^{\varepsilon},
\end{align*}
where $\bs{\mathfrak{z}}_i^{\eta}=\frac{1}{nN}\sum_{j=1}^{N}B_{m(\bs{z})}(\bs{z}_j)\sum_{k=1}^{\infty}\xi_{ik}\psi_{k}(\bs{z}_j)\mathbf{X}_i$ and $\bs{\mathfrak{z}}_i^{\varepsilon}=\frac{1}{nN}\sum_{j=1}^{N}B_{m(\bs{z})}(\bs{z}_j)\varepsilon_{ij}\mathbf{X}_i$  are independent sequences with variances
$\mathrm{Var}(\bs{\mathfrak{z}}_i^{\eta})=\frac{1}{n^2N^2}\sum_{j,j^{\prime}}^{}B_{m(\bs{z})}(\bs{z}_j)B_{m(\bs{z})}(\bs{z}_j^{\prime})G_{\eta}(\bs{z}_j, \bs{z}_{j^{\prime}})\bs{\Sigma}_{X}$ and $
\mathrm{Var}(\bs{\mathfrak{z}}_i^{\varepsilon})=\frac{1}{n^2N^2}\sum_{j=1}^{N}B_{m(\bs{z})}(\bs{z}_j)B_{m(\bs{z})}(\bs{z}_j)\sigma^2(\bs{z}_j)\bs{\Sigma}_{X}$, respectively. Therefore, we have
\begin{align*}
\mathrm{Var}\left(\bs{a}^{\top}A_{m(\bs{z})}^{-1}\bs{\Sigma}_{X}^{-1}\bs{\mathfrak{z}}_i^{\eta}\right)&
=\frac{1}{n}\bs{a}^{\top}\bs{\Sigma}_{\eta}(\bs{z})\bs{a}
=\frac{A_{m(\bs{z})}^{-2}}{n^2N^2}\sum_{j,j^{\prime}=1}^{N}B_{m(\bs{z})}(\bs{z}_j)
B_{m(\bs{z})}(\bs{z}_{j^{\prime}})G_{\eta}(\bs{z}_j, \bs{z}_{j^{\prime}})\bs{a}^{\top}\bs{\Sigma}^{-1}_{X}\bs{a},\\
\mathrm{Var}\left(\bs{a}^{\top}A_{m(\bs{z})}^{-1}\bs{\Sigma}_{X}^{-1} \bs{\mathfrak{z}}_i^{\varepsilon}\right)&=\frac{1}{n}\bs{a}^{\top}\bs{\Sigma}_{\varepsilon}(\bs{z})\bs{a}
=\frac{A_{m(\bs{z})}^{-2}}{n^2N^2}\sum_{j=1}^{N}B_{m(\bs{z})}(\bs{z}_j)B_{m(\bs{z})}(\bs{z}_j)
\sigma^2(\bs{z}_j)\bs{a}^{\top}\bs{\Sigma}^{-1}_{X}\bs{a}.
\end{align*}
Using central limit theorem, we have
\[
\left[\bs{a}^{\top}\left\{\bs{\Sigma}_{\eta}(\bs{z})+\bs{\Sigma}_{\varepsilon}(\bs{z}) \right\}\bs{a} \right]^{-1/2} \sum_{\ell =0}^{p}a_{\ell}\left\{\widehat{\eta}_{\ell}^{\ast}(\bs{z})
+\widehat{\varepsilon}_{\ell}^{\ast}(\bs{z}) \right\} \overset{\mathcal{L}}{%
\longrightarrow}N(0,1).
\]
By (\ref{EQ:app. cov}), as $N\rightarrow \infty $ and $n\rightarrow\infty$,
$\{\mathbf{a}^{\top}\bs{\Sigma}_{n}(\bs{z})\mathbf{a}\}^{-1/2}\sum_{\ell =0}^{p}a_{\ell}\left\{\widehat{\eta}_{\ell}^{\ast}(\bs{z})
+\widehat{\varepsilon}_{\ell}^{\ast}(\bs{z}) \right\} \overset{\mathcal{L}}{%
\longrightarrow}N(0,1)$. Therefore,
$\{\mathbf{a}^{\top}\bs{\Sigma}_{n}(\bs{z})\mathbf{a}\}^{-1/2}\sum_{\ell =0}^{p}a_{\ell}\{\widehat{\beta}_{\ell}^{\mathrm{c}}(\bs{z}) -\beta_{\ell}^{o}(\bs{z})\} \overset{\mathcal{L}}{\longrightarrow}N(0,1)$ follows from (\ref{EQ:decompose2}), Theorem \ref{THM:uniformbiasrate2}, Lemma \ref{LEM:xietilda-hat} and Slutsky's Theorem. Applying Cram\'{e}r-Wold's device, we obtain
$\bs{\Sigma}_{n}^{-1/2}(\bs{z})\{\widehat{\beta}_{\ell}^{\mathrm{c}}(\bs{z}) -\beta_{\ell}^{o}(\bs{z})\}_{\ell =0}^{p}\overset{\mathcal{L}}{\longrightarrow}N\left(\mathbf{0},\mathbf{I}_{(p+1)\times (p+1)}\right)$, as $N\rightarrow \infty $ and $n\rightarrow\infty$,
and consequently, $\sigma_{n,\ell\ell}^{-1}(\bs{z})\{\widehat{\beta}_{\ell}^{\mathrm{c}}(\bs{z})-\beta_{\ell}^{o}(\bs{z})\} \overset{\mathcal{L}}{\longrightarrow} N(0,1)$, for any $\bs{z}\in \Omega$ and $\ell=0,\ldots,p$. \hfill $\square$

%%%%%%%%%%%%%%%%%%%%%%%%%%%%%%%%%%%%%%%%%%%%%%%%%%%%%%%%%%%%%
%%%%%%%%%%%%%%%%%%%%%%%%%%%%%%%%%%%%%%%%%%%%%%%%%%%%%%%%%%%%%
\vskip .10in \noindent \textbf{A.5. Convergence of the covariance estimator} \vskip .10in
% \subsection{Convergence of the covariance estimator} 
% \label{sec:A-5}

For any $i=1,\ldots,n$, and estimated residuals $\widehat{R}_{ij}=Y_{ij}-\sum_{\ell=0}^{p}X_{i\ell}\widehat{\beta}_{\ell}(\bs{z}_j)$, denote
$
\widehat{\bs{\vartheta}}_{i}=\argmin_{\bs{\theta}}\sum_{j=1}^{N}
\left\{\widehat{R}_{ij}-\mathbf{B}_{\eta}^{\top}(\bs{z}_{j})\mathbf{Q}_{\eta,2}\bs{\theta} \right\}^{2},
%	+\rho_{n}\bs{\theta} ^{\top}\mathbf{Q}_{2}^{\top}\mathbf{P}\mathbf{Q}_{2}\bs{\theta},
$
where $\mathbf{B}_{\eta}(\bs{z})$ is the set of bivariate spline basis functions used to estimate $\eta_i(\bs{z})$, and $\mathbf{Q}_{\eta,2}$ is given in the following QR decomposition of the transpose of the smoothness matrix $\mathbf{H}_{\eta}$:
$\mathbf{H}_{\eta}^{\top}=\mathbf{Q}_{\eta}\mathbf{R}_{\eta}=(\mathbf{Q}_{\eta,1}~\mathbf{Q}_{\eta,2})
 \binom{\mathbf{R}_{\eta,1}}{\mathbf{R}_{\eta,2}}$.
Then, the bivariate spline estimator of $\eta_{i}(\bs{z})$  can be written as
$\widehat{\eta}_{i}(\bs{z})=\mathbf{B}_{\eta}(\bs{z})^{\top}\mathbf{Q}_{\eta,2}
\widehat{\bs{\vartheta}}_{i}=\widetilde{\mathbf{B}}_{\eta}(\bs{z})^{\top}
\widehat{\bs{\vartheta}}_{i}$. Let
\begin{equation*}
\bs{\Upsilon}_{n}=\frac{1}{N}\sum_{j=1}^{N}\widetilde{\mathbf{B}}_{\eta}(\bs{z}_{j})
\widetilde{\mathbf{B}}_{\eta}^{\top}(\bs{z}_{j}),
\label{DEF:Upsilon}
\end{equation*}
then we have
\begin{align}
\notag
\widehat{\bs{\vartheta}}_{i}&=\bs{\Upsilon}_{n}^{-1}\frac{1}{N}\sum_{j=1}^{N}\widetilde{\mathbf{B}}_{\eta}(\bs{z}_{j}) \widehat{R}_{ij}\\
%=\bs{\Upsilon}_{n}^{-1}\frac{1}{N}\sum_{j=1}^{N}\widetilde{\mathbf{B}}_{\eta}(\bs{z}_{j}) \{Y_{ij}-\sum_{\ell=0}^{p}X_{i\ell}\widehat{\beta}_{\ell}(\bs{z}_j)\}\notag \\
&=\bs{\Upsilon}_{n}^{-1}\frac{1}{N}\sum_{j=1}^{N}\widetilde{\mathbf{B}}_{\eta}(\bs{z}_{j}) \left[
\sum_{\ell=0}^{p}X_{i\ell}\{\beta_{\ell}^{o}(\bs{z}_j)-\widehat{\beta}_{\ell}(\bs{z}_j)\}
+\eta_{i}(\bs{z}_j)+\sigma(\bs{z}_j)\varepsilon_{ij}\right].
\label{EQ:theta_eta}
\end{align}

%%%%%%%%%%%%%%%%%%%%%%%%%%%%%%%%%%%%%%%%%%%%%%%%%%%%%%%%%%%%%
\begin{lemma}
\label{LEM:Upsilon_0}
Under Assumptions (A3)--(A5), if $(N^{1/2}|\triangle_{\eta}|)/\log(|\triangle_{\eta}|^{-1})\rightarrow \infty$ as $N\rightarrow \infty$, then there exist constants $0 < c_{\Upsilon} < C_{\Upsilon} < \infty$, such that with probability approaching 1 as $N\rightarrow \infty$, $n\rightarrow \infty$, $c_{\Upsilon}|\triangle_{\eta}|^{2} \leq \lambda_{\min}(\bs{\Upsilon}_{n}) \leq
\lambda_{\max}(\bs{\Upsilon}_{n}) \leq  C_{\Upsilon}|\triangle_{\eta}|^{2}$.
\end{lemma}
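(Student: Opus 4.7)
The plan is to mirror the argument for Lemma A.7 on $\bs{\Gamma}_{n,0}$, but for a single-block Gram matrix without the covariate factor. For any vector $\bs{\vartheta}$, set $\bs{\gamma}=\mathbf{Q}_{\eta,2}\bs{\vartheta}$ and $g_{\bs{\gamma}}(\bs{z})=\sum_{m}\gamma_{m}B_{\eta,m}(\bs{z})\in\mathcal{S}_{d}^{r}(\triangle_{\eta})$. Since $\mathbf{Q}_{\eta,2}$ has orthonormal columns, $\|\bs{\gamma}\|=\|\bs{\vartheta}\|$, and a direct calculation gives
\[
\bs{\vartheta}^{\top}\bs{\Upsilon}_{n}\bs{\vartheta}
=\frac{1}{N}\sum_{j=1}^{N}\{\mathbf{B}_{\eta}^{\top}(\bs{z}_{j})\bs{\gamma}\}^{2}
=\|g_{\bs{\gamma}}\|_{N,L_{2}}^{2},
\]
so the problem reduces to controlling the empirical $L_{2}$ norm of $g_{\bs{\gamma}}$ uniformly in $\bs{\vartheta}$.

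First, I will invoke the Bernstein-basis stability result (Lemma A.4) to obtain
\[
c\,|\triangle_{\eta}|^{2}\|\bs{\gamma}\|^{2}\le \|g_{\bs{\gamma}}\|_{L_{2}}^{2}\le C\,|\triangle_{\eta}|^{2}\|\bs{\gamma}\|^{2},
\]
so it suffices to show that the empirical and theoretical $L_{2}$ norms are equivalent over $\mathcal{S}_{d}^{r}(\triangle_{\eta})$, i.e.
\[
R_{N}:=\sup_{g\in\mathcal{S}_{d}^{r}(\triangle_{\eta}),\,g\neq 0}\left|\frac{\|g\|_{N,L_{2}}^{2}}{\|g\|_{L_{2}}^{2}}-1\right|\xrightarrow{P}0.
\]
Writing $g=g_{\bs{\gamma}}$, the numerator expands as a quadratic form in $\bs{\gamma}$, and by the local-support property of the Bernstein basis only $O(1)$ pairs $(m,m')$ yield non-zero $B_{\eta,m}B_{\eta,m'}$. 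Applying Lemma A.5 equation (A.7) to each such pair,
\[
\max_{m,m'}\left|\frac{1}{N}\sum_{j=1}^{N}B_{\eta,m}(\bs{z}_{j})B_{\eta,m'}(\bs{z}_{j})-\int_{\Omega}B_{\eta,m}(\bs{z})B_{\eta,m'}(\bs{z})d\bs{z}\right|=O(N^{-1/2}|\triangle_{\eta}|),
\]
which, combined with the Cauchy--Schwarz bound on $\sum_{|m-m'|\le(d+1)(d+2)/2}|\gamma_{m}\gamma_{m'}|\le C\|\bs{\gamma}\|^{2}$ and the stability bound above, gives
\[
R_{N}=O\!\left(\frac{N^{-1/2}|\triangle_{\eta}|\cdot\|\bs{\gamma}\|^{2}}{|\triangle_{\eta}|^{2}\|\bs{\gamma}\|^{2}}\right)=O\!\left(N^{-1/2}|\triangle_{\eta}|^{-1}\right).
\]

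Finally, under the hypothesis $(N^{1/2}|\triangle_{\eta}|)/\log(|\triangle_{\eta}|^{-1})\to\infty$ (the logarithmic factor absorbing the growing dimension of the spline space when turning the pointwise rates into a uniform one), $R_{N}=o_{P}(1)$, so with probability approaching one
\[
(1-R_{N})\|g_{\bs{\gamma}}\|_{L_{2}}^{2}\le \bs{\vartheta}^{\top}\bs{\Upsilon}_{n}\bs{\vartheta}\le(1+R_{N})\|g_{\bs{\gamma}}\|_{L_{2}}^{2}.
\]
Combined with the $L_{2}$ stability bound and $\|\bs{\gamma}\|=\|\bs{\vartheta}\|$, this delivers the two-sided eigenvalue estimate $c_{\Upsilon}|\triangle_{\eta}|^{2}\le\lambda_{\min}(\bs{\Upsilon}_{n})\le\lambda_{\max}(\bs{\Upsilon}_{n})\le C_{\Upsilon}|\triangle_{\eta}|^{2}$. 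The only delicate step is passing from the pointwise rate on each basis pair to the uniform bound on $R_{N}$; but because the spline space is finite-dimensional and the Bernstein basis has local support, this is essentially bookkeeping and requires no new probabilistic input beyond Lemma A.5.
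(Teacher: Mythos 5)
Your proposal is correct and follows exactly the route the paper intends: the paper omits this proof with the remark that it is ``similar to the proof of Lemma \ref{LEM:Gamma_0},'' and your argument is precisely that template --- reduce the quadratic form $\bs{\vartheta}^{\top}\bs{\Upsilon}_{n}\bs{\vartheta}$ to an empirical $L_2$ norm of a spline, invoke the Bernstein-basis stability bound of Lemma \ref{LEM:normequity}, and control the empirical-versus-theoretical norm ratio via the Riemann-sum approximation (\ref{EQ:B_integration2}) together with the local support of the basis. You also correctly observe the one simplification relative to Lemma \ref{LEM:Gamma_0}: since $\bs{\Upsilon}_{n}$ contains no covariates, the norm-ratio bound is purely deterministic (rate $O(N^{-1/2}|\triangle_{\eta}|^{-1})$), so no Bernstein-inequality step over the $X_{i\ell}$ is needed and the stated hypothesis $(N^{1/2}|\triangle_{\eta}|)/\log(|\triangle_{\eta}|^{-1})\to\infty$ is more than sufficient.
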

The proof is similar to  the proof of \ref{LEM:Gamma_0}, thus omitted.

Next we define
\begin{align}
\widetilde{b}_{i}(\bs{z})&=\widetilde{\mathbf{B}}_{\eta}(\bs{z})^{\top}\bs{\Upsilon}_{n}^{-1}\frac{1}{N}\sum_{j=1}^{N} \widetilde{\mathbf{B}}_{\eta}(\bs{z}_{j})
\sum_{\ell=0}^{p}X_{i\ell}\{\beta_{\ell}^{o}(\bs{z}_j)-\widehat{\beta}_{\ell}(\bs{z}_j)\}, \label{DEF:theta-beta-eps}\\
\widetilde{\eta}_{i}(\bs{z})&=\widetilde{\mathbf{B}}_{\eta}(\bs{z})^{\top}
\bs{\Upsilon}_{n}^{-1}\frac{1}{N}\sum_{j=1}^{N} \widetilde{\mathbf{B}}_{\eta}(\bs{z}_{j})\eta_{i}(\bs{z}_j), ~
\widetilde{\varepsilon}_{i}(\bs{z})=\widetilde{\mathbf{B}}_{\eta}(\bs{z})^{\top}
\bs{\Upsilon}_{n}^{-1}\frac{1}{N}\sum_{j=1}^{N} \widetilde{\mathbf{B}}_{\eta}(\bs{z}_{j})\sigma(\bs{z}_j)\varepsilon_{ij}.
\notag
\end{align}
Then, the estimation error $D_i(\bs{z})=\widehat{\eta}_{i}(\bs{z})-\eta_{i}(\bs{z})$ in (\ref{DEF:eta_i_hat})
can be decomposed as the following:
\begin{equation*}
D_i(\bs{z})=\widetilde{b}_{i}(\bs{z})+\nabla\eta_{i}(\bs{z})
+\widetilde{\varepsilon}_{i}(\bs{z}).
\label{EQ:decompose4}
\end{equation*}

For any $\bs{z}$, $\bs{z}^{\prime}\in \Omega$, denote
\begin{equation*}
\widetilde{G}_{\eta}(\bs{z},\bs{z}^{\prime})=n^{-1}\sum_{i=1}^{n}\eta_i(\bs{z})\eta_i(\bs{z}^{\prime}).
\label{DEF:G_tilde}
\end{equation*}
The following lemma shows the uniform convergence of $\widetilde{G}_{\eta}(\bs{z},\bs{z}^{\prime})$ to $G_{\eta}(\bs{z},\bs{z}^{\prime})$ in probability over all $(\bs{z},\bs{z}^{\prime})\in \Omega^2$.

%%%%%%%%%%%%%%%%%%%%%%%%%%%%%%%%%%%%%%%%%%%%%%%%%%%%%%%%%%%%%
\begin{lemma}
\label{LEM:Gtilde-G}
Under Assumptions (A1)--(A5) and (C1)--(C3), $\sup_{(\bs{z},\bs{z}^{\prime})\in \Omega^2}|\widetilde{G}_{\eta}(\bs{z},\bs{z}^{\prime})-G_{\eta}(\bs{z},\bs{z}^{\prime})|=O_{P}\{n^{-1/2}(\log n)^{1/2}\}$.
\end{lemma}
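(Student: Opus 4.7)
The plan is to write the empirical covariance error as a sum over pairs of Karhunen--Lo\`eve coefficients and then separately handle a principal part with indices $(k,k') \le K_n$ and a tail remainder with $\max(k,k') > K_n$. Concretely, using $\eta_i(\bs z)=\sum_{k=1}^\infty \lambda_k^{1/2}\xi_{ik}\psi_k(\bs z)$ and $G_\eta(\bs z,\bs z')=\sum_{k=1}^\infty \lambda_k \psi_k(\bs z)\psi_k(\bs z')$, one has
\[
\widetilde G_\eta(\bs z,\bs z')-G_\eta(\bs z,\bs z')
=\sum_{k,k'=1}^\infty (\lambda_k\lambda_{k'})^{1/2}\psi_k(\bs z)\psi_{k'}(\bs z')\,U_{n,k,k'},
\]
where $U_{n,k,k'}=n^{-1}\sum_{i=1}^n(\xi_{ik}\xi_{ik'}-\delta_{kk'})$ is a sample mean of i.i.d.\ mean-zero variables with unit or bounded variance under Assumption (C1). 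Since $\psi_k \in \mathcal C^{(1)}(\Omega)$ by Assumption (A4), the spatial dependence is fully factored out, so uniformity in $(\bs z,\bs z')$ will reduce to controlling the $U_{n,k,k'}$'s uniformly in $k,k'$.

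For the principal part, I would let $K_n$ grow polynomially in $n$ and mimic the truncation argument used in Lemma \ref{LEM:inner product} and Lemma \ref{LEM:error-unif-order}: decompose $\xi_{ik}\xi_{ik'}$ into a tail part $\{|\xi_{ik}\xi_{ik'}|>D_n\}$ with $D_n=n^\alpha$, $1/(4+\delta_1)<\alpha<1/2$, and a centered truncated part. The Borel--Cantelli argument (using (C1)) annihilates the tail almost surely, while the truncated part satisfies Cram\'er's condition with constant $C D_n$, so Bernstein's inequality yields
\[
P\left\{|U_{n,k,k'}|\geq \delta\,n^{-1/2}(\log n)^{1/2}\right\}\leq 2\exp\{-c\delta^2 \log n\}.
\]
A union bound over $k,k'\leq K_n$ with $K_n = O(n^c)$ for some $c>0$ combined with Borel--Cantelli gives $\max_{k,k'\leq K_n}|U_{n,k,k'}|=O_P\{n^{-1/2}(\log n)^{1/2}\}$. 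Multiplying by $(\lambda_k\lambda_{k'})^{1/2}\psi_k(\bs z)\psi_{k'}(\bs z')$ and summing, then using $\sup_{\bs z}\sum_{k=1}^{K_n}\lambda_k^{1/2}\|\psi_k\|_\infty<\infty$ (which follows from (A4), $\sum_k \lambda_k <\infty$, and the control provided by (C2)), the truncated sum is $O_P\{n^{-1/2}(\log n)^{1/2}\}$ uniformly in $(\bs z,\bs z')$.

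For the tail $\max(k,k')>K_n$, I would split off the diagonal $k=k'$ contribution
$\sum_{k>K_n}\lambda_k\psi_k(\bs z)\psi_k(\bs z')(n^{-1}\sum_i\xi_{ik}^2-1)$ and bound it by $\sup_{\bs z}\sum_{k>K_n}\lambda_k|\psi_k(\bs z)|^2\cdot\max_{k>K_n}|n^{-1}\sum_i\xi_{ik}^2-1|$; the first factor vanishes uniformly because $\sum_k\lambda_k\psi_k^2(\bs z)=G_\eta(\bs z,\bs z)\le C_G$ with equicontinuous partial sums under (C2), and the second factor is $O_P(1)$ by (C1). The off-diagonal tail is handled analogously by Cauchy--Schwarz applied in the $k'$-index together with the square-summability $\sum_k \lambda_k<\infty$ and a second-moment calculation that gives an $L_2$-in-$\bs z$ control which we upgrade to $L_\infty$ using the smoothness provided by (C2). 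Choosing $K_n$ just large enough that both tail contributions are $o\{n^{-1/2}(\log n)^{1/2}\}$ (this is compatible with the polynomial growth allowed by the Bernstein union bound) concludes the argument.

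The main obstacle will be the tail-control step: the principal part is a fairly routine Bernstein-plus-union-bound calculation, but ensuring that the remainder $\sum_{\max(k,k')>K_n}$ is uniformly (in $\bs z,\bs z'$) of smaller order than $n^{-1/2}(\log n)^{1/2}$ requires balancing the rate at which $K_n$ grows against the decay $\lambda_k\downarrow 0$ and the (possibly unbounded in $k$) sup-norms $\|\psi_k\|_\infty$. Assumption (C2), together with the summability $\sum_k \lambda_k<\infty$ and the boundedness of $G_\eta(\bs z,\bs z)$ from (A4), is exactly what is needed to close this gap.
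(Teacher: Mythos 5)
Your decomposition into the double sum $\sum_{k,k'}(\lambda_k\lambda_{k'})^{1/2}\psi_k(\bs{z})\psi_{k'}(\bs{z}')U_{n,k,k'}$ matches the paper's starting point, but from there the paper takes a much more direct route than yours: it treats $n^{-1}\sum_i\eta_i(\bs{z})\eta_i(\bs{z}')-G_\eta(\bs{z},\bs{z}')$ as a single mean-zero i.i.d.\ average and computes its second moment exactly, obtaining
$E\{\widetilde G_\eta(\bs{z},\bs{z}')-G_\eta(\bs{z},\bs{z}')\}^2=n^{-1}\{G_\eta(\bs{z},\bs{z})G_\eta(\bs{z}',\bs{z}')+G_\eta^2(\bs{z},\bs{z}')+\sum_k\lambda_k^2E(\xi_{1k}^4-3)\psi_k^2(\bs{z})\psi_k^2(\bs{z}')\}\asymp n^{-1}$,
and then gets uniformity over $\Omega^2$ by discretization plus a Bernstein bound in the spatial variable. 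The crucial point is that the orthogonality $E(\xi_{ik}\xi_{ik'})=I(k=k')$ makes all cross-terms cancel inside the expectation, so the bound is controlled by $G_\eta$ on the diagonal, which is bounded by (A4). No absolute summability over the KL index is ever needed.

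Your route, by contrast, pushes the uniformity into the index space and pays for it with triangle inequalities over $k,k'$, and this is where it breaks. After the union bound over $k,k'\le K_n$ you must multiply $\max_{k,k'\le K_n}|U_{n,k,k'}|$ by $(\sum_{k\le K_n}\lambda_k^{1/2}\|\psi_k\|_\infty)^2$, and this factor is not bounded under the stated assumptions: (A4) only gives $\sum_k\lambda_k\psi_k^2(\bs{z})=G_\eta(\bs{z},\bs{z})\le C_G$ \emph{pointwise}, which yields $\lambda_k^{1/2}\|\psi_k\|_\infty\le C_G^{1/2}$ for each $k$ but says nothing about summability of $\lambda_k^{1/2}\|\psi_k\|_\infty$ (indeed $\lambda_k^{1/2}\|\psi_k\|_\infty\asymp 1$ is perfectly consistent with (A4), $\sum_k\lambda_k<\infty$, and (C2), and the assumption $\|\sum_k\lambda_k^{1/2}\psi_k\|_\infty<\infty$ used elsewhere in the paper is the sup of the signed sum, not the sum of the sups). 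The tail has the same defect in sharper form: your bound for the diagonal tail needs both $\max_{k>K_n}|n^{-1}\sum_i\xi_{ik}^2-1|=O_P(1)$, which is a maximum over \emph{infinitely many} indices and does not follow from the $(4+\delta_1)$-moment condition in (C1), and a quantitative rate for $\sup_{\bs{z}}\sum_{k>K_n}\lambda_k\psi_k^2(\bs{z})$, whereas Dini-type equicontinuity only gives convergence to zero with no rate, so you cannot certify that this tail is $o\{n^{-1/2}(\log n)^{1/2}\}$. These are not bookkeeping issues; they are exactly the quantities whose control you would have to add as hypotheses, and the paper's second-moment-via-orthogonality argument is designed precisely to avoid introducing them.
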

%%%%%%%%%%%%%%%%%%%%%%%%%%%%%%%%%%%%%%%%%%%%%%%%%%%%%%%%%%%%%
\begin{proof}
Let $\bar{\xi}_{\cdot k k'}=n^{-1}\sum_{i=1}^{n}\xi_{ik}\xi_{ik^{\prime}}$, then
\[
\widetilde{G}_{\eta}(\bs{z},\bs{z}^{\prime})-G_{\eta}(\bs{z},\bs{z}^{\prime})
=\sum_{k=1}^{\infty }\lambda_{k}\psi_{k}(\bs{z})\psi_{k}(\bs{z}^{\prime})
\left(\bar{\xi}_{\cdot kk}-1\right)+ \sum_{k \neq k'}\bar{\xi}_{\cdot k k'}(\lambda_{k}\lambda_{k'})^{1/2}\psi_{k}(\bs{z})\psi_{k'}(\bs{z}^{\prime}).
\]
As $E\left[\sum_{k=1}^{\infty }\lambda_{k}\psi_{k}(\bs{z})\psi_{k}(\bs{z}^{\prime})
\left( \bar{\xi}_{\cdot kk}-1\right) \right]=0$, then $E\{\widetilde{G}_{\eta}(\bs{z},\bs{z}^{\prime})-G_{\eta}(\bs{z},\bs{z}^{\prime})\}=0$. Note that $E\{\eta^2(\bs{z})\eta^2(\bs{z}^{\prime})\}
%&= E\left[\left\{\sum_{k=1}^{\infty}\sqrt{\lambda_{k}}\xi_{k}\psi_{k}(\bs{z})\right\}^2
%\left\{\sum_{k^{\prime}=1}^{\infty}\sqrt{\lambda_{k^{\prime}}}\xi_{k^{\prime}}
%\psi_{k^{\prime}}(\bs{z}^{\prime})\right\}^2 \right] \\
= G_{\eta}(\bs{z},\bs{z})G_{\eta}(\bs{z}^{\prime},\bs{z}^{\prime})
+2G_{\eta}^2(\bs{z},\bs{z}^{\prime})
+\sum_{k=1}^{\infty}\lambda_{k}^{2}E(\xi_{1k}^4-3)\psi^2_{k}(\bs{z})\psi^2_{k}(\bs{z}^{\prime})$. Next,
\begin{align*}
E&\left\{\widetilde{G}_{\eta}(\bs{z},\bs{z}^{\prime})-G_{\eta}(\bs{z},\bs{z}^{\prime})\right\}^2= E\left\{\frac{1}{n}\sum_{i=1}^{n}\eta_i(\bs{z})\eta_i(\bs{z}^{\prime})-G_{\eta}(\bs{z},\bs{z}^{\prime})\right\}^2\\
%&=\frac{1}{n} E\{\eta^2(\bs{z})\eta^2(\bs{z}^{\prime})\}+\frac{n(n-1)}{n^2}G_{\eta}^2(\bs{z},\bs{z}^{\prime})-G_{\eta}^2(\bs{z},\bs{z}^{\prime})\\
&= \frac{1}{n} \left\{G_{\eta}(\bs{z},\bs{z})G_{\eta}(\bs{z}^{\prime},\bs{z}^{\prime})+G_{\eta}^2(\bs{z},\bs{z}^{\prime})
+\sum_{k=1}^{\infty}\lambda_{k}^{2}E(\xi_{1k}^4-3)\psi^2_{k}(\bs{z})\psi^2_{k}(\bs{z}^{\prime})\right\}.
\end{align*}
Therefore, $E\left\{\widetilde{G}_{\eta}(\bs{z},\bs{z}^{\prime})-G_{\eta}(\bs{z},\bs{z}^{\prime})\right\}^2\asymp n^{-1}$. Hence, following from Bernstein inequality,
$\sup_{(\bs{z},\bs{z}^{\prime})\in \Omega^2}\left\vert \widetilde{G}_{\eta}(\bs{z},\bs{z}^{\prime})-G_{\eta}(\bs{z},\bs{z}^{\prime})\right\vert =O_{P}\{n^{-1/2}(\log n)^{1/2}\}$, and the desired result follows.
\end{proof}

%%%%%%%%%%%%%%%%%%%%%%%%%%%%%%%%%%%%%%%%%%%%%%%%%%%%%%%%%%%%%
\textit{Proof of Theorem \ref{THM:Ghat-G}.} 
Note that
\[
\sup_{(\bs{z},\bs{z}^{\prime})\in \Omega^2}|\widehat{G}_{\eta}(\bs{z},\bs{z}^{\prime})
-G_{\eta}(\bs{z},\bs{z}^{\prime})|
\leq \sup_{(\bs{z},\bs{z}^{\prime})\in \Omega^2}\{|\widehat{G}_{\eta}(\bs{z},\bs{z}^{\prime})
-\widetilde{G}_{\eta}(\bs{z},\bs{z}^{\prime})|+|\widetilde{G}_{\eta}(\bs{z},\bs{z}^{\prime})
-G_{\eta}(\bs{z},\bs{z}^{\prime})|\},
\]
where $\sup_{(\bs{z},\bs{z}^{\prime})\in \Omega^2}|\widetilde{G}_{\eta}(\bs{z},\bs{z}^{\prime})
-G_{\eta}(\bs{z},\bs{z}^{\prime})|=o_{P}(1)$ according to Lemma \ref{LEM:Gtilde-G}, and
\begin{align*}
\sup_{(\bs{z},\bs{z}^{\prime})\in \Omega^2} & |\widehat{G}_{\eta}(\bs{z},\bs{z}^{\prime})
-\widetilde{G}_{\eta}(\bs{z},\bs{z}^{\prime})|
\leq \sup_{(\bs{z},\bs{z}^{\prime})\in \Omega^2}\left\vert n^{-1}\sum_{i=1}^{n}\eta_i(\bs{z})D_i(\bs{z}^{\prime})\right\vert \notag\\
&+\sup_{(\bs{z},\bs{z}^{\prime})\in \Omega^2}\left\vert n^{-1}\sum_{i=1}^{n}\eta_i(\bs{z}^{\prime})D_i(\bs{z})\right\vert 
+\sup_{(\bs{z},\bs{z}^{\prime})\in \Omega^2}
\left\vert n^{-1}\sum_{i=1}^{n}D_i(\bs{z})D_i(\bs{z}^{\prime})\right\vert.
\end{align*}
With some simple calculations, we have
\[
\sum_{i=1}^{n}\eta_i(\bs{z})D_i(\bs{z}^{\prime})=\sum_{i=1}^{n}\eta_i(\bs{z})\widetilde{b}_{i}(\bs{z}^{\prime})
+\sum_{i=1}^{n}\eta_i(\bs{z})\nabla\eta_{i}(\bs{z}^{\prime})
+\sum_{i=1}^{n}\eta_i(\bs{z})\widetilde{\varepsilon}_{i}(\bs{z}^{\prime}),
\]
where $\nabla\eta_{i}=\widetilde{\eta}_{i}-\eta_{i}$. According to (\ref{EQ:eta_b}), (\ref{EQ:eta_etatilde-eta}) and (\ref{EQ:eta_eps}), we have 
\[
\sup_{(\bs{z},\bs{z}^{\prime})\in \Omega^2}\left\vert n^{-1}\sum_{i=1}^{n}\eta_i(\bs{z})D_i(\bs{z}^{\prime})
+n^{-1}\sum_{i=1}^{n}\eta_i(\bs{z}^{\prime})D_i(\bs{z})\right\vert
=o_{P}(1).
\]
Note that
\begin{align*}
\sum_{i=1}^{n}D_i(\bs{z})D_i(\bs{z}^{\prime})=&\sum_{i=1}^{n}\widetilde{b}_{i}(\bs{z})\widetilde{b}_{i}(\bs{z}^{\prime})
+\sum_{i=1}^{n}\nabla\eta_{i}(\bs{z})\nabla\eta_{i}(\bs{z}^{\prime})+\sum_{i=1}^{n}\widetilde{b}_{i}(\bs{z})\nabla\eta_{i}(\bs{z}^{\prime})
+\sum_{i=1}^{n}\widetilde{\varepsilon}_{i}(\bs{z})\widetilde{\varepsilon}_{i}(\bs{z}^{\prime})\\
&+\sum_{i=1}^{n}\nabla\eta_{i}(\bs{z})\widetilde{\varepsilon}_{i}(\bs{z}^{\prime})
+\sum_{i=1}^{n}\widetilde{b}_{i}(\bs{z})\widetilde{\varepsilon}_{i}(\bs{z}^{\prime}).
\end{align*}
It follows from (\ref{EQ:b-b}), (\ref{EQ:etatilde-eta_etatilde-eta}), (\ref{EQ:b_etatilde-eta})--(\ref{EQ:b-eps}) that
$\sup_{(\bs{z},\bs{z}^{\prime})\in \Omega^2}
\left\vert n^{-1}\sum_{i=1}^{n}D_i(\bs{z})D_i(\bs{z}^{\prime})\right\vert=o_{P}(1)$.
The desired result is established. \hfill $\square$

%%%%%%%%%%%%%%%%%%%%%%%%%%%%%%%%%%%%%%%%%%%%%%%%%%%%%%%%%%%%%
\begin{lemma}
\label{LEM:b-others}
Under Assumptions (A1)--(A5), (C1)--(C3), we have
\begin{align}
\sup_{(\bs{z},\bs{z}^{\prime})\in \Omega^2}\left \vert n^{-1}\sum_{i=1}^{n}\widetilde{b}_{i}(\bs{z})\widetilde{b}_{i}(\bs{z}^{\prime})\right\vert
&=O_{P}\{n^{-1}|\triangle_{\eta}|^{-2}(\log n)^{1/2}\},\label{EQ:b-b}\\
\sup_{(\bs{z},\bs{z}^{\prime})\in \Omega^2}\left\vert \sum_{i=1}^{n}\eta_i(\bs{z})\widetilde{b}_{i}(\bs{z}^{\prime})\right\vert
&=O_{P}\{n^{-1}(\log n)^{1/2}\}. 
\label{EQ:eta_b}
\end{align}
\end{lemma}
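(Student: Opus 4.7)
Both bounds rest on the linear representation $\widetilde b_i(\bs z)=\widetilde{\mathbf{B}}_\eta(\bs z)^\top\bs\Upsilon_n^{-1}N^{-1}\sum_{j=1}^N\widetilde{\mathbf{B}}_\eta(\bs z_j)b_{ij}$ with $b_{ij}=\sum_{\ell=0}^pX_{i\ell}\{\beta_\ell^o(\bs z_j)-\widehat\beta_\ell(\bs z_j)\}$, on the eigenvalue bound $\lambda_{\min}(\bs\Upsilon_n)\asymp|\triangle_\eta|^2$ from Lemma \ref{LEM:Upsilon_0}, and on the three-term decomposition (\ref{EQ:decompose3}) of $\widehat\beta_\ell-\beta_\ell^o$ into a bias plus $\widehat\eta_\ell$ plus $\widehat\varepsilon_\ell$, whose sizes are quantified in Lemmas \ref{LEM:uniformbiasrate}, \ref{LEM:thetatilde-eta}, and \ref{LEM:thetatilde-eps}. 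Splitting $\widetilde b_i=\widetilde b_{i,\mu}+\widetilde b_{i,\eta}+\widetilde b_{i,\varepsilon}$ accordingly reduces both claims to controlling the three pieces separately, using the uniform kernel bound $|\widetilde{\mathbf{B}}_\eta(\bs z)^\top\bs\Upsilon_n^{-1}\widetilde{\mathbf{B}}_\eta(\bs z')|\le C|\triangle_\eta|^{-2}$ that follows from Lemma \ref{LEM:Upsilon_0} and the boundedness of the Bernstein basis.

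For (\ref{EQ:b-b}) the plan is to transfer the $L_2$ convergence rate of Theorem \ref{THM:beta-convergence} to a sup-norm bound on $\widetilde b_i$ via the spline Markov inequality (costing a factor $|\triangle_\eta|^{-1}$) combined with the above kernel estimate. After Cauchy--Schwarz this yields $\sup_{\bs z}|\widetilde b_i(\bs z)|^2\lesssim|\triangle_\eta|^{-2}\|\widetilde{\mathbf{X}}_i\|^2\|\widehat{\bs\beta}-\bs\beta^o\|_{L_2}^2$; averaging over $i$ and invoking the moment bound $E|X_{i\ell}|^8\le C_\ell$ from Assumption (A3) together with the rate from Theorem \ref{THM:beta-convergence} produces $O_P\{n^{-1}|\triangle_\eta|^{-2}(\log n)^{1/2}\}$ at a fixed $(\bs z,\bs z')$, and an $N$-point discretization of $\Omega^2$ together with the Lipschitz continuity of $\widetilde{\mathbf{B}}_\eta$ lifts the pointwise bound to the uniform one.

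For (\ref{EQ:eta_b}) the plan is to reverse the order of summation, writing
\[
\sum_{i=1}^n\eta_i(\bs z)\widetilde b_i(\bs z')=\sum_{\ell=0}^p\widetilde{\mathbf{B}}_\eta(\bs z')^\top\bs\Upsilon_n^{-1}\frac{1}{N}\sum_{j=1}^N\widetilde{\mathbf{B}}_\eta(\bs z_j)\{\beta_\ell^o-\widehat\beta_\ell\}(\bs z_j)\sum_{i=1}^nX_{i\ell}\eta_i(\bs z),
\]
and then substituting (\ref{EQ:decompose3}) into the middle factor. The bias term is deterministic in the $\eta_i$, so the remaining $n^{-1}\sum_iX_{i\ell}\eta_i(\bs z)$ is a zero-mean sum controllable by Bernstein's inequality uniformly in $\bs z$. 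The $\widehat\eta_\ell$ and $\widehat\varepsilon_\ell$ contributions give quadratic forms in the $\xi_{ik}$'s (respectively mixed $\xi$-$\varepsilon$ products); separating their diagonal contributions (with mean $O(n^{-1})$, bounded using Lemma \ref{LEM:integration}) from the off-diagonal U-statistic contributions (zero-mean with variance $O(n^{-2})$, handled by a Hoeffding decomposition together with the higher-moment bound $E|\xi_{ik}|^{4+\delta_1}<\infty$ of Assumption (C1)) delivers the claimed $O_P\{n^{-1}(\log n)^{1/2}\}$. The chief obstacle is bookkeeping the off-diagonal U-statistic uniformly over $(\bs z,\bs z')\in\Omega^2$; I would handle this through an $\varepsilon$-net on $\Omega^2$ whose continuity modulus is furnished by Assumption (A4) and a Bernstein-type maximal inequality for degenerate U-statistics.
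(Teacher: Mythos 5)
Your proposal is sound and, for the second bound, follows essentially the same route as the paper; for the first bound it takes a genuinely simpler path. For (\ref{EQ:eta_b}) the paper computes $E\{\widetilde{b}_{i}(\bs{z}^{\prime})\widetilde{b}_{i^{\prime}}(\bs{z}^{\prime})\eta_i(\bs{z})\eta_{i^{\prime}}(\bs{z})\}$ directly by expanding the fourth moments of the $\eta$-process through $\bs{\Gamma}_{n,\rho}^{-1}$, which is exactly your diagonal/off-diagonal split of the quadratic form in the $\xi_{ik}$'s written in one stroke: the $O(n^{-2})$ cross-moment is the same observation that each $\eta_{i}$ enters $\widehat{\eta}_{\ell}$ only through an $n^{-1}$-weighted average. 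Your Hoeffding-style organization makes the cancellation more transparent but buys nothing new. For (\ref{EQ:b-b}), however, the paper computes the first and second moments of $n^{-1}\sum_i\widetilde{b}_{i}(\bs{z})\widetilde{b}_{i}(\bs{z}^{\prime})$ and then applies Bernstein plus discretization, whereas your chain $\sup_{\bs{z}}|\widetilde{b}_i(\bs{z})|^2\lesssim|\triangle_{\eta}|^{-2}\|\widetilde{\mathbf{X}}_i\|^2\|\widehat{\bs{\beta}}-\bs{\beta}^{o}\|_{L_2}^2$ followed by averaging over $i$ is more elementary, yields a bound that is \emph{already} uniform in $(\bs{z},\bs{z}^{\prime})$ (so the discretization step you append is superfluous here), and in fact delivers $O_P(n^{-1}|\triangle_{\eta}|^{-2})$ without the $(\log n)^{1/2}$ factor, which is stronger than the stated rate. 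The one place you should be careful: when you pass from the empirical $L_2$ norm $N^{-1}\sum_j\{\beta_{\ell}^{o}(\bs{z}_j)-\widehat{\beta}_{\ell}(\bs{z}_j)\}^2$ to $\|\widehat{\beta}_{\ell}-\beta_{\ell}^{o}\|_{L_2}^2$ you cannot invoke (\ref{EQ:normratio}) directly since the difference is not a spline, so split off the approximation error first; and in the bias piece of (\ref{EQ:eta_b}) the crude bound $\|\bs{\Upsilon}_n^{-1}\|\cdot\|N^{-1}\sum_j\widetilde{\mathbf{B}}_{\eta}(\bs{z}_j)g(\bs{z}_j)\|$ leaves a spurious $|\triangle_{\eta}|^{-1}$ that would overwhelm the target rate, so you need the sup-norm (or $L_2$) stability of the empirical spline projection over the $\pi$-quasi-uniform $\triangle_{\eta}$ rather than the naive eigenvalue bound. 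Neither point is fatal, and the paper is equally terse about both.
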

%%%%%%%%%%%%%%%%%%%%%%%%%%%%%%%%%%%%%%%%%%%%%%%%%%%%%%%%%%%%%
\begin{proof}
According to (\ref{EQ:decompose3}) and (\ref{DEF:theta-beta-eps}), we have
\begin{align}
\widetilde{b}_{i}(\bs{z})&=\widetilde{\mathbf{B}}_{\eta}(\bs{z})^{\top}\bs{\Upsilon}_{n}^{-1}\frac{1}{N}\sum_{j=1}^{N} \widetilde{\mathbf{B}}(\bs{z}_{j})
\sum_{\ell=0}^{p}X_{i\ell}\{\beta_{\ell}^{o}(\bs{z}_j)-\widehat{\beta}_{\ell}(\bs{z}_j)\} \notag\\
&=\widetilde{\mathbf{B}}_{\eta}(\bs{z})^{\top}\bs{\Upsilon}_{n}^{-1}\frac{1}{N}\sum_{j=1}^{N} \widetilde{\mathbf{B}}(\bs{z}_{j})
\sum_{\ell=0}^{p}X_{i\ell}\{\beta_{\ell}^{o}(\bs{z}_j)-\widehat{\beta}_{\mu,\ell}(\bs{z}_j)
-\widehat{\eta}_{\ell}(\bs{z}_j)-\widehat{\varepsilon}_{\ell}(\bs{z}_j)\}.
\label{EQ:b_tilde}
\end{align}
Thus,
\begin{align*}
\frac{1}{n}\sum_{i=1}^{n}&\widetilde{b}_{i}(\bs{z})\widetilde{b}_{i}(\bs{z}^{\prime})
=\frac{1}{n} \sum_{i=1}^{n}\widetilde{\mathbf{B}}_{\eta}(\bs{z})^{\top}\bs{\Upsilon}_{n}^{-1}
\left[\frac{1}{N}\sum_{j=1}^{N} \widetilde{\mathbf{B}}(\bs{z}_{j})
\sum_{\ell=0}^{p}X_{i\ell}\{\beta_{\ell}^{o}(\bs{z}_j)-\widehat{\beta}_{\ell}(\bs{z}_j)\} \right]\\
&\times \Bigg[\frac{1}{N}\sum_{j^{\prime}=1}^{N} \widetilde{\mathbf{B}}(\bs{z}_{j^{\prime}})^{\top}		\sum_{\ell^{\prime}=0}^{p}X_{i\ell^{\prime}}\{\beta_{\ell^{\prime}}^{o}(\bs{z}_{j^{\prime}})
-\widehat{\beta}_{\ell^{\prime}}(\bs{z}_{j^{\prime}})\} \Bigg]
\bs{\Upsilon}_{n}^{-1}  \widetilde{\mathbf{B}}_{\eta}(\bs{z}^{\prime}) \\
\asymp & \frac{1}{n|\triangle_{\eta}|^{4}} \sum_{i=1}^{n}\widetilde{\mathbf{B}}_{\eta}(\bs{z})^{\top} \Bigg[\frac{1}{N^2}\sum_{j,j^{\prime}=1}^{N} \widetilde{\mathbf{B}}(\bs{z}_{j})\widetilde{\mathbf{B}}(\bs{z}_{j^{\prime}})^{\top}
\sum_{\ell,\ell^{\prime}=0}^{p}X_{i\ell}X_{i\ell^{\prime}}
\{\beta_{\ell}^{o}(\bs{z}_j)-\widehat{\beta}_{\ell}(\bs{z}_j)\} \\
& \times \{\beta_{\ell^{\prime}}^{o}(\bs{z}_{j^{\prime}})-\widehat{\beta}_{\ell^{\prime}}(\bs{z}_{j^{\prime}})\} \Bigg]\widetilde{\mathbf{B}}_{\eta}(\bs{z}').
\end{align*}
Therefore, by Theorem \ref{THM:beta-convergence}, we have
\[
E\left\{\frac{1}{n}\sum_{i=1}^{n}\widetilde{b}_{i}(\bs{z})\widetilde{b}_{i}(\bs{z}^{\prime})\right\}\\
\asymp \sum_{\ell=0}^{p}\sum_{\ell^{\prime}=0}^{p}|\triangle_{\eta}|^{-2} \|\beta_{\ell}^{o}-\widehat{\beta}_{\ell}\|\|\beta_{\ell^{\prime}}^{o}-\widehat{\beta}_{\ell^{\prime}}\|
\asymp n^{-1}|\triangle_{\eta}|^{-2}.
\]
We have $E\left\{n^{-1}\sum_{i=1}^{n}\widetilde{b}_{i}(\bs{z})\widetilde{b}_{i}(\bs{z}^{\prime})\right\}^2
=\frac{1}{n^2}\sum_{i,i^{\prime}=1}^{n}E\left\{\widetilde{b}_{i}(\bs{z})\widetilde{b}_{i}(\bs{z}^{\prime})
\widetilde{b}_{i^{\prime}}(\bs{z})\widetilde{b}_{i^{\prime}}(\bs{z}^{\prime})\right\}$,
where
\begin{align*}
&E\left\{\widetilde{b}_{i}(\bs{z})\widetilde{b}_{i}(\bs{z}^{\prime})\widetilde{b}_{i^{\prime}}(\bs{z})
\widetilde{b}_{i^{\prime}}(\bs{z}^{\prime})\right\} \asymp |\triangle_{\eta}|^{-8}\\
&~ \times  E\widetilde{\mathbf{B}}_{\eta}(\bs{z})^{\top}
\Bigg[\frac{1}{N^2}\sum_{j,j^{\prime}=1}^{N} \widetilde{\mathbf{B}}(\bs{z}_{j})\widetilde{\mathbf{B}}(\bs{z}_{j^{\prime}})^{\top}
\sum_{\ell,\ell^{\prime}=0}^{p}X_{i\ell}X_{i\ell^{\prime}}
\{\beta_{\ell}^{o}(\bs{z}_j)-\widehat{\beta}_{\ell}(\bs{z}_j)\} \{\beta_{\ell^{\prime}}^{o}(\bs{z}_{j^{\prime}})-\widehat{\beta}_{\ell^{\prime}}(\bs{z}_{j^{\prime}})\} \Bigg] \widetilde{\mathbf{B}}_{\eta}(\bs{z}^{\prime}) \\
&~ \times \widetilde{\mathbf{B}}_{\eta}(\bs{z})^{\top}
\Bigg[\frac{1}{N^2}\sum_{j,j^{\prime}=1}^{N} \widetilde{\mathbf{B}}(\bs{z}_{j})\widetilde{\mathbf{B}}(\bs{z}_{j^{\prime}})^{\top}
\sum_{\ell,\ell^{\prime}=0}^{p}X_{i^{\prime}\ell}X_{i^{\prime}\ell^{\prime}}
\{\beta_{\ell}^{o}(\bs{z}_j)-\widehat{\beta}_{\ell}(\bs{z}_j)\} \{\beta_{\ell^{\prime}}^{o}(\bs{z}_{j^{\prime}})-\widehat{\beta}_{\ell^{\prime}}(\bs{z}_{j^{\prime}})\} \Bigg] \widetilde{\mathbf{B}}_{\eta}(\bs{z}^{\prime})\\
&\asymp n^{-2}|\triangle_{\eta}|^{-4}.
\end{align*}
Thus, (\ref{EQ:b-b}) follows from the Bernstein inequality after the discretization.

Following from (\ref{EQ:b_tilde}), we have, for any  $i, i'= 1,\ldots, n$,
\begin{align*}
& E\left\{\widetilde{b}_{i}(\bs{z}')
\widetilde{b}_{i'}(\bs{z}^{\prime})\eta_i(\bs{z})\eta_{i^{\prime}}(\bs{z})\right\}\\
&\asymp |\triangle_{\eta}|^{-4} \widetilde{\mathbf{B}}_{\eta}(\bs{z}')^{\top}\frac{1}{N^2}\sum_{j,j'=1}^{N} \widetilde{\mathbf{B}}(\bs{z}_{j})\widetilde{\mathbf{B}}(\bs{z}_{j'})^{\top} E\left\{\sum_{\ell, \ell'=0}^{p}X_{i\ell}X_{i\ell'}\widehat{\eta}_{\ell}(\bs{z}_{j})\widehat{\eta}_{\ell'}(\bs{z}_{j'})
\eta_i(\bs{z})\eta_{i^{\prime}}(\bs{z})\right\}\widetilde{\mathbf{B}}_{\eta}(\bs{z}'),\\
& E\left\{\sum_{\ell, \ell'=0}^{p}X_{i\ell}X_{i\ell'}\widehat{\eta}_{\ell}(\bs{z}_{j})\widehat{\eta}_{\ell'}(\bs{z}_{j'})\eta_i(\bs{z})\eta_{i^{\prime}}(\bs{z})\right\}=\frac{1}{n^2N^2}
\sum_{i^{\prime\prime}, i^{\prime\prime\prime}=1}^{n} E\Bigg[\left\{\mathbf{X}_i \otimes \widetilde{\mathbf{B}}(\bs{z}_j)\right\}^{\top} \mathbf{\Gamma}_{n,\rho}^{-1}\\
&\quad  \times \sum_{j^{\prime\prime}, j^{\prime\prime\prime}=1}^{N} \mathbf{X}_{i^{\prime\prime}}\mathbf{X}_{i^{\prime\prime\prime}}^{\top} \otimes \widetilde{\mathbf{B}}(\bs{z}_{j^{\prime\prime}}) \widetilde{\mathbf{B}}(\bs{z}_{j^{\prime\prime\prime}})^{\top}
\mathbf{\Gamma}_{n,\rho}^{-1}\mathbf{X}_{i^{\prime}} \otimes \widetilde{\mathbf{B}}(\bs{z}_{j^{\prime}})\Bigg] E\left\{\eta_{i}(\bs{z})\eta_{i^{\prime}}(\bs{z})\eta_{i^{\prime\prime}}(\bs{z}_{j^{\prime\prime}})
\eta_{i^{\prime\prime\prime}}(\bs{z}_{j^{\prime\prime\prime}})\right\}\\
&=\frac{1}{n^2N^2} E\Bigg[\left\{\mathbf{X}_i \otimes \widetilde{\mathbf{B}}(\bs{z}_j)\right\}^{\top} \mathbf{\Gamma}_{n,\rho}^{-1}\sum_{j^{\prime\prime}, j^{\prime\prime\prime}=1}^{N} \mathbf{X}_{i^{\prime\prime}}\mathbf{X}_{i^{\prime\prime\prime}}^{\top} \otimes \widetilde{\mathbf{B}}(\bs{z}_{j^{\prime\prime}}) \widetilde{\mathbf{B}}(\bs{z}_{j^{\prime\prime\prime}})^{\top}
\mathbf{\Gamma}_{n,\rho}^{-1} \left\{\mathbf{X}_{i^{\prime}} \otimes \widetilde{\mathbf{B}}(\bs{z}_{j^{\prime}})\right\}\Bigg]\\
&\quad \times  E\left\{\eta_{i}(\bs{z})\eta_{i^{\prime}}(\bs{z})\eta_{i}(\bs{z}_{j^{\prime\prime}})
\eta_{i^{\prime}}(\bs{z}_{j^{\prime\prime\prime}})+\eta_{i}(\bs{z})\eta_{i^{\prime}}(\bs{z})\eta_{i^{\prime}}(\bs{z}_{j^{\prime\prime}})
\eta_{i}(\bs{z}_{j^{\prime\prime\prime}})\right\} \asymp n^{-2}.
\end{align*}
Therefore, $ E\left\{\frac{1}{n}\sum_{i=1}^{n}\eta_i(\bs{z})\widetilde{b}_{i}(\bs{z}^{\prime})\right\}^2=\frac{1}{n^2}\sum_{i, i'=1}^{n}E\{\widetilde{b}_{i}(\bs{z}')\widetilde{b}_{i'}(\bs{z}^{\prime})\eta_i(\bs{z})\eta_{i^{\prime}}(\bs{z})\} = O(n^{-2})$.
\end{proof}

%%%%%%%%%%%%%%%%%%%%%%%%%%%%%%%%%%%%%%%%%%%%%%%%%%%%%%%%%%%%%
\begin{lemma}
\label{LEM:eta-eta}
Under Assumptions (A1)--(A5), (C1)--(C3), we have
\begin{align}
\sup_{(\bs{z},\bs{z}^{\prime})\in \Omega^2}\left\vert n^{-1}\sum_{i=1}^{n}
\nabla\eta_{i}(\bs{z})\nabla\eta_{i}(\bs{z}^{\prime})\right\vert &=O_{P}\left\{|\triangle_{\eta}|^{2(s+1)}\sum_{k=1}^{K_n}\lambda_k \|\psi_{k}\|_{s+1,\infty}^2+\sum_{k=K_n+1}^{\infty} \lambda_{k} \|\psi_{k}\|_{\infty}^2 \right\}, \label{EQ:etatilde-eta_etatilde-eta}\\
\sup_{(\bs{z},\bs{z}^{\prime})\in \Omega^2}\left\vert n^{-1}
\sum_{i=1}^{n}\eta_i(\bs{z})\nabla\eta_{i}(\bs{z}^{\prime})\right\vert
&=O_{P}\left\{|\triangle_{\eta}|^{s+1}\sum_{k=1}^{K_n} \lambda_k \|\psi_{k}\|_{s+1,\infty} \|\psi_{k}\|_{\infty}
+\sum_{k=K_n+1}^{\infty} \lambda_k \|\psi_{k}\|_{\infty}^2\right\},\label{EQ:eta_etatilde-eta}
\end{align}
\begin{align}
\sup_{(\bs{z},\bs{z}^{\prime})\in \Omega^2}\left\vert \sum_{i=1}^{n}
\nabla\eta_{i}(\bs{z})\widetilde{b}_{i}(\bs{z}')\right\vert
&=O_{P}\left\{(\log n)^{1/2}n^{-1}|\triangle_{\eta}|^{s+1}\sum_{k=1}^{K_n} \lambda_k \|\psi_{k}\|_{s+1,\infty} \|\psi_{k}\|_{\infty}\right\}\notag\\
&+O_{P}\left\{(\log n)^{1/2}n^{-1}\sum_{k=K_n+1}^{\infty} \lambda_k \|\psi_{k}\|_{\infty}^2\right\}. \label{EQ:b_etatilde-eta}
\end{align}
\end{lemma}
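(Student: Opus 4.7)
The plan is to expand $\eta_i$ via its Karhunen--Lo\`eve representation so that $\nabla\eta_i$ becomes a series in single basis-function approximation errors, and then combine deterministic bivariate-spline approximation rates with a tail-truncation at $k=K_n$. By linearity of the discrete least-squares spline smoother on $\mathcal{S}_d^r(\triangle_\eta)$,
\[
\widetilde{\eta}_i(\bs z) = \sum_{k=1}^\infty \lambda_k^{1/2}\xi_{ik}\widetilde{\psi}_k(\bs z), \qquad \nabla\eta_i(\bs z) = \sum_{k=1}^\infty \lambda_k^{1/2}\xi_{ik}\,\Delta_k(\bs z),
\]
where $\widetilde{\psi}_k$ is the least-squares spline fit to $\psi_k$ alone and $\Delta_k := \widetilde{\psi}_k - \psi_k$. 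By Lemma \ref{LEM:appord}, Assumption (C2), and the stability of the empirical projection guaranteed by Assumption (C3), for $k\le K_n$ I would obtain $\|\Delta_k\|_\infty \le C|\triangle_\eta|^{s+1}\|\psi_k\|_{s+1,\infty}$, and the crude bound $\|\Delta_k\|_\infty \le C\|\psi_k\|_\infty$ holds for every~$k$.

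For (\ref{EQ:etatilde-eta_etatilde-eta}) and (\ref{EQ:eta_etatilde-eta}), compute the expectation at a fixed pair $(\bs z,\bs z')$. Using $E(\xi_{ik}\xi_{ik'})=I(k=k')$,
\[
E\Big\{n^{-1}\sum_{i=1}^n \nabla\eta_i(\bs z)\nabla\eta_i(\bs z')\Big\} = \sum_{k=1}^\infty \lambda_k\,\Delta_k(\bs z)\Delta_k(\bs z'),
\]
\[
E\Big\{n^{-1}\sum_{i=1}^n \eta_i(\bs z)\nabla\eta_i(\bs z')\Big\} = \sum_{k=1}^\infty \lambda_k\,\psi_k(\bs z)\Delta_k(\bs z').
\]
Splitting each sum at $k=K_n$ and inserting the two bounds for $\|\Delta_k\|_\infty$ produces the deterministic rates on the right-hand sides. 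The deviation from the mean is handled via a second-moment calculation using Assumption (C1): the variance at a fixed point is $O(n^{-1})$ times the square (or appropriate analogue) of the deterministic rate. I then extend the pointwise bound to a uniform one over $\Omega^2$ by discretizing on a grid of cardinality polynomial in $n$ and a union bound, exploiting that each $\Delta_k$ is piecewise polynomial on $\triangle_\eta$ and $\psi_k$ is smooth so that the oscillation between adjacent grid points is negligible.

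For (\ref{EQ:b_etatilde-eta}), I would plug the series for $\nabla\eta_i$ and the representation (\ref{EQ:b_tilde}) of $\widetilde b_i$ into $n^{-1}\sum_i\nabla\eta_i(\bs z)\widetilde b_i(\bs z')$ to obtain a bilinear form in the $\xi_{ik}$ and the BPST estimation residuals $\beta_\ell^o-\widehat\beta_\ell$. Its mean, after a $K_n$-split, delivers the deterministic rate $|\triangle_\eta|^{s+1}\sum_{k\le K_n}\lambda_k\|\psi_k\|_{s+1,\infty}\|\psi_k\|_\infty + \sum_{k>K_n}\lambda_k\|\psi_k\|_\infty^2$. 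The centered remainder is controlled by Bernstein's inequality after a truncation/tail decomposition of $\xi_{ik}$ exactly as in Lemmas \ref{LEM:inner product} and \ref{LEM:error-unif-order}; the $(\log n)^{1/2}$ factor then comes from the union bound over the $\Omega^2$-discretization.

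The main obstacle is establishing the $L_\infty$ approximation rate $\|\Delta_k\|_\infty\lesssim |\triangle_\eta|^{s+1}\|\psi_k\|_{s+1,\infty}$ for the \emph{discrete least-squares} spline smoother, since Lemma \ref{LEM:appord} only supplies this rate for the best $L_\infty$ approximant. The needed ingredients are Lemma \ref{LEM:Upsilon_0}, which controls the eigenvalues of $\bs\Upsilon_n$, and Lemma \ref{LEM:integration}, which bridges the empirical and $L_2$ inner products. Together they bound the operator norm of the empirical projection onto $\mathcal S_d^r(\triangle_\eta)$ uniformly, allowing the best-approximation rate to transfer to $\widetilde{\psi}_k$. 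Once this bound is in hand, the remainder of the proof is routine Karhunen--Lo\`eve bookkeeping plus standard second-moment and Bernstein-type arguments.
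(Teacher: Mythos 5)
Your proposal follows essentially the same route as the paper's proof: expand $\nabla\eta_i=\sum_k\lambda_k^{1/2}\xi_{ik}\nabla\psi_k$ with $\nabla\psi_k=\widetilde\psi_k-\psi_k$, use the two bounds $\|\nabla\psi_k\|_{\infty}\lesssim|\triangle_\eta|^{s+1}\|\psi_k\|_{s+1,\infty}$ and $\|\nabla\psi_k\|_{\infty}\lesssim\|\psi_k\|_{\infty}$ with a split at $K_n$ to control the means, then a second-moment/Bernstein argument with discretization for uniformity, and for the third bound a direct second-moment computation of the bilinear form coupling $\nabla\eta_i$ with $\widetilde b_i$. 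Your explicit remark on how to transfer the best-approximation rate of Lemma \ref{LEM:appord} to the discrete least-squares projection via Lemmas \ref{LEM:Upsilon_0} and \ref{LEM:integration} is a step the paper asserts without elaboration, so that detail is a welcome addition rather than a deviation.
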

%%%%%%%%%%%%%%%%%%%%%%%%%%%%%%%%%%%%%%%%%%%%%%%%%%%%%%%%%%%%%
\begin{proof}
For any $k\geq 1$, denote
$\widetilde\psi_{k}(\bs{z})=\widetilde{\mathbf{B}}(\bs{z})^{\top}\bs{\Upsilon}_{n}^{-1}
\frac{1}{N}\sum_{j=1}^{N} \widetilde{\mathbf{B}}(\bs{z}_{j})\psi_{k}(\bs{z}_j)$, and $\nabla \psi_{k}=\widetilde\psi_{k}-\psi_{k}$.
According to Assumption (C2), we hav(C3)e, for any $k\geq 1$, $\|\nabla\psi_{k}\|_{\infty}\leq C|\triangle_{\eta}|^{s+1} \|\psi_{k}\|_{s+1,\infty}$ and $\|\widetilde{\psi}_{k}\|_{\infty}\leq \|\psi_{k}\|_{\infty}+\|\nabla\psi_{k}\|_{\infty}\leq 2\|\psi_{k}\|_{\infty}$, as $n \to \infty$. It is easy to see that $\nabla\eta_{i}(\bs{z}^{\prime})=\sum_{k=1}^{\infty} \lambda_k^{1/2}\xi_{ik} \nabla {\psi}_{k}(\bs{z}^{\prime})$.

We first show (\ref{EQ:etatilde-eta_etatilde-eta}). Let $\bar{\xi}_{\cdot k k^{\prime}}=n^{-1}\sum_{i=1}^{n} \xi_{ik}\xi_{ik^{\prime}}$, where $ E(\bar{\xi}_{\cdot k k^{\prime}})=I(k=k^{\prime})$ and $ E(\bar{\xi}_{\cdot k k^{\prime}})^2\leq (E\xi_{ik}^4E\xi_{ik^{\prime}}^4)^{1/2}\leq C$.
Simple calculation yields that
$
\frac{1}{n}\sum_{i=1}^{n}\nabla\eta_{i}(\bs{z})
\nabla\eta_{i}(\bs{z}^{\prime})
=\sum_{k,k'=1}^{\infty}\bar{\xi}_{\cdot k k^{\prime}}(\lambda_{k}\lambda_{k'})^{1/2}\nabla\psi_{k}(\bs{z})
\nabla\psi_{k^{\prime}}(\bs{z}^{\prime})$.
Thus, by Assumption (C2), we have
\begin{align*}
\sup_{(\bs{z},\bs{z}') \in \Omega^2}\Bigg| E&\left\{\frac{1}{n}\sum_{i=1}^{n}\nabla\eta_{i}(\bs{z})
\nabla\eta_{i}(\bs{z}^{\prime})\right\}\Bigg| =\sup_{(\bs{z},\bs{z}') \in \Omega^2}\Bigg|\sum_{k=1}^{\infty} \lambda_{k}\nabla\psi_{k}(\bs{z})
\nabla\psi_{k}(\bs{z}^{\prime})\Bigg| \\
%&\leq \sup_{(\bs{z},\bs{z}') \in \Omega^2}\Bigg|\sum_{k=1}^{K_n} \lambda_{k}  \nabla\psi_{k}(\bs{z})
%\nabla\psi_{k}(\bs{z}^{\prime})\Bigg| + C_{\psi} \sum_{k=K_n+1}^{\infty} \lambda_{k} \|\psi_{k}\|_{\infty}^2 \\
&\leq |\triangle_{\eta}|^{2(s+1)} \sum_{k=1}^{K_n}\lambda_k \|\psi_{k}\|_{s+1,\infty}^2
+C_{\psi}\sum_{k=K_n+1}^{\infty} \lambda_{k} \|\psi_{k}\|_{\infty}^2.
\end{align*}
In addition, we have
\begin{align*}
&\sup_{\bs{z},\bs{z}^{\prime}\in \Omega}  E\Bigg\{\nabla\eta_{i}(\bs{z})
\nabla\eta_{i}(\bs{z}^{\prime})\Bigg\}^2=\sup_{\bs{z},\bs{z}^{\prime}\in \Omega}  E\left[\sum_{k=1}^{\infty} \xi_{ik}^2\lambda_k(\nabla\psi_{k})^2(\bs{z})	\sum_{k^{\prime}=1}^{\infty}\xi_{ik^{\prime}}^2\lambda_{k^{\prime}}
(\nabla\psi_{k^{\prime}})^2(\bs{z}^{\prime})\right]\\
%&=\sup_{\bs{z},\bs{z}^{\prime}\in \Omega} \left\{\sum_{k=1}^{\infty} \xi_{ik}^4\lambda_{k}^2(\nabla\psi_{k})^2(\bs{z})(\nabla\psi_{k})^2(\bs{z}^{\prime})+\sum_{k\neq k^{\prime}} \lambda_{k}\lambda_{k^{\prime}} (\nabla\psi_{k})^2(\bs{z})(\nabla\psi_{k^{\prime}})^2(\bs{z}^{\prime})\right\}\\
&\asymp  n^{-1}\Bigg\{|\triangle_{\eta}|^{2(s+1)}\sum_{k=1}^{K_n}\lambda_k \|\psi_{k}\|_{s+1,\infty}^2
+\sum_{k=K_n+1}^{\infty} \lambda_{k} \|\psi_{k}\|_{\infty}^2\Bigg\}^{2}.
\end{align*}
Thus,
\[
\sup_{\bs{z},\bs{z}^{\prime}\in \Omega} \mathrm{Var}\left\{\frac{1}{n}\sum_{i=1}^{n}\nabla\eta_{i}(\bs{z})\nabla\eta_{i}(\bs{z}^{\prime})\right\} \asymp \Bigg\{|\triangle_{\eta}|^{2(s+1)}\sum_{k=1}^{K_n}\lambda_k \|\psi_{k}\|_{s+1,\infty}^2
+\sum_{k=K_n+1}^{\infty} \lambda_{k} \|\psi_{k}\|_{\infty}^2\Bigg\}^{2}.
\]
Therefore, using the discretization method and Bernstein inequality
\begin{align*}
&\sup_{(\bs{z},\bs{z}') \in \Omega^2}\Bigg| \frac{1}{n}\sum_{i=1}^{n}\nabla\eta_{i}(\bs{z})\nabla\eta_{i}(\bs{z}^{\prime}) -  E \{\nabla\eta_{i}(\bs{z})\nabla\eta_{i}(\bs{z}^{\prime}) \}  \Bigg|\\
&=O_P\Bigg\{(\log n)^{1/2} n^{-1/2}|\triangle_{\eta}|^{2(s+1)}\sum_{k=1}^{K_n}\lambda_k \|\psi_{k}\|_{s+1,\infty}^2
(\log n)^{1/2} n^{-1/2}\sum_{k=K_n+1}^{\infty} \lambda_{k} \|\psi_{k}\|_{\infty}^2\Bigg\}.
\end{align*}
Next we derive (\ref{EQ:eta_etatilde-eta}). Noting that
$n^{-1}\sum_{i=1}^{n}\eta_i(\bs{z})\nabla\eta_{i}(\bs{z}^{\prime})
=\xi_{\cdot kk^{\prime}}(\lambda_{k}\lambda_{k'})^{1/2}\psi_{k}(\bs{z}^{\prime})
(\nabla\psi_{k^{\prime}})(\bs{z}^{\prime})$,
we have
\begin{align*}
\sup_{(\bs{z},\bs{z}^{\prime})\in \Omega^2}\left\vert  E\left\{\frac{1}{n}\sum_{i=1}^{n}\eta_i(\bs{z})\nabla\eta_{i}(\bs{z}^{\prime})\right\}\right\vert
&\leq \sum_{k=1}^{\infty}\lambda_k\|\psi_{k}\|_{\infty}
\|\nabla\psi_{k^{\prime}}\|_{\infty}\\
&\leq C|\triangle_{\eta}|^{s+1}\sum_{k=1}^{K_n} \lambda_k \|\psi_k\|_{s+1,\infty}\|\psi_{k}\|_{\infty} + \sum_{k=K_n+1}^{\infty}\lambda_k\|\psi_k\|^2_{\infty},\\
\textrm{var}\Bigg\{n^{-1}\sum_{i=1}^{n}\eta_i(\bs{z})\nabla\eta_{i}(\bs{z}^{\prime})\Bigg\}
%&=n^{-2}\sum_{i=1}^{n}\textrm{var}\Bigg\{\eta_i(\bs{z})\nabla\eta_{i}(\bs{z}^{\prime})\Bigg\}\\
&=n^{-1}\left[ E\left\{\eta_i^2(\bs{z})\nabla\eta_{i}(\bs{z}^{\prime})^2\right\}- \left\{E\eta_i(\bs{z})\nabla\eta_{i}(\bs{z}^{\prime})\right\}^2\right],
\end{align*}
\begin{align*}
\sup_{\bs{z},\bs{z}^{\prime}\in \Omega} E\left\{\eta_i^2(\bs{z})\nabla\eta_{i}(\bs{z}^{\prime})^2\right\}&=\sup_{\bs{z},\bs{z}^{\prime}\in \Omega}\left\{\sum_{k=1}^{\infty} E\xi_{ik}^{4} \lambda_k^2\psi_{k}^2(\bs{z})(\nabla\psi_k)^2(\bs{z}^{\prime})
+\sum_{k \neq k^{\prime}} \lambda_k\lambda_{k^{\prime}} \psi_{k}^2(\bs{z})(\nabla\psi_k)^2(\bs{z}^{\prime})\right\}\\
%&\leq C\left\{\sum_{k=1}^{\infty}\lambda_k\|\psi_{k}\|_{\infty}^2\right\}\left\{\sum_{k^{\prime}=1}^{\infty} \lambda_{k^{\prime}} (\nabla\psi_{k^{\prime}})^2(\bs{z}^{\prime})\right\}\\
&\leq C \left\{|\triangle_{\eta}|^{2(s+1)}\sum_{k=1}^{K_n} \lambda_k \|\psi_k\|_{s+1,\infty}^2 +\sum_{k=K_n+1}^{\infty}\lambda_k\|\psi_k\|^2_{\infty}\right\},
\end{align*}
and 
\[
\sup_{\bs{z},\bs{z}^{\prime}\in \Omega} \left|  E\left\{\eta_i(\bs{z})\nabla\eta_{i}(\bs{z}^{\prime})\right\} \right|
%&=\sup_{\bs{z},\bs{z}^{\prime}\in \Omega} \sum_{k=1}^{\infty}\lambda_k\psi_k(\bs{z}) \nabla\psi_k(\bs{z}^{\prime})\notag\\
\leq C \left\{|\triangle_{\eta}|^{s+1}\sum_{k=1}^{K_n} \lambda_k \|\psi_{k}\|_{s+1,\infty} \|\psi_{k}\|_{\infty}+\sum_{k=K_n+1}^{\infty} \lambda_k \|\psi_{k}\|_{\infty}^2\right\}.
\] 
Therefore,
\[
\sup_{\bs{z},\bs{z}^{\prime}\in \Omega}  E\Bigg\{n^{-1}\sum_{i=1}^{n}\eta_i(\bs{z})\nabla\eta_{i}(\bs{z}^{\prime})\Bigg\}^2
\!=\!O\left[\left\{|\triangle_{\eta}|^{s+1}\sum_{k=1}^{K_n} \lambda_k \|\psi_{k}\|_{s+1,\infty} \|\psi_{k}\|_{\infty}+\sum_{k=K_n+1}^{\infty} \lambda_k \|\psi_{k}\|_{\infty}^2\right\}^2\right].
\]
Hence,
\begin{align*}
&\sup_{(\bs{z},\bs{z}^{\prime})\in \Omega^2}\left\vert n^{-1}\sum_{i=1}^{n}\nabla\eta_{i}(\bs{z})
\widetilde{\varepsilon}_{i}(\bs{z}^{\prime})
- E\left\{\nabla\eta_{i}(\bs{z})
\widetilde{\varepsilon}_{i}(\bs{z}^{\prime})\right\}\right\vert \\
&=O_{P}\left\{(\log n)^{1/2}n^{-1/2} |\triangle_{\eta}|^{2(s+1)}\sum_{k=1}^{K_n} \lambda_k \|\psi_k\|_{s+1,\infty}^2 +(\log n)^{1/2}n^{-1/2}\sum_{k=K_n+1}^{\infty}\lambda_k\|\psi_k\|^2_{\infty}\right\}
\end{align*}
using the discretization method and Bernstein inequality.

Finally, we provide the proof of (\ref{EQ:b_etatilde-eta}). Note that
\begin{align*}
& E\left\{\widetilde{b}_{i}(\bs{z}')\widetilde{b}_{i^{\prime}}(\bs{z}')
\nabla\eta_{i}(\bs{z})\nabla\eta_{i'}(\bs{z})\right\}\asymp |\triangle_{\eta}|^{-4}\\
&\times \widetilde{\mathbf{B}}_{\eta}(\bs{z}')^{\top}\frac{1}{N^2}\sum_{j,j'=1}^{N} \widetilde{\mathbf{B}}_{\eta}(\bs{z}_{j})\widetilde{\mathbf{B}}_{\eta}(\bs{z}_{j'})^{\top} E\left\{\sum_{\ell, \ell'=0}^{p}X_{i\ell}X_{i\ell'}\widehat{\eta}_{\ell}(\bs{z}_{j})
\widehat{\eta}_{\ell'}(\bs{z}_{j'})\nabla\eta_i(\bs{z})\nabla\eta_{i^{\prime}}(\bs{z})\right\}
\widetilde{\mathbf{B}}_{\eta}(\bs{z}'),
\end{align*}
and by (\ref{EQ:eta_etatilde-eta}),
\begin{align*}
& E\left\{\sum_{\ell, \ell'=0}^{p}X_{i\ell}X_{i\ell'}\widehat{\eta}_{\ell}(\bs{z}_{j})\widehat{\eta}_{\ell'}(\bs{z}_{j'})
\nabla\eta_i(\bs{z})\nabla\eta_{i^{\prime}}(\bs{z})\right\}=\frac{1}{n^2N^2}
\sum_{i^{\prime\prime}, i^{\prime\prime\prime}=1}^{n} E\Bigg[\left\{\mathbf{X}_i \otimes \widetilde{\mathbf{B}}(\bs{z}_j)\right\}^{\top} \mathbf{\Gamma}_{n,\rho}^{-1}\\
&\quad \times \sum_{j^{\prime\prime}, j^{\prime\prime\prime}=1}^{N} \mathbf{X}_{i^{\prime\prime}}\mathbf{X}_{i^{\prime\prime\prime}}^{\top} \otimes \widetilde{\mathbf{B}}(\bs{z}_{j^{\prime\prime}}) \widetilde{\mathbf{B}}(\bs{z}_{j^{\prime\prime\prime}})^{\top}
\mathbf{\Gamma}_{n,\rho}^{-1}\mathbf{X}_{i^{\prime}} \otimes \widetilde{\mathbf{B}}(\bs{z}_{j^{\prime}})\Bigg] E\left\{\eta_{i^{\prime\prime}}(\bs{z}_{j^{\prime\prime}})
\eta_{i^{\prime\prime\prime}}(\bs{z}_{j^{\prime\prime\prime}})\nabla\eta_{i}(\bs{z})\nabla \eta_{i^{\prime}}(\bs{z})\right\}\\
&=\frac{1}{n^2N^2} E\Bigg[\left\{\mathbf{X}_i \otimes \widetilde{\mathbf{B}}(\bs{z}_j)\right\}^{\top} \mathbf{\Gamma}_{n,\rho}^{-1}\sum_{j^{\prime\prime}, j^{\prime\prime\prime}=1}^{N} \mathbf{X}_{i^{\prime\prime}}\mathbf{X}_{i^{\prime\prime\prime}}^{\top} \otimes \widetilde{\mathbf{B}}(\bs{z}_{j^{\prime\prime}}) \widetilde{\mathbf{B}}(\bs{z}_{j^{\prime\prime\prime}})^{\top}
\mathbf{\Gamma}_{n,\rho}^{-1} \left\{\mathbf{X}_{i^{\prime}} \otimes \widetilde{\mathbf{B}}(\bs{z}_{j^{\prime}})\right\}\Bigg]\\
&\quad \times  E\left\{\eta_{i}(\bs{z}_{j^{\prime\prime}})
\eta_{i^{\prime}}(\bs{z}_{j^{\prime\prime\prime}})\nabla\eta_{i}(\bs{z})\nabla\eta_{i^{\prime}}(\bs{z})+\eta_{i^{\prime}}(\bs{z}_{j^{\prime\prime}})
\eta_{i}(\bs{z}_{j^{\prime\prime\prime}})\nabla\eta_{i}(\bs{z})\nabla\eta_{i^{\prime}}(\bs{z})\right\}.
\end{align*}
If $i \neq i'$, we have
\begin{align*}
E \Bigg\{\eta_{i}&(\bs{z}_{j^{\prime\prime}})
\eta_{i^{\prime}}(\bs{z}_{j^{\prime\prime\prime}})\nabla\eta_{i}(\bs{z})\nabla\eta_{i^{\prime}}(\bs{z})
+\eta_{i^{\prime}}(\bs{z}_{j^{\prime\prime}})
\eta_{i}(\bs{z}_{j^{\prime\prime\prime}})\nabla\eta_{i}(\bs{z})\nabla\eta_{i^{\prime}}(\bs{z})\Bigg\}\\
%&= E\left\{\eta_{i}(\bs{z}_{j^{\prime\prime}})\nabla\eta_{i}(\bs{z})\right\}
% E\left\{\eta_{i^{\prime}}(\bs{z}_{j^{\prime\prime\prime}})\nabla\eta_{i^{\prime}}(\bs{z})\right\}+ E\left\{\eta_{i}(\bs{z}_{j'''})\nabla\eta_{i}(\bs{z})\right\}
% E\left\{\eta_{i^{\prime}}(\bs{z}_{j^{\prime\prime}})\nabla\eta_{i^{\prime}}(\bs{z})\right\}\\
%&=\left\{\sum_{k=1}^{\infty}\lambda_k \psi_k(\bs{z}_{j''}) \nabla \psi_k(\bs{z})\right\}\left\{\sum_{k=1}^{\infty}\lambda_k \psi_k(\bs{z}_{j'''}) \nabla \psi_k(\bs{z})\right\} +\left\{\sum_{k=1}^{\infty}\lambda_k \psi_k(\bs{z}_{j''}) \nabla \psi_k(\bs{z})\right\}\left\{\sum_{k=1}^{\infty}\lambda_k \psi_k(\bs{z}_{j'''}) \nabla \psi_k(\bs{z})\right\}\\
&\asymp \left\{\sum_{k=1}^{K_n}\lambda_k|\triangle|_{\eta}^{s+1}\|\psi_k\|_{s+1,\infty}\|\psi_k\|_{\infty}+\sum_{k=K_n+1}^{\infty}\lambda_k\|\psi_k\|_{\infty}^2\right\}^2.
\end{align*}
If $i = i'$, then we have
\begin{align*}
 E  \Bigg\{\eta_{i}&(\bs{z}_{j^{\prime\prime}})
\eta_{i}(\bs{z}_{j^{\prime\prime\prime}})\nabla\eta_{i}(\bs{z})\nabla\eta_{i}(\bs{z})\Bigg\}
=\sum_{k=1}^{\infty} \lambda_k^2 E\xi_{ik}^4\psi_{k}(\bs{z}'')\psi_{k}(\bs{z}''')\nabla \psi_{k}(\bs{z})\nabla \psi_{k}(\bs{z})\\
& \leq \sum_{k=1}^{K_n}\lambda_k^2|\triangle|_{\eta}^{2s+2}\|\psi_k\|_{s+1,\infty}^2\|\psi_k\|_{\infty}^2+\sum_{k=K_n+1}^{\infty}\lambda_k^2\|\psi_k\|_{\infty}^4.
\end{align*}
Thus,
\begin{align*}
 E &\Bigg\{\!\sum_{\ell, \ell'=0}^{p}  X_{i\ell}X_{i\ell'}\widehat{\eta}_{\ell}(\bs{z}_{j})\widehat{\eta}_{\ell'}(\bs{z}_{j'})
\nabla\eta_i(\bs{z})\nabla\eta_{i^{\prime}}(\bs{z})\Bigg\} 
\!\! \\
&\asymp\!\! \left\{\sum_{k=1}^{K_n}\lambda_k|\triangle|_{\eta}^{s+1}\|\psi_k\|_{s+1,\infty}\|\psi_k\|_{\infty}+\!\!\!\sum_{k=K_n+1}^{\infty}\lambda_k\|\psi_k\|_{\infty}^2\right\}^2.
\end{align*}
Therefore,
\begin{align*}
 E \left\{\frac{1}{n}\sum_{i=1}^{n} \nabla \eta_i(\bs{z})\widetilde{b}_{i}(\bs{z}^{\prime})\right\}^2&=\frac{1}{n^2}\sum_{i, i'=1}^{n} E \widetilde{b}_{i}(\bs{z}')\widetilde{b}_{i^{\prime}}(\bs{z}')\nabla\eta_{i}(\bs{z})\nabla\eta_{i'}(\bs{z})\\
&=O\left[n^{-2} \sum_{k=1}^{K_n}\lambda_k^2|\triangle|_{\eta}^{2s+2}\|\psi_k\|_{s+1,\infty}^2\|\psi_k\|_{\infty}^2+
n^{-2}\sum_{k=K_n+1}^{\infty}\lambda_k^2\|\psi_k\|_{\infty}^4\right].
\end{align*}
Thus, (\ref{EQ:b_etatilde-eta}) is obtained.
\end{proof}

%%%%%%%%%%%%%%%%%%%%%%%%%%%%%%%%%%%%%%%%%%%%%%%%%%%%%%%%%%%%%
\begin{lemma}
\label{LEM:eps-others}
Under Assumptions (A1)--(A5), (C1)--(C3), we have
\begin{align}
\sup_{(\bs{z},\bs{z}^{\prime})\in \Omega^2}\left\vert n^{-1}\sum_{i=1}^{n}\widetilde{\varepsilon}_{i}(\bs{z})\widetilde{\varepsilon}_{i}(\bs{z}^{\prime})\right\vert &=O_{P}(N^{-1}|\triangle_{\eta}|^{-2}),\label{EQ:eps_eps}\\
\sup_{(\bs{z},\bs{z}^{\prime})\in \Omega^2}\left\vert n^{-1}\sum_{i=1}^{n}\nabla\eta_{i}(\bs{z})
\widetilde{\varepsilon}_{i}(\bs{z}^{\prime})\right\vert &=O_{P}\left\{n^{-1/2}N^{-1/2}(\log n)^{1/2} |\triangle_{\eta}|^{s}\sum_{k=1}^{K_n} \lambda_k^{1/2}\|\psi_{k}\|_{s+1,\infty}\right\}\notag\\
 &\!\!\!\!\!\!\!\!\!\!\!\!\!\!\!\!\!\!\!\!\!\!\!\!+O_P\left\{n^{-1/2}N^{-1/2}|\triangle_{\eta}|^{-1}(\log n)^{1/2}\sum_{k=K_n+1}^{\infty} \lambda_k^{1/2} \|\psi_{k}\|_{\infty}\right\},\label{EQ:etatilde-eta_eps}\\
\sup_{(\bs{z},\bs{z}^{\prime})\in \Omega^2}\left\vert n^{-1}\sum_{i=1}^{n}\widetilde{b}_{i}(\bs{z})\widetilde{\varepsilon}_{i}(\bs{z}^{\prime})\right\vert &=O_{P}\{n^{-1}N^{-1}|\triangle|^{-2}(\log n)^{1/2}\},
\label{EQ:b-eps}\\
\sup_{(\bs{z},\bs{z}^{\prime})\in \Omega^2}\left \vert n^{-1}
\sum_{i=1}^{n}\eta_i(\bs{z})\widetilde{\varepsilon}_{i}(\bs{z}^{\prime})\right\vert &=O_{P}\{n^{-1/2}N^{-1/2}|\triangle_{\eta}|^{-1}(\log n)^{1/2}\}.
\label{EQ:eta_eps}
\end{align}
\end{lemma}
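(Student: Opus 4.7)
\textbf{Proof plan for Lemma \ref{LEM:eps-others}.} The four bounds all hinge on the spline representation
\[
\widetilde{\varepsilon}_{i}(\bs{z}) \;=\; \widetilde{\mathbf{B}}_{\eta}(\bs{z})^{\top}\bs{\Upsilon}_{n}^{-1}\frac{1}{N}\sum_{j=1}^{N} \widetilde{\mathbf{B}}_{\eta}(\bs{z}_{j})\sigma(\bs{z}_j)\varepsilon_{ij},
\]
combined with the eigenvalue bound $\lambda_{\min}(\bs{\Upsilon}_n)\asymp\lambda_{\max}(\bs{\Upsilon}_n)\asymp|\triangle_{\eta}|^{2}$ from Lemma \ref{LEM:Upsilon_0}, the stability estimate in Lemma \ref{LEM:normequity}, and the Riemann-sum approximations in Lemma \ref{LEM:integration}. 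For each of the four quantities I will first compute the mean and variance pointwise in $(\bs{z},\bs{z}^{\prime})$, then pass to the supremum by discretizing $\Omega^2$ on a grid of cardinality $O(|\triangle_{\eta}|^{-4})$ (one representative per pair of triangles in $\triangle_{\eta}\times\triangle_{\eta}$), and finally use a Bernstein argument exactly as in the proofs of Lemmas \ref{LEM:b-others} and \ref{LEM:eta-eta}. The independence across $i$ of $\varepsilon_{ij}$ and their independence from $\xi_{ik}$ (Assumption (C1)) will make the tails easy to control; the key is tracking the correct powers of $|\triangle_{\eta}|$ and $N^{-1/2}$ through $\bs{\Upsilon}_n^{-1}$.

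For \eqref{EQ:eps_eps}, the expectation
\[
E\{\widetilde{\varepsilon}_{i}(\bs{z})\widetilde{\varepsilon}_{i}(\bs{z}')\}
= \widetilde{\mathbf{B}}_{\eta}(\bs{z})^{\top}\bs{\Upsilon}_{n}^{-1}\frac{1}{N^2}\sum_{j=1}^{N}\widetilde{\mathbf{B}}_{\eta}(\bs{z}_{j})\widetilde{\mathbf{B}}_{\eta}(\bs{z}_{j})^{\top}\sigma^{2}(\bs{z}_{j})\,\bs{\Upsilon}_{n}^{-1}\widetilde{\mathbf{B}}_{\eta}(\bs{z}')
\]
is of order $|\triangle_{\eta}|^{-2}\cdot N^{-1}\cdot 1\cdot|\triangle_{\eta}|^{-2}\cdot|\triangle_{\eta}|^{2}=N^{-1}|\triangle_{\eta}|^{-2}$ after applying Lemmas \ref{LEM:Upsilon_0}--\ref{LEM:integration}. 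A second-moment computation, using $E(\varepsilon_{ij}^{2}\varepsilon_{ij'}^{2}) =O(1)$, shows that $\mathrm{Var}\{n^{-1}\sum_i\widetilde{\varepsilon}_i(\bs{z})\widetilde{\varepsilon}_i(\bs{z}')\}$ is one factor of $n^{-1}$ smaller, so the mean dominates and the bound follows. For \eqref{EQ:b-eps}, I expand $\widetilde{b}_i$ by \eqref{EQ:b_tilde} and use $E(\varepsilon_{ij})=0$, so that each $\widetilde{b}_i(\bs{z})\widetilde{\varepsilon}_i(\bs{z}')$ has mean zero; the variance is bounded by the product of the orders derived in \eqref{EQ:b-b} and \eqref{EQ:eps_eps}, yielding $O\{n^{-1}(nN)^{-1}|\triangle_{\eta}|^{-4}\}$, after which Bernstein plus discretization give the stated $(\log n)^{1/2}$ factor.

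For \eqref{EQ:etatilde-eta_eps} and \eqref{EQ:eta_eps}, the summands also have mean zero because $\varepsilon_{ij}$ is independent of $\{\xi_{ik}\}$. Writing $\eta_i(\bs{z})=\sum_k\lambda_k^{1/2}\xi_{ik}\psi_k(\bs{z})$ and $\nabla\eta_i(\bs{z})=\sum_k\lambda_k^{1/2}\xi_{ik}\nabla\psi_k(\bs{z})$ with $\|\nabla\psi_k\|_{\infty}\le C|\triangle_{\eta}|^{s+1}\|\psi_k\|_{s+1,\infty}$ (from the spline approximation in Lemma \ref{LEM:appord}), the second moment of the summand splits into a sum over $k$, giving
\[
E\{\eta_i(\bs{z})\widetilde{\varepsilon}_i(\bs{z}')\}^{2}\asymp N^{-1}|\triangle_{\eta}|^{-2}\sum_{k=1}^{\infty}\lambda_k\psi_k^{2}(\bs{z})
\]
after inserting $\bs{\Upsilon}_n^{-1}$ and applying \eqref{EQ:sigma_integration}. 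The truncation at $K_n$, controlled by Assumption (C2), separates the contribution of the smoothed components (which produces the $|\triangle_{\eta}|^{s}$ factor) from the tail. Dividing by $n$ and applying Bernstein after the standard discretization (valid because both factors are splines, so the supremum is attained up to a constant on a finite grid) then delivers \eqref{EQ:etatilde-eta_eps} and \eqref{EQ:eta_eps}.

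The main obstacle is bookkeeping: each of the four bounds requires juggling $\|\bs{\Upsilon}_n^{-1}\|$, the Riemann-sum error in Lemma \ref{LEM:integration}, the truncation tail controlled by (C2), and an extra $(\log n)^{1/2}$ from Bernstein after discretization. The sub-Gaussian/Cram\'er condition on $\varepsilon_{ij}\xi_{ik}$ needed to invoke Bernstein is handled by the standard truncation device already used in Lemmas \ref{LEM:inner product} and \ref{LEM:xietilda-hat}; once that is set up, the four statements follow in parallel, and I expect no genuinely new difficulty beyond the careful moment calculations outlined above.
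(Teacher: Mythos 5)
Your overall strategy---writing $\widetilde{\varepsilon}_i(\bs{z})=\widetilde{\mathbf{B}}_{\eta}(\bs{z})^{\top}\bs{\Upsilon}_n^{-1}N^{-1}\sum_j\widetilde{\mathbf{B}}_{\eta}(\bs{z}_j)\sigma(\bs{z}_j)\varepsilon_{ij}$, controlling $\bs{\Upsilon}_n^{-1}$ by Lemma \ref{LEM:Upsilon_0}, computing means and second moments pointwise, then discretizing and applying Bernstein---is exactly the paper's route, and your treatments of \eqref{EQ:eps_eps}, \eqref{EQ:etatilde-eta_eps} and \eqref{EQ:eta_eps} match the paper's computations essentially step for step (a dominating mean of order $N^{-1}|\triangle_{\eta}|^{-2}$ for the first; a mean-zero sum with variance of order $n^{-1}N^{-1}|\triangle_{\eta}|^{-2}G_{\eta}(\bs{z},\bs{z})$ for the last; the $K_n$ truncation via (C2) for the second).

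The gap is in \eqref{EQ:b-eps}. Bounding the second moment of $n^{-1}\sum_i\widetilde{b}_i(\bs{z})\widetilde{\varepsilon}_i(\bs{z}^{\prime})$ by ``$n^{-1}$ times the product of the orders in \eqref{EQ:b-b} and \eqref{EQ:eps_eps}'' gives $n^{-2}N^{-1}|\triangle_{\eta}|^{-4}$, whose square root is $n^{-1}N^{-1/2}|\triangle_{\eta}|^{-2}$---short of the stated rate $n^{-1}N^{-1}|\triangle|^{-2}(\log n)^{1/2}$ by a factor of $N^{1/2}$. The product heuristic treats $\widetilde{b}_i$ and $\widetilde{\varepsilon}_i$ as if independent with their naive marginal second moments, but they are not: the paper instead computes $E\{\widetilde{b}_i(\bs{z})\varepsilon_{ij}\widetilde{b}_{i^{\prime}}(\bs{z})\varepsilon_{i^{\prime}j^{\prime}}\}$ directly via the decomposition \eqref{EQ:b_tilde}, where the only surviving contributions come from pairing $\varepsilon_{ij},\varepsilon_{i^{\prime}j^{\prime}}$ with the noise inside the $\widehat{\varepsilon}_{\ell}$ component of $\widetilde{b}_i$, each pairing costing a factor $(nN)^{-1}$; this yields $O(n^{-2}N^{-2}|\triangle|^{-4})$ for the second moment of the average and hence the stated rate. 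Your weaker bound is still $o_P(1)$ under (C3) and would suffice for Theorem \ref{THM:Ghat-G}, but it does not prove the lemma as written. A secondary caution: in \eqref{EQ:eta_eps} the factor $\eta_i(\bs{z})$ is not a spline, so the ``one grid point per pair of triangles'' discretization must be supplemented by the modulus-of-continuity control on $\psi_k$ from (A4); this is routine but should be said.
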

%%%%%%%%%%%%%%%%%%%%%%%%%%%%%%%%%%%%%%%%%%%%%%%%%%%%%%%%%%%%%
\begin{proof}
We first show (\ref{EQ:eps_eps}). Let $\bar{\varepsilon}_{\cdot jj^{\prime}}=n^{-1}\sum_{i=1}^{n}\varepsilon _{ij}\varepsilon _{ij^{\prime }}$, where $ E(\bar{\varepsilon}_{\cdot jj^{\prime}})=I(j=j')$. Note that
\[
	\frac{1}{n}\sum_{i=1}^{n}\widetilde{\varepsilon}_{i}(\bs{z})\widetilde{\varepsilon}_{i}(\bs{z}^{\prime})
	=\widetilde{\mathbf{B}}(\bs{z})^{\top}\bs{\Upsilon}_{n}^{-1}
		\left\{\frac{1}{N^2}\sum_{j=1}^{N}\sum_{j^{\prime}=1}^{N}
		\widetilde{\mathbf{B}}(\bs{z}_{j})\widetilde{\mathbf{B}}(\bs{z}_{j^{\prime}})^{\top}\sigma(\bs{z}_j) \sigma(\bs{z}_{j^{\prime}}) \bar{\varepsilon}_{\cdot jj^{\prime}}\right\}
	\bs{\Upsilon}_{n}^{-1}\widetilde{\mathbf{B}}(\bs{z}^{\prime}).
\]
It is easy to see that,
\[
 E\left\{\frac{1}{n}\sum_{i=1}^{n}\widetilde{\varepsilon}_{i}(\bs{z})
\widetilde{\varepsilon}_{i}(\bs{z}^{\prime})\right\} =
\widetilde{\mathbf{B}}(\bs{z})^{\top}\bs{\Upsilon}_{n}^{-1}
		\left\{\frac{1}{N^2}\sum_{j=1}^{N}
		\widetilde{\mathbf{B}}(\bs{z}_{j})\widetilde{\mathbf{B}}(\bs{z}_{j^{\prime}})^{\top}\sigma^2(\bs{z}_j) \right\}
	\bs{\Upsilon}_{n}^{-1}\widetilde{\mathbf{B}}(\bs{z}^{\prime}).
\]
Therefore, 
$\sup_{(\bs{z},\bs{z}^{\prime})\in \Omega^2}\left\vert  E\left\{\frac{1}{n}\sum_{i=1}^{n}\widetilde{\varepsilon}_{i}(\bs{z})
\widetilde{\varepsilon}_{i}(\bs{z}^{\prime})\right\} \right\vert
=O(N^{-1}|\triangle_{\eta}|^{-2})$.
In addition, note that
\begin{align*}
 E\left\{\widetilde{\varepsilon}_{i}(\bs{z})\widetilde{\varepsilon}_{i}(\bs{z}^{\prime})\right\}=&\widetilde{\mathbf{B}}(\bs{z})^{\top}\bs{\Upsilon}_{n}^{-1}
		\left\{\frac{1}{N^2}\sum_{j=1}^{N}
		\widetilde{\mathbf{B}}(\bs{z}_{j})\widetilde{\mathbf{B}}(\bs{z}_{j^{\prime}})^{\top}\sigma^2(\bs{z}_j) \right\}
	\bs{\Upsilon}_{n}^{-1}\widetilde{\mathbf{B}}(\bs{z}^{\prime})=O(N^{-1}|\triangle_{\eta}|^{-2}),\\
 E\left\{\widetilde{\varepsilon}_{i}(\bs{z})\widetilde{\varepsilon}_{i}(\bs{z}^{\prime})\right\}^2
=& E\left[\widetilde{\mathbf{B}}_{\eta}(\bs{z})^{\top}\bs{\Upsilon}_{n}^{-1}
\left\{\frac{1}{N^2}\sum_{j=1}^{N}\sum_{j^{\prime}=1}^{N}
\widetilde{\mathbf{B}}_{\eta}(\bs{z}_{j})\widetilde{\mathbf{B}}_{\eta}(\bs{z}_{j^{\prime}})^{\top}\sigma(\bs{z}_j) \sigma(\bs{z}_{j^{\prime}}) \varepsilon_{ij}\varepsilon_{ij^{\prime}}\right\}
\bs{\Upsilon}_{n}^{-1}\widetilde{\mathbf{B}}_{\eta}(\bs{z}')\right]^2\\
\asymp & \frac{|\triangle_{\eta}|^{-8}}{N^4}\sum_{j,j',j'',j'''=1}^{N}\widetilde{\mathbf{B}}_{\eta}(\bs{z}_{j})\widetilde{\mathbf{B}}_{\eta}(\bs{z}_{j^{\prime}})^{\top}\widetilde{\mathbf{B}}_{\eta}(\bs{z}_{j''})\widetilde{\mathbf{B}}_{\eta}(\bs{z}_{j'''})^{\top} \\
&\times \sigma(\bs{z}_j) \sigma(\bs{z}_{j^{\prime}})\sigma(\bs{z}_{j''}) \sigma(\bs{z}_{j'''}) \varepsilon_{ij}\varepsilon_{ij^{\prime}}\varepsilon_{ij''}\varepsilon_{ij'''} \asymp  N^{-2}|\triangle_{\eta}|^{-4}.
\end{align*}
Thus,
$\textrm{var}\left\{\frac{1}{n}\sum_{i=1}^{n}\widetilde{\varepsilon}_{i}(\bs{z})
\widetilde{\varepsilon}_{i}(\bs{z}^{\prime})\right\}
=\frac{1}{n^2}\sum_{i=1}^n\textrm{var}\left\{\widetilde{\varepsilon}_{i}(\bs{z})
\widetilde{\varepsilon}_{i}(\bs{z}^{\prime})\right\}
\asymp n^{-1}N^{-2}|\triangle_{\eta}|^{-4}$. 
Therefore,
\begin{align*}
\sup_{(\bs{z},\bs{z}^{\prime})\in \Omega^2}\left\vert n^{-1}\sum_{i=1}^{n}\widetilde{\varepsilon}_{i}(\bs{z})\widetilde{\varepsilon}_{i}(\bs{z}^{\prime})
- E\left\{\widetilde{\varepsilon}_{i}(\bs{z})\widetilde{\varepsilon}_{i}(\bs{z}^{\prime})\right\}\right\vert &=O_{P}\{n^{-1/2}N^{-1}(\log n)^{1/2}|\triangle_{\eta}|^{-2}\}
\end{align*}
using the discretization method and Bernstein inequality.

Next we derive (\ref{EQ:etatilde-eta_eps}). Note that
\begin{align*} 	
\frac{1}{n}\sum_{i=1}^{n}\nabla\eta_{i}(\bs{z})\widetilde{\varepsilon}_{i}(\bs{z}^{\prime})
	&=\frac{1}{n}\sum_{i=1}^{n}\sum_{k=1}^{\infty}\xi_{ik} \lambda_k^{1/2} \nabla\psi_k(\bs{z})
	\widetilde{\mathbf{B}}_{\eta}(\bs{z}^{\prime})^{\top}\bs{\Upsilon}_{n}^{-1}
	\left\{\frac{1}{N}\sum_{j=1}^{N}
	\widetilde{\mathbf{B}}_{\eta}(\bs{z}_{j})\sigma(\bs{z}_j) \varepsilon_{ij}\right\},\\
\left\{\frac{1}{n}\sum_{i=1}^{n}\nabla\eta_{i}(\bs{z})\widetilde{\varepsilon}_{i}(\bs{z}^{\prime})\right\}^2
&=\frac{1}{n^2}\sum_{i=1}^{n}\sum_{i^{\prime}=1}^{n}\sum_{k=1}^{\infty}\sum_{k^{\prime}=1}^{\infty}
\xi_{ik}\xi_{i^{\prime}k^{\prime}}(\lambda_{k}\lambda_{k'})^{1/2}
\nabla\psi_k(\bs{z}) \\
&\hspace{-4cm} \times \nabla\psi_{k^{\prime}}(\bs{z})
\widetilde{\mathbf{B}}_{\eta}(\bs{z}^{\prime})^{\top}\bs{\Upsilon}_{n}^{-1}
\left\{\frac{1}{N^2}\sum_{j=1}^{N}\sum_{j^{\prime}=1}^{N}
\widetilde{\mathbf{B}}_{\eta}(\bs{z}_{j})\widetilde{\mathbf{B}}_{\eta}(\bs{z}_{j^{\prime}})^{\top}\sigma(\bs{z}_j) \sigma(\bs{z}_{j^{\prime}})
\varepsilon_{ij}\varepsilon_{i^{\prime}j^{\prime}}\right\}
\bs{\Upsilon}_{n}^{-1}\widetilde{\mathbf{B}}_{\eta}(\bs{z}^{\prime}).
\end{align*}
Next observe that $ E\left[\frac{1}{n}\sum_{i=1}^{n}\nabla\eta_{i}(\bs{z})
\widetilde{\varepsilon}_{i}(\bs{z}^{\prime})\right]=0$ and
\begin{align*}
 E&\left\{\frac{1}{n}\sum_{i=1}^{n}\nabla\eta_{i}(\bs{z})
(\nabla\psi_k)^2(\bs{z})\right\}
=\frac{1}{n^2}\sum_{i=1}^{n}\sum_{k=1}^{\infty}
\lambda_{k}(\nabla\psi_k)^2(\bs{z}) \\
& \qquad \qquad \times \widetilde{\mathbf{B}}_{\eta}(\bs{z}^{\prime})^{\top}\bs{\Upsilon}_{n}^{-1}
\left\{\frac{1}{N^2}\sum_{j=1}^{N}
\widetilde{\mathbf{B}}_{\eta}(\bs{z}_{j})\widetilde{\mathbf{B}}_{\eta}(\bs{z}_{j})^{\top}\sigma^2(\bs{z}_j) \right\}
\bs{\Upsilon}_{n}^{-1}\widetilde{\mathbf{B}}_{\eta}(\bs{z}^{\prime}) 
\end{align*}
So, 
\begin{align*}
E\left\{\frac{1}{n}\sum_{i=1}^{n}\nabla\eta_{i}(\bs{z})
(\nabla\psi_k)^2(\bs{z})\right\}
%&\leq\frac{C_1}{n^2N}\sum_{i=1}^{n}\sum_{k=1}^{\infty} \lambda_k (\nabla\psi_k)^2(\bs{z})
%\widetilde{\mathbf{B}}_{\eta}(\bs{z}^{\prime})^{\top}\bs{\Upsilon}_{n}^{-1}
%\widetilde{\mathbf{B}}_{\eta}(\bs{z}^{\prime}) \\
\leq \frac{C_1|\triangle_{\eta}|^{-2}}{nN}\left\{|\triangle_{\eta}|^{2(s+1)}\sum_{k=1}^{K_n} \lambda_k\|\psi_{k}\|_{s+1,\infty}^2 +\sum_{k=K_n+1}^{\infty} \lambda_k \|\psi_{k}\|_{\infty}^2\right\}.
\end{align*}

Thirdly, we prove (\ref{EQ:b-eps}). Note that for any $i$, $i^{\prime}$, $j$, $j^{\prime}$, we have
\begin{align*}
& E\left\{\widetilde{b}_{i}(\bs{z})\varepsilon_{ij}\widetilde{b}_{i^{\prime}}(\bs{z})\varepsilon_{i^{\prime}j'}\right\}\\
%&= E\left[\varepsilon_{ij}\varepsilon_{i^{\prime}j'}\mathbf{B}_{\eta}(\bs{z})^{\top}
%\bs{\Upsilon}_{n}^{-1}\frac{1}{N^2}\sum_{j''=1}^{N} \widetilde{\mathbf{B}}_{\eta}(\bs{z}_{j''})
%\sum_{\ell=0}^{p}X_{i\ell}\widehat{\varepsilon}_{\ell}(\bs{z}_{j''})\mathbf{B}_{\eta}(\bs{z})^{\top}\bs{\Upsilon}_{n}^{-1}
%\sum_{j'''=1}^{N} \widetilde{\mathbf{B}}(\bs{z}_{j'''})
%\sum_{\ell'=0}^{p}X_{i\ell'}\widehat{\varepsilon}_{\ell'}(\bs{z}_{j'''})\right]\\
&= E\left[\mathbf{B}_{\eta}(\bs{z})^{\top}\bs{\Upsilon}_{n}^{-1}\frac{1}{N^2}\sum_{j'',j'''=1}^{N} \widetilde{\mathbf{B}}_{\eta}(\bs{z}_{j''})\widetilde{\mathbf{B}}_{\eta}(\bs{z}_{j'''})^{\top}\sum_{\ell, \ell'=0}^{p}X_{i\ell}\widehat{\varepsilon}_{\ell}(\bs{z}_{j''})X_{i\ell'}
\widehat{\varepsilon}_{\ell'}(\bs{z}_{j'''})\varepsilon_{ij}\varepsilon_{i^{\prime}j'}
\bs{\Upsilon}_{n}^{-1}\mathbf{B}_{\eta}(\bs{z})\right]\\
&=O(n^{-2}N^{-2}|\triangle|^{-4}).
\end{align*}
Therefore,
\begin{align*}
 E\left\{\widetilde{b}_{i}(\bs{z})\widetilde{\varepsilon}_{i}(\bs{z}^{\prime})\widetilde{b}_{i^{\prime}}(\bs{z})
 \widetilde{\varepsilon}_{i^{\prime}}(\bs{z}^{\prime})\right\}
&=\widetilde{\mathbf{B}}_{\eta}(\bs{z}^{\prime})^{\top}
\bs{\Upsilon}_{n}^{-1}\frac{1}{N^2}\sum_{j,j^{\prime}=1}^{N}
 E\left\{\widetilde{b}_{i}(\bs{z})\varepsilon_{ij}\widetilde{b}_{i^{\prime}}(\bs{z})
\varepsilon_{i^{\prime}j'}\right\}
\bs{\Upsilon}_{n}^{-1}\widetilde{\mathbf{B}}_{\eta}(\bs{z}^{\prime})\\
&=O(n^{-2}N^{-2}|\triangle|^{-4}),\\
 E\left[n^{-1}\sum_{i=1}^{n}\widetilde{b}_{i}(\bs{z})\widetilde{\varepsilon}_{i}(\bs{z}^{\prime})\right]^2
&=\frac{1}{n^2}\sum_{i, i^{\prime}=1}^{n} E\left\{\widetilde{b}_{i}(\bs{z})\widetilde{\varepsilon}_{i}(\bs{z}^{\prime})
\widetilde{b}_{i^{\prime}}(\bs{z})\widetilde{\varepsilon}_{i^{\prime}}(\bs{z}^{\prime})\right\}
=O(n^{-2}N^{-2}|\triangle|^{-4}).
\end{align*}
Finally, we show (\ref{EQ:eta_eps}). Note that
\[
\sum_{i=1}^{n}\eta_i(\bs{z})\widetilde{\varepsilon}_{i}(\bs{z}^{\prime})
=\sum_{i=1}^{n}\widetilde{\mathbf{B}}_{\eta}(\bs{z}^{\prime})^{\top}\bs{\Upsilon}_{n}^{-1}\frac{1}{N}\sum_{j=1}^{N} \widetilde{\mathbf{B}}_{\eta}(\bs{z}_{j})\sigma(\bs{z}_j)\varepsilon_{ij}\sum_{k=1}^{\infty}\xi_{ik}\lambda_k^{1/2}\psi_{k}(\bs{z}),
\]
where $ E\left\{n^{-1}\sum_{i=1}^{n}\eta_i(\bs{z})\widetilde{\varepsilon}_{i}(\bs{z}^{\prime})\right\}=0$, and
\begin{align*}
& E\Bigg\{n^{-1}\sum_{i=1}^{n}\eta_i(\bs{z})\widetilde{\varepsilon}_{i}(\bs{z}^{\prime})\Bigg\}^2
=n^{-1} E\{\eta_i(\bs{z})^2\} E\{\widetilde{\varepsilon}_{i}(\bs{z}^{\prime})^2\}
=n^{-1}G_{\eta}(\bs{z},\bs{z}^{\prime})\widetilde{\mathbf{B}}_{\eta}(\bs{z}^{\prime})^{\top}\\
&\quad \times 
\bs{\Upsilon}_{n}^{-1}\frac{1}{N^2}
\sum_{j=1}^{N} \widetilde{\mathbf{B}}_{\eta}(\bs{z}_{j}) \widetilde{\mathbf{B}}_{\eta}(\bs{z}_{j})^{\top}\sigma^2(\bs{z}_j)\bs{\Upsilon}_{n}^{-1}
\widetilde{\mathbf{B}}_{\eta}(\bs{z}^{\prime})=O(n^{-1}N^{-1}|\triangle_{\eta}|^{-2}).
\end{align*}
Thus, (\ref{EQ:eta_eps}) is obtained.
\end{proof}

%%%%%%%%%%%%%%%%%%%%%%%%%%%%%%%%%%%%%%%%%%%%%%%%%%%%%%%%%%%%%%%%%%%%%%
%%%%%%%%%%%%%%%%%%%%%%%%%%%%%%%%%%%%%%%%%%%%%%%%%%%%%%%%%%%%%%%%%%%%%%
%%%%%%%%%%%%%%%%%%%%%%%%%%%%%%%%%%%%%%%%%%%%%%%%%%%%%%%%%%%%%%%%%%%%%%
\fontsize{12}{14pt plus.8pt minus .6pt}\selectfont
\vskip 0.1in  \noindent \textbf{Appendix B} \vskip 0.1in
\label{SEC:Supp02}
\renewcommand{\thesubsection}{B.\arabic{subsection}} 
\renewcommand{\thetable}{B.\arabic{table}} 
\renewcommand{\thefigure}{B.\arabic{figure}} 
\setcounter{table}{0}
\setcounter{figure}{0}

In this section, we provide some additional results from simulation studies and real application analysis.

%%%%%%%%%%%%%%%%%%%%%%%%%%%%%%%%%%%%%%%%%%%%%%%%%%%%%%%%%%%%%
%%%%%%%%%%%%%%%%%%%%%%%%%%%%%%%%%%%%%%%%%%%%%%%%%%%%%%%%%%%%%
\vskip .10in \noindent \textbf{B.1. More results of simulation studies} \vskip .10in
% \subsection{More results of simulation studies}
% \label{SUBSEC:Supp2-1}

In Section 5.1 of the main paper, we illustrated the advantage of the proposed method over the complex horseshoe domain in \cite{Sangalli:Ramsay:Ramsay:13}. Figure \ref{FIG:triangulations_simu1} shows the two triangulations used for the horseshoe domain in this example. For implementation, the BPST method is conducted over triangulation, $\triangle_1$, while triangulation, $\triangle_2$, is used for PCST method. To visually compare different methods, we display the estimated coefficient functions for Case I (jump function) and Case II (smooth function) in Figures \ref{FIG:EST_Simu1_Jump} and \ref{FIG:EST_Simu1_Smooth}, respectively. The plots are obtained based on the setting: $n=50$, $\lambda_1=0.2$, $\lambda_2=0.05$, $\sigma=1.0$. Table \ref{TAB:eg1_01_2} summarizes the estimation results based on the noise level $\sigma=1.0$.

From these figures, one sees that the BPST and PCST estimates are both very close to the true coefficient functions. When the true coefficient functions are smooth, BPST provides the best estimation, while when the true coefficient function contains jumps, PCST provides a better estimation. The performance of the Tensor method will be affected by the design of the coefficient function. Moreover, from Figure \ref{FIG:EST_Simu1_Jump} and \ref{FIG:EST_Simu1_Smooth}, one can see that even when the coefficient function is smooth across the boundary, the estimation accuracy is also affected by the domain of the true signal, especially the pixels which are closed to the boundary. The performance of the Kernel method is not affected by the design of the coefficient functions, instead, it heavily depends on the noise level due to the three-stage structure. As the noise level increases, the Kernel estimates are getting more blurred.

%%%%%%%%%%%%%%%%%%%%%%%%%%%%%%%%%%%%%%%%%%%%%%%%%%%%%%%%%%%%%
\begin{figure}[ht]
	\begin{center}
		\begin{tabular}{ccccc}
			\includegraphics[width=4cm,height=3cm]{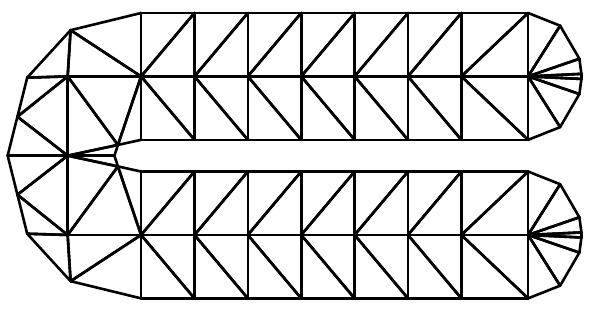} & \includegraphics[width=4cm,height=3cm]{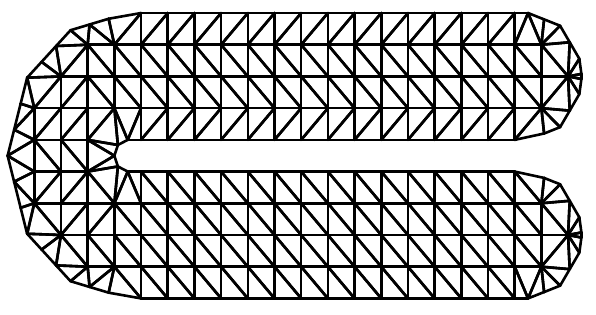} \\[-5pt]
			$\triangle_1$ & $\triangle_2$
		\end{tabular}
		\caption{Triangulations for the horseshoe domain. }
		\label{FIG:triangulations_simu1}
	\end{center}
\end{figure}

%%%%%%%%%%%%%%%%%%%%%%%%%%%%%%%%%%%%%%%%%%%%%%%%%%%%%%%%%%%%%
\begin{figure}[ht]
	\begin{center}
		\begin{tabular}{ccccccc}
			TRUE & Tensor & Kernel & BPST ($\triangle_1$) & PCST ($\triangle_2$) \\
			\includegraphics[scale=0.27]{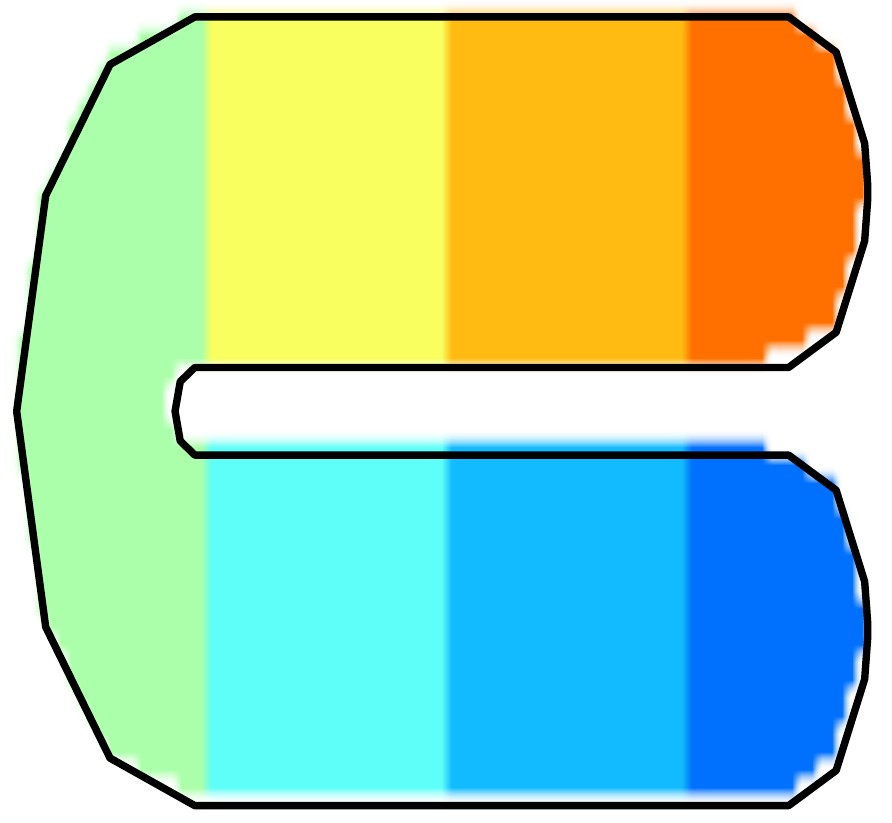} \!\!\!&\!\!\!
			\includegraphics[scale=0.27]{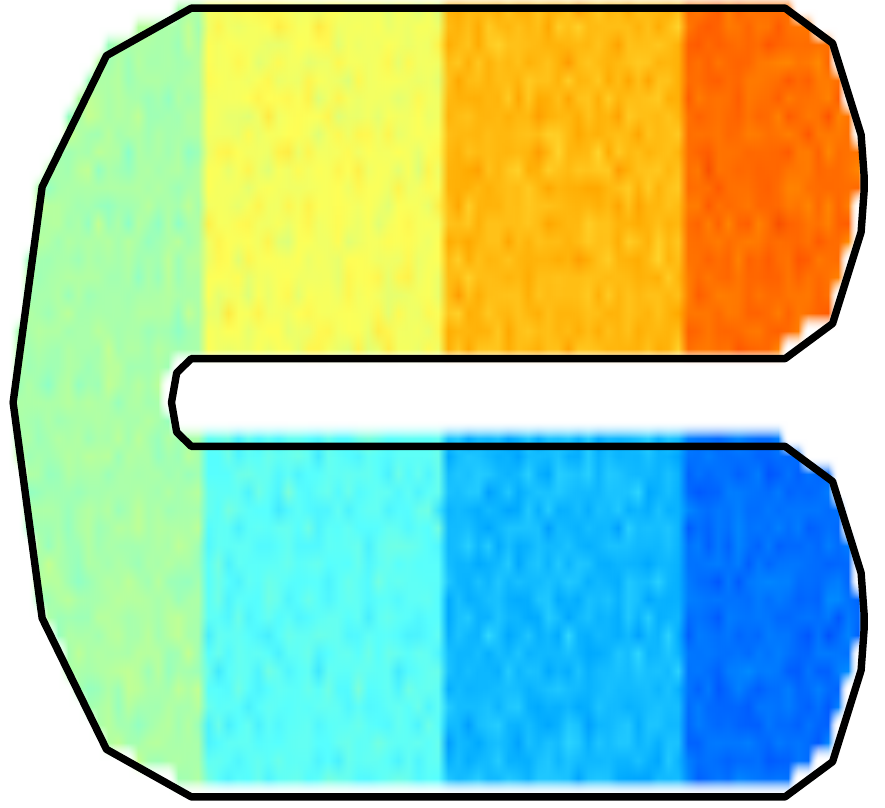} \!\!\!&\!\!\!
			\includegraphics[scale=0.27]{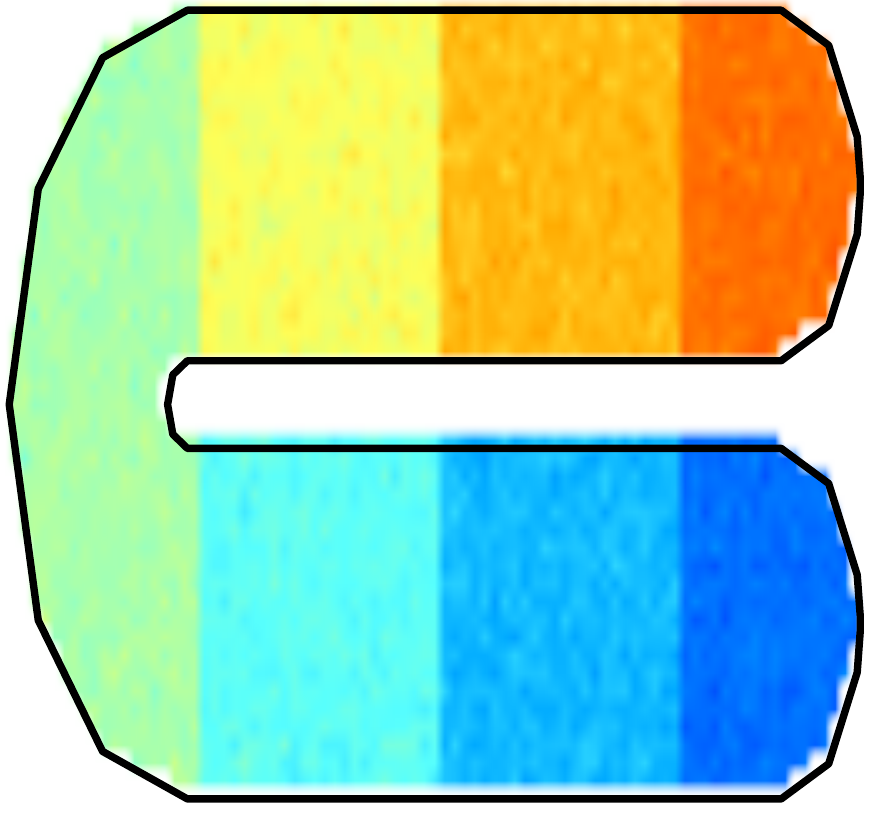} \!\!\!&\!\!\!
			\includegraphics[scale=0.27]{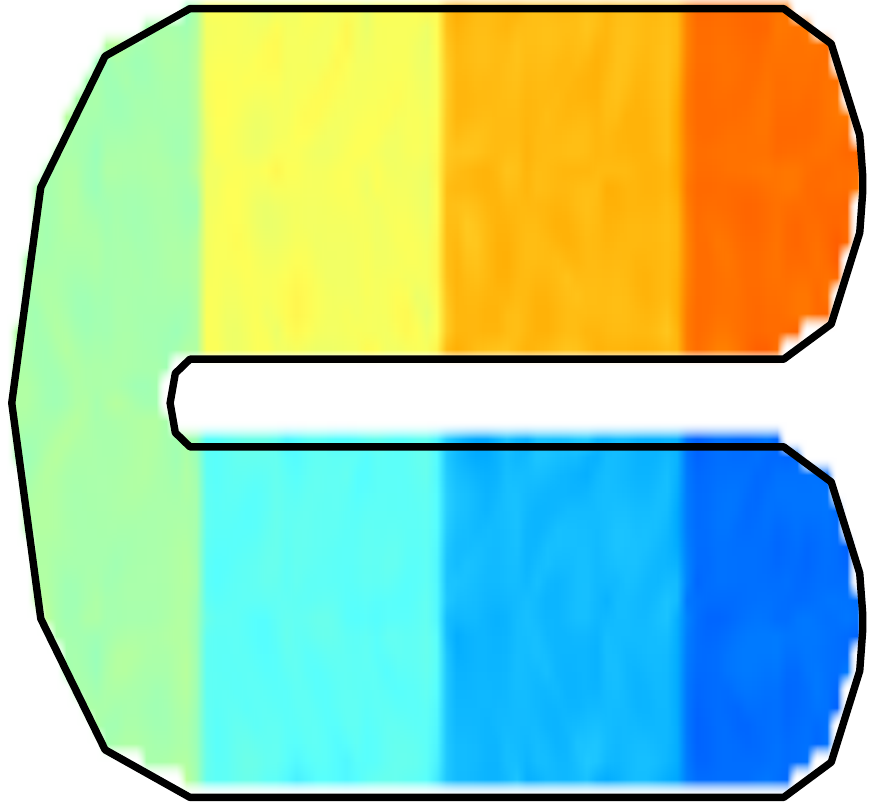} \!\!\!&\!\!\!
			\includegraphics[scale=0.27]{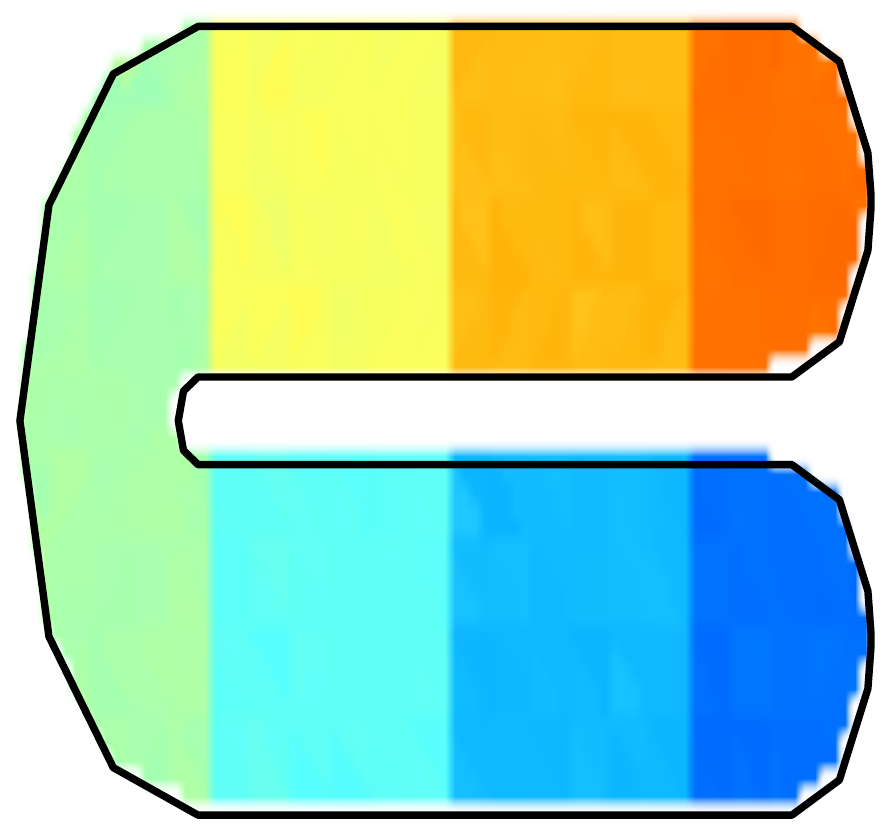} \!\!\!&\!\!\!&\includegraphics[scale=0.27]{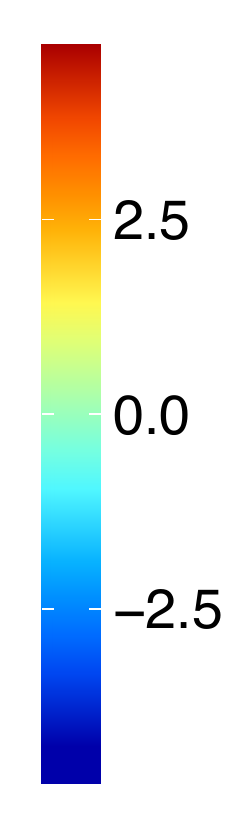}\\[-5pt]
			\multicolumn{6}{c}{$\beta_0$}\\
			\includegraphics[scale=0.27]{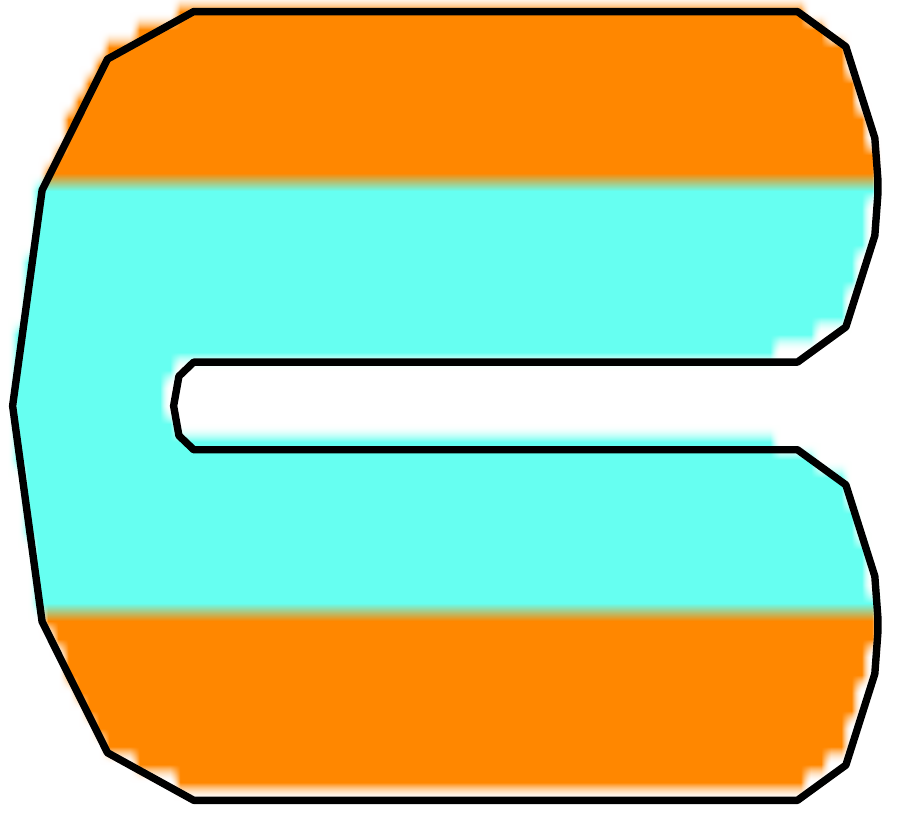} \!\!\!&\!\!\!
			\includegraphics[scale=0.27]{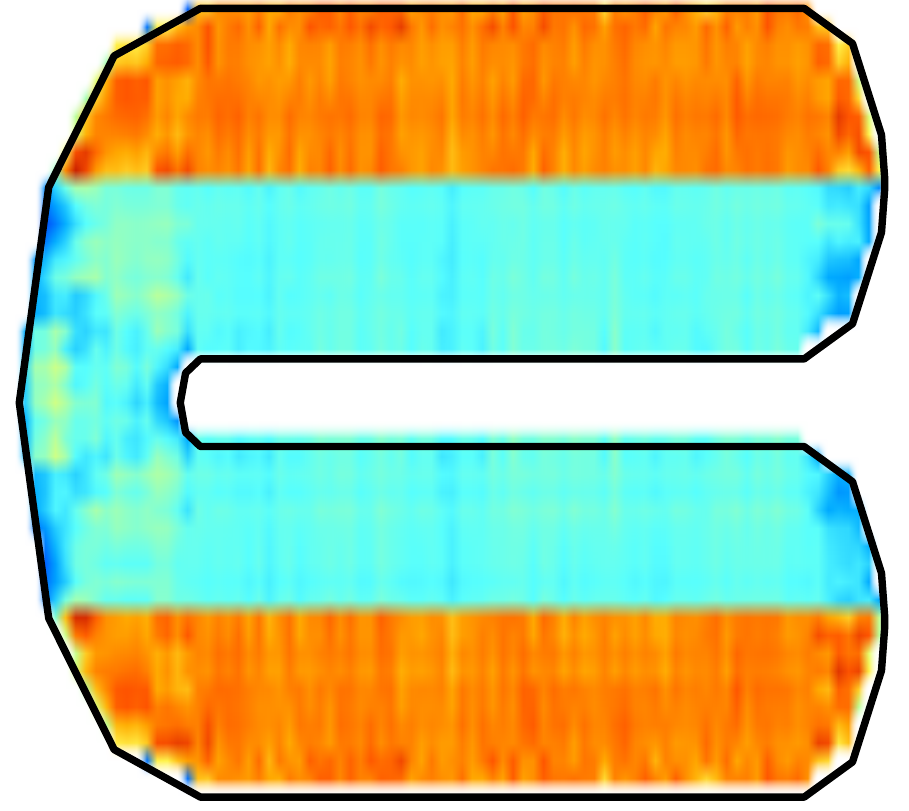} \!\!\!&\!\!\!
			\includegraphics[scale=0.27]{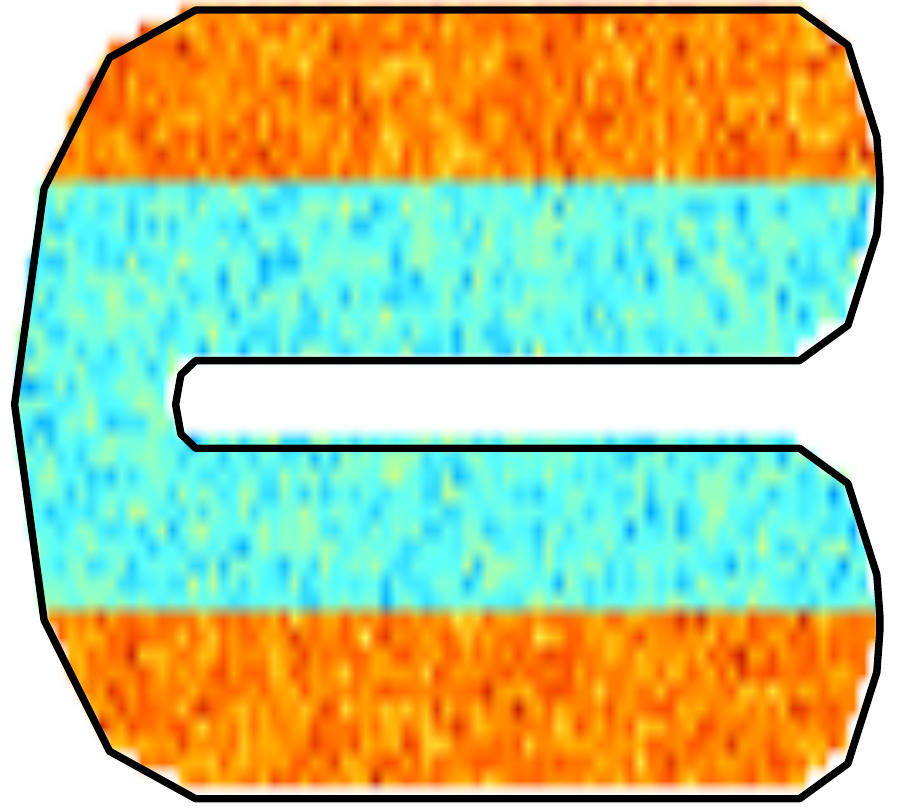} \!\!\!&\!\!\!
			\includegraphics[scale=0.27]{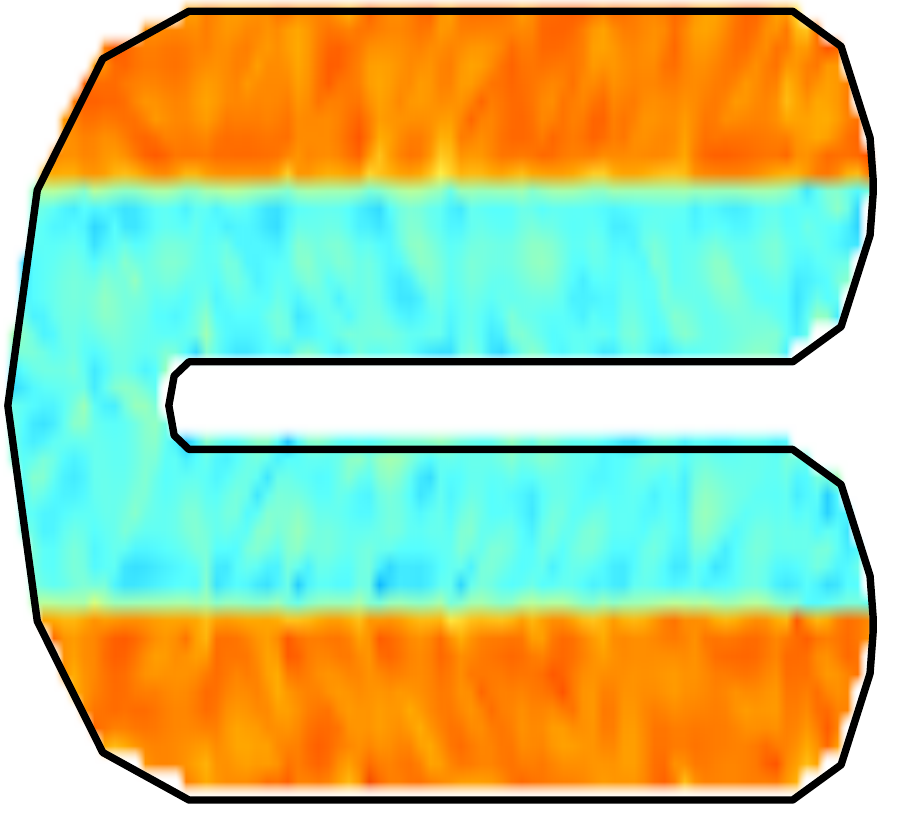} \!\!\!&\!\!\!
			\includegraphics[scale=0.27]{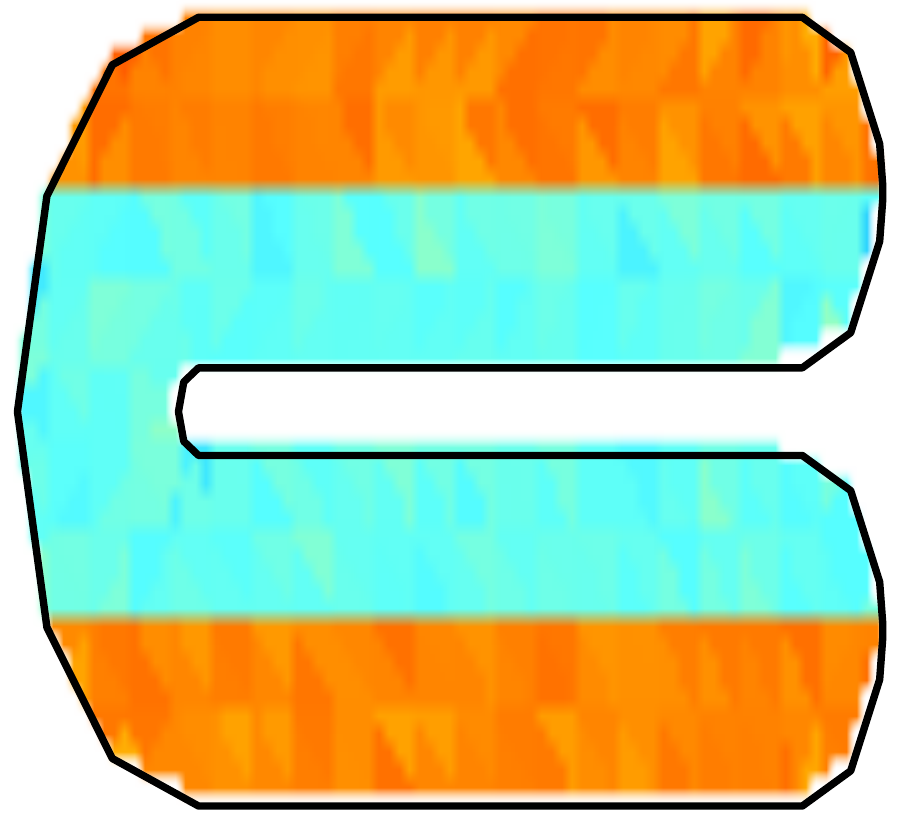} \!\!\!&\!\!\!&\includegraphics[scale=0.27]{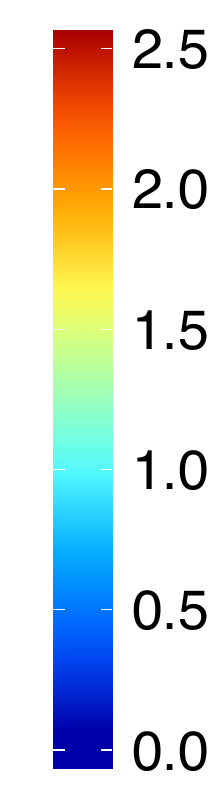}\\[-5pt]
			\multicolumn{6}{c}{$\beta_1$}\\
		\end{tabular}
		\caption{True coefficient functions and their different estimators for Case I in Example 1.}
		\label{FIG:EST_Simu1_Jump}
	\end{center}
\end{figure}

%%%%%%%%%%%%%%%%%%%%%%%%%%%%%%%%%%%%%%%%%%%%%%%%%%%%%%%%%%%%%
\begin{figure}[ht]
	\begin{center}
		\begin{tabular}{ccccccc}
			TRUE & Tensor & Kernel & BPST ($\triangle_1$) & PCST ($\triangle_2$) \\
			\includegraphics[scale=0.27]{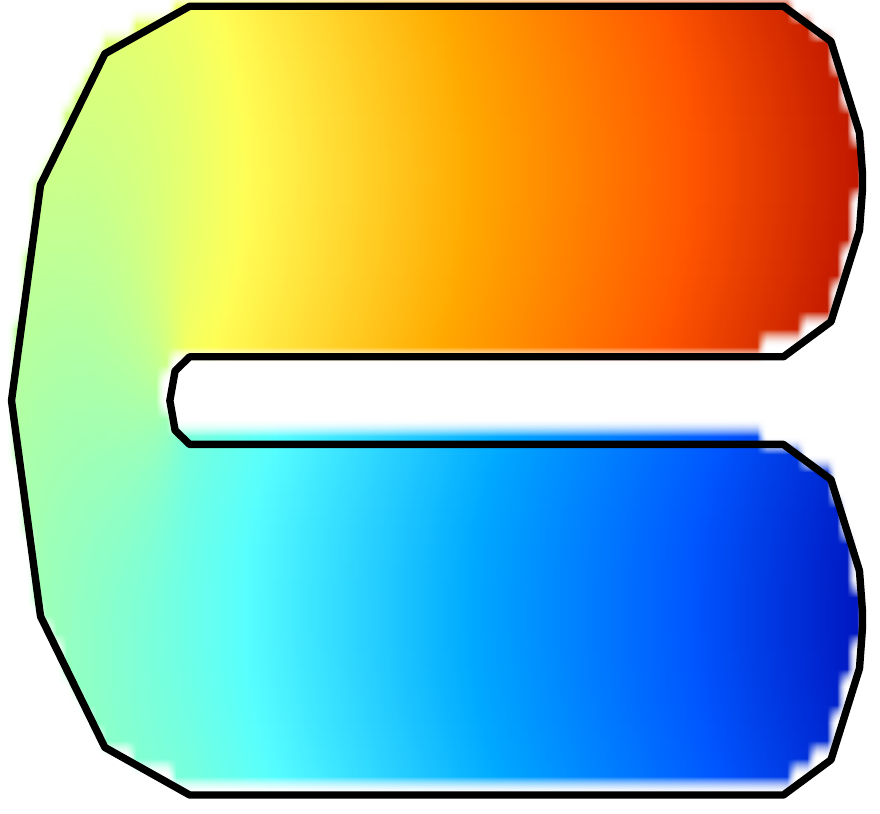} \!\!\!&\!\!\!
			\includegraphics[scale=0.27]{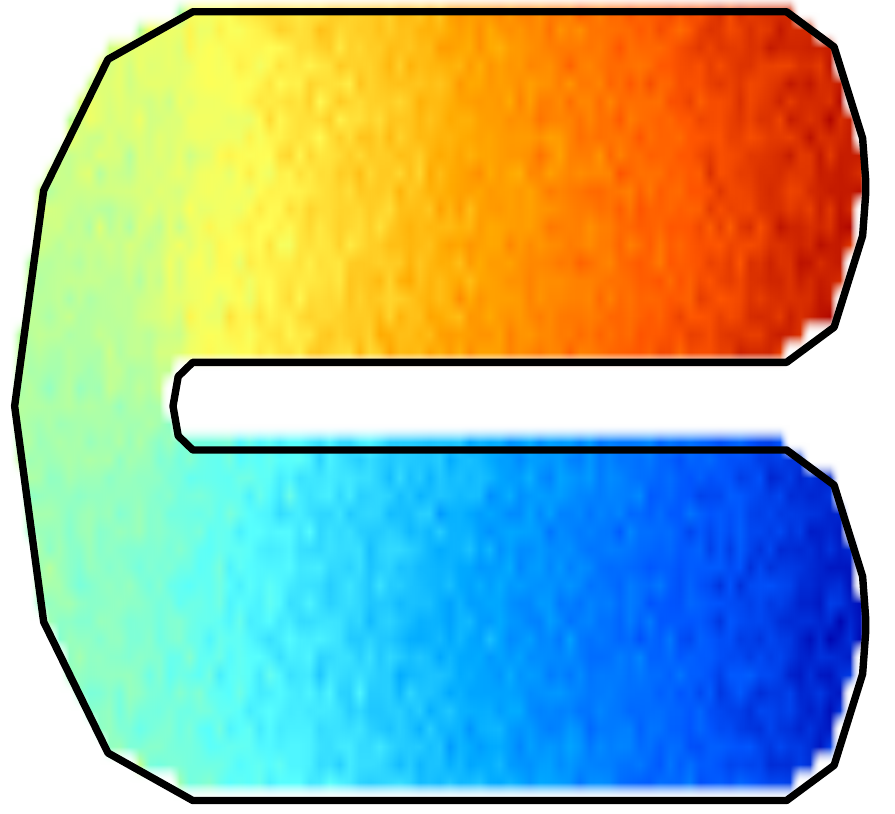} \!\!\!&\!\!\!
			\includegraphics[scale=0.27]{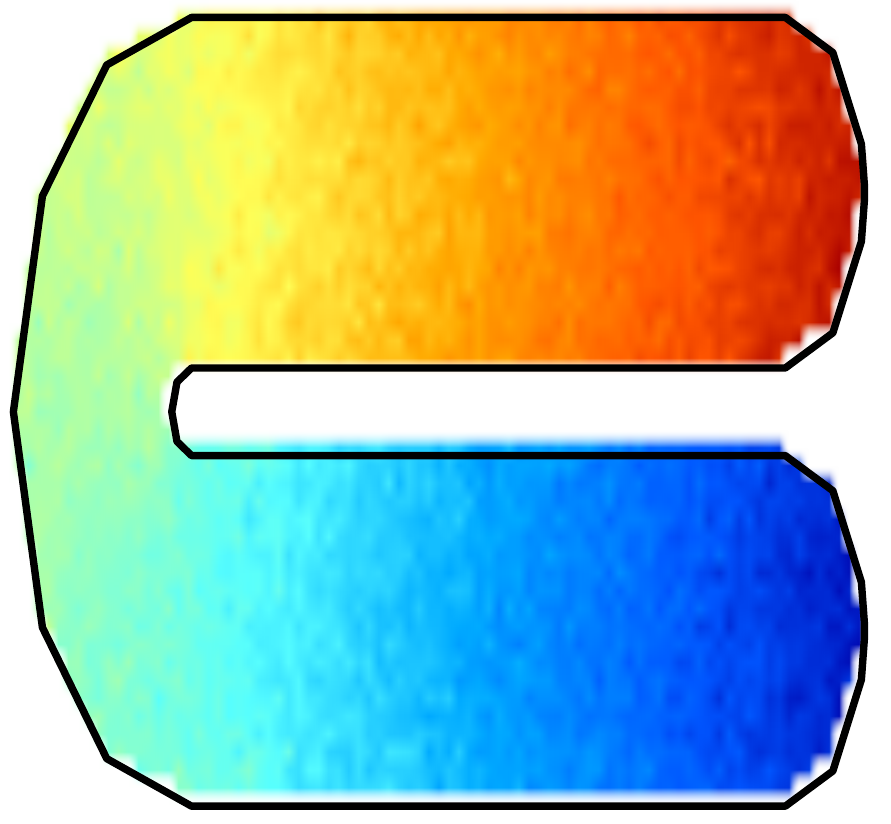} \!\!\!&\!\!\!
			\includegraphics[scale=0.27]{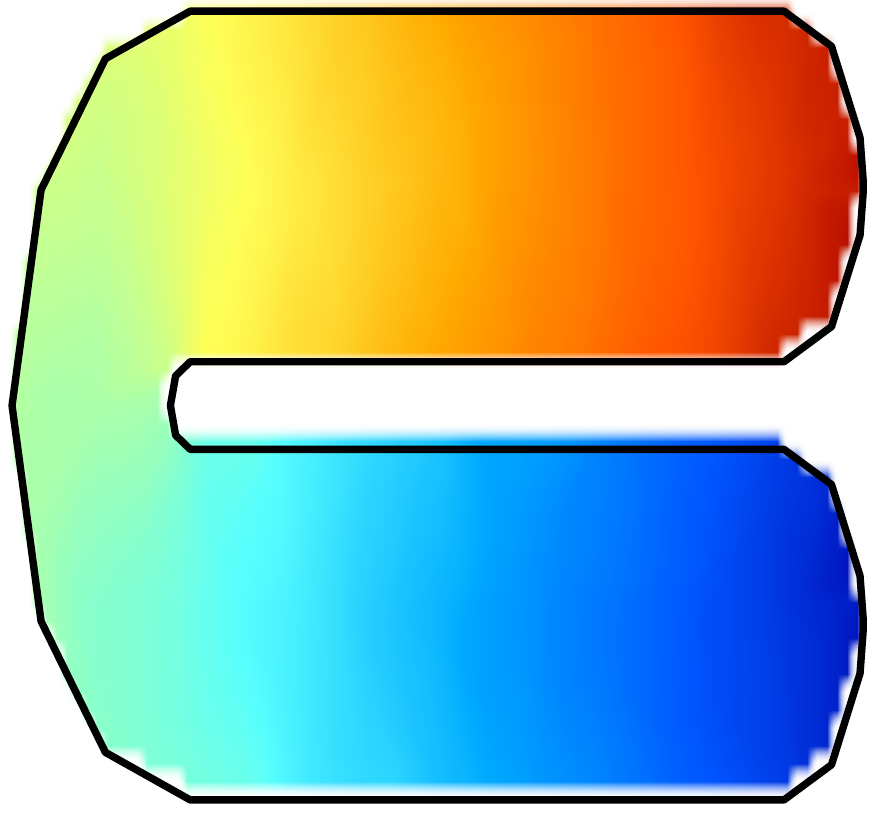} \!\!\!&\!\!\!
			\includegraphics[scale=0.27]{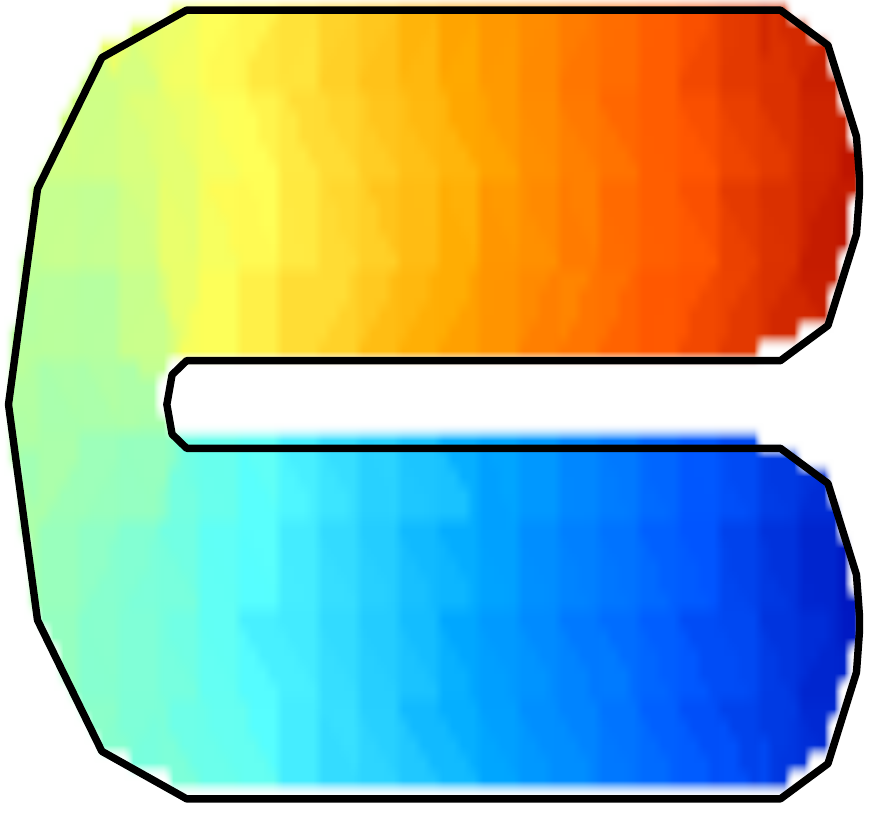} \!\!\!&\!\!\!&\includegraphics[scale=0.27]{legend1}\\[-5pt]
			\multicolumn{6}{c}{$\beta_0$}\\
			\includegraphics[scale=0.27]{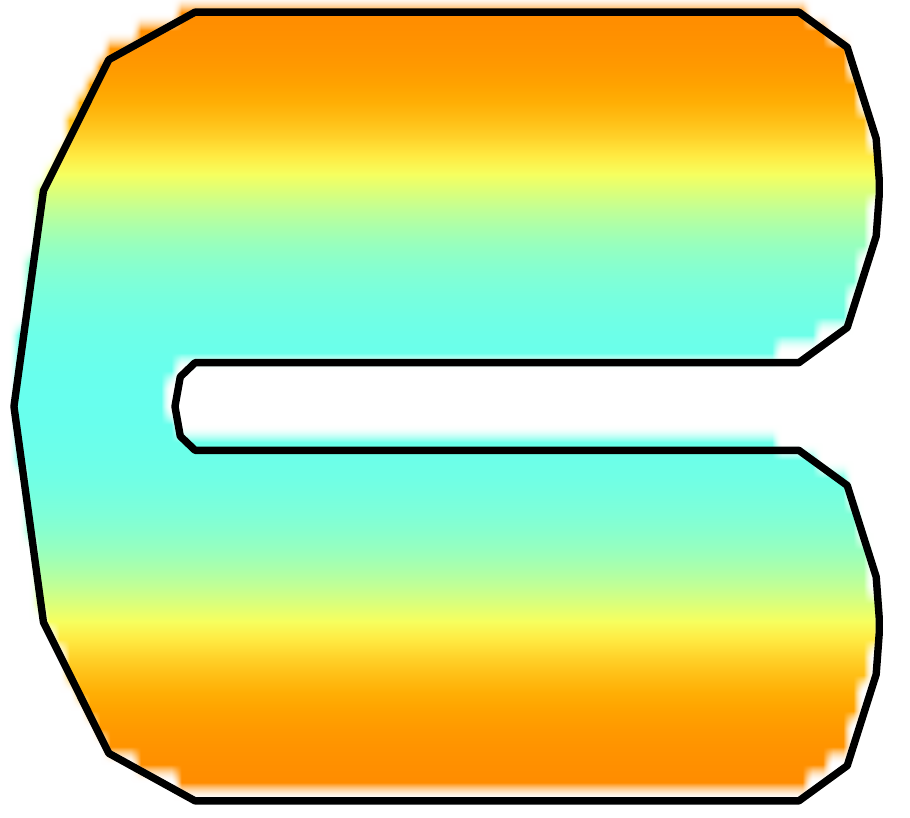} \!\!\!&\!\!\!
			\includegraphics[scale=0.27]{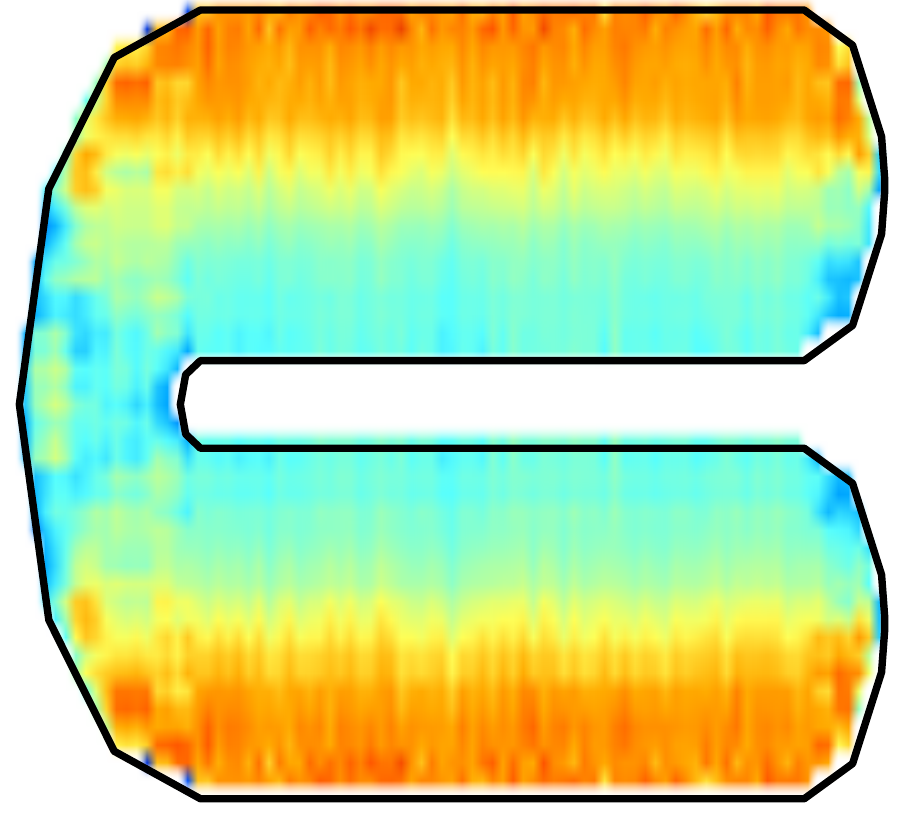} \!\!\!&\!\!\!
			\includegraphics[scale=0.27]{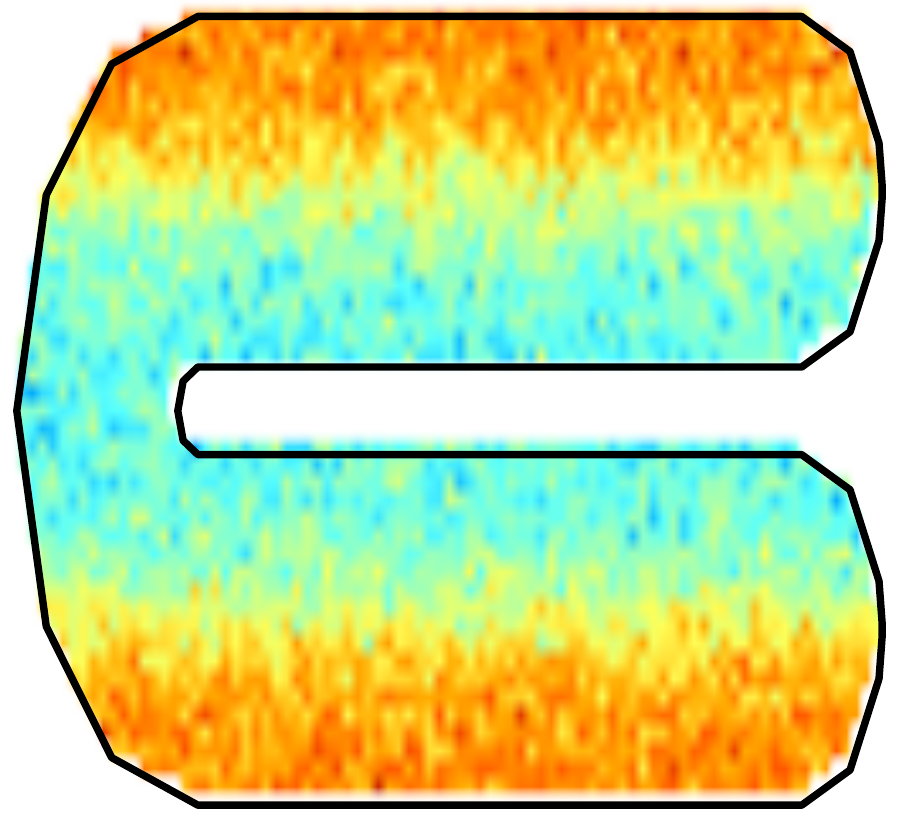} \!\!\!&\!\!\!
			\includegraphics[scale=0.27]{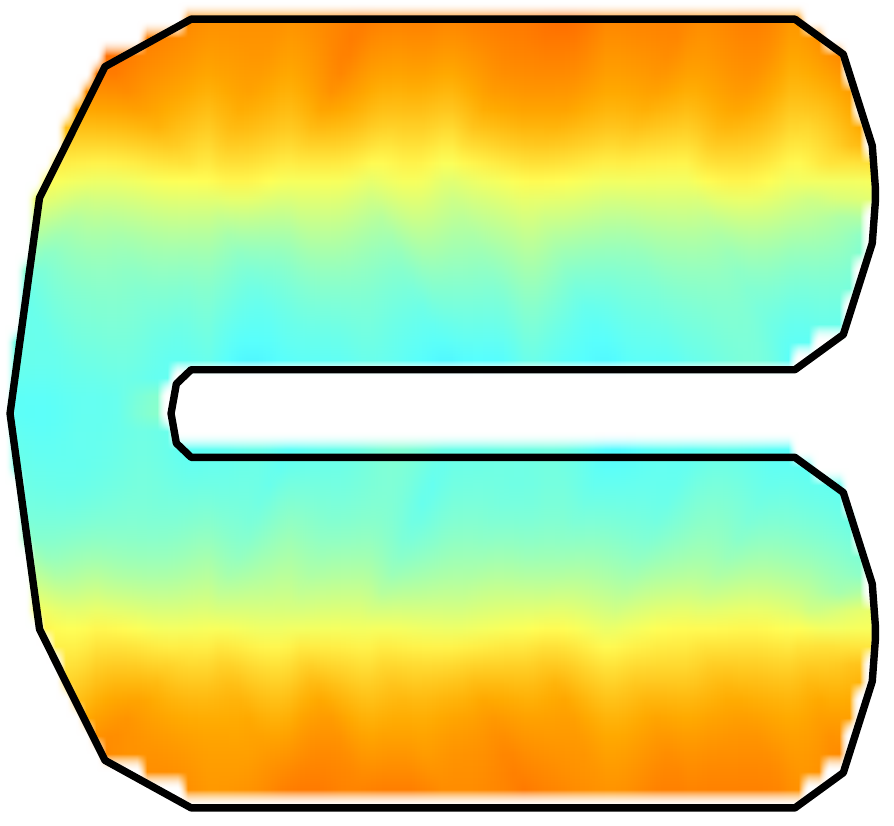} \!\!\!&\!\!\!
			\includegraphics[scale=0.27]{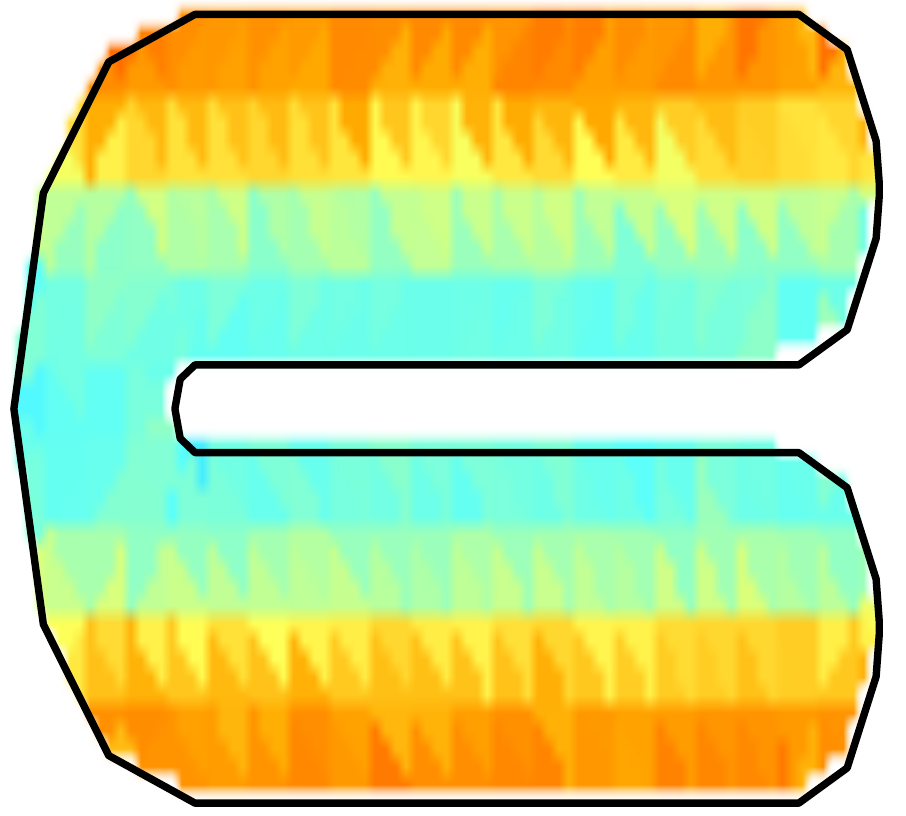} \!\!\!&\!\!\!&\includegraphics[scale=0.27]{legend2}\\[-5pt]
			\multicolumn{6}{c}{$\beta_1$}\\
		\end{tabular}
		\caption{True coefficient functions and their different estimators for Case II in Example 1.}
		\label{FIG:EST_Simu1_Smooth}
	\end{center}
\end{figure}

\begin{table}[t]
\begin{center}
\caption{Estimation errors of the coefficient estimators, $\sigma=1.0$. \label{TAB:eg1_01_2}}
\renewcommand\arraystretch{0.75}
\scalebox{0.9}{\begin{tabular}{cccccccccr}\\ \hline\hline
Function &\multirow{2}{*}{$n$} &\multirow{2}{*}{Method} &\multicolumn{2}{c}{$\lambda_1=0.03,~\lambda_2=0.006$}& &\multicolumn{2}{c}{$\lambda_1=0.2,~\lambda_2=0.05$}  \\ \cline{4-5} \cline{7-8}
Type &&&$\beta_0$ &$\beta_1$ & &$\beta_0$ &$\beta_1$\\ \hline

\multirow{8}{*}{Jump} &\multirow{4}{*}{50} &BPST &0.0059 &0.0075 & &0.0066 &0.0082 \\
& & PCST &0.0023 &0.0023 & &0.0028 &0.0030 \\
& & Kernel &0.0201 &0.0206 & &0.0207 &0.0213 \\ 
& & Tensor &0.0201 &0.0132 & &0.0206 &0.0142 \\ \cline{2-8}
	
&\multirow{4}{*}{100} &BPST &0.0038 &0.0050 & &0.0042 &0.0054 \\
& & PCST &0.0011 &0.0011 & &0.0014 &0.0015 \\
& & Kernel &0.0100 &0.0102 & &0.0104 &0.0105 \\ 
& & Tensor &0.0099 &0.0112 & &0.0103 &0.0120 \\ \cline{1-8}
	
\multirow{8}{*}{Smooth} &\multirow{4}{*}{50} &BPST &0.0010 &0.0012 & &0.0016 &0.0019 \\
& &PCST &0.0049 &0.0065 & &0.0054 &0.0072 \\
& &Kernel &0.0201 &0.0206 & &0.0207 &0.0213 \\ 
& &Tensor &0.0189 &0.0132 & &0.0207 &0.0153\\ \cline{2-8}
		
&\multirow{4}{*}{100} &BPST &0.0006 &0.0007 & &0.0009 &0.0010 \\
& & PCST &0.0037 &0.0054 & &0.0040 &0.0057 \\
& & Kernel &0.0100 &0.0102 & &0.0104 &0.0105 \\ 
& & Tensor &0.0100 &0.0113 & &0.0103 &0.0128\\ \hline\hline
\end{tabular}}
\end{center}
\end{table}

In Section 5.2 of the main paper, we conduct a simulation study based on the domain of the 5th slice of the brain images illustrated in Section 6. Table \ref{TAB:estimation2} demonstrates the estimation results for $\sigma=0.5$. In this example, we focus on the domain of the 35th slices of the brain image. Based on this domain, we consider two types of triangulations: $\triangle_5$ and $\triangle_6$; see Figure \ref{FIG:triangulations_simu2}. Table \ref{TAB:estimation_s35} summarizes the MSE results of the BPST, kernel and tensor methods. The findings are similar to those described in Section 5.2. Tables \ref{TAB:coverage} and \ref{TAB:coverage_s35} summarize the ECRs of the 95\% SCCs for the 5th and 35th slices, respectively, and they are all close to 95\%. As the sample size increases, the ECRs are getting closer to 95\%. Figures \ref{FIG:EST_SCC} and \ref{FIG:EST_SCC_s35} show the true coefficient functions and an example of their estimators and 95\% SCCs based on the 5th and 35th slices, respectively. The plots are generated based on the setting: $n=50$, $\lambda_1=0.1$, $\lambda_2=0.02$ and $\sigma=0.5$.

%%%%%%%%%%%%%%%%%%%%%%%%%%%%%%%%%%%%%%%%%%%%%%%%%%%%%%%%%%%%%
\begin{figure}[ht]
	\begin{center}
		\begin{tabular}{ccccc}
			\includegraphics[scale=0.6]{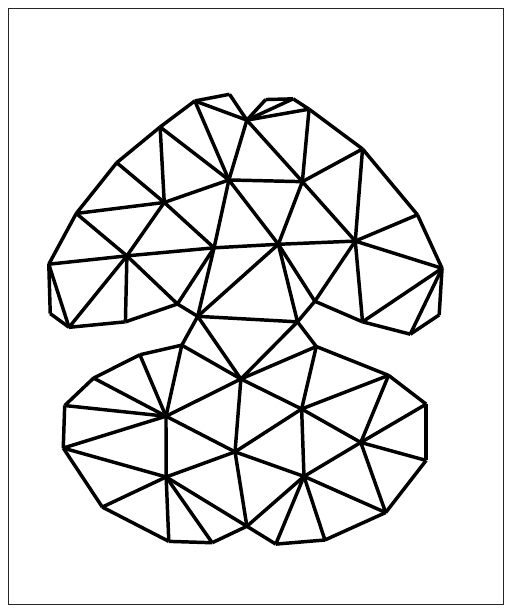} & \includegraphics[scale=0.6]{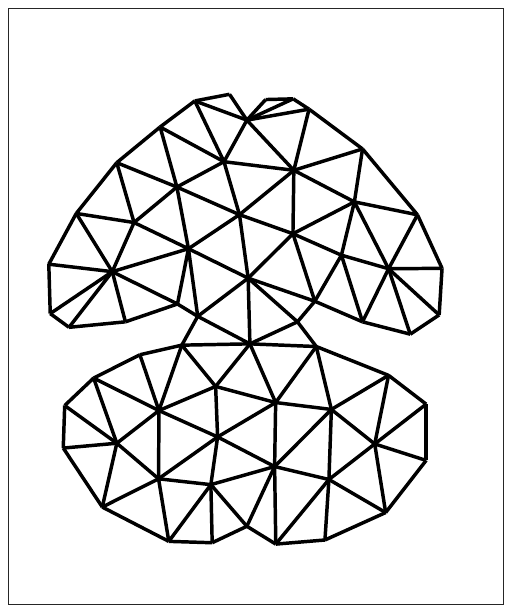} ~~&~~\includegraphics[scale=0.6]{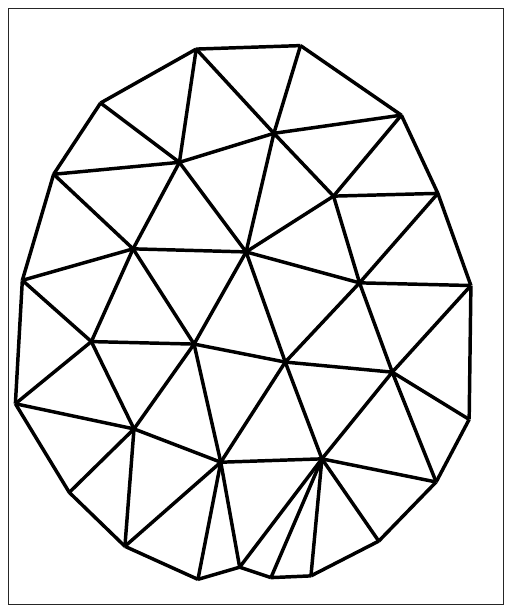} & \includegraphics[scale=0.6]{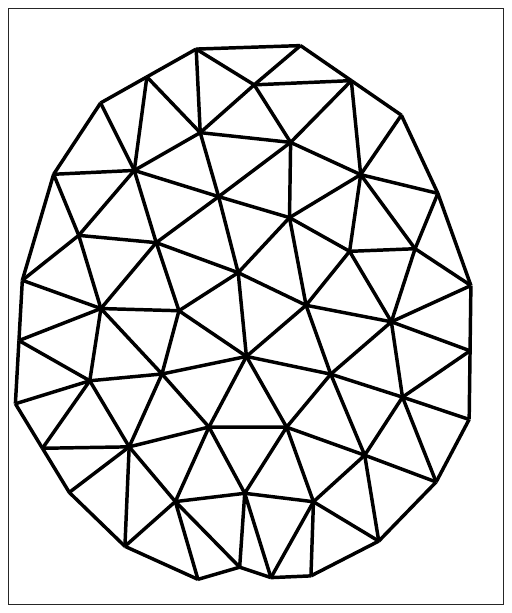} \\[-5pt]
			$\triangle_3$ & $\triangle_4$ & $\triangle_5$&$\triangle_6$
		\end{tabular}
		\caption{Triangulations for the fifth slice ($\triangle_3$, $\triangle_4$) and 35th slice ($\triangle_5$, $\triangle_6$) of the brain image in Simulation Example 2.}
		\label{FIG:triangulations_simu2}
	\end{center}
\end{figure}

\begin{table}[t]
\begin{center}
\caption{Estimation errors of the coefficient function estimators, $\sigma=0.5$. \label{TAB:estimation2}}%
\renewcommand\arraystretch{0.75}
\scalebox{0.9}{\begin{tabular}{lccccccccccr}\\ \hline\hline
\multirow{2}{*}{$n$}  &\multirow{2}{*}{Method} &\multicolumn{3}{c}{$\lambda_1=0.1,~\lambda_2=0.02$}& \multicolumn{3}{c}{$\lambda_1=0.2,~\lambda_2=0.05$} \\ \cline{3-6} \cline{7-8}
	&&$\beta_0$ &$\beta_1$ &$\beta_2$ & $\beta_0$ &$\beta_3$ &$\beta_2$\\ \hline
	\multirow{4}{*}{50} &BPST($\triangle_{3}$) &0.003&0.005&0.005&0.007&0.011&0.010 \\
	&BPST($\triangle_{4}$) & 0.003&0.005&0.005&0.006& 0.009& 0.009\\
	&Kernel &0.008&   0.011  & 0.011&0.011 & 0.016 & 0.016\\
	&Tensor & 0.008&    0.007 &  0.010 & 0.011 & 0.012& 0.014\\ \cline{1-8}
	
	\multirow{4}{*}{100} &BPST($\triangle_{3}$)& 0.002&0.002&0.002 &0.003&0.005&0.005 \\
	&BPST($\triangle_{4}$)& 0.002&0.002&0.002 &0.003&0.004&0.004 \\
	&Kernel & 0.004 & 0.005& 0.005 & 0.005 & 0.008 & 0.007\\
	&Tensor &0.004 &   0.005 &  0.005 &0.005 & 0.007& 0.009\\ \hline\hline
	\end{tabular}}
\end{center}
\end{table}

%%%%%%%%%%%%%%%%%%%%%%%%%%%%%%%%%%%%%%%%%%%%%%%%%%%%%%%%%%%%%
\begin{table}[t]
	\begin{center}
		\caption{Estimation errors of the coefficient function estimators in the 35th slice. \label{TAB:estimation_s35}}%
		\renewcommand\arraystretch{0.75}
		\begin{tabular}{lccccccccccr}\\ \hline\hline
			\multirow{2}{*}{$n$} &\multirow{2}{*}{$\sigma$} &\multirow{2}{*}{Method} &\multicolumn{3}{c}{$\lambda_1=0.1,~\lambda_2=0.02$}& \multicolumn{3}{c}{$\lambda_1=0.2,~\lambda_2=0.05$} \\ \cline{4-6} \cline{7-9}
			&&&$\beta_0$ &$\beta_1$ &$\beta_2$ & $\beta_0$ &$\beta_1$ &$\beta_2$\\ \hline
			
			\multirow{8}{*}{50} &\multirow{4}{*}{0.5}&BPST($\triangle_5$)&0.003&0.005&0.005&0.007&0.011&0.011\\
			&&BPST($\triangle_6$)&0.003&0.005&0.005&0.007&0.011&0.010\\
			&&Kernel &0.008& 0.012&0.012 & 0.018& 0.018& 0.017\\
			&&Tensor & 0.008&   0.009  & 0.011& 0.012 &    0.015&    0.015 \\ \cline{2-9}
			
			&\multirow{4}{*}{1.0}&BPST($\triangle_5$) &0.003&0.005&0.005&0.007&0.011&0.011\\
			&&BPST($\triangle_6$) &0.003&0.005&0.005&0.007&0.011&0.011\\
			&&Kernel &0.023&0.033&0.033 & 0.027& 0.039& 0.038\\
			&&Tensor &0.023&0.012&0.019& 0.027 &  0.017& 0.023\\ \hline
			
			\multirow{8}{*}{100} &\multirow{4}{*}{0.5}&BPST($\triangle_5$) &0.002&0.002&0.002 &0.003&0.005&0.005\\
			&&BPST($\triangle_6$)&0.002&0.002&0.002 &0.003&0.005&0.005\\
			&&Kernel &0.004 & 0.006 & 0.006 & 0.006 & 0.008 & 0.008\\
			&&Tensor &0.004&0.006&0.007&0.006&  0.010&    0.009
			\\ \cline{2-9}
			
			&\multirow{4}{*}{1.0}&BPST($\triangle_5$) &0.002&0.002&0.002 &0.003&0.005&0.005\\
			&&BPST($\triangle_6$) &0.002&0.002&0.002 &0.003&0.005&0.005\\
			&&Kernel & 0.012 &  0.016 &  0.016& 0.013 & 0.018& 0.018\\
			&&Tensor & 0.013&0.010&    0.013& 0.011& 0.007& 0.011\\
			\hline\hline
		\end{tabular}
	\end{center}
\end{table}

%%%%%%%%%%%%%%%%%%%%%%%%%%%%%%%%%%%%%%%%%%%%%%%%%%%%%%%%%%%%%%%%%
\begin{table}[t]
	\begin{center}
		\renewcommand*{\arraystretch}{0.75}
		\caption{The coverage rate of the $95\%$ SCCs for the coefficient functions defined over the 35th slice. \label{TAB:coverage_s35}}
		\begin{tabular}{lcccccccccr}\\\hline\hline
			\multirow{2}{*}{$n$}&\multirow{2}{*}{$\lambda$}&\multirow{2}{*}{$\sigma$}&\multicolumn{3}{c}{Coverage}&\multicolumn{3}{c}{Width}\\\cline{4-9}
			&&&$\beta_0$&$\beta_1$&$\beta_2$&$\beta_0$&$\beta_1$&$\beta_2$\\\hline
			\multirow{4}{*}{50}&\multirow{2}{*}{(0.1,0.02)}&0.5&0.962&0.916&0.934&0.307&0.344&0.347\\
			&&1.0&0.964&0.926&0.940&0.331&0.368&0.371\\
			\cline{2-9}
			&\multirow{2}{*}{(0.2,0.05)}&0.5&0.952&0.920&0.930&0.426&0.490&0.492\\
			&&1.0&0.96&0.920&0.934&0.449&0.512&0.512\\
			\hline
			\multirow{4}{*}{100}&\multirow{2}{*}{(0.1,0.02)}&0.5&0.956&0.952&0.940&0.214&0.240&0.244\\
			&&1.0&0.962&0.952&0.948&0.239&0.262&0.265\\
			\cline{2-9}
			&\multirow{2}{*}{(0.2,0.05)}&0.5&0.946&0.954&0.932&0.298&0.340&0.346\\
			&&1.0&0.952&0.954&0.938&0.317&0.359&0.365\\
			\hline\hline
		\end{tabular}
	\end{center}
\end{table}

%%%%%%%%%%%%%%%%%%%%%%%%%%%%%%%%%%%%%%%%%%%%%%%%%%%%%%%%%%%%%%%%%
\begin{figure}[ht]
	\begin{center}
		\begin{tabular}{ccccccccc}
			TRUE & Kernel& Tensor & BPST($\triangle_1$) & BPST($\triangle_2 $) & Lower SCC &Upper SCC\\
			\includegraphics[scale=0.19]{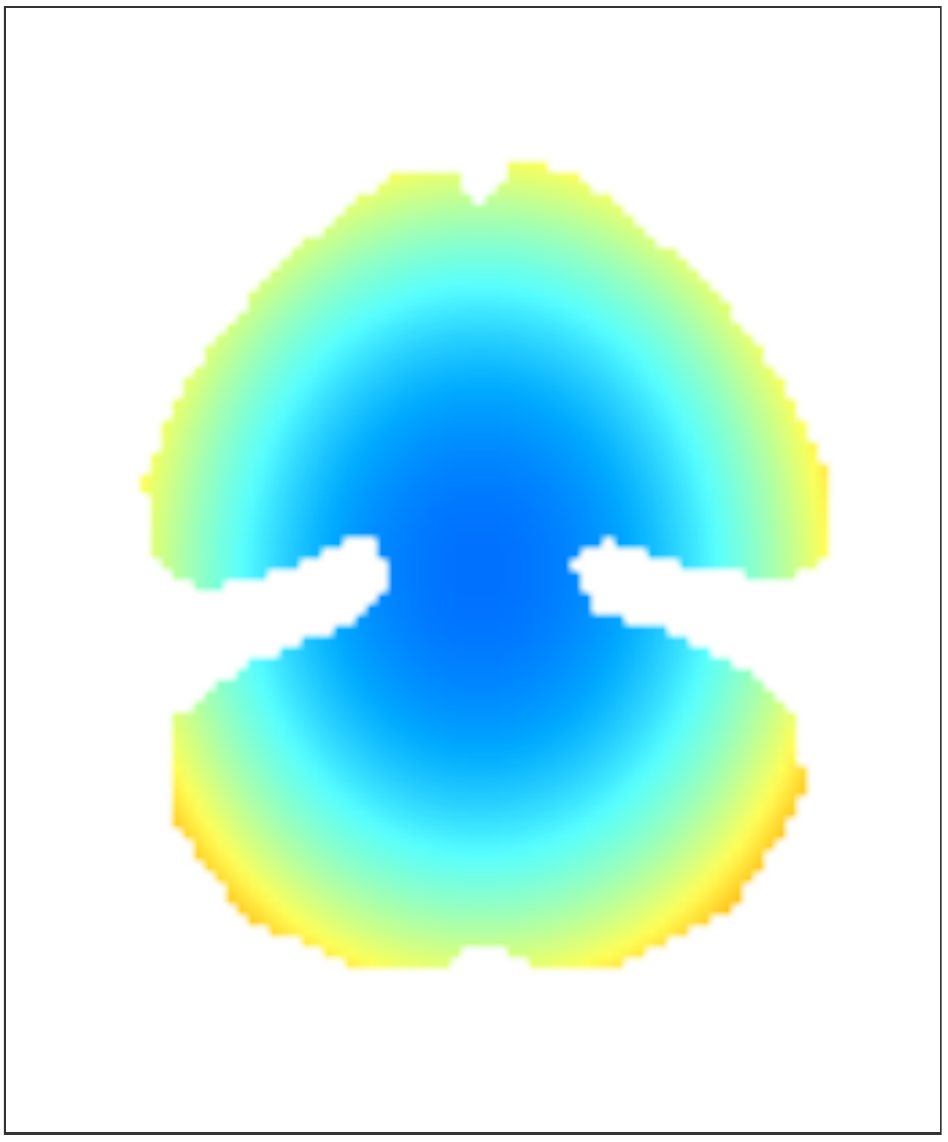} \!\!\!&\!\!\!
			\includegraphics[scale=0.19]{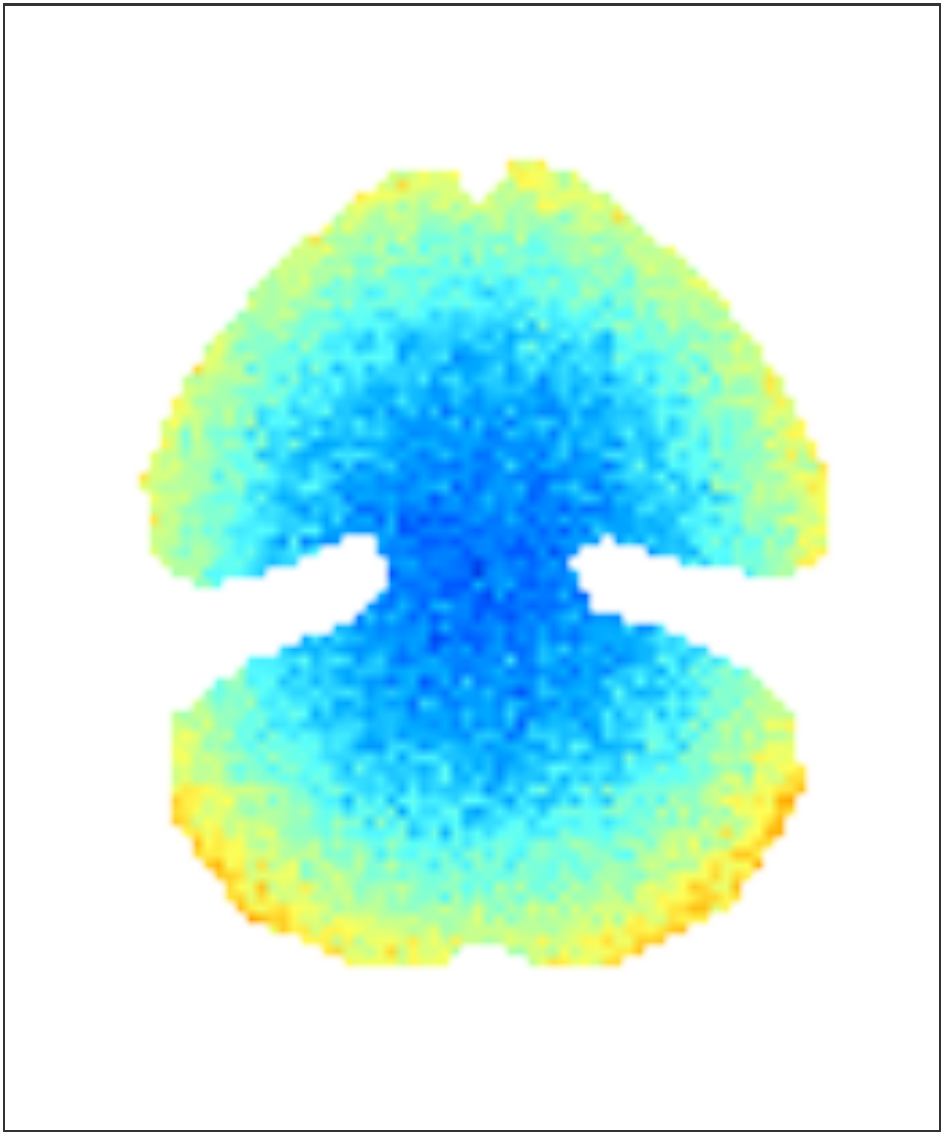} \!\!\!&\!\!\!
			\includegraphics[scale=0.19]{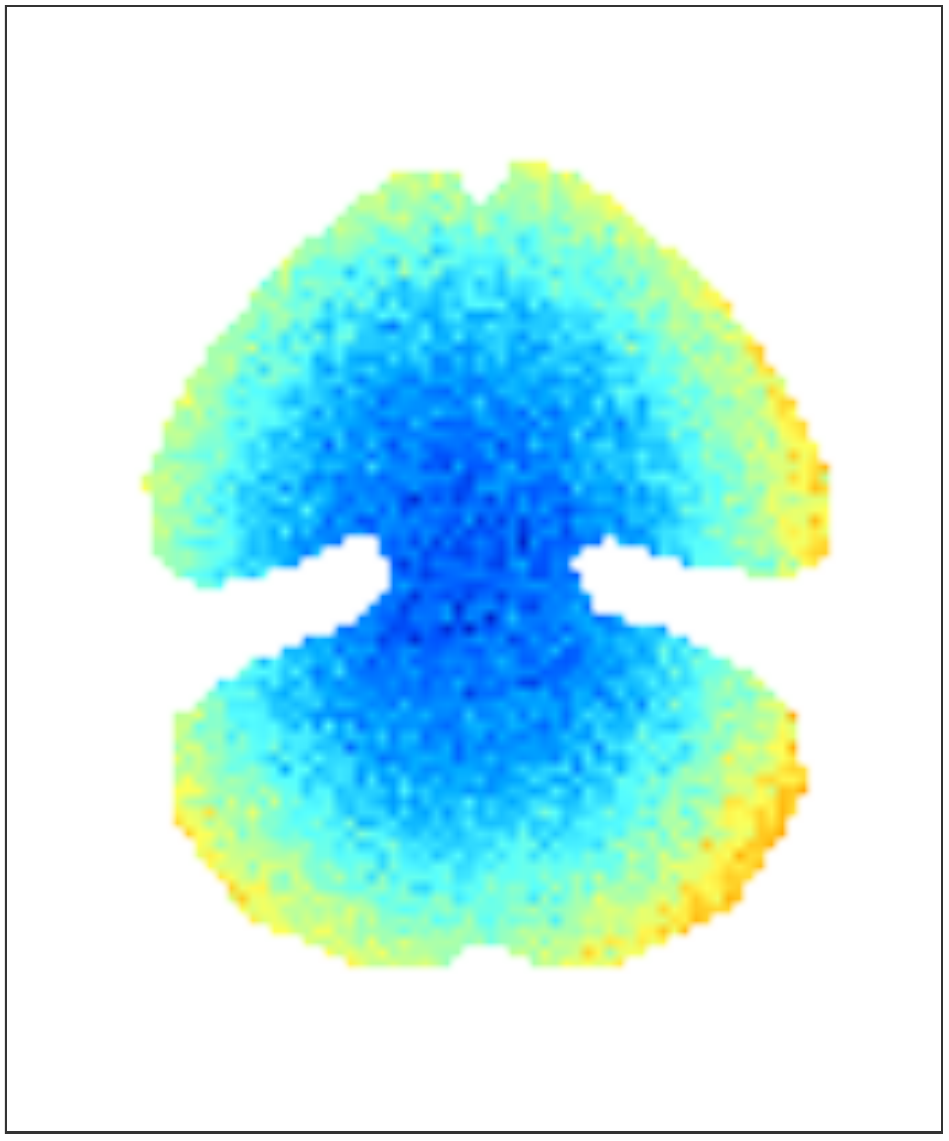} \!\!\!&\!\!\!
			\includegraphics[scale=0.19]{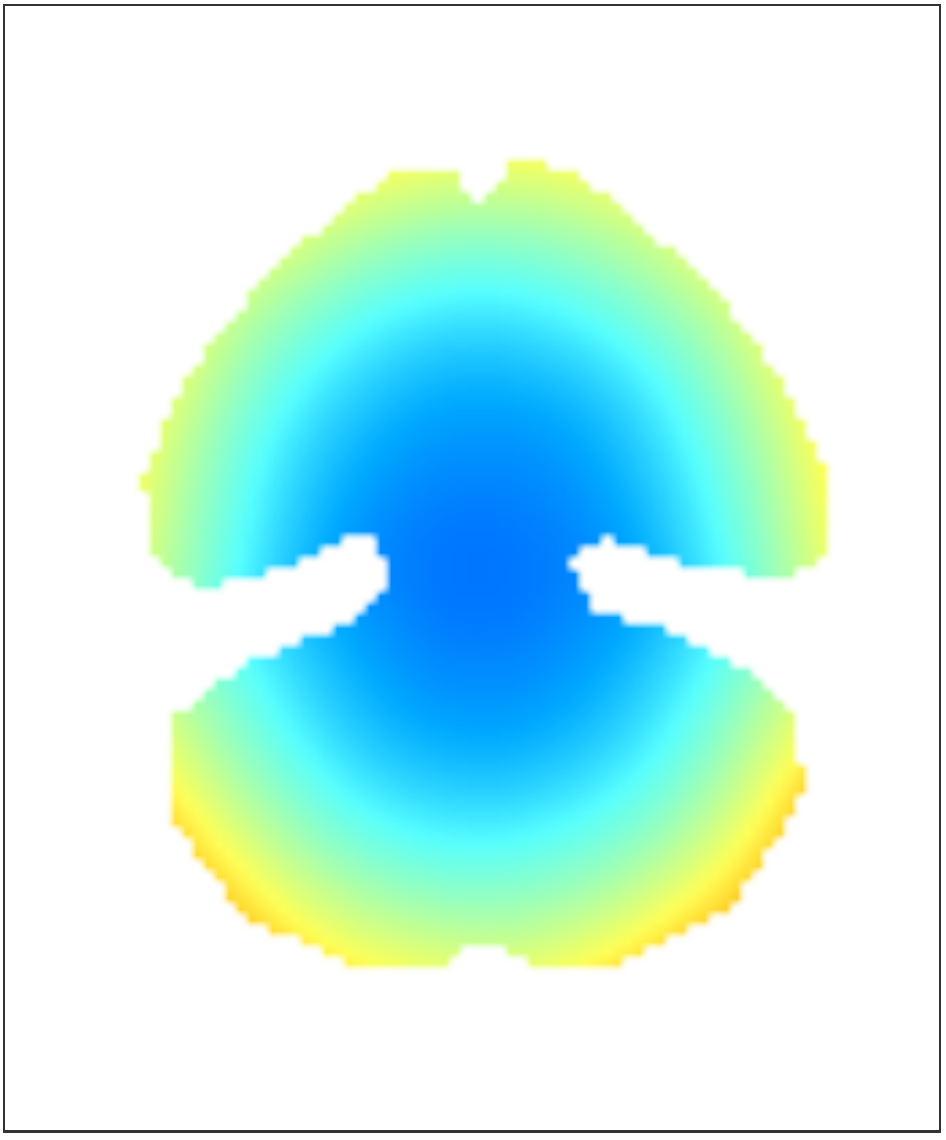} \!\!\!&\!\!\!
			\includegraphics[scale=0.19]{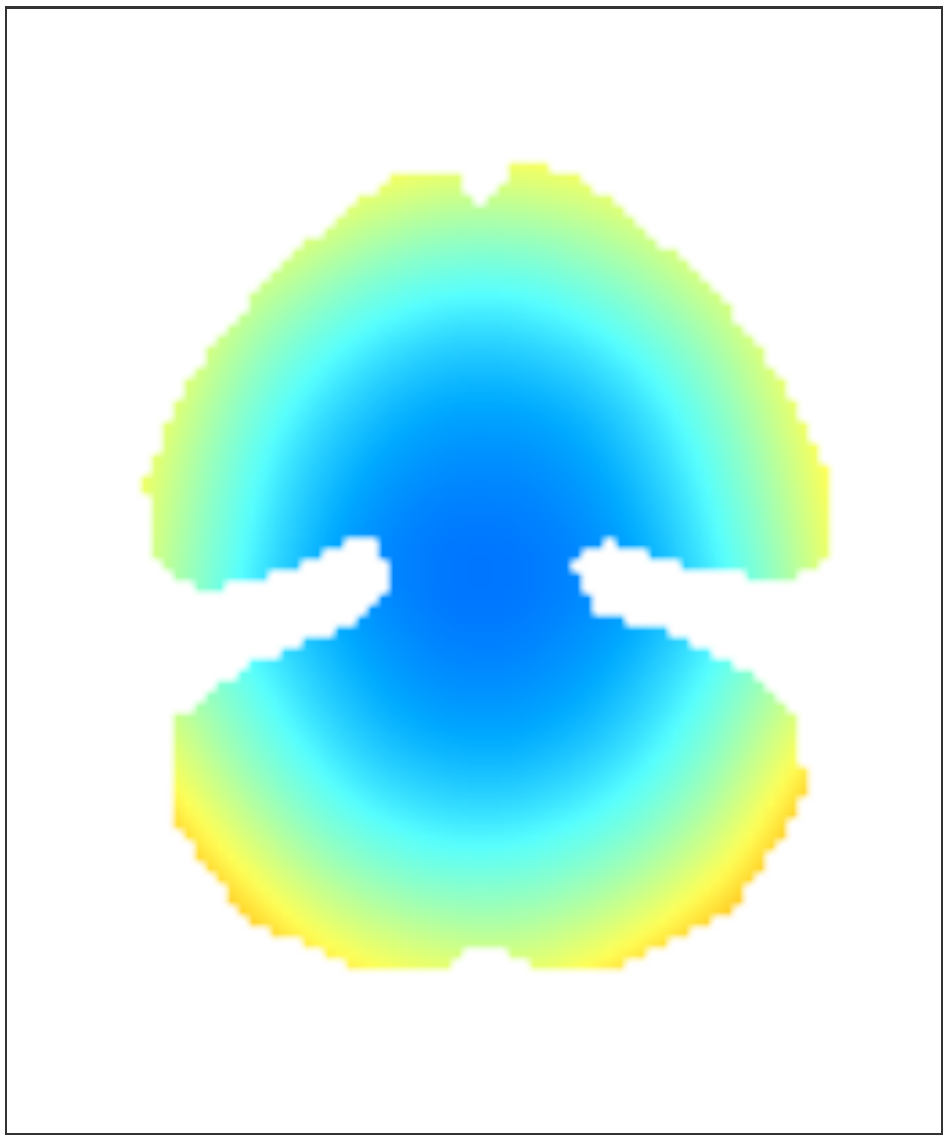} \!\!\!&\!\!\!
			\includegraphics[scale=0.19]{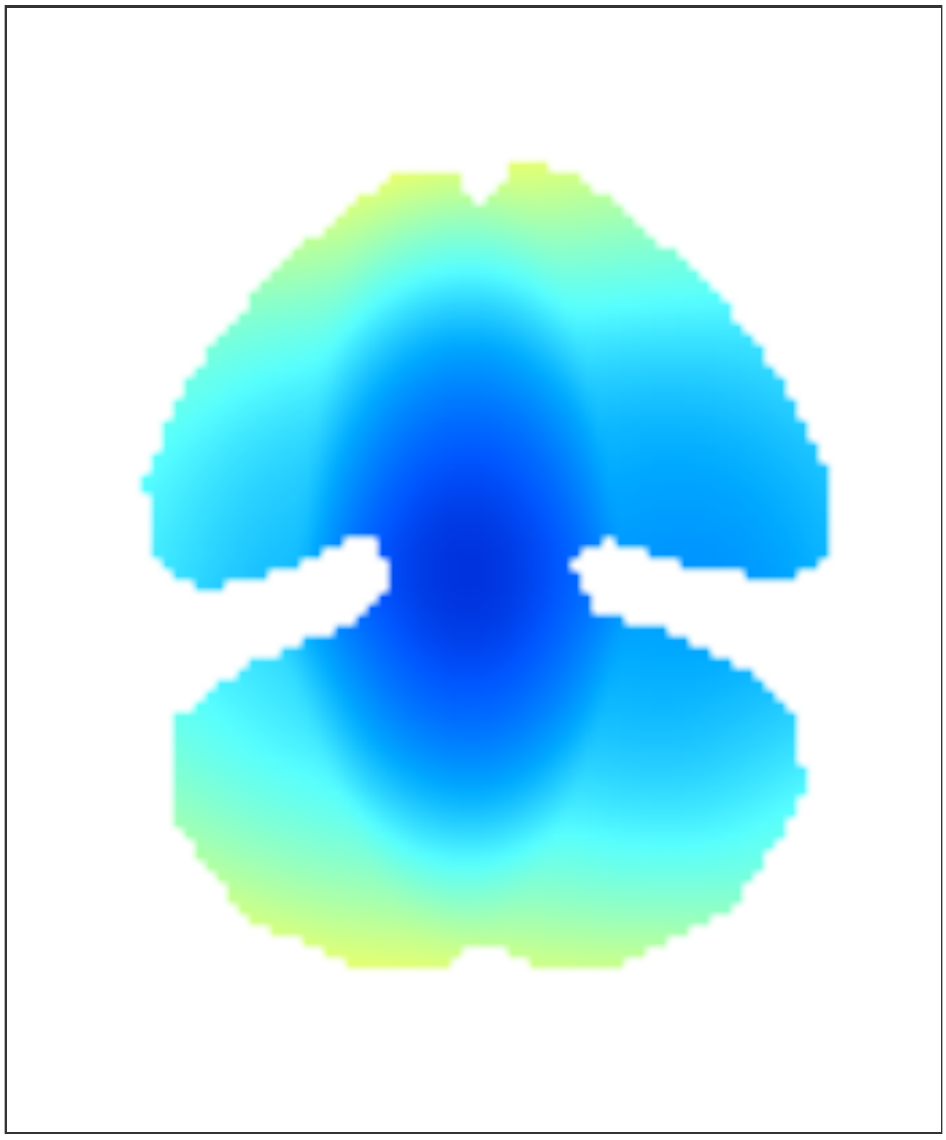} \!\!\!&\!\!\!
			\includegraphics[scale=0.19]{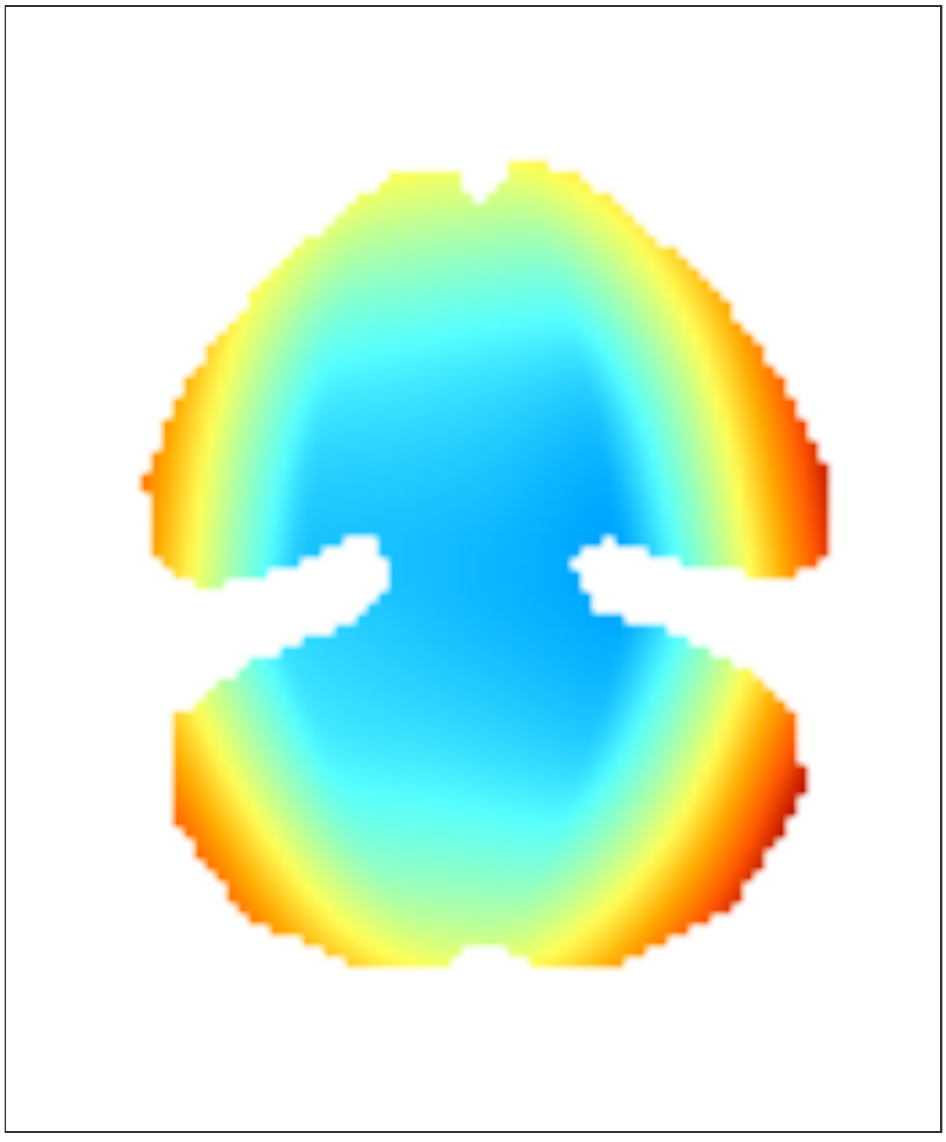} \!\!\!&\!\!\!
			\includegraphics[scale=0.19]{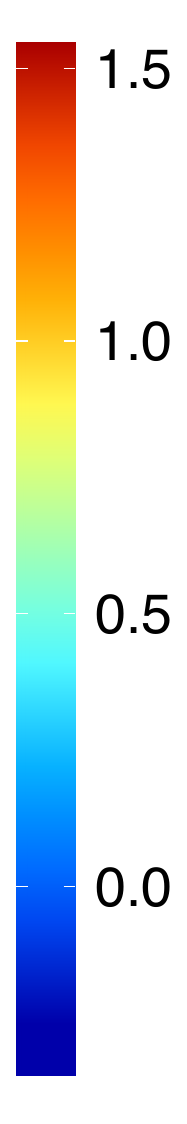}\\[-5pt]
			\multicolumn{7}{c}{$\beta_0$}\\
			\includegraphics[scale=0.19]{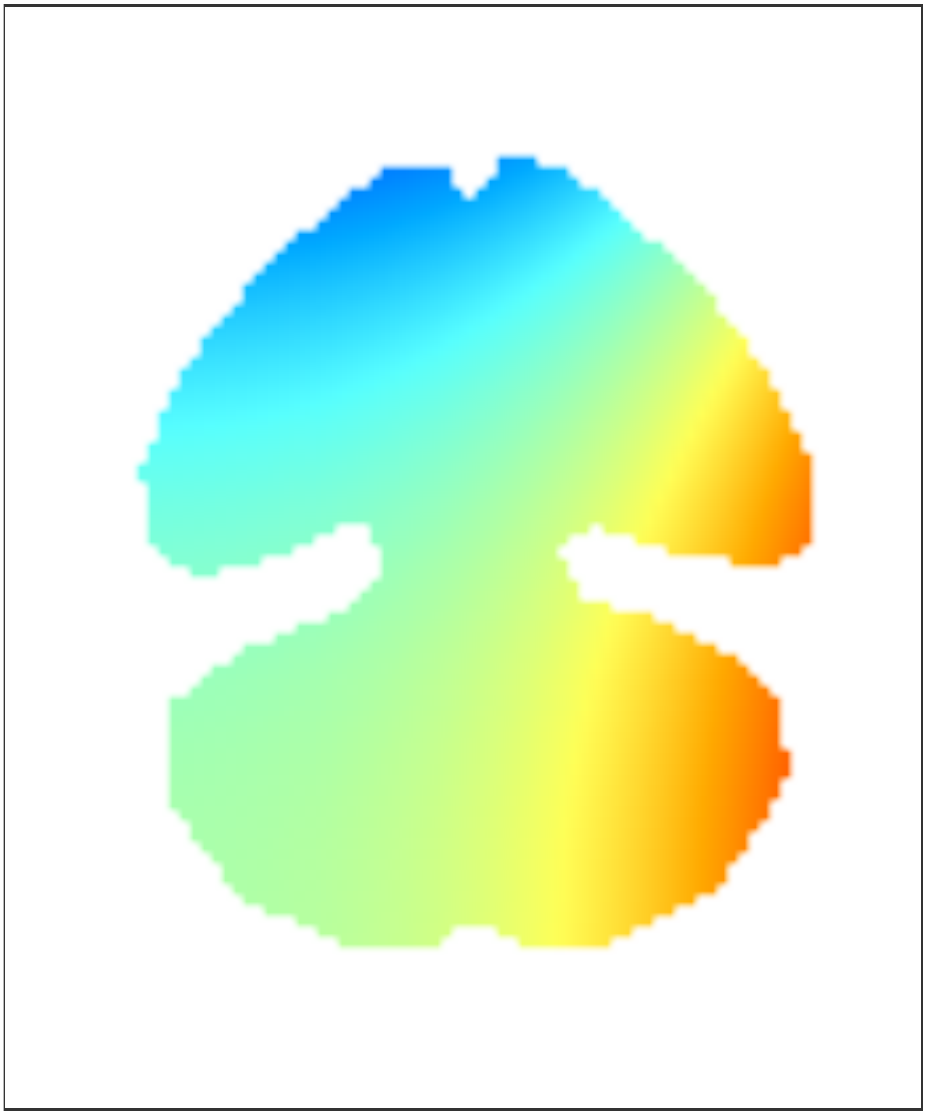} \!\!\!&\!\!\!
			\includegraphics[scale=0.19]{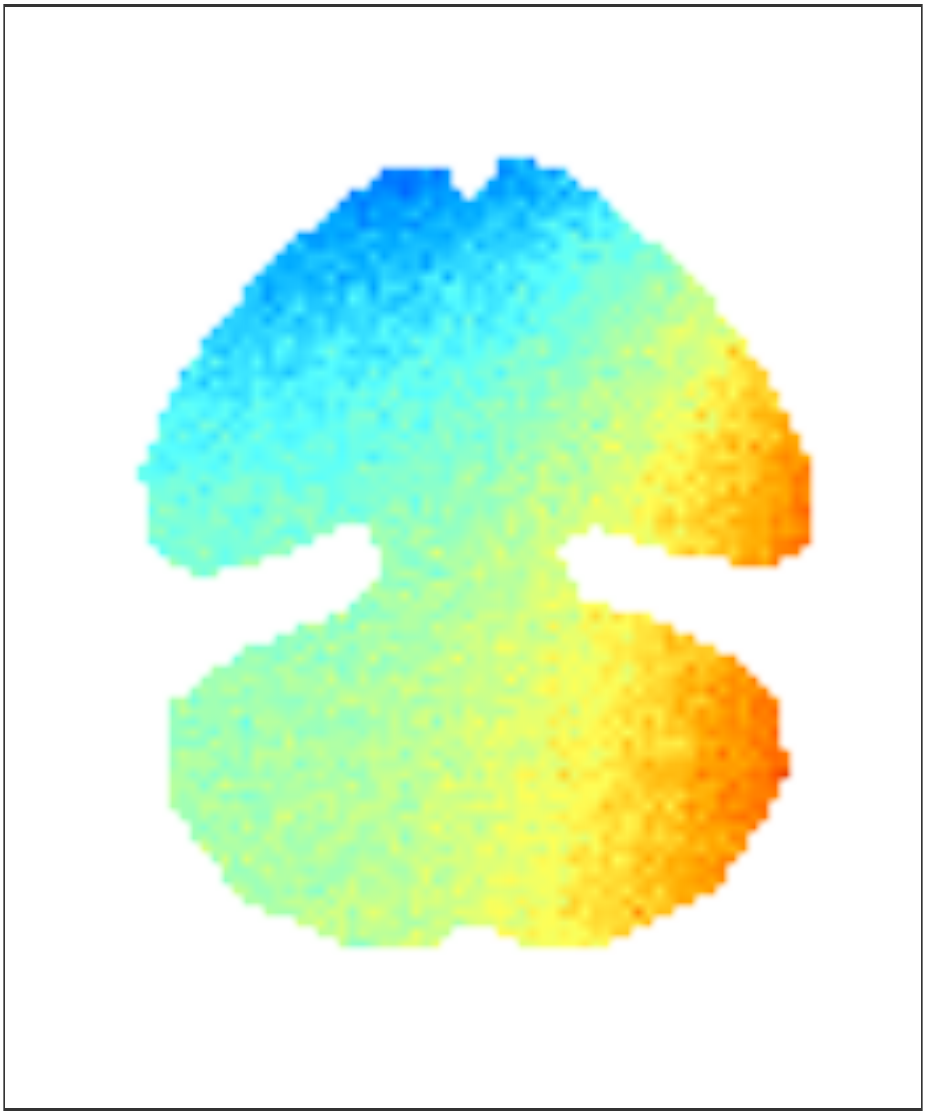} \!\!\!&\!\!\!
			\includegraphics[scale=0.19]{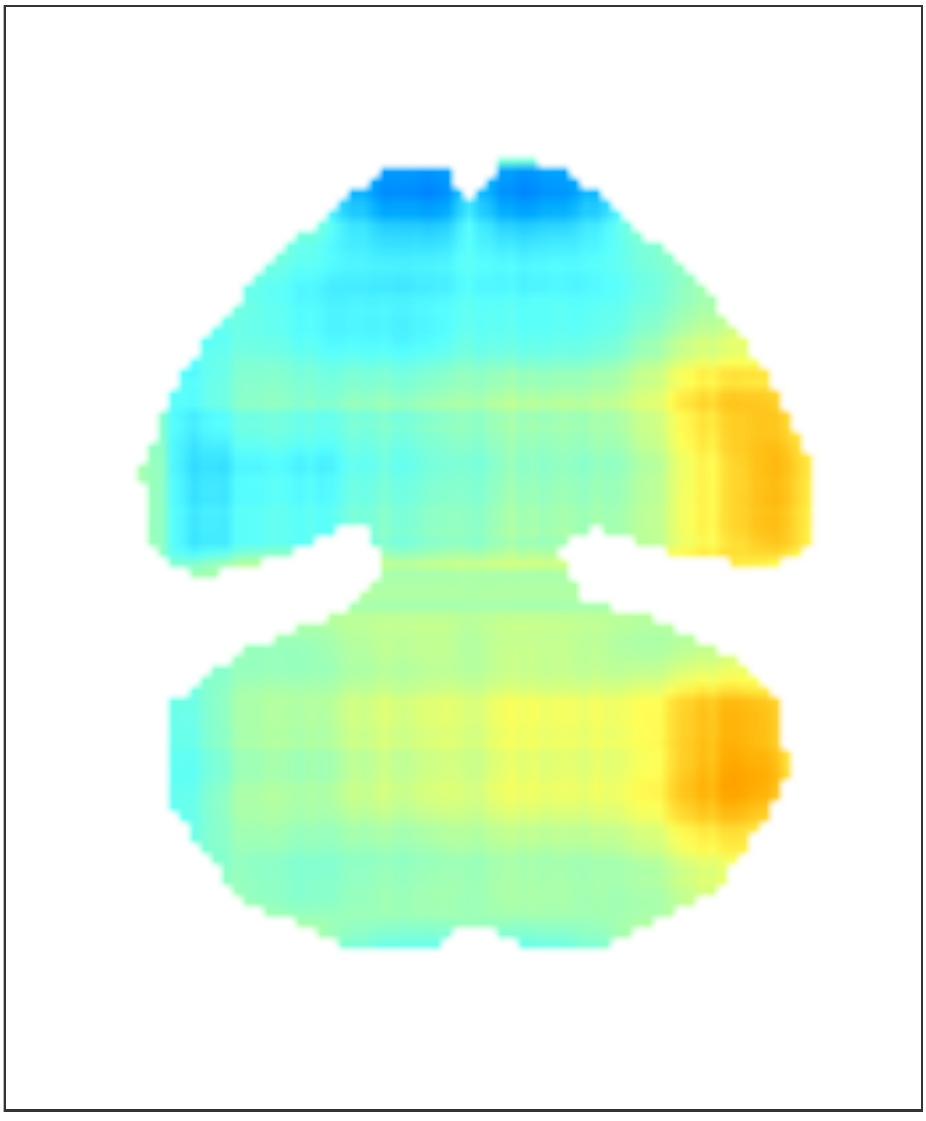} \!\!\!&\!\!\!
			\includegraphics[scale=0.19]{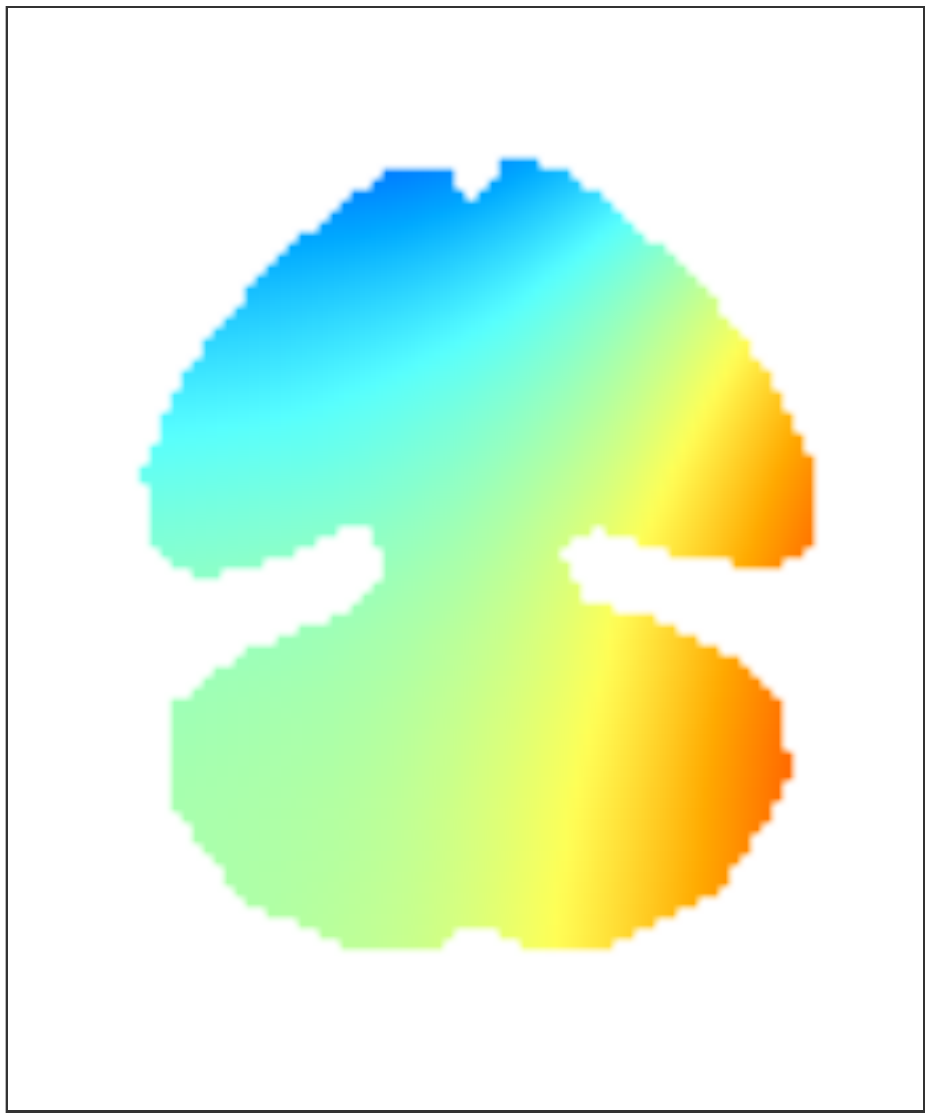} \!\!\!&\!\!\!
			\includegraphics[scale=0.19]{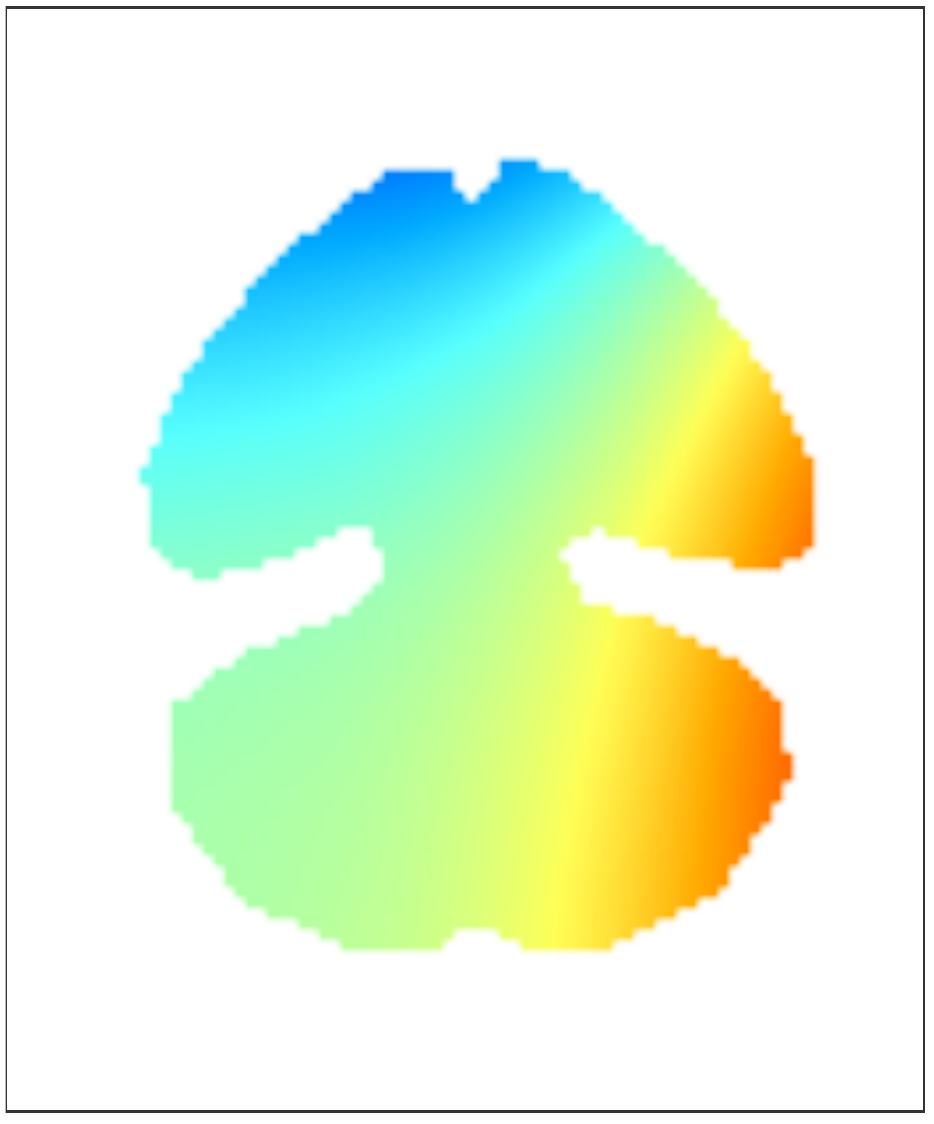} \!\!\!&\!\!\!
			\includegraphics[scale=0.19]{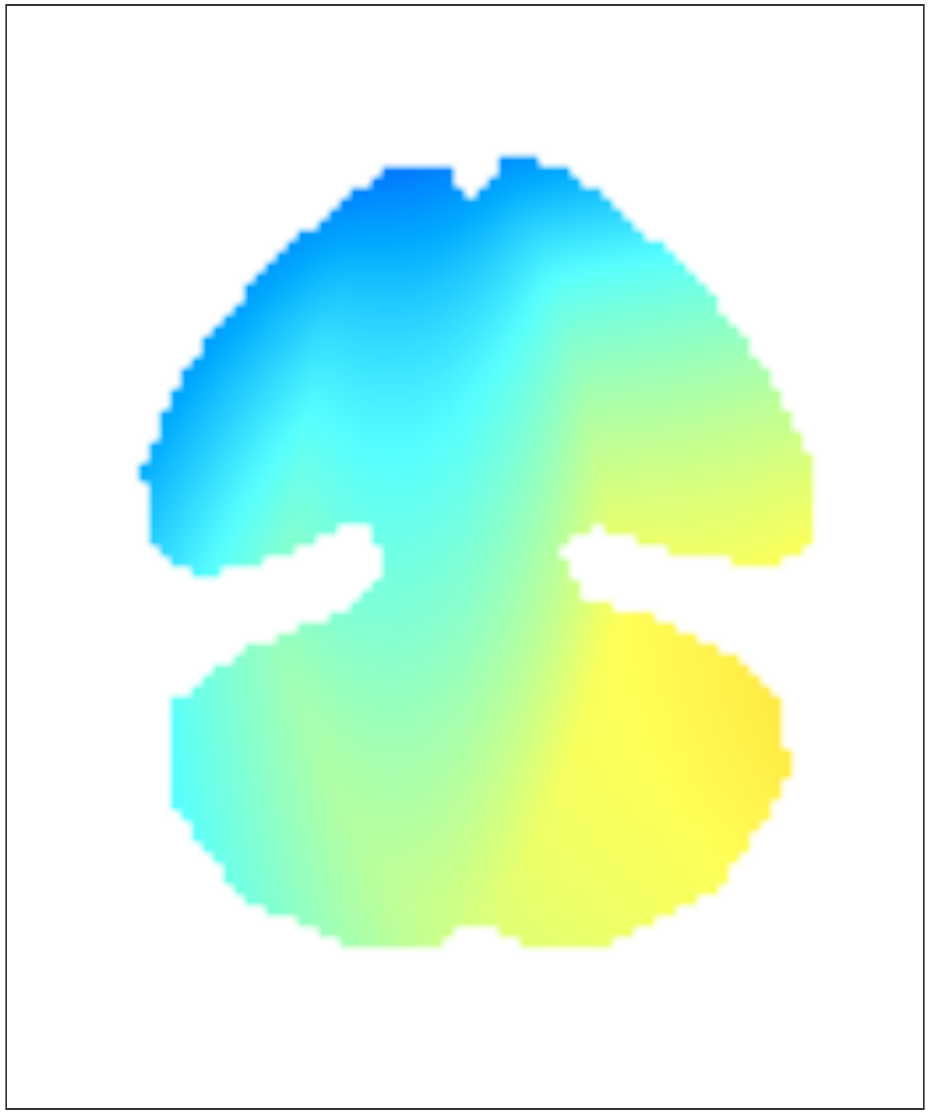} \!\!\!&\!\!\!
			\includegraphics[scale=0.19]{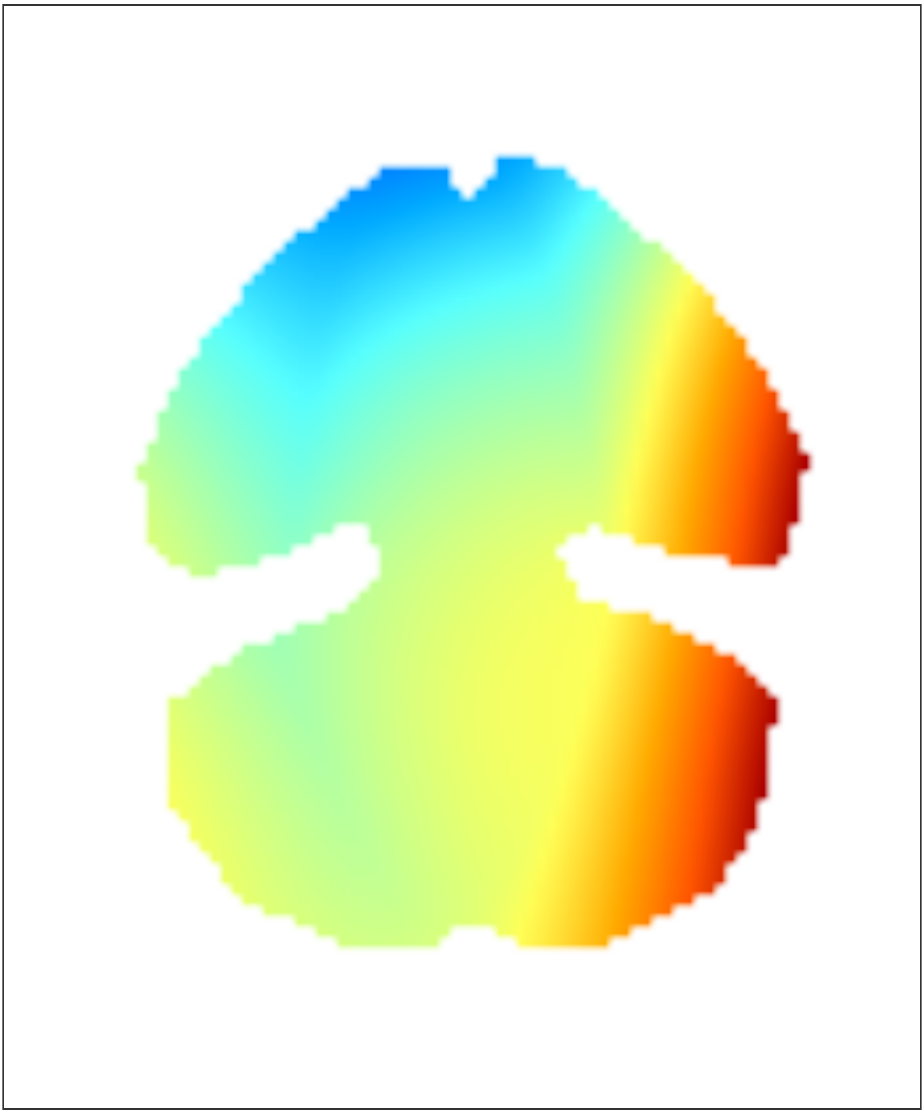} \!\!\!&\!\!\!
			\includegraphics[scale=0.19]{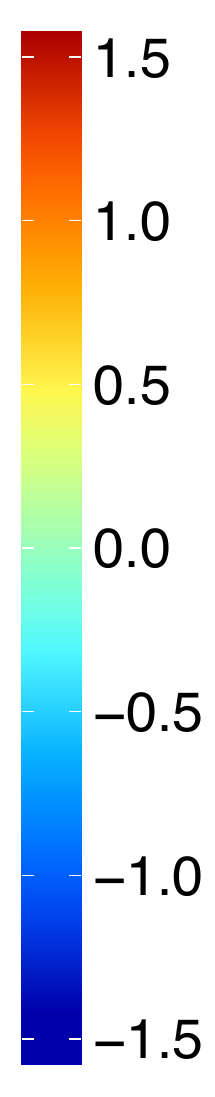}\\[-5pt]
			\multicolumn{7}{c}{$\beta_1$}\\
			\includegraphics[scale=0.19]{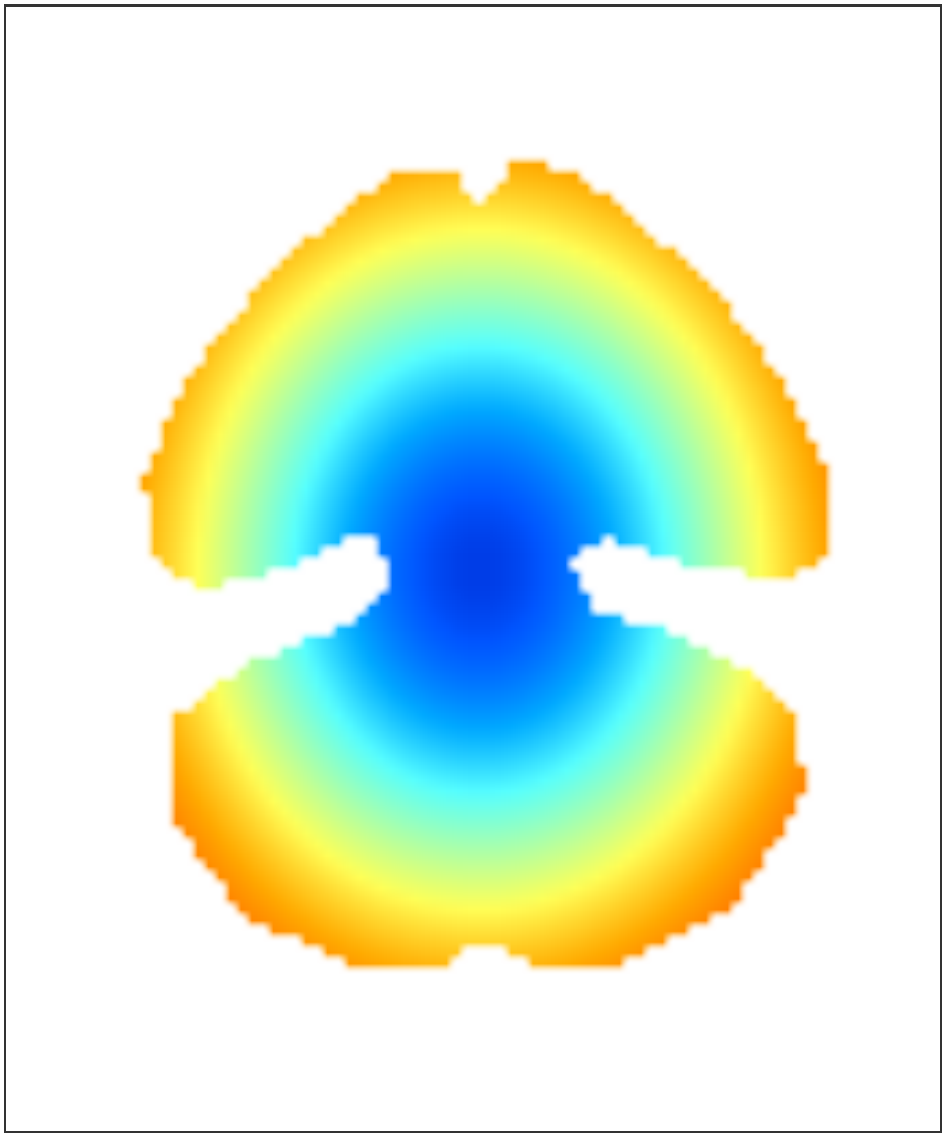} \!\!\!&\!\!\!
			\includegraphics[scale=0.19]{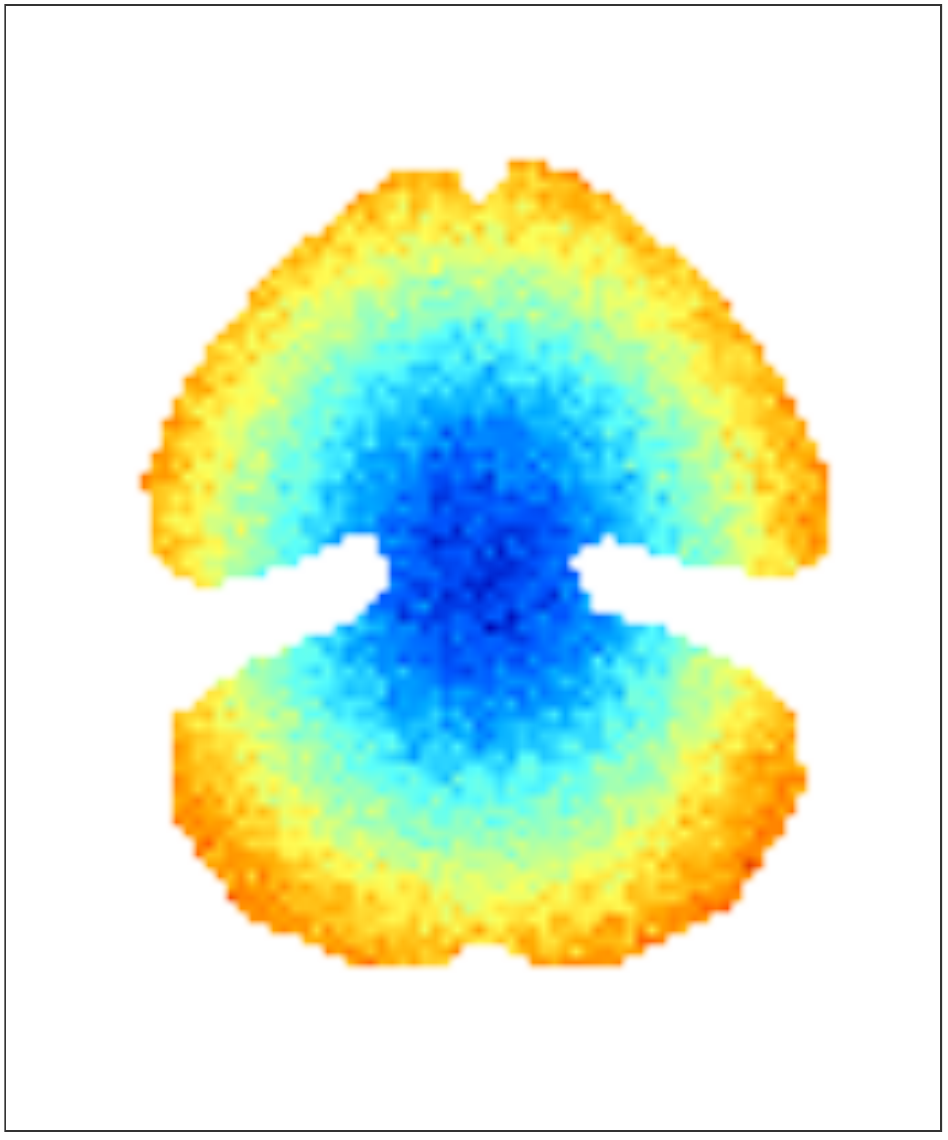} \!\!\!&\!\!\!
			\includegraphics[scale=0.19]{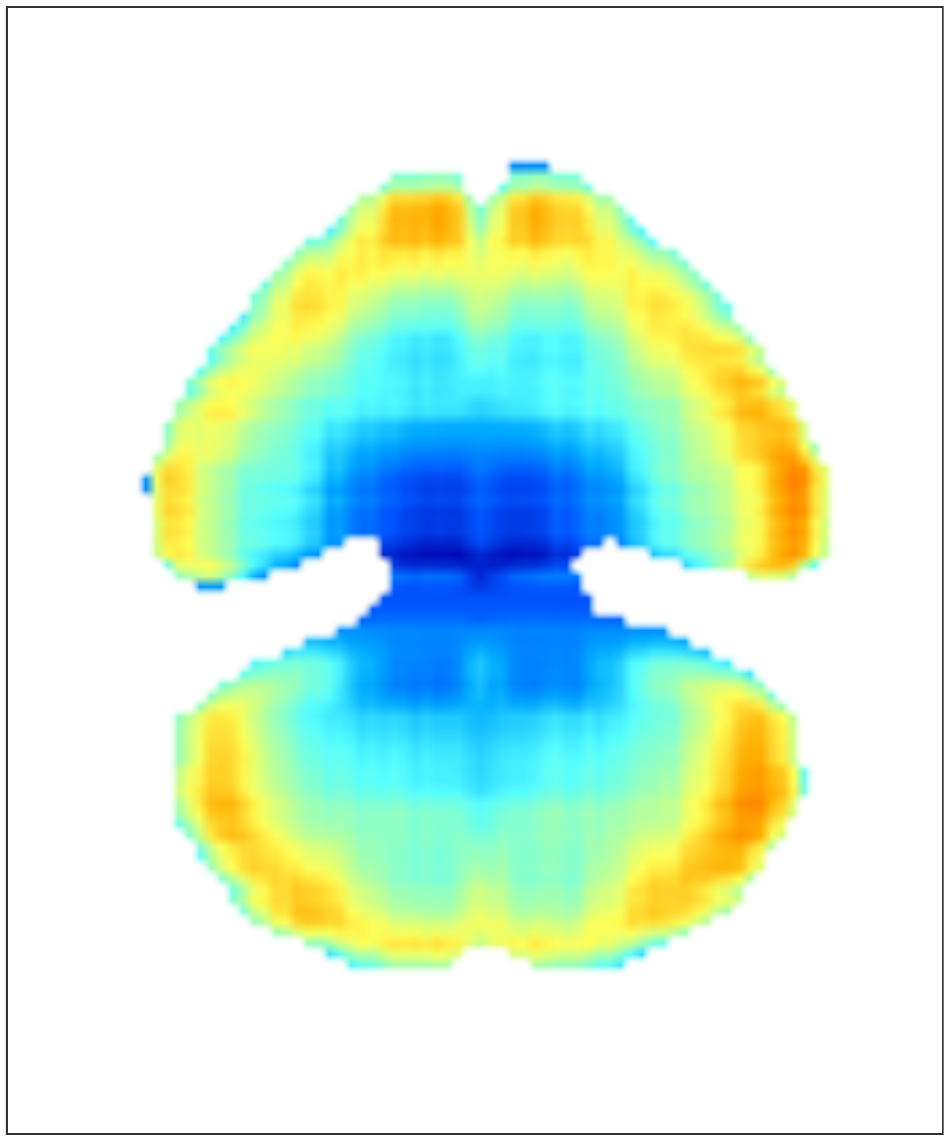} \!\!\!&\!\!\!
			\includegraphics[scale=0.19]{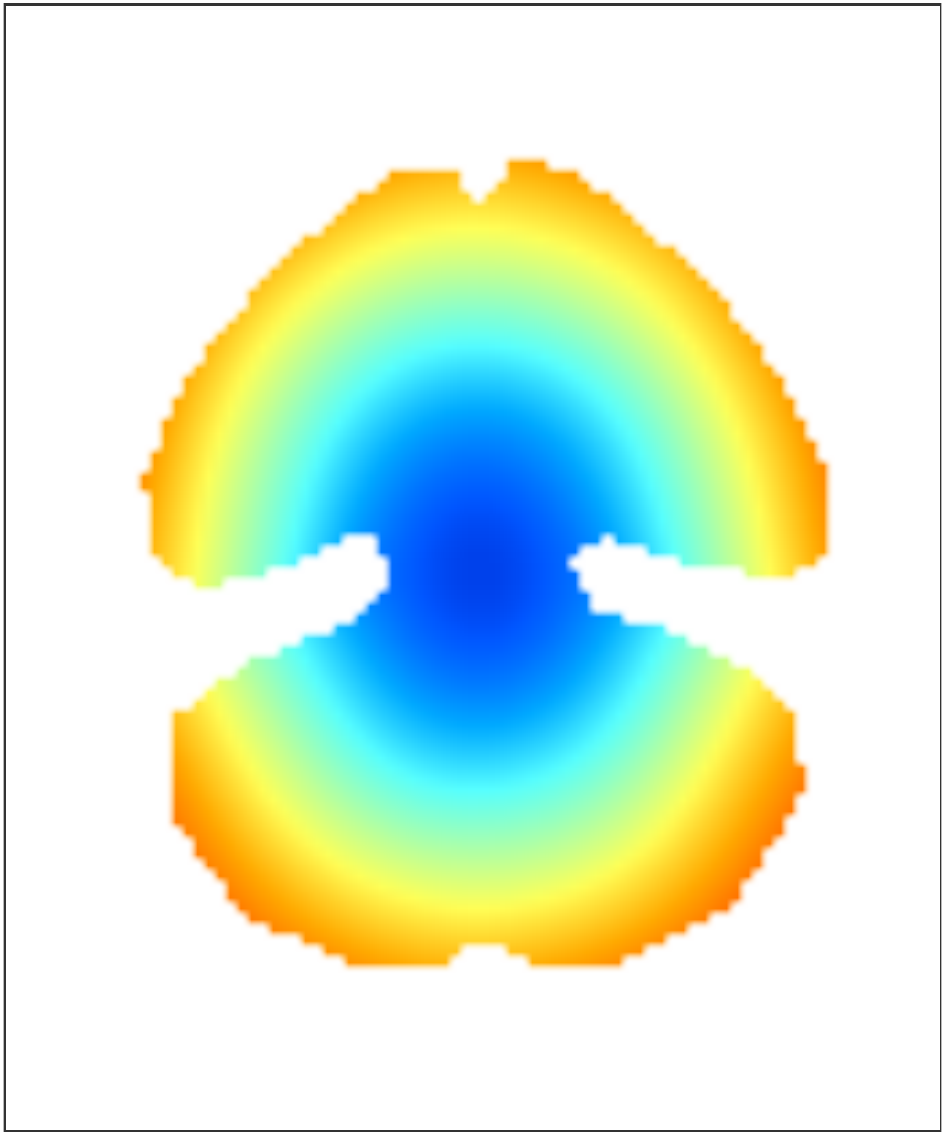} \!\!\!&\!\!\!
			\includegraphics[scale=0.19]{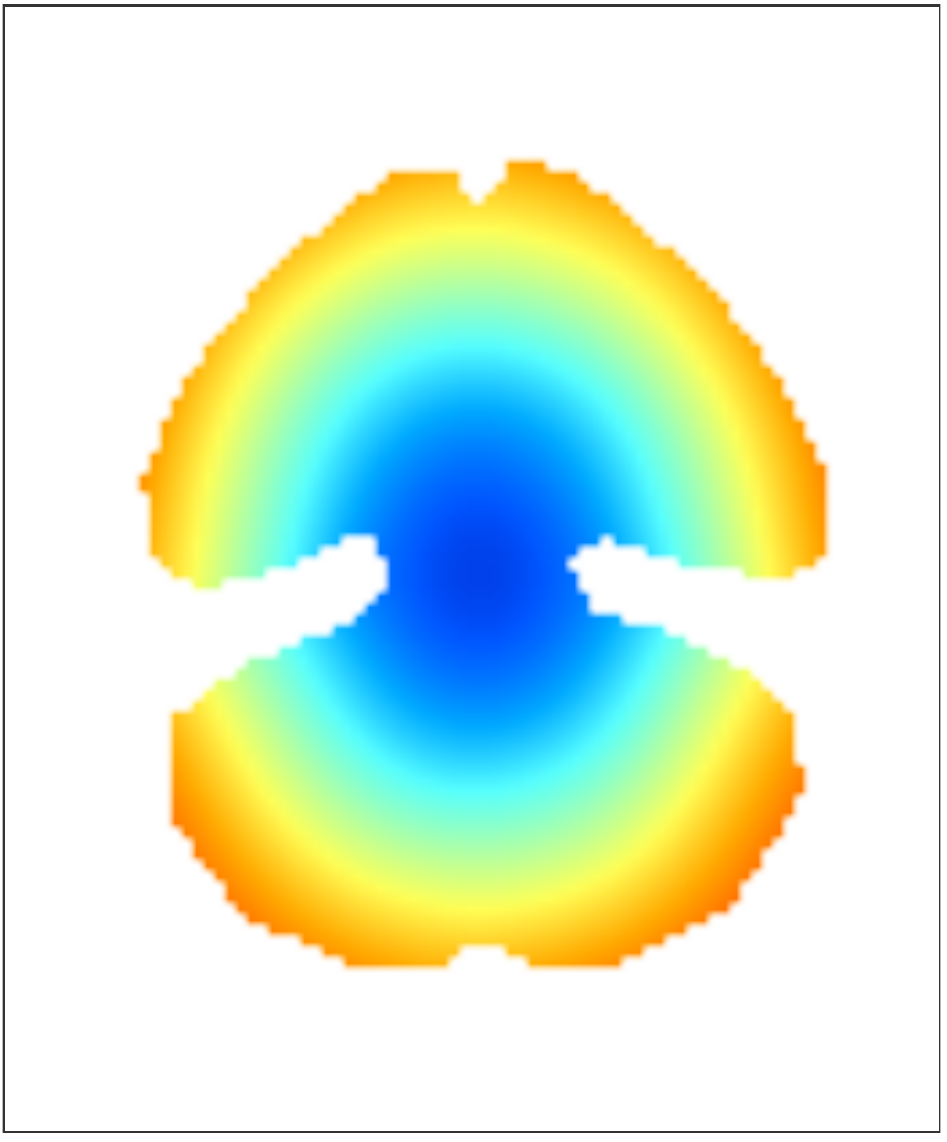} \!\!\!&\!\!\!
			\includegraphics[scale=0.19]{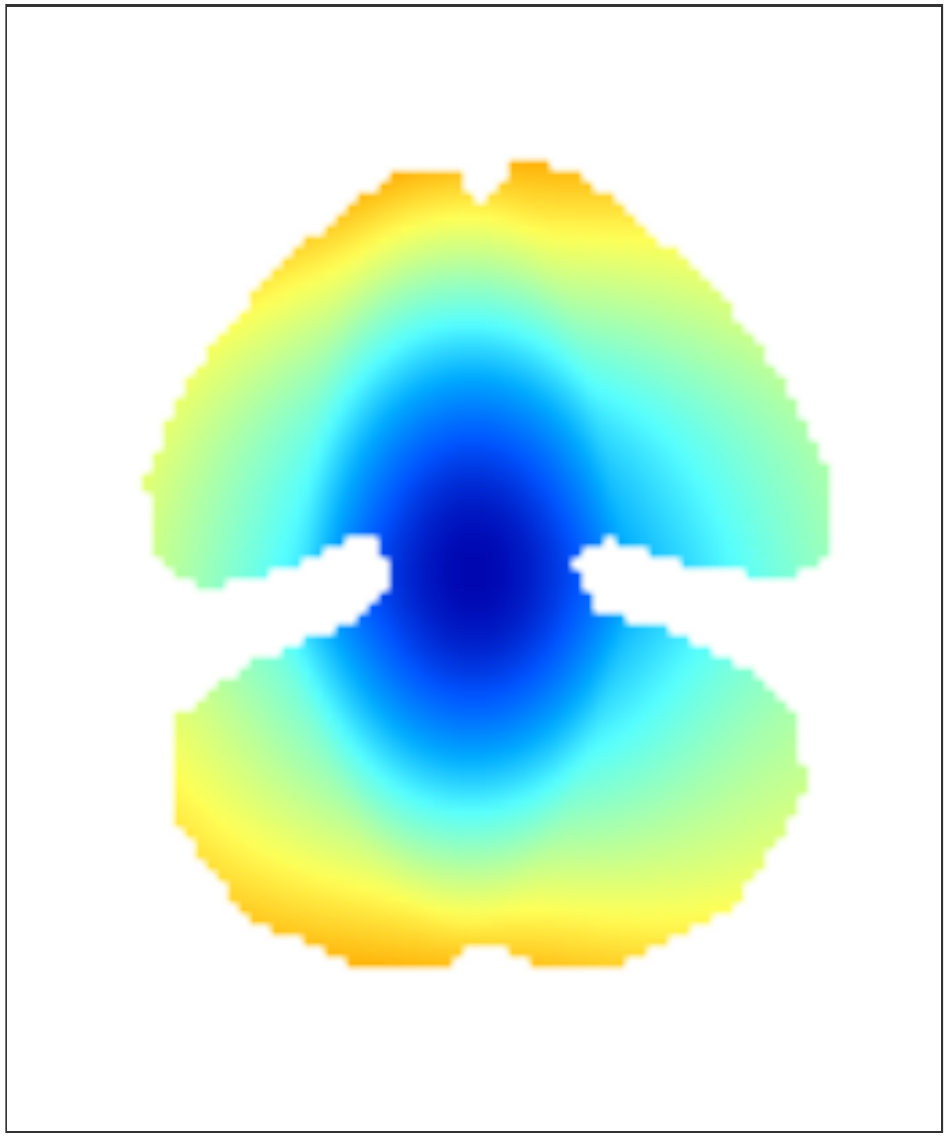} \!\!\!&\!\!\!
			\includegraphics[scale=0.19]{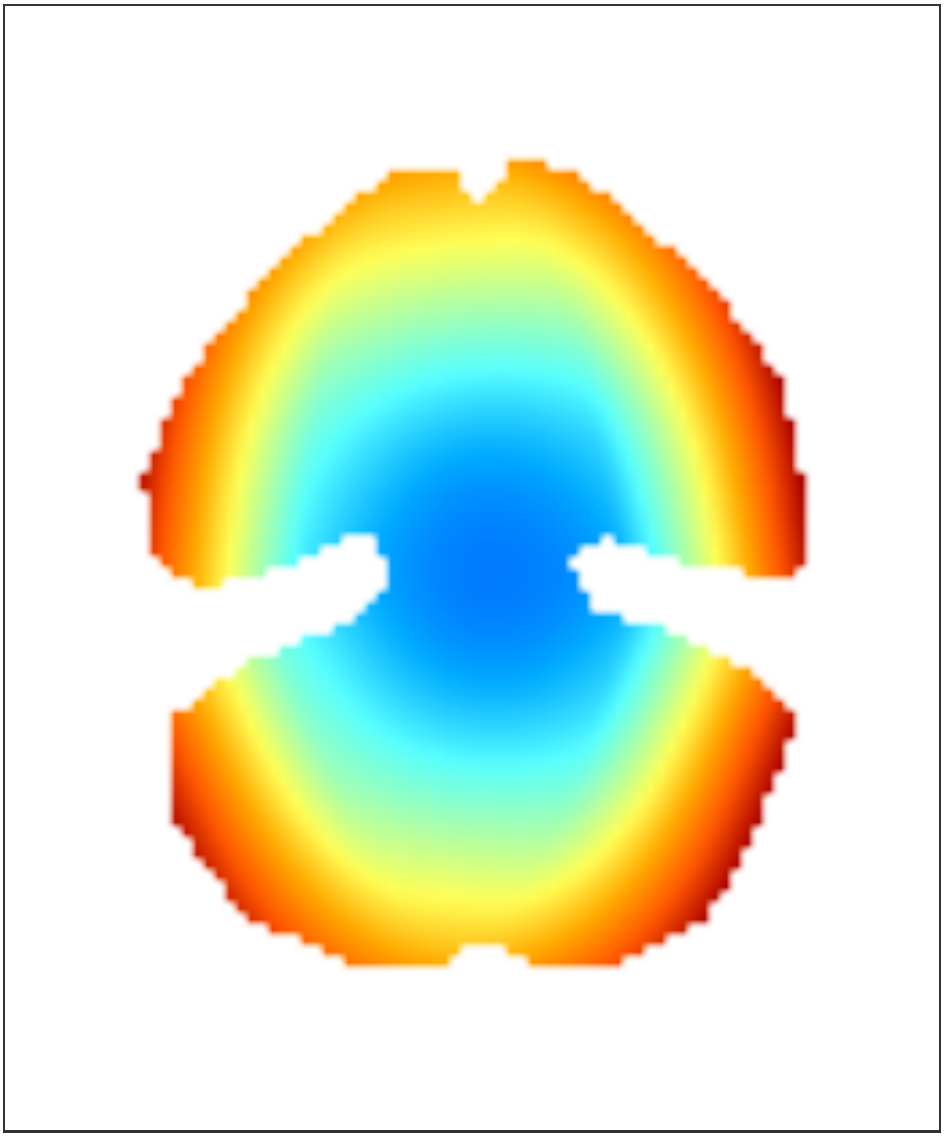} \!\!\!&\!\!\!
			\includegraphics[scale=0.19]{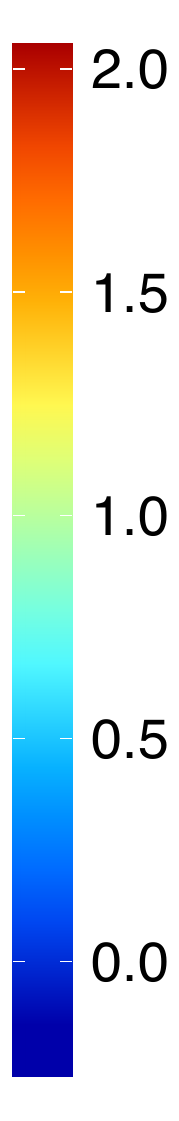}\\
			\multicolumn{7}{c}{$\beta_2$}\\[-5pt]
		\end{tabular}
		\caption{True coefficient functions and their estimators and 95\% SCCs based on the fifth slice.}
		\label{FIG:EST_SCC}
	\end{center}
\end{figure}

%%%%%%%%%%%%%%%%%%%%%%%%%%%%%%%%%%%%%%%%%%%%%%%%%%%%%%%%%%%%%%%%%
\begin{figure}[ht]
	\begin{center}
		\begin{tabular}{ccccccccc}
			TRUE & Kernel& Tensor & BPST($\triangle_1$) & BPST($\triangle_2 $) & Lower SCC &Upper SCC\\
			\includegraphics[width=1.8cm,height=2cm]{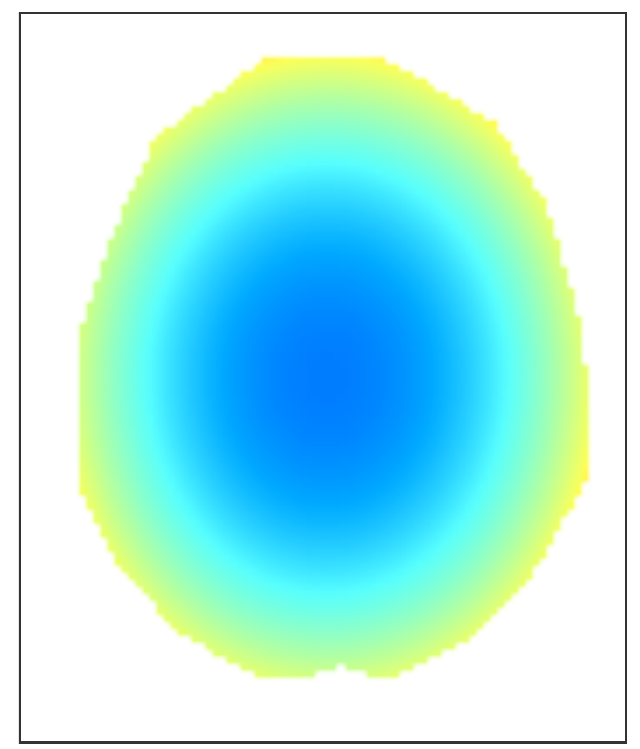} \!\!\!&\!\!\!
			\includegraphics[width=1.8cm,height=2cm]{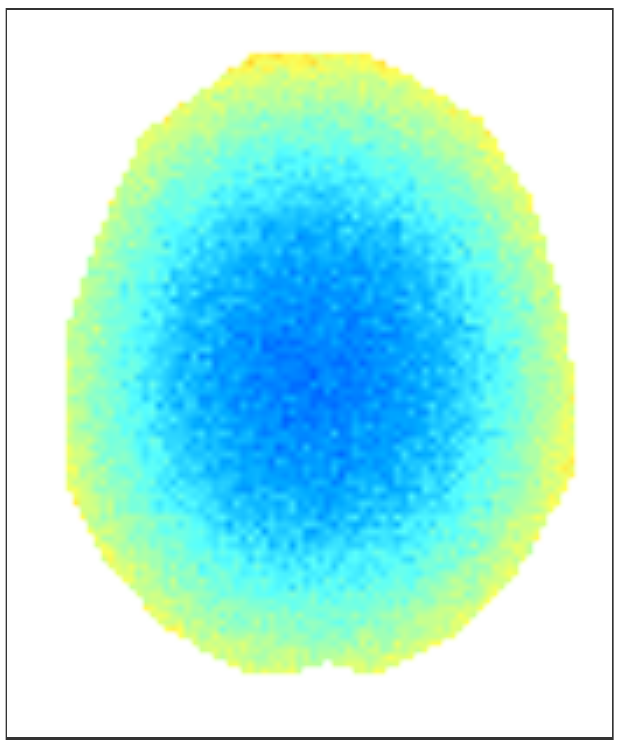} \!\!\!&\!\!\!
			\includegraphics[width=1.8cm,height=2cm]{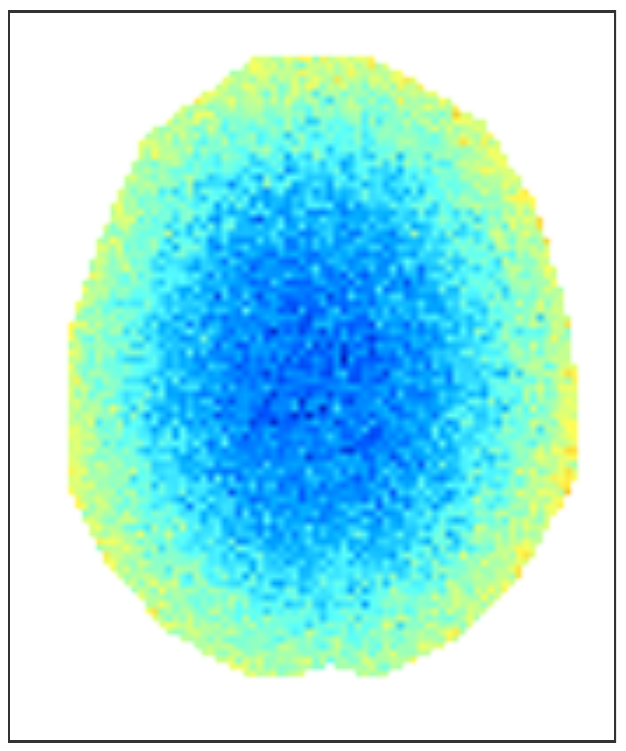} \!\!\!&\!\!\!
			\includegraphics[width=1.8cm,height=2cm]{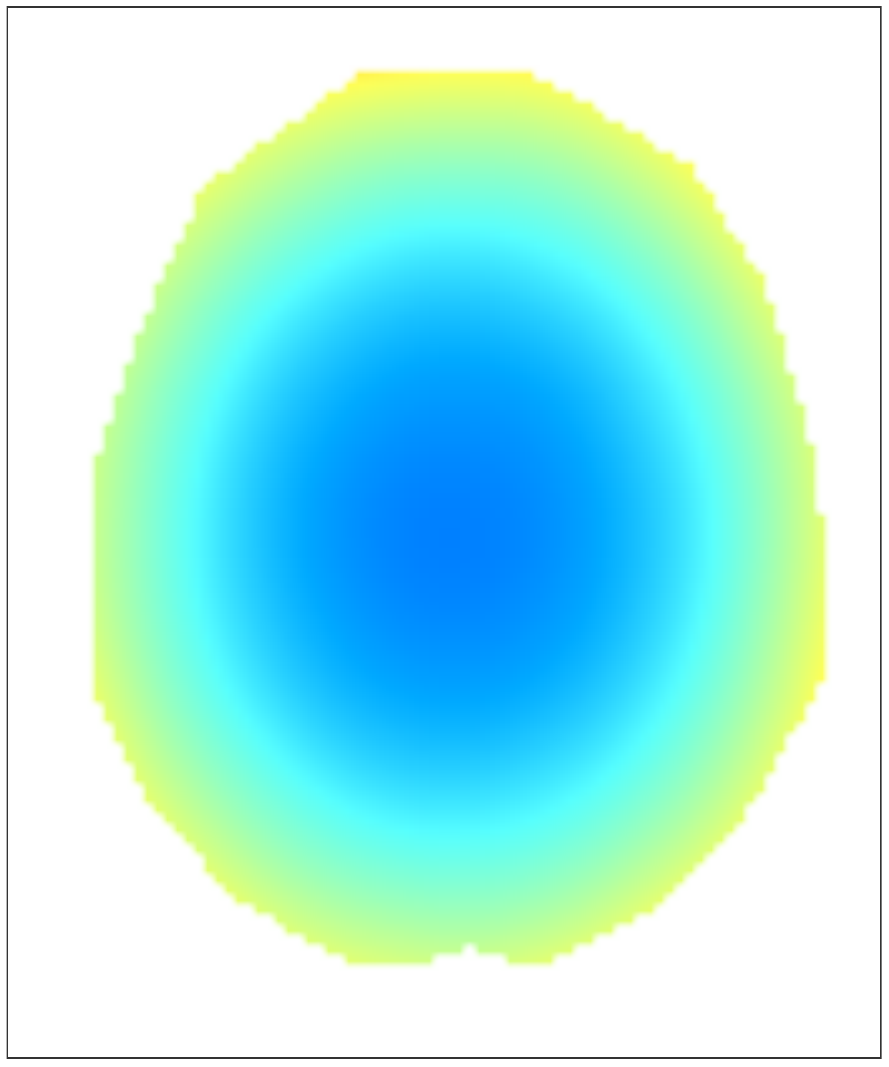} \!\!\!&\!\!\!
			\includegraphics[width=1.8cm,height=2cm]{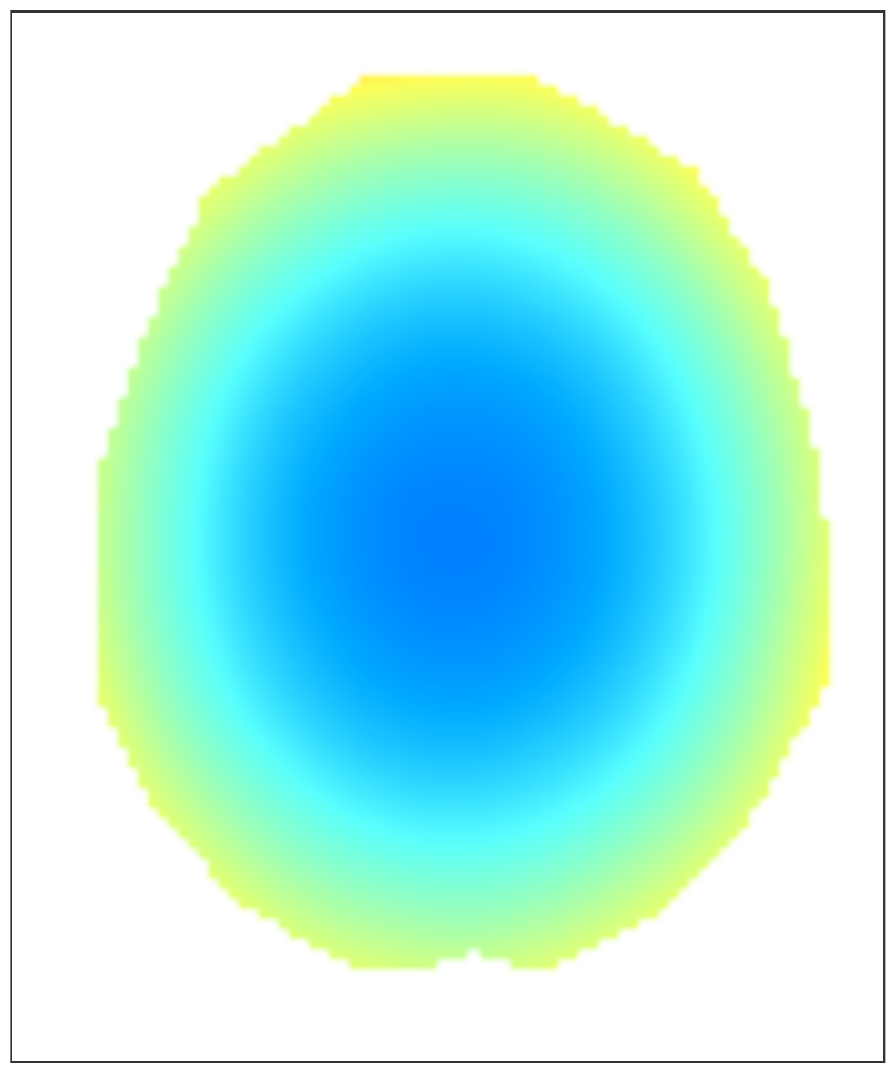} \!\!\!&\!\!\!
			\includegraphics[width=1.8cm,height=2cm]{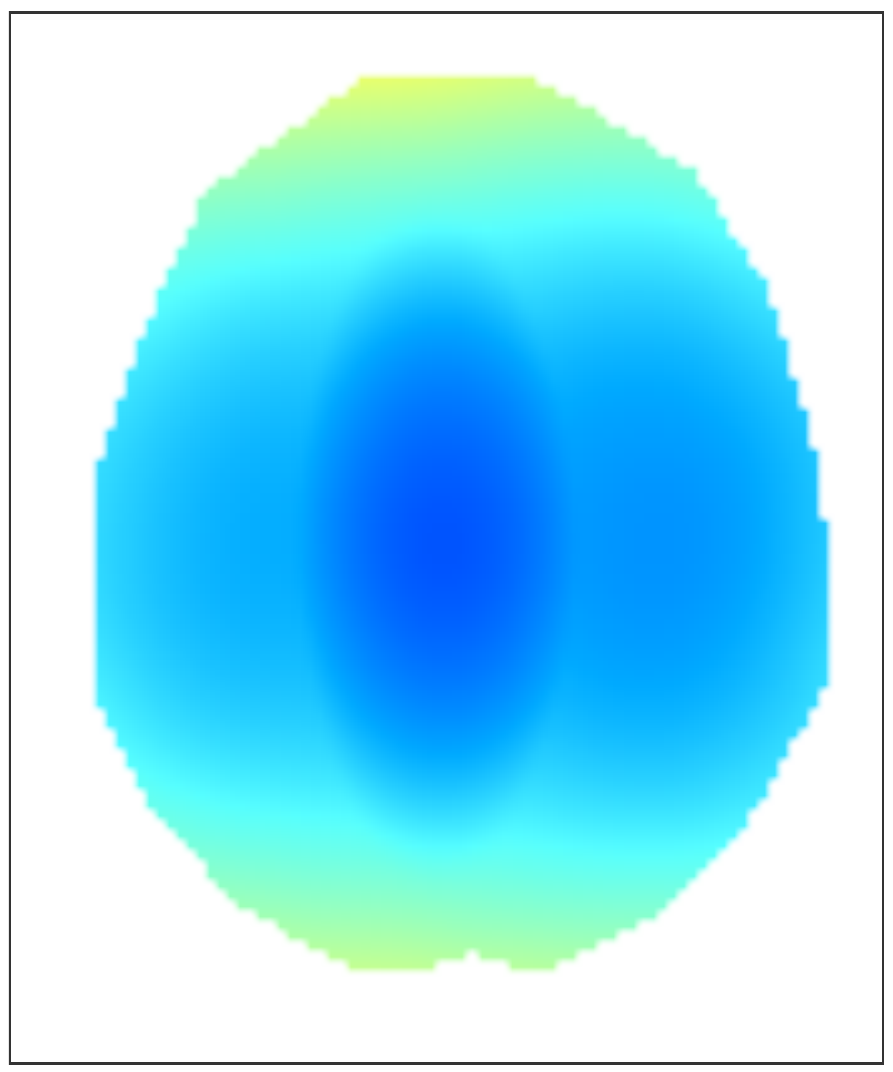} \!\!\!&\!\!\!
			\includegraphics[width=1.8cm,height=2cm]{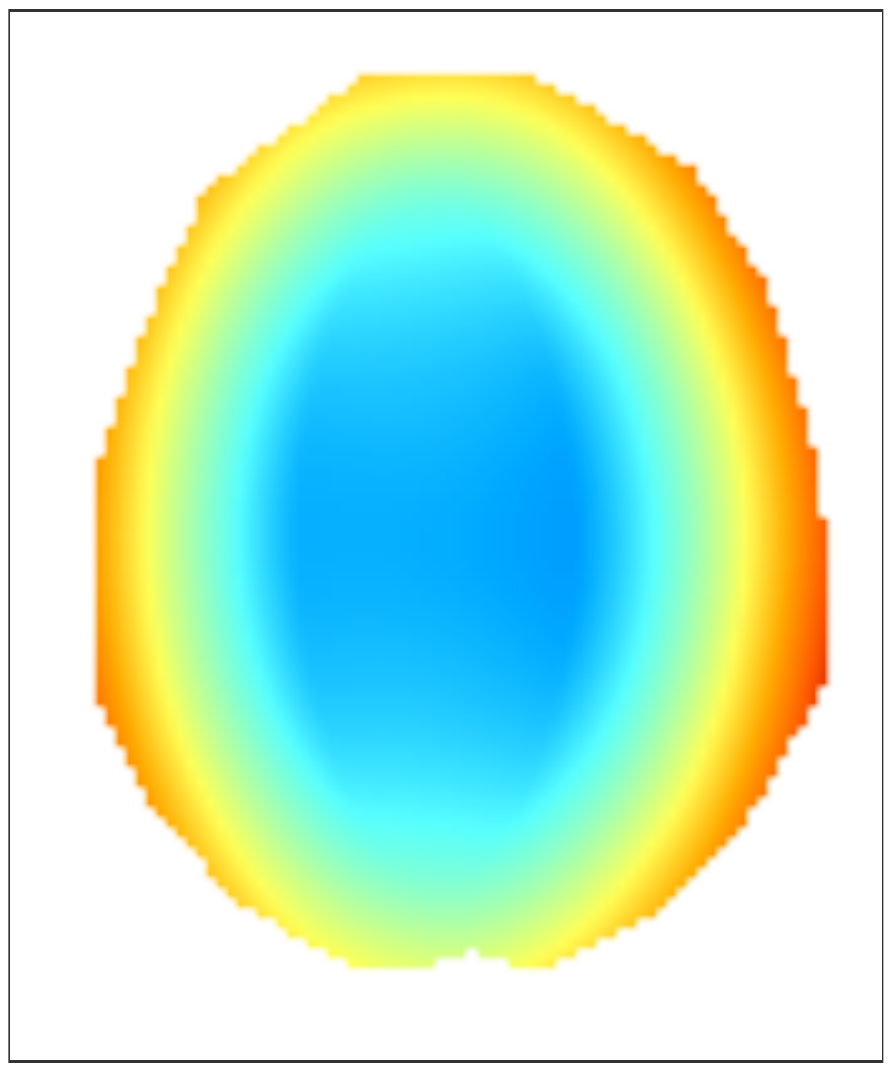} \!\!\!&\!\!\!
			\includegraphics[width=0.5cm,height=2cm]{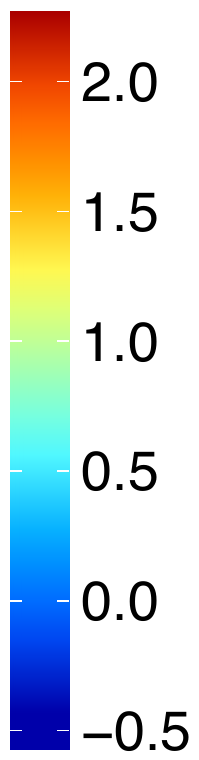}\\[-5pt]
			\multicolumn{7}{c}{$\beta_0$}\\
			\includegraphics[width=1.8cm,height=2cm]{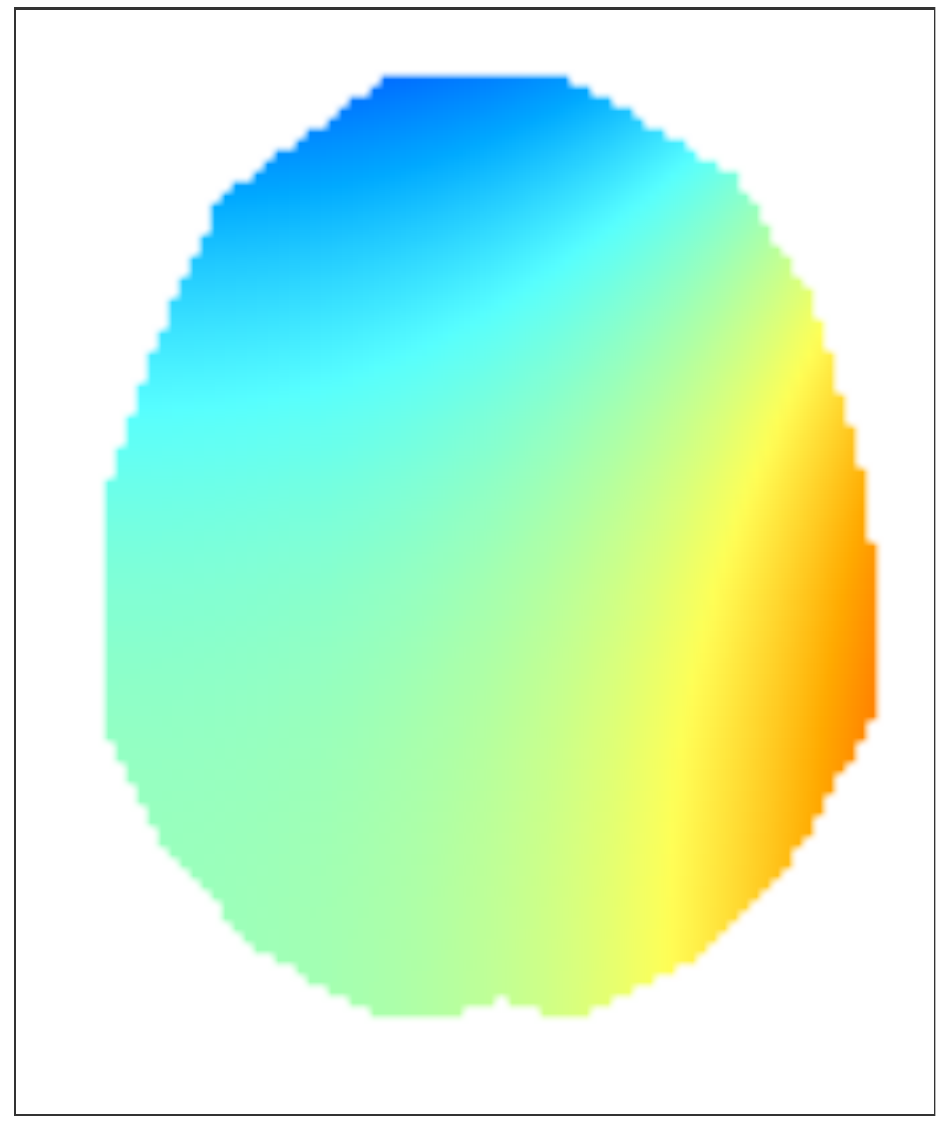} \!\!\!&\!\!\!
			\includegraphics[width=1.8cm,height=2cm]{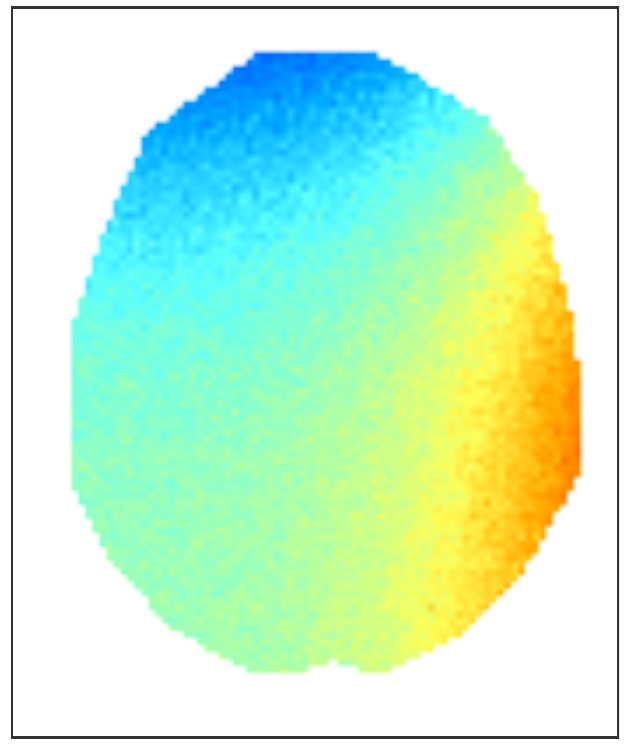} \!\!\!&\!\!\!
			\includegraphics[width=1.8cm,height=2cm]{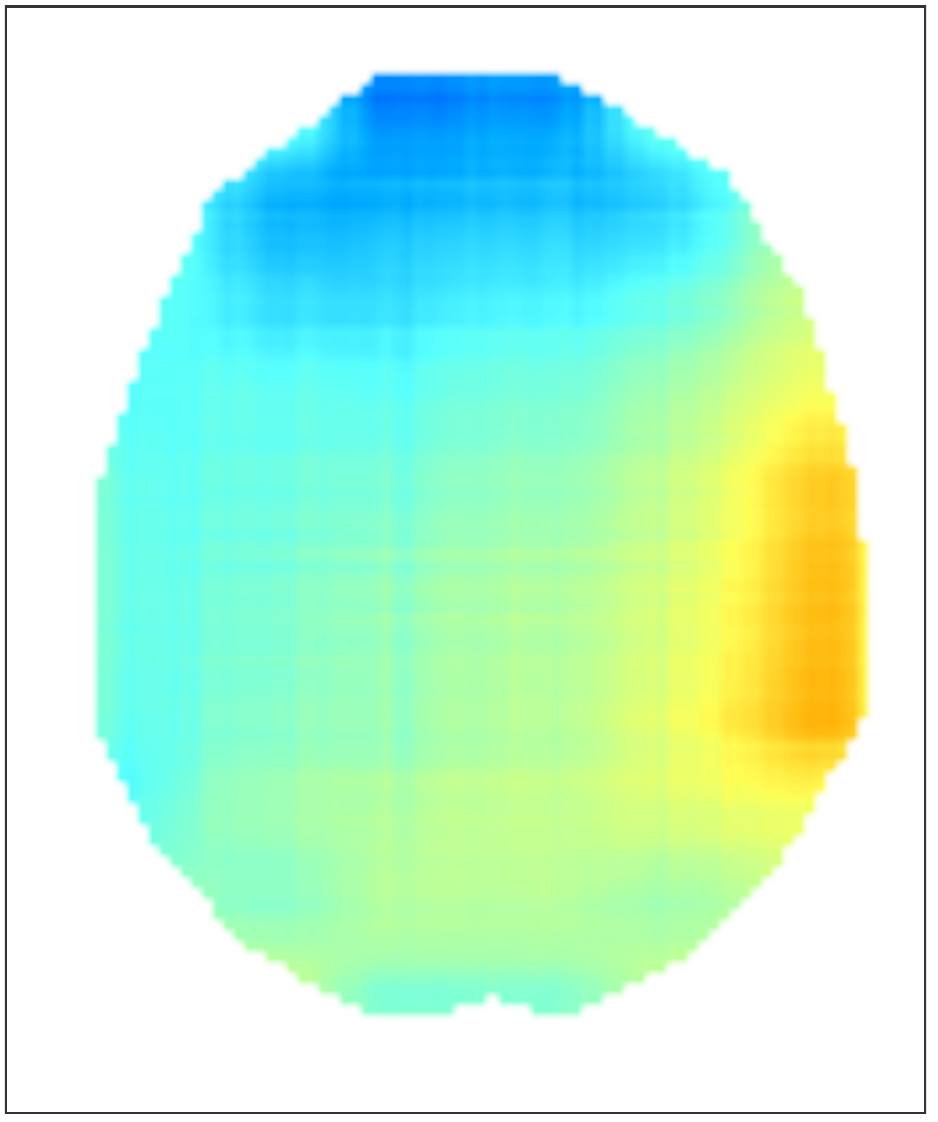} \!\!\!&\!\!\!
			\includegraphics[width=1.8cm,height=2cm]{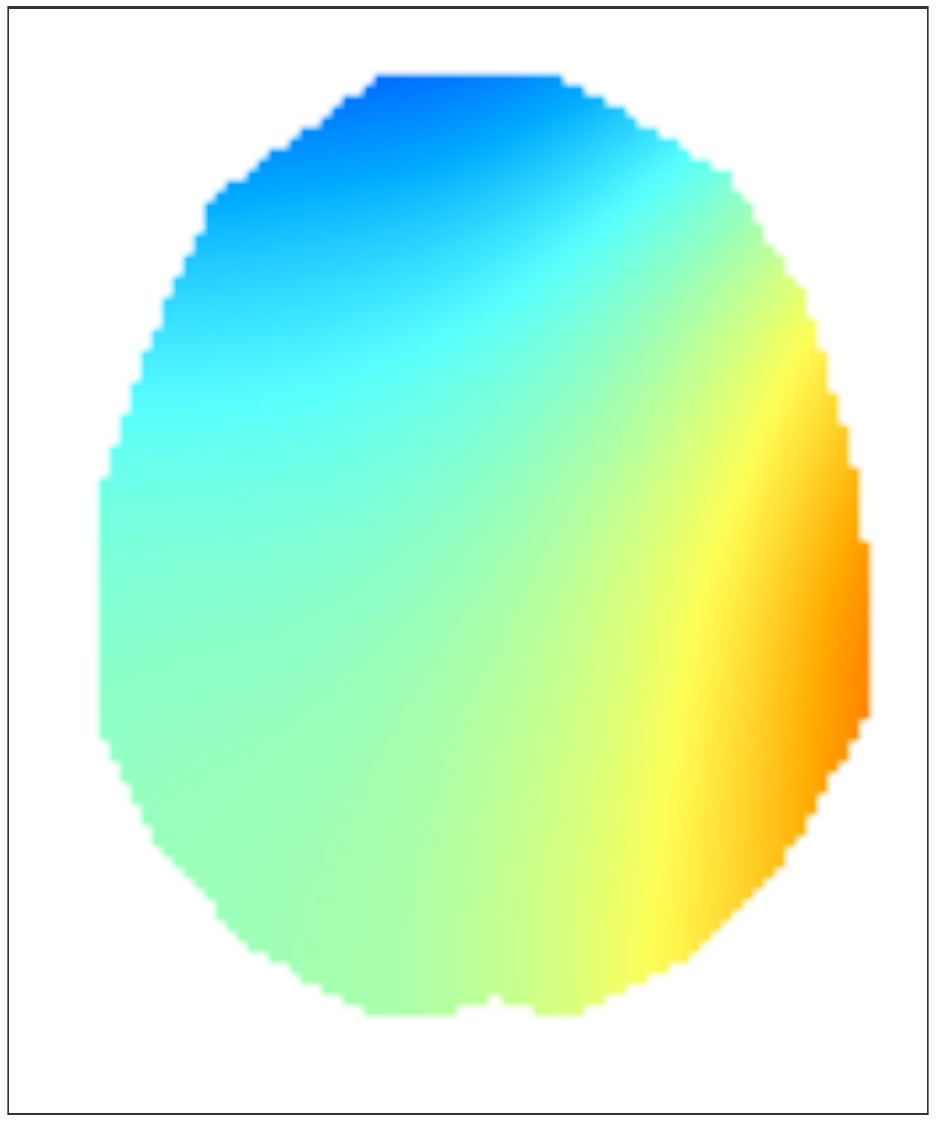} \!\!\!&\!\!\!
			\includegraphics[width=1.8cm,height=2cm]{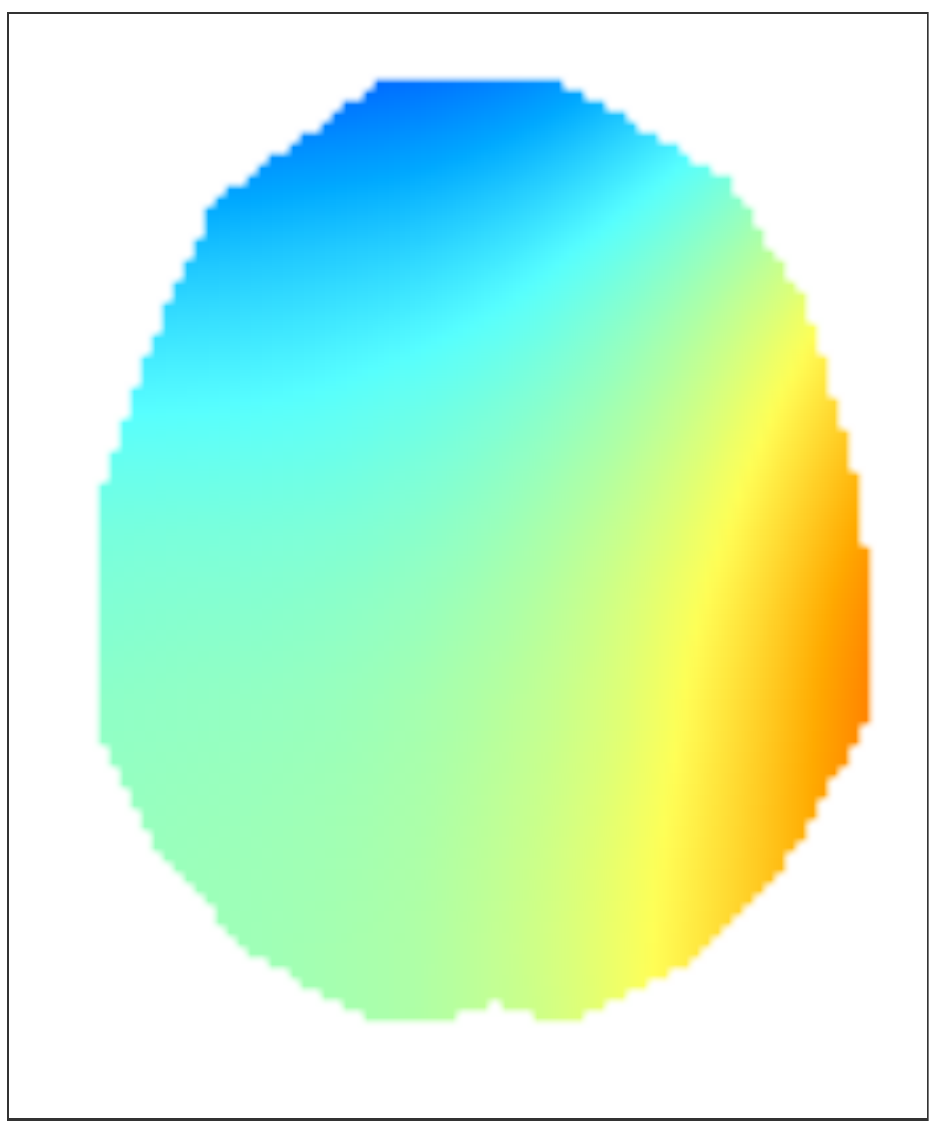} \!\!\!&\!\!\!
			\includegraphics[width=1.8cm,height=2cm]{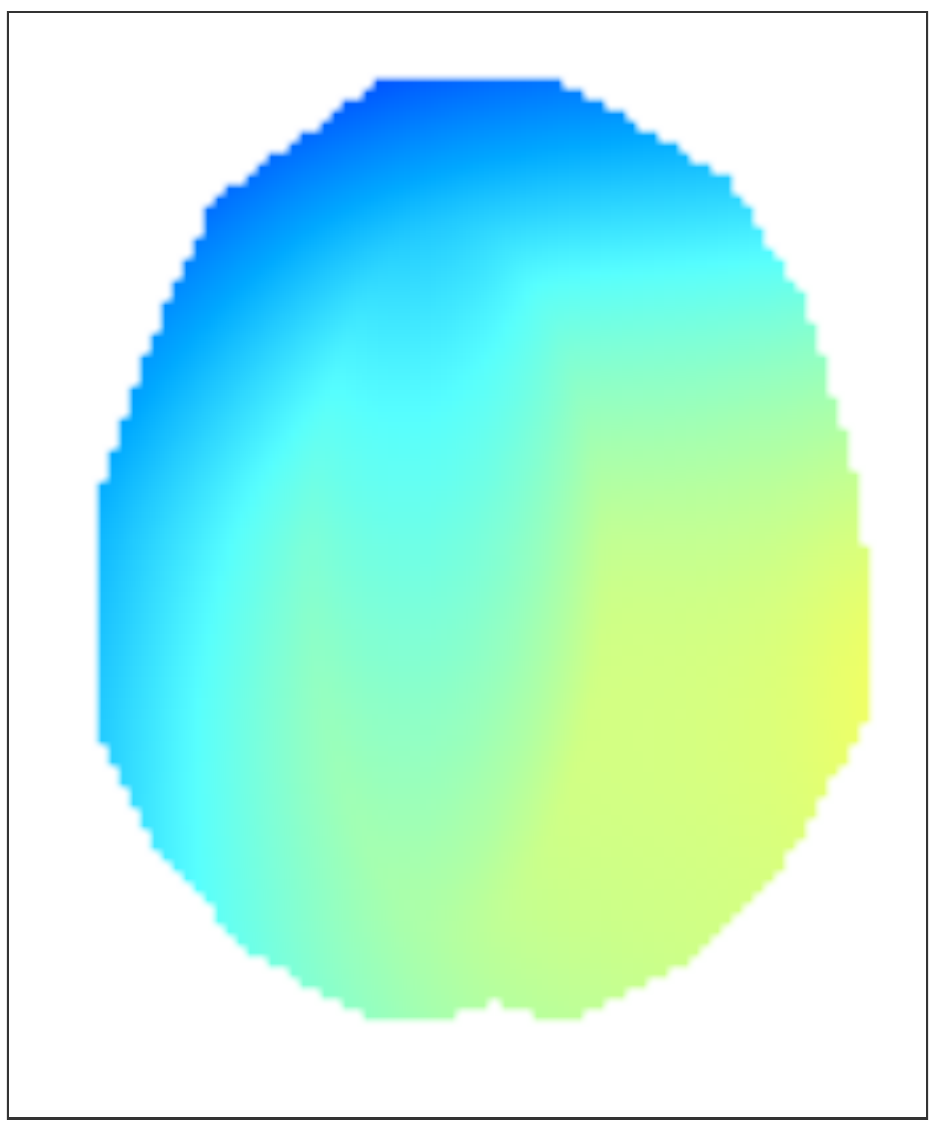} \!\!\!&\!\!\!
			\includegraphics[width=1.8cm,height=2cm]{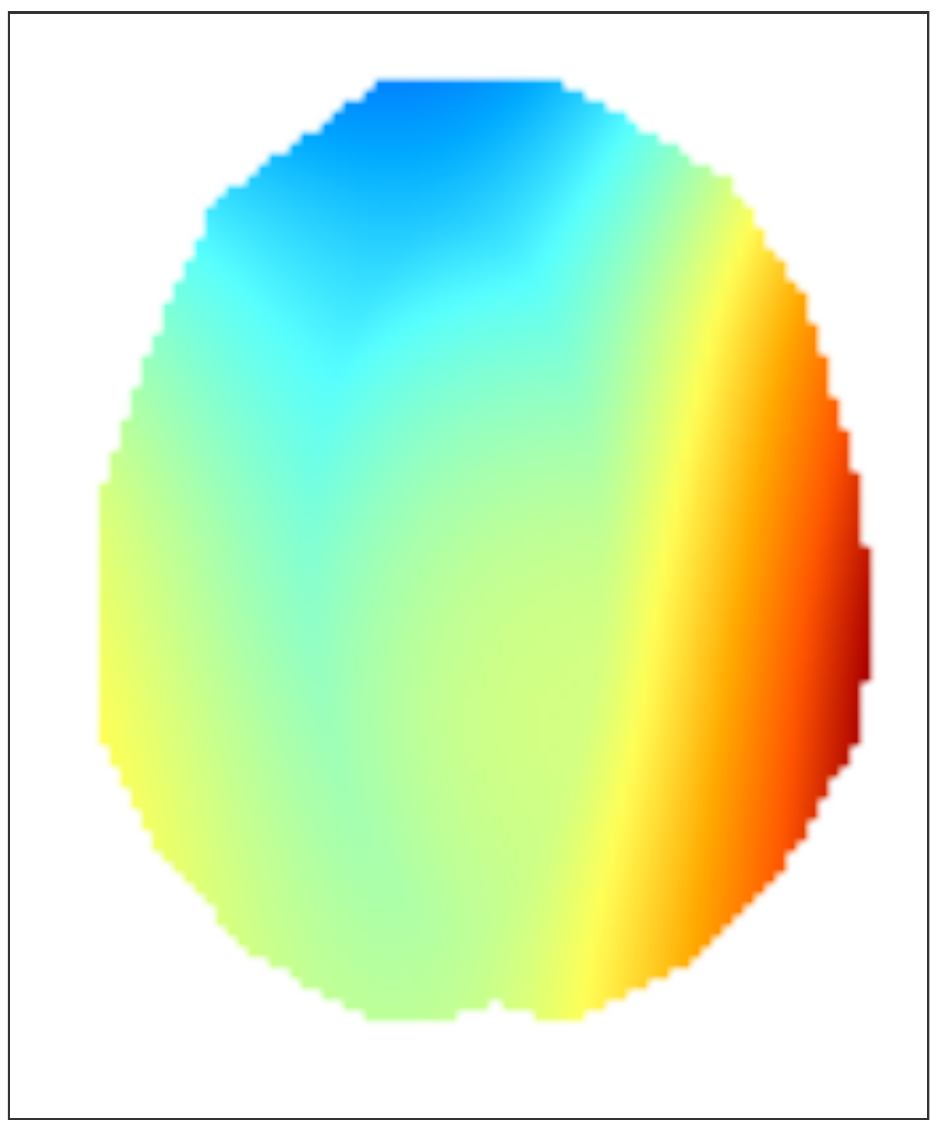} \!\!\!&\!\!\!
			\includegraphics[width=0.5cm,height=2cm]{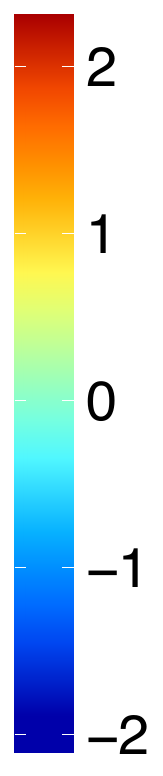}\\[-5pt]
			\multicolumn{7}{c}{$\beta_1$}\\
			\includegraphics[width=1.8cm,height=2cm]{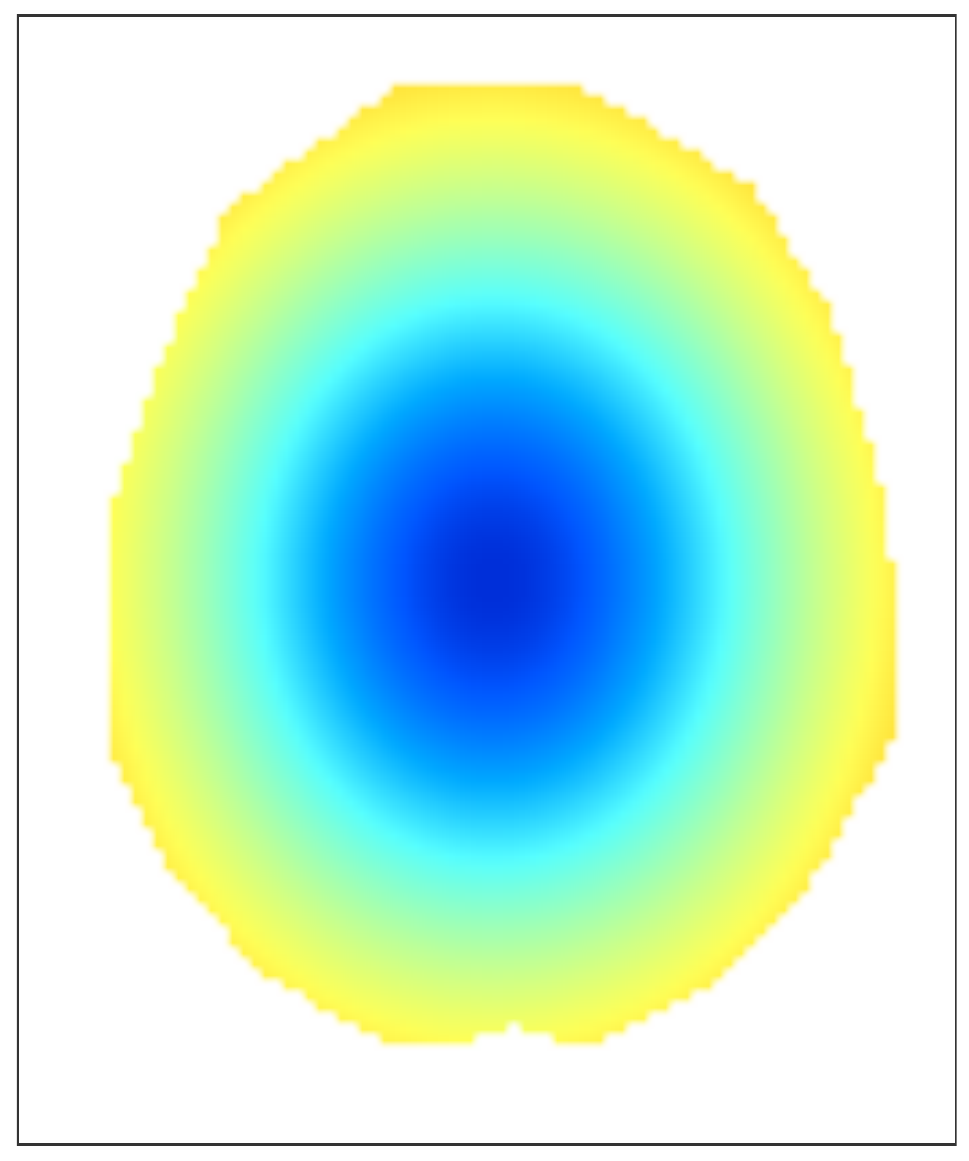} \!\!\!&\!\!\!
			\includegraphics[width=1.8cm,height=2cm]{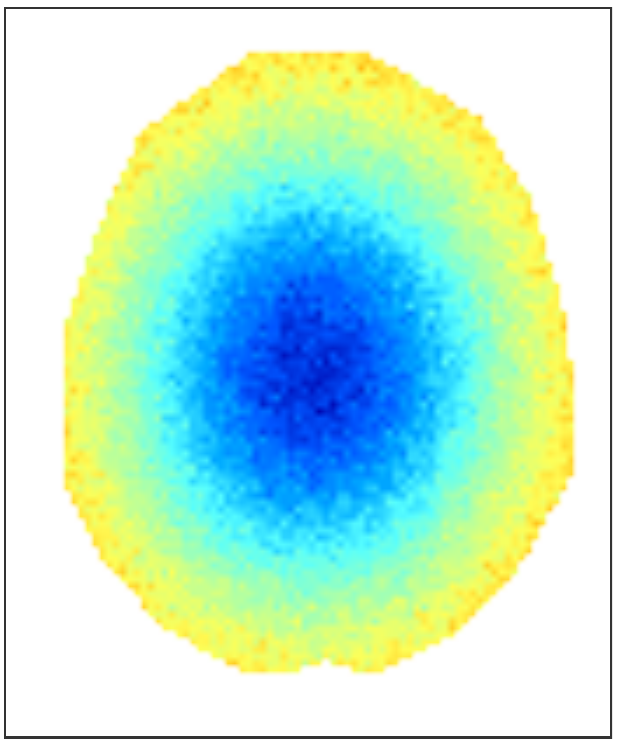} \!\!\!&\!\!\!
			\includegraphics[width=1.8cm,height=2cm]{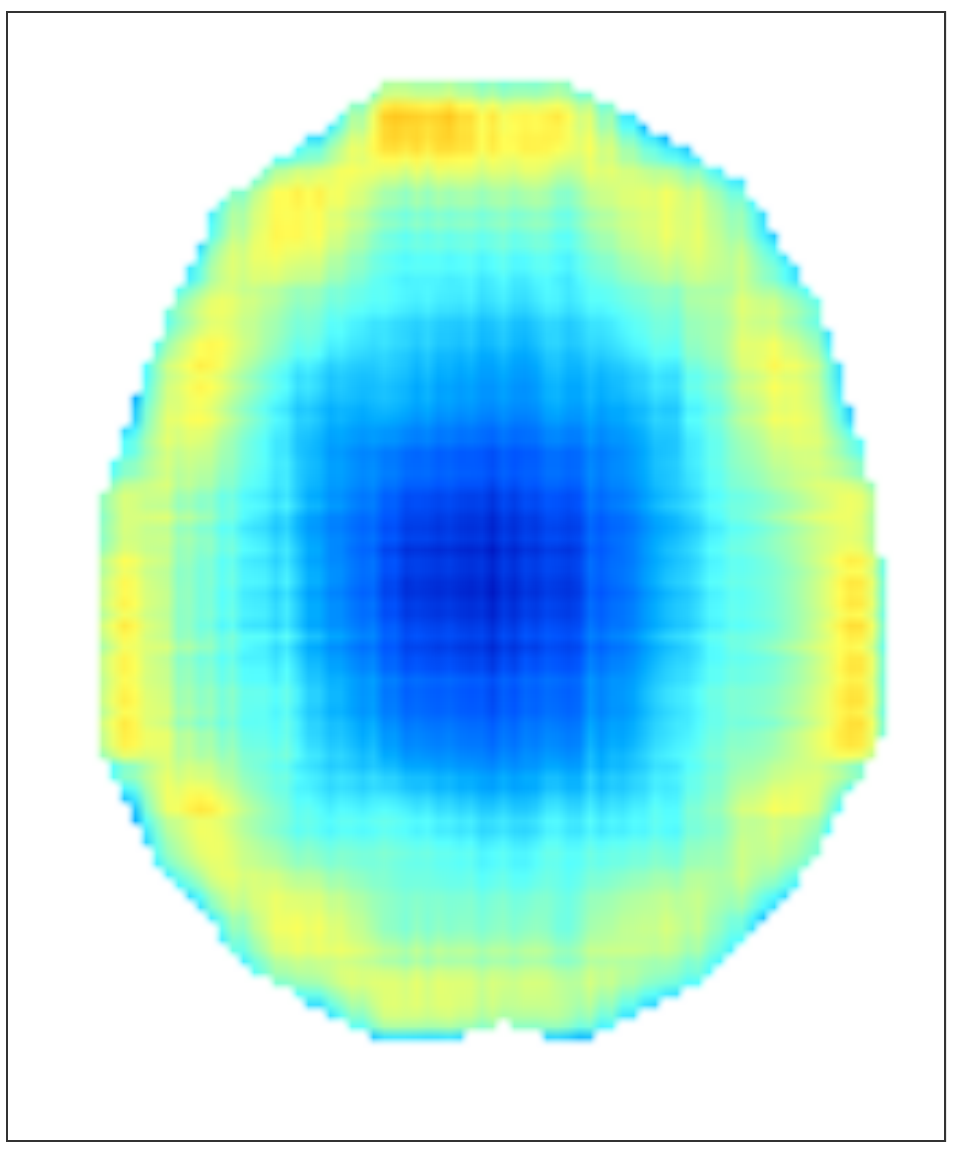} \!\!\!&\!\!\!
			\includegraphics[width=1.8cm,height=2cm]{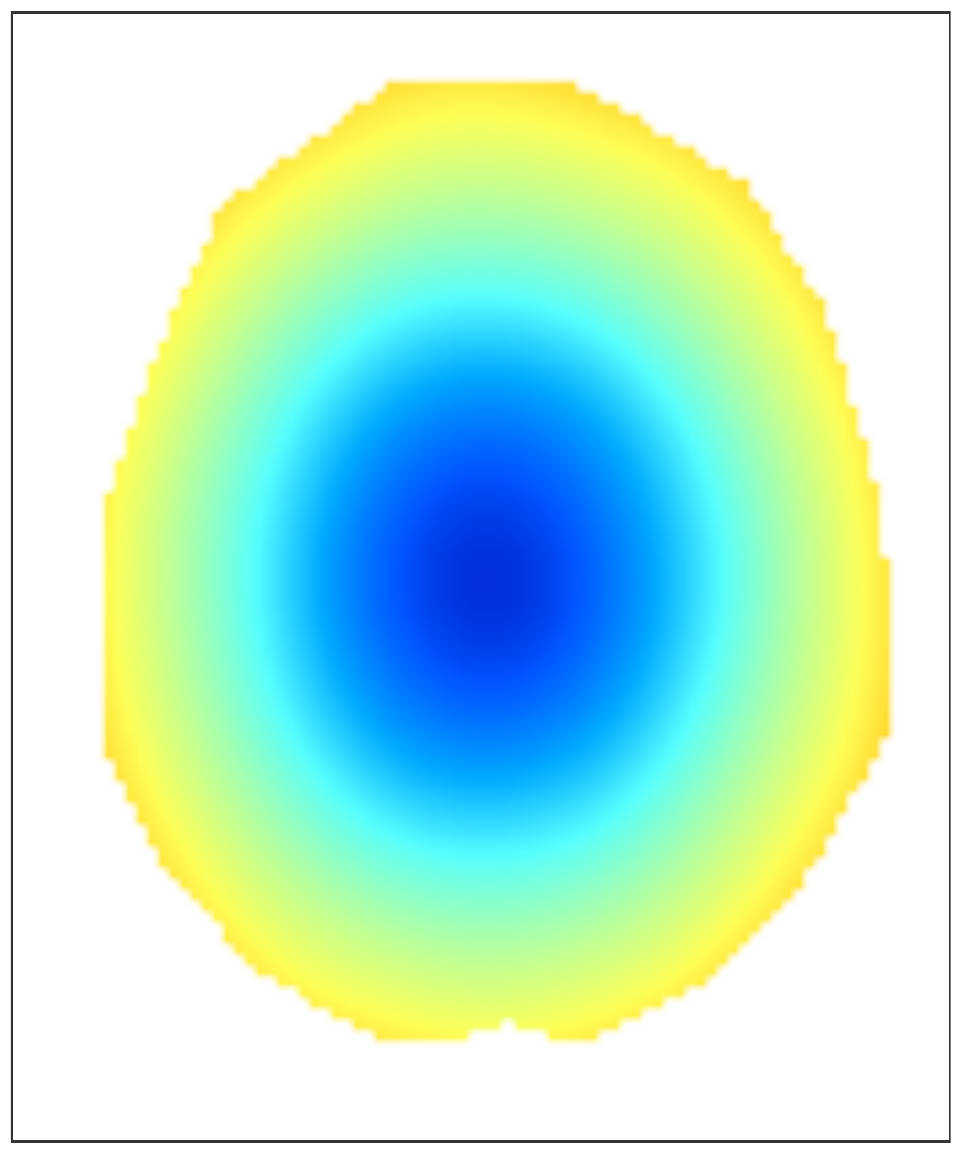} \!\!\!&\!\!\!
			\includegraphics[width=1.8cm,height=2cm]{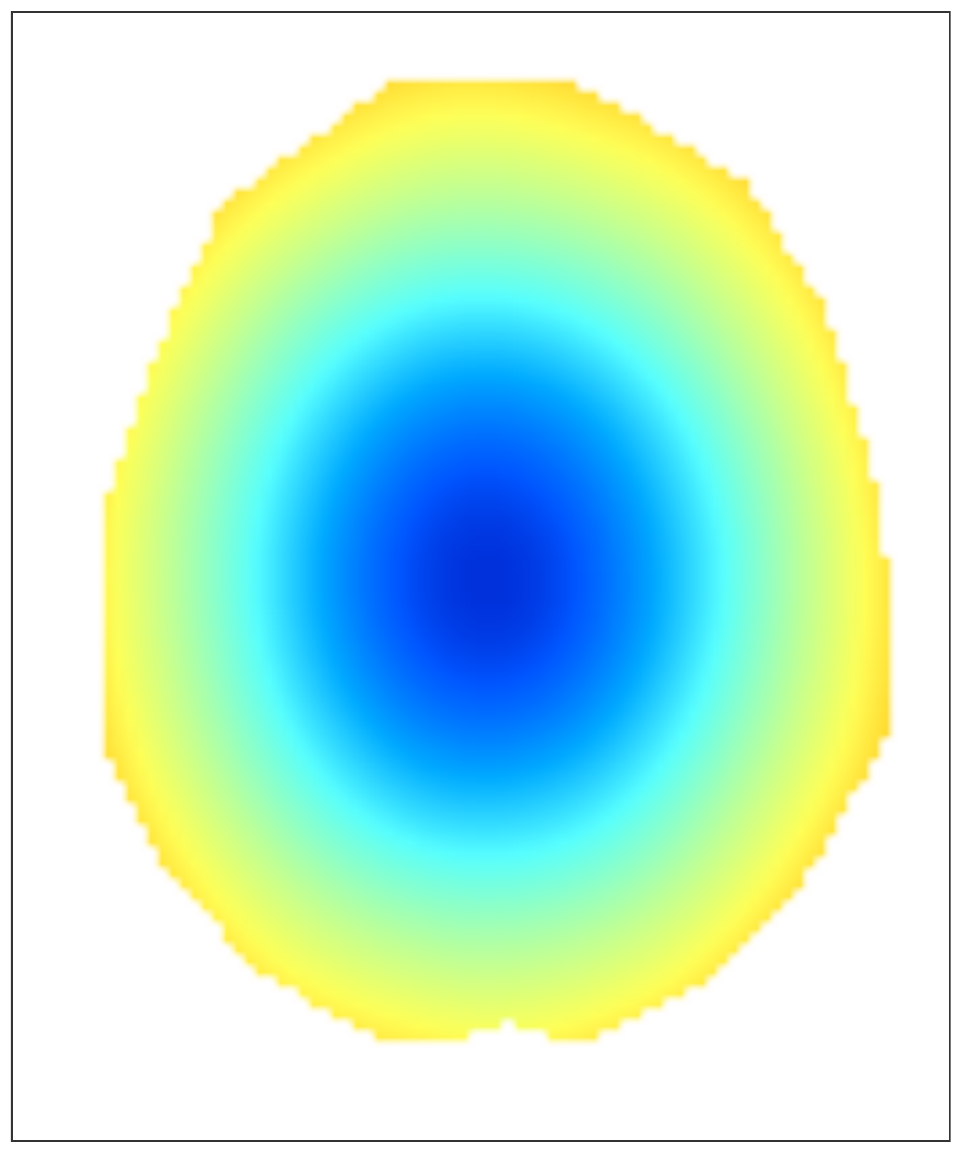} \!\!\!&\!\!\!
			\includegraphics[width=1.8cm,height=2cm]{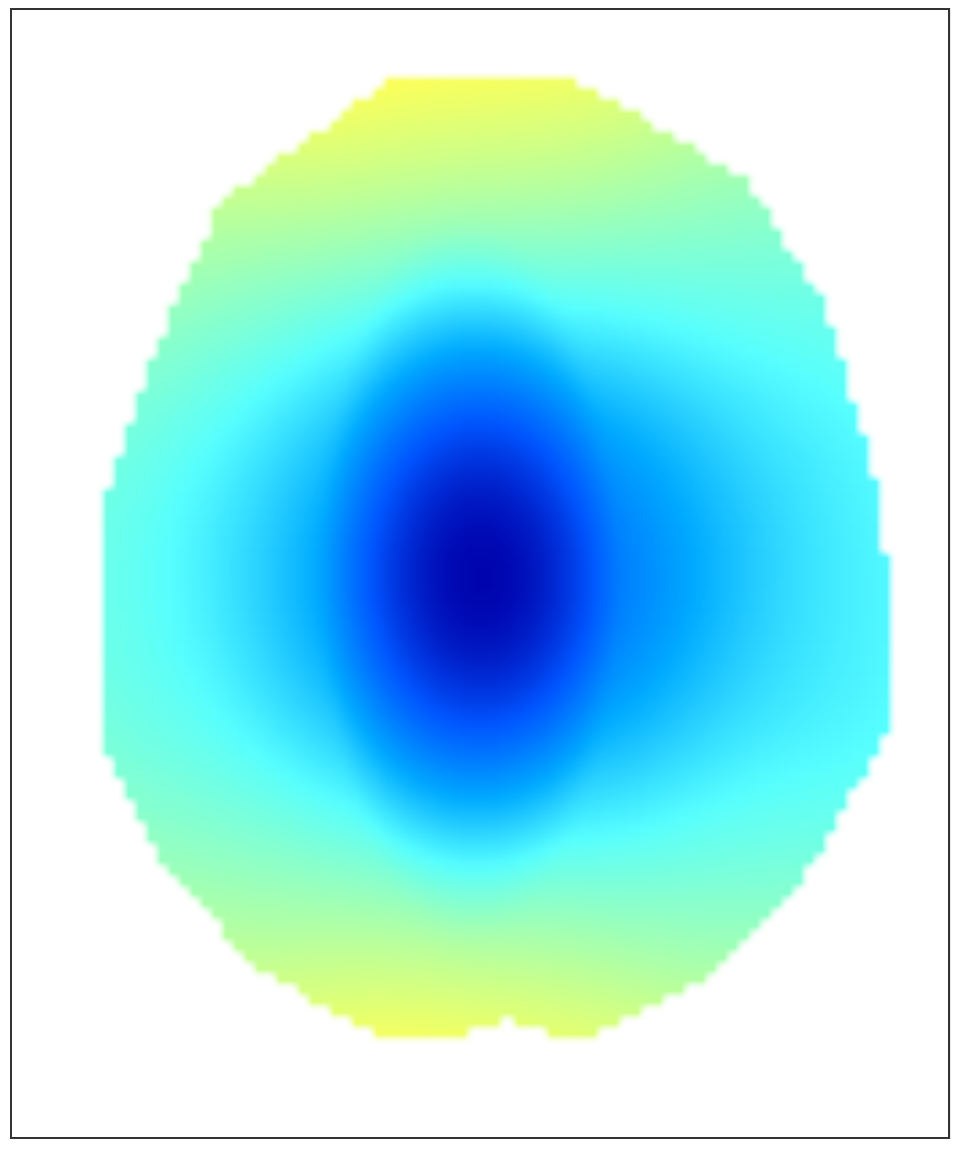} \!\!\!&\!\!\!
			\includegraphics[width=1.8cm,height=2cm]{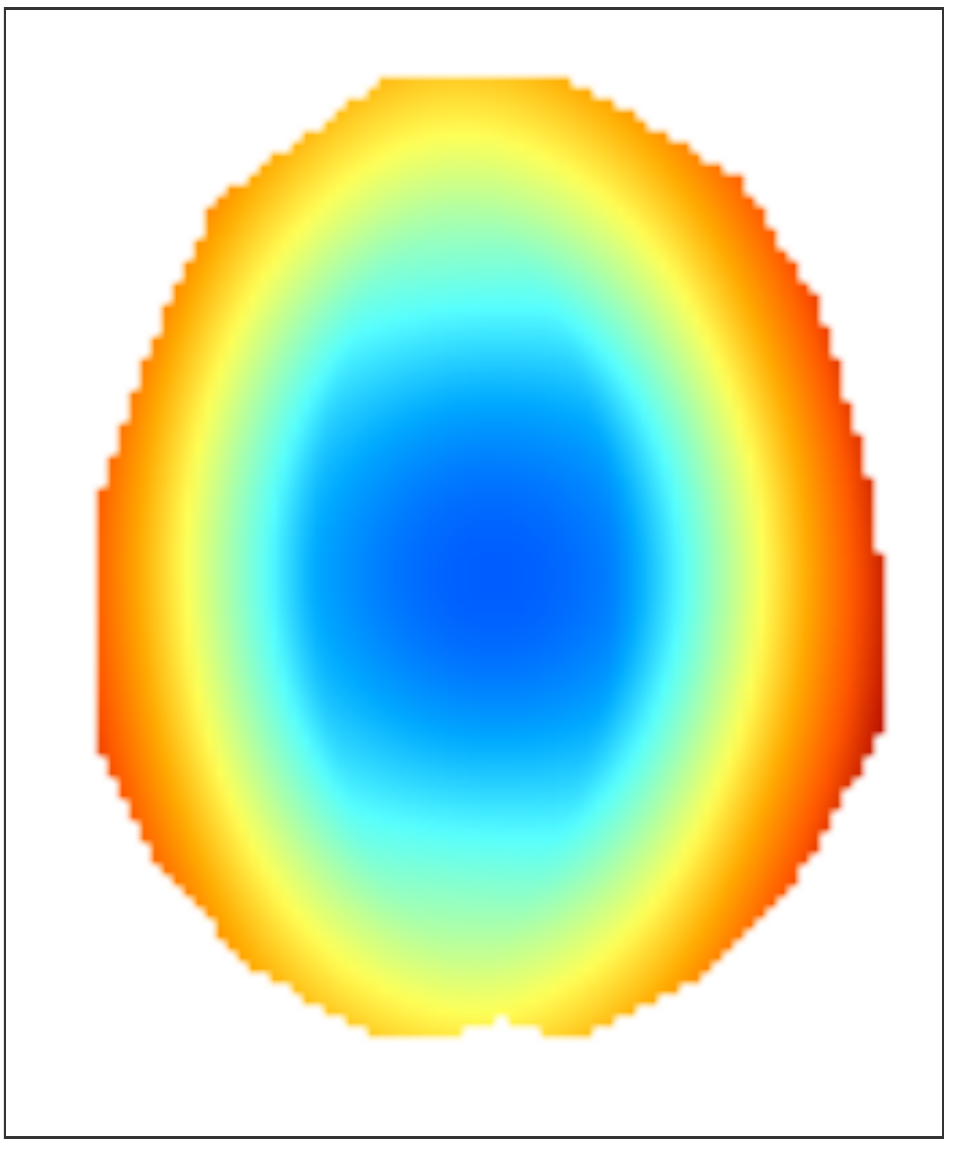} \!\!\!&\!\!\!
			\includegraphics[width=0.5cm,height=2cm]{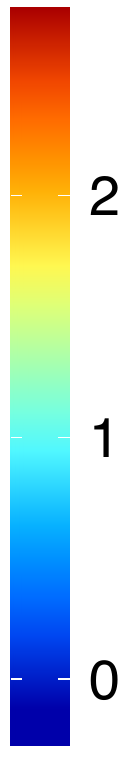}\\
			\multicolumn{7}{c}{$\beta_2$}\\[-5pt]
		\end{tabular}
		\caption{True coefficient functions and their estimators and 95\% SCCs based on the 35th slice.}
		\label{FIG:EST_SCC_s35}
	\end{center}
\end{figure}

%%%%%%%%%%%%%%%%%%%%%%%%%%%%%%%%%%%%%%%%%%%%%%%%%%%%%%%%%%%%%
%%%%%%%%%%%%%%%%%%%%%%%%%%%%%%%%%%%%%%%%%%%%%%%%%%%%%%%%%%%%%
\vskip .10in \noindent \textbf{B.2. Additional ADNI data analysis results} \vskip .10in
% \subsection{Additional ADNI data analysis results} 
% \label{SUB:ADNI}

For the ADNI data described in Section 6 in the main paper, Table \ref{TAB:diagnosis} below summarizes the distribution of patients by diagnosis status and sex. Next, Figure \ref{FIG:APP-TRI} displays the triangulations of slices used for the BPST method in the model fitting and constructing the SCCs. Finally, Figures \ref{FIG:APP-EST2} and \ref{FIG:APP-EST3} provide the image maps of the estimated coefficient functions for eighth, 15th, 35th, 55th, 62nd, and 65th slices, and Figures \ref{FIG:APP-SCC2} and \ref{FIG:APP-SCC3} show the corresponding significance maps. The significance maps in the eighth and 15th slice show that the increase of age increases the brain activities in the cerebellum and temporal lobe, and people with the Alzheimer's disease are more active in the cerebellum, while less active in  the temporal lobe. The significance maps of the 35th slide display that the age has a negative effect on the brain activities in the anterior cingulate gyrus, corpus callosum, and part of the cerebral white matter, while the female has a higher level of activities in these regions. These regions connect the left and right cerebral hemispheres and enabling communication between them. From the significance maps of the 55th, 62nd, and 65th slices, we could see an increase of brain activities in the frontal gyrus, precentral gyrus and postcentral gyrus for people with Alzheimer's disease. Our findings are consistent with the findings in the literature, see \cite{andersen2012stereological}, \cite{bernard2014moving}, and \cite{dubb2003characterization}.

%%%%%%%%%%%%%%%%%%%%%%%%%%%%%%%%%%%%%%%%%%%%%%%%%%%%%%%%%%%%%
\begin{table}
	\caption{Distribution of patients by diagnosis status and gender. \label{TAB:diagnosis}}
	\centering
	\fbox{
		\begin{tabular}{l|ccc|c}
			& CN & MCI & AD & All\\ \hline
			Male   & 70 & 136 & 72 & 278\\
			Female & 42 & 77  & 50 & 169\\ \hline
			All & 112 & 213 & 122 & 447\\
	\end{tabular}}
\end{table}

%%%%%%%%%%%%%%%%%%%%%%%%%%%%%%%%%%%%%%%%%%%%%%%%%%%%%%%%%%%%%
\begin{figure}[ht]
	\begin{center}
		\begin{tabular}{cccccccc}
			Slice 5 & Slice 8& Slice 15 & Slice 35  \\ 
			\includegraphics[scale=0.6]{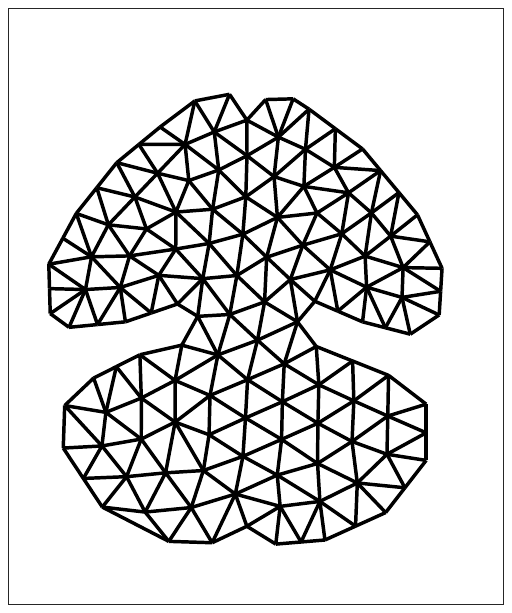} &  \includegraphics[scale=0.6]{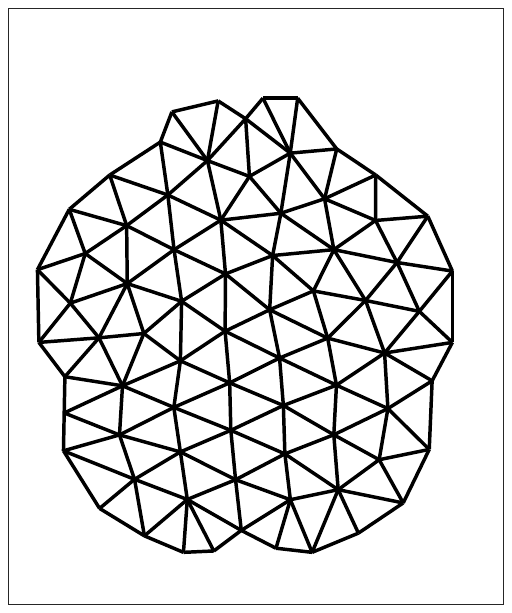} &
			\includegraphics[scale=0.6]{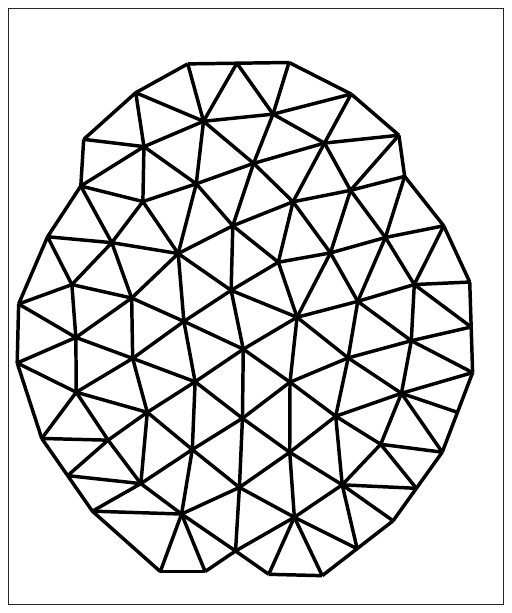} &  \includegraphics[scale=0.6]{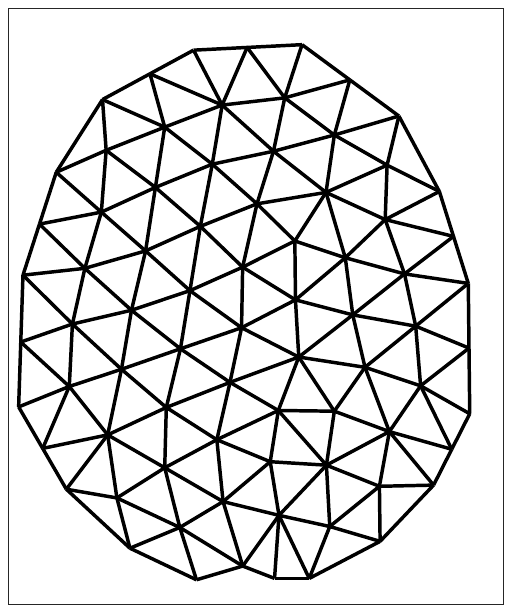} &\\
			Slice 55 & Slice 62 & Slice 65\\
			\includegraphics[scale=0.6]{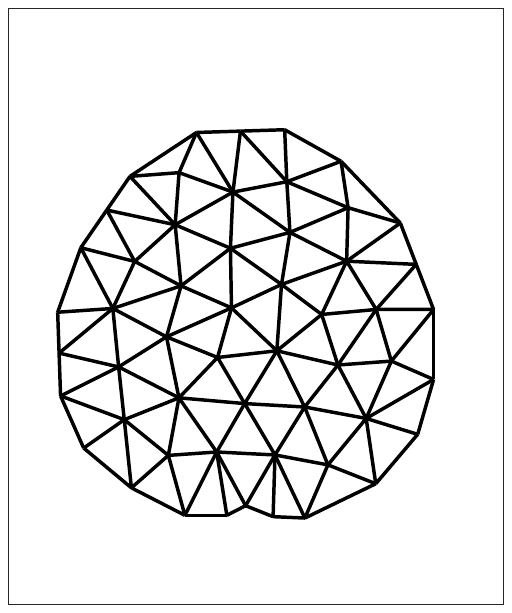}&
			\includegraphics[scale=0.6]{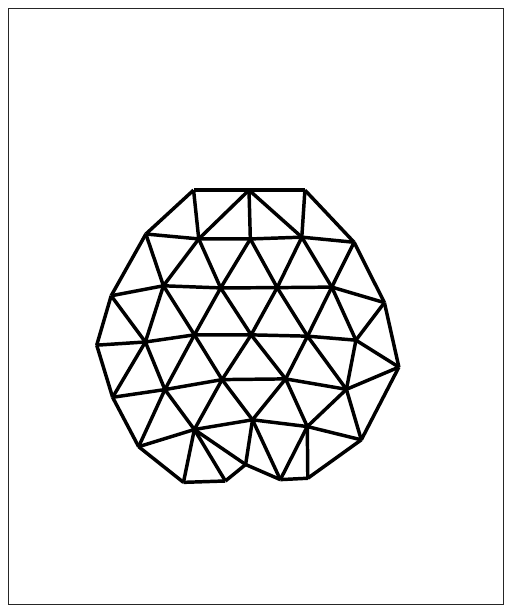} & \includegraphics[scale=0.6]{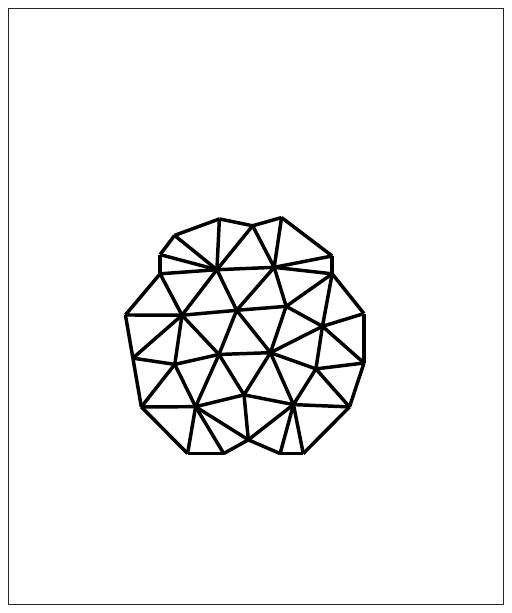} &
		\end{tabular}
		\caption{Triangulation sets used in the ADNI data analysis.}
		\label{FIG:APP-TRI}
	\end{center}
\end{figure}

%%%%%%%%%%%%%%%%%%%%%%%%%%%%%%%%%%%%%%%%%%%%%%%%%%%%%%%%%%%%%
\begin{sidewaysfigure}[htbp]
	\begin{center}
		\begin{tabular}{cccccccc} 
			Intercept & MA & AD&Age&Sex&$\textrm{APOE}_1$&$\textrm{APOE}_2$ \\ 
			\includegraphics[scale=0.33]{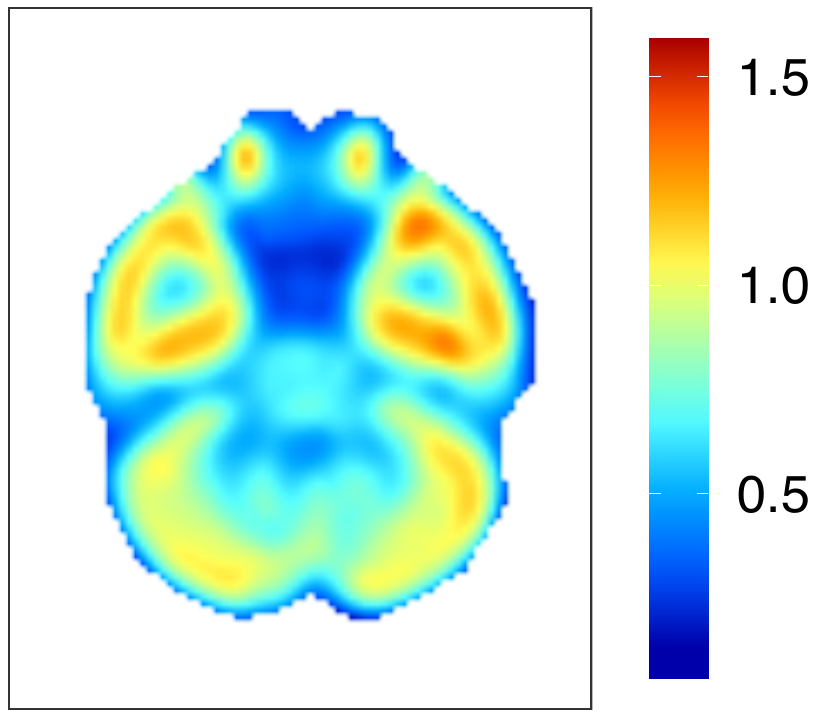} \!\!\!\!\! & \!\!\!\!
			\includegraphics[scale=0.33]{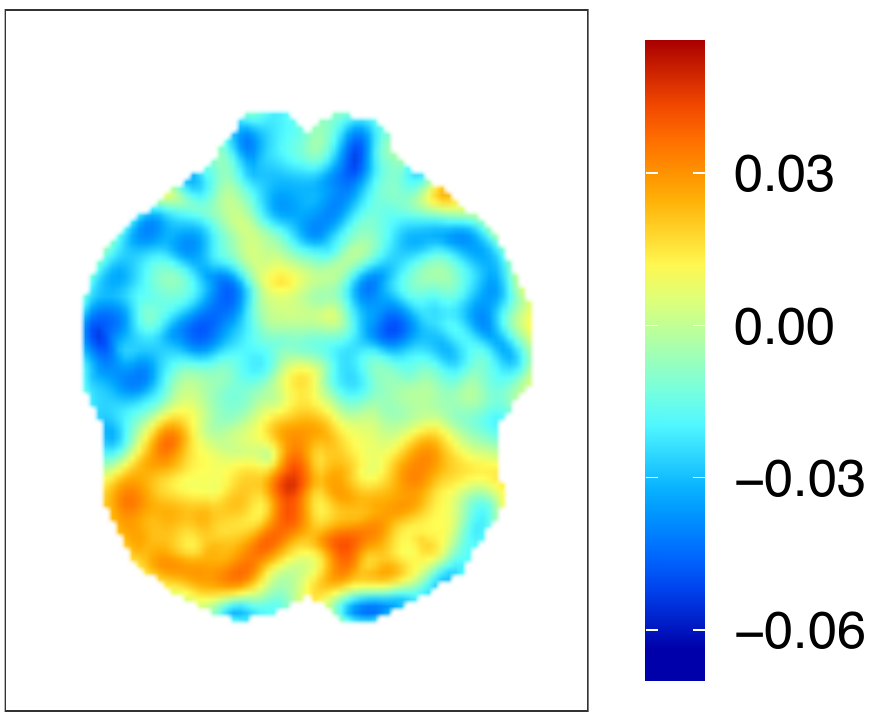} \!\!\!\!\! & \!\!\!\!
			\includegraphics[scale=0.33]{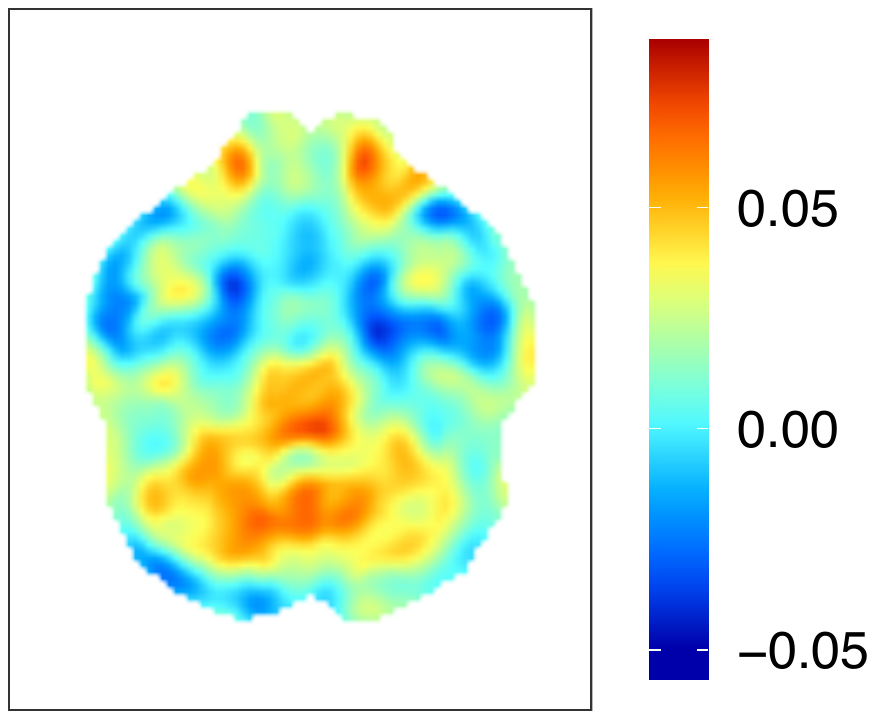} \!\!\!\!\! & \!\!\!\!
			\includegraphics[scale=0.33]{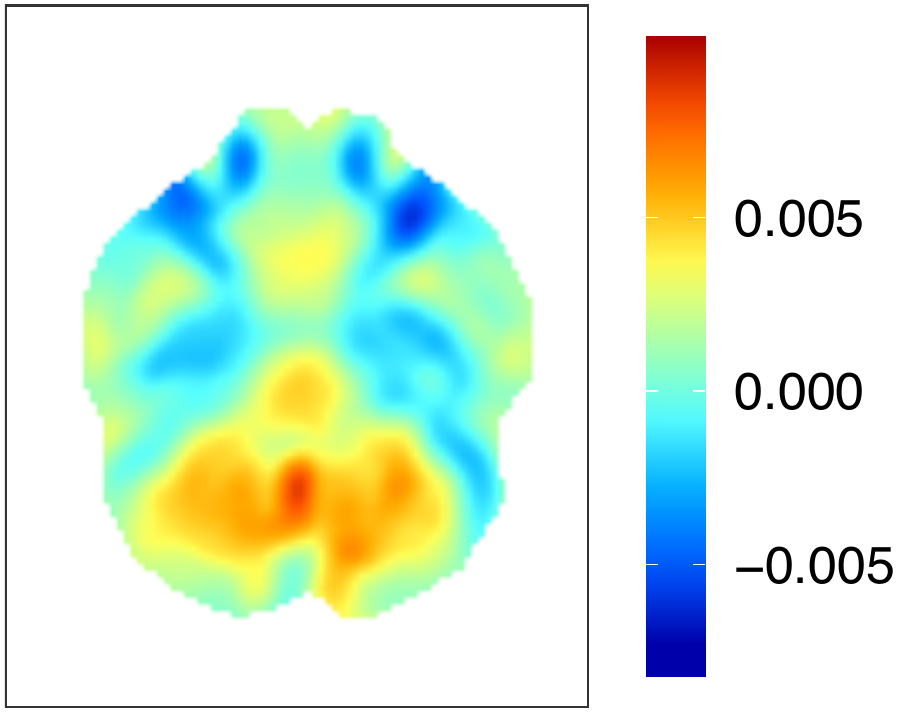}\!\!\!\!\! & \!\!\!\!
			\includegraphics[scale=0.33]{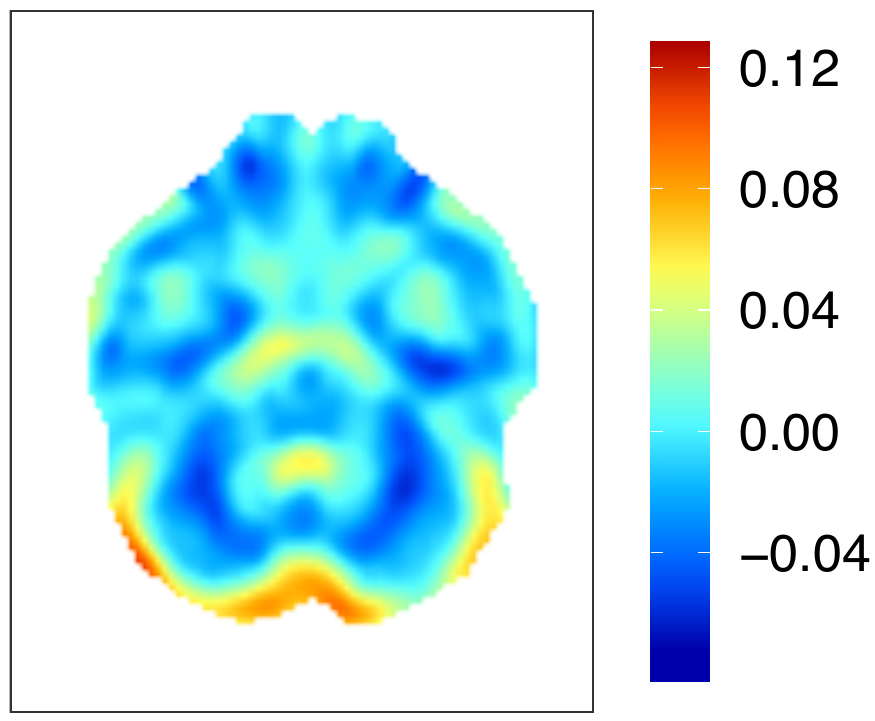} \!\!\!\!\! & \!\!\!\!
			\includegraphics[scale=0.33]{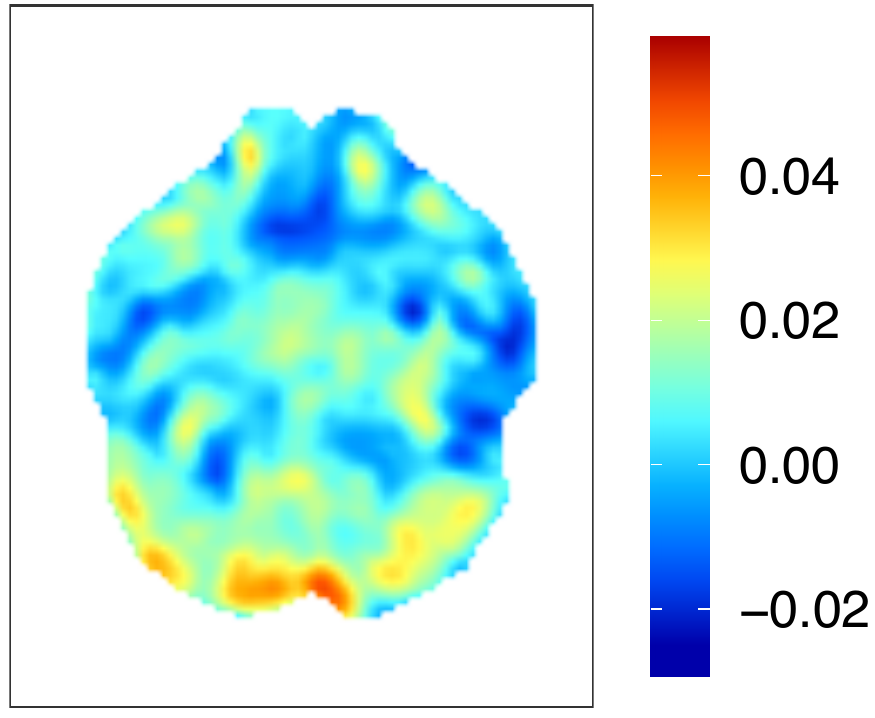} \!\!\!\!\! & \!\!\!\!
			\includegraphics[scale=0.33]{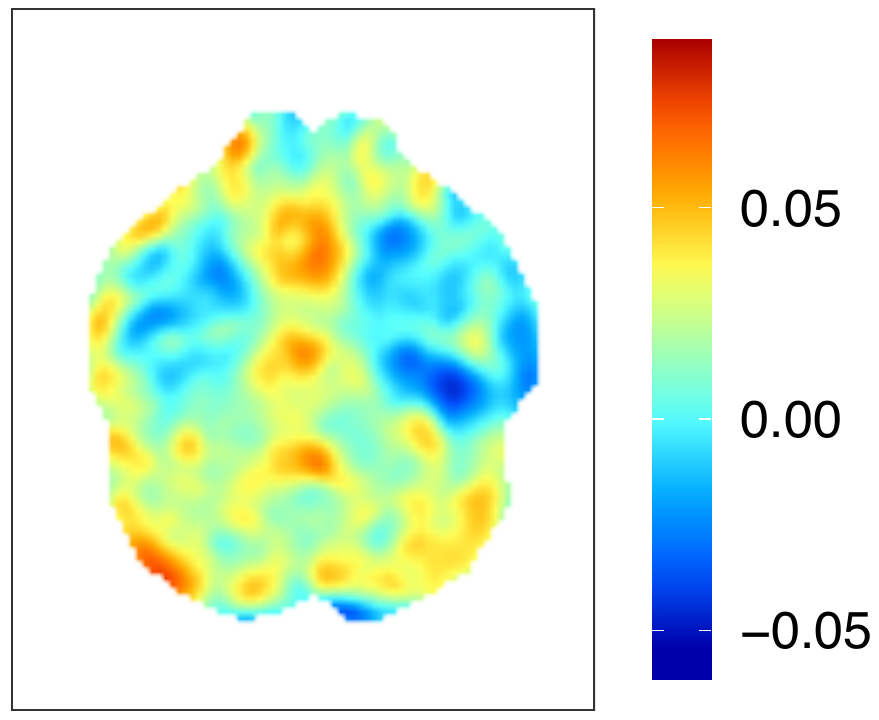} \!\!\!\!\! & \!\!\!\!\\[-5pt]
			\multicolumn{7}{c}{Slice 8}\\[5pt]
			\includegraphics[scale=0.33]{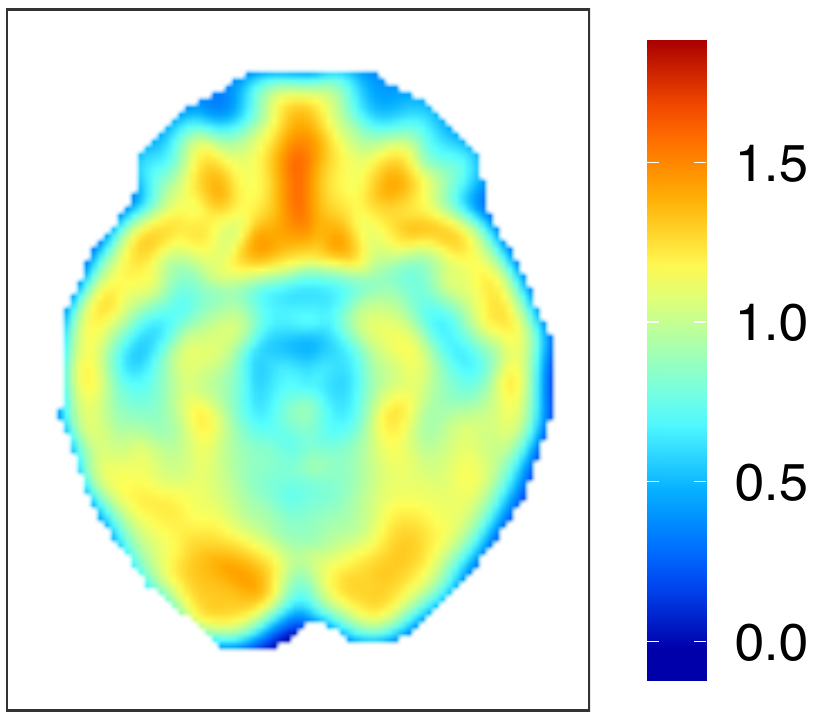} \!\!\!\!\! & \!\!\!\!
			\includegraphics[scale=0.33]{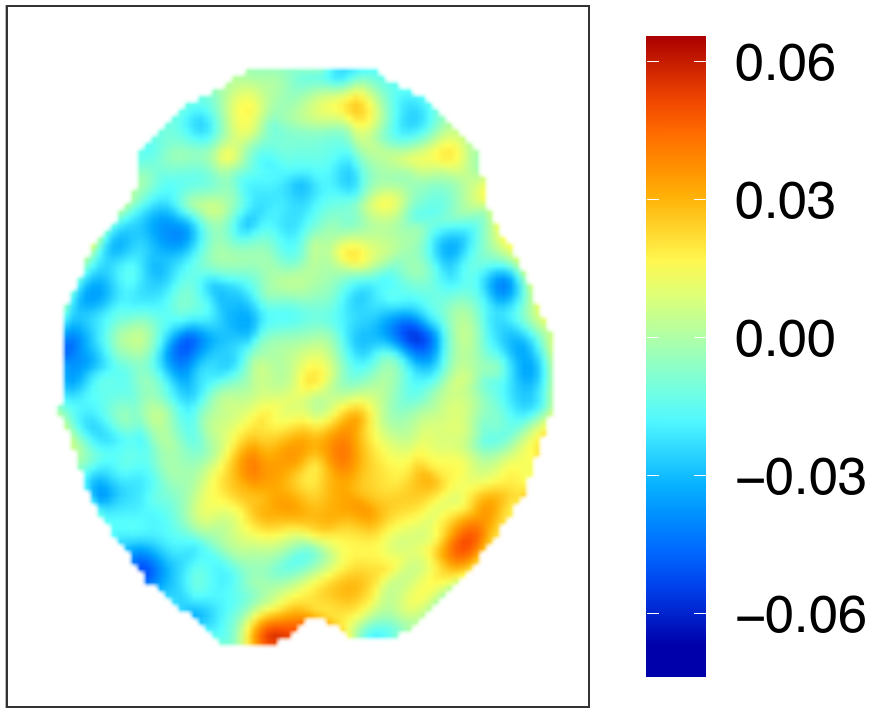} \!\!\!\!\! & \!\!\!\!
			\includegraphics[scale=0.33]{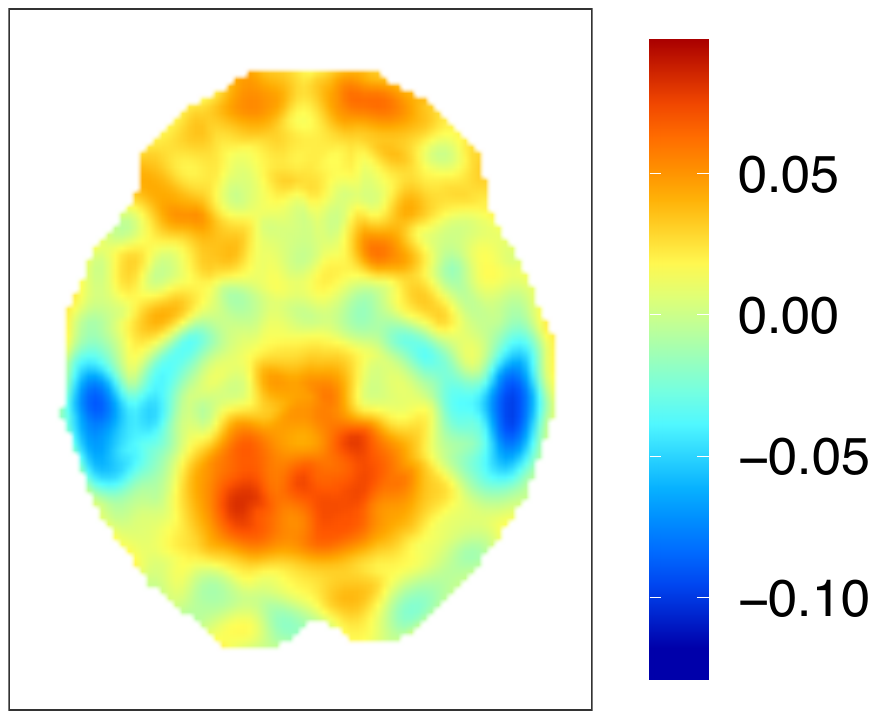} \!\!\!\!\! & \!\!\!\!
			\includegraphics[scale=0.33]{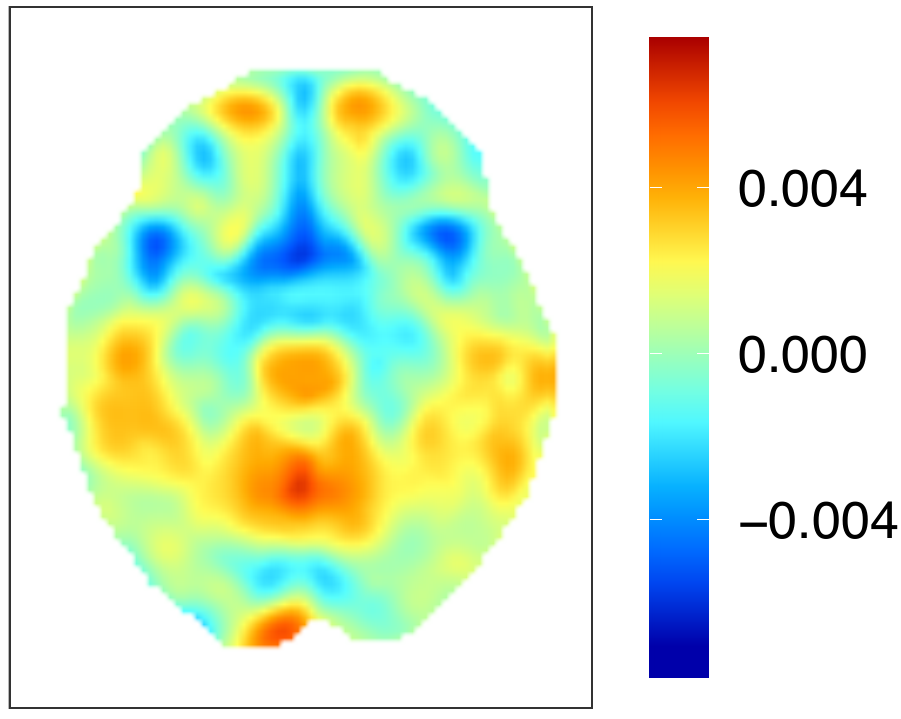}\!\!\!\!\! & \!\!\!\!
			\includegraphics[scale=0.33]{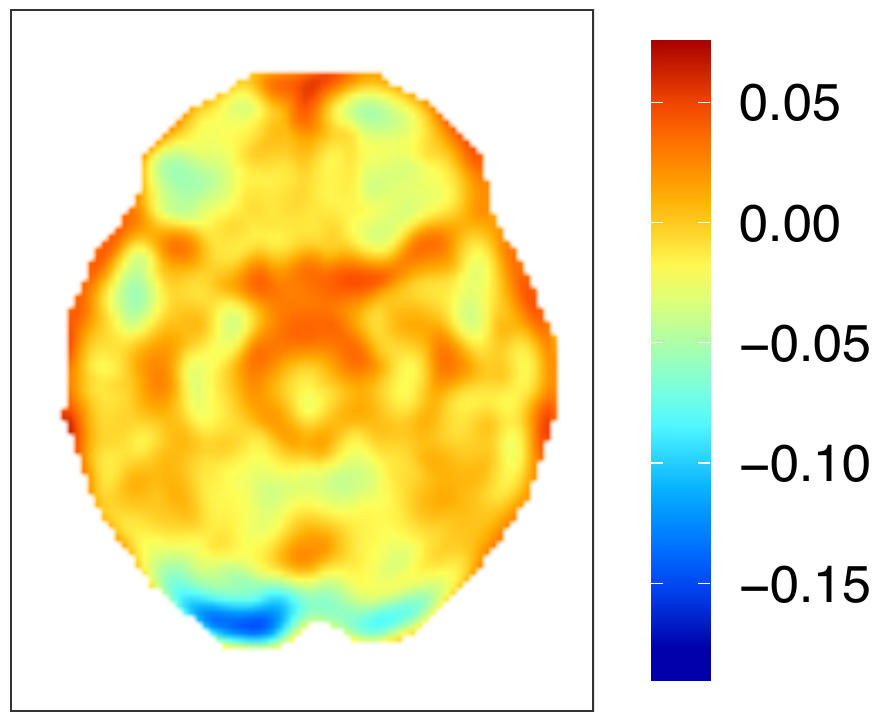} \!\!\!\!\! & \!\!\!\!
			\includegraphics[scale=0.33]{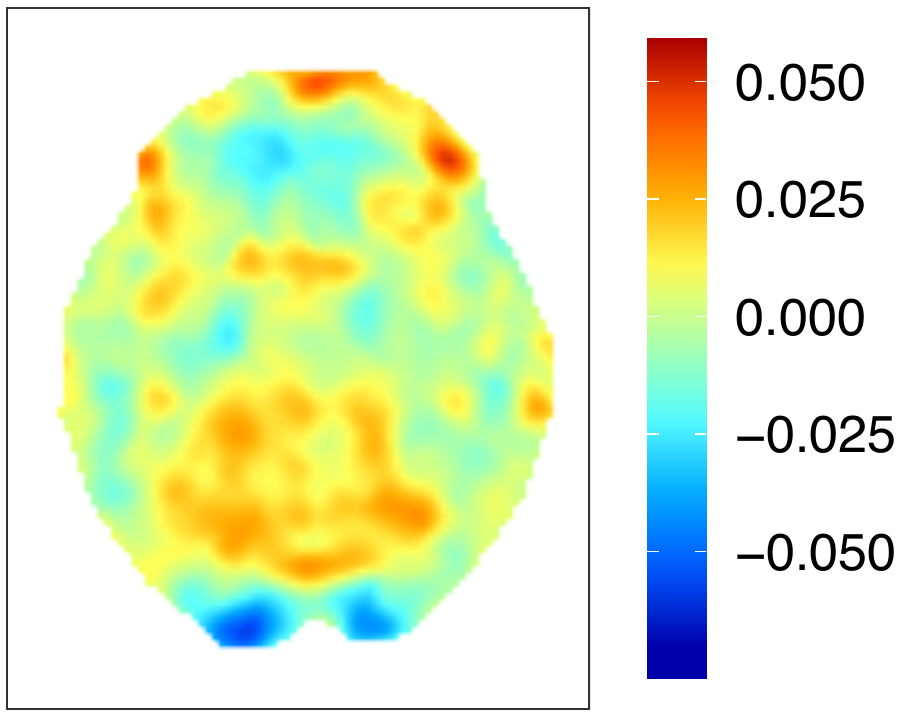} \!\!\!\!\! & \!\!\!\!
			\includegraphics[scale=0.33]{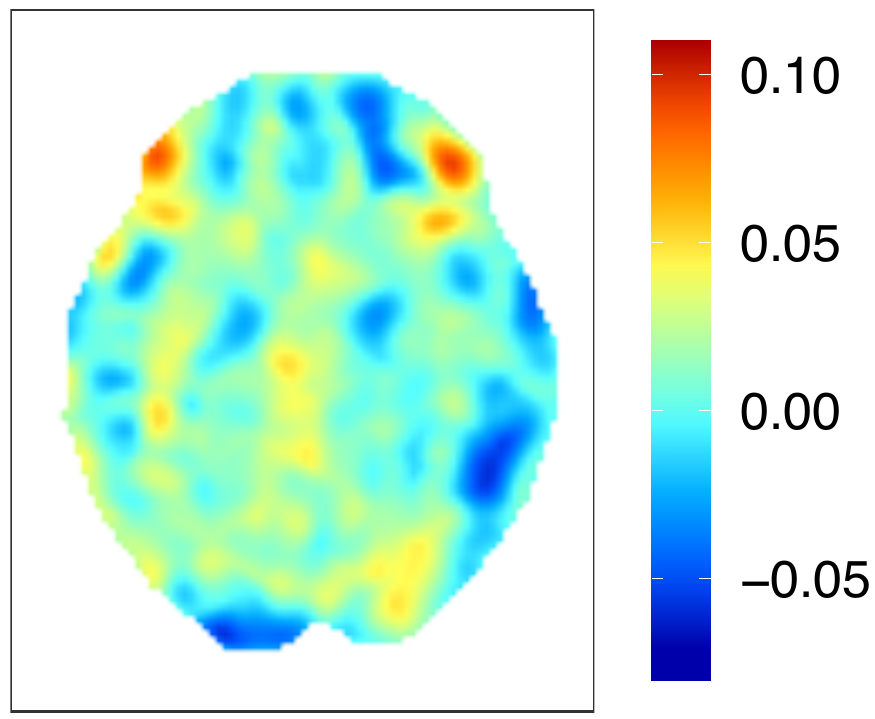} \!\!\!\!\! & \!\!\!\!\\[-5pt]
			\multicolumn{7}{c}{Slice 15}\\[5pt]
			\includegraphics[scale=0.33]{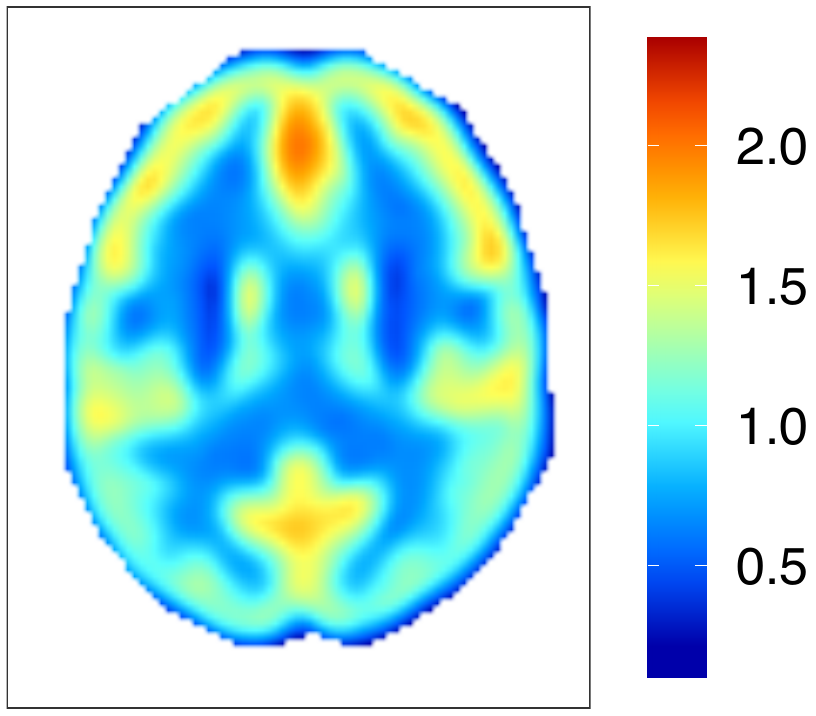} \!\!\!\!\! & \!\!\!\!
			\includegraphics[scale=0.33]{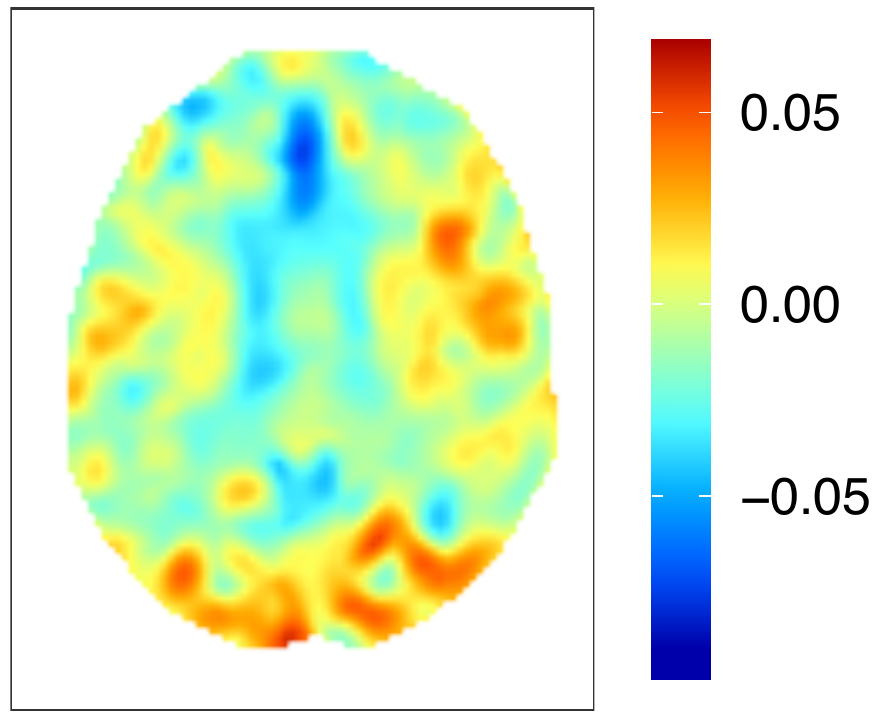} \!\!\!\!\! & \!\!\!\!
			\includegraphics[scale=0.33]{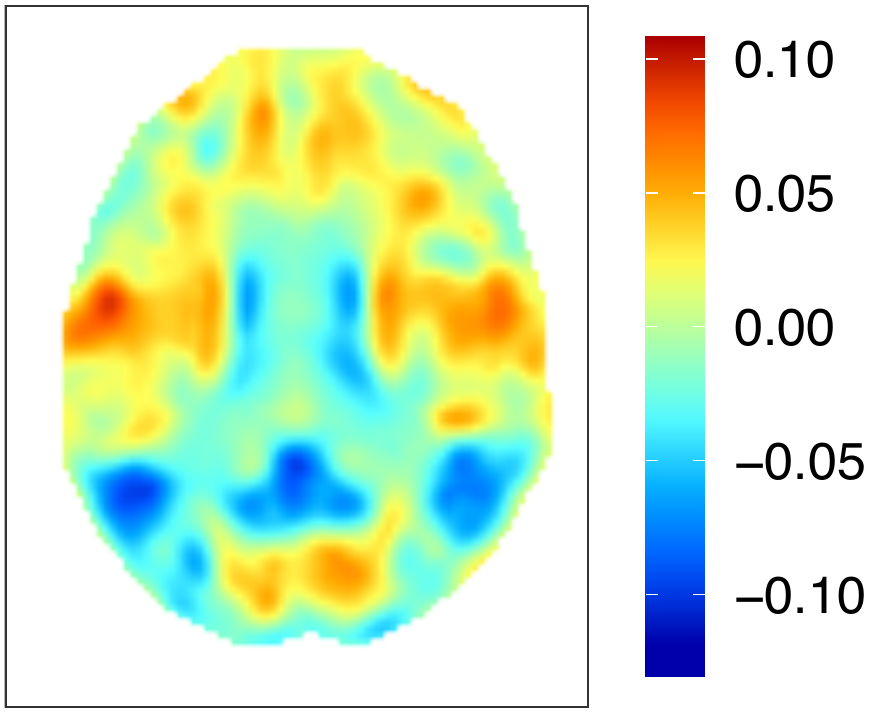} \!\!\!\!\! & \!\!\!\!
			\includegraphics[scale=0.33]{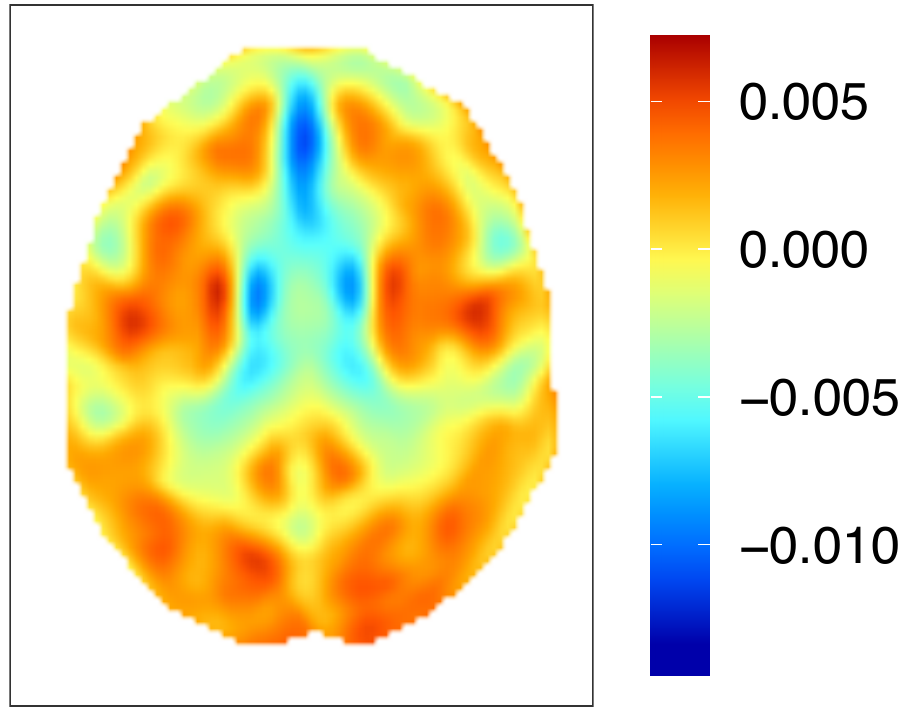}\!\!\!\!\! & \!\!\!\!
			\includegraphics[scale=0.33]{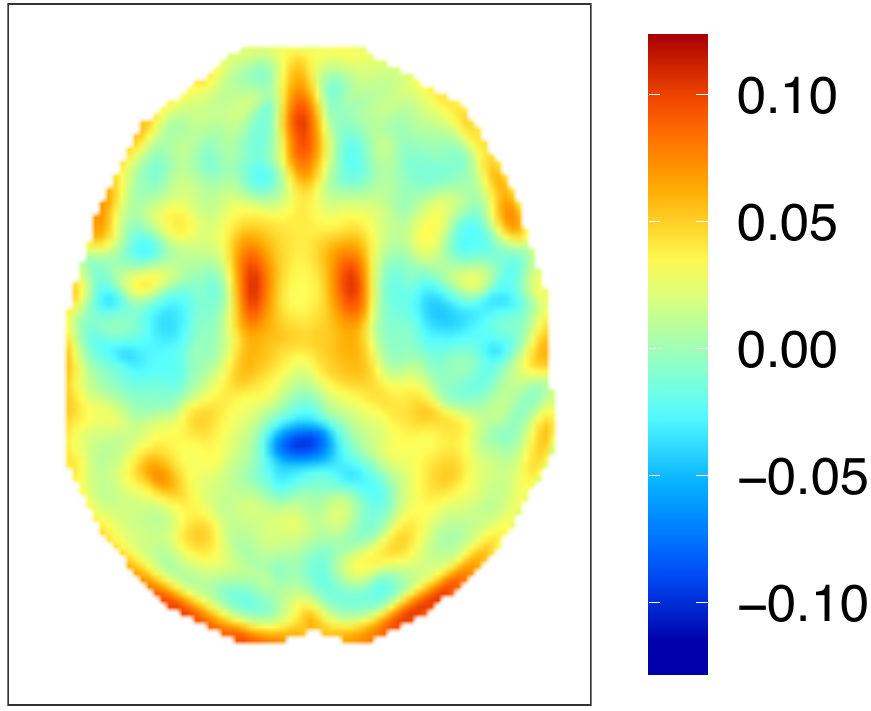} \!\!\!\!\! & \!\!\!\!
			\includegraphics[scale=0.33]{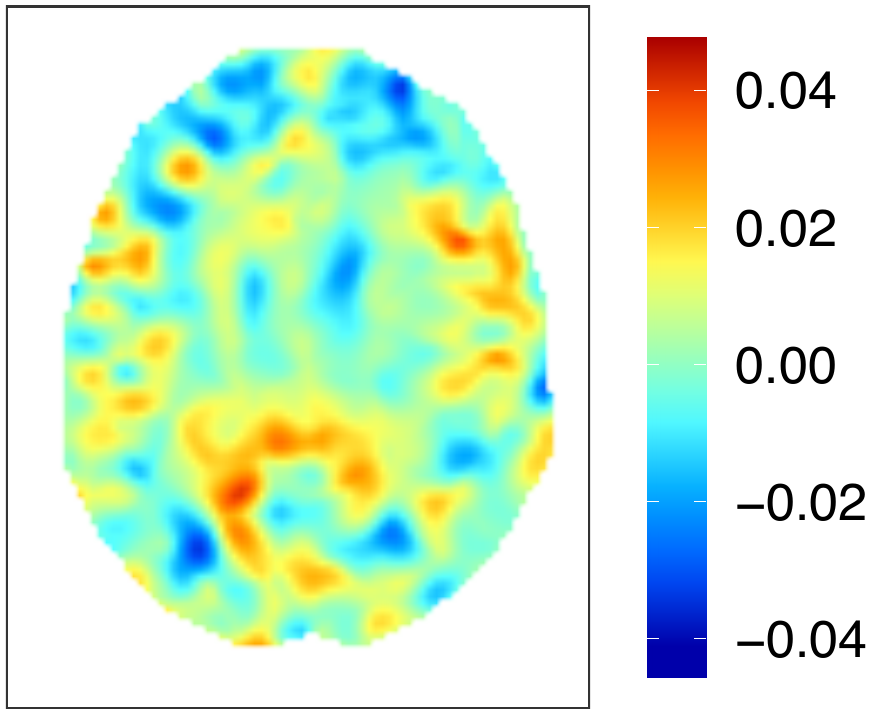} \!\!\!\!\! & \!\!\!\!
			\includegraphics[scale=0.33]{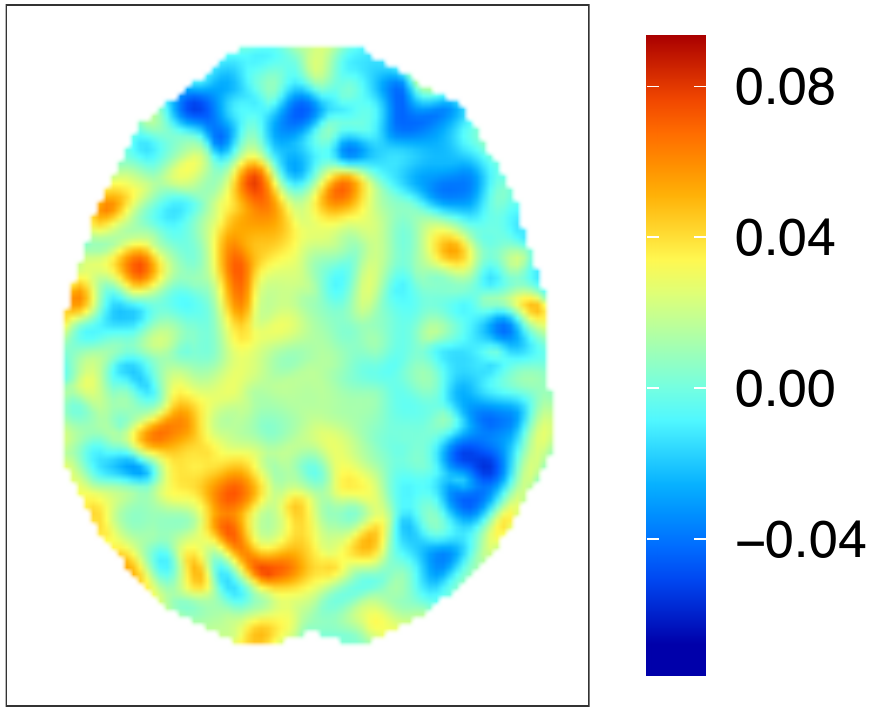} \!\!\!\!\! & \!\!\!\!\\[-5pt]
			\multicolumn{7}{c}{Slice 35}\\[5pt]
		\end{tabular}	
	\end{center}
	\caption{The BPST estimates of the coefficient functions for the ADNI data based on the eighth, 15th and 35th slices, respectively.}
	\label{FIG:APP-EST2}
\end{sidewaysfigure}

%%%%%%%%%%%%%%%%%%%%%%%%%%%%%%%%%%%%%%%%%%%%%%%%%%%%%%%%%%%%%%%%%
\begin{sidewaysfigure}[htbp]
	\begin{center}
		\begin{tabular}{cccccccc} 
			Intercept & MA & AD&Age&Sex&$\textrm{APOE}_1$&$\textrm{APOE}_2$ \\ 
			\includegraphics[scale=0.33]{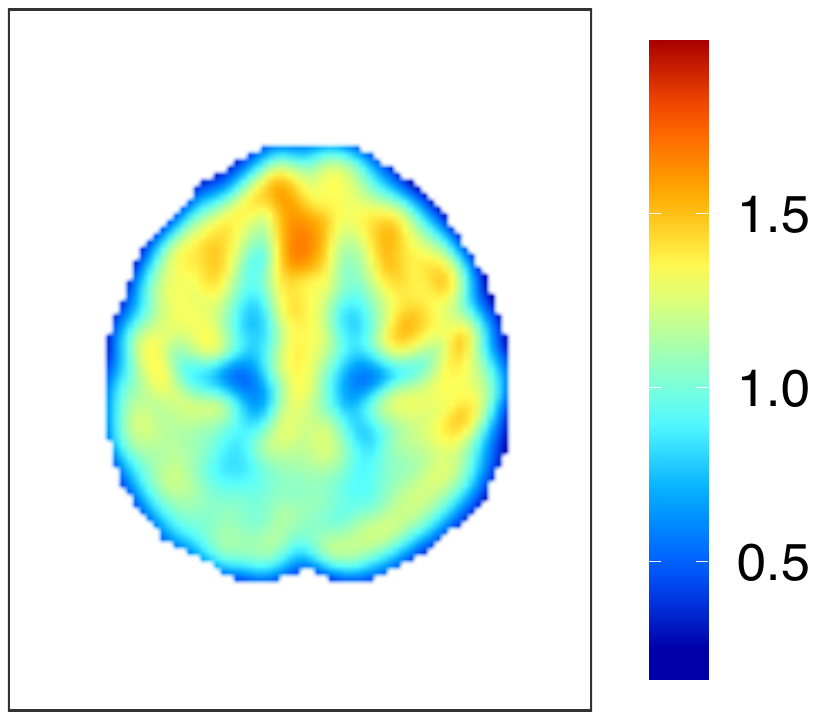} \!\!\!\!\! & \!\!\!\!
			\includegraphics[scale=0.33]{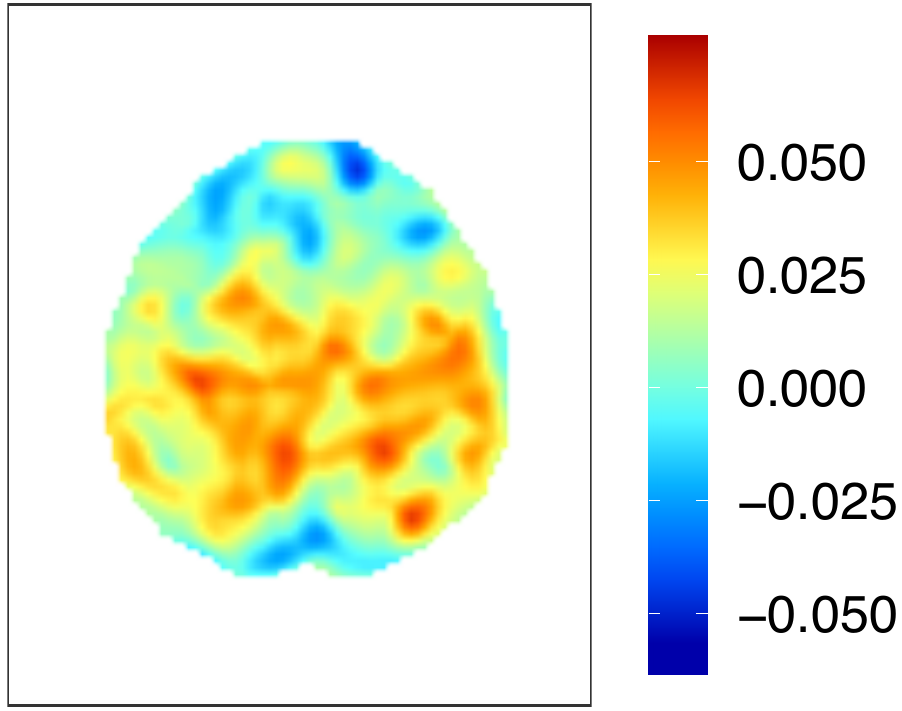} \!\!\!\!\! & \!\!\!\!
			\includegraphics[scale=0.33]{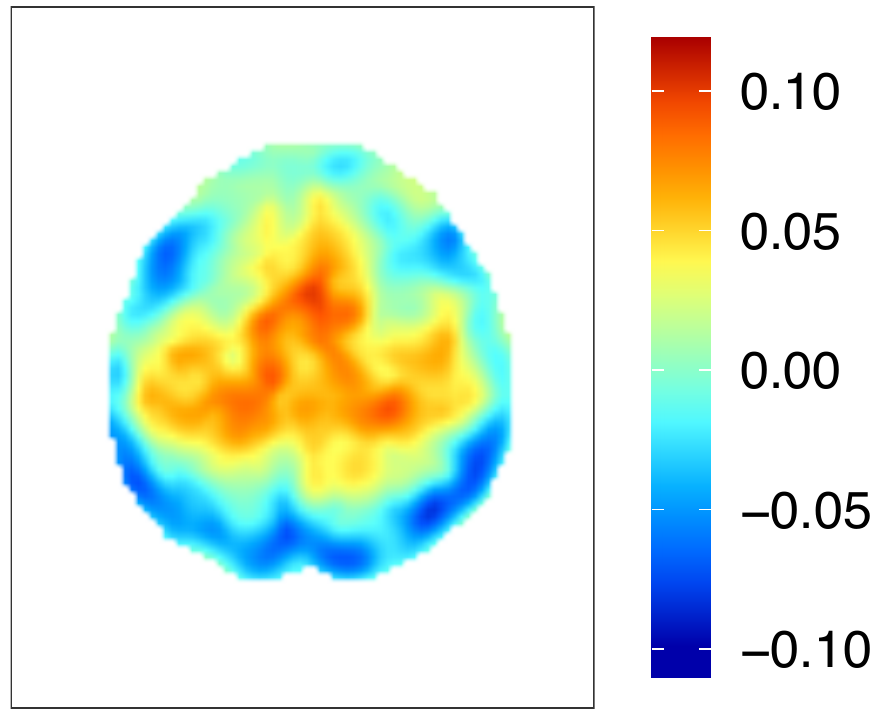} \!\!\!\!\! & \!\!\!\!
			\includegraphics[scale=0.33]{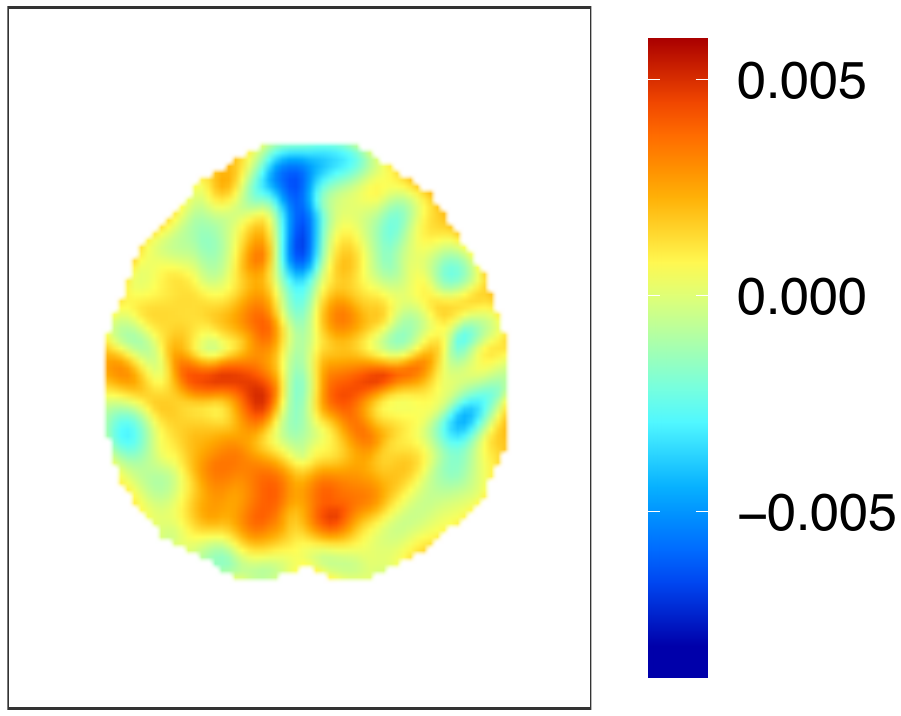}\!\!\!\!\! & \!\!\!\!
			\includegraphics[scale=0.33]{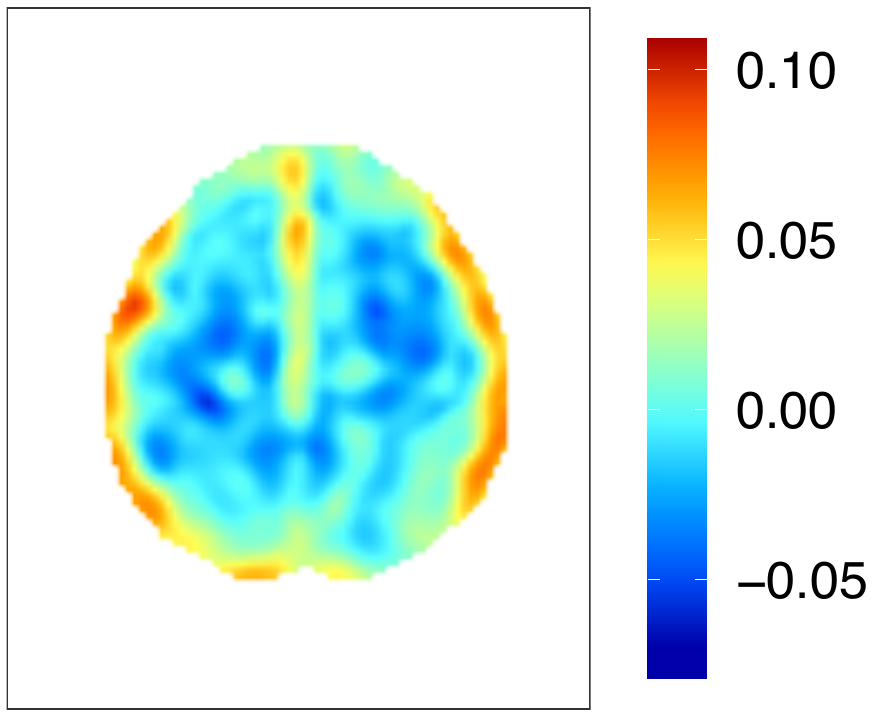} \!\!\!\!\! & \!\!\!\!
			\includegraphics[scale=0.33]{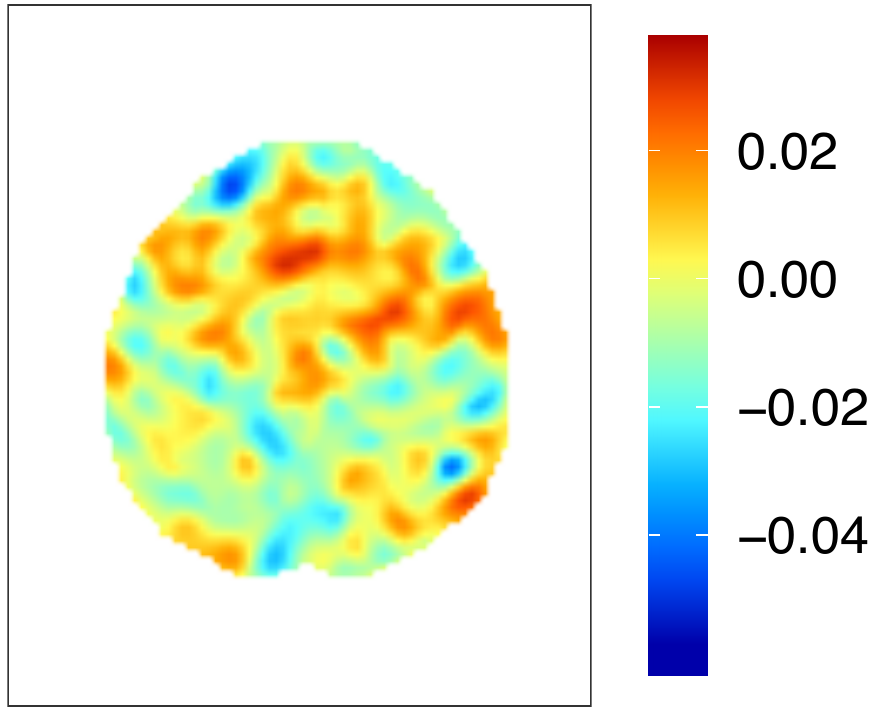} \!\!\!\!\! & \!\!\!\!
			\includegraphics[scale=0.33]{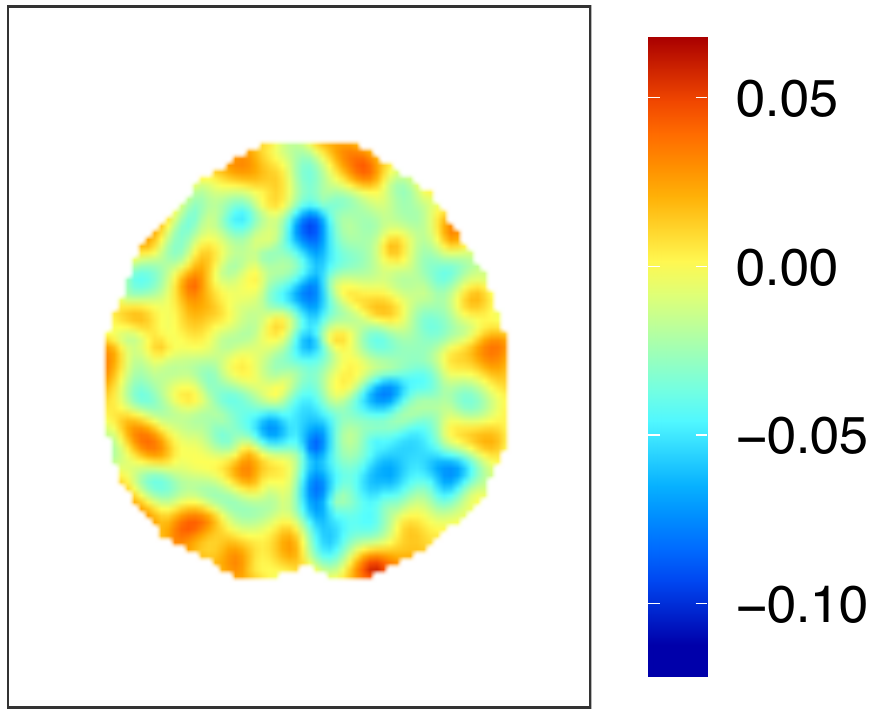} \!\!\!\!\! & \!\!\!\!\\[-5pt]
			\multicolumn{7}{c}{Slice 55}\\[5pt]
			\includegraphics[scale=0.33]{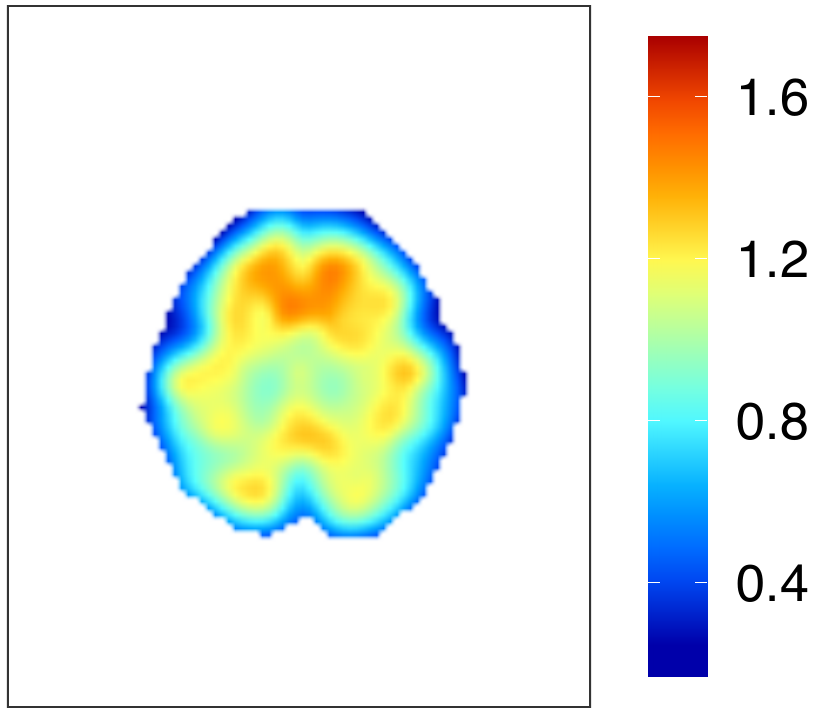} \!\!\!\!\! & \!\!\!\!
			\includegraphics[scale=0.33]{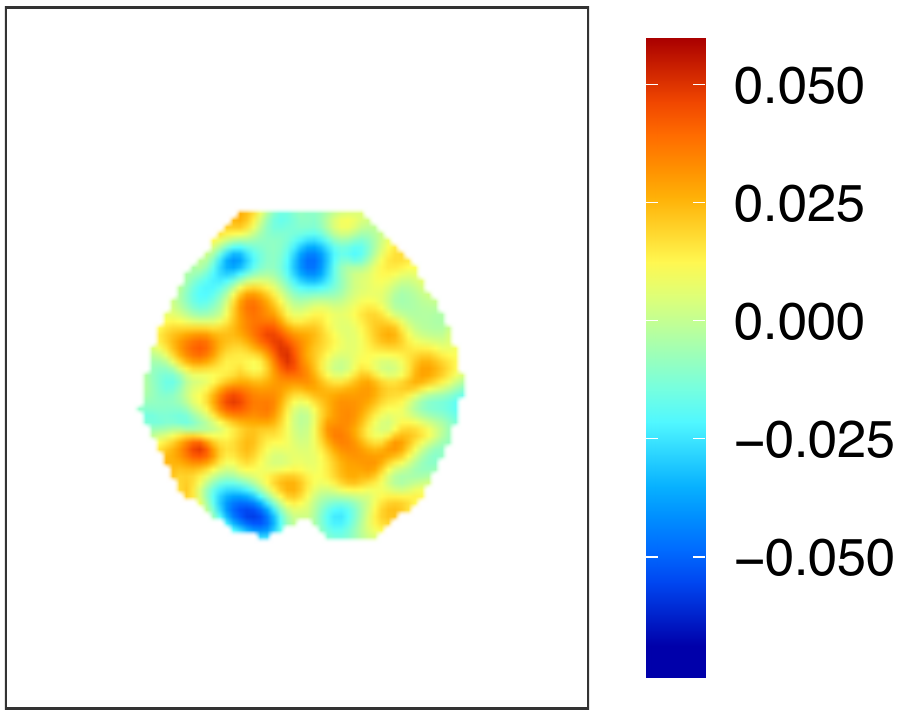} \!\!\!\!\! & \!\!\!\!
			\includegraphics[scale=0.33]{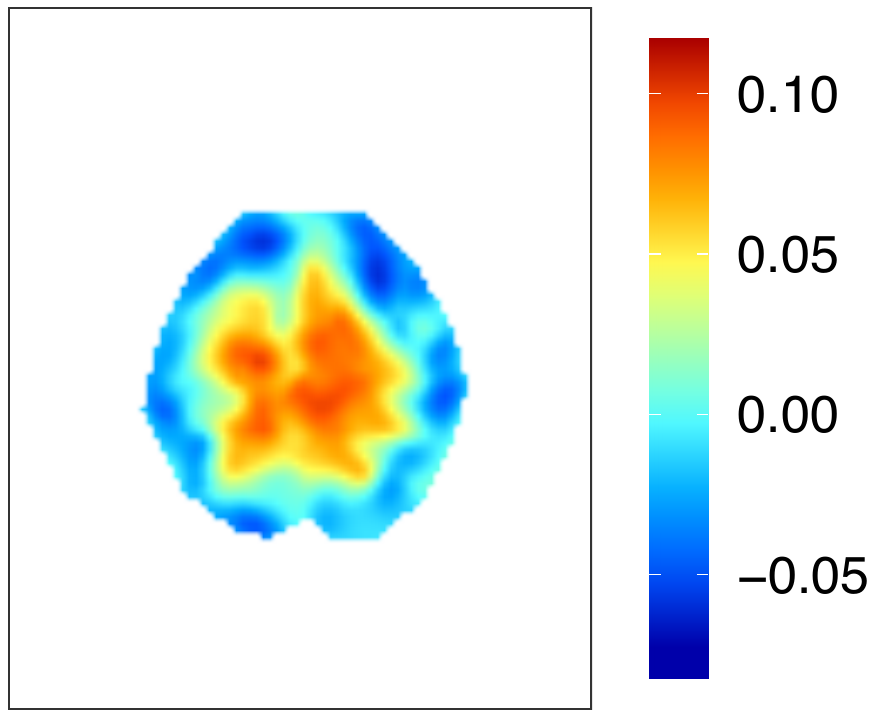} \!\!\!\!\! & \!\!\!\!
			\includegraphics[scale=0.33]{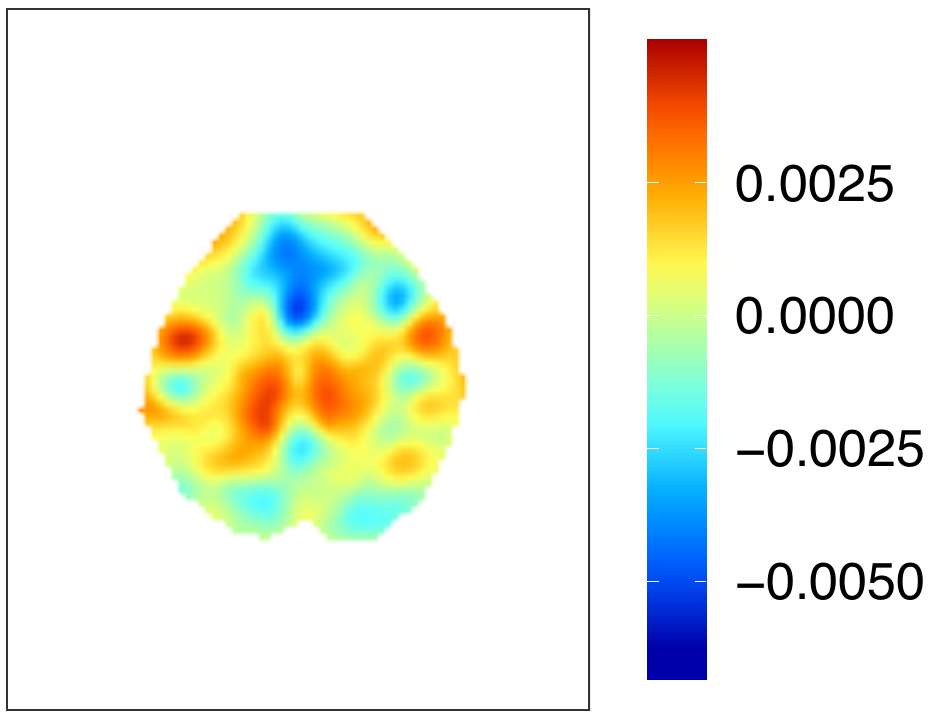}\!\!\!\!\! & \!\!\!\!
			\includegraphics[scale=0.33]{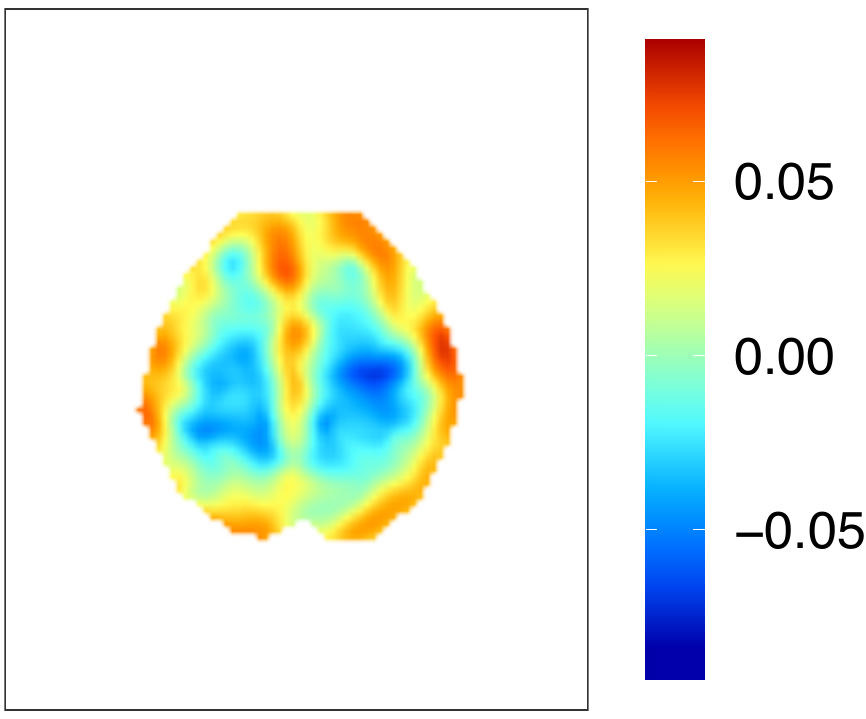} \!\!\!\!\! & \!\!\!\!
			\includegraphics[scale=0.33]{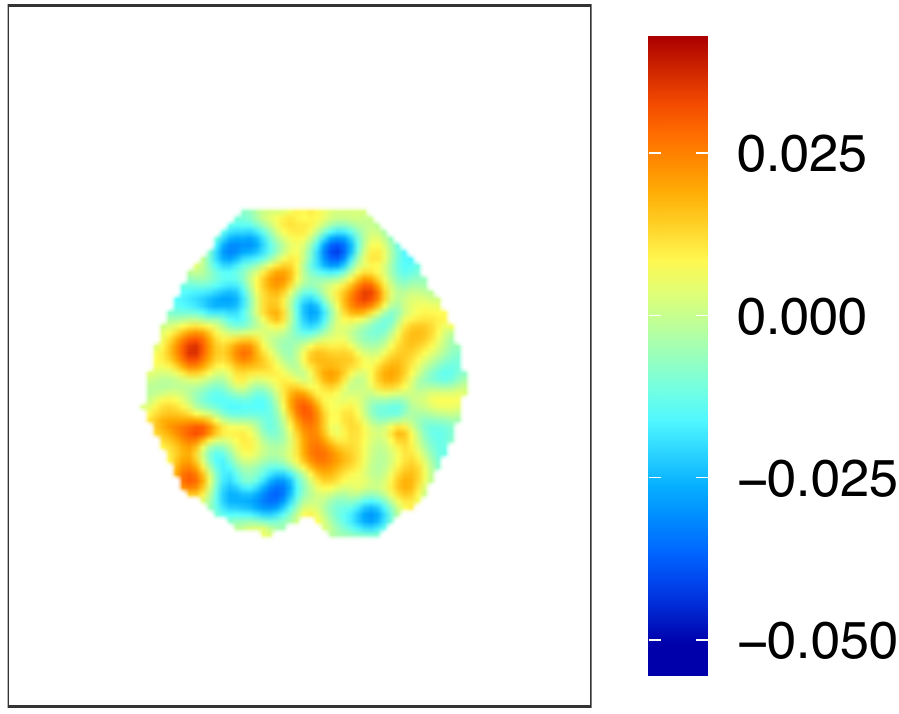} \!\!\!\!\! & \!\!\!\!
			\includegraphics[scale=0.33]{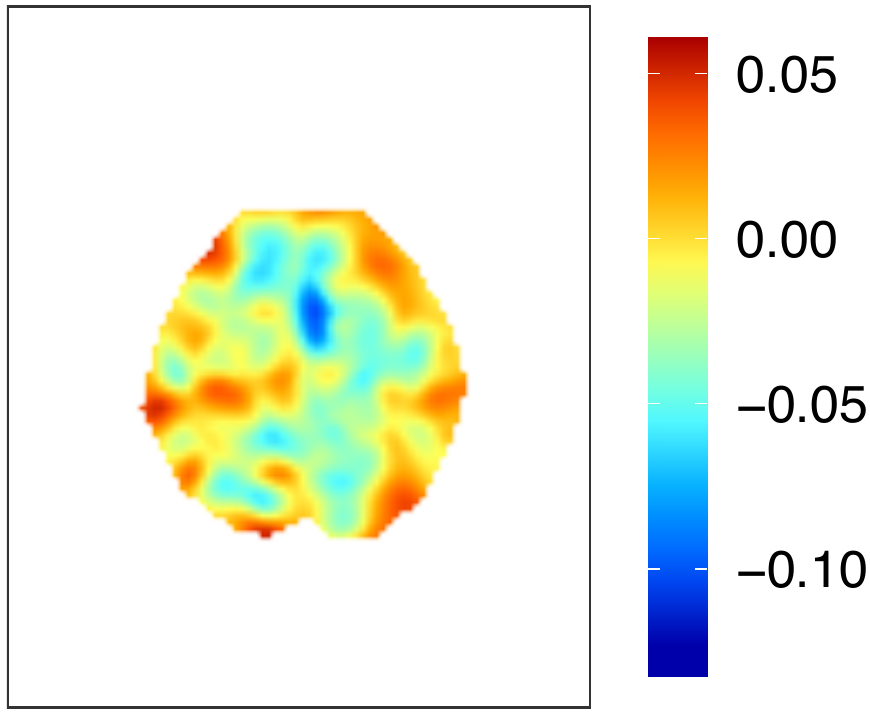} \!\!\!\!\! & \!\!\!\!\\[-5pt]
			\multicolumn{7}{c}{Slice 62}\\[5pt]
			\includegraphics[scale=0.33]{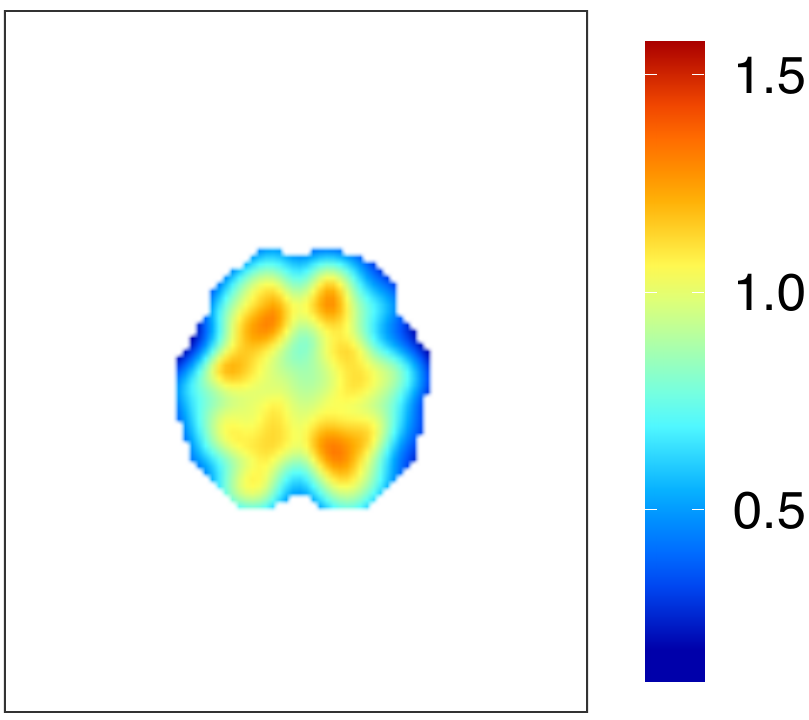} \!\!\!\!\! & \!\!\!\!
			\includegraphics[scale=0.33]{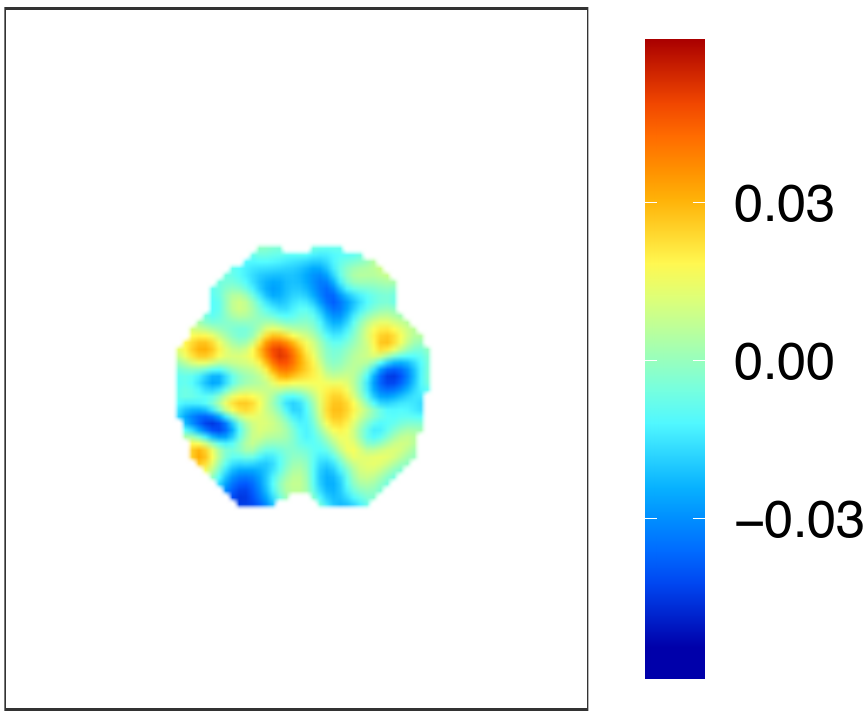} \!\!\!\!\! & \!\!\!\!
			\includegraphics[scale=0.33]{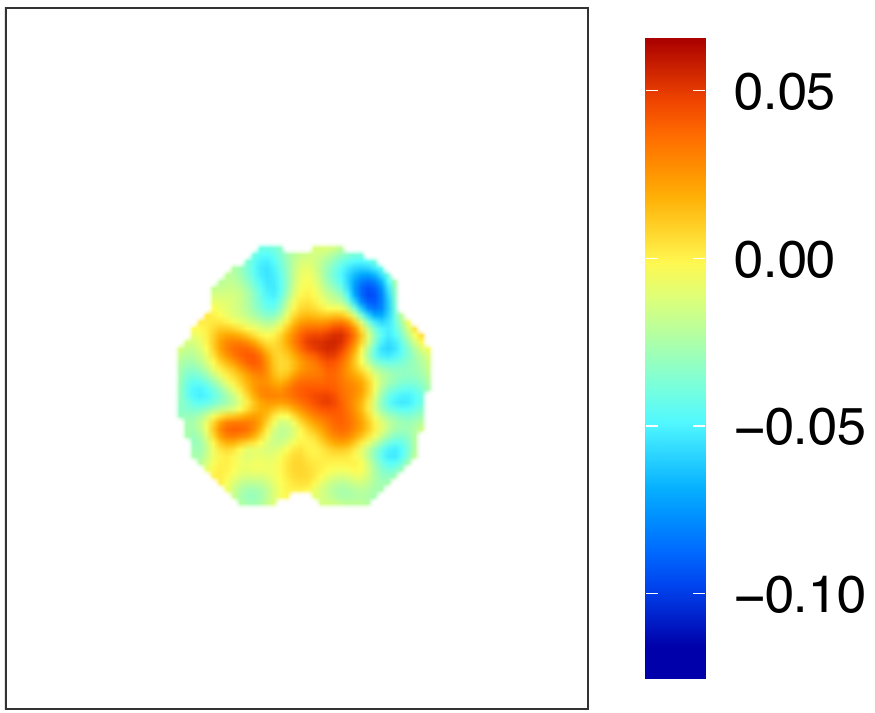} \!\!\!\!\! & \!\!\!\!
			\includegraphics[scale=0.33]{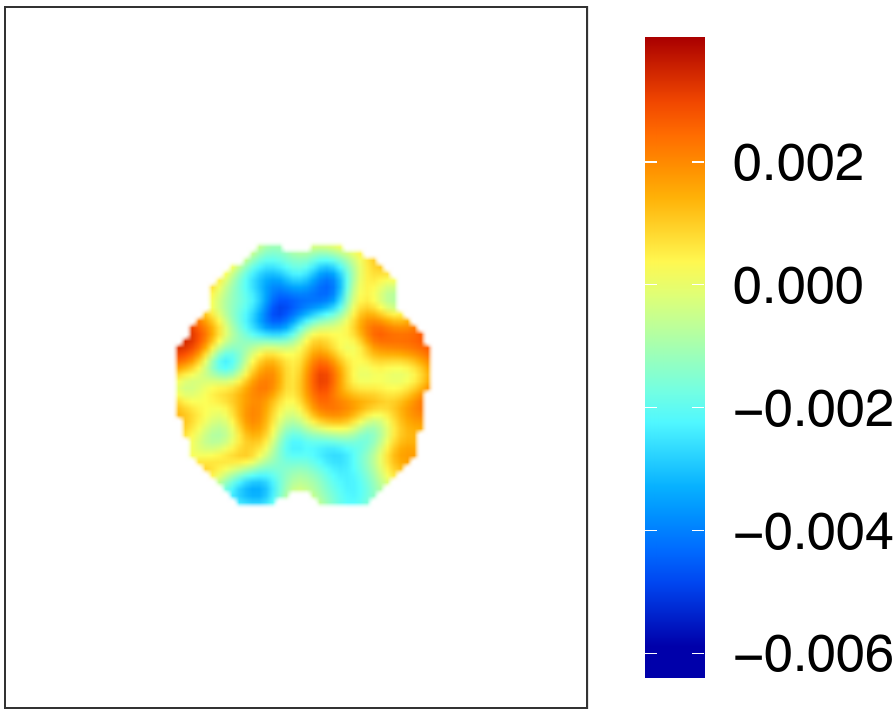}\!\!\!\!\! & \!\!\!\!
			\includegraphics[scale=0.33]{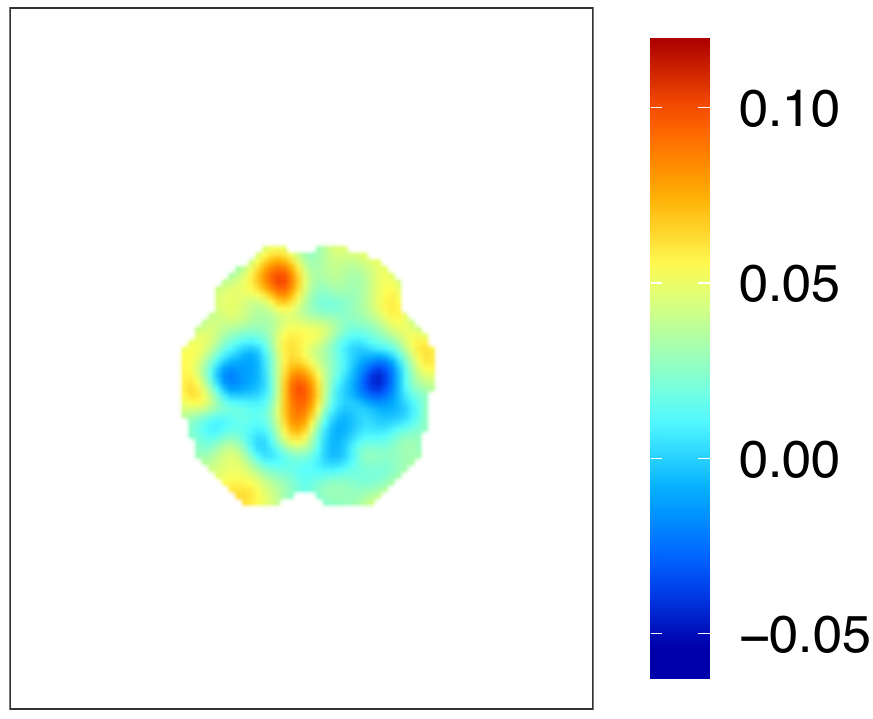} \!\!\!\!\! & \!\!\!\!
			\includegraphics[scale=0.33]{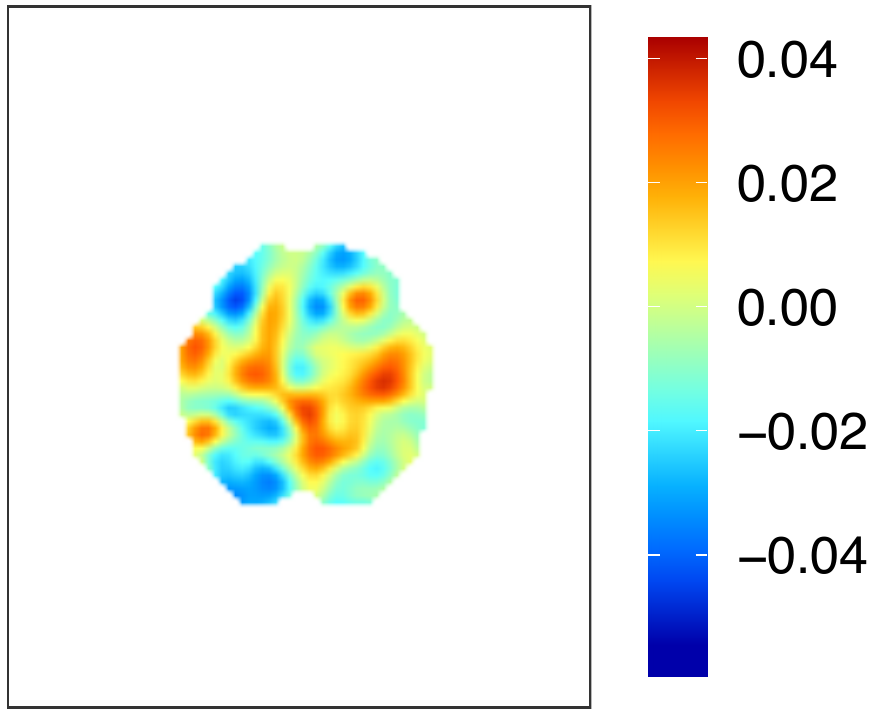} \!\!\!\!\! & \!\!\!\!
			\includegraphics[scale=0.33]{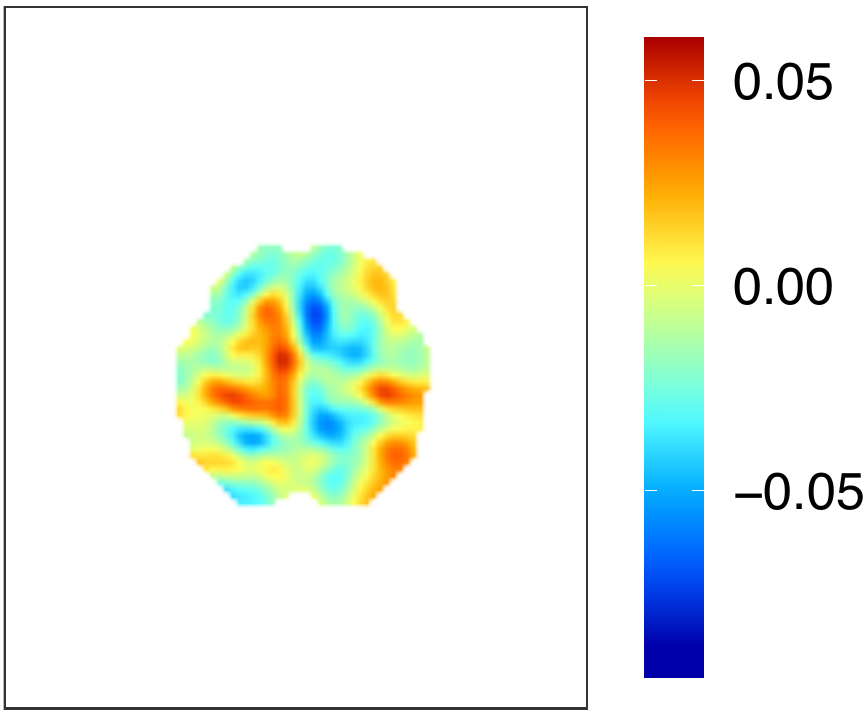} \!\!\!\!\! & \!\!\!\!\\[-5pt]
			\multicolumn{7}{c}{Slice 65}
		\end{tabular}	
	\end{center}
	\caption{The BPST estimates of the coefficient functions for the ADNI data based on the 55th, 62nd and 65th slices, respectively.}
	\label{FIG:APP-EST3}
\end{sidewaysfigure}

%%%%%%%%%%%%%%%%%%%%%%%%%%%%%%%%%%%%%%%%%%%%%%%%%%%%%%%%%%%%%%%%%
\begin{sidewaysfigure}[htbp]
	\begin{center}
		\begin{tabular}{cccccccc} 
			Intercept & MA & AD&Age&Sex&$\textrm{APOE}_1$&$\textrm{APOE}_2$  \\ 
			\includegraphics[scale=0.20]{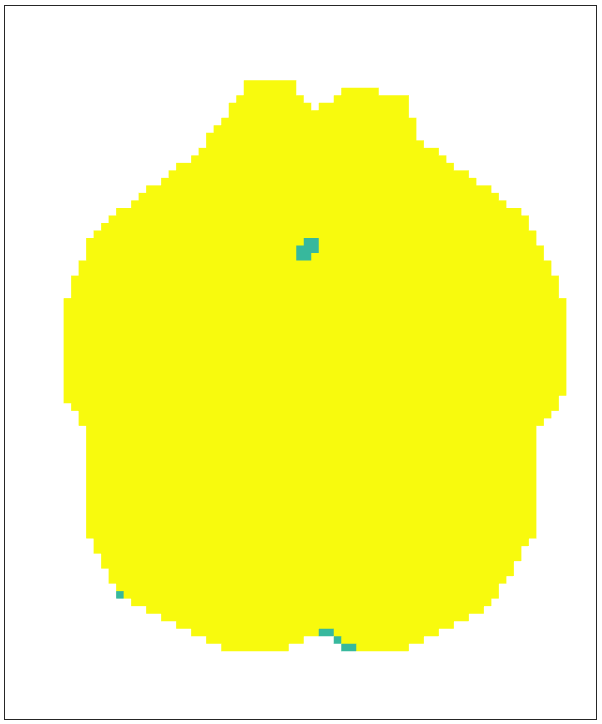} \!\!\!\!\! & \!\!\!\!
			\includegraphics[scale=0.20]{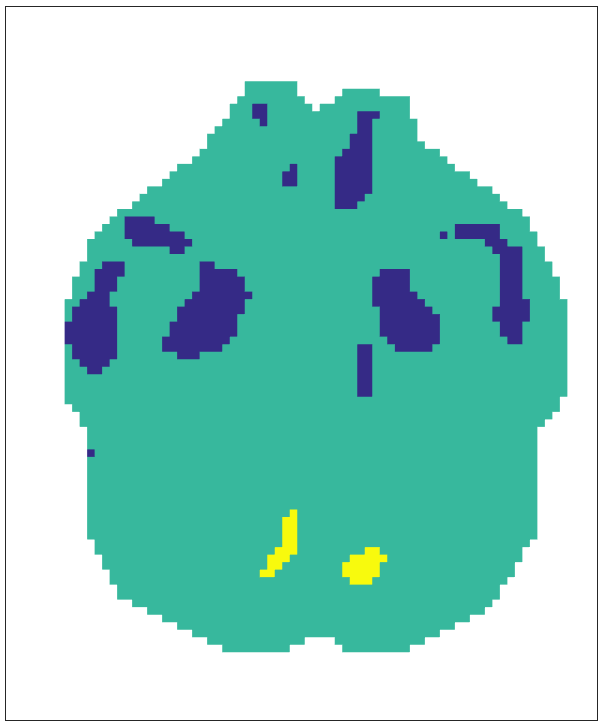} \!\!\!\!\! & \!\!\!\!
			\includegraphics[scale=0.20]{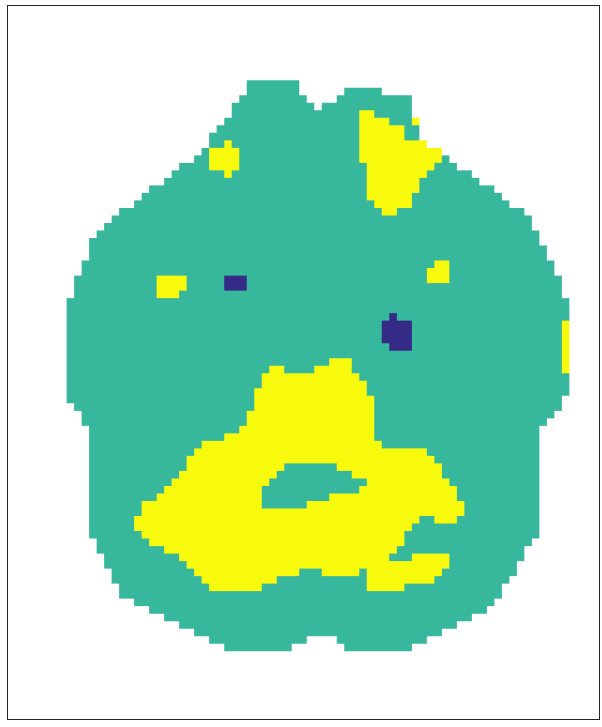} \!\!\!\!\! & \!\!\!\!
			\includegraphics[scale=0.20]{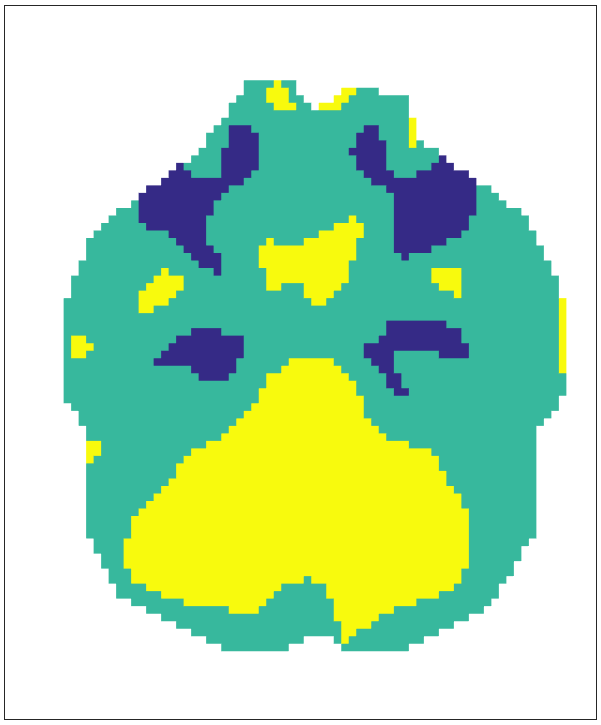}\!\!\!\!\! & \!\!
			\includegraphics[scale=0.20]{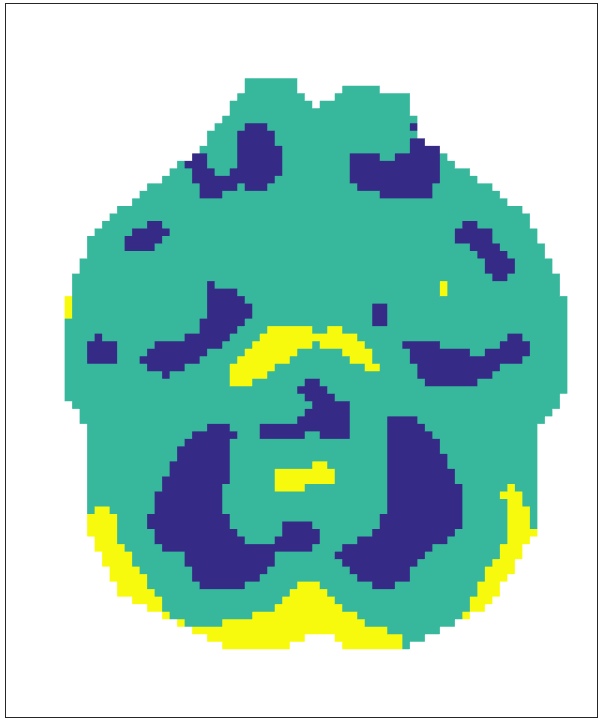} \!\!\!\!\! & \!\!\!\!
			\includegraphics[scale=0.20]{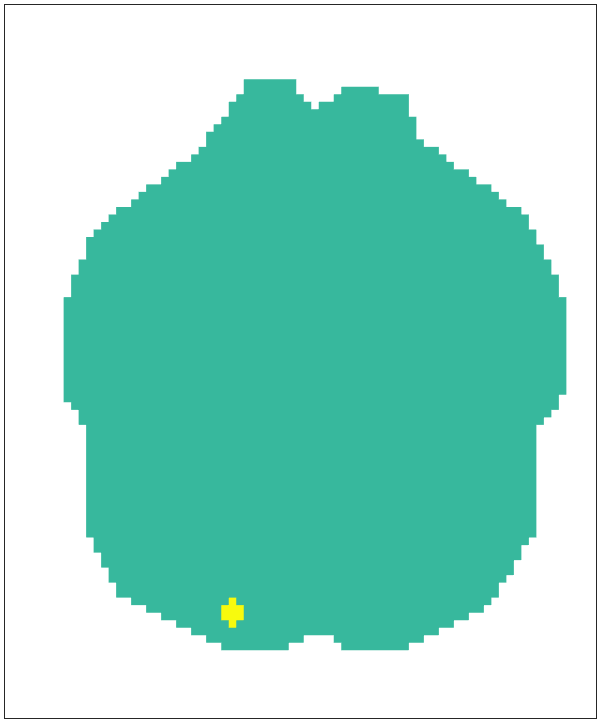} \!\!\!\!\! & \!\!\!\!
			\includegraphics[scale=0.20]{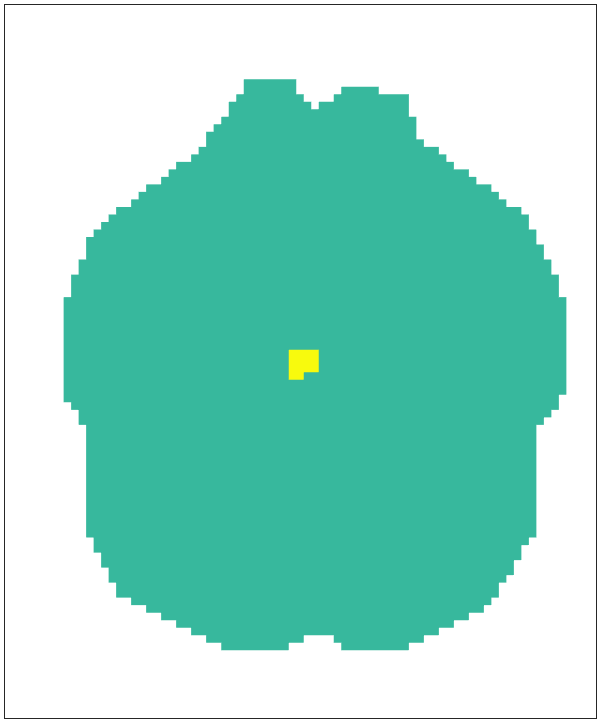} \!\!\!\!\! & \!\!\!\!\\[-7pt]
			\multicolumn{7}{c}{Slice 8}\\[5pt]
			\includegraphics[scale=0.20]{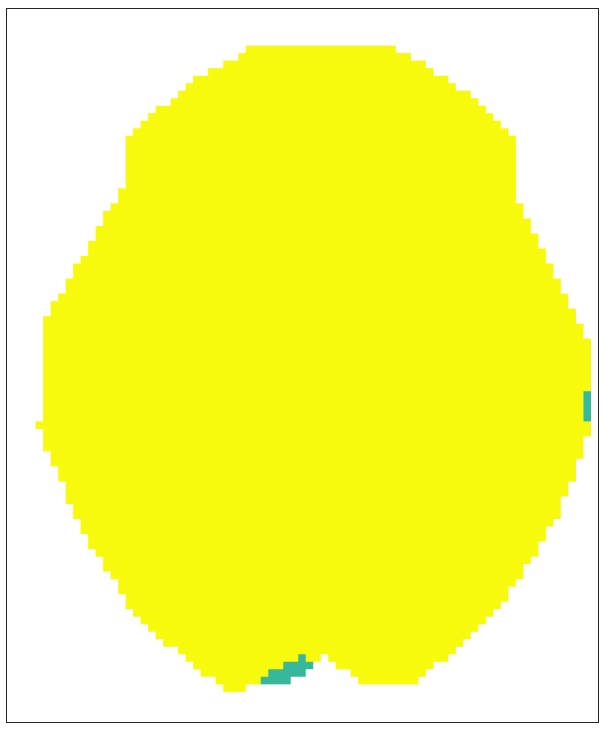} \!\!\!\!\! & \!\!\!\!
			\includegraphics[scale=0.20]{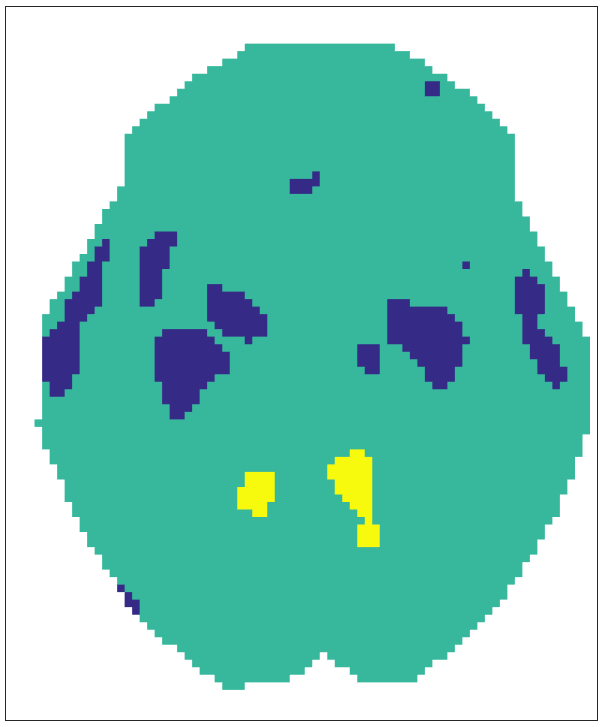} \!\!\!\!\! & \!\!\!\!
			\includegraphics[scale=0.20]{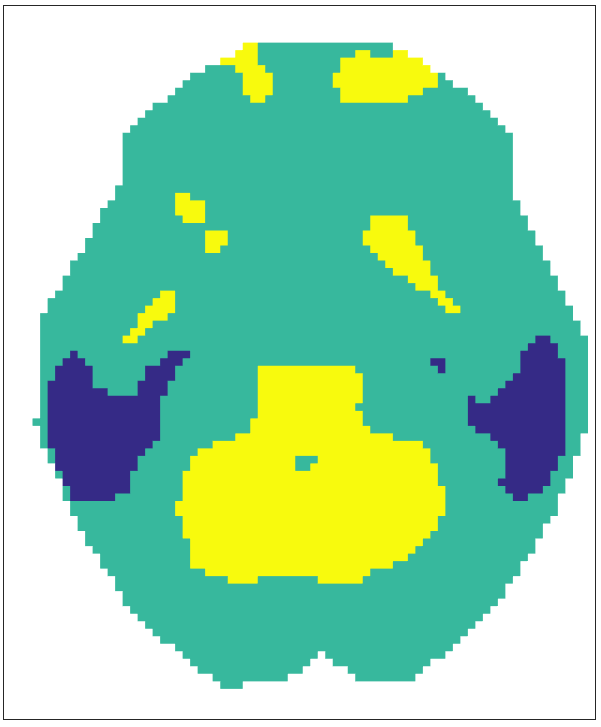} \!\!\!\!\! & \!\!\!\!
			\includegraphics[scale=0.20]{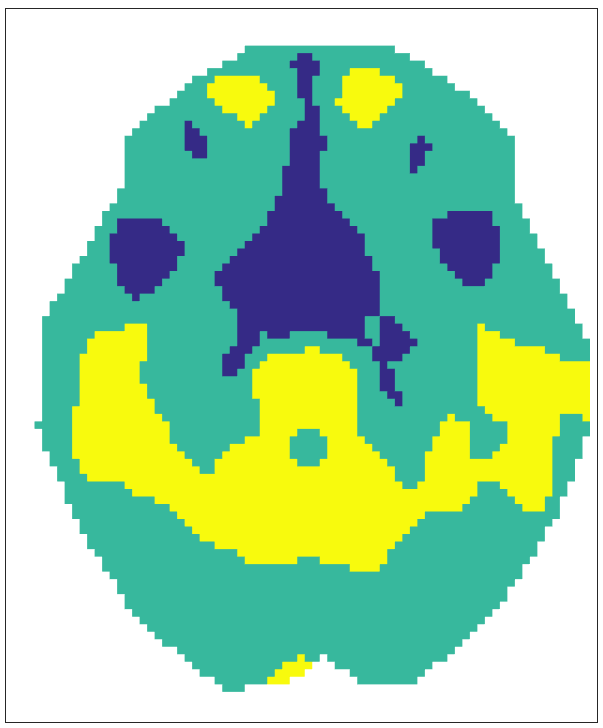}\!\!\!\!\! & \!\!
			\includegraphics[scale=0.20]{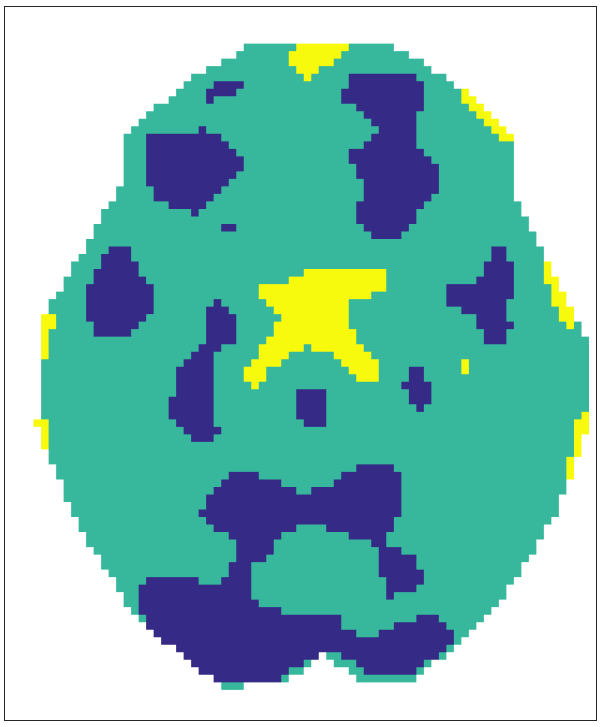} \!\!\!\!\! & \!\!\!\!
			\includegraphics[scale=0.20]{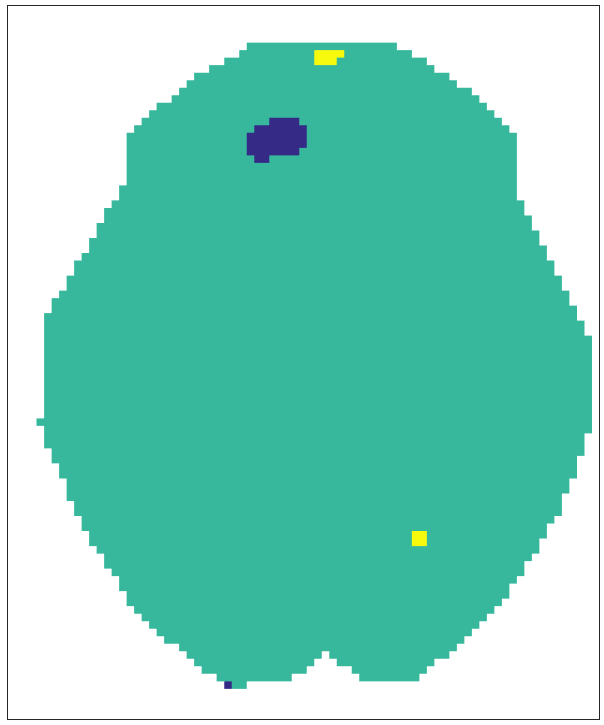} \!\!\!\!\! & \!\!\!\!
			\includegraphics[scale=0.20]{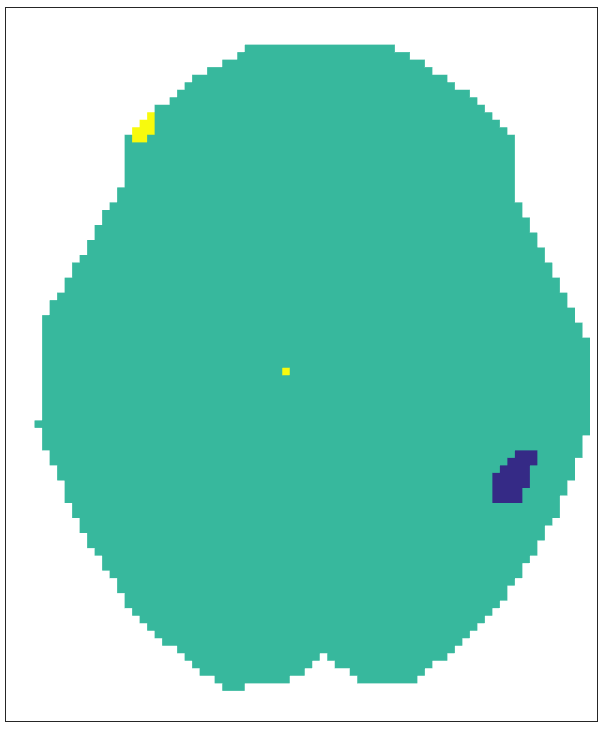} \!\!\!\!\! & \!\!\!\!\\[-7pt]
			\multicolumn{7}{c}{Slice 15}\\[5pt]
			\includegraphics[scale=0.20]{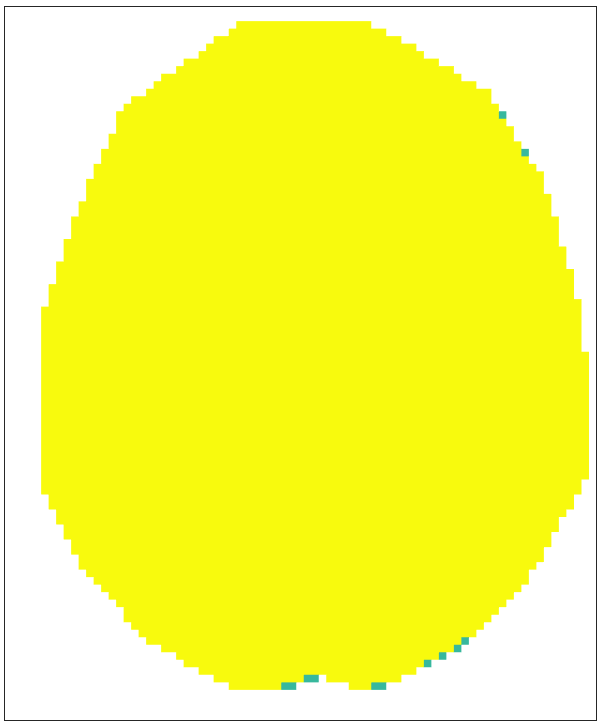} \!\!\!\!\! & \!\!\!\!
			\includegraphics[scale=0.20]{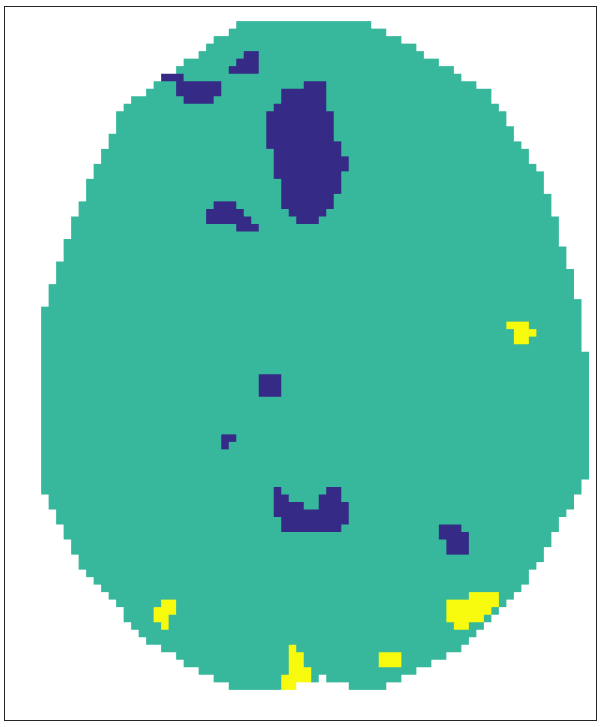} \!\!\!\!\! & \!\!\!\!
			\includegraphics[scale=0.20]{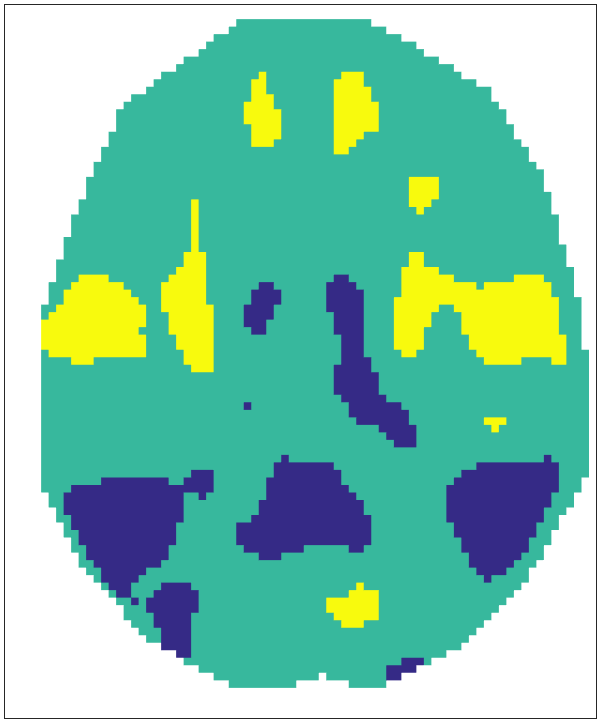} \!\!\!\!\! & \!\!\!\!
			\includegraphics[scale=0.20]{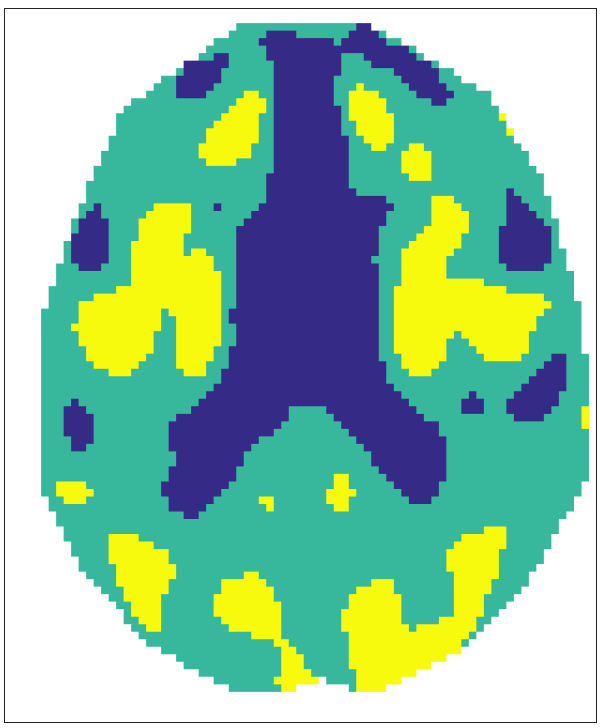}\!\!\!\!\! & \!\!
			\includegraphics[scale=0.20]{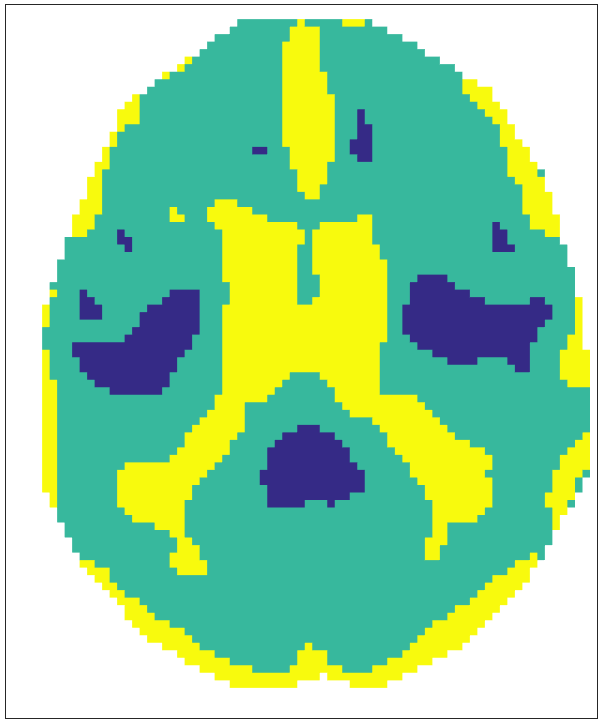} \!\!\!\!\! & \!\!\!\!
			\includegraphics[scale=0.20]{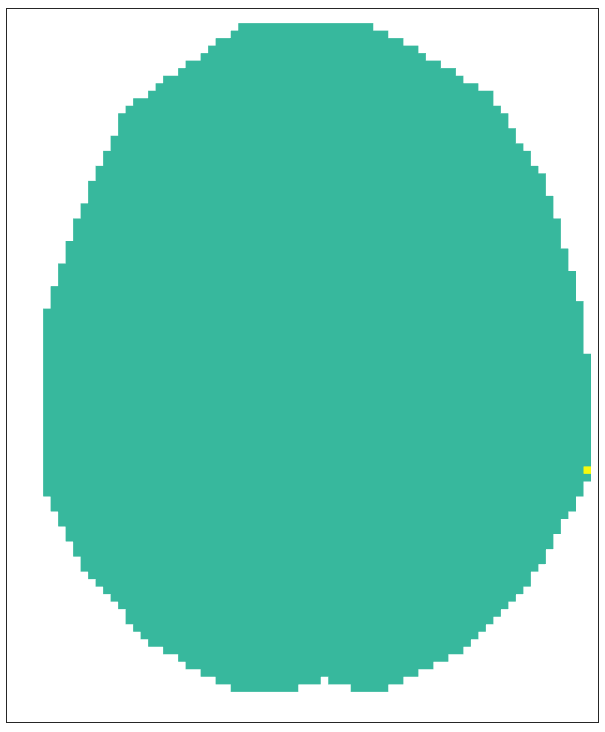} \!\!\!\!\! & \!\!\!\!
			\includegraphics[scale=0.20]{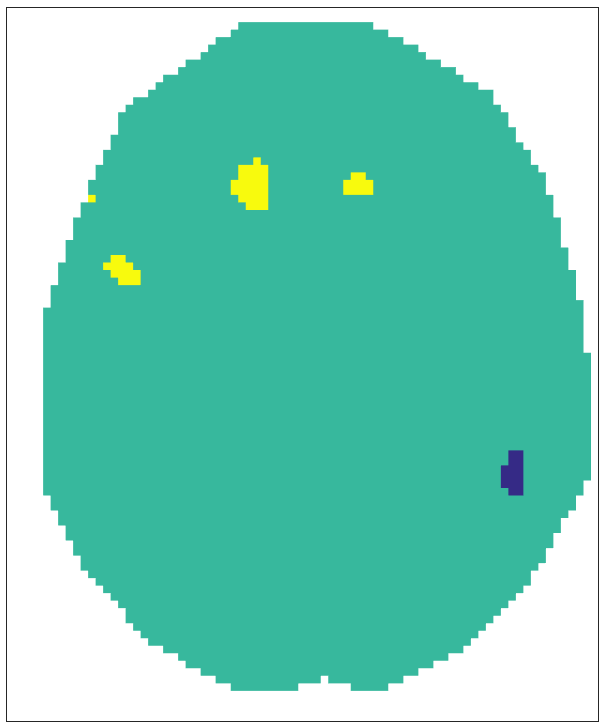} \!\!\!\!\! & \!\!\!\!\\[-7pt]
			\multicolumn{7}{c}{Slice 35}\\[5pt]
		\end{tabular}	
	\end{center}
	\caption{The ``significance'' map (based on the 95\% SCC) for the coefficient functions for the ADNI data. The yellow color and blue color on the map indicate the regions that zero is below the lower SCC or above the upper SCC, respectively.}
	\label{FIG:APP-SCC2}
\end{sidewaysfigure}

%%%%%%%%%%%%%%%%%%%%%%%%%%%%%%%%%%%%%%%%%%%%%%%%%%%%%%%%%%%%%
\begin{sidewaysfigure}[htbp]
	\begin{center}
		\begin{tabular}{cccccccc} 
			Intercept & MA & AD&Age&Sex&$\textrm{APOE}_1$&$\textrm{APOE}_2$  \\ 
			\includegraphics[scale=0.20]{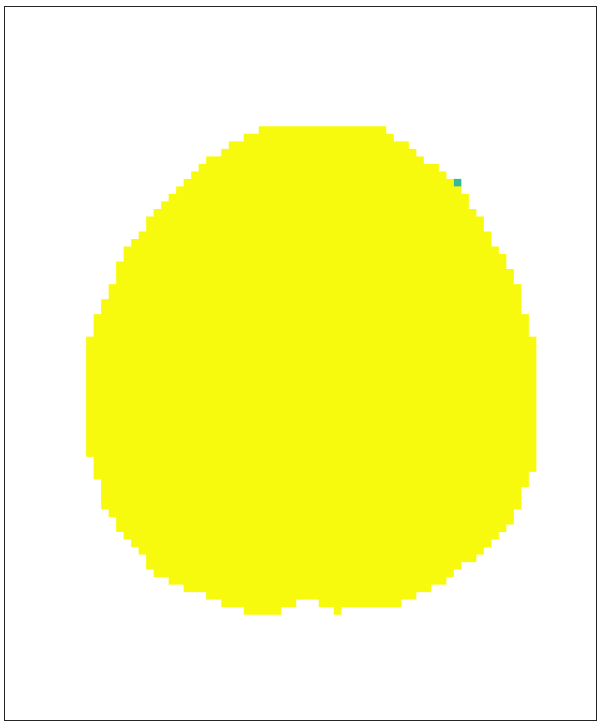} \!\!\!\!\! & \!\!\!\!
			\includegraphics[scale=0.20]{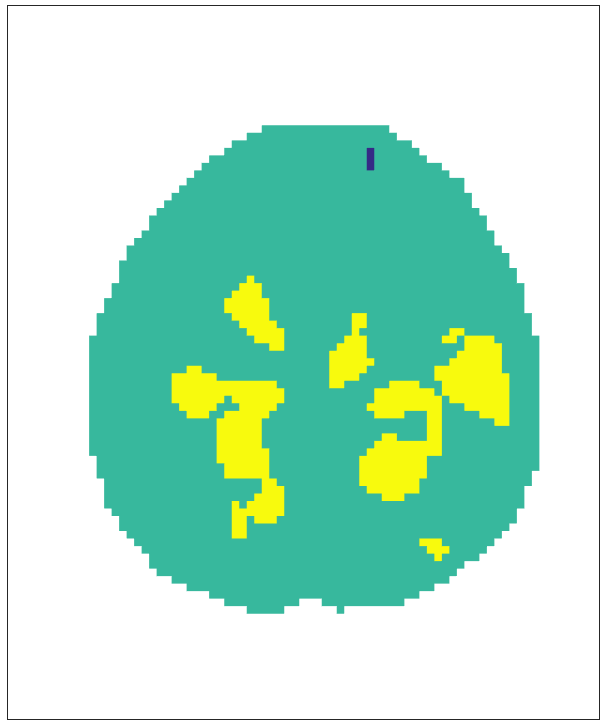} \!\!\!\!\! & \!\!\!\!
			\includegraphics[scale=0.20]{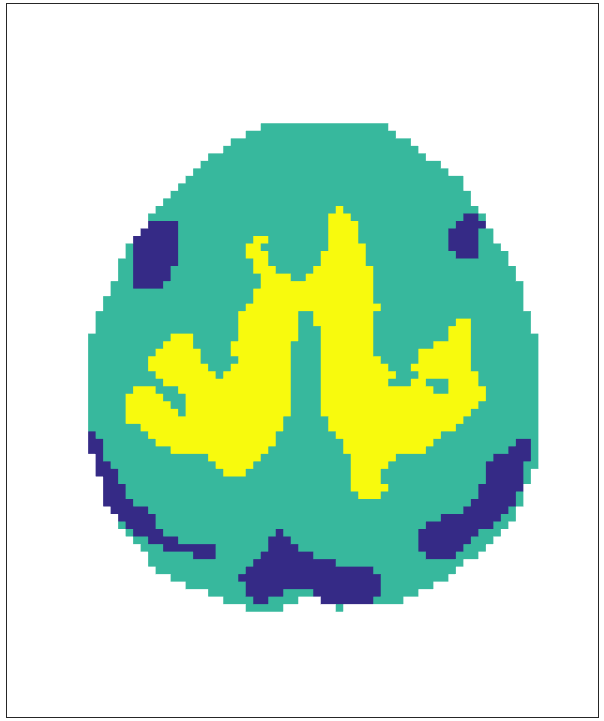} \!\!\!\!\! & \!\!\!\!
			\includegraphics[scale=0.20]{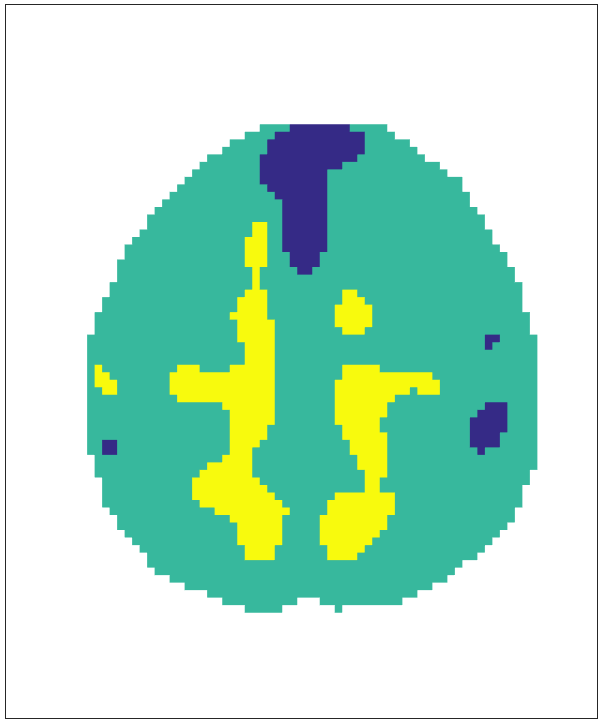}\!\!\!\!\! & \!\!\!\!
			\includegraphics[scale=0.20]{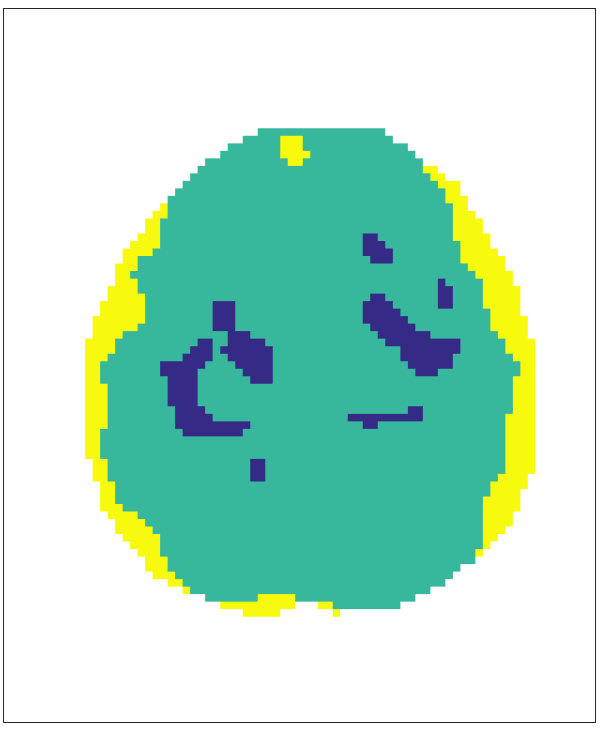} \!\!\!\!\! & \!\!
			\includegraphics[scale=0.20]{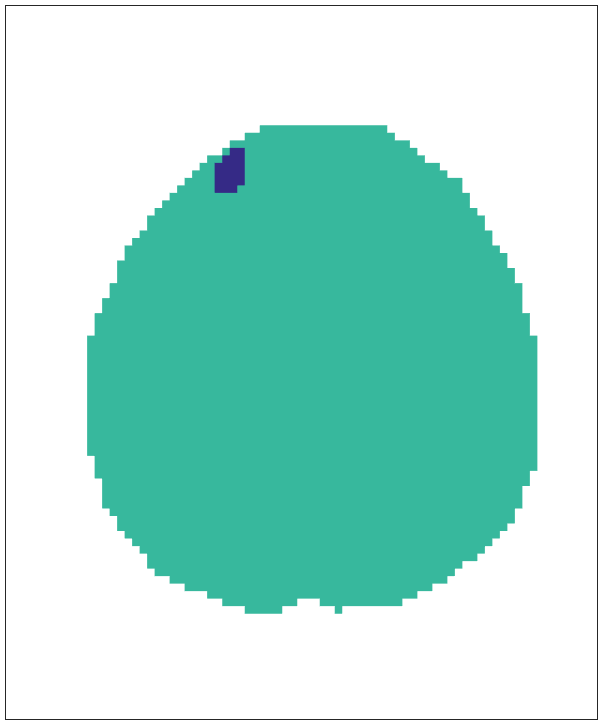} \!\!\!\!\! & \!\!\!\!
			\includegraphics[scale=0.20]{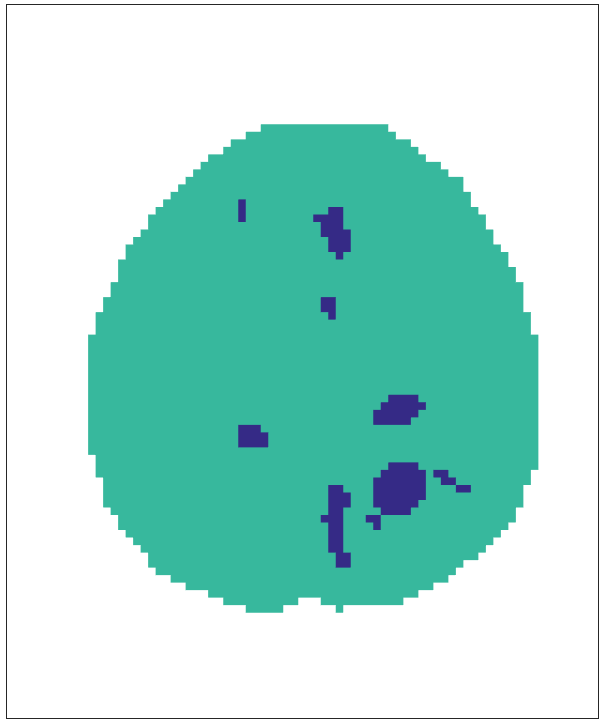} \!\!\!\!\! & \!\!\!\!\\[-7pt]
			\multicolumn{7}{c}{Slice 55}\\[5pt]
			\includegraphics[scale=0.20]{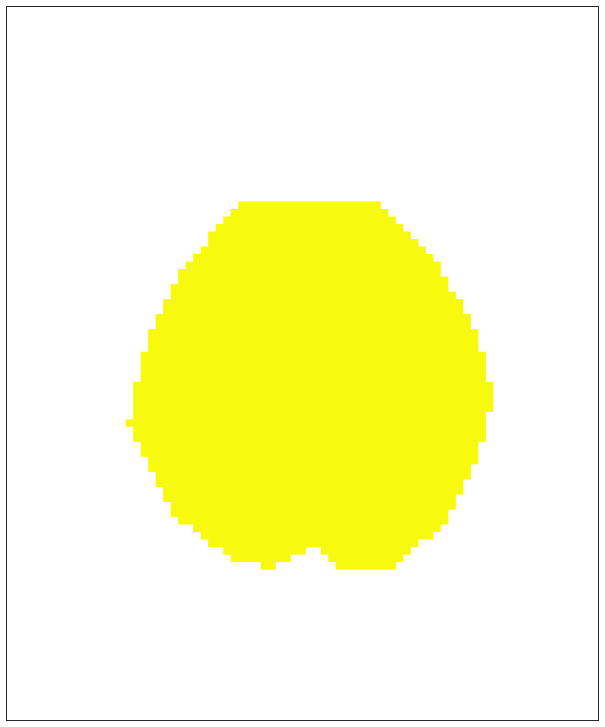} \!\!\!\!\! & \!\!\!\!
			\includegraphics[scale=0.20]{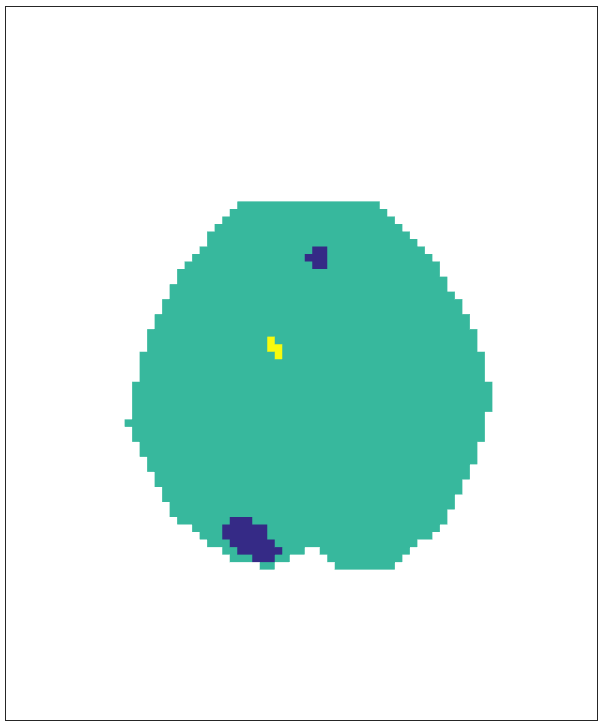} \!\!\!\!\! & \!\!\!\!
			\includegraphics[scale=0.20]{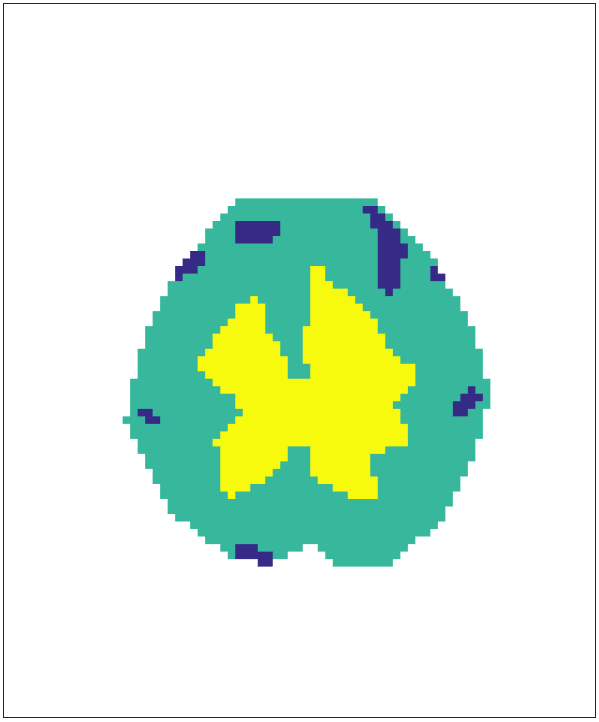} \!\!\!\!\! & \!\!\!\!
			\includegraphics[scale=0.20]{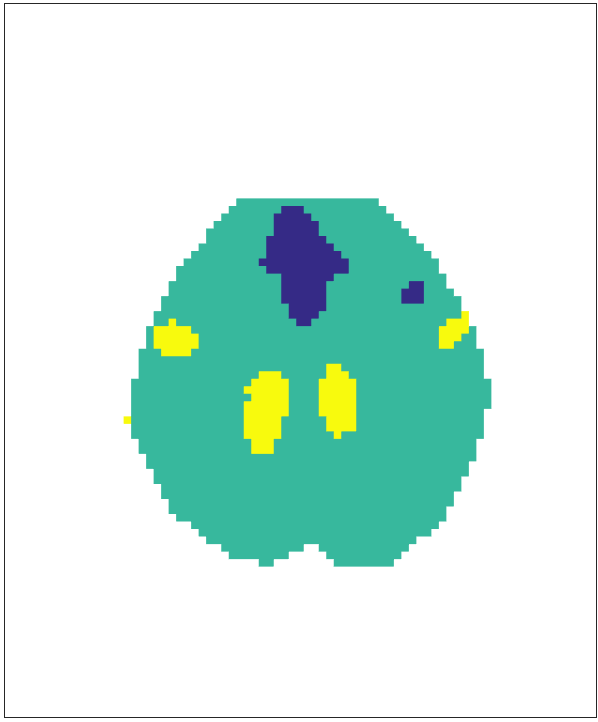}\!\!\!\!\! & \!\!
			\includegraphics[scale=0.20]{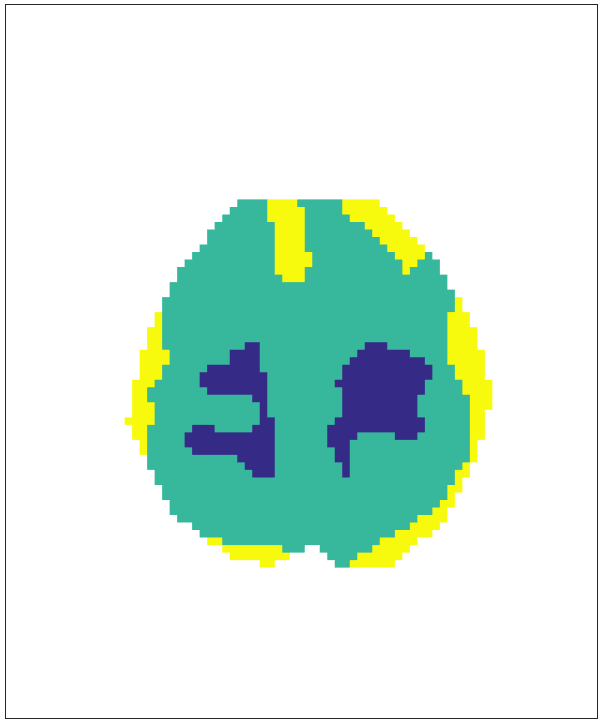} \!\!\!\!\! & \!\!\!\!
			\includegraphics[scale=0.20]{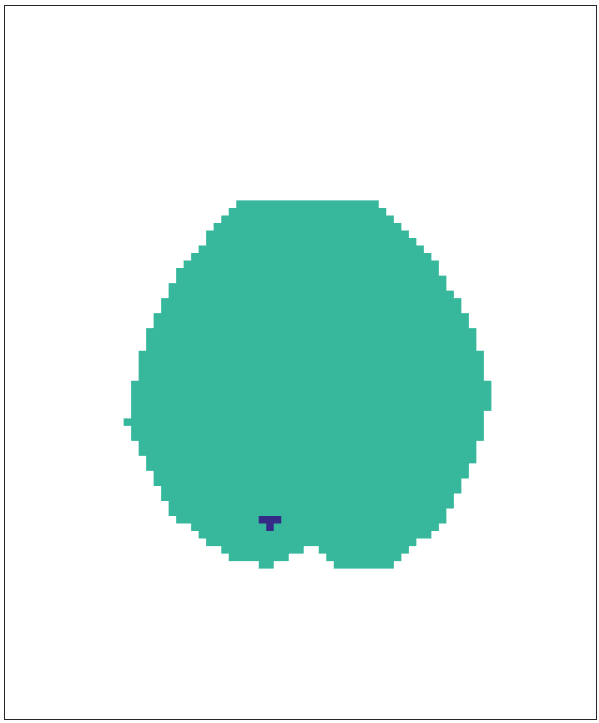} \!\!\!\!\! & \!\!\!\!
			\includegraphics[scale=0.20]{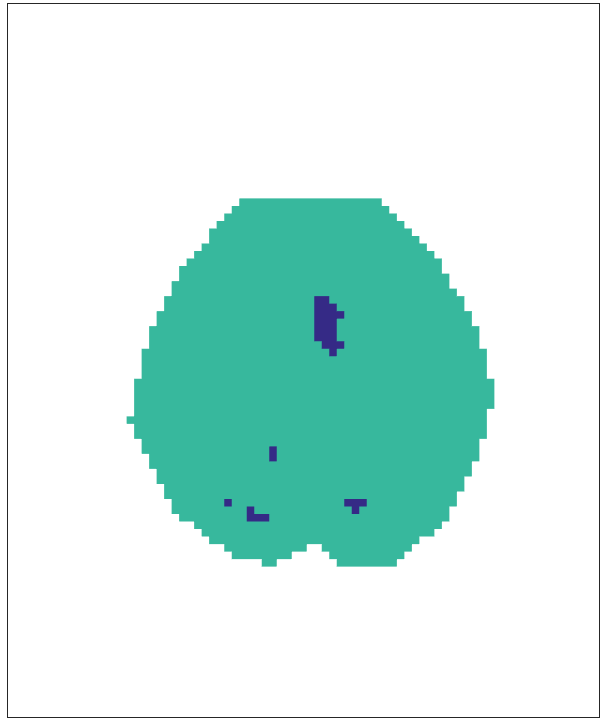} \!\!\!\!\! & \!\!\!\!\\[-7pt]
			\multicolumn{7}{c}{Slice 62}\\[5pt]
			\includegraphics[scale=0.20]{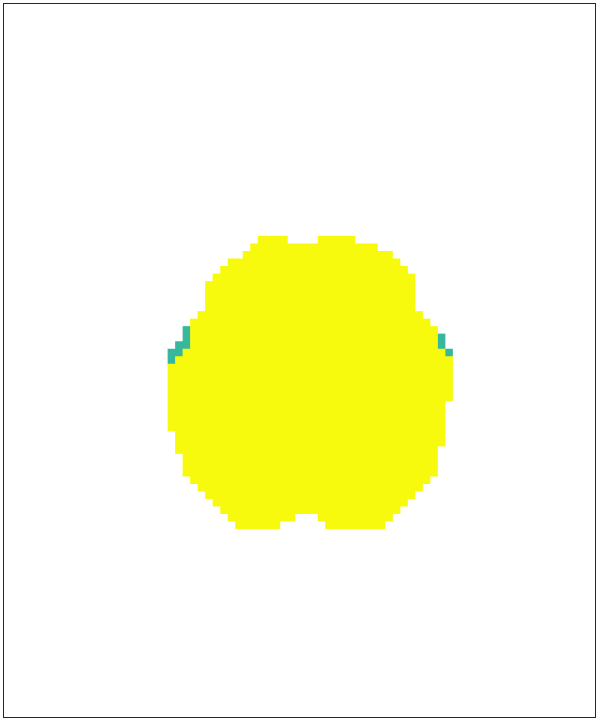} \!\!\!\!\! & \!\!\!\!
			\includegraphics[scale=0.20]{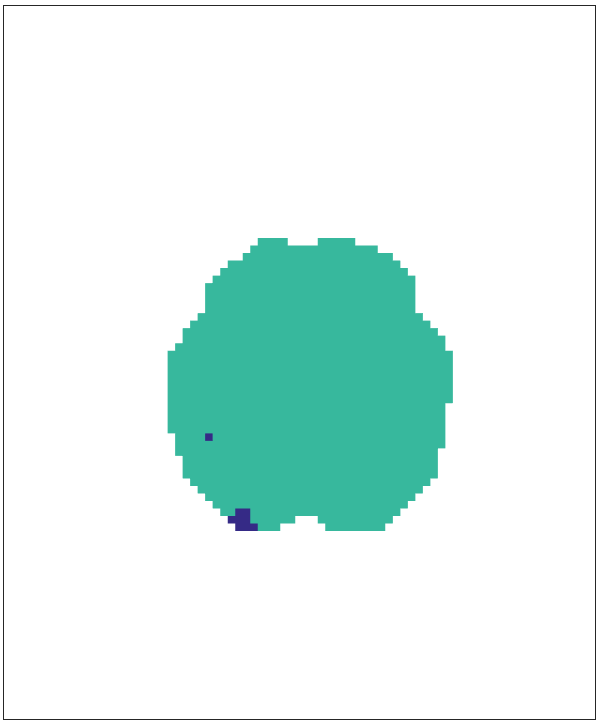} \!\!\!\!\! & \!\!\!\!
			\includegraphics[scale=0.20]{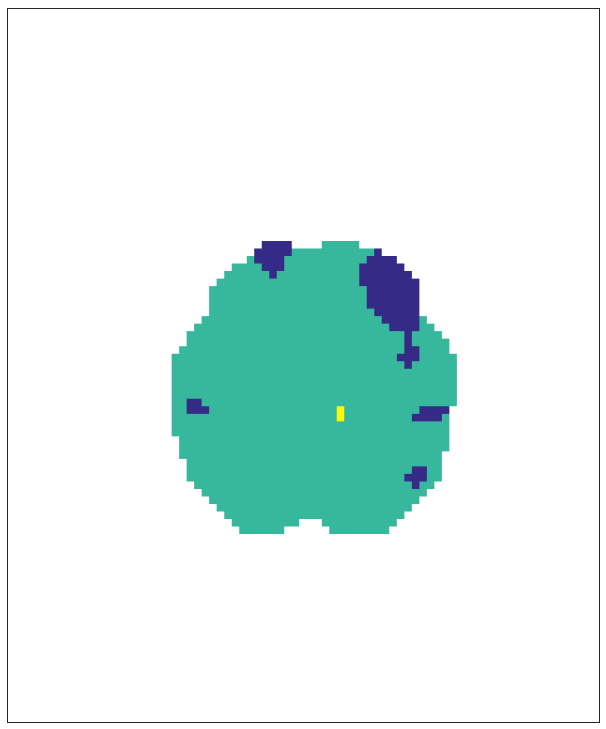} \!\!\!\!\! & \!\!\!\!
			\includegraphics[scale=0.20]{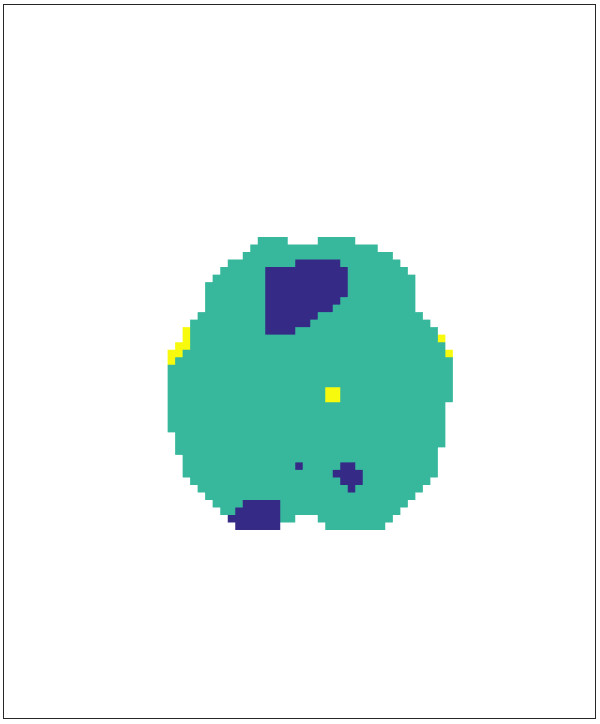}\!\!\!\!\! & \!\!
			\includegraphics[scale=0.20]{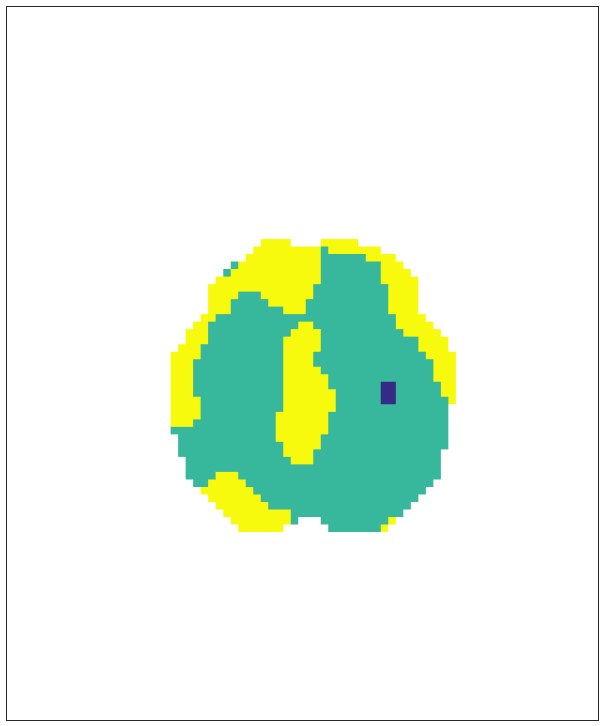} \!\!\!\!\! & \!\!\!\!
			\includegraphics[scale=0.20]{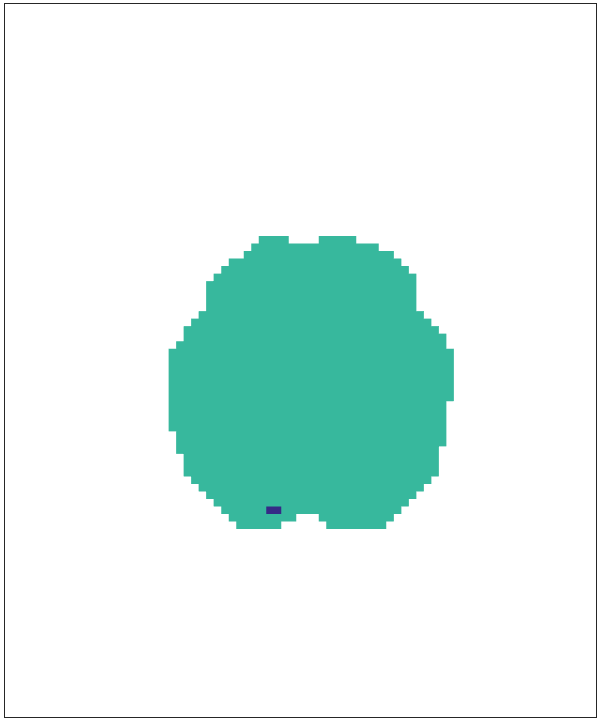} \!\!\!\!\! & \!\!\!\!
			\includegraphics[scale=0.20]{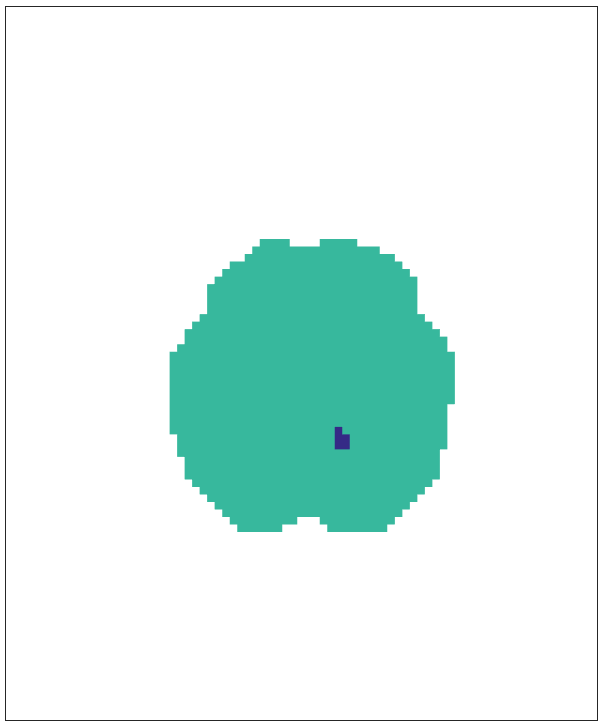} \!\!\!\!\! & \!\!\!\!\\[-7pt]
			\multicolumn{7}{c}{Slice 65}\\[-9pt]
		\end{tabular}	
	\end{center}
	\caption{The ``significance'' map (based on the 95\% SCC) for the coefficient functions for the ADNI data. The yellow color and blue color on the map indicate the regions that zero is below the lower SCC or above the upper SCC, respectively.}
	\label{FIG:APP-SCC3}
\end{sidewaysfigure}

\bibliographystyle{asa}
\bibliography{references}

\begin{thebibliography}{45}
\newcommand{\enquote}[1]{``#1''}
\expandafter\ifx\csname natexlab\endcsname\relax\def\natexlab#1{#1}\fi

\bibitem[{Andersen et~al.(2012)Andersen, Andersen, and
  Pakkenberg}]{andersen2012stereological}
Andersen, K., Andersen, B.~B., and Pakkenberg, B. (2012),
  \enquote{Stereological quantification of the cerebellum in patients with
  Alzheimer's disease,} \textit{Neurobiology of aging}, 33, 197--e11.

\bibitem[{Bernard and Seidler(2014)}]{bernard2014moving}
Bernard, J.~A. and Seidler, R.~D. (2014), \enquote{Moving forward: age effects
  on the cerebellum underlie cognitive and motor declines,}
  \textit{Neuroscience \& Biobehavioral Reviews}, 42, 193--207.

\bibitem[{Cardot et~al.(1999)Cardot, Ferraty, and
  Sarda}]{Cardot:Ferraty:Sarda:99}
Cardot, H., Ferraty, F., and Sarda, P. (1999), \enquote{Functional linear
  model,} \textit{Statistics \& Probability Letters}, 45, 11--22.

\bibitem[{Cardot et~al.(2003)Cardot, Ferraty, and
  Sarda}]{Cardot:Ferraty:Sarda:03}
--- (2003), \enquote{Spline estimators for the functional linear model,}
  \textit{Statistica Sinica}, 13, 571--591.

\bibitem[{Chen et~al.(2017)Chen, Delicado, and
  M{\"{u}}ller}]{Chen:Delicado:Muller:17}
Chen, K., Delicado, P., and M{\"{u}}ller, H.-G. (2017), \enquote{Modelling
  function-valued stochastic processes, with applications to fertility
  dynamics,} \textit{Journal of the Royal Statistical Society: Series B
  (Statistical Methodology)}, 79, 177--196.

\bibitem[{Claeskens et~al.(2009)Claeskens, Krivobokova, and
  Opsomer}]{Claeskens:Krivobokova:Opsomer:09}
Claeskens, G., Krivobokova, T., and Opsomer, J.~D. (2009), \enquote{Asymptotic
  properties of penalized spline estimators,} \textit{Biometrika}, 96,
  529--544.

\bibitem[{Corder et~al.(1993)Corder, Saunders, Strittmatter, Schmechel,
  Gaskell, Small, Roses, Haines, and Pericak-Vance}]{CorderEtAl:93}
Corder, E., Saunders, A., Strittmatter, W., Schmechel, D., Gaskell, P., Small,
  G., Roses, A., Haines, J., and Pericak-Vance, M. (1993), \enquote{Gene dose
  of apolipoprotein E type 4 allele and the risk of Alzheimer's disease in late
  onset families,} \textit{Science}, 261, 921--923.

\bibitem[{Dubb et~al.(2003)Dubb, Gur, Avants, and
  Gee}]{dubb2003characterization}
Dubb, A., Gur, R., Avants, B., and Gee, J. (2003), \enquote{Characterization of
  sexual dimorphism in the human corpus callosum,} \textit{Neuroimage}, 20,
  512--519.

\bibitem[{Gu et~al.(2014)Gu, Wang, H{\"a}rdle, and
  Yang}]{Gu:Wang:Wolfgang:Yang:2014}
Gu, L., Wang, L., H{\"a}rdle, W.~K., and Yang, L. (2014), \enquote{A
  simultaneous confidence corridor for varying coefficient regression with
  sparse functional data,} \textit{Test}, 23, 806--843.

\bibitem[{Hall and Horowitz(2013)}]{Hall:Horowitz:13}
Hall, P. and Horowitz, J. (2013), \enquote{A simple bootstrap method for
  constructing nonparametric confidence bands for functions,} \textit{The
  Annals of Statistics}, 41, 1892--1921.

\bibitem[{Hall and Horowitz(2007)}]{Hall:Horowitz:07}
Hall, P. and Horowitz, J.~L. (2007), \enquote{Methodology and convergence rates
  for functional linear regression,} \textit{The Annals of Statistics}, 35,
  70--91.

\bibitem[{Hall and Opsomer(2005)}]{Hall:Opsomer:05}
Hall, P. and Opsomer, J.~D. (2005), \enquote{Theory for penalised spline
  regression,} \textit{Biometrika}, 92, 105--118.

\bibitem[{Hibar et~al.(2015)Hibar, Stein, Renteria, Arias-Vasquez,
  Desrivières, Jahanshad, et~al.}]{Hibar:etal:15}
Hibar, D.~P., Stein, J.~L., Renteria, M.~E., Arias-Vasquez, A., Desrivières,
  S., Jahanshad, N., et~al. (2015), \enquote{Common genetic variants influence
  human subcortical brain structures,} \textit{Nature}, 520, 224--229.

\bibitem[{Huang et~al.(2004)Huang, Wu, and Zhou}]{Huang:Wu:Zhou:2004}
Huang, J.~Z., Wu, C.~O., and Zhou, L. (2004), \enquote{Polynomial spline
  estimation and inference for varying coefficient models with longitudinal
  data,} \textit{Statistica Sinica}, 14, 763--788.

\bibitem[{Lai and Schumaker(2007)}]{Lai:Schumaker:07}
Lai, M.-J. and Schumaker, L.~L. (2007), \textit{Spline functions on
  triangulations}, Cambridge University Press.

\bibitem[{Lai and Wang(2013)}]{Lai:Wang:13}
Lai, M.~J. and Wang, L. (2013), \enquote{Bivariate penalized splines for
  regression.} \textit{Statistica Sinica}, 23, 1399--1417.

\bibitem[{Li and Zhang(2017)}]{Li:Zhang:17}
Li, L. and Zhang, X. (2017), \enquote{Parsimonious tensor response regression,}
  \textit{Journal of the American Statistical Association}, 112, 1131--1146.

\bibitem[{Li and Hsing(2010)}]{Li:Hsing:2010}
Li, Y. and Hsing, T. (2010), \enquote{Uniform convergence rates for
  nonparametric regression and principal component analysis in
  functional/longitudinal data,} \textit{The Annals of Statistics}, 38,
  3321--3351.

\bibitem[{Lindgren et~al.(2011)Lindgren, Rue, and
  Lindstr{\"o}m}]{Lindgren:etal:11}
Lindgren, F., Rue, H., and Lindstr{\"o}m, J. (2011), \enquote{An explicit link
  between Gaussian fields and Gaussian Markov random fields: the stochastic
  partial differential equation approach,} \textit{Journal of the Royal
  Statistical Society: Series B (Statistical Methodology)}, 73, 423--498.

\bibitem[{Marcus et~al.(2014)Marcus, Mena, and
  Subramaniam}]{Marcus:Mena:Subramaniam:14}
Marcus, C., Mena, E., and Subramaniam, R.~M. (2014), \enquote{Brain PET in the
  diagnosis of Alzheimer's disease,} \textit{Clinical Nuclear Medicine}, 39,
  e413--e422.

\bibitem[{Morris(2015)}]{Morris:15}
Morris, J.~S. (2015), \enquote{Functional regression,} \textit{Annual Review of
  Statistics and Its Application}, 2, 321--359.

\bibitem[{Morris and Carroll(2006)}]{Morris:Carroll:06}
Morris, J.~S. and Carroll, R.~J. (2006), \enquote{Wavelet-based functional
  mixed models,} \textit{Journal of the Royal Statistical Society: Series B
  (Statistical Methodology)}, 68, 179--199.

\bibitem[{M{\"{u}}ller(2005)}]{Muller:05}
M{\"{u}}ller, H.-G. (2005), \enquote{Functional modelling and classification of
  longitudinal data,} \textit{Scandinavian Journal of Statistics}, 32,
  223--240.

\bibitem[{Persson and Strang(2004)}]{Persson:Strang:04}
Persson, P.~O. and Strang, G. (2004), \enquote{A simple mesh generator in
  MATLAB.} \textit{SIAM Review}, 46, 329--345.

\bibitem[{Ramsay and Dalzell(1991)}]{Ramsay:Dalzell:91}
Ramsay, J.~O. and Dalzell, C.~J. (1991), \enquote{Some tools for functional
  data analysis,} \textit{Journal of the Royal Statistical Society. Series B
  (Methodological)}, 53, 539--572.

\bibitem[{Ramsay and Silverman(2005)}]{Ramsay:Silverman:2005}
Ramsay, J.~O. and Silverman, B.~W. (2005), \textit{Functional Data Analysis},
  New York, Springer.

\bibitem[{Ramsay(2002)}]{Ramsay:02}
Ramsay, T. (2002), \enquote{Spline smoothing over difficult regions.}
  \textit{Journal of the Royal Statistical Society: Series B (Statistical
  Methodology)}, 64, 307--319.

\bibitem[{Reiss et~al.(2017)Reiss, Goldsmith, Shang, and
  Ogden}]{Reiss:Goldsmith:Shang:Ogden:17}
Reiss, P.~T., Goldsmith, J., Shang, H.~L., and Ogden, R.~T. (2017),
  \enquote{Methods for scalar-on-function regression,} \textit{International
  Statistical Review}, 85, 228--249.

\bibitem[{Reiss et~al.(2010)Reiss, Huang, and Mennes}]{Reiss:Huang:Mennes:10}
Reiss, P.~T., Huang, L., and Mennes, M. (2010), \enquote{Fast
  function-on-scalar regression with penalized basis expansions,} \textit{The
  International Journal of Biostatistics}, 6, Article 28.

\bibitem[{Sang and Huang(2012)}]{Sang:Huang:2012}
Sang, H. and Huang, J.~Z. (2012), \enquote{A full scale approximation of
  covariance functions for large spatial data sets,} \textit{Journal of the
  Royal Statistical Society: Series B (Statistical Methodology)}, 74, 111--132.

\bibitem[{Sangalli et~al.(2013)Sangalli, Ramsay, and
  Ramsay}]{Sangalli:Ramsay:Ramsay:13}
Sangalli, L.~M., Ramsay, J.~O., and Ramsay, T.~O. (2013), \enquote{Spatial
  spline regression models,} \textit{Journal of the Royal Statistical Society:
  Series B (Statistical Methodology)}, 75, 681--703.

\bibitem[{Schwarz and Krivobokova(2016)}]{Schwarz:Krivobokova:16}
Schwarz, K. and Krivobokova, T. (2016), \enquote{A unified framework for spline
  estimators,} \textit{Biometrika}, 103, 121--131.

\bibitem[{Sent{\"{u}}rk and M{\"{u}}ller(2010)}]{Senturk:Muller:10}
Sent{\"{u}}rk, D. and M{\"{u}}ller, H.-G. (2010), \enquote{Functional varying
  coefficient models for longitudinal data,} \textit{Journal of the American
  Statistical Association}, 105, 1256--1264.

\bibitem[{Staicu et~al.(2010)Staicu, Crainiceanu, and
  Carroll}]{Staicu:Crainiceanu:Carroll:10}
Staicu, A.~M., Crainiceanu, C.~M., and Carroll, R.~J. (2010), \enquote{Fast
  methods for spatially correlated multilevel functional data,}
  \textit{Biostatistics}, 11, 177--194.

\bibitem[{Stein et~al.(2010)Stein, Hua, Lee, Ho, Leow, Toga, Saykin, Shen,
  Foroud, Pankratz, et~al.}]{Stein:etal:10}
Stein, J.~L., Hua, X., Lee, S., Ho, A.~J., Leow, A.~D., Toga, A.~W., Saykin,
  A.~J., Shen, L., Foroud, T., Pankratz, N., et~al. (2010), \enquote{Voxelwise
  genome-wide association study (vGWAS),} \textit{Neuroimage}, 53, 1160--1174.

\bibitem[{Wang et~al.(2019)Wang, Wang, and Lai}]{BPST}
Wang, G., Wang, L., and Lai, M.~J. (2019), \enquote{Package ‘{BPST}’,} R
  package version 1.0. Available at
  ``\url{https://github.com/funstatpackages/BPST}''.

\bibitem[{Wang et~al.(2016)Wang, Chiou, and Muller}]{Wang:Chiou:Muller:16}
Wang, J.-L., Chiou, J.-M., and Muller, H.-G. (2016), \enquote{Functional data
  analysis,} \textit{Annal Review of Statistics and Its Applications}, 3,
  257--295.

\bibitem[{Worsley et~al.(2004)Worsley, Taylor, Tomaiuolo, and
  Lerch}]{Worsley:04}
Worsley, K.~J., Taylor, J.~E., Tomaiuolo, F., and Lerch, J. (2004),
  \enquote{Unified univariate and multivariate random field theory,}
  \textit{NeuroImage}, 23, S189--S195.

\bibitem[{Wu and M{\"{u}}ller(2011)}]{Wu:Muller:11}
Wu, S. and M{\"{u}}ller, H.-G. (2011), \enquote{Response-Adaptive Regression
  for Longitudinal Data,} \textit{Biometrics}, 67, 852--860.

\bibitem[{Yao et~al.(2005)Yao, M{\"{u}}ller, and Wang}]{Yao:Muller:Wang:05b}
Yao, F., M{\"{u}}ller, H.-G., and Wang, J.-L. (2005), \enquote{Functional
  linear regression analysis for longitudinal data,} \textit{The Annals of
  Statistics}, 33, 2873--2903.

\bibitem[{Yu et~al.(2019)Yu, Wang, and Wang}]{FDAimage}
Yu, S., Wang, G., and Wang, L. (2019), \enquote{Package {‘FDAimage’},} R
  package version 1.0. Available at
  ``\url{https://github.com/funstatpackages/FDAimage}".

\bibitem[{Zhang and Wang(2015)}]{Zhang:Wang:15}
Zhang, X. and Wang, J.-L. (2015), \enquote{Varying-coefficient additive models
  for functional data,} \textit{Biometrika}, 102, 15--32.

\bibitem[{Zhou et~al.(2013)Zhou, Li, and Zhu}]{Zhou:Li:Zhu:13}
Zhou, H., Li, L., and Zhu, H. (2013), \enquote{Tensor regression with
  applications in neuroimaging data analysis,} \textit{Journal of the American
  Statistical Association}, 108, 540--552.

\bibitem[{Zhu et~al.(2014)Zhu, Fan, and Kong}]{Zhu:Fan:Kong:14}
Zhu, H., Fan, J., and Kong, L. (2014), \enquote{Spatially varying coefficient
  model for neuroimaging data with jump discontinuities,} \textit{Journal of
  the American Statistical Association}, 109, 1084--1098.

\bibitem[{Zhu et~al.(2012)Zhu, Li, and Kong}]{Zhu:Li:Kong:12}
Zhu, H., Li, R., and Kong, L. (2012), \enquote{Multivariate varying coefficient
  model for functional responses,} \textit{The Annals of statistics}, 40,
  2634--2666.

\end{thebibliography}

\end{document}